\newtheorem{theorem}{Theorem}[section]
\newtheorem{lemma}{Lemma}[section]
\newtheorem{corollary}{Corollary}[section]
\newtheorem{definition}{Definition}[section]
\newtheorem{remark}{Remark}[section]
\newtheorem{claim}{Claim}[section]
\DeclareMathOperator*{\argmin}{argmin}
\newcommand{\comment}[1]{}
\title{Improved Product-state Approximation Algorithms for Quantum Local Hamiltonians}
\author{Thiago Bergamaschi\thanks{UC Berkeley. Email: \texttt{thiagob@berkeley.edu}.}}
\date{\today}
\begin{document}

\maketitle

\begin{abstract}
    The ground state energy and the free energy of Quantum Local Hamiltonians are fundamental quantities in quantum many-body physics, however, it is QMA-Hard to estimate them in general. In this paper, we develop new techniques to find classical, additive error product-state approximations for these quantities on certain families of Quantum $k$-Local Hamiltonians. Namely, those which are either dense, have low threshold rank, or are defined on a sparse graph that excludes a fixed minor, building on the methods and the systems studied by Brandão and Harrow, Gharibian and Kempe, and Bansal, Bravyi and Terhal.
    
    We present two main technical contributions. First, we discuss a connection between product-state approximations of local Hamiltonians and combinatorial graph property testing. We develop a series of \textit{weak Szemer\'edi regularity} lemmas for $k$-local Hamiltonians, built on those of Frieze and Kannan and others. We use them to develop \textit{constant time} sampling algorithms, and to characterize the `vertex sample complexity' of the Local Hamiltonian problem, in an analog to a classical result by Alon, de la Vega, Kannan and Karpinski. Second, we build on the information-theoretic product-state approximation techniques by Brandão and Harrow, extending their results to the free energy and to an asymmetric graph setting. We leverage this structure to define families of algorithms for the free energy at low temperatures, and new algorithms for certain sparse graph families.
\end{abstract}
\newpage

\section{Introduction}

The mean-field approximation is a popular heuristic in quantum many-body physics, in which product-states are used as an ansatz for generic quantum states. The low-energy states of quantum systems may be highly entangled objects, and possibly exponentially more complex than simple (unentangled) product states. This often makes computing properties of these low-energy states classically intractable. From a complexity-theoretic point of view, the mean-field approach casts these quantum problems that are in the complexity class QMA \cite{Kitaev2002ClassicalAQ}, into problems in NP, since product-states have a polynomial-size description and can act as classical, efficiently verifiable certificates. However, in the absence of a hardness-of-approximation result for QMA \cite{scottmanifesto, Arad2011ANA, Aharonov2009TheDL, Hastings2013TrivialLE} and assuming QMA$\neq$NP, it is generally unknown if the ground states of quantum systems can even have `good' approximations with succinct classical descriptions, let alone if we can compute or approximate them efficiently.

In this work, we develop a series of classical algorithms to efficiently find mean-field approximations for quantum systems described by \textit{local Hamiltonians}, and we develop new techniques to show that good mean-field approximations exist for fairly general classes of these systems. A local Hamiltonian corresponds to a sparse matrix $H \in \mathbb{C}^{d^n\times d^n}$ which is exponentially large in the number $n$ of quantum particles (or qu\textit{d}its), and can be described as a sum over `local' terms $H  = \sum_{e\in E}h_e$ defined by some hypergraph $G=([n],E)$. $H$ is said to be $k$-local if each hyperedge $e\in E$ is a $k$-tuple of vertices in $[n]$, see section \ref{subsection-background} for more details. 

It is well known that the existence of product-state approximations to $H$ is very sensitive to the structure of the underlying interaction graph $G$. In a seminal result, \cite{Brando2013ProductstateAT} proved that so long as $H$ has bounded interaction strengths, and is defined on a graph $G$ of high degree or small expansion, then there exists a product state which approximates the ground state energy of $H$ up to an additive error $\epsilon\cdot m$ (scaling with the number of edges or `interactions' $m$ of $H$). Their results can be interpreted as rigorous proofs of accuracy of the mean-field approximation to the ground state energy of certain systems, and they opened the door to classical approximation schemes to find these `good' mean-field solutions efficiently. One of the main focuses of this work is to relax certain assumptions on the structure of the interaction graphs $G$, to extend the scope of their algorithms and existence statements.

The second main focus of this work is to study the structure and classical computation of properties of quantum systems in thermal equilibrium. The Helmholtz Free Energy $F(\beta)$ of a Quantum Local Hamiltonian $H$ at a given temperature $\beta^{-1}$ arises as an approximate counting analog to the ground state energy, as it reveals the degeneracy of the ground state (the number of QMA witnesses), the density of states of the Hamiltonian, as well as the existence of phase transitions. Quantitatively, $F(\beta)$ can be described as the optimum of a maximum entropy program:

\begin{equation}
    F(\beta) \equiv \min_{\rho\geq 0, \|\rho\|_1=1} f(\rho) = \min_{\rho\geq 0, \|\rho\|_1=1} \text{Tr}[H\rho] - S(\rho)/\beta
\end{equation}

Where the optimizer $\rho\propto e^{-\beta H}$ of the program above is called the Gibbs state of $H$. The computational complexity, and in particular the hardness of approximation of $F(\beta)$ is similarly not comprehensively understood. While QMA-Hard to estimate in general due to a reduction to the `low temperature' limit, and exactly computable in polynomial time using a \#P oracle \cite{Brown2011ComputationalDO}, it would seem there is much to uncover regarding the computational tradeoffs between error and temperature \cite{Bravyi2021OnTC}.

\subsection{Our Main Contributions}

In this section we overview our main contributions, which we present formally and in more detail in section \ref{subsection-results}. \\

\textbf{Rigorous Mean-Field Approximations and Guarantees in NP}

Our first contributions concern improvements and extensions to the existence statements by \cite{Brando2013ProductstateAT}. Their methods had roots in the information-theoretic techniques by \cite{Raghavendra2012ApproximatingCW} and \cite{Barak2011RoundingSP}, developed in the context of approximating CSPs using the Lasserre Heirarchy. Informally, we show how to use their self-decoupling arguments to construct mixed states which are tensor products of single-particle mixed states, which approximate the Free Energy up to an additive error. We view these results as rigorous proofs of accuracy for the mean-field approximation to the Free Energy of Quantum Local Hamiltonians, and they imply that approximating the Free Energy of dense Hamiltonians up to an extensive error (scaling with the number of edges) is in NP.

\begin{theorem} 
Fix $d= O(1), \epsilon>0$, and an inverse temperature $\beta$. Let $H = \sum_{e\in E}h_e$ be a $2$-Local Hamiltonian on $n$ qudits of local dimension $d$, and $m = \Omega(n/\epsilon^3)$ interactions each of strength $\|h_e\|_{\infty} \leq 1$. Then, there exists a product state $\sigma_\beta =\otimes_{u\in [n]} \sigma_{u}, \sigma_u\in \mathbb{C}^{d\times d}$ such that

\begin{equation}
    F(\beta)\leq f(\sigma_\beta) = \text{Tr}[H\sigma_\beta] - S(\sigma_\beta)/\beta \leq F(\beta) + \epsilon\cdot  m
\end{equation}
\end{theorem}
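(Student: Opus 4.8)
The plan is to mimic the Brandão–Harrow self-decoupling / conditioning argument used for the ground-state energy, but carry the entropy term through the whole argument. Start from the Gibbs state $\rho \propto e^{-\beta H}$, which is the exact minimizer, so $f(\rho) = F(\beta)$. The goal is to produce a product state $\sigma_\beta = \otimes_u \sigma_u$ with $f(\sigma_\beta) \le F(\beta) + \epsilon m$. Since $f$ is linear in the energy term and concave in $\rho$ via the entropy, the natural move is: pick a small "seed" set $S$ of $\ell = O(1/\epsilon^2)$ vertices, condition $\rho$ on an outcome of a measurement on $S$, and let $\sigma_\beta$ be the tensor product of the single-particle marginals of the conditioned state. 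Averaging over measurement outcomes, the expected energy $\mathbb{E}\,\mathrm{Tr}[H\sigma_\beta]$ is controlled because: (i) conditioning on $S$ decorrelates a typical edge's two endpoints up to an error governed by the average mutual information $\frac{1}{n^2}\sum_{u,v} I(u\!:\!v \mid S)$, and (ii) by a standard chain-rule/pinching argument there is a choice of $S$ of size $O(1/\epsilon^2)$ making this average mutual information at most $\epsilon$; then each edge $h_e$ sees $\mathrm{Tr}[h_e\,\sigma_u\otimes\sigma_v]$ up to $O(\sqrt{I(u\!:\!v\mid S)})$ by Pinsker, and Cauchy–Schwarz over the $m$ edges (using the density hypothesis $m = \Omega(n/\epsilon^3)$, so each vertex is in $\Omega(1/\epsilon^3 \cdot \ldots)$ edges on average and the $n^2$ denominator is absorbed) turns the average bound into a total energy error $\le \epsilon m / 2$.

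The new ingredient relative to the ground-state case is bounding the entropy defect $S(\sigma_\beta) - S(\rho)$ from below — i.e. showing the product state does not lose too much entropy, so that $-S(\sigma_\beta)/\beta \le -S(\rho)/\beta + \epsilon m/2$. The key facts are subadditivity, $S(\sigma_\beta) = \sum_u S(\sigma_u) \ge S(\rho_{[n]})$ where $\rho$ here is the conditioned state — wait, that inequality goes the wrong way, so instead I would use that for the \emph{conditioned} state $\rho_S := \rho$ conditioned on outcome on $S$, one has $\sum_u S(\rho_u) \ge S(\rho_S) \ge S(\rho_S^{(\bar S)}) \ge S(\rho) - 2\ell\log d$, where the first inequality is subadditivity, and the last step bounds the entropy lost by conditioning on $O(1/\epsilon^2)$ qudits by $O(\log d / \epsilon^2)$, which is $O(1)$ and hence $\le \epsilon m/2$ precisely because $m = \Omega(n/\epsilon^3) \ge \Omega(1/\epsilon^3)$ (and $n\ge 1$). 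Here $\sigma_u = \rho_u$ are the single-particle marginals of the conditioned state, so this is the same $\sigma_\beta$ used for the energy bound. Combining the two halves, $f(\sigma_\beta) = \mathrm{Tr}[H\sigma_\beta] - S(\sigma_\beta)/\beta \le \big(\mathrm{Tr}[H\rho] + \epsilon m/2\big) - \big(S(\rho)/\beta - \epsilon m/2\big) = F(\beta) + \epsilon m$, after checking the averaging argument yields a single good outcome rather than just an average (Markov / the fact that $f$ evaluated at the averaged marginals is what we bound — need to be slightly careful that $\sigma_\beta$ is built from the marginals of one conditioned state, so both bounds must hold simultaneously, which follows by a union bound over the two $\epsilon m/2$-events or by noting both are expectation bounds and picking the outcome minimizing their sum).

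The main obstacle I expect is reconciling the concavity of entropy with the "pick a single good measurement outcome" step: the energy bound is naturally an average over outcomes, while $S(\otimes_u \mathbb{E}_i \rho_u^{(i)}) \ge \mathbb{E}_i S(\otimes_u \rho_u^{(i)})$ would again point the wrong way if we averaged the marginals. The clean fix is to not average the marginals but to fix one outcome $i^*$ of the $S$-measurement and set $\sigma_u = \rho_u^{(i^*)}$; then $S(\sigma_\beta) = \sum_u S(\rho_u^{(i^*)}) \ge S(\rho^{(i^*)}_{\bar S})$, and it remains to argue both that a typical $i^*$ keeps $S(\rho^{(i^*)}_{\bar S}) \ge S(\rho) - O(\log d/\epsilon^2)$ on average (true, since $\mathbb{E}_{i} S(\rho^{(i)}_{\bar S}) = S(\rho \mid S \text{ register}) \ge S(\rho) - 2\ell \log d$) and that the same typical $i^*$ has low inter-vertex mutual information — so the real content is a simultaneous averaging argument over outcomes, handled by linearity of expectation on the sum (energy error) $+$ (entropy defect) and choosing $i^*$ below the mean. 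A secondary technical point is adapting the Brandão–Harrow informational bound so the error is measured against $m$ rather than $n^2$; this is exactly where the density assumption $m = \Omega(n/\epsilon^3)$ enters, and I would expect the exponent $3$ to come from one factor of $1/\epsilon^2$ for the seed size and one factor of $1/\epsilon$ (or $1/\epsilon^{1/2}$ squared) from the Pinsker/Cauchy–Schwarz conversion.
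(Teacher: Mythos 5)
Your overall plan follows the paper's route (condition the Gibbs state on a measurement of a small seed set, take single-particle marginals, control the energy via self-decoupling/Pinsker and the entropy via an averaging argument), but two linked quantitative steps break down, and fixing them is exactly where the paper's proof has different, essential content.

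First, the seed size. Your $\ell = O(1/\epsilon^2)$ is only adequate when $m = \Theta(n^2)$. The decoupling bound you invoke controls the \emph{average over all $\binom{n}{2}$ pairs} of the conditional mutual information, at level $O(1/\ell)$; converting this to a bound on the energy error, which is a sum over only the $m$ edges, via Pinsker and Cauchy--Schwarz gives an error of order $n\sqrt{m}/\sqrt{\ell}$, not $m/\sqrt{\ell}$. Requiring $n\sqrt{m}/\sqrt{\ell}\le \epsilon m$ forces $\ell \gtrsim n^2/(\epsilon^2 m)$, which at the stated density threshold $m = \Theta(n/\epsilon^3)$ means $\ell \approx \epsilon n$, growing linearly in $n$. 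The claim that the density hypothesis lets the $n^2$ denominator be ``absorbed'' at constant seed size is where this fails: with $\ell = O(1/\epsilon^2)$ and $m = n/\epsilon^3$ the decoupling error is $\Theta(n^{3/2}\epsilon^{-1/2})$, which exceeds $\epsilon m = n/\epsilon^2$ by a factor $\sqrt{n}\,\epsilon^{3/2}\to\infty$.

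Second, once $\ell$ must scale with $n$, your entropy accounting no longer closes at all temperatures. You bound the entropy lost by conditioning on the seed by $O(\ell \log d)$, which enters the free energy as $O(\ell \log d)/\beta$; with $\ell \approx n^2/(\epsilon^2 m)$ this is $\le \epsilon m/2$ only when $\beta \gtrsim n^2/(\epsilon^3 m^2)$ (a constant lower bound $\beta \gtrsim \epsilon^3$ at the minimal density), whereas the theorem is claimed for every inverse temperature $\beta$, including $\beta \to 0$ where any fixed entropy deficit divided by $\beta$ blows up. The paper avoids paying any entropy at all: it proves (Theorem \ref{theorem-entropy}) that the entanglement-breaking map does not decrease entropy --- the measured register keeps $S(C)$, which only grows under the measurement channel by data processing, and the conditional entropies $S(u|C)$ of unmeasured qudits do not decrease because their marginals are unchanged --- and then, to extract a single product state (Theorem \ref{theorem-feproduct}), it replaces the measured qudits by maximally mixed states, which again only increases entropy and costs only an \emph{energy} error $\le 2(l/n)|J|_1 \le \delta_l$, with no $1/\beta$ factor. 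This entropy-nondecreasing step is the missing idea in your proposal; without it (or a separate high-temperature argument you do not supply), the bound $f(\sigma_\beta) \le F(\beta) + \epsilon m$ cannot be obtained uniformly in $\beta$ in the sparse regime the theorem covers.
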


We emphasize two important points about the result above. First and foremost, the existence of approximations to $F(\beta)$ in NP implies that we can now use classical approximation schemes to search for optimal mean-field approximations to the free energy, and they will also be good approximations to the `entangled value' of $F(\beta)$. As we discuss later, this enables us to import practically all the previous machinery of approximation schemes for the ground state energy, to the Free Energy, developing novel algorithms for many quantum systems and improving on recent results. 

The second point of emphasis is that the result above holds at all temperatures $\beta^{-1}$. In this fashion, we are able to bypass the `low temperature bottleneck' of many approximation schemes for the Free Energy which constrain approaches in previous work, such as the polynomial interpolation method \cite{Barvinok2016CombinatoricsAC} or Markov Chain Monte Carlo methods. We present a comprehensive comparison with previous work and the scope of our techniques for thermal systems in section \ref{subsection-related}. \\

\textbf{Hamiltonian Regularity Lemmas, Approximation Algorithms and Property Testing} 

From an algorithmic point of view, our main contribution is a connection between product state approximations and graph property testing. We discuss quantum analogs of the \textit{weak Szemer\'edi regularity lemmas} for dense graphs, hyper-graphs and low-threshold rank graphs \cite{Frieze1999QuickAT,Alon2002RandomSA, Gharan2013ANR}, developed in the context of additive approximation schemes for Max-Cut and Max-kCSPs. At their heart lies a powerful combinatorial characterization of these systems, Szemer\'edi's celebrated regularity lemma  \cite{Szemerdi1975RegularPO}, which states that dense graphs can be approximately decomposed into unions of complete bipartite graphs. We develop natural, constructive generalizations of these results for Quantum Local Hamiltonians, by combining our new product state approximations with multi-coloured versions of known weak regularity results, leading to improved approximation algorithms and novel structural characterizations of local Hamiltonians. Our central result in this vein is an additive error approximation scheme for dense $k$-Local Hamiltonians, which runs in \textit{constant time}:

\begin{theorem} \label{results-gsptasdense}
Fix $d, k=O(1)$, $\epsilon > 0$, and let $H = \sum_e h_e$ be a $k$-Local Hamiltonian on $n$ qudits of local dimension $d$ and bounded strength interactions $\|h_e\|_\infty\leq 1$. Then, there exists a randomized algorithm which runs in time $2^{\text{poly}(1/\epsilon)}$, and with probability $.99$ returns an estimate for the ground state energy of $H$ accurate up to an additive error of $\epsilon \cdot n^{k}$. 
\end{theorem}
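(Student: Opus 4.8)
The plan is to transplant the classical sampling-based additive PTAS for dense Max-$k$CSP of Alon--de la Vega--Kannan--Karpinski into the quantum setting, using two bridging ingredients: a product-state reduction and a weak Szemer\'{e}di regularity lemma for $k$-local Hamiltonians. First I would dispose of the degenerate cases: if $m \le \tfrac{\epsilon}{2} n^k$ then $|E_0(H)| \le m \le \tfrac{\epsilon}{2} n^k$, so the algorithm may simply output $0$; and if $n \le n_0(\epsilon, d, k)$ the instance has constant size and is solved exactly by brute force. So assume $m \ge \tfrac{\epsilon}{2} n^k$ (hence the interaction hypergraph is dense) and $n$ large. Under this density assumption, the product-state approximation guarantee for dense $k$-local Hamiltonians (Brand\~{a}o--Harrow, Gharibian--Kempe) supplies a product state $\sigma = \bigotimes_u \sigma_u$ with $\mathrm{Tr}[H\sigma] \le E_0(H) + \tfrac{\epsilon}{4} n^k$; it therefore suffices to estimate $\min_{\sigma \text{ product}} \mathrm{Tr}[H\sigma]$ up to additive error $\tfrac{\epsilon}{4} n^k$.

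Next I would turn this product-state optimization into a bounded-alphabet constraint satisfaction problem. Fixing an orthonormal Hermitian basis $\{P_\alpha\}_{\alpha=1}^{d^2}$ of $\mathbb{C}^{d \times d}$ and expanding $h_e = \sum_{\vec\alpha \in [d^2]^k} c_{e,\vec\alpha}\, P_{\alpha_1} \otimes \cdots \otimes P_{\alpha_k}$ (with $\sum_{\vec\alpha} |c_{e,\vec\alpha}| = O_{d,k}(1)$) expresses $\mathrm{Tr}[H\sigma] = \sum_{\vec\alpha} \sum_e c_{e,\vec\alpha} \prod_j x_{e_j}^{(\alpha_j)}$, a degree-$k$ multilinear form in the single-qudit Bloch vectors $x_u = (\mathrm{Tr}[P_\alpha \sigma_u])_\alpha$, which range over the compact convex body $\mathcal{B}_d \subset [-1,1]^{d^2}$; equivalently this is a weighted sum of $d^{2k}$ coloured $k$-uniform hypergraphs, each of bounded total weight $O(m) = O(n^k)$. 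Because the form is $O_{d,k}(1)$-Lipschitz in each $x_u$ against the natural $n^{k-1}$-weighted scale, replacing $\mathcal{B}_d$ by a $\delta$-net $\mathcal{N}$ with $\delta = \mathrm{poly}(\epsilon)$ and $|\mathcal{N}| = (1/\delta)^{O(d^2)}$ changes the optimum by at most $\tfrac{\epsilon}{8} n^k$. We are left with a bounded-weight $k$-uniform CSP over the constant-size alphabet $\mathcal{N}$.

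Now apply the sampling scheme. Draw a uniformly random set $S$ of $q = \mathrm{poly}(1/\epsilon)$ vertices, observe the induced coloured sub-hypergraph on $S$, and --- following the exhaustive-sampling idea --- enumerate over a net of assignment-frequency profiles $\vec p$ (probability vectors on $\mathcal{N}$): for each $\vec p$, use the sample to estimate the least energy attainable by a $\vec p$-typical net-assignment, and output the minimum over $\vec p$, suitably rescaled by $\approx (n/q)^k$. This runs in $|\mathcal{N}|^{\mathrm{poly}(1/\epsilon)} = 2^{\mathrm{poly}(1/\epsilon)}$ time, independent of $n$. The content of the analysis is that, with probability $0.99$, this estimate is within $O(\epsilon) n^k$ of the net-CSP optimum. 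Here I would invoke the weak regularity lemma for $k$-local Hamiltonians: decompose each of the $d^{2k}$ coloured tensors as a sum of $O_\epsilon(1)$ cut tensors plus a remainder of cut-norm at most $\epsilon' n^k$. For any fixed partition of $[n]$ into $|\mathcal{N}|$ classes, the regularized (cut-tensor) part of the energy depends only on the $O_\epsilon(1)$ class densities; these concentrate around their empirical values on $S$ by a Chernoff bound, and a union bound over the $2^{O_\epsilon(1)}$ shapes of near-optimal partitions handles all relevant assignments at once, while the cut-norm bound on the remainder controls its contribution to both the true and the sampled energy for every assignment simultaneously. Choosing $\epsilon'$ and $q$ as suitable $\mathrm{poly}(\epsilon, 1/d, 1/k)$ closes the loop, and combining the three error budgets gives total additive error $\epsilon \cdot n^k$.

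The step I expect to be the main obstacle is precisely this sampling-correctness argument, because the underlying alphabet is continuous: one must (i) establish the multicoloured, $k$-uniform Hamiltonian weak regularity lemma with a cut-norm guarantee strong enough to bound the multilinear form uniformly over all net-assignments (not merely for a single fixed one), and (ii) carry the ADLVKK empirical-density concentration argument through the net discretization, ensuring that the net, the regularity partition, and the random sampling interact consistently. By contrast, the degenerate-case pruning, the product-state reduction, and the Lipschitz/net discretization are comparatively routine.
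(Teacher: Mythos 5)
Your route is genuinely different from the paper's proof of this theorem: you discretize the single-qudit Bloch vectors into a finite net, reduce to a bounded-alphabet classical Max-$k$CSP, and then run an ADLVKK-style vertex-sampling scheme on a random set of $\mathrm{poly}(1/\epsilon)$ qudits. The paper's algorithm never optimizes an induced sub-instance at all: it builds the multi-coloured Frieze--Kannan cut decomposition \emph{implicitly} (Theorem \ref{theorem-kmatrixregfrieze}), keeps the qudit states continuous, enumerates a grid of ``subset magnetization'' guesses for the cut sides, and checks feasibility of the compressed SDPs $\hat{C}_{r,c,\gamma}$ on the common-refinement variables, the only sampling being a Chernoff estimate of the coarsest-partition atom sizes (Claim \ref{claim-estimatesizes}); correctness then follows from Theorem \ref{theorem-BHgeneral}, Claim \ref{claim-kdpsreg} and Lemma \ref{lemma-feasibleconstraints}/Claim \ref{claim-kfeasibleconstraints}, with no ``hard direction'' about random sub-instances ever arising. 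Your first half (pruning, product-state reduction, Pauli expansion, $\delta$-net) is fine and parallels the paper's use of Theorem \ref{theorem-BHgeneral}.

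The gap is in the sampling-correctness step, which you flag but do not resolve, and the one justification you do offer is wrong as stated: a bound $\|W\|_{\infty\rightarrow 1}\le \epsilon' n^k$ on the remainder of the \emph{full} instance does not control the remainder's contribution on the random induced sub-instance, since the restriction $W_S$ of a small-cut-norm array can a priori have cut norm that is not $O(\epsilon' q^k)$. Making that step true is exactly the content of the nontrivial random-sub-array theorem of Alon, de la Vega, Kannan and Karpinski (the paper's Theorems \ref{theorem-submatrixcutnorm} and \ref{theorem-ksubmatrixcutnorm}), which additionally needs the entrywise and Frobenius bounds on $W$ that the decomposition supplies. Even granting it, your union bound over density profiles only yields completeness (a good global assignment restricts to a good sampled one); soundness --- that an assignment judged attainable from the sample does not undershoot the true optimum by more than $O(\epsilon)n^k$ --- requires transferring attainability (equivalently, infeasibility of the magnetization constraints) from the full instance to the sample. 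In the paper this is done, in the separate vertex-sample-complexity result (Theorem \ref{results-vsc}), via SDP duality: a dual infeasibility certificate for $C_{r,c,\gamma}$ is subsampled to certify infeasibility of the restricted program (Claim \ref{claim-samplelowerbound}), combined with the random-restriction cut-norm bound (Lemma \ref{lemma-sampleregularity}). These two ingredients (or a carefully executed per-atom density-transfer argument handling small atoms and a uniformity union bound) are what your sketch is missing; without them the argument does not close, and with them you would essentially be reproving the paper's Theorem \ref{results-vsc} and its corollary rather than following the paper's proof of Theorem \ref{results-gsptasdense}.
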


We report our sampling algorithms, including that in theorem \ref{results-gsptasdense}, in the \textit{probe model of computation} introduced by \cite{Goldreich1998PropertyTA}. In a nutshell, the time complexity measured above corresponds to the number of queries to a description of $H$, see section \ref{subsection-background} for more details. In the body, we show how these ideas can be used to develop improvements in runtime from $n^{\text{poly}(1/\epsilon)}$ to $\text{poly}(n, 1/\epsilon) + 2^{\text{poly}(1/\epsilon)}$ or $2^{\text{poly}(1/\epsilon)}$ for a wide range of problems on Quantum Local Hamiltonians, such as approximation schemes for the ground state energy, the Free Energy, and for Hamiltonians defined on low threshold rank graphs. 

\section{Technical Overview}

\subsection{Background and Notation}
\label{subsection-background}

\noindent \textbf{Linear Algebra and Matrix Norms} Given an $w\times w$ matrix $A$ we refer to $\|A\|_p$ as the Schatten $p$-norm of $A$, the $L_p$ norm of the singular values of $A$, and we refer to $|A|_p$ as the $L_p$ norm of the $w^2$-dimensional vectorization of $A$. The graph decompositions are phrased in terms of the \textit{cut norm} $\|A\|_C$ introduced by \cite{Frieze1999QuickAT}, defined by

\begin{equation}
    A^+ = \max_{S_1, S_2\subseteq [w]} \sum_{i\in S_1, j\in S_2}A_{ij} \text{ and } \|A\|_C = \max(A^+, (-A)^+)
\end{equation}

\noindent where we have $\|A\|_C\leq \|A\|_{\infty\rightarrow  1} = \sup_{x\neq  0} \frac{|Ax|_1} {|x|_\infty}\leq 4\cdot \|A\|_C$.\\

\noindent \textbf{Asymptotic Notation} For any function $f(n)$ we refer to the asymptotic notation $\tilde{O}(f(n)) = O(f(n)\text{polylog}(f(n)))\leq  c_1\cdot f(n) \log^{c_2} f(n)$ for a choice of real positive constants $c_1,  c_2$. \\

\noindent \textbf{Local Hamiltonians} We denote a $k$-Local Hamiltonian on $n$ qudits of local dimension $d$ via a $d^n\times d^n$ Hermitian matrix, which can be expressed as a sum of local interactions $H = \sum_{e\in E} h_e$. By `local', we simply mean that each summand $h_e = H_e\otimes \mathbb{I}_{V\setminus e}$ acts non-trivially only on $k$ particles at a time, as indicated by each $k$-tuple $e = (u_1\cdots u_k)$ in a set of hyper-edges $E$. In this manner, we can specify any Local Hamiltonian `instance' simply by specifying the $d^k\times d^k$ submatrices of each local term. If $d, k=O(1)$, then the input has a polynomial-sized description in $n$. For notational convenience, we often omit the trivial support $\mathbb{I}_{V\setminus e}$. The ground state energy and the ground state of $H$ are its minimum eigenvalue and corresponding eigenvector, and the \textit{variational} minimum energy of $H$ is the minimum energy of $H$ among all product states $\min_{\rho = \otimes \rho_u}\text{Tr}[H\otimes_u \rho_u]$ with $\rho_u\in \mathbb{C}^{d\times d}$ and $\rho_u\geq 0,\text{Tr}_u[\rho_u]=1$. \\

\noindent \textbf{Interaction Graphs} We refer to the `Interaction Graph' $G=([n], E)$ of a 2-Local Hamiltonian $H$ as the graph with undirected edges $e = (u, v)\in E$ whenever the particles $u, v$ interact non-trivially in $H$. That is, whenever the spectral norm is non-zero $\|H_e\|_\infty \neq 0$. By expressing each $d^2\times d^2$ Hermitian matrix $H_{u, v} =  \sum_{i,j\in [d^2]}H^{i, j}_{u, v} \cdot  \sigma^i_u\otimes \sigma^j_v$ in an orthogonal basis decomposition, and grouping all the interactions with the same basis $i, j$, we refer to the $i, j$ `Pauli Graph' as the subgraph of $G$ induced on all the \textit{directed} edges $e = (u, v)$ with non-zero $H^{i, j}_{u, v} = d^{-2}\text{Tr}[H_{u, v} \sigma^i_u\otimes \sigma^j_v]$, with weighted adjacency matrix $J^{ij}=\{H^{i, j}_{u, v}\}_{u, v\in [n]}$. We note that the matrices $J^{ij}$ are degenerate, since $J^{ij} = (J^{ji})^T$, but we often brush over this issue via a handshaking argument. If we are given a density matrix $\rho = \otimes \rho_u$ which is a product of single qudit density matrices with a basis decomposition $\rho_u = d^{-1}\sum_i \alpha^i_u\cdot \sigma^i$, then the energy of $\rho$, $\text{Tr}[H\rho]$ is a polynomial over the real variables $\alpha$:
\begin{equation}
    \sum_{(u,v)\in E} \text{Tr}[H_{u,v}\rho_u\otimes\rho_v] = d^{-2}\sum_{(u,v)\in E}\sum_{i, j\in [d^2]} H^{i, j}_{u, v} \alpha^i_u\cdot \alpha^j_v = (2d^2)^{-1} \sum_{i, j\in [d^2]} \sum_{u\neq v\in [n]}J^{ij}_{uv}\alpha^i_u\cdot \alpha^j_v
\end{equation}

\noindent \textbf{Model of Computation} We report our sampling algorithms in the \textit{probe model of computation} introduced by \cite{Goldreich1998PropertyTA} in the context of graph property testing. That is, we assume we can sample a uniformly random vertex or hyper-edge in $O(1)$ time (or `probes'). Formally, fixed a $k$-Local Hamiltonian `instance' $H = \sum_{e\in E}H_e\otimes \mathbb{I}_{V\setminus e}$, for any $k$-tuple of vertices/hyper-edge $e = (u_1\cdots u_k),  u_i\in [n]$, we assume we can query the (constant-sized) $d^k\times d^k$ sub-matrix $H_e$ in $O(1)$ time. We emphasize that since our goal is often a sublinear time algorithm, we always enforce that our algorithms output estimates for the energy (or free energy), and \textit{implicit} descriptions of product states. If requested, these implicit descriptions can always be expanded into $n$-qudit product states in an additional polynomial time. \\

\noindent \textbf{Extensive Errors} We refer to an additive approximation scheme, or an additive error, for a problem on a Hamiltonian $H$ (or graph $G$), as \textit{extensive} if the notion of error scales with the number of interactions of $H$ (resp., edges of $G$). For instance, an $\epsilon\cdot m$ additive approximation scheme for Max-Cut on graphs of $m$ edges is an `extensive' error approximation scheme for constant $\epsilon$. We pay particular attention to this limit of additive errors since the PCP theorem \cite{Arora1992ProofVA} ensures it is NP-Hard to approximate Max-Cut up to said error in general.

\subsection{Our Results}
\label{subsection-results}

\textbf{Approximation Guarantees in NP}

The first of our results are rigorous proofs of accuracy of the mean-field approximation on Quantum $k$-Local Hamiltonians. We argue the existence of product states, or products of single-particle mixed states, which provide additive error approximations to the ground state energy and the free energy of these systems. We build on the information-theoretic techniques by \cite{Brando2013ProductstateAT}, presenting an extension to the free energy and modestly refining their techniques on generic (hyper-) graphs.

\begin{theorem} \label{results-psa}
Fix $k, d = O(1)$. Let $H = \sum_{e\in E}h_e$ be a $k$-Local Hamiltonian on $n$ qudits of local dimension $d$, and $m$ interactions each of strength $\|h_e\|_{\infty} \leq 1$. Then, there exists a product state $\psi\rangle = \otimes_{u\in [n]} |\psi_u\rangle,|\psi_u\rangle \in \mathbb{C}^d$ such that
\begin{equation}
    \langle \psi|H|\psi\rangle \leq \min_{\phi}  \langle \phi|H|\phi\rangle + O(n^{\frac{k-1}{3}}m^{2/3})
\end{equation}
\end{theorem}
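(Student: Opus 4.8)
The plan is to establish the product state non-constructively, by ``rounding'' an optimal (possibly highly entangled) ground state $|\phi^\star\rangle$ to a product state through an information-theoretic self-decoupling argument in the style of \cite{Brando2013ProductstateAT} (itself descended from the conditioning arguments of \cite{Raghavendra2012ApproximatingCW} and \cite{Barak2011RoundingSP} for CSPs). Write $\rho = |\phi^\star\rangle\langle\phi^\star|$, which is pure and hence has total correlation $\sum_{v}S(\rho_v)-S(\rho)\le n\log d$. Fix a parameter $\ell$. Pick a uniformly random subset $S\subseteq[n]$ with $|S|=\ell$ and apply a fixed informationally-complete local POVM to each qudit in $S$; this ``pins down'' those qudits, and by monogamy of correlations it leaves the conditional state $\rho^{m_S}$ on the remaining qudits close --- on average over hyper-edges --- to the product of its single-site marginals. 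The product state we output is $\sigma^{m_S}$, which places the measurement outcome on each qudit of $S$ and the single-site marginal $\rho^{m_S}_v$ on each remaining qudit $v$.

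For the energy bookkeeping, average over $S$ and over the outcome $m_S$. A hyper-edge $e$ disjoint from $S$ contributes, in expectation over $m_S$, exactly $\mathrm{Tr}[h_e\rho_e]$ plus an error at most $\|h_e\|_\infty\cdot\mathbb{E}_{S,m_S}\bigl\|\rho^{m_S}_e-\bigotimes_{v\in e}\rho^{m_S}_v\bigr\|_1$, using that averaging the conditional $e$-marginal over $m_S$ returns $\rho_e$ when $e\cap S=\emptyset$ (the measured qudits are then traced out). A hyper-edge meeting $S$ contributes at most $2\|h_e\|_\infty\le 2$, and in expectation over random $S$ there are at most $k\ell m/n$ such hyper-edges. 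Summing, $\mathbb{E}_{S,m_S}\mathrm{Tr}[H\sigma^{m_S}]\le\lambda_{\min}(H)+O(k\ell m/n)+\sum_e\mathbb{E}_{S,m_S}\bigl\|\rho^{m_S}_e-\bigotimes_{v\in e}\rho^{m_S}_v\bigr\|_1$, and by Pinsker's inequality applied to the relative entropy $D\bigl(\rho^{m_S}_e\,\big\|\,\bigotimes_{v\in e}\rho^{m_S}_v\bigr)$ (the total correlation of $\rho^{m_S}_e$) followed by Cauchy--Schwarz over the $m$ hyper-edges, the last term is $O\bigl(\sqrt{m\sum_e\mathbb{E}_{S,m_S}[\,\text{total correlation of }\rho^{m_S}_e\,]}\bigr)$. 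The plan is then to show that conditioning on $\ell$ random qudits drives $\sum_{e\in E}\mathbb{E}_{S}[\,\text{total correlation of }\rho^{S}_e\,]$ down to $O(n^k\log d/\ell)$, so the residual term is $O(\sqrt{m n^k\log d/\ell})$; balancing this against the $O(k\ell m/n)$ conditioning cost at $\ell\asymp(n^{k+2}/m)^{1/3}$ yields total error $O(n^{(k-1)/3}m^{2/3})$ for $d,k=O(1)$, and a Markov/union bound over $(S,m_S)$ turns the averaged statement into an actual product state. When $m\lesssim n^{k-1}$ this bound exceeds $m$ and is vacuous, so that is the only regime requiring work.

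The crux --- and the step I expect to be the main obstacle --- is the chain-rule estimate $\sum_{e\in E}\mathbb{E}_{S}[\,\text{total correlation of }\rho^{S}_e\,]=O(n^k\log d/\ell)$. For $2$-local Hamiltonians this is essentially the Brandão--Harrow decoupling lemma: the total correlation of $\rho$ telescopes along a random ordering into $\sum_v I(v:\text{prefix})\le n\log d$, and a careful accounting --- using the identity $I(i:j\mid T)-I(i:j\mid Tw)=I(i:w\mid T)-I(i:w\mid Tj)$ to control the fact that conditional mutual information is not monotone under conditioning --- shows that conditioning on a random $\ell$-set makes the expected mutual information across a random edge small. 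For $k$-local Hamiltonians I will need the multipartite analogue: the total correlation of the $k$-tuple $e$ in place of bipartite $I$, the multipartite chain rule in place of the bipartite one, and the combinatorics of which of the $k$ vertices of $e$ the random set $S$ hits first. The remaining ingredients are routine: verifying the multipartite Pinsker inequality with the total-correlation divergence, and checking that the informationally-complete POVM permits the single-site conditional marginals to be used as a product ansatz with no additional loss. I also note that the entire argument is blind to the directed/coloured ``Pauli graph'' structure of Section~\ref{subsection-background}: decoupling is a statement about $\rho$ and a measurement, and the energy is linear in the $\rho_e$, so the same $\sigma^{m_S}$ simultaneously approximates every Pauli component of $H$ --- which is what makes the refinement to generic and asymmetric (hyper-)graphs essentially free.
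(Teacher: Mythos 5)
Your overall architecture is exactly the paper's proof of this statement (Theorem \ref{theorem-BHgeneral}): condition on a random subset of at most $\ell$ qudits, charge the hyper-edges that hit the conditioned set $O(k\ell m/n)$, control the remaining edges by a trace-distance-to-correlation bound plus Cauchy--Schwarz, invoke a multipartite self-decoupling estimate of order $1/\ell$, and balance at $\ell \asymp (n^{k+2}/m)^{1/3}$ to get $O(n^{(k-1)/3}m^{2/3})$; the final averaging argument and the trivial regime $m \lesssim n^{k-1}$ are also handled the same way. So at the level of strategy there is nothing different here.

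The gap is in the crux step you yourself flag, and your proposed route to it would not go through as written. You apply Pinsker to the \emph{quantum} relative entropy $D\bigl(\rho^{m_S}_e \,\big\|\, \bigotimes_{v\in e}\rho^{m_S}_v\bigr)$, so the decoupling estimate you need is a bound on the average quantum conditional total correlation of \emph{unmeasured} $k$-tuples, and you plan to obtain it by chain-rule manipulations (your identity $I(i{:}j|T)-I(i{:}j|Tw)=I(i{:}w|T)-I(i{:}w|Tj)$) applied directly to $\rho$. The telescoping in the self-decoupling lemma works by absorbing one member of the test tuple into the conditioning set as its size grows, which requires that member to be exchangeable with the measured qudits, i.e.\ measured; with the edge qudits left quantum, data processing only gives $I(u : X_v \mid X_S) \le I(u : v \mid X_S)$, which is the wrong direction, and the chain-rule identities alone do not yield the $O(1/\ell)$ bound. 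The paper (following \cite{Brando2013ProductstateAT}) therefore measures the edge qudits as well with the informationally complete Pauli channel, uses the distortion bound $\|\Lambda^{\otimes z}(Q)\|_1 \ge 6^{-z}\|Q\|_1$ (Claim 2 of \cite{Bravyi2021OnTC}) to convert the quantum trace distance appearing in the energy bound into a classical one at a $d^{O(k)}$ loss (harmless for $d,k=O(1)$), and only then applies classical Pinsker and the purely classical multipartite self-decoupling Lemma \ref{lemma-kselfdecoupling}, which is where your ``multipartite chain rule plus combinatorics of which vertex $S$ hits'' actually lives. So your expectation that the IC POVM enters ``with no additional loss'' and only on the conditioned set is misplaced: the constant distortion on the unmeasured edge qudits is precisely what makes the decoupling provable. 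A last cosmetic point: your $\sigma^{m_S}$ is a product of mixed states, while the statement asks for a pure product state; this is fixed by linearity of the energy in each single-site marginal, extreme point by extreme point.
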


 In the body, we prove more general versions of the theorem above sensitive to the matrix of interaction strengths of $H$. Theorem \ref{results-psa} matches the previous results in \cite{Brando2013ProductstateAT} whenever the Hamiltonian is defined on $D$-regular or dense graphs $m = \Omega(n^k)$, and generalizes these statements to an asymmetric setting on Hamiltonians which are dense on average. In the setting of Theorem \ref{results-psa}, whenever $m = \Omega(n^{k-1}/\epsilon^{3})$, approximating the ground state energy of $H$ up to additive error $\epsilon\cdot m$ is in the complexity class NP, as the product state has a polynomial size description and acts as a classical witness. While these optimal product states may be NP-Hard to find in the worst case, there are many examples where one can approximate these solutions efficiently.

To extend both these information-theoretic ideas and algorithmic applications to the free energy, we need further insights on the structure of these product state approximations. We discuss in section \ref{section-existence} how the `entanglement-breaking' procedure of \cite{Brando2013ProductstateAT}, not only approximately preserves the energy, but in fact also increases the entropy as well. When applied to the Gibbs state, we show one can carefully extract a tensor product of single particle mixed-states which is a good approximation to the free energy. We formalize this statement in Theorem \ref{results-psafe},

\begin{theorem} \label{results-psafe}
Fix $k, d= O(1)$, and an inverse temperature $\beta$. Let $H = \sum_{e\in E}h_e$ be a $k$-Local Hamiltonian on $n$ qudits of local dimension $d$, and $m$ interactions each of strength $\|h_e\|_{\infty} \leq 1$. Then, there exists a product state $\sigma_\beta =\otimes_{u\in [n]} \sigma_{u}, \sigma_u\in \mathbb{C}^{d\times d}$ such that

\begin{equation}
    f(\sigma_\beta) = \text{Tr}[H\sigma_\beta] - S(\sigma_\beta)/\beta \leq F + O(n^{\frac{k-1}{3}}m^{2/3})
\end{equation}
\end{theorem}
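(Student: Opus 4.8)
Since $F = f(\rho^\ast)$ for the Gibbs state $\rho^\ast = e^{-\beta H}/\text{Tr}[e^{-\beta H}]$, it suffices to produce a product state $\sigma_\beta = \otimes_u \sigma_u$ with $\text{Tr}[H\sigma_\beta] \leq \text{Tr}[H\rho^\ast] + \delta$ and $S(\sigma_\beta) \geq S(\rho^\ast) - \beta\delta$ for $\delta = O(n^{\frac{k-1}{3}}m^{2/3})$. The plan is to feed $\rho^\ast$ — rather than a ground state — into the Brand\~ao--Harrow entanglement-breaking construction underlying Theorem \ref{results-psa}, and to show it loses only a controlled amount of entropy. Recall that construction: sample a random vertex set $S$ of size $t$ (a parameter), perform an informationally-complete product measurement on the qudits of $S$, condition on an outcome $a$ (with probability $p_a$), and output $\sigma = \otimes_u \sigma_u^a$, where $\sigma_u^a$ is the pure post-measurement state for $u\in S$ and the single-qudit marginal of the conditioned state $\rho^{\ast,a}_{[n]\setminus S}$ for $u\notin S$. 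The energy estimate in the proof of Theorem \ref{results-psa} is oblivious to which state is being rounded — it uses only $\|h_e\|_\infty\le 1$, that edges incident to $S$ incur no error beyond the measurement disturbance (their endpoints in $S$ become product pure states uncorrelated with the rest), and a Pinsker-type bound on the decoupling error for edges inside $[n]\setminus S$ in terms of conditional pairwise mutual informations — so running it on $\rho^\ast$ gives $\mathbb{E}_{S,a}\,\text{Tr}[H\sigma] \leq \text{Tr}[H\rho^\ast] + O(mt/n) + O(m\sqrt{\log d/t})$ exactly as before.

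The new ingredient — this is the observation of Section \ref{section-existence} that entanglement-breaking also raises the entropy — is the following bookkeeping. Since the qudits of $S$ are left pure, $S(\sigma) = \sum_{u\notin S} S(\sigma^a_u) \geq S(\rho^{\ast,a}_{[n]\setminus S})$ by subadditivity of the von Neumann entropy. Averaging over $a$ and using the entropy identity for the classical--quantum state $\sum_a p_a\,|a\rangle\!\langle a|_A\otimes\rho^{\ast,a}_{[n]\setminus S}$, followed by the Araki--Lieb inequality to discard first the classical register $A$ and then the measured qudits $S$ — each at a cost of at most $O(t\log d)$ — yields $\mathbb{E}_{S,a}\,S(\sigma) \geq S(\rho^\ast) - O(t\log d)$. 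Hence the entanglement-breaking step costs at most $O(t\log d)/\beta$ in free energy, and $\mathbb{E}_{S,a}\,f(\sigma) \leq F + O(mt/n + m\sqrt{\log d/t} + t\log d/\beta)$. A fixed product state attaining at most this expectation then exists by averaging (the qudits of $S$ can be assigned any states, e.g. their post-measurement ones).

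It remains to optimize $t\in\{0,\dots,n\}$ so that the bound is $O(n^{\frac{k-1}{3}}m^{2/3})$ for every $\beta>0$; after the standard reduction to the $O(km)$ non-isolated qudits we may assume $n\le km$. When $\beta = \Omega(n\log d/m)$ the entropy term is dominated by $mt/n$, and the ground-state-optimal choice $t\sim (n/\sqrt{\log d})^{2/3}$ already gives $O(mn^{-1/3}(\log d)^{1/3})\le O(n^{\frac{k-1}{3}}m^{2/3})$. For smaller $\beta$ one instead balances $m\sqrt{\log d/t}$ against $t\log d/\beta$; and for the residual extremely-high-temperature window one takes $t=0$ and bounds the leftover decoupling error by $\sqrt{m\sum_{(u,v)\in E} I(u{:}v)_{\rho^\ast}} = O(m\sqrt{\beta})$, using that $\rho^\ast$ minimizes $f$ forces its total pairwise correlation to be $O(\beta m)$ (variationally, via $F\le f(\otimes_u\rho^\ast_u)$). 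Each regime contributes $O(n^{\frac{k-1}{3}}m^{2/3})$, and the logarithmic factors are absorbed since $d=O(1)$.

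I expect the main work — and the only genuinely new step beyond Theorem \ref{results-psa} — to be the entropy estimate: in particular, checking that the Araki--Lieb chain survives the informationally-complete POVM and that the per-qudit entropy penalty is genuinely $O(\log d)$. The temperature-regime optimization is routine calculation, and the general (matrix-of-interaction-strengths) version should follow by substituting the corresponding refinement of the energy bound from Theorem \ref{results-psa}.
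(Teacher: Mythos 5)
Your skeleton is the same as the paper's: apply the Brand\~ao--Harrow entanglement-breaking map to the Gibbs state, reuse the energy analysis of Theorem \ref{results-psa} verbatim (it indeed never uses which state is being rounded), track the entropy, and extract a single product state by averaging the scalar $f(\sigma_{S,a})$. The divergence is in the entropy bookkeeping, and this is where the gap sits. The paper (Theorem \ref{theorem-entropy}) proves the entropy does not decrease \emph{at all}: chain rule plus ``conditioning reduces entropy'' gives $S(\rho)\le S(C)_\rho+\sum_{u\notin C}S(u|C)_\rho$, and two data-processing arguments show both terms can only grow under the measurement; the extraction of a genuine product state (Theorem \ref{theorem-feproduct}) then pays only an \emph{energy} cost $\mathbb{E}_C J_C = O((l/n)|J|_1)\le \delta_l$ by maximally mixing the measured qudits. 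Consequently the final bound $F+2\delta_l$ is temperature-independent and a single choice of $l$ works for every $\beta$. Your Araki--Lieb accounting instead concedes an entropy loss of $O(t\log d)$, which manufactures the spurious $t\log d/\beta$ term and forces the three-regime optimization; nothing in the problem requires this, and replacing your entropy step by the paper's non-decreasing argument would delete the case analysis entirely.

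The concrete gap is in your extreme-high-temperature branch. You take $t=0$ and bound the raw decoupling error by $\sqrt{m\sum_{e\in E} I(e)_{\rho^\ast}}$, asserting $\sum_{e\in E} I(e)_{\rho^\ast}=O(\beta m)$ ``variationally, via $F\le f(\otimes_u\rho^\ast_u)$.'' That variational inequality only yields the total correlation bound $\sum_u S(\rho^\ast_u)-S(\rho^\ast)\le 2\beta\sum_e\|h_e\|_\infty=O(\beta m)$, and the sum of \emph{edge-local} mutual informations is not controlled by the global multi-information: for $n$ perfectly correlated spins the multi-information is $\Theta(n)$ while the edge sum on a complete graph is $\Theta(n^2)$, so on dense (hyper)graphs the two can differ by a factor polynomial in $n$. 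The alternative route $I(u{:}v)\le I(u{:}V\setminus u)\le 2\beta\deg(u)$ only gives $\sum_e I(e)\lesssim \beta\sum_u\deg(u)^2$, which is $\beta nm$ on dense graphs --- also too weak. So as written the $\beta\lesssim n^{1-k}$ window is not proved; either supply a genuinely new high-temperature correlation bound for Gibbs states, or (much simpler) adopt the paper's entropy-non-decreasing lemma, which removes the temperature dependence and with it the need for this regime altogether.
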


We emphasize that the statement above implies a product state approximation exists at all temperatures $\beta^{-1}$ (and recovers the ground state approximation at $T = 0$), and moreover uses very little of the underlying graph structure apart from the \textit{average} dense condition. \\

\textbf{Hamiltonian Weak Regularity Lemmas}

We develop an approach to designing approximations algorithms for Local Hamiltonians based on \textit{weak Szemer\'edi regularity lemmas}, which are approximate decompositions to graphs, matrices, and tensors \cite{Szemerdi1975RegularPO, Frieze1999QuickAT, Alon2002RandomSA, Gharan2013ANR}. 

The idea behind this construction lies in a powerful tool in extremal combinatorics. In his celebrated regularity lemma, \cite{Szemerdi1975RegularPO} proved that any dense graph can be approximated by a union of a constant number of complete bipartite graphs. However, the number of partitions grew very fast with the intended quality of approximation. \cite{Frieze1999QuickAT} developed a constructive decomposition under a weaker notion of approximation, what they refered to as a ``weak" regularity lemma. Concretely, they prove that any real $n\times n$ matrix with bounded entries can be decomposed into a sum of $O(1/\epsilon^2)$ cut matrices (complete bipartite graphs), up to an error $\epsilon\cdot n^2$ in the cut norm. Moreover, \cite{Frieze1999QuickAT} prove that one can in fact construct such a ``cut decomposition" implicitly in time polynomial in $1/\epsilon$, which enables them to devise constant time sampling-based approximation schemes for many problems on dense graphs. 

We define a natural adaptation of their results to a quantum setting, by constructing an approximate decomposition $H_D$ of a Local Hamiltonian $H$ which is a sum over complete, bipartite, sub-Hamiltonians. The structure of $H_D$ can be understood as a `multi-colored' matrix cut decomposition, as essentially we apply the cut decomposition by \cite{Frieze1999QuickAT} to each term in a basis decomposition of $H$. For concreteness, let $H = \sum_{u, v}h_{u, v}$ be a $2$-Local Hamiltonian on qubits, and let us consider re-writing its Pauli basis decomposition below. We suppress the identity terms $\otimes \mathbb{I}_{V\setminus\{u, v\}}$ on the qubits that each interaction acts trivially on.

\begin{equation}
    H = \sum_{(u, v)\in E}h_{u, v} = \sum_{(u, v)\in E}\sum_{i, j\in\{I, X, Y, Z\}} h^{i, j}_{u, v}\sigma_{u}^i\otimes \sigma_{v}^j = \sum_{i, j\in\{I, X, Y, Z\}}\sum_{u< v} h^{i, j}_{u, v}\sigma_{u}^i\otimes \sigma_{v}^j
\end{equation}

We associate each pair of indices $i, j \in\{I, X, Y, Z\}$ to a color, and consider the $n\times n$ real valued weighted adjacency matrix $J^{ij} =\{h^{i, j}_{u, v}\}_{u,v\in [n]}$ of the $i, j$ `Pauli Graph'. By applying the cut decomposition by \cite{Frieze1999QuickAT} to each of these $16$ matrices $J^{ij}$, we construct an approximate decomposition of $H$ into roughly $16\cdot O(1/\epsilon^2)$ complete bipartite sub-Hamiltonians. In this context, a `complete bipartite sub-Hamiltonian' is defined by two Pauli matrices, (say, $X, Y$), two subsets $S, T\subset [n]$ (which, for now, we assume to be disjoint), and an interaction strength $\alpha\in \mathbb{R}$, and can be expressed as $\alpha \sum_{u\in S, v\in T}X_u\otimes Y_v$. 

In the body we argue that the approximation guarantees in the cut norm are precisely what we need to ensure that for any product state $\sigma = \otimes_u \sigma_u$,  the energy of $\sigma$ under $H$ or $H_D$ are close: $\text{Tr}[H\sigma]\approx \text{Tr}[H_D\sigma]$. By further combining this product state regularity with our asymmetric product state approximations, we prove a stronger property on the spectra of $H_D$:

\begin{lemma} [Informal\label{results-reg}]
Fix $d, k= O(1)$ and a constant $\epsilon>0$, and let $H = \sum_e h_e$ be a $k$-Local Hamiltonian on $n$ qudits of local dimension $d$ and $m$ interactions of strength bounded by $\|h_e\|_\infty\leq 1$. Then, there exists a decomposition $H_D= \sum_i^s D^{(i)}$ of $H$ into $s=O(1/\epsilon^2)$ complete bipartite sub-Hamiltonians such that
\begin{equation}
    \|H-H_D\|_\infty \leq \epsilon \cdot n^{k/2}m^{1/2}
\end{equation}
\end{lemma}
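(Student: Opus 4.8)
The plan is to build $H_D$ in two conceptual moves: first a ``multi-coloured'' cut decomposition that is accurate against \emph{all product states}, and then a bootstrapping step that upgrades this product-state guarantee into the claimed bound in spectral norm, by invoking the asymmetric product-state approximation of Theorem~\ref{results-psa} on the residual Hamiltonian $H-H_D$. For the first move I would fix an orthonormal basis $\sigma^0,\dots,\sigma^{d^2-1}$ of Hermitian $d\times d$ matrices, expand every local term in it, and group terms by ``colour'' $c=(c_1,\dots,c_k)$, obtaining $H=\sum_c\sum_{(u_1,\dots,u_k)}J^c_{u_1\cdots u_k}\,\sigma^{c_1}_{u_1}\otimes\cdots\otimes\sigma^{c_k}_{u_k}$ with $d^{2k}=O(1)$ colours and $J^c$ a weighted $k$-tensor on $[n]^k$. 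Since $\|h_e\|_\infty\le1$, one has $\sum_c\|J^c\|_F^2=O\p{\sum_e\|h_e\|_F^2}=O(m)$, where $\|\cdot\|_F$ is the entrywise $\ell_2$ norm of the tensor. I would then apply the weak-regularity/cut decomposition of \cite{Frieze1999QuickAT}, in the tensor version of \cite{Alon2002RandomSA}, to each $J^c$ with a parameter $\epsilon_0$ to be fixed below: the greedy analysis---each cut extracted above a threshold $\delta$ reduces $\|\cdot\|_F^2$ by at least $\delta^2/n^k$---yields $\tilde J^c=\sum_{\ell\le O(1/\epsilon_0^2)}\alpha^c_\ell\,\mathbb{1}_{S^{c,\ell}_1}\otimes\cdots\otimes\mathbb{1}_{S^{c,\ell}_k}$ with $\|J^c-\tilde J^c\|_C\le\epsilon_0\,n^{k/2}\,\|J^c\|_F$. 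Re-attaching the Pauli factors gives $H_D=\sum_{c,\ell}\alpha^c_\ell\sum_{u_1\in S^{c,\ell}_1,\dots,u_k\in S^{c,\ell}_k}\sigma^{c_1}_{u_1}\otimes\cdots\otimes\sigma^{c_k}_{u_k}$, a sum of $s=d^{2k}\cdot O(1/\epsilon_0^2)=O(1/\epsilon_0^2)$ complete bipartite sub-Hamiltonians; the usual bookkeeping absorbs the degeneracy under index permutations, the diagonal $u_i=u_j$ terms, and the reduction of overlapping index sets to disjoint ones into an $O(1)$ factor in $s$ and an additive $O(\sqrt n)$ in the Frobenius error.

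For the product-state accuracy, set $A=H-H_D$ and let $\sigma=\otimes_u\sigma_u$ be any product state. Because $|\text{Tr}[\sigma^i\rho]|\le1$ for every single-qudit state $\rho$, expanding $\text{Tr}[A\sigma]$ colour by colour produces, for each $c$, a $k$-linear form in vectors $x^{(1)},\dots,x^{(k)}\in[-1,1]^n$ applied to $J^c-\tilde J^c$; such forms are controlled by the cut norm up to a constant $C_k$. Hence $|\text{Tr}[A\sigma]|\le C_k\sum_c\|J^c-\tilde J^c\|_C\le C_k\,\epsilon_0\,n^{k/2}\sum_c\|J^c\|_F\le C'_{d,k}\,\epsilon_0\,n^{k/2}\,m^{1/2}$, using Cauchy--Schwarz and $\sum_c\|J^c\|_F^2=O(m)$. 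Choosing $\epsilon_0=\epsilon/(2C'_{d,k})$ makes $|\text{Tr}[(H-H_D)\sigma]|\le\tfrac{\epsilon}{2}\,n^{k/2}\,m^{1/2}$ for every product state, while $s$ stays $O(1/\epsilon^2)$.

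To bootstrap this to the spectral norm, note that $A=H-H_D$ is itself a $k$-local Hamiltonian, and truncating a weak-regularity decomposition only decreases Frobenius norms, so its Pauli graphs obey $\sum_c\|J^c_A\|_F^2\le\sum_c\|J^c\|_F^2=O(m)$. Feeding this into the interaction-strength-sensitive form of Theorem~\ref{results-psa} (stated in the body) bounds the product-state approximation gap of both $A$ and $-A$ by $O(n^{(k-1)/3}m^{2/3})$; since $m\le n^k$ one checks $n^{(k-1)/3}m^{2/3}\le n^{-1/3}\,n^{k/2}m^{1/2}$, which is $\le\tfrac{\epsilon}{2}\,n^{k/2}m^{1/2}$ once $n=\Omega(\epsilon^{-3})$ (the residual small-$n$ range being subsumed by the more precise statement proved in the body). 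Now write $\|A\|_\infty=\max\p{\lambda_{\max}(A),-\lambda_{\min}(A)}$. Applying Theorem~\ref{results-psa} to $-A$ produces a product state $\sigma$ with $\lambda_{\max}(A)\le\text{Tr}[A\sigma]+O(n^{(k-1)/3}m^{2/3})$, which is $\le\epsilon\,n^{k/2}m^{1/2}$ by the previous two estimates; applying the theorem to $A$ itself bounds $-\lambda_{\min}(A)$ identically, so $\|H-H_D\|_\infty\le\epsilon\,n^{k/2}m^{1/2}$.

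The step I expect to be the main obstacle is the last one. The cut coefficients $\alpha^c_\ell$ produced by weak regularity can be as large as $\mathrm{poly}(n)$, so the residual $A$ is not a bounded-strength Hamiltonian and the plain form of Theorem~\ref{results-psa} does not apply directly; the resolution is to carry through the Frobenius-norm-sensitive version of that theorem together with the observation that weak regularity can only shrink those norms, and then to verify the arithmetic $n^{(k-1)/3}m^{2/3}=o(n^{k/2}m^{1/2})$ so that the product-state approximation error of the residual is swallowed by the target bound.
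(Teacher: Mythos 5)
Your proposal is correct and follows essentially the same route as the paper: a per-Pauli-colour Frieze--Kannan/AFKK cut decomposition, a cut-norm argument giving closeness on all product states (with the diagonal/self-edge terms absorbed as lower order), and then an upgrade to the spectral norm by applying the interaction-matrix-sensitive product-state approximation (Theorem~\ref{theorem-BHgeneral}) to the residual $H-H_D$, using $\|W\|_F\le\|J\|_F$ and the arithmetic $n^{(k-1)/3}m^{2/3}=o(n^{k/2}m^{1/2})$. The obstacle you flag at the end (the residual is not a bounded-strength, $m$-interaction Hamiltonian) is exactly the point the paper handles the same way, via the $|J|_1\le n^{k/2}\|J\|_F$ and $\|J'\|_F=O(\|J\|_F)$ bounds.
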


\textbf{Additive Error Approximation Schemes}

Leveraging the structure of the Hamiltonian regularity Lemma \ref{results-reg} in combination with the product state approximation toolkit enables us to devise a series of approximation schemes for Quantum Local Hamiltonians. We follow the ideas of \cite{Frieze1999QuickAT, Alon2002RandomSA, Gharan2013ANR} in establishing LP relaxations to Max Cut and other Max CSPs, and we develop an SDP relaxation scheme for finding the minimal energy product state of a Local Hamiltonian. These ideas enable us to devise an efficient additive error approximation scheme for dense Hamiltonians,

\begin{theorem} [Theorem \ref{results-gsptasdense}, restatement]
Fix $d, k=O(1)$ and $\epsilon > 0$. Let $H = \sum_e h_e$ be a $k$-Local Hamiltonian on $n$ qudits of local dimension $d$, and $m$ interactions of bounded strength $\|h_e\|_\infty\leq 1$. There exists a randomized algorithm which runs in time $2^{\tilde{O}(1/\epsilon^{2k-2})}$ in the probe model of computation, and with probability $.99$ computes an estimate for the ground state energy of $H$ accurate up to an additive error of $\epsilon \cdot n^{k/2}\sqrt{m}$. 
\end{theorem}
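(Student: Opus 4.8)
The plan is to route everything through the two structural results already in hand---the mean-field bound of Theorem~\ref{results-psa} and the cut-decomposition of Lemma~\ref{results-reg}---and then to compress the remaining optimization onto a constant-size random sub-instance, in the style of the \cite{Frieze1999QuickAT} and \cite{Alon2002RandomSA} sampling algorithms for dense CSPs.

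\textbf{Step 1: reduce to the product-state energy of a cut-decomposed Hamiltonian.} First dispose of the sparse case: if $m \le \epsilon^2 n^k$ then $|E_0(H)| \le \sum_e \|h_e\|_\infty \le m \le \epsilon\, n^{k/2}\sqrt{m}$, so the algorithm may output $0$; hence assume $m \ge \epsilon^2 n^k$. Writing $E_{\text{prod}}(H) := \min_{\rho = \otimes_u \rho_u}\text{Tr}[H\rho]$, Theorem~\ref{results-psa} (in the interaction-matrix-sensitive form proved in the body) gives $|E_0(H) - E_{\text{prod}}(H)| \le \tfrac{\epsilon}{10}\, n^{k/2}\sqrt{m}$ as soon as $n$ exceeds a threshold $n_0 = \text{poly}(1/\epsilon)$, so up to the target error it is enough to estimate $E_{\text{prod}}(H)$. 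Finally Lemma~\ref{results-reg} yields $H_D = \sum_{i=1}^{s}D^{(i)}$ with $s = O(1/\epsilon^2)$ and $\|H - H_D\|_\infty \le \tfrac{\epsilon}{10}\,n^{k/2}\sqrt{m}$, where each $D^{(i)} = \alpha_i\sum_{u_1\in S_i^{(1)},\dots,u_k\in S_i^{(k)}} \sigma^{a_i^{(1)}}_{u_1}\otimes\cdots\otimes\sigma^{a_i^{(k)}}_{u_k}$ is a complete multipartite (``cut'') sub-Hamiltonian. Since $|\text{Tr}[(H-H_D)\rho]|\le\|H-H_D\|_\infty$ for every state, $|E_{\text{prod}}(H)-E_{\text{prod}}(H_D)| \le \tfrac{\epsilon}{10}\,n^{k/2}\sqrt{m}$, and it remains to estimate $E_{\text{prod}}(H_D)$ to within $O(\epsilon\, n^{k/2}\sqrt m)$.

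\textbf{Step 2: low-dimensional structure and the algorithm.} With $\rho_u = d^{-1}\sum_j\alpha_u^j\sigma^j$ one has $\text{Tr}[H_D\rho] = \sum_{i=1}^{s}\alpha_i\prod_{j=1}^{k}\big(\sum_{u\in S_i^{(j)}}\alpha_u^{a_i^{(j)}}\big)$, so the energy depends on the $\Theta(nd^2)$ Bloch coordinates only through the $O(k/\epsilon^2)$ \emph{aggregates} $A_i^{(j)} := \sum_{u\in S_i^{(j)}}\alpha_u^{a_i^{(j)}}$. The algorithm draws a uniformly random vertex set $W$ with $|W| = q = \tilde O(1/\epsilon^{2k-2})$, queries the $O(q^k)$ sub-matrices supported on $W$ in the probe model, restricts the cut decomposition to $W$, and by exhaustive grid search over product states on $W$---discretizing each single-qudit Bloch body to spacing $\text{poly}(\epsilon)$, hence $(1/\epsilon)^{O(qd^2)} = 2^{\tilde O(1/\epsilon^{2k-2})}$ candidates---computes the minimum restricted energy $\widehat E$ and outputs $(n/q)^k\,\widehat E$. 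The running time is dominated by the grid search, $2^{\tilde O(1/\epsilon^{2k-2})}$, and is independent of $n$; for the leftover range $n \le n_0$ one simply reads all of $H$ and diagonalizes it in time $2^{O(n_0)}$, which stays within the budget precisely because the sharper, interaction-sensitive form of Theorem~\ref{results-psa} keeps $n_0$ small.

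\textbf{Step 3: correctness, and the main obstacle.} One inequality is easy: restricting a near-optimal global product state to $W$ and applying a Hoeffding/Chernoff bound to each of the $O_\epsilon(1)$ aggregates---a union bound over only $O(k/\epsilon^2)$ events---shows $(n/q)^k\widehat E \le E_{\text{prod}}(H_D) + O(\epsilon\, n^{k/2}\sqrt m)$ with probability $\ge .99$. The reverse inequality is the main obstacle: one must show that an approximately optimal product state on $W$ can be \emph{lifted} to a global product state of nearly the same rescaled energy. This is the quantum, multi-coloured analogue of the profile-guessing / greedy-extension arguments of \cite{Frieze1999QuickAT, Alon2002RandomSA}: using convexity of the single-qudit Bloch bodies, the $n-q$ unsampled qudits are assigned greedily so as to reproduce the aggregate profile realized on $W$, and the delicate points are (i) bounding the energy loss of this extension and (ii) verifying that $q = \tilde O(1/\epsilon^{2k-2})$ samples really do suffice for \emph{all} $O_\epsilon(1)$ aggregates to be simultaneously estimated to the required accuracy. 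Tracking the concentration through the $k$-linear objective---and through the overlaps among the sets $S_i^{(j)}$, which are handled as in the construction of the decomposition itself---is what fixes the exponent $2k-2$; everything else (the two reductions of Step~1, the grid-search discretization error, and the probe-model accounting) is a routine if somewhat lengthy calculation.
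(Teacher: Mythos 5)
Your Steps 1 and 2 reuse the paper's structural ingredients correctly, but the proof as proposed has a genuine gap exactly where you flag "the main obstacle": the reverse inequality, i.e.\ that the rescaled optimum of the random sub-instance does not undershoot $E_{\text{prod}}(H_D)$, is never actually proved, and the sketched fix does not work as stated. Reproducing "the aggregate profile realized on $W$" by a greedy/convexity extension would have to be done at the level of the common refinement of the cut sets, which has $2^{\Theta(1/\epsilon^{2k-2})}$ atoms; with $q=\tilde O(1/\epsilon^{2k-2})$ almost all atoms receive no sampled vertices, so there is no per-atom profile to copy, and a Hoeffding union bound over the $O(k/\epsilon^2)$ aggregates only gives the easy direction (restriction), not the lifting. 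This is precisely the difficulty the paper confronts in its vertex-sample-complexity result (Theorem \ref{theorem-vertexsample} and Section \ref{section-vsc}): there the "no good global solution $\Rightarrow$ no good sampled solution" direction is handled not by extension but by SDP/Lagrangian duality, sampling from an infeasibility witness $(\lambda,\mu)$ of the magnetization program, and it additionally needs the Alon et al.\ theorem on cut norms of random submatrices to certify that the restricted cut decomposition is still accurate; even then the sample size is $\tilde\Omega(\epsilon^{-6})$ (not $\epsilon^{-2k+2}$) and the resulting error is $\epsilon n^k$ rather than $\epsilon n^{k/2}\sqrt m$. Your assertion that the concentration analysis "is what fixes the exponent $2k-2$" is therefore unsupported; the exponent in the theorem comes from a different place.

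For comparison, the paper's own proof of this theorem avoids vertex subsampling entirely: after the implicit Frieze--Kannan cut decomposition of width $s=O(\epsilon^{2-2k})$ (Theorem \ref{theorem-kmatrixregfrieze}), it enumerates discretized guesses for the subset magnetizations of the $O(d^{2k}s)$ cut sides, compresses each resulting convex program onto the $2^{O(s)}$ common-refinement variables, estimates the atom sizes by sampling (Claim \ref{claim-estimatesizes}), and checks feasibility of each SDP with the Bertsimas--Vempala solver; correctness then follows from Lemma \ref{lemma-feasibleconstraints} / Claim \ref{claim-kfeasibleconstraints} and Lemma \ref{lemma-perturbedconstraints}, with no lifting step, and the $2^{\tilde O(1/\epsilon^{2k-2})}$ runtime comes from the width of the decomposition and the number of guesses. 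If you want to salvage your sampling route you would need to import the duality argument of Section \ref{section-vsc} (and its larger sample size), at which point you are proving a different, weaker-error statement. A secondary issue: your fallback for $n\le n_0=\Theta(\epsilon^{-3})$ by exact diagonalization costs $2^{\Theta(\epsilon^{-3})}$, which already exceeds the claimed $2^{\tilde O(1/\epsilon^{2})}$ budget in the case $k=2$.
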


We note that $n^{k/2}\sqrt{m} \geq m$, and thus in polynomial or sublinear time this approximation scheme only provides a non-trivial guarantee when the hyper-graph is dense, $m = \Omega(n^k/\log^c n)$ for some small positive constant $c$. However, it provides an improvement over the $n^{O(1/\epsilon^2)}$ time algorithms by \cite{Gharibian2011ApproximationAF} and \cite{Brando2013ProductstateAT} in this additive error regime. On the other hand, a simple explicit variant of this result provides a sub-exponential time approximation algorithm whenever $m = \omega(n^{k-1}\log n)$:

\begin{theorem} 
In the context of Theorem \ref{results-gsptasdense}, there exists a randomized algorithm which runs in time $\tilde{O}(n^k)\cdot 2^{\tilde{O}(n^k/\epsilon^2 m)}$ and with high probability computes an estimate for the ground state energy of $H$ accurate up to an additive error of $\epsilon \cdot  m$. 
\end{theorem}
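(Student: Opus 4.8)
\emph{Proof sketch.} The plan is to combine the constructive weak regularity decomposition with an exhaustive search over a \emph{coarse-grained} version of the instance, in the spirit of the Frieze--Kannan Max-Cut scheme, and to absorb the gap between the ground state energy and the optimal product-state energy using Theorem~\ref{results-psa}. Throughout I rescale $\epsilon$ by a constant so that the three $O(\epsilon m)$ error contributions below sum to at most $\epsilon m$.

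First I would invoke the (constructive) Hamiltonian regularity Lemma~\ref{results-reg} with the rescaled accuracy parameter $\epsilon' = \epsilon\sqrt{m/n^k}$, producing a decomposition $H_D = \sum_{i=1}^{s} D^{(i)}$ into $s = O(1/(\epsilon')^2) = O(n^k/\epsilon^2 m)$ complete bipartite sub-Hamiltonians with $\|H - H_D\|_\infty \le \epsilon' n^{k/2}\sqrt m = \epsilon m$, and in particular $|\mathrm{Tr}[H\sigma] - \mathrm{Tr}[H_D\sigma]| \le \epsilon m$ for every product state $\sigma$. The greedy construction underlying Lemma~\ref{results-reg} can be carried out explicitly in polynomial time: each of the $O(d^{2k}) = O(1)$ tensors $J$ appearing in a basis decomposition of $H$ has $\|J\|_F^2 = O(m)$, and a greedy step that subtracts the cut tensor associated to a constant- (or $\mathrm{polylog}$-) factor approximate cut-norm maximizer of the current residual reduces its squared Frobenius norm by $\Omega((\epsilon m)^2/n^k)$; since such an approximate maximizer is computable in $\mathrm{poly}(n)$ time, the process halts after $O(n^k/\epsilon^2 m)$ steps, which together with reading the instance accounts for the $\tilde{O}(n^k)$ preprocessing term.

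By Theorem~\ref{results-psa}, the minimum energy over product states exceeds the true ground state energy $\lambda_{\min}(H)$ by at most $O(n^{(k-1)/3}m^{2/3})$; since the hypothesis $m = \omega(n^{k-1}\log n)$ forces $m \ge C n^{k-1}/\epsilon^3$ for all sufficiently large $n$, this gap is at most $\epsilon m$. Combined with the product-state closeness of $H$ and $H_D$, it therefore suffices to estimate $\min_{\sigma = \otimes_u \sigma_u}\mathrm{Tr}[H_D\sigma]$ up to additive error $\epsilon m$. Here each $D^{(i)}$ is supported on $k$ subsets $S^{(i)}_1,\dots,S^{(i)}_k \subseteq [n]$, so $\mathrm{Tr}[H_D\otimes_u\sigma_u]$ equals a sum over $i\in[s]$ of a scalar $\alpha_i$ times a product of $k$ linear functionals, the $j$-th being $\sum_{u\in S^{(i)}_j}\mathrm{Tr}[\sigma_u\,\tau^{(i)}_j]$ for a fixed single-qudit matrix $\tau^{(i)}_j$. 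Consequently this objective depends on $\{\sigma_u\}$ only through their membership in the common refinement of the $\le ks = O(s)$ sets $S^{(i)}_j$ --- a partition into $2^{O(s)}$ cells --- and within a cell only through the cell-averaged qudit state.

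The final step is the coarse-grained optimization: I would guess each of the $O(s)$ aggregate statistics $\sum_{u\in S^{(i)}_j}\mathrm{Tr}[\sigma_u\,\tau^{(i)}_j]$ up to additive precision $\mathrm{poly}(\epsilon)\cdot n$, giving $(\mathrm{poly}(1/\epsilon))^{O(s)} = 2^{\tilde{O}(n^k/\epsilon^2 m)}$ candidate profiles, each of which determines the objective to within $\epsilon m$; for each profile one checks, via a convex feasibility program in the $2^{O(s)}$ cell-averages, whether some product state realizes it, returning the minimum feasible objective value (and, from the cell-averages, an implicit near-optimal product state). The total running time is $\tilde{O}(n^k)\cdot 2^{\tilde{O}(n^k/\epsilon^2 m)}$, with randomness entering only through the cut-norm approximation and convex-programming subroutines, amplified to succeed with high probability. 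I expect the main obstacle to be two-fold: certifying that an \emph{efficiently findable} approximate cut-norm maximizer still keeps the Frieze--Kannan potential argument on $k$-tensors down to $s = O(n^k/\epsilon^2 m)$ cut Hamiltonians, and, in the search, replacing the naive $2^{2^{O(s)}}$ enumeration over all cell configurations by the $2^{\tilde{O}(s)}$ enumeration over aggregate profiles while keeping the discretization error uniformly bounded by $\epsilon m$.
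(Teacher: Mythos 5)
Your overall scheme is the same as the paper's: rescale the regularity accuracy to $\epsilon' = \epsilon\sqrt{m/n^k}$ so the width becomes $O(n^k/\epsilon^2 m)$, use the product-state regularity to pass from $H$ to $H_D$, absorb the entangled-versus-product gap via Theorem~\ref{results-psa} under $m=\omega(n^{k-1}\log n)$, and then estimate the variational minimum of $H_D$ by enumerating magnetization guesses and checking convex feasibility over the common refinement, exactly as in Claim~\ref{claim-kfeasibleconstraints} and Corollary~\ref{corollary-kvadditive} (the paper also repeats the procedure $\log n$ times to boost to high probability).

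The one step where your proposal has a genuine gap is the construction of the width-$O(1/\epsilon'^2)$ decomposition. Your greedy potential argument needs, at each step, an efficiently computable near-maximizer of the \emph{cut norm of a $k$-tensor}; for $k\ge 3$ no polynomial-time constant- or polylog-factor approximation of this quantity is known, and the paper's own constructive Frieze--Kannan tensor decomposition (Theorem~\ref{theorem-kmatrixregfrieze}) only achieves width $O(\epsilon'^{\,2-2k})$, which would destroy the claimed runtime once $k\ge 3$. The paper avoids this entirely by invoking the explicit Alon--de la Vega--Kannan--Karpinski decomposition (Theorem~\ref{theorem-kmatrixreg}), which achieves width $O(1/\epsilon'^2)$ with coefficient length $\sum_i|d_i| \le 2\|J\|_F/\epsilon'\sqrt{N}$ in time $2^{O(1/\epsilon'^2)}\cdot O(n^k\log 1/\delta)$; note that a $2^{O(1/\epsilon'^2)}$-time construction is perfectly acceptable here since it is already within the target budget $\tilde{O}(n^k)\cdot 2^{\tilde{O}(n^k/\epsilon^2 m)}$, so you do not need (and should not claim) a $\mathrm{poly}(n)$-time cut-norm maximizer. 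With that substitution your error accounting goes through, with the minor corrections that the feasibility analysis uses the ADKK coefficient-length bound (case (2) of Claim~\ref{claim-kfeas}) and that the grid precision must be $\gamma = O(\epsilon'^2)\cdot n$ per aggregate statistic (i.e.\ $\mathrm{poly}(1/\epsilon')$, not $\mathrm{poly}(1/\epsilon)$, grid points), which still yields $2^{\tilde{O}(n^k/\epsilon^2 m)}$ profiles.
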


Concretely, the key idea behind these relaxations is that for any product state $\sigma = \otimes_u \sigma_u$, the energy of $\sigma$ on the cut decomposition $H_D$ is a simple function of the average magnetization of a small number of subsets of the $n$ qudits. To illustrate how this enables a relaxation scheme, consider a single complete bipartite sub-Hamiltonian, such as $H_{S, T} = \sum_{u\in S, v\in T} X_u\otimes Y_v$. The energy of $\sigma$ on $H_{S, T}$ is

\begin{equation}
    \text{Tr}[H_{S, T}\sigma] =  \sum_{u\in S, v\in T} \text{Tr}_{u, v}[ X_u \sigma_u\otimes Y_v \sigma_v\big] = \bigg(\sum_{u\in S} \text{Tr}[ X_u \sigma_u]\bigg)\cdot \bigg(\sum_{v\in T} \text{Tr}[ Y_v \sigma_v]\bigg),
\end{equation}

\noindent simply the product of the average $X$ direction magnetization of $S\subset [n]$ with the average $Y$ magnetization of $T$. If we fix a `guess' $r, c\in [-n, n]$, one can introduce affine constraints on the single particle density matrices $\sigma_u$, constraining their average magnetizations to lie within a $\pm \gamma \cdot n$ range of the guess $r, c$:
\begin{gather}
    r - \gamma \cdot n \leq \sum_{u\in S} \text{Tr}[ X_u \sigma_u]\leq r + \gamma \cdot n \\
    c - \gamma \cdot n \leq \sum_{v\in T} \text{Tr}[ Y_v \sigma_v]\leq c + \gamma \cdot n 
\end{gather}

Then we are guaranteed that any product state $\sigma$ which is feasible for the constraints above must have energy in a range around the guess: $|\text{Tr}[H_{S, T}\sigma]- r\cdot c|\leq (2\cdot \gamma+\gamma^2) \cdot n^2$. In this manner, one can discretize over the space of `guesses' $(r, c)$ and define an overlapping set of convex constraints on the description of the product states $\sigma$, such that every product state is feasible for at least one set of constraints. Approximating the ground state energy among product states ultimately reduces to checking the feasibility of a constant number of SDPs, one for each guess of $r, c$, and outputting whichever gives us the smallest energy estimate. 

Using the techniques by \cite{Gharan2013ANR}, we can extend these insights to the setting of symmetric $2$-Local Hamiltonians defined on graphs of low threshold rank. They proved that the weak regularity results of \cite{Frieze1999QuickAT} could be extended to low-threshold rank graphs, by constructing a cut decomposition of a low rank approximation to the normalized adjacency matrix of these graphs. While in the appendix we formalize approximation algorithms for generic symmetric Hamiltonians (on low threshold rank graphs), perhaps the most faithful extension of this result to the quantum setting would be its application to approximating the Quantum Max Cut \cite{Gharibian2019AlmostOC, Parekh2021BeatingRA, Parekh2021ApplicationOT, Parekh2022AnOP}. Given an undirected graph $G = (V, E)$,  the ``Quantum Max-Cut" corresponds to the maximum eigenvalue of the Hamiltonian 

\begin{equation}
    H = \frac{1}{2}\sum_{e\in E}\bigg(\mathbb{I}_u\otimes \mathbb{I}_v  - X_u\otimes X_v- Y_u\otimes Y_v- Z_u\otimes Z_v\bigg) \otimes \mathbb{I}_{V\setminus\{u, v\}}
\end{equation}

If $A$ is the adjacency matrix of $G$ and $D$ the diagonal matrix of degrees, the $\delta$-SOS threshold rank $t_\delta(A)$ of $A$ is the number of eigenvalues of the normalized adjacency matrix $D^{-1/2}AD^{-1/2}$ which are outside of the range $[-\delta, \delta]$. We prove

\begin{theorem}\label{results-qmc}
    Fix $\epsilon, \delta > 0$. Let $G = (V, E)$ be a graph on $n$ vertices and $m$ edges with adjacency matrix $A$ and threshold rank $t\equiv t_{\epsilon/2}(A)$. Then, there exists an algorithm which finds an $\epsilon\cdot m + O(n^{1/3}m^{2/3})$ additive error approximation to the Quantum Max Cut of $G$ in time $\text{poly}(n, 1/\epsilon, t) + 2^{\tilde{O}(t/\epsilon^2)}$.
\end{theorem}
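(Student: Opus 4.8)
The plan is to combine the product-state approximation of Theorem~\ref{results-psa} with the low-threshold-rank regularity and SDP-relaxation machinery of \cite{Gharan2013ANR} (in the symmetric $2$-local form we develop for generic low-threshold-rank Hamiltonians). Write the Quantum Max-Cut Hamiltonian as $H = \tfrac{m}{2}\mathbb{I} + H'$ with $H' = -\tfrac12\sum_{(u,v)\in E}(X_u X_v + Y_u Y_v + Z_u Z_v)$, so that $\text{QMC}(G) = \tfrac m2 + \lambda_{\max}(H')$. Applying Theorem~\ref{results-psa} to the $2$-local Hamiltonian $-H'$, whose local terms have bounded strength, and negating, $\lambda_{\max}(H')$ agrees with $\max_{\sigma = \otimes_u \sigma_u} \text{Tr}[H'\sigma]$ up to an additive error $O(n^{1/3} m^{2/3})$; this accounts for the second term in the claimed bound. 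It then remains to approximate the optimal product-state energy to additive error $\epsilon m$ within the stated time.

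Parametrize each single-qubit state by its Bloch vector, $\sigma_u = \tfrac12(\mathbb{I} + a^X_u X + a^Y_u Y + a^Z_u Z)$, where $\sigma_u \ge 0$ is exactly the Euclidean ball constraint $(a^X_u)^2 + (a^Y_u)^2 + (a^Z_u)^2 \le 1$. Then $\text{Tr}[H'\sigma] = -\tfrac12 \sum_{(u,v)\in E} \langle \vec a_u, \vec a_v\rangle$, which up to handshaking equals a universal constant times $\sum_{c \in \{X,Y,Z\}} \vec a^{\,c\,T} A\, \vec a^{\,c}$. So the product-state optimum is a convex-feasible quadratic program: minimize $\sum_c \vec a^{\,c\,T} A\, \vec a^{\,c}$ over $\vec a^X, \vec a^Y, \vec a^Z \in \mathbb{R}^n$ subject to the per-vertex ball constraints. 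This is precisely the symmetric $2$-local setting in which all three Pauli graphs share the adjacency matrix $A$, so the Hamiltonian's threshold rank is $t_{\epsilon/2}(A) = t$.

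Following \cite{Gharan2013ANR}, split the normalized adjacency $\bar A = D^{-1/2} A D^{-1/2} = \bar A_{\ge} + \bar A_{<}$ into the rank-$t$ projection onto eigenvalues of magnitude $\ge \epsilon/2$ and a remainder with $\|\bar A_{<}\|_\infty \le \epsilon/2$. Writing $\vec b^{\,c} = D^{1/2}\vec a^{\,c}$, so that $\sum_c \|\vec b^{\,c}\|_2^2 = \sum_u d_u \sum_c (a^c_u)^2 \le 2m$, the remainder changes the objective by at most $\bigl|\sum_c \vec b^{\,c\,T}\bar A_{<}\vec b^{\,c}\bigr| \le \tfrac\epsilon2 \cdot 2m = \epsilon m$, so we may replace $A$ by $A_{\ge} := D^{1/2}\bar A_{\ge} D^{1/2}$ at the price of $O(\epsilon m)$ error. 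Then, given the top-$t$ eigenspace of $\bar A$ (computable in $\text{poly}(n,t)$ time), construct a weak-regularity cut decomposition $A_{\ge} \approx \hat A = \sum_{i=1}^{s}\alpha_i \mathbf 1_{S_i}\mathbf 1_{T_i}^T$ with $s = O(t/\epsilon^2)$ and degree-adapted cut-norm error $O(\epsilon m)$, so that $\bigl|\sum_c \vec a^{\,c\,T}(A_{\ge} - \hat A)\vec a^{\,c}\bigr| = O(\epsilon m)$ uniformly over the feasible Bloch vectors. On $\hat A$ the objective becomes $\sum_c \sum_i \alpha_i\bigl(\sum_{u\in S_i} a^c_u\bigr)\bigl(\sum_{v\in T_i} a^c_v\bigr)$, a bilinear function of the $O(t/\epsilon^2)$ ``magnetizations'' $\sum_{u\in S_i} a^c_u$ and $\sum_{v\in T_i} a^c_v$. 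As in the proof of Theorem~\ref{results-gsptasdense}, discretize each magnetization into cells of width $\gamma n$ with $\gamma = \text{poly}(\epsilon)$, and for each of the $(1/\gamma)^{O(t/\epsilon^2)} = 2^{\tilde{O}(t/\epsilon^2)}$ guesses solve the feasibility SDP over the $\sigma_u$ with affine constraints pinning the magnetizations, returning the best attained value. Since each vertex interacts with the decomposition only through its membership pattern among the $S_i,T_i$, a $\text{poly}(n,t,1/\epsilon)$ preprocessing step aggregating the (degree-weighted) statistics of the $\le 2^{2s}$ membership classes reduces each guess to an $O(t/\epsilon^2)$-dimensional convex program solvable in $2^{\tilde{O}(t/\epsilon^2)}$ time independent of $n$. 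Adding the preprocessing and search costs yields runtime $\text{poly}(n,1/\epsilon,t) + 2^{\tilde{O}(t/\epsilon^2)}$; outputting this estimate plus $\tfrac m2$, and rescaling $\epsilon$ by a universal constant to absorb the $O(\epsilon m)$ errors above, gives the claim.

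\textbf{Main obstacle.} The delicate point is the degree-reweighting. A cut decomposition of $A$ itself only controls error at scale $\epsilon n^2$ (vacuous unless $G$ is dense), while a decomposition of $\bar A_{\ge}$ controls error only against bounded vectors, not against the $\vec b^{\,c} = D^{1/2}\vec a^{\,c}$, which can have large $\ell_\infty$ norm. The resolution --- the technical heart, due to \cite{Gharan2013ANR} --- is to decompose the low-rank part in a way adapted to the degree sequence so that the induced quadratic-form error is $O(\epsilon \sum_u d_u) = O(\epsilon m)$; verifying that this degree-aware decomposition composes cleanly with the Bloch-ball constraint (which couples the three colours at each vertex) and with the SDP discretization, all while keeping the $2^{\tilde{O}(t/\epsilon^2)}$ cost additive rather than multiplicative in $n$, is where the bulk of the work lies. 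A minor point is checking that the single-qubit PSD constraint is exactly a second-order-cone constraint, so the relaxation is a genuine SDP solvable to adequate accuracy in polynomial time.
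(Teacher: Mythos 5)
Your proposal follows essentially the same route as the paper's proof: Theorem \ref{results-psa} (via Theorem \ref{theorem-BHgeneral}) supplies the $O(n^{1/3}m^{2/3})$ product-state error, the low-threshold-rank cut decomposition of \cite{Gharan2013ANR} applied to the normalized adjacency matrix yields the $\epsilon\cdot m$ regularity error, and guessed magnetization constraints checked by feasibility SDPs, compressed through the common refinement that the Quantum Max Cut's colour symmetry makes possible, give the $\text{poly}(n,t,1/\epsilon)+2^{\tilde O(t/\epsilon^2)}$ runtime. The only sketch-level slips are that the cut matrices in the \cite{Gharan2013ANR} decomposition are degree-weighted (of the form $\alpha\, d_{S}d_{T}^{\top}$), so the paper constrains \emph{degree-weighted} magnetizations discretized at scale $\delta\|J\|_1$ with $\delta=O(\epsilon^3/t^{3/2})$ rather than plain magnetizations in cells of width $\gamma n$, and the compressed program has one single-qubit density matrix per refinement class — $2^{O(t/\epsilon^2)}$ variables, obtained by degree-weighted averaging within each class (this is exactly where the colour-independence of the degrees in QMC is needed) — rather than $O(t/\epsilon^2)$ variables.
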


For instance, sparse $D$-regular random graphs have $\Theta(D^{-1/2})$-SOS threshold rank $1$. In this manner, for any constant $\epsilon$ and if $D = \Omega(1/\epsilon^{3})$, then one can compute an $\epsilon\cdot m$ approximation to the Quantum Max Cut of a $D$-regular random graph in polynomial time.

A series of works \cite{Jain2018TheMA, Jain2018TheVS, Jain2019MeanfieldAC} showed that the matrix weak regularity lemma \cite{Frieze1999QuickAT} could be used to approximate the free energy of Ising Models, and to give interesting structural results on the quality of the mean-field approximation and the `vertex sample complexity' of these systems. They observed that the maximum entropy program subject to the linear relaxation constraints described above, reveals properties of the Gibbs distribution and enables an additive error approximation to the free energy at all temperatures. By combining these ideas with the Hamiltonian regularity Lemma \ref{results-reg} and Theorem \ref{results-psafe} on product state approximations to the free energy, we develop a series of additive error approximation schemes for the free energy of Quantum Local Hamiltonians. The first of which is a constant time approximation scheme, which provides an additive error guarantee in a low temperature regime.

\begin{theorem} \label{results-variationalfe}
Fix $k, d = O(1)$, and $\epsilon, \delta > \omega(n^{-1/(2k-2)})$ and an inverse temperature $\beta > 0$, and let $H$ be a $k$-Local Hamiltonian on $n$ qudits of local dimension $d$ and $m$ bounded strength interactions. Then, there exists an algorithm that runs in time $2^{\tilde{O}(\epsilon^{2-2k})}\cdot O(\delta^{-2})$ in the probe model of computation, that returns an estimate to the free energy accurate up to an additive error of $\epsilon n^{k/2}m^{1/2} + \delta n /\beta$ and is correct with probability $.99$. 
\end{theorem}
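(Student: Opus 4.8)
The plan is to mirror the guess-and-check approach of Theorem~\ref{results-gsptasdense}, but applied to the product-state \emph{free} energy rather than the ground energy, with the extra entropy term handled by a per-block maximum-entropy computation. Throughout, write $F^{\mathrm{prod}}(\beta) = \min_{\sigma = \otimes_u \sigma_u} f(\sigma)$ for the mean-field free energy. Since product states are a subset of all states, $F(\beta) \le F^{\mathrm{prod}}(\beta)$, and by Theorem~\ref{results-psafe}, $F^{\mathrm{prod}}(\beta) \le F(\beta) + O(n^{(k-1)/3}m^{2/3})$; in the stated parameter regime (using $m = O(n^k)$) this last term is $O(\epsilon\, n^{k/2}\sqrt m)$, so it suffices to estimate $F^{\mathrm{prod}}(\beta)$ to within $\epsilon n^{k/2}\sqrt m + \delta n/\beta$. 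Next, invoke Lemma~\ref{results-reg} to build an implicit cut decomposition $H_D = \sum_{i=1}^s D^{(i)}$ into $s = \mathrm{poly}(1/\epsilon)$ complete bipartite sub-Hamiltonians with $\|H - H_D\|_\infty \le \epsilon n^{k/2}\sqrt m$, constructed with $2^{\tilde{O}(\epsilon^{2-2k})}$ probes exactly as in Theorem~\ref{results-gsptasdense}. Because $|\text{Tr}[(H-H_D)\sigma]| \le \|H-H_D\|_\infty$ for any density matrix $\sigma$, the functionals $f_H$ and $f_{H_D}$ agree up to $\epsilon n^{k/2}\sqrt m$ on all product states, so the task reduces to approximating $\min_{\sigma} f_{H_D}(\sigma)$.

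\textbf{Decoupling the entropy over atoms.} Let $\mathcal{P}$ be the partition of $[n]$ into the atoms of the common refinement of all the ($\mathrm{poly}(1/\epsilon)$ many) subsets appearing in the sub-Hamiltonians $D^{(i)}$; as in Theorem~\ref{results-gsptasdense}, $|\mathcal{P}| \le 2^{\tilde{O}(\epsilon^{2-2k})}$, and the atom containing any queried vertex can be determined with $O(1)$ probes to the implicit description. Writing $\sigma_u = d^{-1}\sum_i \alpha_u^i \sigma^i$, the energy $\text{Tr}[H_D\sigma]$ is a fixed multilinear function (instance-independent apart from the cut weights) of the atom-aggregated Pauli vectors $y_P = \sum_{u\in P}\alpha_u \in \mathbb{R}^{d^2}$, just as in the computation preceding Theorem~\ref{results-gsptasdense}; meanwhile $S(\sigma) = \sum_u S(\sigma_u)$ is separable, and concavity of the von Neumann entropy gives, for each atom, $\sum_{u \in P} S(\sigma_u) \le |P|\,S(\tau_P)$ where $\tau_P$ is the single-qudit state with Pauli vector $y_P/|P|$, with equality when $\sigma_u = \tau_P$ for all $u \in P$. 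Hence
\begin{equation}
\min_{\sigma = \otimes_u \sigma_u} f_{H_D}(\sigma) \;=\; \min_{\{\tau_P\}}\Big(E_{H_D}\big(\{|P|\,\tau_P\}\big) - \tfrac{1}{\beta}\sum_{P\in\mathcal{P}} |P|\, S(\tau_P)\Big),
\end{equation}
a minimization over just $|\mathcal{P}|$ single-qudit density matrices whose only instance-dependent inputs are the cut weights (already computed) and the atom sizes $|P|$.

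\textbf{The algorithm.} As in Theorem~\ref{results-gsptasdense}, discretize the bounded vector of block magnetizations to a grid fine enough that the induced change in $E_{H_D}$ is $\le \epsilon n^{k/2}\sqrt m$, giving $2^{\tilde{O}(\epsilon^{2-2k})}$ candidate ``guesses'' $\hat\mu$. Separately, draw $2^{\tilde{O}(\epsilon^{2-2k})}\cdot O(\delta^{-2})$ uniformly random vertices and classify each into its atom, so the empirical atom frequencies estimate the vector of atom sizes to total $\ell_1$-error at most $\delta n$ (and, along the way, the linear statistics needed for the energy estimate). For each guess $\hat\mu$: test whether it is realizable by atom states lying in the Bloch body; if so, solve the constant-dimensional convex program of maximizing $\sum_P \widehat{|P|}\, S(\tau_P)$ over $\{\tau_P\}$ subject to the affine constraints fixing their atom-magnetization contributions to $\hat\mu$, and record $\widehat E_{H_D}(\hat\mu) - \tfrac1\beta\sum_P \widehat{|P|}\,S(\tau_P^{\star})$. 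Output the minimum recorded value. The total cost is $2^{\tilde{O}(\epsilon^{2-2k})}\cdot O(\delta^{-2})$ probes and time, and a union bound over the sampling events yields success probability $0.99$.

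\textbf{Correctness, and the main obstacle.} For the optimal product state $\sigma^\star$ of $H_D$, its atom-averages are feasible for the guess $\hat\mu$ nearest its block magnetizations, and Jensen yields $\sum_P |P| S(\tau_P^\star) \ge S(\sigma^\star)$, so that guess records a value at most $F^{\mathrm{prod}}_{H_D} + O(\epsilon n^{k/2}\sqrt m)$; conversely, every realizable guess corresponds to a genuine product state (constant within each atom), so no recorded value is below $F^{\mathrm{prod}}_{H_D} - O(\epsilon n^{k/2}\sqrt m)$. Replacing true atom sizes by estimates perturbs the energy term by $\le \epsilon n^{k/2}\sqrt m$ and, since $S(\tau_P) \le \log d = O(1)$, the entropy term by $\tfrac1\beta\sum_P |\widehat{|P|}-|P|| \log d = O(\delta n/\beta)$. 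Chaining with the reductions above and rescaling $\epsilon,\delta$ by constants gives the claim. I expect the main obstacle to be the accounting around the entropy: one must check that the per-atom max-entropy decoupling loses \emph{nothing} (unlike the energy, which is controlled only multilinearly and through the cut norm), and that the finer precision to which the $2^{\tilde{O}(\epsilon^{2-2k})}$ atom sizes must be known — finer than what Theorem~\ref{results-gsptasdense} already needs — is exactly what the extra $O(\delta^{-2})$ samples buy; this is also where the hypothesis $\epsilon,\delta > \omega(n^{-1/(2k-2)})$ enters, keeping the sample size sublinear in $n$ and ensuring the Theorem~\ref{results-psafe} error is subsumed.
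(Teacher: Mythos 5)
Your proposal follows essentially the same route as the paper's proof: an implicit Frieze--Kannan-style cut decomposition of width $O(\epsilon^{2-2k})$, enumeration of discretized block-magnetization guesses, compression to the common refinement of the cuts where concavity of the von Neumann entropy shows the per-atom averaging loses nothing (the paper's observation that $O_{r,\gamma}=\tilde{O}_{r,\gamma}$), a per-guess maximum-entropy convex program whose entropy weights are sampled atom sizes, and the reduction from the true free energy to the mean-field free energy via Theorem \ref{results-psafe} plus product-state regularity. The architecture and the key lemmas match.

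The one place your accounting has a genuine gap is the precision of the atom-size estimates. You ask only for total $\ell_1$ error $\delta n$, with $\delta$ the free thermal parameter, and then assert that this perturbs the energy/feasibility side by at most $\epsilon n^{k/2}\sqrt{m}$. That holds only when $\delta\lesssim\gamma$, where $\gamma=O(\epsilon^{k})$ is the grid slack in the magnetization constraints (Claim \ref{claim-kfeasibleconstraints}, Corollary \ref{corollary-kvadditive}). The atom sizes enter the feasibility constraints: an aggregate size error of $\delta n$ shifts each block magnetization by up to $\delta n$, so when $\delta\gg\gamma$ the guess matched to the optimal product state (and, conversely, the product state reconstructed from a feasible guess) can have block magnetizations off by $\Theta(\delta n)$ rather than $O(\gamma n)$; through the coefficient-length bound this becomes an energy error of order $\delta\, s^{1/2} n^{k/2}\sqrt{m}=\delta\,\epsilon^{1-k} n^{k/2}\sqrt{m}$, which is outside your stated budget $\epsilon n^{k/2}\sqrt m + \delta n/\beta$. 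The paper enforces per-atom accuracy $\delta'/A\cdot n$ with $\delta'<\gamma/(4\log d)$ (Claims \ref{claim-approxfeas} and \ref{claim-estimatesizes}, Lemma \ref{lemma-perturbedconstraints}), i.e., aggregate accuracy $\min(\delta,\gamma)\cdot n$: fine enough for the feasibility argument and fine enough that the noisy entropy weights contribute only $O(\delta n\log d/\beta)$. The fix is routine — estimate to accuracy $\min(\delta,\gamma)\cdot n$, noting the extra $\gamma^{-2}=\epsilon^{-2k}$ sample factor is absorbed into $2^{\tilde{O}(\epsilon^{2-2k})}$ so the stated runtime survives — but as written your sampling step does not support the claimed error bound in the regime $\delta\gg\epsilon^{k}$.
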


We emphasize that the free energy is a convex program regularized by temperature, and thereby our approximation schemes often incur a tradeoff between combinatorial errors and thermal (temperature dependent) errors. In the low temperature regime, whenever $\beta = \Omega(n^{1-k/2}m^{-1/2})$, the algorithm above recovers the behavior of the ground state energy approximation scheme, and is largely temperature independent. However, as the temperature increases and surpasses the threshold, the leading source of error becomes the thermal error $\delta n/\beta$. In our second algorithm, we show that an explicit approach significantly improves this thermal error dependence, at the cost of a polynomial runtime.

\begin{theorem} \label{results-variationalfeexplicit}
Fix $k, d = O(1)$, and $\epsilon, \delta > 0$ and an inverse temperature $\beta > 0$, and let $H$ be a $k$-Local Hamiltonian on $n$ qudits of local dimension $d$ and $m$ bounded strength interactions. Then, there exists an algorithm that runs in time $2^{\tilde{O}(\epsilon^{-2})}\cdot \tilde{O}(n^k \log 1/\delta)$, that returns an estimate to the free energy accurate up to an additive error of $\epsilon n^{k/2}m^{1/2} + \delta n /\beta$ and is correct with high probability. 
\end{theorem}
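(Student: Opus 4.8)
The strategy is: reduce the free-energy program to its mean-field value via Theorem~\ref{results-psafe}; replace $H$ by an explicitly constructed cut decomposition $H_D$ as in Lemma~\ref{results-reg}; and solve the resulting mean-field problem by enumerating discretized magnetizations for the $O(1/\epsilon^2)$ blocks of $H_D$ and solving one small convex maximum-entropy program per enumeration. This is the quantum analogue, routed through $H_D$, of the Frieze--Kannan-based free-energy algorithms of \cite{Jain2018TheMA,Jain2018TheVS,Jain2019MeanfieldAC} for Ising models.

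\textbf{Reduction to a mean-field cut program.} By Theorem~\ref{results-psafe} it suffices to approximate $\min_{\sigma=\otimes_u\sigma_u}(\text{Tr}[H\sigma]-\tfrac1\beta\sum_uS(\sigma_u))$: the product-state penalty $O(n^{(k-1)/3}m^{2/3})$ is $o_\epsilon(n^{k/2}m^{1/2})$, hence below $\epsilon n^{k/2}m^{1/2}$ once $n\geq\text{poly}(1/\epsilon)$, and for smaller $n$ (a constant, since $\epsilon$ is fixed) $H$ has constant size and we solve it exactly. For each tuple of basis elements $P=(P_1,\dots,P_k)$ form the Pauli tensor $J_P=\{d^{-k}\text{Tr}[H_e\,\sigma^{P_1}\otimes\cdots\otimes\sigma^{P_k}]\}_e$, and run the tensor weak-regularity construction of \cite{Frieze1999QuickAT,Alon2002RandomSA} on each of the $O(d^{2k})$ of them, in time $\tilde{O}(n^k)$ and with high probability, to obtain a decomposition $H_D=\sum_{i=1}^sD^{(i)}$ into $s=O_d(1/\epsilon^2)$ complete $k$-partite sub-Hamiltonians $D^{(i)}=\alpha_i\sum \sigma^{P_i^{(1)}}_{u_1}\otimes\cdots\otimes\sigma^{P_i^{(k)}}_{u_k}$ (sum over distinct $u_j\in S_i^{(j)}$), together with a common refinement of the blocks $\{S_i^{(j)}\}$ into $2^{O(\epsilon^{-2})}$ parts. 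The Frieze--Kannan iteration bound $\|W_t\|_C\leq n^{k/2}|J_P|_2/\sqrt t$ after $t=O(1/\epsilon^2)$ steps, the estimate $\sum_P|J_P|_2\leq O_d(m^{1/2})$ (from $\sum_P|J_P|_2^2=\sum_e\|H_e\|_2^2/d^k\leq m$), and the inequality $|\langle W,x_1\otimes\cdots\otimes x_k\rangle|\leq 2^k\|W\|_C$ for $x_j\in[-1,1]^n$ together give $|\text{Tr}[(H-H_D)\sigma]|\leq\epsilon n^{k/2}m^{1/2}$ for every product state $\sigma$ --- the cut-norm consequence of Lemma~\ref{results-reg}, which is all that is needed here. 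Crucially, $\text{Tr}[H_D\sigma]=\sum_i\alpha_i\prod_{j=1}^km_{i,j}(\sigma)$ up to a coinciding-index correction of size $O_k(\text{poly}(1/\epsilon)\,n^{k/2-1}m^{1/2})$ (also absorbed for $n$ large), where $m_{i,j}(\sigma):=\sum_{u\in S_i^{(j)}}\text{Tr}[\sigma^{P_i^{(j)}}\sigma_u]$; thus the energy depends on $\sigma$ only through $O(1/\epsilon^2)$ block magnetizations.

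\textbf{Enumeration and convex subprograms.} Assign to each vertex its \emph{type} --- the set of refinement parts it lies in --- and compute the type counts $n_\tau$ exactly in time $\tilde{O}(n^k)$. Discretize each normalized block magnetization at scale $\gamma=\Theta(\epsilon^2)$; this produces $(1/\gamma)^{O(s)}=2^{\tilde{O}(\epsilon^{-2})}$ guesses $r=(r_{i,j})$, and by the Frieze--Kannan $\ell_2$ bound $\sum_i\alpha_i^2\prod_j|S_i^{(j)}|\leq\sum_P|J_P|_2^2\leq m$ the error of $\widehat E(r):=\sum_i\alpha_i\prod_jr_{i,j}$ against $\text{Tr}[H_D\sigma]$, for any $\sigma$ realizing $r$, is at most $\epsilon n^{k/2}m^{1/2}$. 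For each guess, maximize $\sum_uS(\sigma_u)$ over product states with each $m_{i,j}(\sigma)$ within $\gamma|S_i^{(j)}|$ of $r_{i,j}$; by concavity of the von Neumann entropy and linearity of the constraints, the optimum is attained with $\sigma_u$ equal to a common $\rho_\tau\in\mathbb{C}^{d\times d}$ on each type, so it is a convex program in $2^{O(\epsilon^{-2})}$ density matrices with $O(1/\epsilon^2)$ affine constraints, solvable to additive objective precision $\delta n/\beta$ in time $2^{O(\epsilon^{-2})}\cdot\text{polylog}(1/\delta,n,\beta,d)$ (equivalently, optimize the $O(1/\epsilon^2)$-dimensional Lagrangian dual, whose optimizer has $\rho_\tau\propto\exp(\sum_{(i,j):S_i^{(j)}\ni\tau}\lambda_{i,j}\sigma^{P_i^{(j)}})$). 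Return $\min_r(\widehat E(r)-\tfrac1\beta\widehat S(r))$, with $\widehat S(r)$ the computed maximum-entropy value, along with the winning $\rho_\tau$'s as an explicit near-optimal product state. For the upper bound, the near-optimal product state of Theorem~\ref{results-psafe} is feasible for the nearest guess, so the output is $\leq F+O(\epsilon n^{k/2}m^{1/2})+\delta n/\beta$; for the lower bound, the near-maximizer of the winning subprogram is a genuine product state of free energy $\geq F$, so the output is $\geq F-O(\epsilon n^{k/2}m^{1/2})-\delta n/\beta$. Rescaling $\epsilon$ by a constant and adding the runtimes of the two phases gives $2^{\tilde{O}(\epsilon^{-2})}\cdot\tilde{O}(n^k\log 1/\delta)$.

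\textbf{Main obstacle.} The delicate part is the maximum-entropy subprogram: in the classical Ising setting the per-block entropy is the binary entropy function in closed form, whereas here one must show that after fixing the discretized block magnetizations the residual non-commutative maximum-entropy problem collapses to a genuinely small convex program with a type-homogeneous optimizer, and that it can be solved to additive objective precision $\delta n/\beta$ with only a $\log(1/\delta)$ runtime overhead --- which entails controlling the unbounded curvature of $-S(\rho)$ near the boundary of the qudit state space (e.g.\ via a self-concordant barrier, or an a~priori interiority bound on the optimizer coming from the $\exp(\cdots)$ form of the dual solution). A secondary point is calibrating the two error scales at once: one must verify that the magnetization grid can be taken as coarse as $\gamma=\Theta(\epsilon^2)$, so that the number of guesses $2^{\tilde{O}(\epsilon^{-2})}$ is independent of $\delta$, while still keeping the enumeration-induced energy error within $\epsilon n^{k/2}m^{1/2}$ --- which is exactly where the $\ell_2$ bound on the Frieze--Kannan cut coefficients is used.
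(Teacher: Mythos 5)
Your proposal is essentially the paper's own proof of Theorem \ref{results-variationalfeexplicit}: reduce to product states via the free-energy product-state approximation (Theorem \ref{theorem-feproduct}), explicitly build the width-$O(\epsilon^{-2})$ tensor cut decomposition of Theorem \ref{theorem-kmatrixreg} in time $2^{\tilde{O}(\epsilon^{-2})}\cdot\tilde{O}(n^k\log 1/\delta)$, compress via the common refinement to a type-homogeneous maximum-entropy convex program per magnetization guess on a $\gamma=\Theta(\epsilon^2)$ grid, and conclude by feasibility of the near-optimal product state plus the variational characterization, exactly as in appendix \ref{section-feptas-dense}. The only cosmetic difference is your calibration of the grid via an $\ell_2$ bound on the cut coefficients, whereas the paper's width-$O(\epsilon^{-2})$ decomposition only guarantees the $\ell_1$ bound $\sum_i|d_i|\leq 2\|J\|_F/\epsilon\sqrt{N}$ and reaches the same $\gamma=O(\epsilon^2)$ conclusion through Claim \ref{claim-kfeas}, and the paper discharges your ``main obstacle'' by solving the compressed maximum-entropy programs with the Bertsimas--Vempala oracle method (Theorem \ref{theorem-bertsimasoptimization}) together with the interiority/volume bounds of appendix \ref{appendix-volume}.
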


\textbf{The Vertex Sample Complexity}

The Regularity Lemma \ref{lemma-hregularity} enables us to derive an insightful structural statement for Local Hamiltonians. Namely, the definition of a `vertex sample complexity' for Local Hamiltonians of bounded interaction strengths, in an analogy to the vertex sample complexity of Max-kCSPs of \cite{Alon2002RandomSA} and \cite{Andersson2002PropertyTF}. They showed that the restriction of any Max-kCSP to a uniformly random sample of $\text{poly}(1/\epsilon)$ variables, sufficed to estimate the maximum number of satisfiable clauses up to an additive error of $\epsilon\cdot n^k$. We develop a generalization of this result to Quantum Local Hamiltonians, by combining the Hamiltonian regularity lemma with some extensions to the proof techniques by \cite{Alon2002RandomSA} to SDPs.

\begin{theorem} \label{results-vsc}
    Fix $d, k=O(1)$ and $\epsilon > 0$, and let $H$ be a $k$-local Hamiltonian on $n$ qudits of local dimension $d$ and $m$ bounded interaction strengths. Let $Q\subset [n]$ be a uniformly random sample of $q = \Omega(\epsilon^{-6} \log 1/\epsilon)$ of those qudits, and let $H_Q$ be the sum of interactions with support contained entirely in $Q$. Then, with probability $0.99$,
    
    \begin{equation}
        \bigg|\min_{\rho}\text{Tr}[H\rho]-\frac{n^k}{q^k}\min_{\rho_Q}\text{Tr}[H_Q\rho_Q]\bigg|\leq \epsilon \cdot n^k
    \end{equation}
    
\end{theorem}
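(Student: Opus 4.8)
The plan is to collapse both sides to the minimization of a single fixed, constant-degree polynomial over a constant-dimensional convex body of ``magnetization profiles'', and then to show that this body barely moves under vertex sub-sampling. First, Theorem~\ref{results-psa} applied to both $H$ and $H_Q$ shows that $\min_\rho\text{Tr}[H\rho]$ and $\min_{\rho_Q}\text{Tr}[H_Q\rho_Q]$ differ from their variational (product-state) values by $O(n^{(k-1)/3}m^{2/3}) = O(n^{k-1/3})$ and $O(q^{k-1/3})$; since $q = \Omega(\epsilon^{-6}\log 1/\epsilon) \le n$, these are $O(n^{k-1/3})$ and, after the $n^k/q^k$ rescaling of the sample, $O\big(\epsilon^{2}(\log 1/\epsilon)^{-1/3} n^{k}\big)$, both $\le \tfrac{\epsilon}{10}n^k$ for $\epsilon$ small. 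So it suffices to take both minima over product states $\rho = \otimes_u \rho_u$.

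Next I would invoke the Hamiltonian regularity lemma (Lemma~\ref{results-reg}, in the explicit cut-decomposition form underlying Lemma~\ref{lemma-hregularity}) with parameter $\epsilon' = \Theta(\epsilon)$, writing $H \approx H_D = \sum_{i=1}^{s}D^{(i)}$ with $s = O(1/\epsilon^2)$ complete $k$-partite sub-Hamiltonians, each $D^{(i)}$ specified by leg-sets $S^{(i)}_1,\dots,S^{(i)}_k \subseteq [n]$, single-qudit ``directions'' $P^{(i)}_1,\dots,P^{(i)}_k$, and a weight $\alpha_i$. For a product state, $\text{Tr}[D^{(i)}\rho] = \alpha_i\prod_{j=1}^{k}\big(\sum_{u\in S^{(i)}_j}\text{Tr}[P^{(i)}_j\rho_u]\big)$ up to lower-order diagonal corrections when leg-sets overlap, hence $\text{Tr}[H_D\rho] = n^k\,g(p(\rho))$, where $p_{ij}(\rho) = \tfrac{1}{n}\sum_{u\in S^{(i)}_j}\text{Tr}[P^{(i)}_j\rho_u] \in [-1,1]$ is the normalized magnetization profile and $g(p) = \sum_i\alpha_i\prod_j p_{ij}$ is a fixed degree-$k$ polynomial with Lipschitz constant $L = \text{poly}(1/\epsilon)$ on $[-1,1]^{sk}$. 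The operator-norm bound $\|H-H_D\|_\infty \le \epsilon' n^{k/2}m^{1/2} \le \epsilon' n^k$ controls the $[n]$ side; for $H_Q$ (where $(H_D)_Q$ keeps only the terms of $H_D$ supported inside $Q$), the cut-norm guarantee of the decomposition combined with a Frieze--Kannan-type sampling lemma \cite{Frieze1999QuickAT} applied colour-by-colour to the residual gives $|\text{Tr}[H_Q\rho_Q]-\text{Tr}[(H_D)_Q\rho_Q]| \le O(\epsilon')q^k$ for every product $\rho_Q$, with high probability over $Q$. Since $\text{Tr}[(H_D)_Q\rho_Q] = q^k\,g(\tilde p(\rho_Q))$ with $\tilde p_{ij}(\rho_Q) = \tfrac{1}{q}\sum_{w\in S^{(i)}_j\cap Q}\text{Tr}[P^{(i)}_j\rho_w]$, we conclude $\min_\rho\text{Tr}[H\rho] = n^k\min_{p\in\mathcal{P}_{[n]}}g(p) \pm O(\epsilon)n^k$ and $\tfrac{n^k}{q^k}\min_{\rho_Q}\text{Tr}[H_Q\rho_Q] = n^k\min_{\tilde p\in\mathcal{P}_Q}g(\tilde p) \pm O(\epsilon)n^k$, where $\mathcal{P}_{[n]}$ and $\mathcal{P}_Q$ are the convex sets of profiles achievable by product states on $[n]$, respectively on $Q$. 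It remains to bound $|\min_{\mathcal{P}_{[n]}}g - \min_{\mathcal{P}_Q}g|$.

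For the inequality $\tfrac{n^k}{q^k}\min_{\rho_Q}\text{Tr}[H_Q\rho_Q] \le \min_\rho\text{Tr}[H\rho] + \epsilon n^k$ I fix a product state $\rho^\star$ attaining $\min_{\mathcal{P}_{[n]}}g$ (it depends only on $H$, not on $Q$) and restrict it to $Q$: since $\rho^\star$ is fixed, Hoeffding--Serfling for sampling without replacement and a union bound over the $sk$ coordinates give $\|\tilde p(\rho^\star_Q) - p(\rho^\star)\|_\infty = O(\gamma)$ with high probability, where $\gamma = O(\sqrt{\log(s/\delta)/q})$, so $\min_{\mathcal{P}_Q}g \le g(\tilde p(\rho^\star_Q)) \le \min_{\mathcal{P}_{[n]}}g + L\cdot O(\gamma)$. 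The reverse direction is the crux: given $\tilde p\in\mathcal{P}_Q$ realized by $\{\tau_w\}_{w\in Q}$, I must exhibit a product state $\{\sigma_u\}_{u\in[n]}$ with $\|p(\sigma)-\tilde p\|_\infty = O(\gamma)$. Keeping $\sigma_w = \tau_w$ on $Q$, the task reduces to choosing $\sigma_u$ for $u\notin Q$ so that the qudits outside $Q$ supply, in each of the $sk$ directions, the missing $\approx(1-q/n)$-fraction of the target magnetization --- a feasibility problem for $O(1/\epsilon^2)$ affine constraints over the convex set of single-qudit density matrices on $[n]\setminus Q$, which I would solve with $O(\gamma)$ slack by a transportation/Hall-type argument, exploiting that $Q$ is a uniform sample so that the capacities entering these constraints (the sizes of $S^{(i)}_j\setminus Q$ and of their common intersections) concentrate about their means. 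Combining the two directions, taking $\epsilon' = \Theta(\epsilon)$ and $\gamma = \Theta(\epsilon^3)$ --- which is exactly what forces $q = \Omega(\epsilon^{-6}\log 1/\epsilon)$ --- and absorbing the errors from the product-state reduction and the diagonal corrections, yields the claim with probability $0.99$.

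The hard part is this last extension step: realizing, by a product state on all $n$ qudits, a magnetization profile that is merely known to be feasible on the sampled qudits $Q$, without increasing the energy by more than $\epsilon n^k$. In the classical Max-$k$CSP setting of \cite{Alon2002RandomSA} the analogue is an LP-feasibility statement over the Boolean cube; here the ``value'' at each qudit is a density matrix and the $k$ leg-directions of a sub-Hamiltonian generally do not commute, so the $O(1/\epsilon^2)$ magnetization constraints being matched are coupled through the single-qudit operator structure, and the extension must be phrased and checked as an SDP-feasibility statement that is stable under sub-sampling. A secondary point, which is precisely why one uses the cut-norm rather than the operator-norm form of the regularity lemma on the $H_Q$ side, is that the accuracy of the decomposition must survive restriction to a random $q$-subset --- the content of the Frieze--Kannan sampling lemma applied to each colour of the residual.
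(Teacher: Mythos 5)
Your overall scaffolding --- reducing both sides to product states via Theorem \ref{results-psa}, passing to the cut decomposition, controlling the restricted residual in cut norm via a sampling lemma, and handling the easy direction by restricting the global optimizer to $Q$ and applying Hoeffding over the $O(1/\epsilon^2)$ magnetization coordinates --- is exactly the route taken in section \ref{section-vsc} (Corollary \ref{corollary-bhdense}, Lemma \ref{lemma-sampleregularity}, Claim \ref{claim-sampleupperbound}). The gap is in the step you yourself flag as the crux: showing the sampled minimum cannot be much lower than the global one. You propose to do this constructively, extending an arbitrary feasible assignment $\{\tau_w\}_{w\in Q}$ to a product state on all of $[n]$ realizing nearly the same magnetization profile, and you justify feasibility of the extension by concentration of ``the sizes of $S^{(i)}_j\setminus Q$ and of their common intersections.'' But which profiles are jointly realizable is governed precisely by those common intersections, i.e.\ by the atoms of the common refinement of the leg sets, and there are $2^{\Theta(s)} = 2^{\Theta(1/\epsilon^2)}$ of them. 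The sample assignment may correlate its single-qudit states with the atom structure of $Q$, so the natural extension (averaging within each atom and copying to its unsampled vertices) incurs an error controlled by the total-variation distance between the empirical and true atom distributions, which is only $O(\gamma)$ once $q \gtrsim 2^{\Omega(1/\epsilon^2)}/\gamma^2$ --- far beyond the claimed $q = \Theta(\epsilon^{-6}\log 1/\epsilon)$. As written, the ``transportation/Hall-type argument'' is an unproven claim, and its most natural instantiation does not give polynomial-in-$1/\epsilon$ sample complexity.

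The paper avoids constructing any global state: for each discretized guess $(r,c)$ it proves the contrapositive --- if the global constraint set $C_{r,c,\gamma}$ is infeasible, then so is the sampled set $C_{rq/n,cq/n,\gamma/2}(Q)$ with high probability --- via SDP duality (Claim \ref{claim-samplelowerbound}). An infeasible primal yields dual multipliers $(\lambda,\mu)$ with $-\sum_j \lambda_j b_j - \sum_u \mu_u > 0$; the PSD constraints force $|\mu_u|\le \|\lambda\|_1$, so restricting $\mu$ to $Q$ and applying Hoeffding produces a dual witness for the sampled program, and the only union bound needed is over the $2^{O(\epsilon^{-2}\log 1/\epsilon)}$ guesses $(r,c)$, which is what fixes $q = \Omega(\epsilon^{-6}\log 1/\epsilon)$ with $\gamma = O(\epsilon^2)$. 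If you prefer your convex-geometric phrasing, the workable version of your extension step is also dual: compare the support functions of the two profile bodies $\sum_a \frac{|\mathcal{A}_a|}{n}B_a$ and $\sum_a\frac{|\mathcal{A}_a\cap Q|}{q}B_a$ one direction at a time (each is an empirical mean of a bounded function of the sampled vertex) and union bound over a net of directions --- which is the separating-hyperplane/Lagrangian argument in disguise, not a Hall-type matching.
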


We rely crucially on the guarantee of product state approximations to Quantum Local Hamiltonians in this regime of additive error. Indeed, one of the directions of the statement above is quite intuitive for both classical and quantum systems: If the ground state energy of $H$ is low, then the ground state energy of the restriction $H_Q$ can't be much higher than the estimate. This is since the reduced density matrix $\rho_Q = \text{Tr}_{V\setminus Q}[\psi]$ of the ground state $\psi$ of $H$, probably also has low energy $\text{Tr}[H_Q\rho_Q]\approx \frac{q^2}{n^2}\cdot \text{Tr}[H\psi]$, and the true ground state energy of $H_Q$ can only be lower than that. 

In the converse, however, lies an interesting `semi-classical' characterization of this additive error regime. Note that if the ground state energy of $H$ is `high', then in particular there doesn't exist any product states with low energy on $H$. Using the proof techniques in \cite{Alon2002RandomSA}, we show this implies the existence of a certain succinctly describable classical certificate to this product-state `infeasibility', which we sample from to prove the absence of product states with low energy on $H_Q$. Here is where we require the asymmetric product state approximations of Theorem \ref{results-psa}: for sufficiently large $Q$, the absence of low energy product states for $H_Q$ must imply a high ground state energy for $H_Q$. In this sense, the ground state energy of $H_Q$ can't be much lower than its estimate either. 

As a straightforward corollary to this structural result, now we can easily devise an algorithm which provides an additive error guarantee by exactly diagonalizing the Hamiltonian $H_Q$ on $q = \tilde{O}(\epsilon^{-6})$ vertices in time $2^{\tilde{O}(1/\epsilon^6)}$. However, we can in fact do slightly better, simply by applying the additive error, product state approximation algorithm by \cite{Gharibian2011ApproximationAF} to the subsample:

\begin{corollary}
    Fix $d, k=O(1)$ and $\epsilon > 0$, and let $H$ be a $k$-Local Hamiltonian on $n$ qudits of local dimension $d$ and $m$ bounded interaction strengths. There exists a randomized algorithm which runs in time $2^{\tilde{O}(\epsilon^{-2})}$, and with probability $.99$ outputs an estimate to the ground state energy accurate up  to an additive error of $\epsilon\cdot n^k$.
\end{corollary}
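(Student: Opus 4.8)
The plan is to compose the vertex sample complexity bound (Theorem~\ref{results-vsc}) with the additive-error, product-state approximation algorithm of \cite{Gharibian2011ApproximationAF}, applied not to $H$ itself but to the constant-size random restriction $H_Q$. First I would invoke Theorem~\ref{results-vsc} with error parameter $\epsilon/3$, drawing a uniformly random $Q\subseteq[n]$ of $q = \Theta(\epsilon^{-6}\log(1/\epsilon)) = \tilde{O}(\epsilon^{-6})$ qudits; with probability $0.99$ this reduces the task to estimating the ground state energy $E_Q := \min_{\rho_Q}\text{Tr}[H_Q\rho_Q]$ of the $k$-local Hamiltonian $H_Q$ on $q$ qudits, since then $|\min_\rho\text{Tr}[H\rho] - (n^k/q^k)E_Q| \leq (\epsilon/3)n^k$. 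Forming $H_Q$ costs $O(q^k)$ probes, a constant for fixed $\epsilon$, one per local term supported inside $Q$.

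It thus suffices to approximate $E_Q$ to within additive error $(\epsilon/3)q^k$. Instead of diagonalizing $H_Q$ exactly, which would cost $2^{\Theta(q)} = 2^{\tilde{O}(\epsilon^{-6})}$, I would run the algorithm of \cite{Gharibian2011ApproximationAF} on $H_Q$ with error parameter $\Theta(\epsilon)$: as a $k$-local Hamiltonian on only $q$ qudits with bounded-strength terms it runs in time $q^{O(1/\epsilon^2)} = 2^{O(\epsilon^{-2}\log q)} = 2^{\tilde{O}(\epsilon^{-2})}$ and returns an estimate $\widehat{E}_Q$ of the optimal product-state energy of $H_Q$ accurate to $O(\epsilon q^k)$ (in fact to $O(\epsilon m_Q)$ with $m_Q \leq q^k$ the number of terms of $H_Q$). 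To relate this back to $E_Q$ I apply Theorem~\ref{results-psa} to $H_Q$: the optimal product-state energy of $H_Q$ differs from $E_Q$ by at most $O(q^{(k-1)/3} m_Q^{2/3}) = O(q^{\,k-1/3})$, which is at most $(\epsilon/3)q^k$ whenever $q = \Omega(\epsilon^{-3})$ — comfortably true since $q = \Theta(\epsilon^{-6}\log(1/\epsilon))$. Collecting the three error contributions and rescaling by $n^k/q^k$ yields an estimate within $\epsilon n^k$ of the ground state energy of $H$, correct with probability $0.99$ after a union bound over the two constant failure probabilities; the running time is dominated by the Gharibian--Kempe step, i.e. $2^{\tilde{O}(\epsilon^{-2})}$.

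I expect the only real subtlety to be the error bookkeeping — ensuring the three budgets (the $\epsilon/3$ from Theorem~\ref{results-vsc}, the approximation error of \cite{Gharibian2011ApproximationAF}, and the product-state gap of $H_Q$ from Theorem~\ref{results-psa}) all become multiples of $n^k$ after the $n^k/q^k$ rescaling — together with the degenerate regime $n \leq q$, which is handled by simply running the algorithm of \cite{Gharibian2011ApproximationAF} directly on $H$ (costing $n^{O(1/\epsilon^2)} \leq 2^{\tilde{O}(\epsilon^{-2})}$ since then $n = \tilde{O}(\epsilon^{-6})$). No ideas beyond the two cited theorems and the cited algorithm are needed; the content of the corollary is precisely that passing to $H_Q$ first lets an $n^{O(1/\epsilon^2)}$-type algorithm operate on a universe of size $\text{poly}(1/\epsilon)$ rather than $n$.
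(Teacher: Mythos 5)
Your proposal is correct and is essentially the paper's own argument: the corollary is obtained exactly by sampling $Q$ as in Theorem \ref{results-vsc} and then running the additive-error product-state algorithm of \cite{Gharibian2011ApproximationAF} on $H_Q$, with the product-state gap of Theorem \ref{results-psa} (applied to $H_Q$, where $q=\tilde{O}(\epsilon^{-6})\gg \epsilon^{-3}$ makes the $O(q^{k-1/3})$ term negligible) bridging the product-state optimum of $H_Q$ to its true ground state energy before rescaling by $n^k/q^k$. Your error bookkeeping and the $q^{O(\epsilon^{-2}\log 1/\epsilon)}=2^{\tilde{O}(\epsilon^{-2})}$ runtime accounting match what the paper intends.
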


Aside from the improved dependence on $k$ in the exponent, this result may seem to only subtly differ from that in Theorem \ref{results-gsptasdense}. However, we emphasize that Theorem \ref{results-gsptasdense} requires an exponential number in $1/\epsilon$ of samples of vertices, whereas Theorem \ref{results-vsc} guarantees a polynomial number suffices. \\

\textbf{Approximation Schemes on Graphs that exclude a Fixed Minor}

Finally, for our last main contribution, we develop novel singly-exponential time algorithms for sparse, $2$-Local Hamiltonians defined on graphs that exclude a fixed minor. Formally, the family of $h$-minor free graphs are all the graphs $G$ that can not produce another (smaller) graph $h$, by deleting edges and vertices and by contracting edges \cite{Robertson1986GraphMI}. Planar graphs, and bounded genus graphs (such as toriods) are among the interesting special cases of these classes. Our approach builds on previous work by \cite{Bansal2009ClassicalAS} and \cite{Brando2013ProductstateAT} on planar graphs, using more general combinatorial decompositions \cite{Demaine2005AlgorithmicGM} and improving on their `quantum-to-classical' mappings. We show how such $2$-Local Hamiltonians can be approximately understood as classical Max $k$-CSPs defined on the high degree vertices in the graph, and develop a dynamic programming algorithm to solve it using a simple hyper-dimensional version of a tree decomposition. Our first result for these systems is a classical algorithm to approximate the ground state energy in time singly exponential in $\text{poly}(1/\epsilon)$,

\begin{theorem} 
Fix $\epsilon > 0$. Let $H$ be a 2-Local Hamiltonian defined on $n$ qubits and $m = \Theta(n)$ bounded strength interactions of norm $ < 1$, configured on an $h$-minor free graph $G = (V, E)$ where the minor is constant size $|h| = O(1)$. Then, we can approximate the ground state energy of $H$ up to additive error $\epsilon\cdot n$, in time $\text{poly}(n) + n\cdot 2^{\text{poly}(1/\epsilon)}$.
\end{theorem}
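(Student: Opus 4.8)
\emph{Proof plan.} The plan is to turn the quantum problem into a classical Max-$k$-CSP over the high-degree vertices of $G$, coupled to a family of small \emph{independent} quantum subproblems over the low-degree vertices, and then to solve the combined object by a dynamic program over a tree decomposition supplied by the structure theory of minor-closed families. Fix a degree threshold $t=\text{poly}(1/\epsilon)$ and split $V=V_H\sqcup V_L$ into vertices of degree above and below $t$; since $m=\Theta(n)$ we have $|V_H|\le 2m/t$. First I would argue it suffices to optimize over \emph{partially product} states $\rho=(\bigotimes_{v\in V_H}\rho_v)\otimes\rho_{V_L}$, entangled arbitrarily on $V_L$ but unentangled at each high-degree vertex. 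Running the self-decoupling argument underlying Theorem~\ref{results-psa} and \cite{Brando2013ProductstateAT} \emph{only} at the vertices of $V_H$, the error incurred by decoupling a vertex $v$ scales sublinearly in its degree, so the total error is $\lesssim \sum_{v\in V_H}\deg(v)^{2/3}\le |V_H|^{1/3}(2m)^{2/3}\le 2m/t^{1/3}\le\epsilon n$ once $t$ is a large enough polynomial in $1/\epsilon$. Discretizing each single-qudit $\rho_v$ over a net of size $\text{poly}(1/\epsilon)$ costs a further $O(\epsilon n)$, so each high-degree vertex now carries one of a constant number of classical \emph{labels} $a(v)$.

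\emph{Decoupling the labelled problem.} Conditioned on a labelling $a\colon V_H\to[\text{poly}(1/\epsilon)]$, the residual operator is a $2$-local Hamiltonian $H_a$ on $V_L$: every $V_H$–$V_L$ interaction becomes a single-qudit field term on its $V_L$ endpoint, every $V_H$–$V_H$ interaction becomes an additive constant depending only on $a$, and the $V_L$–$V_L$ interactions are untouched. The interaction graph of $H_a$ is $G[V_L]$, which is $h$-minor-free with maximum degree $\le t$, hence \emph{hyperfinite}: there is a set $F$ of at most $\epsilon n$ edges whose removal breaks $G[V_L]$ into connected pieces $C_1,\dots,C_r$ of size $w_0=\text{poly}(1/\epsilon)$. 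Dropping $F$ (cost $\le\epsilon n$), the low-degree part of $H_a$ becomes a \emph{disjoint sum} of sub-Hamiltonians supported on the $C_i$, and the sub-Hamiltonian on $C_i$ depends on $a$ only through its restriction to $S_i:=N(C_i)\cap V_H$, with $|S_i|\le w_0 t=\text{poly}(1/\epsilon)$. Brute-force diagonalization over the $2^{|C_i|}$-dimensional space of $C_i$ for each of the $2^{\text{poly}(1/\epsilon)}$ possible restrictions $a|_{S_i}$ precomputes a table $\hat g_i(a|_{S_i})$ in total time $n\cdot 2^{\text{poly}(1/\epsilon)}$, and what remains is the purely classical optimization $\min_a\big[\sum_{(v,w)\in E_{HH}}(\text{pair term in }a)+\sum_i\hat g_i(a|_{S_i})\big]$ — a Max-$k$-CSP with $k=\text{poly}(1/\epsilon)$ on the variables $V_H$.

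\emph{Solving the CSP — the main obstacle.} Solving this CSP in time $n\cdot 2^{\text{poly}(1/\epsilon)}$, rather than the $n^{\text{poly}(1/\epsilon)}$ one would get from a generic additive-error CSP solver, is the heart of the matter. Its constraint hypergraph is $G-F$ with each $C_i$ contracted onto the hyperedge $S_i$; this inherits a minor-free-type structure, but it need \emph{not} have treewidth $\text{poly}(1/\epsilon)$: a single high-degree vertex adds only one to the treewidth, but there are $\Theta(n/t)$ of them and no black-box hyperfiniteness or Baker-style layering distributes them cheaply (removing a whole layer is linear-sized). Instead I would invoke a combinatorial decomposition of $h$-minor-free graphs along the lines of \cite{Demaine2005AlgorithmicGM}: a tree decomposition whose torsos are almost-embeddable in a bounded-genus surface up to $O_h(1)$ apices and bounded-width vortices, so that each bag contains only $O_h(1)$ genuinely high-degree vertices. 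Applying a second round of hyperfiniteness to the bounded-degree, almost-planar part of each torso then yields a tree decomposition of $G-F$ in which every bag holds $O(1)$ of the pieces $C_i$ — contributing a $2^{\text{poly}(1/\epsilon)}$-dimensional quantum register — and only $\text{poly}(1/\epsilon)$ high-degree vertices, each a constant-size classical register.

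Running the standard bounded-treewidth Hamiltonian dynamic program over this ``hyper-dimensional'' tree decomposition then costs $n\cdot 2^{\text{poly}(1/\epsilon)}$ (the DP message on a bag is a joint state of its $C_i$'s projected onto the adhesion set, tensored with the classical labels of its high-degree vertices); computing all the decompositions is $\text{poly}(n)$; and summing the additive errors — decoupling, discretization, dropping $F$, and the second hyperfiniteness — and rescaling $\epsilon$ gives the stated $\epsilon\cdot n$ guarantee in time $\text{poly}(n)+n\cdot 2^{\text{poly}(1/\epsilon)}$. The delicate point throughout is precisely this simultaneous control of the width of the decomposition and of the number of high-degree vertices sharing a bag, which is why the graph-minor structure theory — not merely separator or hyperfiniteness bounds — is what is needed here, improving on the planar, bounded-degree ``quantum-to-classical'' reductions of \cite{Bansal2009ClassicalAS, Brando2013ProductstateAT}.
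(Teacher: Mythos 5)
Your overall plan — split at a degree threshold $\Delta=\mathrm{poly}(1/\epsilon)$, decouple only the high-degree vertices via the Brand\~ao--Harrow self-decoupling argument, discretize their single-qudit states with a net, break the bounded-degree low-degree subgraph into clusters of size $\mathrm{poly}(1/\epsilon)$, and treat each cluster (after diagonalizing it for every assignment of its $\le r\Delta=\mathrm{poly}(1/\epsilon)$ high-degree neighbors) as an effective $\mathrm{poly}(1/\epsilon)$-local classical constraint among those neighbors, to be solved by a tree-decomposition dynamic program — is essentially the paper's strategy (Lemma \ref{highlow}, Lemma \ref{recvertexsep}, Claims \ref{gssparse} and \ref{dpgs}). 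The gap is in the step you yourself flag as ``the heart of the matter.'' You dismiss Baker-style layering on the grounds that high-degree vertices cannot be distributed cheaply, and instead appeal to the Robertson--Seymour structure theorem, asserting that almost-embeddable torsos force each bag to contain only $O_h(1)$ genuinely high-degree vertices. That assertion is false in general: almost-embeddability places no bound on the number of high-degree vertices in a torso (already a planar graph on $n$ vertices may have $\Theta(n/\Delta)$ vertices of degree $\ge\Delta$ and constitutes a single torso with no apices or vortices), and high-degree vertices of $G$ are not apices of the decomposition. Your subsequent ``second round of hyperfiniteness'' therefore has nothing to act on, and the claimed decomposition with $O(1)$ clusters and $\mathrm{poly}(1/\epsilon)$ high-degree vertices per bag is not established.

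The obstacle you identified is in fact handled by the very tool you rejected. The generalized Baker decomposition used in the paper, Theorem \ref{hminors}, is an \emph{edge} partition of the whole $h$-minor-free graph $G$ (high-degree vertices included) into $k+1$ classes such that any $k$ of them induce treewidth $O_h(k)$; it makes no degree assumption. Taking $k=O(1/\epsilon)$ and deleting the lightest class removes at most $m/(k+1)=O(\epsilon m)=O(\epsilon n)$ edges (here the sparsity $m=\Theta(n)$ is exactly what makes ``removing a whole layer'' affordable), and the surviving graph — including all of $V_H$ — has treewidth $O(1/\epsilon)$. One then contracts each low-degree cluster to a supernode, and augments every bag containing a cluster by that cluster's high-degree neighborhood, of size at most $r\Delta=\mathrm{poly}(1/\epsilon)$; since a bag holds $O(1/\epsilon)$ nodes, the augmented width is still $\mathrm{poly}(1/\epsilon)$, and the dynamic program of Claim \ref{dpgs} runs in $n\cdot 2^{\mathrm{poly}(1/\epsilon)}$ passing only classical net-labels of the high-degree vertices (cluster states are re-optimized inside a bag, so no quantum messages are needed). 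To repair your write-up, replace the structure-theorem step by this application of Theorem \ref{hminors} to all of $G$ before the cluster decomposition of $G[V_L]$; the remainder of your argument then goes through along the lines of Theorem \ref{theorem-sparsegsapprox}.
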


We build on these ideas by combining them with our information-theoretic techniques for the free energy of quantum systems, to construct novel algorithms for the free energy of these classes of sparse graphs at low temperatures as well. 

\begin{theorem} 
Fix $\epsilon > 0$ and an inverse temperature $\beta$. Let $H$ be a 2-Local Hamiltonian on $n$ qubits and $m = \Theta(n)$ bounded strength interactions of norm $ < 1$, configured on an $h$-minor free graph $G = (V, E)$ where the minor is constant size $|h| = O(1)$. Then, we can approximate the the free energy $F(\beta)$ of $H$ up to additive error $\epsilon\cdot n$, in time $\text{poly}(n)+ n\cdot \max(2, \beta^{-1})^{\text{poly}(1/\epsilon)}$, respectively.
\end{theorem}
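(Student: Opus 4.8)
The plan is to lift the ground-state algorithm for $h$-minor-free Hamiltonians (the previous theorem) to the free energy. The ground-state algorithm reduces the quantum problem to a classical Max-$k$-CSP on the high-degree vertices, solved by dynamic programming on a (hyper-dimensional) tree decomposition; the free-energy version keeps exactly this architecture but (i) replaces ``minimum energy of a product assignment'' by ``minimum variational free energy $\operatorname{Tr}[H\sigma]-\beta^{-1}\sum_u S(\sigma_u)$ over product states $\sigma=\otimes_u\sigma_u$'', which Theorem \ref{results-psafe} certifies is a valid surrogate up to the target error, and (ii) has the dynamic program accumulate, besides local energies, the single-site entropies. The temperature will enter only through how finely the continuum of single-qubit states must be discretized.

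Concretely, I would fix a degree threshold $D=\mathrm{poly}(1/\epsilon)$, split the qubits into the high-degree set $W$ (with $|W|=O(n/D)$) and the rest, and discretize single-qubit density matrices on a net $\mathcal N$ of resolution $\gamma$ on the Bloch ball. The net must be fine enough that both $\operatorname{Tr}[h_e\,\sigma_u\otimes\sigma_v]$ and $S(\sigma_u)$ are captured to additive $O(\epsilon)$ and $O(\epsilon\beta)$ respectively; since the entropy as a function of the Bloch radius has integrable but near-boundary-unbounded derivative, taking $\gamma=\Theta(\epsilon\cdot\min(1,\beta))$ and hence $|\mathcal N|=q=\mathrm{poly}(\max(2,\beta^{-1})/\epsilon)$ suffices (the $\max(2,\cdot)$ reflects that no extra resolution is needed once $\beta\gtrsim 1$). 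Next I would apply the combinatorial decomposition of $h$-minor-free graphs of \cite{Demaine2005AlgorithmicGM} — the minor-closed analogue of Baker layering — to partition the interactions into $O(1/\epsilon)$ classes so that deleting the cheapest class removes at most $\epsilon n/3$ interactions and leaves an interaction (hyper)graph of treewidth $O_h(\mathrm{poly}(1/\epsilon))$; by a standard Peierls--Bogoliubov/Klein-inequality estimate the free energy changes by at most $\|H-H'\|_\infty\le \epsilon n/3$, uniformly in $\beta$, so this truncation is free. On the bounded-treewidth remainder I would build the hyper-dimensional tree decomposition with bags of $\mathrm{poly}(1/\epsilon)$ vertices and run the DP: a bag state is a profile in $\mathcal N$ for its vertices, the DP value is $\operatorname{Tr}[H'\sigma]-\beta^{-1}\sum_u S(\sigma_u)$ restricted to the processed subtree, and transitions minimize over profiles agreeing on bag overlaps. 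Each bag costs $q^{\mathrm{poly}(1/\epsilon)}=\max(2,\beta^{-1})^{\mathrm{poly}(1/\epsilon)}$, there are $O(n)$ bags, and building the decomposition and the per-vertex contribution tables is $\mathrm{poly}(n)$, giving the stated running time; the output is the minimum DP value.

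The error accounting splits $\epsilon n$ into three budgets of $\epsilon n/3$: the interaction truncation (above), the mean-field relaxation, and rounding the near-optimal product state onto $\mathcal N$ (a local, per-vertex perturbation controlled by $\gamma$). The bound $F(\beta)\le f(\sigma)$ for any product state is immediate, so the crux — and the main obstacle — is the reverse inequality: showing the decomposed variational problem overshoots $F(\beta)$ by at most $\epsilon n/3$. A blunt appeal to Theorem \ref{results-psafe} gives only an $O(n)$ slack on these sparse graphs, and indeed globally the product-state free energy can be $\Theta(n)$ from optimal (the antiferromagnetic Heisenberg cycle at $T=0$), so one must \emph{localize} the mean-field penalty to the interactions incident to $W$, where the graph-sensitive form of Theorem \ref{results-psafe} yields a bound $O\!\left(\sum_{v\in W}\sqrt{\deg v}\right)=O(n/\sqrt D)\le \epsilon n/3$ by Cauchy--Schwarz, while the low-degree, bounded-treewidth part is handled with no extra product-state loss; for large $\beta$ this forces a case split in which one instead returns the exact ground energy of the truncated Hamiltonian (within $(n\ln 2)/\beta\le \epsilon n$ of $F(\beta)$ once $\beta\ge \ln 2/\epsilon$, consistently with $\max(2,\beta^{-1})=2$ in that regime). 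Threading a single product configuration through the truncation, the $W$-relaxation, and the net-rounding so that all three budgets hold simultaneously, and verifying that the entropy net genuinely pins the thermal term to $O(\epsilon\beta n)$ additive, are the delicate steps.
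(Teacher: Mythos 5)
There is a genuine gap, and it sits exactly at the point you flag as the ``crux.'' Your dynamic program assigns every vertex in a bag a single-qubit state from the net $\mathcal N$ and accumulates $\operatorname{Tr}[H'\sigma]-\beta^{-1}\sum_u S(\sigma_u)$, i.e.\ it searches only over \emph{fully} product states. As you yourself observe, on sparse bounded-treewidth graphs (an antiferromagnetic Heisenberg cycle at low temperature, where every vertex has degree $2$ and the high-degree set $W$ is empty) the best product state is $\Theta(n)$ away from $F(\beta)$, so no choice of net, truncation, or budget-splitting can rescue this ansatz: the algorithm's output is lower-bounded by the product-state free energy of $H'$, which already overshoots by $\Theta(n)$. ``Localizing'' the mean-field penalty to interactions incident to $W$ does not fix this, because the Brand\~ao--Harrow high--low degree statement only licenses breaking entanglement \emph{at the high-degree vertices}; the approximating state it produces remains entangled on the low-degree remainder, and your DP never represents such states. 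The paper's proof supplies precisely the missing ingredient: after removing the high-degree vertices, it applies the recursive vertex separator (Lemma \ref{recvertexsep}) to the low-degree subgraph to cut it into disjoint clusters of size $\mathrm{poly}(1/\epsilon)$, and proves via the entropy non-decreasing property of the measurement map (Theorem \ref{theorem-entropy}, Lemma \ref{highlowfe}, Claim \ref{claim-feapproxexist}) that a \emph{clustered} product state --- entangled inside each cluster, product across clusters and on $\mathcal H$ --- approximates $F(\beta)$ to $O(\epsilon n)$. In the DP, each cluster then becomes an effective $\mathrm{poly}(1/\epsilon)$-local classical interaction among its (at most $r\Delta=\mathrm{poly}(1/\epsilon)$) high-degree neighbors: once the neighbors' netted states are fixed, the cluster's optimal free-energy contribution is computed by exact diagonalization of a $2^{\mathrm{poly}(1/\epsilon)}$-dimensional Hamiltonian, and the tree decomposition is augmented by these neighborhoods so the recursion remains valid. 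Only the high-degree vertices are discretized, and the Fannes--Audenaert bound on the entropy of those $O(n)$ netted qubits is what produces the $\max(2,\beta^{-1})^{\mathrm{poly}(1/\epsilon)}$ factor.

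Your large-$\beta$ fallback also does not go through as stated: ``return the exact ground energy of the truncated Hamiltonian'' is not available in the claimed time, since bounded treewidth of a \emph{quantum} Hamiltonian does not by itself make the ground energy computable --- this is exactly why even the ground-state theorem in the paper needs the cluster decomposition plus entanglement breaking at high-degree vertices rather than a direct treewidth DP. The parts of your plan that do align with the paper --- the Demaine et al.\ edge-deletion to $\mathrm{poly}(1/\epsilon)$ treewidth with a Weyl/variational $\epsilon n$ truncation error, the temperature-dependent net resolution $\Theta(\epsilon\min(1,\beta))$ to control the entropy error, and the overall error budgeting --- are sound, but without the cluster-level entanglement and the existence statement for clustered product states the central approximation guarantee fails.
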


\subsection{Related Work}
\label{subsection-related}

\textbf{Classical Approximation Schemes for QMA Complete Problems}

While the systematic study of approximation algorithms to QMA-Complete problems is still emerging, there are a number of works we would like to highlight on the topic. \cite{Bansal2009ClassicalAS} developed classical approximation schemes for ground state energies of classical and Quantum 2-Local Hamiltonians configured on planar graphs (of bounded degree, in the quantum case). They leveraged Baker's technique \cite{BakerBrenda1994ApproximationAF} and structural properties of planar graphs to approximately decompose the Hamiltonian into non-interacting partitions, which then could be analyzed by exact diagonalization, or  dynamic programming. \cite{Gharibian2011ApproximationAF} were among the first to construct an approximation algorithm for the $k$-Local Hamiltonian Problem. They argued that product states can provide a $d^{-k+1}$-relative factor approximations to the ground state energy of $k$-Local Hamiltonians defined on qu\textit{d}its, similarly to how Max Cut admits a $1/2$ multiplicative approximation. They then developed an approximation algorithm for the variational problem of finding the minimal energy product state of a given Local Hamiltonian $H$. It constructs a product state that provides an (extensive) $\epsilon \cdot n^k$ additive approximation to the ground state energy, in runtime $n^{O(\epsilon^{-2}\log 1/\epsilon)}$. Their approach was based on an adaptation of a classical technique, the “exhaustive sampling method” by \cite{Arora1999PolynomialTA} to the quantum setting, developed in the context of approximating Max Cut on dense graphs. 

Later, \cite{Brando2013ProductstateAT} developed information-theoretic techniques to argue the existence of product state approximations to the the ground state energy. More precisely, they show that so long as $H$ is \textit{everywhere dense} ($\Omega(n^{k-1})$ minimum degree), has bounded expansion, or is clustered into regions of sub-volume law entanglement entropy, there exist product states that provide \textit{additive error} approximations to the minimum energy. Leveraging their information-theoretic statements, they turned the algorithm of \cite{Gharibian2011ApproximationAF} into a PTAS for the ground state energy, albeit only meaningful when the number of interactions $m = \Omega(n^k)$. Additionally, they devise approximation schemes for Quantum Hamiltonians defined on generic planar graphs (not just those of bounded degree), solving an open problem posed by \cite{Bansal2009ClassicalAS}. Their key insight was what we refer to as a `high-low degree' technique, in which one could consider a product state over all vertices of degree larger than some tunable cutoff $\Delta$, and a generic (entangled) quantum state over the hilbert space of the low-degree particles, while incurring only a small error to the ground state energy. It is worthwhile to raise however, that the runtime of the resulting algorithm is triply-exponential in $1/\epsilon$, where the algorithm returns an $\epsilon \cdot n$ additive approximation. 

More recently, in the context of \textit{relative error} approximation schemes, \cite{Harrow2015ExtremalEO} showed that one can find a product state within a relative error of $\sqrt{l}$ of the ground state of a traceless $k$-Local Hamiltonian of bounded norm, where $l$ is the maximum degree of the underlying hyper-graph. \cite{Bravyi2019ApproximationAF} devised a $O(\log n)$ multiplicative approximation scheme to the ground state energy of 2-Local traceless Hamiltonians by rounding the solutions of SDPs to product states. \\

\textbf{Classical Approximation Schemes for the Free Energy of Quantum Systems}

Our results also contribute to a rich literature of classical
techniques for thermal quantum systems. Perhaps the most well known of these techniques are the Quantum Monte Carlo methods, which approximate the quantum partition function of a quantum system to that of a classical spin system, which in turn is approximated via Markov chain Monte Carlo methods. Despite the enormous practical success of these techniques, rigorous proofs of convergence have only been presented in certain restricted systems  \cite{Bravyi2017PolynomialtimeCS,Bergamaschi2020SimulatedQA, Crosson2021RapidMO}, and they generically are efficient only in the high temperature limit. Another high-temperature technique is the polynomial interpolation method \cite{Barvinok2016CombinatoricsAC, Harrow2020ClassicalAC}, based on a Taylor expansion of the partition function in the high temperature limit. Although both of these approaches are only provably efficient either on restricted classes of systems (such as substochastic Hamiltonians) and/or in the high temperature limit (typically $\beta$ is a constant, or at most $O(\log n)$), they provide quite strong notions of approximation. In fact, they generally provide $(1+\epsilon)$ multiplicative approximations to the partition function (which translates to an $\epsilon$ additive approximation to the free energy), while in this paper we only attempt \textit{extensive}, additive, $\epsilon \cdot m$ error approximations to the free energy.

By approaching the problem via this weaker notion of error, it is possible to devise approximation schemes in a much wider range of temperatures. A recent result by \cite{Bravyi2021OnTC} presented an algorithm that estimates the free energy of dense Local Hamiltonians, also building on the information-theoretic techniques by \cite{Brando2013ProductstateAT}. Their approach is based on a quantum generalization to a classical correlation rounding approach by \cite{Risteski2016HowTC}, and their algorithm finds a $\epsilon \cdot n^2$ additive approximation to the free energy of 2-Local Hamiltonians, in runtime $n^{O(\epsilon^{-2})}$. \\

\textbf{Comparison to Previous Work}

To conclude our introduction we summarize our algorithmic improvements in constrast to previous known constructions for the quantum systems studied. In table \ref{tableone} below we label the Hamiltonians, and runtime and accuracy guarantees of the additive error approximation schemes in previous work for the systems we consider. In table \ref{tabletwo}, we present our results for these same systems. 

For simplicity, unless otherwise stated we concern ourselves with Quantum Local Hamiltonians of bounded interaction strengths $\|H_e\|_\infty \leq 1$ on $n$ qubits and $m$ interactions. In both tables, we refer to a `low threshold rank' Hamiltonian as having constant $\epsilon$-SOS threshold rank of its interaction graph. With the exception of the recent work by \cite{Bravyi2021OnTC}, all the results in table \ref{tableone} concern ground state energy approximation schemes.

\begin{table}[h!]
\centering
\begin{tabular}{||c c c c||} 

 \hline
Result & Context & Accuracy & Runtime \\ 
 \hline \hline
\cite{Gharibian2011ApproximationAF} & $k$-local Hamiltonians & $\epsilon\cdot n^k$ & $n^{\tilde{O}(\epsilon^{-2})}$ \\

\hline
\multirow{2}{*}{\cite{Brando2013ProductstateAT}}
  & Low Threshold Rank & \multirow{2}{*}{$\epsilon\cdot \sum_{e\in E} \|H_e\|_\infty$ }   & \multirow{2}{*}{$n^{O(\epsilon^{-1})}$}\\
 & Hamiltonians & & \\
 \hline
  \multirow{2}{*}{\cite{Bravyi2021OnTC}} & Free Energy of  & \multirow{2}{*}{ $\epsilon \cdot n^2 + \delta \cdot n/\beta$} &
  \multirow{2}{*}{$n^{\tilde{O}(\epsilon^{-2})} \cdot O(\log 1/\delta)$} \\
  & 2-local Hamiltonians & &\\
 \hline
 \multirow{2}{*}{\cite{Bansal2009ClassicalAS}}  & Planar Graphs & $\epsilon\cdot \sum_{e\in E} \|H_e\|_\infty$   & $n^{O(1)}\cdot 2^{2^{\text{poly}(\Delta, \epsilon^{-1})}}$ \\ 
& of bounded degree $\Delta$ & & \\

 \hline
 \cite{Brando2013ProductstateAT}
  & Planar Graphs & $\epsilon\cdot \sum_{e\in E} \|H_e\|_\infty$   & $n^{O(1)}\cdot 2^{2^{2^{\text{poly}(\epsilon^{-1})}}}$ \\
   \hline

\end{tabular}

\caption{A summary of previous algorithms for related Quantum Systems}
\label{tableone}

\end{table}

\begin{table}[h!]
\centering
\begin{tabular}{||c c c c||} 

 \hline
System & Context & Accuracy & Runtime \\ 
 \hline \hline
  
 \multirow{2}{*}{ $k$-local Hamiltonians} & G.S. Energy & $\epsilon \cdot n^k$ & $2^{\text{poly}(\epsilon^{-1})}$ \\
 & Free Energy & $\epsilon \cdot n^k + \delta \cdot n/\beta$ & $2^{\text{poly}(\epsilon^{-1})}\cdot O(\delta^{-2})$  \\

 \hline

Low Threshold Rank  & Maximum   & \multirow{2}{*}{ $\epsilon \cdot m + O(n^{1/3}m^{2/3})$ }  & \multirow{2}{*}{ $(n/\epsilon)^{O(1)}+2^{\tilde{O}(1/\epsilon^2)}$ }  \\
 Quantum Max Cut  & Eigenvalue & &\\

 \hline
 
 $h$-Minor Free Graphs & G.S. Energy & \multirow{2}{*}{$\epsilon \cdot n$} & \multirow{2}{*}{$n^{O(1)} + n\cdot 2^{\text{poly}(\Delta, \epsilon^{-1})}$} \\ 
 of bounded degree $\Delta$ & Free Energy & & \\
 
 \hline
 
 \multirow{2}{*}{ $h$-Minor Free Graphs}& G.S. Energy & \multirow{2}{*}{$\epsilon \cdot n$} & $n^{O(1)} + n\cdot 2^{\text{poly}(\epsilon^{-1})}$ \\ 
  & Free Energy & &$n^{O(1)} + n\cdot \max(2, \beta^{-1})^{\text{poly}(\epsilon^{-1})}$  \\
  
  \hline
 
\end{tabular}

\caption{The main algorithms in this work}
\label{tabletwo}

\end{table}

\subsection{Organization}
We organize the rest of the paper as follows. Before diving into the proofs, in section \ref{section-discussion} we present a discussion on open problems raised in this work and future problems we hope our techniques could be useful to. 

To keep our information-theoretic arguments self-contained, we begin in section \ref{section-existence} by presenting our existence statements, and our extensions to the techniques by \cite{Brando2013ProductstateAT}. We proceed in section \ref{section-regularity} by presenting a simple discussion on our Hamiltonian regularity lemma, key to our additive error approximation schemes. In section \ref{section-gsptas-dense}, we apply our regularity lemma to define additive error approximation schemes for the 2-Local Hamiltonian problem. We defer the generalizations to $k$-local systems to appendix \ref{section-regularityextensions}, and to 2-Local Hamiltonians on low threshold graphs to appendix \ref{section-threshold}. Finally, in section \ref{section-vsc}, we present our sample complexity bounds. 

For conciseness, we've deferred to the appendix all our algorithms for sparse graphs, and our algorithms for the free energy. In appendix \ref{section-feptas-dense}, we extend our additive approximation schemes for the ground state energy, to approximating the free energy of $k$-Local Hamiltonians. In appendix \ref{section-sparsegsPTAS}, we present our algorithms for the ground state energy of Hamiltonians on graphs that exclude a fixed minor. Finally, in appendix \ref{section-sparseFEPTAS}, we lift those techniques to the free energy of graphs that exclude a fixed minor as well.

\section{Discussion}
\label{section-discussion}

We conclude this work by raising some open problems. The first of which is a curious gap between the quality of the mean field approximation to classical and Quantum Local Hamiltonians. To contrast our results to those in the classical setting, \cite{Borgs2012ConvergentSO, Basak2015UniversalityOT, Jain2018TheMA, Jain2019MeanfieldAC} studied the quality of the mean-field approximation to classical spin glass models with generic interaction matrices. The work of \cite{Jain2019MeanfieldAC} culminated in the result that the mean-field approximation is within an additive error of $O(n^{2/3}m^{1/3})$ of the free energy, a strictly better dependence on the number of interactions than our upper bound, $O(n^{1/3}m^{2/3})$. As both these results have roots in the information-theoretic techniques by \cite{Raghavendra2012ApproximatingCW}, it seems intriguing to ask whether there is some deeper structure. A possible direction would be to combine the regularity insights with the correlation rounding techniques, as in \cite{Jain2019MeanfieldAC}. However, there remain certain technical obstacles to approaching the free energy of quantum systems with the regularity lemma, namely analyzing the matrix exponential of the cut decomposition $H_D$.

Another interesting problem is to improve the weak regularity results for `low threshold rank' Hamiltonians, discussed in section \ref{section-threshold}. While we are able to devise approximation schemes based on graph regularity for a range of Hamiltonians whose interaction graphs have low threshold rank, we are unable to provide an actual construction of an approximate Hamiltonian $H'$. It would also be interesting to see whether the coarsest partition technique could be lifted to be applied to more general low threshold rank Hamiltonians, as opposed to relying on the high degree of symmetry of the Quantum Max Cut.

Finally, while the focus of this paper is on product-state approximations, the author considers it to be an outstanding open problem whether one can devise entangled ansatz's for classical approximations schemes to quantum problems. For examples, see \cite{King2022AnIA, Anshu2020BeyondPS}, who devised low-depth quantum circuits which perform slightly better than the best product state on certain Hamiltonians.

\section{On The Existence of Product State Approximations}
\label{section-existence}

In this section, we discuss the existence of separable and product states that approximate the energy and the free energy of Local Hamiltonians that are somewhat `dense'. The approach follows that of \cite{Brando2013ProductstateAT}, in analysing the structure of $n$ qudit quantum states once you condition on, or measure, a small random subset of the qudits. They showed that for any given starting state $\rho$, a certain `entanglement-breaking' mapping could be used to construct a separable state $\sigma$, defined over single-qudit marginals of $\rho$ conditioned on measurement outcomes. They prove that under this mapping, on average $\sigma$ approximates the few-body marginals of $\rho$, which is then used to ensure that $\sigma$ approximately preserves the energy of $\rho$ on sufficiently dense Hamiltonians. 

We note that a similar mapping was used by \cite{Bravyi2019ApproximationAF} to prove a stronger version of a theorem by Lieb \cite{Lieb1973TheCL}, and a recent paper by \cite{Bravyi2021OnTC} uses this idea to ensure that certain pseudo-distributions resulting from convex programming solvers can be accurately rounded to product states which approximate the free energy of dense ($m = \Omega(n^2)$) 2-Local Hamiltonians. Here, we reason that there exist extensive error, product state approximations to the free energy up to a much smaller $m = \omega(n)$ number of edges, and extend the results by \cite{Brando2013ProductstateAT} to this sparse graph regime without assuming that the graph  is everywhere-dense (i.e has a minimum degree). To prove these results, we reason about the entropy of the separable states produced by their `entanglement-breaking' mapping, and its product state components.

We organize the rest of this section as follows: In subsection \ref{subsection-notation}, we introduce notation, discuss the entanglement breaking procedure, and the structure of the separable state $\sigma$. In subsection \ref{subsection-psageneral}, we prove an extension to the results of \cite{Brando2013ProductstateAT} on general graphs, by showing how the separable state $\sigma$ approximately preserve the energy of $\rho$. In subsection \ref{subsection-entropy}, we discuss how the state $\sigma$ is a mixture of states $\eta^{C, b}$, which have higher entropy than $\rho$. Finally, we conclude in subsection \ref{subsection-fe} by showing how to carefully extract a product state from these ensembles which approximates the free energy. 

\subsection{The Entanglement-Breaking Mapping}
\label{subsection-notation}

Before beginning, let us discuss the mapping and setup some notation. Let $\Lambda$ be a single-qubit measurement channel of low \textit{distortion}, meaning that we have an upper bound on $\|\gamma\|_1/\|\Lambda(\gamma)\|_1$ for all traceless operators $\gamma$. While tighter results in terms of the local dimension $d$ and the Hamiltonian locality $k$ are possible by carefully choosing $\Lambda$, for our purposes we follow the choice by \cite{Bravyi2019ApproximationAF,Bravyi2021OnTC} and let $\Lambda$ be a measurement in a random Pauli basis:

\begin{gather}
    \Lambda(\rho) = \frac{1}{3} \sum_{b\in [3], r\in [-1, 1]} |b, r\rangle\langle b, r| \cdot  \langle \psi_{b, r}| \rho |\psi_{b, r}\rangle, \text{ and }\\
    \mathcal{N}(\rho) = \frac{1}{3} \sum_{b\in [3], r\in [-1, 1]} |\psi_{b, r}\rangle\langle \psi_{b, r}| \cdot  \langle \psi_{b, r}| \rho |\psi_{b, r}\rangle, 
\end{gather}

\noindent where $\mathcal{N}$ corresponds to a measurement and averaging over the measurement outcome. In the above, we denote as $|\psi_{b, 1}\rangle, |\psi_{b, -1}\rangle$ the eigenstates of the Pauli matrix $\sigma_b$,  $b\in \{1, 2,3\}$, with eigenvalues $1, -1$. When applied to qudits of local dimension $d = 2^{d'}$, we simply apply the channel $\Lambda^{\otimes \log d}$. 

If $\rho$ is a generic density matrix on a set $V$ of $n$ qubits, 
 \cite{Brando2013ProductstateAT} define its separable state approximation $\sigma$ as follows. Fixed an integer parameter $l$, they first sample a small integer $m \in  [l]$, and then randomly sample a subset $C\subset V$ of $m$ particles to measure. Then, the channel $\mathcal{N}^{\otimes m}_C\otimes \mathbb{I}_{V\setminus C}$ is applied on the qubits in $C$. To shorten notation, we denote the expectation over the size $m$ of $C$ and the choice of $C$ as $\mathbb{E}_{0\leq m\leq l} \mathbb{E}_{C\subset [n], |C| = m}\equiv \mathbb{E}_C$; and, for $x = (b, r)$ with $b\in \{1, 2, 3\}^m$ and $r \in \{-1, 1\}^m$ denoting a $m$ different bases and $m$ measurement outcomes, we write $|\psi_x\rangle = \otimes_{i=1}^m |\psi_{b_i, r_i}\rangle \text{ and }\psi_x \equiv |\psi_x\rangle \langle \psi_x|$. Let $\tau$ denote the density matrix corresponding to the ensemble over measurement outcomes of this process, which can be written as 

 \begin{equation}
     (\mathcal{N}^{\otimes m}_C\otimes \mathbb{I}_{V\setminus C})(\rho) = \tau =  \mathbb{E}_{C, x = (b, r)} \psi_{x}^C \otimes \tau^{(C, x)}, 
 \end{equation}

\noindent with $\tau^{(C, x)} = \langle  \psi_{b, r}^C|\rho |\psi_{b, r}^C\rangle$ the density matrix on the unmeasured qubits $V\setminus C$, consistent with outcomes $x$ on the particles $C$. Finally, to define $\sigma$, each $\tau^{(C, x)}$ is `broken' into the tensor product of its marginals $\tau^{(C, b, r)}_u = \text{Tr}_{(V\setminus C)\setminus\{u\}}[\tau^{(C, b, r)}] $. The resulting separable state $\sigma$ can be expressed as an ensemble over choices of $C$ and the measurement basis and outcomes $x = (b, r)$:

\begin{equation}
    \sigma \equiv \mathbb{E}_{C, x=(b, r)} \psi_{x}^C \bigotimes_{u\notin C}\tau^{(C, b, r)}_u \equiv \mathbb{E}_{C, x=(b, r)} \eta^{(C, b, r)} \equiv \mathbb{E}_{C, b} \eta_{C, b}
\end{equation}

\noindent where in the RHS above we introduce $\eta^{(C, b, r)} = \psi_{(b,r)}^C \bigotimes_{u\notin C}\tau^{(C, b, r)}_u $ and $\eta_{C, b} = \mathbb{E}_r \eta^{(C, b, r)}$. We emphasize that this last step isn't a physical operation - it is just a mathematical mapping to define an un-entangled state. 

We raise two interesting properties about the marginals of these states: since $\tau^{(C, x)}$ is supported on the unmeasured particles, its expectation over measurement outcomes $r$ recovers the original reduced density matrix: $\rho^{V\setminus C} = \mathbb{E}_r \tau^{(C, b, r)}$ for all $C, b$. Next, consider the separable states $\eta_{C, b}$, resulting from fixing the set of measured particles $C$ and the basis $b$, and averaging over the measurement outcomes $r$. If a given particle $u\notin C$ is not measured, then its (marginal) single particle reduced density matrix remains the same as that of the original $\rho$:

\begin{equation}
   \eta^{C, b}_u = \text{Tr}_{V\setminus u}[\eta_{C, b}] = \mathbb{E}_r \tau^{(C, b, r)}_u =  \text{Tr}_{V\setminus (C\cup \{u\})}[\mathbb{E}_{r}\tau^{(C,b, r)}] = \text{Tr}_{V\setminus u}[\rho] =\rho_u 
\end{equation}

\subsection{Product State Approximations on General Graphs}
\label{subsection-psageneral}

In this subsection, we present a simple extension to the results on product states by \cite{Brando2013ProductstateAT} to more general interaction hyper-graphs.

\begin{theorem} \label{theorem-BHgeneral}
Fix $d,  k=O(1)$ and let $H=\sum_{e\in E} H_{e}$ be a $k$-Local Hamiltonian on qu\textit{d}its, where we define $J$ to be the $k$ dimensional array of interaction strengths $J_{u_1\cdots u_k} = \|H_{u_1\cdots u_k}\|_{\infty}$, for all $u_1\cdots u_k\in V$. Let $\rho$ be a generic density matrix on $n$-qubits, and pick a positive integer $l < n$. Then there exists a globally separable state $\sigma$ such that
\begin{equation}
    \big|\text{Tr}[H(\rho-\sigma)]\big| \leq \delta_l \equiv  O\bigg( \frac{l}{n} \cdot |J|_1 +  \frac{n^{k/2}}{\sqrt{l}} \cdot \|J\|_F \bigg)
\end{equation}

\noindent with $\|J\|_F^2 = |J|_2^2 = \sum_e J_{e}^2$ and $|J|_1 = \sum_e |J_{e}|$.

\end{theorem}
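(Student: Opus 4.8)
The plan is to let $\sigma$ be precisely the globally separable state $\mathbb{E}_{C,x}\,\eta^{(C,x)}$ produced by the entanglement-breaking construction above (separability is then immediate, since it is a convex combination of product states), and to bound $|\text{Tr}[H(\rho-\sigma)]|$ interaction-by-interaction — but, crucially, without pessimizing each interaction to its worst case. First, by matrix H\"older and tracing out the trivial factors, $|\text{Tr}[H(\rho-\sigma)]| \le \sum_{e\in E}\|H_e\|_\infty\,\|\rho_e-\sigma_e\|_1 = \sum_e J_e\,\|\rho_e-\sigma_e\|_1$, where $\rho_e,\sigma_e$ are the reduced states on the $k$ qudits of $e$ and $\rho_e-\sigma_e = \mathbb{E}_{C,x}(\rho_e-\eta^{(C,x)}_e)$. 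I would then split on whether the random measured set $C$ meets $e$. The event $C\cap e\neq\emptyset$ has probability $O(kl/n)$ over the choice of $C$ (each vertex of $e$ lands in $C$ with probability $\mathbb{E}_m[m/n] = O(l/n)$, union bound over the $k$ vertices), and on this event one bounds $\|\rho_e-\eta^{(C,x)}_e\|_1\le 2$ trivially; weighted by $J_e$ and summed, this already produces the term $O(\tfrac{l}{n})|J|_1$ of $\delta_l$. On the complementary event $C\cap e=\emptyset$ one has $\eta^{(C,x)}_e = \bigotimes_i \tau^{(C,x)}_{u_i}$, and since averaging the measurement outcomes on a set disjoint from $e$ recovers the marginal, $\rho_e = \mathbb{E}_x \tau^{(C,x)}_e$, so the miss-contribution to $\|\rho_e-\sigma_e\|_1$ is at most $\mathbb{E}_C\big[\mathbf 1[C\cap e=\emptyset]\,\mathbb{E}_x\,\|\tau^{(C,x)}_e - \bigotimes_i\tau^{(C,x)}_{u_i}\|_1\big]$.

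For this miss term I would invoke iterated quantum Pinsker: $\|\tau^{(C,x)}_e - \bigotimes_i\tau^{(C,x)}_{u_i}\|_1 \le O(\sqrt{k})\,\sqrt{I_e(C,x)}$, where $I_e(C,x)$ denotes the multi-information of the qudits of $e$ in the post-measurement state conditioned on outcome $x$ on $C$. Summing over $e$ with weights $J_e$ and moving the expectation outside, the miss part of $\sum_e J_e\|\rho_e-\sigma_e\|_1$ is $O(\sqrt k)\,\mathbb{E}_{C,x}\big[\sum_e J_e\,\mathbf 1[C\cap e=\emptyset]\sqrt{I_e(C,x)}\big]$. Here is the one non-routine maneuver relative to \cite{Brando2013ProductstateAT}: rather than bounding each $\|\rho_e-\sigma_e\|_1$ separately by $O(1/\sqrt l)$ (which would force a $|J|_1$ in the second term), apply Cauchy--Schwarz over the edges \emph{inside} the expectation, $\sum_e J_e\sqrt{I_e(C,x)} \le \|J\|_F\,(\sum_e I_e(C,x))^{1/2}$, and then Jensen to pull the square root out, leaving $O(\sqrt k)\,\|J\|_F\,(\mathbb{E}_{C,x}\sum_e I_e(C,x))^{1/2}$. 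So the whole bound reduces to showing $\mathbb{E}_{C,x}\sum_e I_e(C,x) = O(n^k\log d/l)$, equivalently that for a uniformly random $k$-tuple $e$, $\mathbb{E}_e\,\mathbb{E}_{C,x}\,I_e(C,x) = O(\log d/l)$ up to factors of $k$; plugging this in yields $O(k^{3/2}\sqrt{\log d})\cdot n^{k/2}l^{-1/2}\|J\|_F$, the second term of $\delta_l$, and combining with the hit term finishes the proof.

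The crux, and the step I expect to be hardest, is this last decoupling estimate — the quantum ``conditioning destroys correlation on average'' lemma underlying \cite{Brando2013ProductstateAT}. The approach is to grow the measured set one qudit at a time in a uniformly random order and apply the chain rule for quantum mutual information, together with subadditivity/strong subadditivity and the single-qudit bound $S(u_i)\le\log d$, to certify that the total ``correlation budget'' attached to the qudits of $e$ is $O(k^2\log d)$, so that averaging over the insertion point $m\in[l]$ forces each conditional multi-information down to $O(k^2\log d/l)$; the genuinely quantum point — and the reason the measure-and-average channel $\mathcal N$ is chosen — is that the relevant monotonicity holds when one conditions on a \emph{classical measurement record} of the earlier qudits rather than on the qudits themselves. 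Everything else (matrix H\"older, the hit probability, iterated Pinsker, Cauchy--Schwarz, Jensen) is routine. In the body I would also track the exact constants and the fact that $J$ is indexed over all ordered $k$-tuples, so that ``$\sum_e$'' ranges over $\le n^k$ terms — which is exactly what makes the resulting bound scale with $\|J\|_F$ rather than $|J|_1$ and hence remain meaningful for Hamiltonians that are dense only on average.
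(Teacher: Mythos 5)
Your overall skeleton (H\"older, the hit/miss split on $C\cap e$ giving the $O(\tfrac{l}{n})|J|_1$ term, then Cauchy--Schwarz over edges against $\|J\|_F$ followed by Jensen) matches the paper's proof step for step. The gap is in the crux step, and it is a real one: you apply quantum Pinsker to the conditional post-measurement states and then need the decoupling estimate $\mathbb{E}_{C,x}\sum_e I_e(C,x)=O(n^k\log d/l)$, where $I_e(C,x)$ is the \emph{quantum} multi-information of the unmeasured qudits of $e$ in the state conditioned on the classical record $x$ on $C$. The chain-rule/telescoping argument you sketch does not yield this. Growing the measured set one vertex at a time and telescoping controls terms of the form $I(A_u : X_{c_{m+1}}\mid X_{c_1}\cdots X_{c_m})$, i.e.\ mutual informations whose \emph{target} is the measured (classical) version of the newly added vertex; all such terms live on a single state and sum to $I(A_u:X_{c_1}\cdots X_{c_{l+1}})\le 2\log d$. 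But the quantity you need has an \emph{unmeasured quantum} target, $I(A_u:A_v\mid X_C)$ with $v\notin C$, and data processing runs the wrong way: measuring $v$ can only decrease mutual information, so $I(A_u:X_v\mid X_C)\le I(A_u:A_v\mid X_C)$ gives no upper bound on the quantum quantity from the classical one. Monotonicity statements like $I(A:B\mid C)\le I(A:BC)$ do not rescue this either--they only give the trivial $O(\log d)$ bound with no $1/l$ gain. So the "conditioning destroys correlation on average" lemma, in the hybrid form you invoke (quantum systems conditioned on a classical record), is exactly the statement that is not proved by your argument, and it is the reason the literature does not take this route.

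The paper circumvents this by measuring \emph{all} particles (not just $C$) with the informationally complete single-qudit Pauli channel $\Lambda^{\otimes\log d}$ of bounded distortion. This yields a fixed joint distribution of classical outcomes $X_1,\dots,X_n$, to which the purely classical multipartite self-decoupling Lemma \ref{lemma-kselfdecoupling} and classical Pinsker apply; the quantum trace-norm error on the edge $e$ is then recovered from the classical total-variation distance via the distortion bound $\|\Lambda^{\otimes z}(Q)\|_1\ge 6^{-z}\|Q\|_1$ (Claim 2 of \cite{Bravyi2021OnTC}), at the cost of a $d^{O(k)}$ constant. In your write-up the low-distortion property of the channel plays no role at all--you attribute the choice of $\mathcal N$ only to the need for a classical conditioning record--which is the tell that the essential step has been skipped. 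To repair the proof, insert the measurement of the $k$ edge qudits by $\Lambda^{\otimes\log d}$ before invoking Pinsker, pay the $d^{3k}$ distortion factor, and run the decoupling argument entirely on the classical outcome variables, as in the paper. (Your use of an "iterated" Pinsker with a $\sqrt{k}$ factor is unnecessary but harmless: multi-information is a relative entropy to the product of marginals, so plain Pinsker applies; that part is not the problem.)
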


We refer to the error factor $\delta_l$ above frequently in the analysis in this section. It corresponds to the error to the energy of the entanglement-breaking state $\sigma$, when the set of particles measured $C$ is of maximum size $l$. For constant $d, k$, and if $H$ has $m$ interactions of strength bounded by 1, then $\min_l \delta_l = O(n^{\frac{k-1}{3}}m^{\frac{2}{3}})$. In particular, this is $O(n^{1/3}m^{2/3})$ for 2-Local Hamiltonians, or $O(nD^{2/3})$ on $D$ regular graphs.  Indeed, this shows there exists a product state approximation to any ground state of $H$ with extensive error $\min_l \delta_l \leq \epsilon \cdot m$ so long as $m =  \Omega(n^{k-1}/\epsilon^{3})$, providing the notion of `dense' we require for hypergraphs.

Our proof strategy follows the techniques by \cite{Brando2013ProductstateAT}. First, we reduce this energy difference to an expression of quantum correlations between $k$ unmeasured quantum particles. By measuring these particles with a quantum channel of low distortion, we upper bound these quantum correlations via the classical correlations between the random variables resulting from the measurement. In turn, we analyze these classical correlations using extensions to the self-decoupling lemmas by \cite{Brando2013ProductstateAT}.

\begin{proof} 

By the triangle inequality, Holder's inequality, and Jensen's inequality in sequence, we upper bound the energy difference in terms of an expectation over the choice of measured particles $C$:
\begin{gather}
     \big|\text{Tr}[H(\rho-\sigma)]\big| \leq \sum_{e} \|H_{e}\|_\infty \cdot \|\rho^{e} - \sigma^{e}\|_1 \leq \mathbb{E}_{C}\bigg(\sum_{e} \|H_{e}\|_\infty \cdot \|\rho^{e} - \eta^{e}_{C}\|_1\bigg)
 \end{gather}  

This enables us to divide into cases on whether the summands $e=(u_1\cdots u_k)$ were measured (i.e. $\in C$). In particular, if any $u_i$ is in $C$, then simply upper bound the distance by $2$. 
\begin{gather}
    \mathbb{E}_{C}\bigg(\sum_{e} \|H_{e}\|_\infty \cdot \|\rho^{e} - \eta^{e}_{C}\|_1\bigg) =\\= \mathbb{E}_{C} \bigg(\sum_{e:e \cap C = \emptyset} \|H_{e}\|_\infty \|\rho^{e} - \eta^{e}_{C}\|_1 + \sum_{e:e \cap C \neq \emptyset} \|H_{e}\|_\infty \|\rho^{e} - \eta^{e}_{C}\|_1\bigg) \leq\\
    \leq 2\cdot k \cdot l/n \cdot |J|_1 + \mathbb{E}_{C}\bigg(\sum_{e:e \cap C = \emptyset} \|H_{e}\|_\infty \|\rho^{e} - \eta^{e}_{C}\|_1\bigg)
\end{gather}

\noindent where in the last line we use linearity of expectation, and that the probability either $e \cap C \neq \emptyset$ is $\leq k\cdot l/n$ (union bound), and finally $|J|_1 \equiv \sum_e \|H_e\|_\infty$. 

Now, let us turn our attention to the remaining term in the RHS above. Since $e = (u_1\cdots u_k)\notin C$ are not measured, we know there is a particular structure to their states as expectations over measurement outcomes: $\eta_C^{e} =\mathbb{E}_{x = (b, r)} \bigotimes_i^k \tau^{(C, b, r)}_{u_i}$, and $\rho^{e} = \mathbb{E}_{x = (b, r)} \tau^{(C, b, r)}_{u_1\cdots u_k}$. In this setting, we can use Jensen's inequality to extract the expectation over $x$, followed by the Cauchy-Schwartz inequality, and then once again Jensen's inequality:

\begin{gather}
    \mathbb{E}_{C}\bigg(\sum_{e \cap C = \emptyset} \|H_{e}\|_\infty \|\rho^{e} - \eta^{e}_{C}\|_1\bigg) \leq \mathbb{E}_{C, x} \bigg(\sum_{e \cap C = \emptyset} \|H_{e}\|_\infty \|\tau^{(C, x)}_{e} - \bigotimes_i^k \tau^{(C, b, r)}_{u_i}\|_1\bigg) \\
    \leq \bigg(\sum_{e} \|H_{e}\|_\infty^2\bigg)^{1/2} \cdot   \mathbb{E}_{C, x}\bigg( \sum_{e \cap C = \emptyset} \|\tau^{(C, x)}_{e} - \bigotimes_i^k \tau^{(C, b, r)}_{u_i}\|_1^2\bigg)^{1/2} \\
    \leq \|J\|_F \cdot  \bigg( \mathbb{E}_{C, x}\sum_{e \cap C = \emptyset} \|\tau^{(C, x)}_{e} - \bigotimes_i^k \tau^{(C, b, r)}_{u_i}\|_1^2\bigg)^{1/2}
\end{gather}

 In this manner, we have reduced the problem to a question about the k-particle quantum correlations of un-measured particles. Following the proof techniques of \cite{Brando2013ProductstateAT}, to analyze them we first consider measuring these states using the Pauli channel $\Lambda$ described previously. In particular, we measure each qudit with the quantum-classical channel $\Lambda' = \Lambda^{\otimes \log d}$. Let us denote as $p_{X_1\cdots X_n} = (\Lambda')^{\otimes n}(\rho)$ the classical output distribution, where each $X_i$ corresponds to a basis $b \in \{X,Y, Z\}^{\log d}$ and the outcomes $r\in \{0, 1\}^{\log d}$. We observe that for any subset $S\subset V\setminus C$ and partial information about $C$, $x_C = (b_C, r_C)$, the distribution of measurement outcomes $x_S$ conditioned on $x_C$ is given by the measurement outcomes of $\tau^{(C, x_C)}_{S}$, indeed:
 
 \begin{equation}
     p_{X_S}\big|_{X_C = x_C} = \Lambda^{\otimes S}(\tau^{(C, x_C)}_{S})
 \end{equation}
 
 Now, we can use a claim by \cite{Bravyi2021OnTC} on the distortion of the Pauli channel $\Lambda$ to relate the corresponding quantum and classical correlations
 
 \begin{claim}
 [Claim 2, \cite{Bravyi2021OnTC}] For any integer $z \geq 1$ and any $z$ qubit Hermitian operator $Q$, we have $\|\Lambda^{\otimes z}(Q)\|_1\geq 6^{-z} \|Q\|_1$.
 \end{claim}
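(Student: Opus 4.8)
The plan is to reduce the $z$-qubit statement to a single-qubit statement applied one tensor factor at a time. Concretely, I would first establish the following \emph{partial distortion bound}: for every finite-dimensional space $\mathcal K$ and every Hermitian operator $R$ on $\mathbb{C}^2\otimes\mathcal K$,
\[
    \big\|(\Lambda\otimes\mathbb{I}_{\mathcal K})(R)\big\|_1 \;\geq\; \tfrac16\,\|R\|_1 .
\]
Given this, the claim follows by iteration. Since $\Lambda$ maps Hermitian operators to Hermitian (indeed diagonal, hence classical) operators, so does $\Lambda\otimes\mathbb{I}$; hence after applying $\Lambda$ to the first qubit the output is again Hermitian, and one may peel off the second qubit, then the third, and so on. Because the trace norm is invariant under the unitary permuting tensor factors, applying the partial bound to each of the $z$ qubits in turn yields $\|\Lambda^{\otimes z}(Q)\|_1\geq 6^{-z}\|Q\|_1$, which is exactly the claim. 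So everything reduces to the displayed single-qubit inequality, and the factor $6^{-z}$ is simply the $z$-fold product of the per-qubit loss $\tfrac16$.

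To prove the partial distortion bound, expand $R$ in the Pauli basis of the first qubit: $R=\tfrac12\big(\mathbb{I}\otimes R_0+\sum_{b=1}^3\sigma_b\otimes R_b\big)$, where $R_0=\text{Tr}_1[R]$ and $R_b=\text{Tr}_1[(\sigma_b\otimes\mathbb{I})R]$ are Hermitian operators on $\mathcal K$. Using $\langle\psi_{b,r}|\mathbb{I}|\psi_{b,r}\rangle=1$ and $\langle\psi_{b,r}|\sigma_{b'}|\psi_{b,r}\rangle=r\,\delta_{b,b'}$, a direct computation gives the block-diagonal form
\[
    (\Lambda\otimes\mathbb{I}_{\mathcal K})(R)\;=\;\frac16\sum_{b\in[3],\,r\in\{-1,+1\}}|b,r\rangle\langle b,r|\otimes\big(R_0+rR_b\big),
\]
so, the blocks being supported on orthogonal subspaces, $\|(\Lambda\otimes\mathbb{I})(R)\|_1=\tfrac16\sum_{b=1}^3\big(\|R_0+R_b\|_1+\|R_0-R_b\|_1\big)$. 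By the triangle inequality each summand is at least $\max(2\|R_0\|_1,2\|R_b\|_1)$, hence
\[
    \big\|(\Lambda\otimes\mathbb{I}_{\mathcal K})(R)\big\|_1\;\geq\;\frac13\sum_{b=1}^3\max\!\big(\|R_0\|_1,\|R_b\|_1\big).
\]
On the other hand, applying the triangle inequality to the Pauli expansion of $R$ together with $\|\sigma_b\otimes R_b\|_1=2\|R_b\|_1$ gives $\|R\|_1\leq\sum_{i=0}^3\|R_i\|_1$. Writing $a=\|R_0\|_1$ and $c_b=\|R_b\|_1$, it remains to verify the elementary scalar inequality $2\sum_{b=1}^3\max(a,c_b)\geq a+c_1+c_2+c_3$ for nonnegative reals, which holds because $\sum_b\max(a,c_b)\geq\sum_b c_b$ and $\sum_b\max(a,c_b)\geq\max(a,c_1)\geq a$, and adding these two bounds gives the claim. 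Combining the last three displays proves the partial distortion bound with constant $\tfrac16$.

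The calculation is short; the only real decision is to attack the tensor power one qubit at a time rather than all at once, which is what makes the loss multiply cleanly to $6^{-z}$. The mild subtlety to be careful about is that the single-qubit step must be carried out in the \emph{partial} form above, with an arbitrary spectator register $\mathcal K$, since after the first peeling step the spectator register is nontrivial; the Hermiticity-preservation of $\Lambda$ is exactly what legitimizes iterating. (If one insisted on the tighter single-qubit constant $\tfrac13$, attainable when $\mathcal K$ is trivial, one would instead compare against $\|R\|_1=\max(|r_0|,\|(r_1,r_2,r_3)\|_2)$ using $\sum_b|r_b|\geq\|(r_1,r_2,r_3)\|_2$; but this improvement does not survive the partial setting and is not needed here.)
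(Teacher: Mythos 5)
Your proof is correct. One point of comparison worth flagging up front: the paper itself does not prove this statement --- it imports it verbatim as Claim~2 of \cite{Bravyi2021OnTC} and uses it as a black box --- so there is no in-paper argument to measure yours against; what you have produced is a self-contained proof of the cited result. Checking it in detail: the reduction to the single-qubit partial distortion bound $\|(\Lambda\otimes\mathbb{I}_{\mathcal K})(R)\|_1\geq \tfrac16\|R\|_1$ is legitimate, since $\Lambda^{\otimes z}$ factors as the composition of the one-qubit channels acting with identity on the remaining (qubit and classical) registers, each such channel preserves Hermiticity, and the trace norm is unaffected by permuting tensor factors; this is exactly the step that makes the loss multiply to $6^{-z}$, and your insistence on proving the bound with an arbitrary spectator $\mathcal K$ (rather than only the trivial one, where the sharper constant $\tfrac13$ holds) is precisely what makes the iteration sound. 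The computation $(\Lambda\otimes\mathbb{I})(R)=\tfrac16\sum_{b,r}|b,r\rangle\langle b,r|\otimes(R_0+rR_b)$ is right, the passage to $\|(\Lambda\otimes\mathbb{I})(R)\|_1=\tfrac16\sum_b\big(\|R_0+R_b\|_1+\|R_0-R_b\|_1\big)$ uses the orthogonality of the classical flags correctly, the bounds $\|R_0+R_b\|_1+\|R_0-R_b\|_1\geq 2\max(\|R_0\|_1,\|R_b\|_1)$ and $\|R\|_1\leq\sum_i\|R_i\|_1$ are correct, and the closing scalar inequality $2\sum_b\max(a,c_b)\geq a+\sum_b c_b$ is verified properly. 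No gaps.
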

 
 We make the observation that $(\Lambda')^{\otimes k}$ acts on $z=k\log d$ qubits to apply the claim above and obtain
 
\begin{gather}
   \bigg \|\tau^{(C, x)}_{e} - \bigotimes_i^k \tau^{(C, b, r)}_{u_i}\bigg\|_1 \leq d^{3k} \cdot \bigg\| p_{X_e}\big|_{X_C = x} - \prod_{i\in [k]} p_{X_{u_i}}\big|_{X_C = x} \bigg\|_1. 
\end{gather}

\noindent We proceed by applying Pinsker's Inequality for multi-partite classical distributions on the distribution $p_{x_S}$ to express

\begin{gather}
    \mathbb{E}_{C, x} \sum_{e \cap C = \emptyset} \|\tau^{(C, x)}_{e} - \bigotimes_i^k \tau^{(C, b, r)}_{u_i}\|_1^2  
    \leq 2\cdot d^{6k} \cdot \mathbb{E}_{C, x} \sum_{e \notin C} I(X_{u_1}: X_{u_2}\cdots :X_{u_k}|X_C= x) \leq \\
    \leq 2\cdot d^{6k} \cdot n^k\cdot  \mathbb{E}_C \mathbb{E}_{u_1\cdots u_k\notin C}  I(X_{u_1}: X_{u_2}\cdots :X_{u_k}|X_C) 
\end{gather}

To conclude the proof, we apply a self-decoupling lemma, which is a generalization of Lemma 19 in \cite{Brando2013ProductstateAT} to the multi-partite case. For conciseness, we present the proof of which in the appendix.

\begin{lemma}[\ref{lemma-kselfdecoupling}] Let $X_1\cdots X_n$ be classical random variables with some arbitrary joint distribution, and fix integers $k, l < n$. Then

\begin{equation}
    \mathbb{E}_{0\leq m \leq l}\mathbb{E}_{\substack{C\subset [n],  \\ |C| = m}} \mathbb{E}_{ \substack{ u_1\cdots u_k\in V\setminus C \\ u_i\neq u_j}} I(X_{u_1}:\cdots :X_{u_k}| X_C) \leq \frac{k^2}{l}\mathbb{E}_{u}I(X_u: X_{V\setminus \{u\}})
\end{equation}

\end{lemma}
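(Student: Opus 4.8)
The plan is to prove this multipartite self-decoupling lemma by induction on $k$, reducing the $k$-partite mutual information to a sum of conditional mutual informations that can be controlled by the bipartite case, which is essentially Lemma 19 of \cite{Brando2013ProductstateAT}. First I would recall the chain-rule decomposition of multi-information: for random variables $Y_1,\dots,Y_k$ (conditioned on anything), one has
\begin{equation}
    I(Y_1:\cdots:Y_k) = \sum_{j=2}^{k} I(Y_j : Y_1\cdots Y_{j-1}),
\end{equation}
so that $I(X_{u_1}:\cdots:X_{u_k}\mid X_C) = \sum_{j=2}^k I(X_{u_j}: X_{u_1}\cdots X_{u_{j-1}}\mid X_C)$. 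Each summand is a bipartite conditional mutual information between a single variable $X_{u_j}$ and a block of the other $j-1\le k-1$ sampled variables, together with the measured set $C$. The idea is to fold those extra $j-1$ sampled variables into the conditioning set: since $u_1,\dots,u_{j-1}$ are themselves uniformly random vertices outside $C$, conditioning on them is, in expectation, like enlarging the random measured set from $C$ to $C\cup\{u_1,\dots,u_{j-1}\}$, which has size at most $l + k$.

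Concretely, for each fixed $j$, I would rewrite the expectation $\mathbb{E}_{0\le m\le l}\,\mathbb{E}_{|C|=m}\,\mathbb{E}_{u_1,\dots,u_k\notin C}\, I(X_{u_j}: X_{u_1}\cdots X_{u_{j-1}}\mid X_C)$ by relabeling: let $C' = C\cup\{u_1,\dots,u_{j-1}\}$ play the role of the new conditioning set (of size $m' \le m + k - 1$), $u_j$ the ``new $u$'', and $u_1,\dots,u_{j-1}$ the variables whose information about $u_j$ we are measuring. This puts the term into the exact shape handled by the bipartite self-decoupling lemma of \cite{Brando2013ProductstateAT}, whose conclusion is
\begin{equation}
    \mathbb{E}_{0\le m'\le l'}\,\mathbb{E}_{|C'|=m'}\,\mathbb{E}_{u,\,v}\, I(X_u: X_v\mid X_{C'}) \le \frac{1}{l'}\,\mathbb{E}_u\, I(X_u: X_{V\setminus\{u\}}),
\end{equation}
applied with $l' = \Theta(l)$ (absorbing the additive $k-1$ shift, using $l < n$ and treating $k = O(1)$). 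Summing over the $k-1$ values of $j$ yields a bound of the form $\frac{c\,k}{l}\,\mathbb{E}_u I(X_u: X_{V\setminus\{u\}})$; tracking the constants carefully should give the stated $\frac{k^2}{l}$. Alternatively — and this is probably the cleaner route to the exact constant — one proves the bipartite case $k=2$ directly (this is the content of Lemma 19 of \cite{Brando2013ProductstateAT}), and then inducts: the $k$-partite multi-information decomposes via the chain rule into a $(k-1)$-partite multi-information plus one bipartite term, and one applies the induction hypothesis to a random sample of size $k-1$ and the bipartite bound to the last term, being careful that the sampling distributions on $C$ (uniform size in $[l]$) are compatible after the split.

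The main obstacle I anticipate is bookkeeping the sampling distributions correctly through the relabeling: the outer expectation samples $|C|$ uniformly in $\{0,\dots,l\}$ and then $C$ uniformly of that size, and after absorbing $u_1,\dots,u_{j-1}$ into $C$ the induced distribution on $C' = C\cup\{u_1,\dots,u_{j-1}\}$ is \emph{not} exactly ``uniform size in $\{0,\dots,l'\}$, then uniform set'' — it is a slightly shifted/convolved distribution. The fix is the standard one used in \cite{Brando2013ProductstateAT}: the bipartite self-decoupling argument only needs that the distribution on $|C'|$ is spread over a window of $\Omega(l)$ consecutive integers so that a telescoping/averaging argument over successive sizes gives the $1/l$ gain; the shift by a constant $k-1$ and the mild distortion of the size distribution cost only constant factors, which is why the final bound degrades from $1/l$ to $k^2/l$. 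I would state this compatibility carefully as the one technical lemma inside the proof, then the rest is the chain rule plus summation. I expect the whole argument, modulo that bookkeeping, to be short.
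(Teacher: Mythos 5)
Your high-level shape — chain-rule the multi-information, then invoke the bipartite self-decoupling telescope of \cite{Brando2013ProductstateAT} — is the right one, but the reduction as you've described it does not close, and the obstacle you flag (the size distribution of $C'$) is not the real one.

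The step ``fold $u_1,\dots,u_{j-1}$ into the conditioning set $C' = C\cup\{u_1,\dots,u_{j-1}\}$'' is where the plan breaks. After your (single) chain rule, the term in hand is $I(X_{u_j}: X_{u_1}\cdots X_{u_{j-1}} \mid X_C)$: the block $X_{u_1}\cdots X_{u_{j-1}}$ sits on the \emph{other side} of the mutual information, not inside the conditioning. Relabeling the conditioning set as $C'$ does not convert this into a quantity of the form $I(X_u: X_v \mid X_{C'})$ that the bipartite lemma can absorb; and if you literally condition on those variables the term becomes zero. The same gap is in your alternative induction: the leftover ``bipartite'' term $I(X_{u_1}\cdots X_{u_{k-1}}: X_{u_k}\mid X_C)$ is block-versus-singleton, which the $k=2$ lemma does not bound as a black box.

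What the paper's proof does, and what your plan is missing, is a \emph{second} application of the chain rule,
\[
I(X_{u_1}\cdots X_{u_j}: X_{u_{j+1}}\mid X_C) \;=\; \sum_{i=1}^{j} I\bigl(X_{u_i}: X_{u_{j+1}}\;\big|\; X_C,\, X_{u_1}\cdots X_{u_{i-1}}\bigr),
\]
which decomposes the block-vs-singleton term into genuinely \emph{pairwise} conditional mutual informations whose conditioning set is now $C\cup\{u_1,\dots,u_{i-1}\}$ — a random set of random vertices, exactly the structure the telescope needs. With this in hand, the paper does not invoke the bipartite lemma as a black box; it re-runs the telescope directly: reorder the expectations by permutation invariance, sample the full tuple $c_1,\dots,c_{l+1},u_1,\dots,u_{i-1},u$ first so that $\mathbb{E}_{0\le m\le l}$ telescopes via the chain rule to $\tfrac{1}{l}\,I(X_u: X_{c_1\cdots c_{l+1}}\mid X_{u_1\cdots u_{i-1}})$, then clean up with monotonicity $I(A:B\mid C)\le I(A:BC)$ to reach $\mathbb{E}_u I(X_u:X_{V\setminus\{u\}})$. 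Once the second chain rule is in place and the $u_1,\dots,u_{i-1}$ are sampled \emph{before} the telescope over $m$, the size-distribution bookkeeping you worried about disappears: the extra conditioning is a fixed spectator and the telescope is identical to the bipartite case. The two chain rules give the $\binom{k}{2}$-many pairwise terms, bounded crudely by $k\sum_{i=1}^{k-1}$, which is where the $k^2$ comes from.
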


Since each random variable $X_u = (b_u, r_u)$ is supported on a set of size $6^{\log d}$, $I(X_u: X_{V\setminus \{u\}}) \leq \log d \cdot \log 6$, and we conclude 

\begin{equation}
     \big|\text{Tr}[H(\rho-\sigma)]\big|  \leq \frac{2kl}{n} \cdot |J|_1 +  \frac{6kd^{3k}\log d}{\sqrt{l}} \cdot n^{k/2}\|J\|_F
\end{equation}

\end{proof}

\subsection{On the Entropy of the Entanglement-Breaking Mapping}
\label{subsection-entropy}

While originally applied to the context of approximating ground state energies, the results of the previous subsection above work for any state $\rho$ on $n$ particles, not necessarily the ground state. In particular, we can apply Theorem \ref{theorem-BHgeneral} to the Gibbs state, guaranteeing that there exists a separable state $\sigma$ that is close to the Gibbs state in energy. To guarantee that the separable state $\sigma$ is indeed also close in \textit{free energy}, we use the variational characterization of the free energy, and a characterization of the entropy of the entanglement breaking procedure.

\begin{theorem}
\label{theorem-entropy} Let $\rho$ be a generic density matrix on $n$ qudits. Define $\sigma = \mathbb{E}_{C,b} \eta_{C, b}$ to be the separable state approximation to $\rho$, as defined in Theorem \ref{theorem-BHgeneral}. Then, 

\begin{equation}
    S(\rho) \leq S(\eta_{C, b}), \text{ for all } C, b
\end{equation}

\end{theorem}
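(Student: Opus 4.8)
The plan is to sandwich $S(\eta_{C,b})$ between $S(\rho)$ and the entropy of an intermediate ``measure-and-record'' state, deducing each of the two resulting inequalities from a single standard monotonicity fact. Fix $C$ and $b$ throughout, and write $p(r)=p(r\mid C,b)$ for the distribution of measurement outcomes and $H(p)$ for its Shannon entropy. Let $\mathcal{D}_b^{C}$ denote the channel that completely dephases each qudit of $C$ in the Pauli eigenbasis prescribed by $b$ while recording the outcome, and set
\[
\tau_{C,b}\;:=\;\mathbb{E}_{r}\big[\psi_{(b,r)}^{C}\otimes \tau^{(C,b,r)}\big]\;=\;\big(\mathcal{D}_{b}^{C}\otimes \mathbb{I}_{V\setminus C}\big)(\rho),
\]
where $\tau^{(C,b,r)}$ is the normalized post-measurement state on $V\setminus C$. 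The state $\eta_{C,b}=\mathbb{E}_r\big[\psi_{(b,r)}^{C}\otimes\bigotimes_{u\notin C}\tau_u^{(C,b,r)}\big]$ differs from $\tau_{C,b}$ only in that, conditioned on each outcome $r$, the joint state $\tau^{(C,b,r)}$ on $V\setminus C$ has been replaced by the product of its single-qudit marginals. For a fixed basis $b$ the eigenprojectors $\{\psi_{(b,r)}^{C}\}_r$ are mutually orthogonal (still true when $d=2^{d'}$, where $b$ is a tuple of single-qubit Paulis), so both $\tau_{C,b}$ and $\eta_{C,b}$ are classical--quantum states with a classical register on $C$, and their von Neumann entropies split:
\[
S(\tau_{C,b}) = H(p) + \sum_r p(r)\,S\big(\tau^{(C,b,r)}\big), \qquad S(\eta_{C,b}) = H(p) + \sum_r p(r)\sum_{u\notin C} S\big(\tau_u^{(C,b,r)}\big).
\]
It therefore suffices to establish $S(\rho)\le S(\tau_{C,b})$ and $S(\tau_{C,b})\le S(\eta_{C,b})$.

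The second inequality is immediate from subadditivity of the von Neumann entropy: for each $r$ one has $\sum_{u\notin C}S(\tau_u^{(C,b,r)})\ge S(\tau^{(C,b,r)})$, and averaging over $r$ with weights $p(r)$ and adding $H(p)$ gives $S(\eta_{C,b})\ge S(\tau_{C,b})$. For the first inequality, note that $\rho\mapsto\tau_{C,b}$ is a \emph{unital} quantum channel: it is a tensor product of single-qudit completely dephasing channels on $C$ with the identity on $V\setminus C$, and each completely dephasing channel fixes the maximally mixed state, so the composite fixes $\mathbb{I}/d^{n}$. Unital channels do not decrease von Neumann entropy --- e.g.\ applying monotonicity of the quantum relative entropy to $D(\rho\,\|\,\mathbb{I}/d^{n})$ and using $\big(\mathcal{D}_b^{C}\otimes\mathbb{I}_{V\setminus C}\big)(\mathbb{I}/d^{n})=\mathbb{I}/d^{n}$ yields $n\log d - S(\tau_{C,b})\le n\log d - S(\rho)$, equivalently (by Uhlmann's theorem) $\tau_{C,b}$ is majorized by $\rho$. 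Hence $S(\tau_{C,b})\ge S(\rho)$, and chaining the two bounds gives $S(\rho)\le S(\eta_{C,b})$ for all $C,b$.

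I do not anticipate a genuine obstacle: conceptually the statement just says that both elementary steps of the entanglement-breaking construction --- measuring-and-recording a subset of qudits, and then replacing a joint state by the product of its marginals --- can only raise the entropy. The only points needing a little care are tracking the classical--quantum structure so that entropy is genuinely additive over the recorded register (which is exactly why orthogonality of the $\psi_{(b,r)}^{C}$ for fixed $b$ is used), and checking the degenerate endpoints: for $C=\emptyset$ the claim collapses to pure subadditivity $S\big(\bigotimes_u\rho_u\big)=\sum_u S(\rho_u)\ge S(\rho)$, and for $C=V$ both $\tau_{C,b}$ and $\eta_{C,b}$ equal the same classical ensemble with entropy $H(p)\ge S(\rho)$.
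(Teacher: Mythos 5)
Your proof is correct, and it takes a somewhat different route from the paper's. The paper never forms the intermediate measured state $\tau_{C,b}$ as a whole: it instead decomposes $S(\rho)$ by the chain rule, uses that discarding conditioning systems cannot decrease conditional entropy to get $S(\rho)\le S(C)_\rho+\sum_{u\notin C}S(u|C)_\rho$, and then proves two separate lemmas matching these pieces to $\eta_{C,b}$ — the first bounding $S(C)_\rho\le S(C)_{\eta_{C,b}}$ via monotonicity of relative entropy to the maximally mixed state (essentially your unitality/pinching argument restricted to the subsystem $C$), and the second bounding $S(u|C)_\rho\le S(u|C)_{\eta_{C,b}}$ via monotonicity of the mutual information $I(u:C)$ under the channel $\mathbb{I}_u\otimes\mathcal{N}^C_b$ together with the fact that the single-qudit marginal of an unmeasured particle is preserved, $\rho_u=\eta_{C,b}^u$. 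You instead sandwich the full entropies: global non-decrease under the pinching (unital channel) on $C$, then subadditivity applied conditionally on each recorded outcome, with the classical register handled once through the exact entropy formula for classical–quantum states. Your version is arguably leaner — it needs only pinching unitality and subadditivity, and sidesteps the mutual-information lemma and the marginal-preservation property entirely — while the paper's version yields the intermediate lemma-level statements (measured-set entropy increases; conditional entropy of unmeasured particles does not decrease) in the conditional-entropy language it reuses elsewhere. Both arguments are valid; the only cosmetic point in yours is the parenthetical appeal to majorization/Uhlmann, which is unnecessary since the relative-entropy monotonicity step already gives $S(\tau_{C,b})\ge S(\rho)$ directly.
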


 We approach the proof of this theorem in three parts. First, we use the chain rule of the entropy, and the fact that conditioning never increases information, to upper bound the entropy of $\rho$ in terms of the entropy of the measured set $C$ and the entropy of the other vertices conditioned on $C$. That is:
\begin{equation}
    S(\rho) = S(C)_\rho + \sum_{u\notin C} S(u| C, \text{ all previous }v< u) \leq S(C)_\rho + \sum_{u\notin C} S(u| C)_\rho
\end{equation}

This effectively decouples the non-measured particles. Next, we argue that in fact both of the terms that arise above are upper bounded by their counter-parts in $\eta_{C, b}$. We do so in two lemmas:

\begin{lemma}
The entropy of the measured set of particles $C$ can only increase in $\eta_{C, b}$, 

\begin{equation}
    S(C)_\rho \leq S(C)_{\eta_{C, b}}
\end{equation}

\end{lemma}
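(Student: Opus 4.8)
The plan is to reduce the lemma to a standard fact about pinching (dephasing) channels, once one computes the reduced state of $\eta_{C,b}$ on the measured set $C$. I claim this reduced state is exactly the pinching of $\rho^C \equiv \text{Tr}_{V\setminus C}[\rho]$ in the product Pauli basis selected by $b$; the inequality then follows because dephasing in a fixed orthonormal basis never decreases the von Neumann entropy. So the argument splits into two parts: (i) identify $\eta_{C,b}^C$, and (ii) invoke entropy monotonicity under pinching.

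\textbf{Step 1: the $C$-marginal.} Recall $\eta_{C,b} = \mathbb{E}_r\, \eta^{(C,b,r)}$ with $\eta^{(C,b,r)} = \psi^C_{(b,r)} \otimes \bigotimes_{u\notin C}\tau^{(C,b,r)}_u$. Tracing out $V\setminus C$ removes the product factor and leaves the pure projector $\psi^C_{(b,r)} = |\psi^C_{b,r}\rangle\langle\psi^C_{b,r}|$, which enters the mixture with weight equal to the outcome probability $\text{Tr}[\tau^{(C,b,r)}] = \langle\psi^C_{b,r}|\rho^C|\psi^C_{b,r}\rangle$. Hence
\[
  \eta_{C,b}^C \;=\; \sum_{r} \langle\psi^C_{b,r}|\rho^C|\psi^C_{b,r}\rangle\, |\psi^C_{b,r}\rangle\langle\psi^C_{b,r}| \;=\; \mathcal{D}_b(\rho^C),
\]
where $\mathcal{D}_b(X) = \sum_r |\psi^C_{b,r}\rangle\langle\psi^C_{b,r}|X|\psi^C_{b,r}\rangle\langle\psi^C_{b,r}|$. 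Here I use that for fixed $b$ the $|\psi^C_{b,r}\rangle$ are products of single-qubit Pauli eigenstates ($\log d$ Pauli measurements per qudit, since $d = 2^{d'}$), so as $r$ ranges over all sign patterns $\{|\psi^C_{b,r}\rangle\}_r$ is a complete orthonormal basis of $\mathcal{H}_C$ and $\mathcal{D}_b$ is the dephasing channel in that basis.

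\textbf{Step 2: pinching cannot decrease entropy.} Since $\log\mathcal{D}_b(\rho^C)$ is diagonal in the $b$-basis, $\text{Tr}[\rho^C \log\mathcal{D}_b(\rho^C)] = \text{Tr}[\mathcal{D}_b(\rho^C)\log\mathcal{D}_b(\rho^C)]$, which gives the identity $S(\mathcal{D}_b(\rho^C)) = S(\rho^C) + D(\rho^C \,\|\, \mathcal{D}_b(\rho^C)) \geq S(\rho^C)$ by nonnegativity of the relative entropy; equivalently, $\mathcal{D}_b$ is unital and trace-preserving, hence $\mathcal{D}_b(\rho^C)$ is majorized by $\rho^C$ and von Neumann entropy is Schur-concave. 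Either way $S(C)_{\eta_{C,b}} = S(\mathcal{D}_b(\rho^C)) \geq S(\rho^C) = S(C)_\rho$, and since $C, b$ were arbitrary this is the claim.

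\textbf{Main obstacle.} There is no real difficulty; the only point needing care is the bookkeeping in Step 1 --- confirming that the $C$-marginal of the mixture $\mathbb{E}_r\, \eta^{(C,b,r)}$ carries exactly the Born weights $\langle\psi^C_{b,r}|\rho^C|\psi^C_{b,r}\rangle$, so that $\eta_{C,b}^C$ is honestly the pinching of $\rho^C$ rather than some reweighted diagonal state. This amounts to unwinding the normalization convention for the unnormalized conditional operators $\tau^{(C,x)}$ and noting that the Pauli measurement bases on the qudits of $C$ are complete, so nothing is discarded when forming $\eta_{C,b}$ on $C$.
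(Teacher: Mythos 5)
Your proposal is correct and amounts to the same argument as the paper's: both identify the $C$-marginal of $\eta_{C,b}$ as the output of the (unital) basis-$b$ measurement channel applied to $\rho^C$ and then invoke entropy non-decrease under that channel. The paper packages the second step as the data-processing inequality for $S(\cdot\,\|\,\mathbb{I}/2^{|C|})$ (using that the channel fixes the maximally mixed state), while you use the pinching identity $S(\mathcal{D}_b(\rho^C)) = S(\rho^C) + D(\rho^C\|\mathcal{D}_b(\rho^C))$ (or Schur-concavity), which is an equivalent formulation of the same fact.
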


\begin{proof}
Let us consider the result of the Brandao-Harrow mapping, if the original state $\rho$ were the maximally mixed state on $C$, $\mathbb{I}/2^{|C|}$. The result of the measurement in a fixed basis $b$, is naturally again maximally mixed. Let the CPTP channel representing this measurement be $\mathcal{N}_b$. By the data-processing inequality, 
\begin{equation}
    S(\rho^C|| \mathbb{I}/2^{|C|}) \geq  S(\mathcal{N}_b(\rho^C)|| \mathcal{N}_b(\mathbb{I}/2^{|C|})) = S(\eta_{C, b}^C||\mathbb{I}/2^{|C|} )
\end{equation}
Since $S(\gamma || \mathbb{I}/2^{|C|}) = |C| - S(\gamma)\forall \gamma$, then we obtain the inequality $S(C)_\rho \leq S(C)_{\eta_{C, b}}$.
\end{proof}

Let us now consider the conditional entropy of a particle $u$ that wasn't measured. This proof is based on a discussion in \cite{Bravyi2021OnTC} on pseudo-density matrix rounding, and follows from another application of a data-processing inequality.

\begin{lemma}
The conditional entropy of an un-measured particle $u\notin C$ does not decrease in $\eta_{C, b}$, that is
\begin{equation}
    S(u|C)_\rho \leq S(u|C)_{\eta_{C, b}}
\end{equation}

\end{lemma}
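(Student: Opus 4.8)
The plan is to observe that the $C\cup\{u\}$ marginal of $\eta_{C,b}$ is exactly the state one gets from the corresponding marginal of $\rho$ by performing the $b$-basis measurement on the register $C$, and then to invoke the standard monotonicity of the conditional entropy $S(A|B)$ under a channel acting on the conditioning system $B$.

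Concretely, first I would spell out $\eta_{C,b}^{Cu}:=\text{Tr}_{V\setminus(C\cup\{u\})}[\eta_{C,b}]$. Since, conditioned on a fixed outcome $r$, the unmeasured block of $\eta_{C,b}$ factorizes as a product $\bigotimes_{v\notin C}\tau^{(C,b,r)}_v$ of single-particle density operators, tracing out every unmeasured particle other than $u$ removes all those tensor factors but the $u$-th and leaves $\eta_{C,b}^{Cu}=\mathbb{E}_r\,\psi^C_{(b,r)}\otimes\tau^{(C,b,r)}_u$ (exactly as in the marginal computation $\eta^{C,b}_u=\mathbb{E}_r\tau^{(C,b,r)}_u$ already used in Section~\ref{subsection-notation}). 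On the other hand, letting $\mathcal{M}_b$ be the CPTP map that performs the von Neumann measurement of the qudits of $C$ in basis $b$ and stores the outcome in the orthonormal register $\{\psi^C_{(b,r)}\}_r$, one has $\mathcal{M}_b(\rho^{Cu})=\sum_r\psi^C_{(b,r)}\otimes\langle\psi^C_{(b,r)}|\rho^{Cu}|\psi^C_{(b,r)}\rangle$, and $\langle\psi^C_{(b,r)}|\rho^{Cu}|\psi^C_{(b,r)}\rangle=\text{Tr}_{V\setminus(C\cup\{u\})}[\langle\psi^C_{(b,r)}|\rho|\psi^C_{(b,r)}\rangle]=\tau^{(C,b,r)}_u$. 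Hence $\eta_{C,b}^{Cu}=\mathcal{M}_b(\rho^{Cu})$, so $S(u|C)_{\eta_{C,b}}=S(u|C)_{\mathcal{M}_b(\rho^{Cu})}$.

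Second, I would use the fact that for any bipartite state $\omega_{AB}$ and any channel $\mathcal{N}$ on $B$ one has $S(A|B)_{(\mathcal{I}_A\otimes\mathcal{N})(\omega)}\ge S(A|B)_\omega$. The cleanest route is via the variational identity $S(A|B)_\omega=\log\dim(A)-\min_{\sigma_B}D\big(\omega_{AB}\,\|\,\mathbb{I}_A/\dim(A)\otimes\sigma_B\big)$ combined with the data-processing inequality for the relative entropy applied to $\mathcal{I}_A\otimes\mathcal{N}$ (observing that $\mathbb{I}_A/\dim(A)\otimes\sigma_B\mapsto\mathbb{I}_A/\dim(A)\otimes\mathcal{N}(\sigma_B)$, so the minimum over $\sigma_B$ on the image side can only shrink); equivalently one dilates $\mathcal{N}$ to an isometry $B\to B'E$, which preserves every entropy on the $B$-side, and applies strong subadditivity $S(A|B'E)\le S(A|B')$ before discarding $E$. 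Combining this with the first step, taking $A=\{u\}$, $B=C$, $\mathcal{N}=\mathcal{M}_b$, gives $S(u|C)_{\eta_{C,b}}=S(u|C)_{\mathcal{M}_b(\rho^{Cu})}\ge S(u|C)_{\rho^{Cu}}=S(u|C)_\rho$, as claimed.

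The only genuinely delicate point is the first step: one must verify that the entanglement-breaking of the \emph{other} unmeasured particles has no effect on the $C\cup\{u\}$ marginal (which is true precisely because each discarded conditional factor $\tau^{(C,b,r)}_v$ is a density operator), so that this marginal really is the faithfully measured state $\rho^{Cu}$; the monotonicity invoked in the second step is a well-known consequence of strong subadditivity and requires no new input.
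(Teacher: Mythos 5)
Your proposal is correct and takes essentially the same route as the paper: both identify the reduced state of $\eta_{C,b}$ on $C\cup\{u\}$ as the output of the measurement channel acting only on the $C$ register applied to the corresponding marginal of $\rho$, and then conclude by a data-processing argument. The only cosmetic difference is that the paper phrases the monotonicity via the mutual information, $I(u:C)_\rho \geq I(u:C)_{\eta_{C,b}}$, combined with the invariance of the single-particle marginal $\rho^u = \eta_{C,b}^u$, whereas you invoke the (equivalent, SSA-derived) monotonicity of the conditional entropy $S(A|B)$ under a channel on the conditioning system $B$ directly, which sidesteps the need for that marginal-invariance observation.
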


\begin{proof}
We have that the reduced density matrix on $C \cup \{u\}$ of $\eta_{C, b}$ is the output of the CPTP map $(\mathbb{I}_u\otimes \mathcal{N}^C_b)$ on $\rho$, and since CPTP maps do not increase the mutual information, 

\begin{gather}
    I(u: C)_\rho \geq  I(u: C)_{\eta_{C, b}}\iff S(u)_\rho - S(u|C)_\rho \geq S(u)_{\eta_{C, b}} - S(u|C)_{\eta_{C, b}}  \\ \Rightarrow S(u|C)_{\eta_{C, b}}\geq S(u|C)_\rho 
\end{gather}

\noindent where we used the fact that the reduced density matrices of the unmeasured particles is the same $\rho^u = \eta_{C, b}^u$, and thereby have the same entropy.
\end{proof}

With these two lemmas, we conclude as well the proof of Theorem \ref{theorem-entropy}.

\subsection{The Existence of Free-Energy Approximations}
\label{subsection-fe}

We can now finally argue the existence of product states which approximate the free energy. We combine the previous statements that the separable state $\sigma$ produced by the entanglement breaking channel \cite{Brando2013ProductstateAT} approximates the energy of the actual Gibbs state (Theorem \ref{theorem-BHgeneral}), and has a higher entropy (Theorem \ref{theorem-entropy}), to argue that it serves as a good approximation to the free energy as well. We then leverage this separable state and a series of averaging arguments to prove our main result of this section, that there exists product states that approximate the free energy at all temperatures.

To begin, it is useful to recall the variational presentation of the free energy:
\begin{definition}
The free energy $F$ is the minimum of the following objective:
\begin{equation}
    F\equiv \min_{\rho \geq 0: \|\rho\|=1} f(\rho)\equiv  \min_{\rho \geq 0: \|\rho\|=1} \text{Tr}[H\rho] - S(\rho)/\beta
\end{equation}
and the minimum is attained when $\rho$ is the Gibbs State, $\rho\propto e^{-\beta H}$.
\end{definition}
For simplicity, for now let us represent the objective above by including the temperature term within the Hamiltonian and the free energy, that is,
\begin{equation}
    F \leftarrow \beta F \text{ and } H\leftarrow \beta H.
\end{equation}

What the variational characterization immediately tells us is that any state $\sigma$ gives us an upper bound to the free energy: $F\leq f(\sigma) \forall \sigma$. Let us now turn to the discussion on the lower bound.

\begin{theorem} \label{theorem-feseparable}
Let $H$ be a $2$-Local Hamiltonian on $n$ particles.  Define an integer parameter $1\leq l\leq n$. Correspondingly, define the error parameter $\delta_l$ as in Theorem \ref{theorem-BHgeneral}. Then there exists a separable quantum-classical state $\sigma$ whose free-energy $f(\sigma)$ satisfies
\begin{equation}
    f(\sigma) \geq F\geq f(\sigma) - \delta_l 
\end{equation}
\end{theorem}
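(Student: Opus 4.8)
The plan is to take $\rho$ to be the Gibbs state of the (temperature-rescaled) Hamiltonian, which by definition attains $F = f(\rho) = \mathrm{Tr}[H\rho] - S(\rho)$ under the convention $H\leftarrow\beta H$, $F\leftarrow\beta F$ adopted just above the statement, and to show that the separable state $\sigma = \mathbb{E}_{C,b}\,\eta_{C,b}$ produced by the entanglement-breaking mapping of Theorem~\ref{theorem-BHgeneral}, applied to this particular $\rho$, satisfies both $\mathrm{Tr}[H\sigma] \le \mathrm{Tr}[H\rho] + \delta_l$ and $S(\sigma) \ge S(\rho)$. Combining these with the definition of $f$ immediately yields $f(\sigma) = \mathrm{Tr}[H\sigma] - S(\sigma) \le \mathrm{Tr}[H\rho] - S(\rho) + \delta_l = F + \delta_l$, while the reverse inequality $f(\sigma)\ge F$ is just the variational characterization of the free energy, since $\sigma$ is a valid density matrix and $F$ is the minimum of $f$ over all of them. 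The $2$-local restriction plays no essential role: the same argument runs verbatim for $k$-local $H$ with the $k$-local version of $\delta_l$.

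The energy bound is exactly Theorem~\ref{theorem-BHgeneral} instantiated at $\rho$ equal to the Gibbs state; nothing in that theorem used that $\rho$ is a ground state, so $|\mathrm{Tr}[H(\rho-\sigma)]|\le\delta_l$, where $\delta_l$ is understood with respect to the rescaled Hamiltonian. The entropy bound is where Theorems~\ref{theorem-BHgeneral} and~\ref{theorem-entropy} are glued together: Theorem~\ref{theorem-entropy} gives the pointwise inequality $S(\rho)\le S(\eta_{C,b})$ for every choice of measured set $C$ and basis $b$, and since $\sigma = \mathbb{E}_{C,b}\,\eta_{C,b}$ is an honest convex combination — an average over the random choice of $C$, its size $m\le l$, and the basis $b$, as spelled out in subsection~\ref{subsection-notation} — concavity of the von Neumann entropy gives $S(\sigma) = S\big(\mathbb{E}_{C,b}\,\eta_{C,b}\big)\ge \mathbb{E}_{C,b}\,S(\eta_{C,b})\ge S(\rho)$. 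The only things to be careful about are that the averaging structure of $\sigma$ is genuinely a classical mixture over $(C,b)$ so that concavity applies, and that the entropy enters $f$ with a negative sign, so that "more entropy" pushes $f(\sigma)$ down toward $F$ rather than up.

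The genuinely hard parts of this statement have already been discharged into Theorems~\ref{theorem-BHgeneral} and~\ref{theorem-entropy}: bounding the energy distortion of the entanglement-breaking map on a graph that is only dense on average, and establishing the entropy monotonicity $S(\rho)\le S(\eta_{C,b})$. Granting those, the remaining work is just the bookkeeping above, and the single new load-bearing ingredient is concavity of $S$, which is precisely what lets a pointwise entropy inequality for each $\eta_{C,b}$ survive the averaging that defines $\sigma$.
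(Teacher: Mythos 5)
Your argument is correct and follows the paper's route through Theorems \ref{theorem-BHgeneral} and \ref{theorem-entropy}, but the witness state you exhibit differs from the paper's, and this matters in a small but real way. The paper proves the ensemble bound $\mathbb{E}_{C,b}f(\eta_{C,b})\le F+\delta_l$ (linearity of expectation on the energy term, plus the pointwise entropy monotonicity $S(\rho)\le S(\eta_{C,b})$ averaged over $(C,b)$) and then takes the state $\sigma$ in the theorem to be a specific $\eta_{C^*,b^*}$ chosen by averaging; this is what the phrase ``quantum-classical state'' in the statement refers to, since for fixed $(C,b)$ the post-measurement projectors $\psi^C_{(b,r)}$ are mutually orthogonal across outcomes $r$ and $\eta_{C,b}$ is genuinely quantum-classical. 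You instead bound $f$ of the full mixture $\sigma=\mathbb{E}_{C,b}\eta_{C,b}$, invoking concavity of the von Neumann entropy to get $S(\sigma)\ge\mathbb{E}_{C,b}S(\eta_{C,b})\ge S(\rho)$. The resulting numerical bound $F\le f(\sigma)\le F+\delta_l$ is the same, so nothing is incorrect; but the full mixture is separable without being quantum-classical in the paper's sense, since across different $(C,b)$ the would-be classical register sits on a different subsystem in a different basis, and after averaging there is no single orthogonal classical register left. That quantum-classical structure of $\eta_{C^*,b^*}$ is exactly what Claim \ref{claim-festructure} and Theorem \ref{theorem-feproduct} go on to use. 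To deliver the theorem exactly as stated, drop the concavity step and average the free energy instead: from your two ingredients you already have $\mathbb{E}_{C,b}f(\eta_{C,b})\le F+\delta_l$, and selecting the best $(C^*,b^*)$ produces the quantum-classical witness directly. This is also why your closing remark that concavity of $S$ is the ``single new load-bearing ingredient'' slightly misfires: the paper's own proof never uses concavity, only linearity of expectation and a final averaging argument.
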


\begin{proof}
If $\rho = \argmin_\gamma f(\gamma) = \argmin_\gamma \text{Tr}[H\gamma] - S(\gamma)$ is the Gibbs state, then let $\sigma =\mathbb{E}_{C, b}\eta_{C, b}$ be the state produced by applying the entanglement breaking map to the state $\rho$. Theorem \ref{theorem-BHgeneral} guarantees that their energies are close, and therefore one can lower bound the free energy via the energy of $\sigma$:

\begin{equation}
   F =  \text{Tr}[H\rho] - S(\rho) = \text{Tr}[H(\rho-\sigma)] + \text{Tr}[H\sigma] - S(\rho) \geq -  \delta_l + \text{Tr}[H\sigma] - S(\rho)
\end{equation}

\noindent and Theorem \ref{theorem-entropy} ensures their entropy doesn't decrease: $S(\rho)\leq \mathbb{E}_{C, b} S(\eta_{C, b})$, thus

\begin{equation}
    F \geq  -\delta_l + \text{Tr}[H\sigma] - \mathbb{E}_{C, b}[S(\eta_{C, b})]  = - \delta_l + \mathbb{E}_{C, b} f(\eta_{C, b})
\end{equation}

This is, up to some additive error, on average the free energy of the states $\eta_{C, b}$ in the ensemble $\sigma$ lower bounds the actual free energy. By an averaging argument, there exists $C^*$, $b^*$ which is better than the expectation, and we conclude that there exists a quantum-classical, separable state $\eta_{C^*, b^*}$ that approximates the free energy.
\end{proof}

In fact, we can actually find a single product of mixed states that approximates the free energy. Let us consider the structure of the separable state $\eta_{C^*, b^*}$ that minimizes the expectation above. 
\begin{equation}
    \eta_{C, b}  = \sum_r p_{C, b}(r)  \psi_{(b, r)}^C \otimes \eta^{(C, b, r)} = \sum_r p_{C, b}(r)   \psi_{(b, r)}^C \bigotimes_{u\notin C}\eta^{(C, b, r)}_u
\end{equation}

\noindent  Indeed, $\eta_{C^*, b^*}$ is a quantum-classical state, and an ensemble of product states, one for each measurement outcome $r \in \{0, 1\}^{|C|}$ of the (now fixed) measurements in the Pauli basis $b \in \{1, 2, 3\}^{|C|}$.

We claim that the free energy of these states can be expressed through an average over the free energy of the product states that compose it:

\begin{claim}\label{claim-festructure}
The free energy of the states $\eta_{C, b}$ can be expressed as an average over the free energy of product states, minus the entropy of the measured particles:

\begin{equation}
    f(\eta_{C, b}) = - S(C)_{\eta_{C, b}} +\mathbb{E}_r f(\psi_{(b, r)}^C \otimes \eta^{(C, b, r)}).
\end{equation}

\noindent Where the expectation $\mathbb{E}_r$ denotes the average over the distribution $p_{C, b}(r)$ of measurement outcomes.

\end{claim}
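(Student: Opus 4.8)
The plan is to exploit two structural features of $\eta_{C,b}$: that the energy functional $\text{Tr}[H\,\cdot\,]$ is linear, and that $\eta_{C,b}$ is a classical-quantum state in which the ``classical'' register on $C$ records the measurement outcome $r$. The latter holds because, for a fixed Pauli basis $b$, the post-measurement states $\{|\psi_{(b,r)}^C\rangle\}_r$ form an orthonormal family on the subsystem $C$; hence the decomposition $\eta_{C,b}=\sum_r p_{C,b}(r)\,\psi_{(b,r)}^C\otimes\eta^{(C,b,r)}$ is a decomposition into terms with mutually orthogonal support on $C$, and its von Neumann entropy splits cleanly into a Shannon piece plus a conditional piece.

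First, for the energy term: by linearity, $\text{Tr}[H\eta_{C,b}] = \sum_r p_{C,b}(r)\,\text{Tr}\big[H(\psi_{(b,r)}^C\otimes\eta^{(C,b,r)})\big] = \mathbb{E}_r\,\text{Tr}\big[H(\psi_{(b,r)}^C\otimes\eta^{(C,b,r)})\big]$, where $\mathbb{E}_r$ is taken with respect to $p_{C,b}(r)$. Nothing else is needed here.

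Second, for the entropy term: since the $\psi_{(b,r)}^C$ are orthogonal pure states on $C$ and each $\eta^{(C,b,r)}$ is supported on $V\setminus C$, the state $\eta_{C,b}$ is block-diagonal across the orthogonal subspaces labelled by $r$, so by the standard entropy formula for classical-quantum states $S(\eta_{C,b}) = H\big(\{p_{C,b}(r)\}_r\big) + \sum_r p_{C,b}(r)\,S\big(\psi_{(b,r)}^C\otimes\eta^{(C,b,r)}\big)$, where $H(\cdot)$ denotes Shannon entropy. The reduced state of $\eta_{C,b}$ on $C$ is exactly $\sum_r p_{C,b}(r)\,\psi_{(b,r)}^C$, a mixture of orthogonal pure states, whence $S(C)_{\eta_{C,b}} = H\big(\{p_{C,b}(r)\}_r\big)$; and because $\psi_{(b,r)}^C$ is pure, $S\big(\psi_{(b,r)}^C\otimes\eta^{(C,b,r)}\big) = S\big(\eta^{(C,b,r)}\big)$. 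Substituting both identities gives $S(\eta_{C,b}) = S(C)_{\eta_{C,b}} + \mathbb{E}_r\,S\big(\psi_{(b,r)}^C\otimes\eta^{(C,b,r)}\big)$.

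Combining the two computations yields $f(\eta_{C,b}) = \text{Tr}[H\eta_{C,b}] - S(\eta_{C,b}) = -S(C)_{\eta_{C,b}} + \mathbb{E}_r\big(\text{Tr}[H(\psi_{(b,r)}^C\otimes\eta^{(C,b,r)})] - S(\psi_{(b,r)}^C\otimes\eta^{(C,b,r)})\big) = -S(C)_{\eta_{C,b}} + \mathbb{E}_r f\big(\psi_{(b,r)}^C\otimes\eta^{(C,b,r)}\big)$, which is the claim. There is no substantive obstacle; the only point requiring care is the entropy split, which is valid precisely because measurement in a fixed basis produces \emph{orthogonal} outcome states on $C$ — this is what lets us identify the Shannon term with $S(C)_{\eta_{C,b}}$ exactly, rather than merely bounding it, and thus turns the decomposition into an identity.
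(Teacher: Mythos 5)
Your proof is correct and takes essentially the same approach as the paper's: the paper cites the standard entropy formula for quantum-classical states, $S(\eta_{C,b}) = S(C)_{\eta_{C,b}} + \mathbb{E}_r[S(\eta^{(C,b,r)})]$, notes that $\psi^C_{(b,r)}$ is pure so the tensoring contributes no entropy, and combines with linearity of the energy. You spell out the two ingredients behind that formula — orthogonality of the outcome states on $C$ giving block-diagonality, and the identification of the Shannon term with $S(C)_{\eta_{C,b}}$ — which the paper leaves implicit, but the argument is the same.
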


\begin{proof}
The entropy of a quantum-classical state is well known, and given by
\begin{gather}
    S(\eta_{C, b}) = S(C)_{\eta_{C, b}} + \mathbb{E}_r[S(\otimes_{u\notin C} \eta^{(C, b, r)}_u)]
\end{gather}

That is, it can be directly expressed as the entropy of the measured particles $C$, plus an average over the entropy of the quantum components, which in this case are a product state. Since the entropy of the pure state $\psi_{(b, r)}^C$ is 0, one can expand the free energy as above. 
\end{proof}

We can now use this expression to prove the following theorem on the existence of product state approximations to the free energy

\begin{theorem}\label{theorem-feproduct}

Let $H$ be a $2$-Local Hamiltonian on $n$ particles.  Define an integer parameter $1\leq l\leq n$. Correspondingly, define the error parameter $\delta_l$ as in Theorem \ref{theorem-BHgeneral}. Then there exists a product state $\sigma = \otimes_{u\in V} \sigma_u$ whose free-energy $f(\sigma)$ satisfies
\begin{equation}
    f(\sigma) \geq F\geq f(\sigma) - 2\delta_l 
\end{equation}

\end{theorem}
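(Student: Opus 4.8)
The plan is to obtain the product state by taking the separable, quantum--classical state of Theorem~\ref{theorem-feseparable} and ``collapsing'' it onto a single measurement outcome, using Claim~\ref{claim-festructure} to bookkeep the free energy. Let $\rho$ be the Gibbs state of the (rescaled) Hamiltonian. From the proof of Theorem~\ref{theorem-feseparable} I would use the averaged bound $F \ge -\delta_l + \mathbb{E}_{C,b} f(\eta_{C,b})$. Applying Claim~\ref{claim-festructure} to each $\eta_{C,b}$ rewrites $f(\eta_{C,b}) = -S(C)_{\eta_{C,b}} + \mathbb{E}_r f\!\big(\psi^{C}_{(b,r)}\otimes\eta^{(C,b,r)}\big)$, and the key observation is that each state in this inner average, $\psi^{C}_{(b,r)}\otimes\eta^{(C,b,r)} = \bigotimes_{u\in C}\psi_{b_u,r_u}\otimes\bigotimes_{u\notin C}\eta^{(C,b,r)}_u$, is a genuine product state over all $n$ qudits: pure on the measured block, a product of single--qudit density matrices on the rest. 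Combining the two identities yields the exact relation $\mathbb{E}_{C,b,r}\, f\!\big(\psi^{C}_{(b,r)}\otimes\eta^{(C,b,r)}\big) = \mathbb{E}_{C,b}\big[f(\eta_{C,b}) + S(C)_{\eta_{C,b}}\big] \le F + \delta_l + \mathbb{E}_{C,b}S(C)_{\eta_{C,b}}$, so an averaging/pigeonhole step produces a triple $(C^{*},b^{*},r^{*})$ for which the product state $\sigma := \psi^{C^{*}}_{(b^{*},r^{*})}\otimes\eta^{(C^{*},b^{*},r^{*})}$ satisfies $f(\sigma)\le F + \delta_l + \mathbb{E}_{C,b}S(C)_{\eta_{C,b}}$; the matching lower bound $f(\sigma)\ge F$ is automatic since $F$ minimizes $f$ over all states.

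The remaining task --- and the step I expect to be the real obstacle --- is to absorb the extra term $\mathbb{E}_{C,b}S(C)_{\eta_{C,b}}$ into $\delta_l$. Since the measured block of $\eta_{C,b}$ is diagonal in the basis $b$, $S(C)_{\eta_{C,b}}$ is exactly the Shannon entropy of the outcome distribution on $C$, hence at most $|C|\log d \le l\log d$ (averaging over the random set $C$ only helps). As $d = O(1)$ and $\delta_l$ already carries a term of order $\tfrac{l}{n}|J|_1$ with $|J|_1 = \sum_e\|H_e\|_\infty$ (the strengths under the $H\leftarrow\beta H$ rescaling), this entropy contribution is dominated by $\delta_l$ precisely when the total interaction strength is $\Omega(n)$ --- the dense, not-too-high-temperature regime in which the $O(\delta_l)$ guarantee is non-vacuous --- and this turns $f(\sigma)\le F + \delta_l + \mathbb{E}_{C,b}S(C)_{\eta_{C,b}}$ into $f(\sigma)\le F + 2\delta_l$. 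Keeping $\beta$ explicit, the collapse to a single outcome costs at most $\tfrac{l\log d}{\beta}$ in free energy, of the same flavour as the $\delta n/\beta$ errors appearing elsewhere in the paper.

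Conceptually the argument leans on Theorem~\ref{theorem-feseparable} and Claim~\ref{claim-festructure} doing the heavy lifting: the former already bundles the energy approximation (Theorem~\ref{theorem-BHgeneral}) with the entropy monotonicity (Theorem~\ref{theorem-entropy}) into a separable state close in free energy, and the latter exposes that state as a convex mixture of product states shifted by exactly the entropy $S(C)_{\eta_{C,b}}$ of the measured block. What is genuinely delicate is that this shift is real entropy, irreversibly lost when we fix one outcome $r^{*}$, and no choice of $r^{*}$ can beat the $\mathbb{E}_r$-average --- so the factor $2\delta_l$ is honest exactly to the extent that $\delta_l$ already dominates the entropy of an $l$-element subset. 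Everything else is routine: convexity of $f$ together with a short chain of averaging/pigeonhole arguments to pin down $(C^{*},b^{*},r^{*})$.
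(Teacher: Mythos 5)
Your skeleton (start from Theorem \ref{theorem-feseparable}, expand via Claim \ref{claim-festructure}, average/pigeonhole over $(C,b,r)$) matches the paper's, but the way you dispose of the leftover term $\mathbb{E}_{C,b}S(C)_{\eta_{C,b}}$ is a genuine gap, not a routine step. That term is real entropy, up to $l\log d$, and after undoing the $H\leftarrow\beta H$ rescaling it costs $l\log d/\beta$ in free energy. By contrast, $\delta_l = O\big(\tfrac{l}{n}|J|_1 + \tfrac{n}{\sqrt{l}}\|J\|_F\big)$ is a purely energetic, temperature-independent quantity, and the theorem (and its downstream use in Theorem \ref{results-psafe} and the free-energy algorithms) is claimed with error $2\delta_l$ at \emph{all} temperatures. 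Your absorption step needs $l\log d/\beta \lesssim \tfrac{l}{n}|J|_1$, i.e.\ $\beta|J|_1=\Omega(n\log d)$ --- a joint density/low-temperature hypothesis that is nowhere in the statement, and the regime it excludes is not one where the bound is vacuous. So as written, keeping the pure states $\psi^{C}_{(b,r^*)}$ on the measured block only proves the weaker bound $f(\sigma)\leq F+\delta_l+l\log d/\beta$.

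The paper's proof fixes exactly this point with one extra move you are missing: after choosing $r^*$ by averaging, it does \emph{not} output $\psi^{C}_{(b,r^*)}\otimes\eta^{(C,b,r^*)}$ but instead $\gamma_{C,b}=\big(\mathbb{I}_C/d^{|C|}\big)\otimes\eta^{(C,b,r^*)}$, replacing the measured block by the maximally mixed state. Then $S(C)_{\gamma_{C,b}}=|C|\log d\geq S(C)_{\eta_{C,b}}$, so the entropy "shift" from Claim \ref{claim-festructure} is fully recovered rather than irreversibly lost, and the only price is an energy perturbation bounded by $J_C=\sum_{e:\,e\cap C\neq\emptyset}\|H_e\|_\infty$, whose expectation over the random choice of $C$ (of size at most $l$) is $2\tfrac{l}{n}|J|_1\leq\delta_l$ --- an error already of the type contained in $\delta_l$ and independent of $\beta$. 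Averaging over $(C,b)$ then gives $\mathbb{E}_{C,b}f(\gamma_{C,b})\leq F+2\delta_l$, and pigeonhole finishes the proof at every temperature. In short: the missing idea is to trade the $1/\beta$-weighted entropy loss for a temperature-independent energy error via the maximally-mixed substitution on $C$; with that substitution your argument becomes the paper's proof.
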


\begin{proof}
In Theorem \ref{theorem-feseparable} we proved that the entanglement-breaking state $\sigma = \mathbb{E}_{C, b}\eta_{C, b}$ of the Gibbs state of $H$ provides a good approximations to the free energy on average $\mathbb{E}_{C, b}f(\eta_{C, b}) \leq F + \delta_l$. Let us fix our attention to a pair $C, b$. From Claim \ref{claim-festructure}, we know the free energy $f(\eta_{C, b})$ of $\eta_{C, b}$ has a particular form:

\begin{equation}
    f(\eta_{C, b}) = - S(C)_{\eta_{C, b}} +\mathbb{E}_r f(\psi_{(b, r)}^C \otimes \eta^{(C, b, r)}).
\end{equation}

Let us now apply an averaging argument, to pick a measurement outcome $r^*_{C, b}$ for each pair $(C,b)$ s.t. 

\begin{gather}
    f(\psi_{(b, r^*)}^C \otimes \eta^{(C, b, r^*)})\leq \mathbb{E}_r f(\psi_{(b, r)}^C \otimes \eta^{(C, b, r)}) \text{ and thereby }\\
    f(\eta_{C, b}) \geq - S(C)_{\eta_{C, b}} - \sum_{u\notin C }S(\eta^{(C, b, r^*)}_u) + \text{Tr}[H \psi_{(b, r^*)}^C \otimes \eta^{(C, b, r^*)}]
\end{gather}

Let us now consider defining product states $\gamma_{C, b} = \mathbb{I}_C/d^{|C|}\otimes \eta^{(C, b, r^*)}$, that is, replacing the states of the measured particles $C$ by a maximally mixed state, and maintaining the states $\eta^{(C, b, r^*)}$ picked through the averaging argument. Our intention is to argue that the states $\gamma^{C, b}$ only increase the energy of $\psi_{(b, r^*)}^C \otimes \eta^{(C, b, r^*)}$ by a small amount, and don't decrease the entropy, thereby defining an approximate lower bound to $f(\eta_{C, b})$. We do so in two steps. First, since $\gamma^{C, b}_C$ is maximally mixed, the entropy of the particles in $C$ doesn't decrease, and the entropy of particles outside of $C$ doesn't change

\begin{equation}
    S(C)_{\eta_{C, b}} \leq S(C)_{\gamma^{C, b}} \text{ and } S(\gamma^{(C, b)}_u) = S(\eta^{(C, b, r^*)}_u) \text{ for }u\notin C
\end{equation}

Finally, the energy difference between the states $\gamma_{C, b}$ and $\psi_{(b, r^*)}^C \otimes \eta^{(C, b, r^*)}$ can be upper bounded by the interactions with $C$. Let $J_C = \sum_{e\in E: e\cap C} \|H_e\|_\infty$ be the sum of interaction strengths of edges that hit the set of measured particles $C$. We thereby have

\begin{equation}
    \bigg|\text{Tr}[H \psi_{(b, r^*)}^C \otimes \eta^{(C, b, r^*)}] - \text{Tr}[H (\mathbb{I}_C/d^{|C|}) \otimes \eta^{(C, b, r^*)}]\bigg| \leq   J_C
\end{equation}

\noindent and thus $f(\eta_{C, b})\geq f(\gamma_{C, b}) - J_C$. Note that over a uniformly random choice of $C$ of size at most $l$, $\mathbb{E}_C J_C = 2\frac{l}{n} |J|_1$, as previously discussed in the proof of Theorem \ref{theorem-BHgeneral}. Here $J$ corresponds to the matrix of interaction strengths, and $|J|_1$ is simply the sum of all interaction strengths. If we return to the setting of Theorem \ref{theorem-feseparable}, we have proved the existence of product states $\gamma_{C, b}$ such that

\begin{equation}
    \mathbb{E}_{C, b}f(\gamma_{C, b})\leq \mathbb{E}_{C, b}f(\eta_{C, b}) + 2\frac{l}{n} |J|_1\Rightarrow \mathbb{E}_{C, b}f(\gamma_{C, b})\leq F +\delta_l + 2\frac{l}{n} |J|_1 \leq F+2\delta_l
\end{equation}

\noindent by an averaging argument, we prove the theorem.

\end{proof}

\section{The Hamiltonian Regularity Lemma}
\label{section-regularity}

Let us begin by reviewing the cut decomposition of \cite{Frieze1999QuickAT}. The key intuition behind their result is the notion that dense graphs can be roughly viewed as a sum of complete bipartite sub-graphs between subsets of vertices in the graph. Each of these bipartite sub-graphs is essentially a `cut' in the graph, hence the name. 

\begin{definition}
Given two sets $S, T\subset [n]$ and a number $d\in \mathbb{R}$, the $n\times n$ cut matrix $D=$CUT$(S, T, d)$ is defined by $D_{u, v} = d \cdot \delta_{u\in S}\delta_{v\in T}$.
\end{definition}

\begin{definition}
A `cut decomposition' expresses a real matrix $J$ as the sum
\begin{equation}
    J = \sum_{k=0}^s D^{(k)} + W
\end{equation}
where each $D^{(k)}$ is a cut matrix defined on sets $R_k, L_k\subset [n]$, and of weight $d_k$. Such a decomposition is said to have width $s$, coefficient length $(\sum d_k^2)^{1/2}$, and error $\|W\|_{\infty\rightarrow 1}$.
\end{definition}

The main result of \cite{Frieze1999QuickAT} is precisely an algorithm to efficiently find such a decomposition:

\begin{theorem}
[\cite{Frieze1999QuickAT}\label{theorem-frieze}] Let $J$ be an arbitrary real matrix and fix a constant $\epsilon > 0$. Then there exists a cut decomposition of width $O(\epsilon^{-2})$, coefficient length $O(\|J\|_F/n)$, error at most $\epsilon n \|J\|_F$, and such that $\|W\|_F\leq \|J\|_F$. Moreover, with probability $1-\delta$ said decomposition can be found implicitly in time $2^{\tilde{O}(\epsilon^{-2})}/\delta^2$, and explicitly in time $\tilde{O}(n^2 / \epsilon^4) +2^{\tilde{O}(\epsilon^{-2})}/\delta^2 $.
\end{theorem}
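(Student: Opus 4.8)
The plan is to prove this by the greedy ``energy-increment'' argument of Frieze and Kannan. Set $W_0 = J$. At step $t$, if $\|W_t\|_{\infty\to 1}\le \epsilon n\|J\|_F$, stop and output $J=\sum_{k\le t}D^{(k)}+W_t$. Otherwise, since $\|W_t\|_C\ge \tfrac14\|W_t\|_{\infty\to1}>\tfrac{\epsilon}{4}n\|J\|_F$, there are sets $R_{t+1},L_{t+1}\subseteq[n]$ with $\big|\sum_{u\in R_{t+1},v\in L_{t+1}}(W_t)_{uv}\big|>\tfrac{\epsilon}{4}n\|J\|_F$. Take $d_{t+1}$ to be the mean of $W_t$ on $R_{t+1}\times L_{t+1}$, set $D^{(t+1)}=\mathrm{CUT}(R_{t+1},L_{t+1},d_{t+1})$, and $W_{t+1}=W_t-D^{(t+1)}$.

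The key step is the decrement estimate. Because $d_{t+1}$ is the mean, a direct computation gives $\|W_{t+1}\|_F^2=\|W_t\|_F^2-d_{t+1}^2|R_{t+1}||L_{t+1}|$ with $d_{t+1}^2|R_{t+1}||L_{t+1}| = \big(\sum_{R_{t+1}\times L_{t+1}}(W_t)_{uv}\big)^2/(|R_{t+1}||L_{t+1}|)\ge (\epsilon n\|J\|_F/4)^2/n^2=\epsilon^2\|J\|_F^2/16$. So $\|W_t\|_F^2$ drops by at least $\epsilon^2\|J\|_F^2/16$ each step; as it stays nonnegative and starts at $\|J\|_F^2$, the loop halts after $s\le 16/\epsilon^2$ steps, giving width $O(\epsilon^{-2})$. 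The bound $\|W\|_F\le\|J\|_F$ is immediate since the Frobenius norm is nonincreasing, and the error bound $\epsilon n\|J\|_F$ is just the stopping condition together with $\|W\|_C\le\|W\|_{\infty\to1}$. For the coefficient length, telescoping yields $\sum_k d_k^2|R_k||L_k|\le\|J\|_F^2$, which combined with the crude bound $|d_k|\le \|W_{k-1}\|_{\infty\to1}/(|R_k||L_k|)\le n\|J\|_F/(|R_k||L_k|)$ (using $\|W_{k-1}\|_{\infty\to1}\le n\|W_{k-1}\|_F\le n\|J\|_F$) gives $\sum_k d_k^2=O(\|J\|_F^2/n^2)$ after routine bookkeeping, i.e.\ coefficient length $O(\|J\|_F/n)$.

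For the algorithmic statement, the only nontrivial ingredient is, at each of the $O(\epsilon^{-2})$ iterations, producing sets $R,L$ that achieve the cut norm of the current residual $W_t$ up to additive $\Theta(\epsilon n\|J\|_F)$ --- it is \emph{not} necessary to maximize exactly, which would be NP-hard. In the implicit model one never forms $W_t$: it is represented by the list of at most $t=O(\epsilon^{-2})$ cut matrices subtracted so far, so reading any entry $(W_t)_{uv}$ costs $O(\epsilon^{-2})$ arithmetic operations. One then runs the sampling routine of Frieze--Kannan (a cousin of the Alon--de la Vega--Kannan--Karpinski exhaustive-sampling method for dense Max-Cut): sample $\tilde{O}(\epsilon^{-2})$ random rows and columns, exhaustively guess the rounded column-sums $\sum_{v\in L}(W_t)_{uv}$ over the sampled rows, and greedily assign the remaining coordinates to $R$ and $L$; a Chernoff/McDiarmid bound shows the resulting cut is within $\Theta(\epsilon n\|J\|_F)$ of optimal with probability $1-\delta'$, at cost $2^{\tilde{O}(\epsilon^{-2})}/(\delta')^2$ per iteration. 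Setting $\delta'=\delta/s$ and union-bounding over the $s=O(\epsilon^{-2})$ iterations gives overall success probability $1-\delta$ and the claimed implicit runtime $2^{\tilde{O}(\epsilon^{-2})}/\delta^2$; the explicit version additionally maintains $W_t$ as a dense $n\times n$ array, updated in $O(n^2)$ time per iteration, which is the source of the $\tilde{O}(n^2/\epsilon^4)$ term.

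I expect the main obstacle to be exactly this approximate cut-norm subroutine in the implicit model: proving that sampling a constant number of rows and columns suffices to locate a cut within additive $\epsilon n\|J\|_F$, while every access to $W_t$ is only implicit (so one must budget the $O(\epsilon^{-2})$-per-query overhead), and then controlling the failure probability uniformly over all $O(\epsilon^{-2})$ iterations. By comparison the energy-increment combinatorics and the norm bookkeeping are routine.
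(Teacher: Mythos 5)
The paper states this result as a citation to Frieze--Kannan (1999) and does not prove it; your writeup is therefore a reconstruction of their argument, and the overall framework --- the greedy energy-increment, the Frobenius decrement identity $\|W_{t+1}\|_F^2 = \|W_t\|_F^2 - d_{t+1}^2|R_{t+1}||L_{t+1}|$, the resulting width bound $O(\epsilon^{-2})$, the monotonicity $\|W\|_F\le\|J\|_F$, and the error bound from the stopping condition --- is the standard Frieze--Kannan proof and is correct.

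There is, however, a genuine gap in your coefficient-length argument. You assert that the telescoping bound $\sum_k d_k^2|R_k||L_k|\le\|J\|_F^2$, combined with $|d_k|\le n\|J\|_F/(|R_k||L_k|)$, gives $\sum_k d_k^2 = O(\|J\|_F^2/n^2)$ ``after routine bookkeeping.'' This does not follow from those two facts. A counterexample to the implication: a single cut with $|R_k|=|L_k|=1$ and $d_k=\|J\|_F$ satisfies both inequalities, yet $\sum_k d_k^2 = \|J\|_F^2 \gg \|J\|_F^2/n^2$. The missing ingredient is a \emph{lower} bound on $|R_k||L_k|$, and it does not come from the $\|\cdot\|_{\infty\to 1}$ estimate you invoke. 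It comes from the greedy selection criterion together with Cauchy--Schwarz: since $|W_{k-1}(R_k,L_k)| > \tfrac{\epsilon}{4}n\|J\|_F$ whenever the loop continues, and $|W_{k-1}(R_k,L_k)|\le\sqrt{|R_k||L_k|}\,\|W_{k-1}\|_F\le\sqrt{|R_k||L_k|}\,\|J\|_F$, we get $|R_k||L_k| > \epsilon^2 n^2/16$. Then
\begin{equation*}
\sum_k d_k^2 \;\le\; \frac{16}{\epsilon^2 n^2}\sum_k d_k^2 |R_k||L_k| \;\le\; \frac{16\,\|J\|_F^2}{\epsilon^2 n^2},
\end{equation*}
so the coefficient length is $O(\|J\|_F/(\epsilon n))$, which is $O(\|J\|_F/n)$ for fixed $\epsilon$. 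The conclusion you target is correct, but the route you describe does not reach it; the cut-size lower bound from Cauchy--Schwarz is the step that actually closes the argument. The algorithmic half (sampling-based approximate cut-norm, implicit representation of $W_t$, union bound over iterations) is plausibly sketched and correctly identifies the cut-norm subroutine as the technical core, which you appropriately flag rather than prove.
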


\begin{remark}
The key point of the cut decomposition is that the number of cuts only depends on the quality of the approximation, not the size of the graph.
\end{remark}

Perhaps the main tool we introduce in this work is a generalization of this result to the quantum setting. We exploit the fact that quantum density matrices and quantum Hamiltonians can be expressed in a Pauli basis, to reduce the problem of decomposing Hamiltonians into that of a `multi-colored' cut decomposition. For simplicity, here we discuss the case of 2-Local Hamiltonians, on qudits of local dimension $d = 2^{d'}$ which is a power of 2, and defer further generalizations to the appendix. 

Let $H=\sum H_e$ be 2-local Hamiltonian defined on $n$ qudits, and define $\mathbb{P}_{\log d} = \{\mathbb{I}, X, Y, Z\}^{\otimes \log d}$ be the set of Pauli operators acting on a single qudit. Any operator $h_{u, v}$ acting on the Hilbert space of 2 qudits can be decomposed into basis of $\mathbb{P}_{\log d}\otimes \mathbb{P}_{\log d}$:

\begin{equation}
    H_{uv} = \sum_{i, j\in [d^2]} h_{uv}^{ij} \sigma^u_i\otimes \sigma^v_j
\end{equation}

Where the $h_{uv}^{ij}$ are all real coefficients. Group the coefficients of the interactions defined on the same Pauli matrices $i, j$ into an interaction matrix $J^{ij} = \{ h_{uv}^{ij}\}_{u, v}$, i.e., a matrix for each of $d^4$ `colors'. We note that this essentially defines $O(d^4)$ different weighted adjacency matrices. Now, let us apply the regularity lemma of \cite{Frieze1999QuickAT} on each of the colored interaction/adjacency matrices $J^{ij}$ above. By construction, for each pair $(i, j)$ one can express

\begin{equation}
    J^{ij} = \sum_{k=1}^s D^{ijk} + W^{ij} \equiv D^{ij}+W^{ij}
\end{equation}

Where $D^{ijk}$ are the $s$ cut matrices of the interaction $i, j\in [d^2]$, defined on partitions $\{R^{ijk}, L^{ijk}\}$ of the vertex set of the graph for $k\in [s]$. We can thereby define the cut decomposition $H_D$ of the Hamiltonian $H$ to be the edges of the $D^{ijk}$ \textit{crossing} any such cut:

\begin{equation}
   \text{The Hamiltonian Cut Decomposition: } H_D = \frac{1}{2} \sum_{i , j\in [d^2]\atop k\in [s]}\sum_{u \in R^{ijk} \atop v\neq u, v \in L^{ijk}} D^{ijk}_{uv} \sigma^u_i\otimes \sigma^v_j \otimes \mathbb{I}_{V\setminus \{u, v\}}
\end{equation}

\noindent where we appropriately order the tensor product such that $u< v$ and add a factor  of $1/2$ via a handshaking argument. More importantly, we filter out the diagonal entries $D^{ijk}_{uu}$, since the cuts $S, T$ returned by the cut decomposition in Theorem \ref{theorem-frieze} need not be disjoint, and Local Hamiltonians can't have `self-edges' in a basis decomposition. While unfortunately we no longer can interpret the interaction graph of $H_D$ as an exact sum of complete bipartite sub-Hamiltonians, fortunately, we will later recover this interpretation in an approximate sense.

We dedicate the rest of this section to proving two interesting properties of $H_D$. First, we argue that the energy of any product state $\rho = \otimes_{u\in V} \rho_u$ is close, whether in $H$ or $H_D$, arising from the combinatorial structure of the decomposition. Then, we leverage our product state approximation toolkit from section \ref{section-existence}, to argue that $H_D$ is in fact close to $H$ in the spectral norm $\|H-H_D\|_\infty$.

\begin{theorem} \label{theorem-psregularity}
Let $H=\sum_{u, v} H_{u, v}$ be a 2-Local Hamiltonian defined on qudits of local dimension $d = 2^{d'} = O(1)$, let $J_{uv} = \|H_{uv}\|_{\infty}$ be the matrix of interaction strengths, and let $H_D$ be the Hamiltonian cut decomposition of $H$ of width $s = O(\epsilon^{-2})$. Then, for all product states $\rho=\otimes_{u\in V} \rho_u$, 
\begin{equation}
    |\text{Tr}[(H-H_D)\rho]|\leq \epsilon n \|J\|_F
\end{equation}

 Moreover, with probability $1-\delta$ said decomposition can be found implicitly in time $2^{\tilde{O}(\epsilon^{-2})}/\delta^2$, and explicitly in time $\tilde{O}(n^2 / \epsilon^4) +2^{\tilde{O}(\epsilon^{-2})}/\delta^2 $
 
\end{theorem}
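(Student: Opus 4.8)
The plan is to turn this uniform-in-$\rho$ operator statement into a family of bilinear-form estimates, one for each pair of Pauli labels, and then apply the Frieze--Kannan guarantee of Theorem~\ref{theorem-frieze} to each. Fix a product state $\rho=\otimes_u\rho_u$ and set $a^i_u=\mathrm{Tr}[\sigma^i_u\rho_u]$, so that $|a^i_u|\le \|\sigma^i\|_\infty\|\rho_u\|_1\le 1$; collect these into vectors $a^i=(a^i_u)_{u\in[n]}\in[-1,1]^n$. Expanding in the Pauli basis as in Section~\ref{subsection-background}, the energy of $\rho$ under $H$ equals (up to the universal $\tfrac12$ handshaking factor) $\tfrac12\sum_{i,j}(a^i)^\top J^{ij}a^j$, while the energy under $H_D$ equals $\tfrac12\sum_{i,j}\big[(a^i)^\top D^{ij}a^j-\sum_u D^{ij}_{uu}a^i_u a^j_u\big]$, where $D^{ij}=\sum_{k\le s}D^{ijk}$ and the last term corrects for the diagonal entries we deleted (recall the cuts $R^{ijk},L^{ijk}$ need not be disjoint). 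Subtracting and using $J^{ij}=D^{ij}+W^{ij}$ gives
\[
\mathrm{Tr}[(H-H_D)\rho]\;=\;\frac12\sum_{i,j\in[d^2]}\Big[(a^i)^\top W^{ij}a^j\;+\;\sum_u D^{ij}_{uu}\,a^i_u a^j_u\Big].
\]

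For the first group of terms, since $\|a^i\|_\infty,\|a^j\|_\infty\le 1$ we have $|(a^i)^\top W^{ij}a^j|\le \|W^{ij}\|_{\infty\to 1}\le 4\|W^{ij}\|_C$. Running the cut decomposition of Theorem~\ref{theorem-frieze} on each colored matrix $J^{ij}$ with accuracy parameter $\epsilon'=\Theta(\epsilon/d^4)$ yields width $s=O(\epsilon'^{-2})=O(\epsilon^{-2})$, error $\|W^{ij}\|_{\infty\to 1}\le \epsilon' n\|J^{ij}\|_F$, and coefficient length $O(\|J^{ij}\|_F/n)$. Because $|h^{ij}_{uv}|=d^{-2}|\mathrm{Tr}[H_{uv}\,\sigma^i\!\otimes\!\sigma^j]|\le d^{-2}\|H_{uv}\|_1\le \|H_{uv}\|_\infty=J_{uv}$, we have $\|J^{ij}\|_F\le\|J\|_F$ for every color, so summing over the $d^4$ colors bounds the $W$-part of the display by $\tfrac12 d^4\epsilon' n\|J\|_F\le\tfrac14\epsilon n\|J\|_F$.

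The genuinely new point, and the step I expect to be the main (if mild) obstacle, is the diagonal-correction terms coming from non-disjoint cuts. Here $|\sum_u D^{ij}_{uu}a^i_u a^j_u|\le\sum_u|D^{ij}_{uu}|\le\sum_{k\le s}|d_{ijk}|\,|R^{ijk}\cap L^{ijk}|\le n\sum_{k\le s}|d_{ijk}|\le n\sqrt{s}\,\big(\sum_k d_{ijk}^2\big)^{1/2}=O(\sqrt{s}\,\|J^{ij}\|_F)$ by Cauchy--Schwarz and the coefficient-length bound. Summed over the $d^4$ colors this contributes $O(d^4\sqrt{s}\,\|J\|_F)=O(\|J\|_F/\epsilon)$, which carries \emph{no} factor of $n$ and is therefore dominated by $\tfrac14\epsilon n\|J\|_F$ as soon as $n=\Omega_d(\epsilon^{-2})$. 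I would record this as a standing assumption: if $n$ is below that threshold the whole Hamiltonian has a description of size $\mathrm{poly}(1/\epsilon)$ and the decomposition (and the inequality) can be produced by brute force. Adding the two parts gives $|\mathrm{Tr}[(H-H_D)\rho]|\le\epsilon n\|J\|_F$, uniformly over product states $\rho$.

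Finally, for the algorithmic claims: an entry $J^{ij}_{uv}=h^{ij}_{uv}$ is obtained from one probe to the constant-size block $H_{uv}$ plus an $O(1)$-time trace computation, so the implicit- and explicit-access running times of Theorem~\ref{theorem-frieze} transfer verbatim. We invoke that theorem $d^4=O(1)$ times, each with failure probability $\delta/d^4$, and union-bound; since $d=O(1)$ the rescalings of $\epsilon$ and $\delta$ affect only hidden constants, giving the claimed $2^{\tilde O(\epsilon^{-2})}/\delta^2$ implicit and $\tilde O(n^2/\epsilon^4)+2^{\tilde O(\epsilon^{-2})}/\delta^2$ explicit running times. $\qed$
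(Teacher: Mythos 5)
Your proof is correct and takes essentially the same route as the paper's: expand in the Pauli basis, split the error into the off-diagonal part controlled by $\|W^{ij}\|_{\infty\to 1}$ from Theorem~\ref{theorem-frieze} (run at accuracy $\Theta(\epsilon/d^4)$) and a diagonal-correction part controlled by Cauchy--Schwarz and the coefficient-length bound, then sum over the $d^4=O(1)$ colors. The only cosmetic difference is that you explicitly flag the standing assumption $n=\Omega_d(\epsilon^{-2})$ (and offer a brute-force fallback below it), where the paper simply writes ``assuming $\epsilon^{-2}=o(n)$'' at the end.
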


\begin{proof}

By restricting our attention to product states, we are able to essentially decouple the `colors' (different Pauli terms) in the Cut Decomposition. 

\begin{gather}
   \big |\text{Tr}[(H-H_D)\rho] \big | = \bigg|\sum_{u < v}\sum_{i, j}(h_{uv}^{ij} - D^{ij}_{uv}) \text{Tr}[\sigma^i_u\otimes \sigma^j_v\rho]\bigg| =\\
   =\bigg|\frac{1}{2}\sum_{i, j}\sum_{u\neq  v}W^{ij}_{uv} \text{Tr}[\sigma^i_u \rho_u]\text{Tr}[ \sigma^j_v\rho_v]\bigg| \leq \sum_{i, j} \bigg| \sum_{u\neq  v}W^{ij}_{uv} \text{Tr}[\sigma^i_u \rho_u]\text{Tr}[ \sigma^j_v\rho_v]\bigg| =\\
   = \sum_{i, j} \bigg| \sum_{v}\bigg(\sum_{u:u\neq  v} W^{ij}_{uv} \text{Tr}[\sigma^i_u \rho_u]\bigg)\text{Tr}[ \sigma^j_v\rho_v]\bigg| \leq \sum_{i, j}\sum_v \bigg| \sum_u W^{ij}_{uv} \text{Tr}[\sigma^i_u \rho_u]\bigg|  + \sum_{i, j} \sum_v |W^{ij}_{vv}| \leq \\
   \leq \sum_{ij}\bigg(\|W^{ij}\|_{\infty\rightarrow 1} + n \cdot \max _v |W^{ij}_{vv}|\bigg)
\end{gather}

where we re-introduced the diagonal terms to obtain the $\infty\rightarrow 1$ norm. From Theorem \ref{theorem-frieze} we can pick a width $s = O(d^8\epsilon^{-2}) = O(\epsilon^{-2})$ s.t. $\|W^{ij}\|_{\infty\rightarrow 1} \leq \epsilon n \|J^{ij}\|_F/d^4$. Finally, the Cauchy-Schwartz inequality tells us the diagonal entries are bounded: $|W^{ij}_{vv}| = |J^{ij}_{vv} - D^{ij}_{vv}| \leq \sum_{k} |d^{ijk}| \leq s^{1/2}\cdot (\sum d_k^{ij})^{1/2}\leq s^{1/2}\cdot \|J^{ij}\|_F/n$. The observation $\|J^{ij}\|_F\leq \|J\|_F$ and assuming $\epsilon^{-2}= o(n)$ concludes the proof.
\end{proof}

By combining the product state cut decomposition above with our results on product state approximations in section \ref{section-existence}, we can extend our results to entangled states as well. 

\begin{lemma} 
[The Hamiltonian Weak Regularity Lemma] \label{lemma-hregularity} In the context of Theorem \ref{theorem-psregularity}, $\|H-H_D\|\leq \epsilon \cdot n\|J\|_F$.
\end{lemma}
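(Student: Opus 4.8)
The plan is to bootstrap from the product-state statement of Theorem~\ref{theorem-psregularity} to a full spectral-norm bound using the product-state approximation machinery of Section~\ref{section-existence}. The key observation is that $\|H - H_D\|_\infty$ is the maximum of $|\langle\psi|(H-H_D)|\psi\rangle|$ over all pure states $|\psi\rangle$ (or, equivalently, over all density matrices $\rho$ by convexity), so it suffices to control $|\text{Tr}[(H-H_D)\rho]|$ for an \emph{arbitrary} $n$-qudit state $\rho$, not just product states. The difference $H - H_D$ is itself a $2$-Local Hamiltonian, whose matrix of interaction strengths I will call $\tilde J$; its local terms are the `error' terms $W^{ij}_{uv}$ (together with reinstated diagonal corrections) built from the cut decomposition, so I expect $\|\tilde J\|_F \leq O(\|J\|_F)$ and $|\tilde J|_1 \leq O(n\|J\|_F)$ from the Frieze--Kannan guarantees (coefficient length $O(\|J\|_F/n)$, width $O(\epsilon^{-2})$, and $\|W^{ij}\|_F \leq \|J^{ij}\|_F$).

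First I would apply Theorem~\ref{theorem-BHgeneral} to the state $\rho$ with the Hamiltonian $H - H_D$, obtaining a separable (indeed product) state $\sigma$ with $|\text{Tr}[(H-H_D)(\rho - \sigma)]| \leq \delta_l$, where $\delta_l = O\big(\tfrac{l}{n}|\tilde J|_1 + \tfrac{n}{\sqrt l}\|\tilde J\|_F\big)$ in the $2$-local case. Then I would apply Theorem~\ref{theorem-psregularity} to the product state $\sigma$, giving $|\text{Tr}[(H-H_D)\sigma]| \leq \epsilon n\|J\|_F$. By the triangle inequality,
\begin{equation}
    |\text{Tr}[(H-H_D)\rho]| \leq |\text{Tr}[(H-H_D)\sigma]| + \delta_l \leq \epsilon n\|J\|_F + \delta_l .
\end{equation}
Substituting the bounds on $|\tilde J|_1 \lesssim n\|J\|_F$ and $\|\tilde J\|_F \lesssim \|J\|_F$ gives $\delta_l = O\big(l\|J\|_F + \tfrac{n}{\sqrt l}\|J\|_F\big)$; optimizing over $l$ (taking $l$ a small constant, or more carefully $l = \Theta(1)$ so both terms are $O(\|J\|_F)$, which is lower order than $\epsilon n\|J\|_F$) makes $\delta_l$ negligible compared to the $\epsilon n\|J\|_F$ term. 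Taking the maximum over $\rho$ yields $\|H - H_D\|_\infty \leq \epsilon n\|J\|_F$ after a harmless constant rescaling of $\epsilon$ (or choosing the Frieze--Kannan width $s = O(\epsilon^{-2})$ with a slightly smaller target error).

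The main obstacle I anticipate is the bookkeeping around the interaction-strength matrix $\tilde J$ of $H - H_D$: one must check that reintroducing the diagonal corrections $W^{ij}_{vv}$ (which Theorem~\ref{theorem-psregularity} bounds by $s^{1/2}\|J^{ij}\|_F/n$ each) does not blow up $|\tilde J|_1$ beyond $O(n\|J\|_F)$, and that summing over the $O(d^4)$ colors only contributes a constant factor since $d = O(1)$; the off-diagonal contribution is controlled because $\sum_{u\neq v}|W^{ij}_{uv}| \leq n\|W^{ij}\|_F \leq n\|J^{ij}\|_F$ by Cauchy--Schwarz. A second, more subtle point is making sure that $\delta_l$ is genuinely lower-order: since $\delta_l$ scales like $\|J\|_F$ rather than $n\|J\|_F$, any constant choice of $l$ suffices and the bound is clean, but one should state this carefully (e.g. fold the $O(\|J\|_F)$ slack into the $\epsilon n\|J\|_F$ term under the standing assumption $\epsilon^{-2} = o(n)$, matching the hypothesis already used in Theorem~\ref{theorem-psregularity}). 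The runtime claim transfers verbatim from Theorem~\ref{theorem-psregularity}, since $H_D$ is the same object.
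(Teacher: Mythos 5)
Your overall approach is exactly the paper's: bound $\|H-H_D\|_\infty$ by $|\text{Tr}[(H-H_D)\rho]|$ for an arbitrary state $\rho$, apply Theorem~\ref{theorem-BHgeneral} to $\rho$ and the Hamiltonian $H' = H-H_D$ to produce a separable state $\sigma$, control $|\text{Tr}[(H-H_D)\sigma]|$ via the product-state regularity of Theorem~\ref{theorem-psregularity}, and close with the triangle inequality together with the bounds $|J'|_1 \leq n\|J'\|_F$ and $\|J'\|_F = O(\|J\|_F)$ on the effective interaction-strength matrix of $H-H_D$. That is precisely the paper's argument, and those norm bounds are stated correctly.

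However, the parenthetical optimization of $l$ is wrong. With $\delta_l = O\big(l\|J\|_F + \tfrac{n}{\sqrt{l}}\|J\|_F\big)$, taking $l = \Theta(1)$ makes the \emph{second} term $\Theta(n\|J\|_F)$, not $O(\|J\|_F)$ --- that is actually \emph{larger} than the target $\epsilon n\|J\|_F$ for small $\epsilon$, so the triangle inequality would not close. The two terms balance at $l = \Theta(n^{2/3})$, giving $\delta_l = O(n^{2/3}\|J\|_F)$, and it is \emph{this} quantity that is lower order than $\epsilon n\|J\|_F$ under the standing assumption $\epsilon = \omega(n^{-1/3})$ (equivalently $\epsilon^{-2} = o(n)$ as already assumed in Theorem~\ref{theorem-psregularity}). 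With that one substitution your proof matches the paper's.
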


\begin{proof}

By Schatten norm duality, there exists a normalized state $\psi^*$ s.t.
\begin{equation}
    \|H-H_D\|_\infty = \max_\psi |\text{Tr}[(H-H_D)\psi]| = |\text{Tr}[(H-H_D)\psi^*]| 
\end{equation}

We now apply the product state approximation Theorem \ref{theorem-BHgeneral} on the state $\psi^*$ and Hamiltonian $H' = H-H_D$, to argue there exists a separable state $\sigma$ s.t. 

\begin{equation}
    |\text{Tr}[(H-H_D)(\psi^*-\sigma)]| \leq  \epsilon n \|J\|_F / 2
\end{equation}

\noindent where we observe that if $J'$ is the matrix of interaction strengths of $H'=H-H_D$, then $\|J'\|_1 \leq n \|J'\|_F$ (Cauchy-Schwartz) and $\|J'\|_F \leq \sum_{i, j\in [d^2]}\|W^{ij}\|_F \leq O(d^4 \|J\|_F)$ by means of a triangle inequality and the guarantees on $W$ in Theorem \ref{theorem-frieze}. Since $\sigma$ is separable, we can appropriately pick the width $s = O(\epsilon^{-2})$ in Theorem \ref{theorem-psregularity} to guarantee 

\begin{equation}
    |\text{Tr}[(H-H_D)\sigma]| \leq \epsilon n \|J\|_F/2
\end{equation}

\noindent and thereby via the triangle inequality:
\begin{equation}
     \|H-H_D\|_\infty\leq  |\text{Tr}[(H-H_D)(\psi^*-\sigma)]| +  |\text{Tr}[(H-H_D)\sigma]| \leq \epsilon n \|J\|_F
\end{equation}

\end{proof}

Using the existing technology of matrix regularity lemmas, in the appendix we present extensions to the result above for Local Hamiltonians defined on hyper-graphs and for graphs of low threshold rank. 


\section{A Ground State Energy PTAS}
\label{section-gseptas}

In this section, we discuss how to use the Hamiltonian cut decomposition (Theorem \ref{theorem-psregularity}) to construct additive error approximation schemes for the ground state energy. Our strategy will be to devise an algorithm to find the minimum energy among product states of the cut decomposition $H_D$ of $H$, which by our weak regularity result and the product state approximations in Theorems \ref{theorem-psregularity}, \ref{theorem-BHgeneral}, guarantee an additive error to the true ground state energy. Naturally, the resulting optimization program over product states is no longer convex, however, the low rank structure to the cut decomposition will enable us to construct a suitable convex relaxation. Our algorithms follow the ideas of \cite{Frieze1999QuickAT}, in establishing linear programming relaxations that give additive approximation schemes for Max Cut. 

Let $H$ be a be a $2$-Local Hamiltonian on $n$ qudits, where we assume the local dimension $d = O(1)$ to be a power of 2 for simplicity. Let $J_{uv} = \|H_{uv}\|_\infty$ be its matrix of interaction strengths. The main result of this section is in Theorem \ref{theorem-gsptasdense}.

\begin{theorem} \label{theorem-gsptasdense}
Fix $\epsilon > 0$. There exists an algorithm that runs in time $2^{\tilde{O}(1/\epsilon^2)}$ in the probe model of computation and estimates the ground state energy of $H$ up to additive error $\epsilon n \|J\|_F$ and is correct with probability $.99$. Alternatively, the estimate can be returned in time $\text{poly}(n, 1/\epsilon, \log 1/\delta) +  2^{\tilde{O}(1/\epsilon^2)}\cdot O(\log 1/\delta)$, together with a product state $|\psi\rangle = \otimes_u |\psi_u\rangle$ s.t 
\begin{equation}
    \langle \psi|H|\psi\rangle \leq \min_{\rho\geq 0, \|\rho\|_1=1}\text{Tr}[H\rho] + \epsilon n \|J\|_F,
\end{equation}

\noindent and the energy estimate is correct with probability $1-\delta$.
\end{theorem}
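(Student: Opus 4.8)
The plan is to reduce the problem to optimizing the energy of the Hamiltonian cut decomposition $H_D$ over product states, and then to solve that optimization by a ``guess-and-verify'' scheme over a constant-size discretization, in the spirit of the linear-programming relaxations of \cite{Frieze1999QuickAT}. For the reduction, fix $l=\Theta(\epsilon^{-2})$ (assuming $n=\Omega(\epsilon^{-3})$, so that $1/\epsilon^2\le l\le \epsilon n$; smaller instances are dispatched by a separate direct argument) and let $H_D$ be the cut decomposition of Theorem~\ref{theorem-psregularity}, of width $s=O(\epsilon^{-2})$. Applying Theorem~\ref{theorem-BHgeneral} to the ground state of $H$ and noting that the separable state it produces is a convex combination of product states gives $\lambda_{\min}(H)\le \min_{\rho=\otimes_u\rho_u}\text{Tr}[H\rho]\le \lambda_{\min}(H)+\delta_l$ with $\delta_l=O(\epsilon n\|J\|_F)$, while Theorem~\ref{theorem-psregularity} gives $|\text{Tr}[(H-H_D)\rho]|\le \epsilon n\|J\|_F$ for every product state (equivalently, one may invoke Lemma~\ref{lemma-hregularity} directly). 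Hence $\min_{\rho=\otimes\rho_u}\text{Tr}[H_D\rho]$ lies within $O(\epsilon n\|J\|_F)$ of $\lambda_{\min}(H)$, and it suffices to approximate the former to additive error $O(\epsilon n\|J\|_F)$.

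The next step exploits the bounded-rank structure of $H_D$. For a product state $\rho=\otimes_u\rho_u$ write $m^i_u=\text{Tr}[\sigma^i_u\rho_u]\in[-1,1]$; expanding $H_D$ over its cuts,
\begin{equation}
\text{Tr}[H_D\rho]=\tfrac12\sum_{i,j,k}d_{ijk}\Bigl(\sum_{u\in R^{ijk}}m^i_u\Bigr)\Bigl(\sum_{v\in L^{ijk}}m^j_v\Bigr)-\tfrac12\sum_{i,j,k}d_{ijk}\sum_{w\in R^{ijk}\cap L^{ijk}}m^i_w m^j_w,
\end{equation}
and the second sum is $O\bigl(n\sum_{ijk}|d_{ijk}|\bigr)=O(\sqrt{s}\,\|J\|_F)=O(\epsilon n\|J\|_F)$ using the Frieze--Kannan coefficient-length bound $\sum_{ijk}d_{ijk}^2=O(\|J\|_F^2/n^2)$ and Cauchy--Schwarz over the $O(s)$ cuts. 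Thus, up to $O(\epsilon n\|J\|_F)$, the energy depends on $\rho$ only through the $2s$ weighted magnetizations $M^{+}_{ijk}=\sum_{u\in R^{ijk}}m^i_u$ and $M^{-}_{ijk}=\sum_{v\in L^{ijk}}m^j_v$, each in $[-n,n]$. The algorithm enumerates a grid of profiles $\widehat{M}$ with one value per weighted magnetization, spaced by $\gamma n$ for $\gamma=\Theta(\epsilon^2)$ --- there are $(2/\gamma)^{2s}=2^{\tilde{O}(\epsilon^{-2})}$ of them --- and for each it computes the implied energy $\tfrac12\sum_{ijk}d_{ijk}\widehat{M}^{+}_{ijk}\widehat{M}^{-}_{ijk}$ and then tests \emph{realizability}: does some product state match $\widehat{M}$ up to slack $\gamma n$ on every cut? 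This is a semidefinite feasibility program in the single-qudit density matrices $\{\rho_u\}$; but since vertices that lie in the same cut-sets may be given the same density matrix, by convexity of the single-qudit Bloch body it is equivalent to a constant-size SDP with one Bloch vector per \emph{membership bucket} $\pi(u)=\{(i,j,k):u\in R^{ijk}\}\cup\{(i,j,k):u\in L^{ijk}\}$, of which there are at most $4^s=2^{O(\epsilon^{-2})}$. The output is the smallest implied energy over realizable profiles.

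This scheme has an explicit and a constant-time realization. Explicitly: read $H$, build the cut decomposition in time $\text{poly}(n,1/\epsilon)$ (Theorem~\ref{theorem-frieze}), compute the exact bucket sizes in $O(ns)$ time, run the $2^{\tilde{O}(\epsilon^{-2})}$-fold loop with a constant-size SDP per iteration, repeat $O(\log 1/\delta)$ times to boost the randomized construction's success probability, and output a pure product state by broadcasting the per-bucket Bloch vectors of the best realizable SDP solution and then greedily rounding each $\rho_u$ to a pure state (which does not increase the energy, by affinity of $\text{Tr}[H\rho]$ in $\rho_u$) --- total time $\text{poly}(n,1/\epsilon,\log 1/\delta)+2^{\tilde{O}(\epsilon^{-2})}\cdot O(\log 1/\delta)$. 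In the constant-time (probe) model, with the fixed target failure probability $0.01$: build the cut decomposition \emph{implicitly}, obtaining the weights $d_{ijk}$ and membership oracles for each $R^{ijk},L^{ijk}$ in time $2^{\tilde{O}(\epsilon^{-2})}$ (Theorem~\ref{theorem-frieze}); estimate the bucket-fraction vector in $\ell_1$ to accuracy $\gamma$ by sampling $2^{\tilde{O}(\epsilon^{-2})}$ uniformly random vertices and querying their patterns; then run the same loop with the estimated fractions and slightly widened slacks. Correctness is a two-sided rounding argument: the optimal product state for $H_D$ has some true profile, which lies in a grid cell of some guess $\widehat{M}$, and that guess is realizable (witnessed by the optimal state after averaging within buckets, with a Chernoff bound controlling the sampling error in the probe version); conversely every realizable guess corresponds to an honest product state whose $H_D$-energy equals its implied energy up to $O(\epsilon n\|J\|_F)$. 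Combined with the reduction, the output is within $O(\epsilon n\|J\|_F)$ of $\lambda_{\min}(H)$, and rescaling $\epsilon$ by a constant proves the theorem.

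I expect the main obstacle to be the error bookkeeping that ties the last two steps together: showing that discretizing the $2s$ weighted magnetizations to spacing $\gamma n$, coarsening the $n$ vertices into $\le 4^s$ buckets, and (in the sublinear version) replacing exact bucket counts by sampled fractions together perturb $\text{Tr}[H_D\rho]$ by only $O(\epsilon n\|J\|_F)$. Quantitatively this rests on the Frieze--Kannan coefficient-length guarantee $\sum_{ijk}d_{ijk}^2=O(\|J\|_F^2/n^2)$ together with Cauchy--Schwarz over the $O(s)$ cuts --- so a $\pm\gamma n$ perturbation of each magnetization costs at most $O\bigl(\gamma n^2\sum_{ijk}|d_{ijk}|\bigr)=O(\gamma\sqrt{s}\,n\|J\|_F)$ in energy --- plus Chernoff bounds for the sampling and the $\ell_1$-closeness of the empirical bucket distribution. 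The interlocking choices $s=\Theta(\epsilon^{-2})$, $\gamma=\Theta(\epsilon^2)$, and sample size $2^{\tilde{O}(\epsilon^{-2})}$ are precisely what drive all errors down to $O(\epsilon n\|J\|_F)$ while keeping the number of guesses --- hence the runtime --- at $2^{\tilde{O}(\epsilon^{-2})}$. A secondary subtlety is the small-$n$ regime, in which one cannot choose $1/\epsilon^2\le l\le\epsilon n$, and which must be handled separately.
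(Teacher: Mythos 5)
Your proposal is correct and follows essentially the same route as the paper's proof: reduce to the product-state optimum of the cut decomposition via Theorem~\ref{theorem-BHgeneral} and Theorem~\ref{theorem-psregularity}, discretize the $O(s)$ weighted magnetizations on a $\gamma n$-grid with $\gamma=\Theta(\epsilon^2)$, test each guess by a constant-size SDP after compressing to one density matrix per membership pattern (the paper's common refinement, Claim~\ref{claim-compressionfeasible}), estimate bucket sizes by sampling in the probe model (Claims~\ref{claim-approxfeas},~\ref{claim-estimatesizes}), and round the best feasible solution to a pure product state using multilinearity. The only pieces you leave implicit are routine and appear in the paper as Theorem~\ref{theorem-bertsimas} plus the ball-radius bounds of Appendix~\ref{appendix-volume} (guaranteeing the feasibility checks terminate), and the same large-$n$ assumption you flag is also implicit in the paper's estimates.
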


The key idea in our approximation algorithm for the variational minimum energy is to leverage the regularity lemma to relax the non-convex optimization program over product states, to checking the feasibility of a set of semidefinite programs. At a high level, the cut decomposition allows us to approximately interpret (dense) local Hamiltonians as a sum of a constant number of Hamiltonians, defined on complete bipartite graphs with uniform edge strengths. As we restrict our attention to product states, the uniformity in $H_D$ ensures that its energy becomes a function of the \textit{average magnetization} (the expectations of single qudit pauli operators) on each half of these bipartite graphs. By constraining the average magnetizations of these subsets of vertices to lie within a range, via linear inequality constraints, we consequently have that any product state which is feasible for those constraints must have energy in a fixed range. In this manner, relaxing the non-convex optimization problem to checking the feasibility of a set of SDP's. Given the behavior of these constraints on single qubit observables, we coin them the `subset magnetization constraints', and dedicate section \ref{section-constraints} for an intuitive discussion on their formulation and guarantees. 

Crucially, to actually make these algorithms run in constant time in the probe model of computation, we use a common refinement technique well known in the matrix regularity literature and presented in subsection \ref{subsection-compression}. This enables us to compress the number of variables in our sets of convex constraints, resulting in convex programs whose size only depends on the intended accuracy. In section \ref{section-gsptas-dense} we discuss how to actually solve the resulting feasibility problems, and combine their approximation guarantees with our existence statements to prove Theorem \ref{theorem-gsptasdense}. 

\subsection{The Subset Magnetization Constraints}
\label{section-constraints}

Here, we apply the regularity lemma to construct a collection of convex sets over the description of product states over $n$ qudits. Each such convex set will correspond to the set of product states within a given energy range, or in more detail, partitions of the graph will have magnetizations within a specified range. Naturally, there will be no guarantee that each convex set is feasible, however, by construction at least one such set will be feasible and contain a ball of finite radius. This will later enable a linear relaxation to the the problems of finding the best product state for the ground state energy, or the free energy, that one can solve using Convex Programming algorithms. 

Given a density matrix $\rho = \otimes_{u\in V}\rho_u$ which is a tensor product of $n$ single qudit mixed states, we express each $\rho_u \in (\mathbb{C})^{d\times d}$ in a generalized Pauli basis decomposition. That is, we describe each such $\rho$ via $(d^2-1)\cdot n$ real coeficients $\alpha_{i}^v \in \mathbb{R}, i\in [d^2-1], v\in V$, where $d = O(1)$ is the local dimension and
\begin{equation}
    \rho_v = \frac{1}{d}\mathbb{I}_{ d\times d} + \frac{1}{d} \sum_{i\in [d^2]-1}\alpha_i^v \sigma_i, \alpha_i^v\equiv \text{Tr}[\sigma_i \rho_v], \rho_v\geq 0
\end{equation}

The constraint that these matrices are psd is equivalent to that of checking if the lowest eigenvalue of $\rho_v$ is non-negative. Indeed, following standard techniques in semidefinite programming we note that its lowest eigenvector $|\psi\rangle$ provides a separating hyperplane on the variables $\alpha$, and we can compute it in time $O(d^3)$.

\begin{equation}
    \rho_v \ngeq 0\iff \exists \psi \text{ s.t. } \langle \psi| \rho_v |\psi\rangle = 1/d + \sum_{i\in [d^2-1]}\alpha_i^v \cdot  \langle \psi|\sigma_i|\psi\rangle/d < 0
\end{equation}

The energy of the product state $\rho$ under the Hamiltonian cut decomposition $H_D$ of Theorem \ref{theorem-psregularity} can be expressed as

\begin{equation}
    \text{Tr}[H_D\rho] =  \frac{1}{2d^2} \sum_{i, j\in [d^2]}\sum_{k\in [s]} \sum_{u\neq v} D^{ijk}_{uv} \alpha^u_i \alpha^v_j \cdot \text{Tr}_{u, v}[(\sigma_i^u\otimes \sigma_j^v)^2] = 
    \frac{1}{2}\sum_{i, j\in [d^2]}\sum_{k\in [s]} \sum_{u\neq v} D^{ijk}_{uv} \alpha^u_i \alpha^v_j
\end{equation}

In the above we abuse notation and let $\alpha_0^u=1$. Recall that the $D^{ijk}$, $i,  j\in [d^2], k\in [s]$ are cut matrices, where $s = O(\epsilon^{-2})$ is the width of the decomposition. That is, $D^{ijk}$ has a constant interaction strength of $d^{ijk}$ on the edges that cross from $R^{ijk}$ to $L^{ijk}$. In this setting, if each side of the cut was disjoint, $R^{ijk}\cap L^{ijk} =\emptyset$ we could re-express the energy due to the $k$-th cut of the $i, j$ Pauli interaction as:

\begin{equation}
    \sum_{u, v} D^{ijk}_{uv} \alpha^u_i \alpha^v_j = d^{ijk} \bigg(\sum_{u\in R^{ijk}}\alpha^u_i\bigg)\bigg(\sum_{v\in L^{ijk}}\alpha^v_j\bigg) \equiv d^{ijk} r^{ijk} c^{ijk}
\end{equation}

\noindent where $r^{ijk}$ becomes the total magnetization in the $i$th `direction' of the particles in the partition $R^{ijk}\subset V$, and analogously for $c^{ijk}$. Instead, we have to make due with the approximation (as in the proof of Theorem \ref{theorem-psregularity})

\begin{equation}
  \bigg|  \text{Tr}[H_D\rho] - \sum_{i, j, k} d^{ijk} r^{ijk} c^{ijk} \bigg|\leq \sum_{i, j, k} \sum_{u\in [n]}  |d^{ijk}|  \leq d^4\cdot s^{1/2} \|J\|_F 
\end{equation}

Naturally, if we knew the coefficients $\alpha^*$ of the best product state $\rho^*$, constructing said coefficients $r^*, c^*$ and the corresponding energy would be trivial. The beauty in this method is that one can enumerate (or guess) over the range of $r, c$, with the guarantee that there is at least one assignment that is close to $r^*, c^*$. In particular, let us note that since each coeficient $|\alpha^v_i|\leq 1$, then $|r^{ijk}|\leq |R^{ijk}|\leq n$. Construct the ordered set 
\begin{equation}
    I_\gamma = \{ -\gamma l n, -\gamma (l-1)n, \cdots, 0, \cdots \gamma l n\}
\end{equation}
where $\gamma$ is some accuracy parameter, and $l = \lfloor \gamma^{-1} \rfloor$. We note $|I_\gamma| \leq 2/\gamma + 1$. Let us define the shorthand $r = \{r^{ijk}\}_{ijk}, c = \{c^{ijk}\}_{ijk}$ for the corresponding vectors of length $d^4s$. For each of $|I_\gamma|^{2\cdot d^4s} = 2^{O(1/\epsilon^2)}$ possible assignment to these vectors $r, c\in (I_\gamma)^{d^4s}$, we define the following set of constraints $C_{r, c, \gamma}$:
\begin{gather}
    \rho_{\alpha_u}\geq 0 \text{ for all particles }u\in V  \\
    r^{ijk} - \gamma n \leq \sum_{u\in R^{ijk}} \alpha^u_i \leq r^{ijk} + \gamma n \text{ and } \\
    c^{ijk} - \gamma n \leq \sum_{u\in L^{ijk}} \alpha^u_j \leq c^{ijk} + \gamma n \text{ for all }i, j\in [d^2], k\in [s]
\end{gather}

Let us add a few remarks on the structure of $C_{r, c, \gamma}$. The first constraint ensures that any feasible point must be a density matrix of a product state. The second and third constraints enforce the magnetizations of given subsets of vertices, to be around the 'guess'. In particular, the application of a simple lemma by \cite{Frieze1999QuickAT} guarantees that if $\alpha$ is a feasible point of $C_{r, c, \gamma}$, then its energy is close to its 'guess':

\begin{lemma}\label{lemma-feasibleconstraints}
Let $\alpha = \{\alpha^{u}_i\}_{u\in V, i\in [d^2-1]}$ be a feasible point to the constraints $C_{r, c,\gamma}$, where $r = \{r^{ijk}\}$, $c = \{c^{ijk}\}, i, j\in [d^2], k\in [s]$. Correspondingly, let $\rho = \otimes \rho_u,  \rho_u = \mathbb{I}/d+\alpha^u\cdot \sigma^u/d$ be the product state defined by $\alpha$. Then 
\begin{equation}
    \bigg|\text{Tr}[H_D\rho] - \sum_{ijk} d^{ijk}r^{ijk} c^{ijk} \bigg| \leq O(\gamma/\epsilon \cdot n  \|J\|_F)
\end{equation}
where $s = O(\epsilon^{-2})$ is the width of each cut-decomposition, and $\gamma$ is the accuracy parameter of the constraints.
\end{lemma}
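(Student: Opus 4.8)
The strategy is to route the comparison between $\text{Tr}[H_D\rho]$ and the guessed value $\sum_{ijk}d^{ijk}r^{ijk}c^{ijk}$ through the \emph{actual} subset magnetizations of the feasible point. Write $\tilde r^{ijk}=\sum_{u\in R^{ijk}}\alpha^u_i$ and $\tilde c^{ijk}=\sum_{v\in L^{ijk}}\alpha^v_j$, so that feasibility of $\alpha$ for $C_{r,c,\gamma}$ says exactly $|\tilde r^{ijk}-r^{ijk}|\le\gamma n$ and $|\tilde c^{ijk}-c^{ijk}|\le\gamma n$ for all $i,j\in[d^2]$, $k\in[s]$. I then bound the quantity of interest by a triangle inequality through $\sum_{ijk}d^{ijk}\tilde r^{ijk}\tilde c^{ijk}$.

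For the first leg, the displayed estimate immediately preceding the lemma already gives $\big|\text{Tr}[H_D\rho]-\sum_{ijk}d^{ijk}\tilde r^{ijk}\tilde c^{ijk}\big|\le\sum_{ijk}\sum_{u\in[n]}|d^{ijk}|$: since $\rho$ is a product state, $\text{Tr}[H_D\rho]$ agrees with $\sum_{ijk}d^{ijk}\tilde r^{ijk}\tilde c^{ijk}$ up to the handshaking factor and the diagonal corrections $\big|\sum_{u\in R^{ijk}\cap L^{ijk}}d^{ijk}\alpha^u_i\alpha^u_j\big|\le n|d^{ijk}|$ forced by the $\sum_{u\ne v}$ restriction in $H_D$, where we use that the constraints $\rho_u\ge 0$ make $|\alpha^u_i|=|\text{Tr}[\sigma_i\rho_u]|\le 1$. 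Applying Cauchy--Schwarz over the $s=O(\epsilon^{-2})$ cuts together with the coefficient-length bound $(\sum_k(d^{ijk})^2)^{1/2}=O(\|J^{ij}\|_F/n)$ of Theorem~\ref{theorem-frieze} and $\sum_{i,j}\|J^{ij}\|_F\le d^4\|J\|_F$, this leg is at most $O(\sqrt{s}\,\|J\|_F)=O(\|J\|_F/\epsilon)$, exactly as in the proof of Theorem~\ref{theorem-psregularity}.

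For the second leg, I telescope $|\tilde r\tilde c-rc|\le|\tilde r|\,|\tilde c-c|+|c|\,|\tilde r-r|$. The bound $|\alpha^u_i|\le 1$ gives $|\tilde r^{ijk}|\le n$, and every guess lies in the grid $I_\gamma$, so $|r^{ijk}|,|c^{ijk}|\le\gamma l n\le n$; hence each term is at most $2\gamma n^2$. Multiplying by $|d^{ijk}|$ and summing with the same Cauchy--Schwarz and coefficient-length estimate yields $\big|\sum_{ijk}d^{ijk}(\tilde r^{ijk}\tilde c^{ijk}-r^{ijk}c^{ijk})\big|\le 2\gamma n^2\sum_{ijk}|d^{ijk}|=O(\gamma n\sqrt{s}\,\|J\|_F)=O(\gamma n\|J\|_F/\epsilon)$. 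Adding the two legs, and noting that in the relevant regime the grid is never refined below spacing $\Theta(1)$, i.e. $\gamma=\Omega(1/n)$, the $\gamma$-independent term $O(\|J\|_F/\epsilon)$ is absorbed into $O(\gamma n\|J\|_F/\epsilon)$, which is the claimed bound. The only part that needs care is the bookkeeping of the diagonal and handshaking corrections, both in $\text{Tr}[H_D\rho]$ and in expanding the products $\tilde r^{ijk}\tilde c^{ijk}$; these are handled exactly as in Theorem~\ref{theorem-psregularity}, and the rest is the routine ``product of nearby bounded quantities stays nearby'' estimate.
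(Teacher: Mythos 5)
Your proof is correct and takes essentially the same route as the paper: a triangle inequality through the true subset magnetizations $\tilde r,\tilde c$, bounding the diagonal-correction term and the $|\tilde r\tilde c - rc|$ perturbation term separately via Cauchy--Schwarz on the coefficient length $(\sum_k (d^{ijk})^2)^{1/2}=O(\|J^{ij}\|_F/n)$, and absorbing the $\gamma$-independent $O(\|J\|_F/\epsilon)$ piece under $\gamma n/\epsilon=\omega(1)$. The only cosmetic difference is that where you unpack the ``product of nearby bounded quantities'' telescope by hand, the paper cites Lemma~\ref{closenesslemma} of Frieze--Kannan as a black box, whose proof is exactly that telescope.
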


\begin{proof}
We use the following lemma by \cite{Frieze1999QuickAT}

\begin{lemma}[\cite{Frieze1999QuickAT}\label{closenesslemma}] 
Let $J$ be a real matrix, and $D^1\cdots D^s, W$ be its cut decomposition of width $s$. Then given real numbers $r_i, c_i, r'_i, c'_i$ with $|r_i|, |r_i'|, |c_i|, |c'_i|\leq n$ and $|r_i-r'_i|, |c_i-c'_i|\leq \gamma n$ for each $i\in [s]$, then $\sum_i d_i |r_ic_i-r'_ic'_i| \leq 8(\sum_i d_i^2)^{1/2} \cdot n^2\gamma s^{1/2}$
\end{lemma}

\noindent which follows from the Cauchy-Schwartz inequality. Recall the bound of $(\sum_i d_i^2)^{1/2}\leq \|J\|_F/n$ on the coefficient length, and the width $s = O(\epsilon^{-2})$,  from Theorem \ref{theorem-frieze}. If $r', c'$ are the true average magnetizations of the subsets in the cut decomposition, then

\begin{gather}
    \bigg|\text{Tr}[H_D\rho] - \sum_{ijk} d^{ijk}r^{ijk} c^{ijk}\bigg| \leq \bigg|\text{Tr}[H_D\rho] - \sum_{ijk} d^{ijk}r'^{ijk} c'^{ijk}\bigg| +
    \sum_{ijk} |d^{ijk}|\cdot \big|r^{ijk} c^{ijk} - r'^{ijk} c'^{ijk} \bigg|  \\\leq O\bigg(\frac{\gamma d^4}{\epsilon} \cdot n\|J\|_F\bigg)  
\end{gather}

so long as $\gamma n /\epsilon  = \omega(1)$
\end{proof}

We pick, in particular, $\gamma = O(\epsilon/s^{1/2}) = O(\epsilon^2)$, s.t. the error in the above becomes $\epsilon n \|J\|_F$.

\subsubsection{Common Refinements}
\label{subsection-compression}

The high degree of symmetry in the cut decomposed Hamiltonian $H_D$ presents some key advantages. In this section, we discuss a known common refinement technique used throughout the literature on matrix regularity lemmas \cite{Frieze1999QuickAT, Alon2002RandomSA, Jain2018TheVS, Gharan2013ANR} that enables us to compress the set of constraints $C_{r, c, \gamma}$ on $n$ variables into a set of constraints $\tilde{C}_{r, c, \gamma}$ on $2^{O(s)} = 2^{O(\epsilon^{-2})}$ variables. 

Let us consider the partitions $\{R^{ijk}, L^{ijk}\}_{ijk}$ defined in the cut decomposition, and let $\mathcal{A}=\{\mathcal{A}_a\}_{a\in [A]}$ be their common refinement (or, coarsest partition). That is, the subsets $\mathcal{A}_a\subset V$ are a disjoint partition of the vertices $V$, such that each set $R^{ijk}, L^{ijk}$ is the exact union of subsets $\mathcal{A}_a$. Thereby, each $v\in \mathcal{A}_a$ is on the same side of every cut. We note that there are $2$ choices of a side on each of $d^4\cdot s$ partitions, and thereby the number of subsets is $A = 2^{d^4\cdot s}$. The key observation is that each $u\in \mathcal{A}_a$ is indistinguishable from the other $v\in \mathcal{A}_a$ within the constraints. Since the constraints are convex, one can thereby use the same density matrix $\rho_a = \frac{1}{d}\mathbb{I}+\frac{1}{d}\alpha_a\cdot \sigma^a$ for every qudit in $\mathcal{A}_a$. In this manner, we define a set of $(d^2-1)\cdot 2^{d^4\cdot s} = 2^{O(\epsilon^{-2})}$ variables, and correspondingly define the compressed set of constraints $\tilde{C}_{r, c, \gamma}$:

\begin{gather}
    \rho_{\alpha_a}\geq 0 \text{ for all }a\in [A]  \\
    r^{ijk} - \gamma n \leq \sum_{a: \mathcal{A}_a\subset  R^{ijk}} |\mathcal{A}_a|\cdot \alpha^a_i \leq r^{ijk} + \gamma n \text{ and } \\
    c^{ijk} - \gamma n \leq \sum_{a: \mathcal{A}_a\subset L^{ijk}} |\mathcal{A}_a|\cdot \alpha^a_j \leq c^{ijk} + \gamma n \text{ for all }i, j\in [d^2], k\in [s]
\end{gather}

To ensure that this compression is actually useful, we use a simple claim to argue that it preserves feasibility.

\begin{claim}\label{claim-compressionfeasible}
Fix `guess' vectors $r, c$. Then $C_{r, c, \gamma}$ is feasible $\iff $ $\tilde{C}_{r, c, \gamma}$ is feasible.  
\end{claim}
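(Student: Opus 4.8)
The plan is to prove both directions of the equivalence by exhibiting explicit maps between feasible points of $C_{r,c,\gamma}$ and $\tilde C_{r,c,\gamma}$. The easy direction is ($\tilde C \Rightarrow C$): given a feasible assignment $\{\alpha_a\}_{a\in[A]}$ for the compressed constraints, define a product-state assignment on all $n$ qudits by setting $\alpha^u = \alpha_a$ whenever $u\in\mathcal{A}_a$. The psd constraints $\rho_{\alpha^u}\geq 0$ hold because each equals some $\rho_{\alpha_a}\geq 0$. For the magnetization constraints, since each $R^{ijk}$ is, by definition of the common refinement, the disjoint union of the sets $\mathcal{A}_a$ it contains, we have $\sum_{u\in R^{ijk}}\alpha^u_i = \sum_{a:\mathcal{A}_a\subseteq R^{ijk}}\sum_{u\in\mathcal{A}_a}\alpha^a_i = \sum_{a:\mathcal{A}_a\subseteq R^{ijk}} |\mathcal{A}_a|\cdot\alpha^a_i$, which lies in $[r^{ijk}-\gamma n, r^{ijk}+\gamma n]$ exactly because the compressed constraint says so; the $c$-side is identical. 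Hence $\{\alpha^u\}$ is feasible for $C_{r,c,\gamma}$.

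For the harder direction ($C\Rightarrow\tilde C$), I would use a convexity/averaging argument. Suppose $\{\alpha^u\}_{u\in V}$ is feasible for $C_{r,c,\gamma}$. For each block $\mathcal{A}_a$ define the averaged coefficient vector $\alpha_a \equiv \frac{1}{|\mathcal{A}_a|}\sum_{u\in\mathcal{A}_a}\alpha^u$ (and $\alpha_a = 0$ if $\mathcal{A}_a$ is empty, which does not occur if we take the refinement of nonempty cut sides, or is harmless). The psd constraint survives averaging: $\rho_{\alpha_a} = \frac{1}{|\mathcal{A}_a|}\sum_{u\in\mathcal{A}_a}\rho_{\alpha^u}$ is a convex combination of psd matrices, hence psd. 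For the magnetization constraints, note that $\sum_{a:\mathcal{A}_a\subseteq R^{ijk}}|\mathcal{A}_a|\cdot\alpha^a_i = \sum_{a:\mathcal{A}_a\subseteq R^{ijk}}\sum_{u\in\mathcal{A}_a}\alpha^u_i = \sum_{u\in R^{ijk}}\alpha^u_i$, again because $R^{ijk}$ decomposes exactly into blocks of $\mathcal{A}$; the right-hand side lies in the interval $[r^{ijk}-\gamma n,\ r^{ijk}+\gamma n]$ since $\{\alpha^u\}$ is feasible for $C_{r,c,\gamma}$. The same computation handles the $L^{ijk}$ constraints. Therefore $\{\alpha_a\}_{a\in[A]}$ is feasible for $\tilde C_{r,c,\gamma}$.

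The crux of the argument — and the only place where any structure is genuinely used — is the observation that the common refinement $\mathcal{A}$ has the property that \emph{every} cut side $R^{ijk}$ and $L^{ijk}$ is a union of blocks of $\mathcal{A}$, so that the linear functionals appearing in the magnetization constraints factor through the block structure. Once that is in place, both directions are routine: one direction by constant extension, the other by blockwise averaging combined with the fact that both the psd cone and affine intervals are convex. I expect no real obstacle here; the only minor care needed is to treat empty blocks consistently (or simply define $\mathcal{A}$ to consist only of nonempty blocks, of which there are at most $A = 2^{d^4 s}$), and to note that the identity-coefficient convention $\alpha^u_0 = 1$ is preserved under averaging since $\frac{1}{|\mathcal{A}_a|}\sum_{u\in\mathcal{A}_a}1 = 1$.
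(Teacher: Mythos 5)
Your proposal is correct and follows essentially the same approach as the paper: the direction $\tilde{C}\Rightarrow C$ is by constant extension across each block (which the paper dismisses as ``by definition''), and the direction $C\Rightarrow\tilde{C}$ is by blockwise averaging, using convexity of the psd cone and the fact that the cut sides decompose exactly into blocks of the common refinement. Your write-up is merely more explicit about the bookkeeping; there is no substantive difference.
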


\begin{proof}
If $\tilde{C}_{r, c, \gamma}$ is feasible, then $C_{r, c, \gamma}$ is by definition. The converse is slightly trickier. Let $\alpha^u_i$ be a feasible point to $C_{r, c, \gamma}$. Let us consider grouping the vertices by their common refinements, and average their spins. That is, we define a point $\{\alpha^a_i\}_{i\in [d^2-1], a\in [A]}$ in the variable space of $\tilde{C}_{r, c}$, via

\begin{equation}
    \alpha^a_i = \frac{1}{|\mathcal{A}_a|} \sum_{u\in \mathcal{A}_a} \alpha^u_i.
\end{equation}

This point $\{\alpha^a_i\}_{i\in [d^2-1], a\in [A]}$ clearly still satisfies the magnetization constraints, due to symmetry. Moreover, the resulting matrix $\rho_{\alpha_a}$ is psd since it is a convex combination of psd matrices. In this manner we conclude the claim.
\end{proof}

Finally, we remark that for the purposes of devising sublinear time algorithms, we won't actually have access to the full description of each partition $\mathcal{A}_a$. Instead, 
 we will have to make do with knowledge of their connectivity to other partitions, and estimates $|\hat{\mathcal{A}}_a|$ of their sizes. We denote as $\hat{C}_{r, c, \gamma}$ the convex set of constraints corresponding to modifying $\tilde{C}_{r, c, \gamma}$ above with  $|\hat{\mathcal{A}}_a|$ instead of $|\mathcal{A}_a|$. Fortunately, a simple claim argues these constraints are robust to this sampling noise. 

 \begin{claim} \label{claim-approxfeas}
    Fix $\delta < \gamma A^{-1}/4$, and assume $\big| |\hat{\mathcal{A}}_a| - |\mathcal{A}_a|\big|\leq \delta \cdot n$ for all $a\in [A]$. Then, 
    \begin{enumerate}
        \item For every $n$ qudit product state $\rho  = \otimes_u \rho_u$, there exists a choice of $r,c\in (I_\gamma)^{d^4\cdot s}$ such that $\rho$ is feasible for $C_{r,c, \gamma}$, and $\hat{C}_{r, c, \gamma}$ is also feasible.
        \item $\tilde{C}_{r, c, \gamma}$ feasible  $\Rightarrow \hat{C}_{r, c, 2\gamma}$ feasible, and $\hat{C}_{r, c, 2\gamma}$ feasible $\Rightarrow\tilde{C}_{r, c, 2\gamma}$ feasible, for all $r, c$. 
    \end{enumerate}
 \end{claim}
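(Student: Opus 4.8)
My plan is to handle both parts with a single template: exhibit an explicit feasible point and bound how far each magnetization sum moves when the true block sizes $|\mathcal{A}_a|$ are replaced by the estimates $|\hat{\mathcal{A}}_a|$. The one quantitative input I will use is that \emph{any} point $\{\alpha^a_i\}$ feasible for $\tilde{C}_{r,c,\gamma}$ or $\hat{C}_{r,c,\gamma}$ has $|\alpha^a_i|\le 1$, since $\rho_{\alpha_a}\ge 0$ forces the generalized Bloch vector to have bounded coordinates. Then for every index $(i,j,k)$,
\[
\Bigl|\sum_{a:\mathcal{A}_a\subset R^{ijk}} \bigl(|\hat{\mathcal{A}}_a|-|\mathcal{A}_a|\bigr)\alpha^a_i\Bigr| \;\le\; \sum_{a\in[A]} \bigl||\hat{\mathcal{A}}_a|-|\mathcal{A}_a|\bigr| \;\le\; A\delta n \;<\; \tfrac{\gamma n}{4},
\]
using the hypothesis $\delta<\gamma A^{-1}/4$, and identically for the $L^{ijk}$/$c^{ijk}$ sums. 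So passing between $\tilde{C}$ and $\hat{C}$ moves the left-hand side of each magnetization constraint by less than $\gamma n/4$, while leaving the semidefiniteness constraints $\rho_{\alpha_a}\ge 0$ literally unchanged.

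For part~(1), given $\rho=\otimes_u\rho_u$ I would set $\alpha^u_i=\text{Tr}[\sigma_i\rho_u]$, form the true subset magnetizations $r'^{ijk}=\sum_{u\in R^{ijk}}\alpha^u_i$ and $c'^{ijk}=\sum_{v\in L^{ijk}}\alpha^v_j$, note they lie in $[-n,n]$, and round each to the nearest grid point of $I_\gamma$ (which has spacing $\gamma n$) to get $r,c\in(I_\gamma)^{d^4 s}$ with $|r^{ijk}-r'^{ijk}|,|c^{ijk}-c'^{ijk}|\le\gamma n/2$; this immediately makes $\rho$ feasible for $C_{r,c,\gamma}$. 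Then I would pass to the common-refinement average $\alpha^a_i=|\mathcal{A}_a|^{-1}\sum_{u\in\mathcal{A}_a}\alpha^u_i$ of the proof of Claim~\ref{claim-compressionfeasible}: weighting it by $|\mathcal{A}_a|$ reproduces $r'^{ijk},c'^{ijk}$ exactly, so by the displayed bound its $|\hat{\mathcal{A}}_a|$-weighted sums sit within $\gamma n/2+\gamma n/4<\gamma n$ of $r,c$, and $\rho_{\alpha_a}\ge 0$ as a convex combination of psd matrices. Hence the same $r,c$ certify feasibility of $\hat{C}_{r,c,\gamma}$.

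For part~(2), I would use the identity point: a point feasible for $\tilde{C}_{r,c,\gamma}$ keeps the psd constraints, and by the displayed bound its $|\hat{\mathcal{A}}_a|$-weighted magnetization sums are within $\gamma n+\gamma n/4<2\gamma n$ of $r,c$, so it is feasible for $\hat{C}_{r,c,2\gamma}$; the reverse is symmetric, a point feasible for $\hat{C}_{r,c,2\gamma}$ has $|\mathcal{A}_a|$-weighted sums within $2\gamma n+\gamma n/4$ of $r,c$, which I would absorb into the $2\gamma$ slack (the $1/4$ in the hypothesis on $\delta$ is precisely the room that makes this lower-order inflation of $\gamma$ harmless, since $\gamma\ll1$ in the application).

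I expect the main — essentially the only — obstacle to be bookkeeping the blow-up by the number of refinement blocks $A=2^{d^4 s}=2^{O(\epsilon^{-2})}$: the aggregate size error $\sum_a \bigl||\hat{\mathcal{A}}_a|-|\mathcal{A}_a|\bigr|$ is amplified by $A$, so the per-block sampling accuracy must beat $1/A$, which is exactly why $\delta$ must be pinned at the exponentially small scale $\gamma A^{-1}/4$. This forces the sublinear-time routine to estimate each $|\mathcal{A}_a|$ to additive error $\delta n$ with $\delta=2^{-O(\epsilon^{-2})}$ — still affordable, since by a Chernoff bound this needs only $\mathrm{poly}(1/\delta)=2^{O(\epsilon^{-2})}$ uniformly random vertex probes.
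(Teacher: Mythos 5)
Your proposal follows essentially the same route as the paper's terse proof: both hinge on (i) $|\alpha^a_i|\le 1$ and $\delta<\gamma A^{-1}/4$ giving the perturbation bound $\bigl|\sum_a(|\hat{\mathcal A}_a|-|\mathcal A_a|)\alpha^a_i\bigr|\le A\delta n<\gamma n/4$ on every magnetization constraint when the true block sizes are replaced by their estimates, and (ii) the observation that the grid $I_\gamma$ (spacing $\gamma n$) contains a pair $(r,c)$ within $\gamma n/2$ of the true subset magnetizations of any given product state. Your treatment of part~(1) — round to the grid, pass to the common-refinement average, reweight by $|\hat{\mathcal A}_a|$, total budget $\gamma n/2+\gamma n/4<\gamma n$ — makes explicit exactly the construction the paper has in mind, and the forward implication of part~(2), $\gamma n+\gamma n/4<2\gamma n$, goes through in the same way.

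The one place your write-up does not hold up, and where the paper's proof gives you no help either, is the reverse implication of part~(2) as literally stated: $\hat C_{r,c,2\gamma}$ feasible $\Rightarrow\tilde C_{r,c,2\gamma}$ feasible. As you compute, a point feasible for $\hat C_{r,c,2\gamma}$ has its $|\mathcal A_a|$-weighted magnetizations only within $2\gamma n+\gamma n/4=\tfrac{9}{4}\gamma n$ of $(r,c)$, and there is no spare room to ``absorb'' the overhang: $\tfrac{9}{4}\gamma n>2\gamma n$ for every $\gamma>0$, and the $1/4$ in the hypothesis on $\delta$ has already been fully spent producing the $\gamma n/4$ bound. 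This is less a defect of your argument than a slight overstatement of the claim itself; the form actually invoked downstream, in the proof of Lemma~\ref{lemma-perturbedconstraints}, is that $\hat C_{r,c,\gamma}$ feasible implies the true magnetizations of the reconstructed product state lie within $(\gamma+A\delta)n\le 2\gamma n$ of $(r,c)$, hence $\tilde C_{r,c,2\gamma}$ is feasible — and that your perturbation bound delivers cleanly. If you read the second implication with hypothesis tolerance $\gamma$ rather than $2\gamma$, your proof closes without hand-waving.
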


\begin{proof}
    The  proof of (1) follows from the guarantee that for every product state $\rho$, there always exists a choice $r, c \in $ such that $\rho$ is bounded $\gamma / 2 \cdot n$ away from saturating the magnetization constraints. Then, (1) and (2) follows since the magnetization constraints are affine and $|\alpha|_\infty\leq 1$, such that each magnetization is perturbed by at most $A\cdot \delta \cdot n\leq\gamma \cdot n/4$. We note that this remains bounded away from saturating the magnetization constraints.
\end{proof}

In this manner, given a cut decomposition, it really simply suffices to estimate the sizes of the partitions in the decomposition to instantiate the optimization program. In appendix \ref{appendix-volume}, we present a discussion on the volume of the convex set of feasible solutions to these constraints, which later ensure our algorithms converge efficiently. To conclude this section, let us briefly describe the data-structure used by \cite{Frieze1999QuickAT} to implicitly store this common refinement. Given a single cut decomposition, they explicitly construct a decision tree where the refinements $a\in [A]$ are the leaves, and one can query in $\text{poly}(1/\epsilon)$ time which refinement $\mathcal{A}_a$ contains any vertex $v\in [n]$. For any cut $R^{ijk}, L^{ijk}$, one can use this data-structure to list all the partitions $a$ contained in each of $R^{ijk}, L^{ijk}$ in $2^{O(1/\epsilon^2)}$ time. 

\subsection{The Algorithm}
\label{section-gsptas-dense}

The description of our algorithm is as follows. Given a $2$-Local Hamiltonian, we first implicitly construct the Hamiltonian cut decomposition $H_D$ of Theorem \ref{theorem-psregularity}. Next, using the implicit data-structure described in section \ref{section-regularity}, we use standard sampling guarantees to ensure we estimate the sizes of the coarsest partitions $|\hat{\mathcal{A}}_a|$ up to an additive error of $\gamma/4A\cdot n$. Finally, we construct the convex programs $\hat{C}_{r, c, \gamma}$, for each $r, c \in (I_\gamma)^{d^4\cdot s}$ as defined above, and use the convex program solver of Theorem \ref{theorem-bertsimas} below to check if each of the $2^{O(1/\epsilon^2 \log 1/\epsilon)}$ constraints is feasible. If $\hat{F}_{\gamma}$ are the set of pairs $(r, c) \in (I_\gamma)^{d^4\cdot s}$ such that $\hat{C}_{r,c, \gamma}$ is feasible, then we output

\begin{equation}
   \hat{V}_\gamma \equiv  \min_{r, c\in \hat{F}_{\gamma}} \hat{V}_{r, c} \equiv \min_{r, c\in \hat{F}_{\gamma}} \sum_{i,j,k} d^{ijk} r^{ijk} c^{ijk},
\end{equation}

\noindent where the coefficients $d^{ijk}$ arise from the cut decomposition $H_D$. To prove Theorem \ref{theorem-gsptasdense}, we need to argue the correctness and runtime of our scheme. We begin by arguing correctness.

It will later be relevant to prove properties on the un-perturbed constraint set $\tilde{C}_{r, c, \gamma}$, and so in Lemma \ref{lemma-Vadditive1} we warm up our proof techniques as if we had perfect knowledge of the cut decomposition and the sizes of the coarsest partitions $\mathcal{A}_a$. Correspondingly, let $F_{\gamma}$ be the set of pairs $r, c \in (I_\gamma)^{d^4\cdot s}$ such that $\tilde{C}_{r, c, \gamma}$ is feasible and analogously define the energy estimate:

\begin{equation}
   V_\gamma \equiv  \min_{r, c\in F_{\gamma}} V_{r, c} \equiv \min_{r, c\in F_{\gamma}} \sum_{ijk} d^{ijk} r^{ijk} c^{ijk}.
\end{equation}

First, in Lemma \ref{lemma-Vadditive1} and Corollary \ref{corollary-Vadditive} below we prove that $V_\gamma$ is a close approximation to the minimum energy of the cut decomposition $\text{Tr}[H_D \rho]$ among product states $\rho$. The key intuition behind this claim stems from Lemma \ref{lemma-feasibleconstraints}: If a product state $\rho$ is feasible for a set of constraints $\tilde{C}_{r, c, \gamma}$, its actual energy must be close to the 'guess', which must imply $V_\gamma$ can't be much smaller than the variational minimum energy. Conversely, the minimum energy product state must be feasible  for some $C_{r, c,\gamma}$, implying $V_\gamma$ can't be much larger than the variational energy either. 

\begin{lemma} \label{lemma-Vadditive1}
$V_{\gamma}$ is an $O(\gamma/\epsilon \cdot  n \|J\|_F)$ additive approximation to $\min_{\rho = \otimes \rho_u}\text{Tr}[H_D \rho]$
\end{lemma}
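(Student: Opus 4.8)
The plan is to establish the two directions of the additive approximation separately, in each case reducing to Lemma~\ref{lemma-feasibleconstraints}, which already tells us that if a product state $\rho$ is feasible for some $\tilde{C}_{r, c, \gamma}$ then $|\text{Tr}[H_D\rho] - V_{r,c}| \leq O(\gamma/\epsilon \cdot n\|J\|_F)$, where $V_{r,c} = \sum_{ijk} d^{ijk} r^{ijk} c^{ijk}$. So the lemma will essentially be a packaging of that bound together with the feasibility-transfer Claims~\ref{claim-compressionfeasible}.

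For the upper bound $V_\gamma \leq \min_{\rho}\text{Tr}[H_D\rho] + O(\gamma/\epsilon\cdot n\|J\|_F)$, I would start from the minimizing product state $\rho^* = \otimes_u \rho^*_u$ and its true subset magnetizations $r'^{ijk} = \sum_{u\in R^{ijk}}\text{Tr}[\sigma_i\rho^*_u]$, $c'^{ijk} = \sum_{v\in L^{ijk}}\text{Tr}[\sigma_j\rho^*_v]$. Since $|\text{Tr}[\sigma_i\rho_u]|\leq 1$, each of these has absolute value at most $n$, and the grid $I_\gamma$ has spacing $\gamma n$ and half-extent $\gamma\lfloor 1/\gamma\rfloor n \geq (1-\gamma)n$, so there is a guess $(r,c)\in (I_\gamma)^{d^4 s}$ agreeing with $(r',c')$ up to $\gamma n$ coordinatewise (this includes the fully polarized boundary case). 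Then $\rho^*$ satisfies the magnetization constraints of $C_{r,c,\gamma}$, so by Claim~\ref{claim-compressionfeasible} we get $(r,c)\in F_\gamma$; in particular $F_\gamma\neq\emptyset$ and $V_\gamma$ is well-defined. Applying Lemma~\ref{lemma-feasibleconstraints} to $\rho^*$ gives $V_{r,c}\leq \text{Tr}[H_D\rho^*] + O(\gamma/\epsilon\cdot n\|J\|_F)$, and $V_\gamma\leq V_{r,c}$ by definition finishes this direction.

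For the lower bound $V_\gamma \geq \min_{\rho}\text{Tr}[H_D\rho] - O(\gamma/\epsilon\cdot n\|J\|_F)$, I would take the pair $(r,c)\in F_\gamma$ realizing $V_\gamma = V_{r,c}$. Feasibility of $\tilde{C}_{r,c,\gamma}$ together with Claim~\ref{claim-compressionfeasible} produces a feasible point $\alpha$ of $C_{r,c,\gamma}$; let $\rho = \otimes_u\rho_u$ be the associated $n$-qudit product state. Then Lemma~\ref{lemma-feasibleconstraints} gives $\text{Tr}[H_D\rho]\leq V_{r,c} + O(\gamma/\epsilon\cdot n\|J\|_F)$, hence $V_\gamma = V_{r,c}\geq \text{Tr}[H_D\rho] - O(\gamma/\epsilon\cdot n\|J\|_F)\geq \min_{\rho'}\text{Tr}[H_D\rho'] - O(\gamma/\epsilon\cdot n\|J\|_F)$. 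Combining the two bounds yields the claim.

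I do not expect a substantive obstacle here: the only bookkeeping is checking that $I_\gamma$ is simultaneously fine and wide enough to capture every product state's magnetizations to within $\gamma n$, and confirming that the $O(\gamma/\epsilon\cdot n\|J\|_F)$ term from Lemma~\ref{lemma-feasibleconstraints} dominates (it absorbs the $d^4 s^{1/2}\|J\|_F$-type slack from the non-disjointness of the cut supports). The role of this lemma is to serve as the clean, exact-knowledge warm-up that Corollary~\ref{corollary-Vadditive} upgrades, and that is then combined with the regularity lemma and the product-state existence theorems to control the sampled constraints $\hat{C}_{r,c,\gamma}$ in the actual algorithm.
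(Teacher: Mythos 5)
Your proof matches the paper's essentially step for step: both directions round the optimal product state's true magnetizations to the grid $I_\gamma$, invoke Lemma~\ref{lemma-feasibleconstraints} for the energy-versus-guess closeness, and use Claim~\ref{claim-compressionfeasible} to move between feasibility of $C_{r,c,\gamma}$ and of the compressed $\tilde{C}_{r,c,\gamma}$ (the latter being what $F_\gamma$ is defined over). Your explicit check of the grid's half-extent in the fully polarized boundary case is a small point of care that the paper glosses over, but the argument is the same.
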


\begin{proof}
We argue the two directions of the approximation inequality separately. To begin, let the product state minimizer of the energy of $H_D$ be $\rho^*$, that is
\begin{equation}
    \min_{\rho = \otimes \rho_u}\text{Tr}[H_D \rho] = \text{Tr}[H_D \rho^*], \big| \text{Tr}[H_D \rho^*] - \sum_{ijk} d^{ijk}r^{ijk}_* c^{ijk}_*\big|\leq O(\|J\|_F/\epsilon)
\end{equation}

\noindent where we let $r^{ijk}_*, c^{ijk}_*$ be the subset magnetizations corresponding of $\rho^*$ . By construction, there is a pair $r, c\in (I_\gamma)^{d^4\cdot s}$ such that $|r^{ijk}_*-r^{ijk}|\leq n\gamma/2$, and therefore $\rho^*$ is feasible for some $C_{r, c, \gamma}$. Therefore, $V_\gamma$ must be upper bounded by the energy of $C_{r, c, \gamma}$, which by Lemma \ref{lemma-feasibleconstraints}, is close to the energy of $\rho^*$:
\begin{equation}
    V_\gamma \leq  \text{Tr}[H_D \rho^*] +  O(\gamma/ \epsilon \cdot  n \|J\|_F  )
\end{equation}

Conversely, let $r, c$ be the pair that defines the minimum of $V_\gamma$, and via Claim \ref{claim-compressionfeasible} we know there exists a product state $\rho'$ feasible for $C_{r, c, \gamma}$. Then $V_\gamma$ is close to the energy of $\rho'$ via Lemma \ref{lemma-feasibleconstraints}, and the energy of $\rho'$ must be lower bounded by that of $\rho^*$

\begin{equation}
     V_\gamma \geq \text{Tr}[H_D\rho' ] - O(\gamma/\epsilon \cdot n \|J\|_F) \geq \text{Tr}[H_D \rho^*] - O(\gamma/\epsilon \cdot  n \cdot \|J\|_F).
\end{equation}
\end{proof}

\begin{corollary} \label{corollary-Vadditive}
$V_{O(\epsilon^2)}$ is an $\epsilon n \|J\|_F$ additive approximation to $\min_{\rho = \otimes \rho_u}\text{Tr}[H \rho]$
\end{corollary}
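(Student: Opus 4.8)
The plan is to obtain the corollary by chaining Lemma~\ref{lemma-Vadditive1} with the product-state regularity bound of Theorem~\ref{theorem-psregularity}, and then fixing the free accuracy parameter $\gamma$. First I would make the choice $\gamma = \Theta(\epsilon/s^{1/2}) = \Theta(\epsilon^2)$ already flagged after Lemma~\ref{lemma-feasibleconstraints}, using that the cut decomposition has width $s = O(\epsilon^{-2})$. With this choice, the error term $O(\gamma/\epsilon\cdot n\|J\|_F)$ in Lemma~\ref{lemma-Vadditive1} becomes $O(\epsilon n\|J\|_F)$, so
\[
  \Big| V_{\Theta(\epsilon^2)} - \min_{\rho = \otimes_u \rho_u}\text{Tr}[H_D\rho] \Big| \le O(\epsilon n\|J\|_F).
\]

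Next I would transfer from $H_D$ back to $H$ using Theorem~\ref{theorem-psregularity}, which guarantees $|\text{Tr}[(H-H_D)\rho]| \le \epsilon n\|J\|_F$ for \emph{every} product state $\rho$. Evaluating this at the product-state minimizer of $H$ gives $\min_\rho \text{Tr}[H_D\rho] \le \min_\rho\text{Tr}[H\rho] + \epsilon n\|J\|_F$, and evaluating it at the product-state minimizer of $H_D$ gives the reverse inequality, so
\[
  \Big|\min_{\rho=\otimes_u\rho_u}\text{Tr}[H_D\rho] - \min_{\rho=\otimes_u\rho_u}\text{Tr}[H\rho]\Big| \le \epsilon n\|J\|_F .
\]
Combining the two displays by the triangle inequality bounds $|V_{\Theta(\epsilon^2)} - \min_\rho\text{Tr}[H\rho]|$ by $C\epsilon n\|J\|_F$ for an absolute constant $C$ depending only on $d$. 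Rescaling $\epsilon \mapsto \epsilon/C$ (and correspondingly the width $s$ and the accuracy $\gamma$) then gives the clean bound $\epsilon n\|J\|_F$ stated in the corollary.

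There is no genuinely hard step here; the corollary is essentially bookkeeping on top of Lemma~\ref{lemma-Vadditive1} and Theorem~\ref{theorem-psregularity}. The only point that warrants care is applying the constant-factor rescaling of $\epsilon$ consistently wherever $\epsilon$ enters — in the width $s$, the grid spacing $\gamma$, and the $O(\gamma/\epsilon\cdot n\|J\|_F)$ error of Lemma~\ref{lemma-Vadditive1} — so that the final additive error is exactly $\epsilon n\|J\|_F$ rather than merely $O(\epsilon n\|J\|_F)$; since all the $\epsilon$-dependencies involved are polynomial, this absorption is routine.
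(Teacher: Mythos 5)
Your proof is correct and follows exactly the same route as the paper: combine Lemma~\ref{lemma-Vadditive1} with $\gamma = O(\epsilon^2)$ (so the error is $O(\gamma/\epsilon\cdot n\|J\|_F) = O(\epsilon n\|J\|_F)$) with the product-state regularity bound of Theorem~\ref{theorem-psregularity} to transfer from $H_D$ to $H$. The paper's proof is a one-line citation of these two facts, and your elaboration of the triangle-inequality bookkeeping and constant-absorption is exactly the implicit argument.
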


\begin{proof}
This follows from Lemma \ref{lemma-Vadditive1} above with $\gamma = O(\epsilon^2)$, and the product state regularity in Theorem \ref{theorem-psregularity}.
\end{proof}

To extend these claims to the case in which we perturb the constraints with noisy samples of the sizes of the partitions, we use the discussion in Claim \ref{claim-approxfeas}. 

\begin{lemma} \label{lemma-perturbedconstraints}
   Fix $\delta < \gamma A^{-1}/4$, and assume $\big| |\hat{\mathcal{A}}_a| - |\mathcal{A}_a|\big|\leq \delta \cdot n$ for all $a\in [A]$. Then, $\hat{V}_{O(\epsilon^2)}$ is an $\epsilon n \|J\|_F$ additive approximation to $\min_{\rho = \otimes \rho_u}\text{Tr}[H \rho]$.
\end{lemma}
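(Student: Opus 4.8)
The plan is to reprise the argument of Lemma~\ref{lemma-Vadditive1} and Corollary~\ref{corollary-Vadditive} almost verbatim, with every use of exact feasibility replaced by an appeal to Claim~\ref{claim-approxfeas}, which is precisely the tool that converts the $\delta \le \gamma/4A \cdot n$ sampling error on the partition sizes $|\hat{\mathcal{A}}_a|$ into a constant-factor loss in the slack parameter. Concretely, I would first show that $\hat{V}_{\gamma}$ approximates $\min_{\rho=\otimes\rho_u}\text{Tr}[H_D\rho]$ to additive error $O(\gamma/\epsilon \cdot n\|J\|_F)$, and then set $\gamma = O(\epsilon^2)$ and invoke the product-state regularity bound $|\text{Tr}[(H-H_D)\rho]|\le \epsilon n\|J\|_F$ of Theorem~\ref{theorem-psregularity}, exactly as in Corollary~\ref{corollary-Vadditive}, to pass from $H_D$ to $H$.

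For the upper bound on $\hat V_\gamma$, I would take the product-state minimizer $\rho^*$ of $\text{Tr}[H_D\rho]$ and apply Claim~\ref{claim-approxfeas}(1): there is a guess $(r,c)\in(I_\gamma)^{d^4 s}$ for which $\rho^*$ is feasible for $C_{r,c,\gamma}$ and, simultaneously, $\hat C_{r,c,\gamma}$ is feasible. Hence $(r,c)\in\hat F_\gamma$ and $\hat V_\gamma\le\sum_{ijk}d^{ijk}r^{ijk}c^{ijk}$, and Lemma~\ref{lemma-feasibleconstraints} applied to the feasible point $\rho^*$ of $C_{r,c,\gamma}$ bounds this sum within $O(\gamma/\epsilon\cdot n\|J\|_F)$ of $\text{Tr}[H_D\rho^*]$. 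For the lower bound, I would let $(r,c)\in\hat F_\gamma$ be the pair attaining $\hat V_\gamma$; then $\hat C_{r,c,\gamma}$ is feasible, hence so is $\hat C_{r,c,2\gamma}$ (looser slack), hence $\tilde C_{r,c,2\gamma}$ is feasible by Claim~\ref{claim-approxfeas}(2), hence $C_{r,c,2\gamma}$ is feasible by Claim~\ref{claim-compressionfeasible}. Picking a product state $\rho'$ feasible for $C_{r,c,2\gamma}$ and applying Lemma~\ref{lemma-feasibleconstraints} with parameter $2\gamma$ gives $\hat V_\gamma=\sum_{ijk}d^{ijk}r^{ijk}c^{ijk}\ge\text{Tr}[H_D\rho']-O(\gamma/\epsilon\cdot n\|J\|_F)\ge\text{Tr}[H_D\rho^*]-O(\gamma/\epsilon\cdot n\|J\|_F)$, where the last inequality is just optimality of $\rho^*$. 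Combining the two directions yields the claimed $O(\gamma/\epsilon\cdot n\|J\|_F)$ approximation of $\min_\rho\text{Tr}[H_D\rho]$ by $\hat V_\gamma$.

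The one place that needs care, and which I expect to be the main (though modest) obstacle, is the bookkeeping around Claim~\ref{claim-approxfeas}: one must check that the hypothesis $\delta<\gamma A^{-1}/4$ is exactly the regime in which the claim applies (with $\gamma=O(\epsilon^2)$ and $A=2^{O(\epsilon^{-2})}$ this forces the sampling accuracy $\delta=2^{-O(\epsilon^{-2})}$, which is what the algorithm of Section~\ref{section-gsptas-dense} will be required to achieve), and that the chain of implications $\hat C_{r,c,\gamma}\Rightarrow\hat C_{r,c,2\gamma}\Rightarrow\tilde C_{r,c,2\gamma}\Rightarrow C_{r,c,2\gamma}$ loses only a constant factor in the slack, so that Lemma~\ref{lemma-feasibleconstraints} still yields an $O(\gamma/\epsilon\cdot n\|J\|_F)$ error rather than a worse bound. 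Once this is confirmed, choosing $\gamma=O(\epsilon^2)$ (using the width $s=O(\epsilon^{-2})$ so that $O(\gamma/\epsilon)=O(\epsilon)$, as in Corollary~\ref{corollary-Vadditive}) and absorbing the constant into a rescaled $\epsilon$ gives $\big|\hat V_{O(\epsilon^2)}-\min_{\rho}\text{Tr}[H\rho]\big|\le\epsilon n\|J\|_F$, which is the statement. Everything else is a direct reprise of Lemma~\ref{lemma-Vadditive1}.
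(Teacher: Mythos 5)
Your proof is correct and follows essentially the same argument as the paper's: the upper bound via Claim~\ref{claim-approxfeas}(1) is identical, and for the lower bound you route through Claim~\ref{claim-approxfeas}(2) and Claim~\ref{claim-compressionfeasible} before applying Lemma~\ref{lemma-feasibleconstraints} at slack $2\gamma$, while the paper re-derives the magnetization bound $(\gamma+A\delta)n\le 2\gamma n$ explicitly and invokes Lemma~\ref{closenesslemma} --- the same computation packaged differently. Setting $\gamma=O(\epsilon^2)$ and passing from $H_D$ to $H$ via Theorem~\ref{theorem-psregularity}, as in Corollary~\ref{corollary-Vadditive}, then gives the stated bound.
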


\begin{proof}
    Once again let us address the two directions. By Claim \ref{claim-approxfeas} (1), the upper bound on $\hat{V}_\gamma$ is straightforward: If $\rho^*$ is the minimum energy product state of $H_D$, then there exists $r,c$ s.t. both $C_{r,c , \gamma}$ and $\hat{C}_{r,c , \gamma}$ are feasible. Then,  $\hat{V}_\gamma\leq  \text{Tr}[H_D \rho^*] +  O(\gamma/ \epsilon \cdot  n \|J\|_F  )$. 
    
    Conversely, if $r, c$ minimize $\hat{V}_\gamma$, then let $\rho_A$ be a product state on $A$ qudits, feasible for $\hat{C}_{r, c, \gamma}$. We note that by copying the product state assignment of a given partition $a\in [A]$ to all the qudits in $\mathcal{A}_a$, one can naturally define a product state $\rho$. Let $r', c'$ be the true average magnetizations of $\rho$ on the subsets $\{R^{ijk}, L^{ijk}\}$. As in the proof of Claim \ref{claim-approxfeas}, 

\begin{equation}
    |r^{ijk} - r'^{ijk}| \leq \gamma \cdot n + \bigg|\sum_{a:\mathcal{A}_a\subset R^{ijk}} \bigg(|\hat{\mathcal{A}}_a|- |\mathcal{A}_a|\bigg)\cdot \alpha_i^a\bigg| \leq (\gamma + A\cdot \delta)\cdot n \leq 2\gamma \cdot n,
\end{equation}

\noindent for all $i, j \in [d^2], k\in [s]$. By Lemma \ref{closenesslemma}, we conclude $\hat{V}_\gamma \geq \text{Tr}[H_D\rho^*] - O(\gamma/\epsilon n \|J\|_F)$. 
\end{proof}

\begin{claim} \label{claim-estimatesizes}
    Given an implicit description of the cut decomposition $H_D$ of $H$, one can find estimates $|\hat{\mathcal{A}}_a|$, $a\in [A]$ for the sizes of all the partitions in the coarsest partition of the cut decomposition in time $2^{O(1/\epsilon^2)}\cdot O(\gamma^{-2}\log 1/\delta)$, such that with probability $1-\delta$
    \begin{equation*}
        \big| |\hat{\mathcal{A}}_a| - |\mathcal{A}_a|\big|\leq \frac{\gamma}{4A}\cdot n \text{ for all } a\in [A]
    \end{equation*}
\end{claim}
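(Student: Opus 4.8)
The plan is a direct sampling argument built on the implicit decision-tree data structure of \cite{Frieze1999QuickAT} described at the end of subsection~\ref{subsection-compression}. Recall that this data structure lets us determine, for any vertex $v\in[n]$, which part $\mathcal{A}_a$ of the coarsest partition contains $v$ in $\text{poly}(1/\epsilon)$ time, without ever materializing the (potentially size-$n$) sets $R^{ijk}, L^{ijk}$. First I would fix a sample size $N = \Theta(A^2\gamma^{-2}\log(A/\delta))$, draw $N$ independent uniformly random vertices $v_1,\dots,v_N\in[n]$ (each in $O(1)$ probes), query the decision tree on each to obtain its part index $a_t\in[A]$, and output $|\hat{\mathcal{A}}_a| = \tfrac{n}{N}\bigl|\{t : a_t=a\}\bigr|$ for every $a\in[A]$.

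For correctness, fix $a\in[A]$ and write $p_a = |\mathcal{A}_a|/n$. The count $\bigl|\{t : a_t=a\}\bigr|$ is a sum of $N$ i.i.d.\ Bernoulli random variables with mean $Np_a$, so by Hoeffding's inequality $\bigl| |\hat{\mathcal{A}}_a|/n - p_a \bigr| > \tfrac{\gamma}{4A}$ holds with probability at most $2\exp(-N\gamma^2/(8A^2))$. With our choice of $N$ this is at most $\delta/A$, and a union bound over the $A$ parts yields that $\bigl| |\hat{\mathcal{A}}_a| - |\mathcal{A}_a| \bigr| \leq \tfrac{\gamma}{4A}\cdot n$ simultaneously for all $a\in[A]$ with probability at least $1-\delta$, which is exactly the claimed guarantee.

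For the runtime, the dominant cost is the $N$ decision-tree queries at $\text{poly}(1/\epsilon)$ each, plus $O(N+A)$ time to tally the hits and emit the estimates. Since $A = 2^{d^4 s} = 2^{O(1/\epsilon^2)}$ and $\log(A/\delta) = O(1/\epsilon^2) + \log(1/\delta)$, we get $N = 2^{O(1/\epsilon^2)}\cdot O(\gamma^{-2}\log(1/\delta))$ (the additive $\text{poly}(1/\epsilon)$ contribution of $\log A$ being absorbed into the exponential factor), and multiplying by the $\text{poly}(1/\epsilon)$ per-query cost keeps the total at $2^{O(1/\epsilon^2)}\cdot O(\gamma^{-2}\log(1/\delta))$.

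There is no deep obstacle here; the only points that require a little care are that (i) the target additive accuracy $\tfrac{\gamma}{4A}\cdot n$ amounts to an \emph{exponentially small} relative accuracy $\gamma/(4A)$ on the fractional part sizes, which is why the sample complexity carries an $A^2 = 2^{O(1/\epsilon^2)}$ factor — still absorbed into the stated bound — and (ii) membership in a part must be accessed only through the implicit decision tree rather than explicit set descriptions, which the data structure of \cite{Frieze1999QuickAT} provides. If one prefers to sample without replacement, the identical bound follows from the analogous concentration inequality for hypergeometric sampling.
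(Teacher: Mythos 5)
Your argument is correct and follows essentially the same route as the paper: sample $\Theta(A^2\gamma^{-2}\log(A/\delta))$ uniform vertices, query the implicit decision-tree data structure of Frieze--Kannan to locate each vertex's part, estimate $|\mathcal{A}_a|$ by the empirical frequency, and conclude via Hoeffding plus a union bound over the $A$ parts. The only differences are cosmetic constants in the Hoeffding exponent.
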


\begin{proof}
    Let us recall that the number of partitions $A = 2^{O(1/\epsilon^2)}$, and that we can check which partition a given vertex $v$ is in, in time $\text{poly}(1/\epsilon)$. In this setting, sample $k$ vertices uniformly at random, and let $X^a_i$ be the indicator random variable set to 1 if $v\in \mathcal{A}_a$. The estimator $\frac{n}{k}\sum_i^k X^a_i$ is unbiased and via Hoeffding's inequality, 
    \begin{equation}
        \mathbb{P}\bigg[\big| \frac{n}{k}\sum_i^k X^a_i - |\mathcal{A}_a|\big| \geq  \frac{\gamma}{4A} \cdot n\bigg]\leq \exp\bigg[- \frac{\gamma^2}{16 A^2}k\bigg] \leq \frac{\delta}{A}
    \end{equation}
    If we pick $k = O(\frac{A^2}{\gamma^2}\log \frac{A}{\delta}) = 2^{O(1/\epsilon^2)}\cdot O(\gamma^{-2}\log 1/\delta)$. 
\end{proof}

Now that we have expressed the minimization over product states of the energy of $H$ as checking the feasibility of $|I_\gamma|^{d^4\cdot s}= 2^{O(\epsilon^{-2}\log 1/\epsilon)}$ convex programs $\hat{C}_{r, c, \gamma}$, we discuss how to solve them. We use a theorem by \cite{Bertsimas2002SolvingCP}, in the formulation of \cite{Bravyi2021OnTC}:

\begin{theorem}
[\cite{Bertsimas2002SolvingCP}\label{theorem-bertsimas}] Suppose $K\subset \mathbb{R}^m$ is a convex set, and $R, r \in \mathbb{R}$ and $y \in K$ are such that: $K$ is contained in the ball of radius $R$ centered at the origin, and, if $K$ is non-empty, $K$ contains the ball of radius $r$ centered at $y$. Assume $K$ has a separation oracle which is efficiently computable in time $T$. Then, with probability $1-2^{-\Omega(m)}$ we can compute a feasible point $x\in K$ using $\text{poly}(m, T)\cdot O(\log R/r)$ calls to the separation oracle. 
\end{theorem}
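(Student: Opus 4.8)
The plan is to solve the convex feasibility problem by a randomized cutting-plane scheme in the spirit of the ellipsoid method, but cutting through an (approximate) \emph{centroid} rather than using an ellipsoidal cut, following Bertsimas--Vempala. I would maintain a nested sequence of convex bodies $P_0 \supseteq P_1 \supseteq \cdots \supseteq K$, with $P_0$ the ball of radius $R$ about the origin (which contains $K$ by hypothesis). At step $t$ I compute an approximate centroid $z_t$ of $P_t$ and query the separation oracle at $z_t$: if $z_t \in K$ we are done and return it; otherwise the oracle supplies $a_t$ with $K \subseteq \{x : \langle a_t, x\rangle \le \langle a_t, z_t\rangle\}$, and we set $P_{t+1} = P_t \cap \{x : \langle a_t, x\rangle \le \langle a_t, z_t\rangle\}$. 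Since the returned halfspace contains $K$, the invariant $K \subseteq P_t$ is preserved by induction, and each $P_t$ is a polytope described by at most $t$ linear inequalities together with the radius-$R$ ball constraint.

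The volume analysis is the engine. If $z_t$ were the \emph{exact} centroid of $P_t$, Gr\"unbaum's inequality would give $\mathrm{vol}(P_{t+1}) \le (1-1/e)\,\mathrm{vol}(P_t)$. We cannot compute the centroid exactly, so instead we draw $N = \mathrm{poly}(m)$ approximately uniform samples from $P_t$ using a geometric random walk (hit-and-run or the ball walk) and take $z_t$ to be their empirical mean; a robust version of Gr\"unbaum's bound --- valid once $P_t$ is brought into near-isotropic position, which one maintains via an approximate isotropic transformation updated from the same samples --- shows that with probability $1 - 2^{-\Omega(m)}$ the cut still removes a constant fraction of the volume, $\mathrm{vol}(P_{t+1}) \le c\cdot \mathrm{vol}(P_t)$ for a fixed $c<1$. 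For termination: whenever $K\neq\emptyset$ it contains a ball of radius $r$, and since $K\subseteq P_t$ we always have $\mathrm{vol}(P_t)\ge r^m\,\mathrm{vol}(B_1)$; but $\mathrm{vol}(P_0)=R^m\,\mathrm{vol}(B_1)$ and the volume decays geometrically, so after $T^\ast = O(m\log(R/r))$ iterations we would reach $\mathrm{vol}(P_{T^\ast}) < r^m\,\mathrm{vol}(B_1)$, which is impossible unless $K=\emptyset$. Hence within $O(m\log(R/r))$ iterations we either output a feasible point or certify infeasibility.

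For the cost, there are $O(m\log(R/r))$ outer iterations; each needs $\mathrm{poly}(m)$ random-walk samples, each sample needs $\mathrm{poly}(m)$ walk steps to mix, and each walk step requires a membership test in $P_t$ (checking $O(m\log(R/r))$ linear inequalities plus the ball, i.e. $\mathrm{poly}(m)$ arithmetic) together with, once per iteration, a single separation-oracle call of cost $T$. Multiplying through gives $\mathrm{poly}(m, T)\cdot O(\log(R/r))$ oracle calls, and a union bound over the $O(m\log(R/r))$ Gr\"unbaum events keeps the overall failure probability at $2^{-\Omega(m)}$ once the per-step sample count is a sufficiently large polynomial in $m$.

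The main obstacle is making the random walk simultaneously fast-mixing and reusable: sampling $P_{t+1}$ from scratch would require a warm start, and the standard resolution is that (again by Gr\"unbaum) a constant fraction of the samples drawn from $P_t$ land inside $P_{t+1}$, so those points furnish a warm start for the next iteration and keep the per-iteration mixing time polynomial rather than potentially exponential. Carrying the near-isotropy bookkeeping through the iterations and establishing the quantitative robust Gr\"unbaum bound for the empirical centroid are the remaining technical points; the rest is routine. I would also remark that the deterministic ellipsoid method yields the same final bound ($O(m^2\log(R/r))$ iterations, each costing $\mathrm{poly}(m)+T$), so randomization is used here chiefly for robustness of the scheme to approximate separation oracles rather than out of necessity.
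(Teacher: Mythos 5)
The paper does not prove this statement; it cites it directly from Bertsimas and Vempala (2002). Your proposal is a faithful reconstruction of the Bertsimas--Vempala argument --- approximate-centroid cutting planes with volume decay via a robust Gr\"unbaum inequality, sampling by geometric random walks kept in near-isotropic position, and warm starts passed between successive bodies --- and so it matches the approach of the cited source; nothing to add.
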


In our case, we are dealing with $m=2^{O(\epsilon^{-2})}$ real variables and $2^{O(\epsilon^{-2})}$ constraints. We can thereby compute whether a given product state parametrized by a vector $\alpha$ is feasible for $\hat{C}_{r, c, O(\epsilon^{-2})}$ in time $T = 2^{O(\epsilon^{-2})}$. If $\alpha$ is not feasible for $\hat{C}_{r, c, O(\epsilon^{-2})}$, then it violates either a PSD constraint or a magnetization constraint, both of which have associated hyperplane witnesses which serve as the separation oracle. In appendix \ref{appendix-volume} we prove we can choose $R  = 2^{O(\epsilon^{-2})}$ and $r =O(\gamma)$. Applying Theorem \ref{theorem-bertsimas}, we can find the feasible set of pairs $F_{\gamma}$ in time $2^{O(\epsilon^{-2})}$ for each pair, for a total time of $2^{O(\epsilon^{-2}\log 1/\epsilon)}$. Once we include the runtime of constructing the decomposition $H_D$ as per Theorem \ref{theorem-psregularity}, and of defining the constraints $\hat{C}_{r, c, O(\epsilon^{-2})}$ as in Claim \ref{claim-estimatesizes}, we arrive at our main result in Theorem \ref{theorem-gsptasdense}. Finally, to actually output a pure product state, we expand our implicit solution $\rho=\otimes \rho_u$ into an explicit product of $n$ qudit density matrices and use the method of conditional expectations \cite{Vazirani2003ApproximationA} to extract a pure state from each qudit.

\section{The Vertex Sample Complexity of the Ground State Energy}
\label{section-vsc}

In this section, we study the vertex sample complexity of the ground state energy of local Hamiltonians of bounded interaction strengths. We prove that sample of all the interactions between a constant number of particles suffices to estimate the ground state energy up to a constant factor times $n^2$. Formally, the main result of this section is 

\begin{theorem} \label{theorem-vertexsample}
    Let $H = \sum_e H_e$ be a 2-Local Hamiltonian on $n$ qudits, of local dimension $d=O(1)$ and of bounded interaction strengths $\|H_e\|\leq 1$. Pick $Q\subset [n]$ to be a uniformly random sample of $q = \Omega(\epsilon^{-6} \log 1/\epsilon)$ of those qudits. Let $H_Q$ be the sum of interactions with support contained entirely in $Q$. Then, with probability $0.99$,
    
    \begin{equation}
        \bigg|\min_{\rho}\text{Tr}[H\rho]-\frac{n^2}{q^2}\min_{\rho_Q}\text{Tr}[H_Q\rho_Q]\bigg|\leq \epsilon \cdot n^2
    \end{equation}
    
\end{theorem}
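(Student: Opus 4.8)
Write $\mathrm{OPT}(H)=\min_\rho\mathrm{Tr}[H\rho]$ and $\mathrm{OPT}_{\mathrm{pr}}(H)=\min_{\rho=\otimes_u\rho_u}\mathrm{Tr}[H\rho]$, and similarly for $H_Q$ on the $q$ qudits of $Q$. The claimed inequality splits into two bounds. For the upper bound $\tfrac{n^2}{q^2}\mathrm{OPT}(H_Q)\le\mathrm{OPT}(H)+\tfrac{\epsilon}{2}n^2$, let $\psi$ be a ground state of $H$ and $\rho_Q=\mathrm{Tr}_{[n]\setminus Q}[\psi]$; since each $H_e$ with $e\subseteq Q$ acts only inside $Q$, $\mathrm{Tr}[H_Q\rho_Q]=\sum_{e\subseteq Q}\mathrm{Tr}[H_e\psi]$, a function of the random set $Q$ with mean $\tfrac{q(q-1)}{n(n-1)}\mathrm{OPT}(H)$ and bounded differences $\le q$ (swapping one vertex of $Q$ affects at most $q$ incident pairs, each of norm $\le 1$). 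A bounded-differences inequality for sampling without replacement gives $\bigl|\mathrm{Tr}[H_Q\rho_Q]-\tfrac{q(q-1)}{n(n-1)}\mathrm{OPT}(H)\bigr|=O(q^{3/2})$ with probability $\ge.999$, and since $|\mathrm{OPT}(H)|\le m\le n^2$, $q=\Omega(\epsilon^{-6}\log 1/\epsilon)$, and $\mathrm{OPT}(H_Q)\le\mathrm{Tr}[H_Q\rho_Q]$, rescaling by $n^2/q^2$ gives the bound. This is the intuitive half.

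\textbf{Reducing the lower bound to a product-state certificate.} For $\tfrac{n^2}{q^2}\mathrm{OPT}(H_Q)\ge\mathrm{OPT}(H)-O(\epsilon n^2)$ we first pass to product states: Theorem \ref{theorem-BHgeneral}, applied to the asymmetric interaction-strength arrays of $H$ and of $H_Q$, gives $|\mathrm{OPT}(H)-\mathrm{OPT}_{\mathrm{pr}}(H)|=O(n^{1/3}m^{2/3})=O(n^{5/3})$ and $|\mathrm{OPT}(H_Q)-\mathrm{OPT}_{\mathrm{pr}}(H_Q)|=O(q^{1/3}m_Q^{2/3})=O(q^{5/3})$, both $\le\epsilon n^2$ after the $n^2/q^2$ rescaling since $n\ge q=\Omega(\epsilon^{-6})$; so it suffices to show $\tfrac{n^2}{q^2}\mathrm{OPT}_{\mathrm{pr}}(H_Q)\ge\mathrm{OPT}_{\mathrm{pr}}(H)-O(\epsilon n^2)$. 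Invoke the Hamiltonian cut decomposition $H_D$ of $H$ (Lemma \ref{lemma-hregularity}) of width $s=O(\epsilon^{-2})$, with cut-sets $R^{ijk},L^{ijk}$, error matrices $W^{ij}=J^{ij}-D^{ij}$ of cut norm $O(\epsilon n^2)$, and coefficients $d^{ijk}$. By Corollary \ref{corollary-Vadditive} and Lemma \ref{lemma-feasibleconstraints}, $\mathrm{OPT}_{\mathrm{pr}}(H)=\min_{(r,c)\in F_\gamma}\sum_{ijk}d^{ijk}r^{ijk}c^{ijk}\pm O(\epsilon n^2)$, with $\gamma=O(\epsilon^2)$ and $F_\gamma\subseteq(I_\gamma)^{d^4 s}$ the set of magnetization guesses for which the convex constraint set $C_{r,c,\gamma}$ on the single-qudit Bloch vectors $\{\alpha^u\}_{u\in[n]}$ is feasible. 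The restriction of $H_D$ to the edges inside $Q$ keeps the coefficients $d^{ijk}$ on the sets $R^{ijk}\cap Q,L^{ijk}\cap Q$, and the same argument applied to $H_Q$ — using that $W^{ij}$ restricted to the random principal submatrix on $Q$ has cut norm $O(\epsilon q^2)$ with probability $\ge.99$, by the testability of the cut norm \cite{Alon2002RandomSA} — gives $\mathrm{OPT}_{\mathrm{pr}}(H_Q)=\min_{(r',c')\in F^Q_\gamma}\sum_{ijk}d^{ijk}r'^{ijk}c'^{ijk}\pm O(\epsilon q^2)$, with $F^Q_\gamma$ the feasible guesses for the analogous constraints $C^Q_{r',c',\gamma}$ on $\{\alpha^u\}_{u\in Q}$. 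Hence it remains to show the guess-sets match: $\{(r,c)\in F_\gamma\}$ and $\{\tfrac nq(r',c'):(r',c')\in F^Q_{\gamma/2}\}$ agree coordinatewise to $O(\gamma n)$, with probability $\ge.99$ over $Q$.

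\textbf{The crux: matching the feasibility sets.} One inclusion is immediate: restricting a point feasible for $C_{r,c,\gamma}$ to $Q$ and using $|R^{ijk}\cap Q|=\tfrac qn|R^{ijk}|\pm\gamma q$ (Hoeffding plus a union bound over the $O(\epsilon^{-2})$ cut-sets) shows $C^Q_{\frac qn r,\frac qn c,2\gamma}$ is feasible. For the converse, suppose $C_{r,c,\gamma}$ is \emph{infeasible}; conic LP duality over the product $K^n$ of Bloch bodies produces $O(\epsilon^{-2})$ nonnegative multipliers $\mu=\{\mu^{ijk}_\pm\}$ on the magnetization inequalities with $\sum_u\max_{\alpha\in K}\langle c_u(\mu),\alpha\rangle$ strictly below its (affine in $\mu$) right-hand side by margin $\Omega(\gamma n\|\mu\|_1)$. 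Vertices in the same part of the coarsest common refinement $\{\mathcal{A}_a\}$ share $c_u(\mu)$, so the left-hand side equals $\sum_a|\mathcal{A}_a|\,m_a$ with $m_a:=\max_{\alpha\in K}\langle c_a(\mu),\alpha\rangle\ge0$ and $\sum_a|\mathcal{A}_a|m_a=O(n\|\mu\|_1)$; restricting to $Q$ replaces $|\mathcal{A}_a|$ by $|\mathcal{A}_a\cap Q|$, and a Bernstein bound weighted by the $m_a$ gives $\bigl|\sum_a(|\mathcal{A}_a\cap Q|-\tfrac qn|\mathcal{A}_a|)m_a\bigr|\le\gamma q\|\mu\|_1$ with failure probability $\exp(-\Omega(\gamma^2 q))$, so the same $\mu$ certifies infeasibility of $C^Q_{\frac qn r,\frac qn c,\gamma/2}$. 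Union-bounding over the $2^{\tilde{O}(\epsilon^{-2})}$ guesses $(r,c)$ requires $\gamma^2 q=\tilde{\Omega}(\epsilon^{-2})$, i.e.\ $q=\tilde{\Omega}(\gamma^{-2}\epsilon^{-2})=\tilde{\Omega}(\epsilon^{-6})$ — exactly the hypothesis. Combining the two inclusions with the estimates of the previous paragraph and rescaling $\epsilon$ by a constant completes the proof.

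\textbf{Main obstacle.} The heart of the argument is the converse half of the crux: showing that the $\mathrm{poly}(1/\epsilon)$-size dual certificates of product-state infeasibility survive random sub-sampling. The naive route — estimating the $2^{O(\epsilon^{-2})}$ sizes of the parts of the coarsest common refinement — would force $q$ exponential in $1/\epsilon$; the point of the \cite{Alon2002RandomSA}-style argument is that only the $O(\epsilon^{-2})$ cut-sets, together with a Bernstein bound weighted by the nonnegative, $\ell_1$-controlled dual coefficients, are needed. Executing this requires replacing the finite-alphabet LP duality of \cite{Alon2002RandomSA} by conic duality over the Bloch body, and tracking constants so that the sample size lands at $\Theta(\epsilon^{-6}\log 1/\epsilon)$; a secondary technical point is the testability of the cut norm of the fixed error matrices $W^{ij}$ under random principal submatrices, needed to relate $\mathrm{OPT}_{\mathrm{pr}}(H_Q)$ to the restricted decomposition.
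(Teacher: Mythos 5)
Your proposal is correct and, on the crux, follows the paper's own route: reduce to product states via Theorem \ref{theorem-BHgeneral}, restrict the cut decomposition to $Q$ using the random-submatrix cut-norm bound of \cite{Alon2002RandomSA} (as in Lemma \ref{lemma-sampleregularity}), transfer feasibility of the magnetization programs by Hoeffding in one direction, and in the other direction sample the duality-based infeasibility witness whose per-vertex coefficients are $\ell_1$-controlled (exactly the paper's Claim \ref{claim-samplelowerbound}, where $|\mu_u|\le\|\lambda\|_1$), with the union bound over the $2^{\tilde{O}(\epsilon^{-2}\log 1/\epsilon)}$ guesses forcing $q=\tilde{\Omega}(\epsilon^{-6})$. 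The only deviations are minor: you handle the upper-bound direction directly through the ground state's reduced marginal on $Q$ plus concentration for sampling without replacement (the paper's stated intuition, while its formal proof runs both directions through the constraint machinery via Claim \ref{claim-sampleupperbound}), and you phrase the infeasibility certificate via conic duality over the Bloch body rather than the paper's explicit SDP Lagrangian with PSD and trace multipliers — both are equivalent and sound.
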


Let us overview the proof of this theorem, which we detail in subsection \ref{subsection-prooftheoremsample}. In order to prove that the sampled estimate for the ground state energy is accurate, we follow a sequence of reductions. We begin by arguing that the ground state energy is close, up to an additive error $\epsilon\cdot n^2$ to the un-entangled, minimum product state energy of the Hamiltonian using the product state approximations of Theorem \ref{theorem-BHgeneral}. This holds on both the original Hamiltonian, and for the ground state and variational energies of the sampled Hamiltonian, so long as the sample size is sufficiently large. Next, we reduce these variational (product state) problems to their corresponding versions on cut decomposed Hamiltonians. To do so, we rely crutially on a lemma by \cite{Alon2002RandomSA} on the cut norm of random sub-matrices of a matrix of bounded cut, to argue that the error of the restriction of the cut decomposition to the sample $Q$ is still an accurate cut decomposition. That is, with constant probability the energy of any product state on $q$ qubits is close, whether on the restriction of $H$, $H_Q$ or the restriction of $H_D$, $H_{D_Q}$. 

At this point, we have reduced the problem to that of the vertex sample complexity of the variational minimum energy of the Hamiltonian cut decomposition $H_D$. Here, we extensively leverage the machinery of the subset magnetization constraints $C_{r, c, \gamma}$ developed in the previous section \ref{section-constraints}. To proceed, we follow the proof techniques of \cite{Alon2002RandomSA} and \cite{Jain2019MeanfieldAC}, in studying the properties of randomized restrictions of convex programs. We begin by arguing the easy direction: that if the set of constraints $C_{r, c, \gamma}$ is feasible, then with constant probability, a sample of a feasible point of $C_{r, c, \gamma}$ is approximately feasible for the sampled constraints. Therefore, the variational minimum energy of the sampled constraints is less than that of the original cut decomposed Hamiltonian, with constant probability. The converse is more tricky: we use convex programming duality to argue that if $C_{r, c, \gamma}$ is in-feasible, then a set of lagrange multipliers acts as a witness to its in-feasibility, and we sample from this witness to argue the in-feasibility of the sampled constraints. This infeasibility relation implies that the variational energy of the sample is, with constant probability, not much less than the true variational energy of $H_D$.

\subsection{Proof of Theorem \ref{theorem-vertexsample}}
\label{subsection-prooftheoremsample}

Let us begin by relating the ground state energy with  variational (product state) minimum energy of the Hamiltonian, on both the original system and the sample. To do so, we use the asymetric extension to the product state approximations of \cite{Brando2013ProductstateAT}, presented in Theorem \ref{theorem-BHgeneral} and simplified here:

\begin{corollary} \label{corollary-bhdense}
Given a 2-Local Hamiltonian $H=\sum_e H_e$ on $n$ qudit particles and $m$ interactions of strength bounded by $\|H_e\|_\infty\leq 1$, there exists a product state $\sigma$ such that
\begin{equation}
    \text{Tr}[H\sigma]\leq \lambda_{min}(H) + O(n^{1/3}m^{2/3})
\end{equation}
\end{corollary}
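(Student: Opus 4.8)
The plan is to obtain this statement as an immediate specialization of Theorem~\ref{theorem-BHgeneral} to the case $k=2$, applied to the ground state of $H$. Let $|\phi_0\rangle$ be a unit eigenvector of $H$ of eigenvalue $\lambda_{min}(H)$ and set $\rho = |\phi_0\rangle\langle\phi_0|$, so that $\text{Tr}[H\rho] = \lambda_{min}(H)$. Since $\|H_e\|_\infty \le 1$, the array of interaction strengths $J$ satisfies $|J|_1 = \sum_e \|H_e\|_\infty \le m$ and $\|J\|_F = \big(\sum_e \|H_e\|_\infty^2\big)^{1/2} \le \sqrt{m}$. Hence, for every integer $1 \le l < n$, Theorem~\ref{theorem-BHgeneral} produces a globally separable state $\sigma_l$ with $\big|\text{Tr}[H(\rho - \sigma_l)]\big| \le \delta_l = O\big(\tfrac{l}{n} m + \tfrac{n}{\sqrt{l}}\sqrt{m}\big)$.

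Next I would pass from a separable state to a product state by an averaging argument. Write $\sigma_l = \sum_i p_i |\psi_i\rangle\langle\psi_i|$ as a convex combination of pure product states $|\psi_i\rangle = \otimes_u |\psi_{i,u}\rangle$ (each single-qudit marginal of a separable state is a density matrix, which decomposes into pure states, and one takes the induced product distribution over tensor factors). Then $\sum_i p_i \langle\psi_i|H|\psi_i\rangle = \text{Tr}[H\sigma_l] \le \lambda_{min}(H) + \delta_l$, so some $|\psi_{i^\star}\rangle$ satisfies $\langle\psi_{i^\star}|H|\psi_{i^\star}\rangle \le \lambda_{min}(H) + \delta_l$; take $\sigma = |\psi_{i^\star}\rangle\langle\psi_{i^\star}|$.

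It remains to tune $l$. Balancing the two terms of $\delta_l$ suggests $l \asymp n^{4/3}m^{-1/3}$: when $m$ exceeds a large enough constant multiple of $n$ (say $m \ge 100\,n$) one has $n^{4/3}m^{-1/3} < n/2$, so $l = \lceil n^{4/3}m^{-1/3}\rceil$ is a legal choice in $[1,n)$ and yields $\delta_l = O(n^{1/3}m^{2/3})$. In the complementary regime $m = O(n)$ one does not even need Theorem~\ref{theorem-BHgeneral}: $\|H\|_\infty \le \sum_e \|H_e\|_\infty \le m$, so any product state $\sigma$ has $\text{Tr}[H\sigma] \le \lambda_{min}(H) + 2m$, and $2m = O(n^{1/3}m^{2/3})$ precisely because $m = O(n)$. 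Combining the two cases proves the corollary. I do not expect a genuine obstacle here: the whole argument is downstream of Theorem~\ref{theorem-BHgeneral}, and the only steps requiring a moment's care are the separable-to-product reduction and checking that the optimizing $l$ may be taken to be an integer in the admissible range $1 \le l < n$.
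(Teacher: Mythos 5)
Your proof is correct and matches the paper's intended derivation: the corollary is obtained by specializing Theorem~\ref{theorem-BHgeneral} to $k=2$ applied to the ground state, bounding $|J|_1 \le m$ and $\|J\|_F \le \sqrt{m}$, optimizing $l \asymp n^{4/3}m^{-1/3}$ in $\delta_l$ (exactly the $\min_l \delta_l = O(n^{1/3}m^{2/3})$ computation the paper sketches in the remark following that theorem), and passing from a separable state to a pure product state by convexity/averaging. Your handling of the edge case $m = O(n)$ and the integrality of $l$ is a bit more careful than the paper spells out, but it is the same argument.
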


Given a uniformly random sample of vertices $Q\subset [n]$ of size $q$, let $H_Q$ be the sub-Hamiltonian of the interactions of the induced sub-graph $G[Q]$. From the corollary and $m\leq n^2$ we see
\begin{gather}
   \min_{\text{product states } \sigma} \text{Tr}[H\sigma] - \min_{\sigma} \text{Tr}[H\sigma] \leq  n^{5/3} \text{ and } \\
   \frac{n^2}{q^2}\cdot \bigg(\min_{\text{product states } \sigma_Q} \text{Tr}_Q[H_Q\sigma_Q] - \min_{\sigma} \text{Tr}[H_Q\sigma_Q]\bigg) \leq  n^2/ q^{1/3}
\end{gather}

Now that we have turned our attention to product states, we can use the Hamiltonian cut decomposition of Theorem \ref{theorem-psregularity} to further simplify the variational problem.  However, we require a guarantee on how accurate the cut decomposition is when we sample a random $Q\times Q$ sub-matrix of it. \cite{Alon2002RandomSA} proved the following theorem on the cut norm of random sub matrices of the cut decomposition

\begin{theorem}[\cite{Alon2002RandomSA}\label{theorem-submatrixcutnorm}]
Let $W$ be an $n\times n$ matrix, with bounded vector norms $\|W\|_{\infty} = O(\epsilon^{-1}), \|W\|_{\infty\rightarrow 1}\leq \epsilon n^2$, and $\|W\|_{F}\leq O(n^2)$. Suppose $Q$ is a random subset of $[n]$ of size $q= \Omega(\delta^{-5}\epsilon^{-4}\log 1/\epsilon)$, and let $W_Q$ be the sub-matrix defined by the restriction of $W$ to $Q$. Then, with probability $1-\delta$, $\|W_Q\|_{\infty\rightarrow 1} \leq O(\epsilon/\sqrt{\delta}\cdot q^2)$, $\|W_Q\|_{F}\leq O(q^2/\sqrt{\delta})$.
\end{theorem}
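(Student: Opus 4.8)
The claim bundles two conclusions, a Frobenius bound and a cut-norm bound, and the plan is to dispatch the first quickly and spend the real effort on the second. For the Frobenius bound I would compute the second moment $\mathbb{E}_Q\!\left[\|W_Q\|_F^2\right]=\sum_{u,v}W_{uv}^2\cdot\mathbb{P}[u,v\in Q]$, which equals $\tfrac{q(q-1)}{n(n-1)}\sum_{u\neq v}W_{uv}^2+\tfrac{q}{n}\sum_u W_{uu}^2$; feeding in $\|W\|_F=O(n^2)$ and the entrywise bound implied by $\|W\|_\infty=O(\epsilon^{-1})$ gives $\mathbb{E}_Q[\|W_Q\|_F^2]=O(q^4+q n\epsilon^{-2})$, and Markov's inequality yields $\|W_Q\|_F\le O(q^2/\sqrt{\delta})$ with probability $1-\delta/2$ (in fact the entrywise bound alone gives $\|W_Q\|_F\le\epsilon^{-1}q=O(q^2)$ for our range of $q$, so this part carries essentially no content). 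All of the difficulty therefore sits in the cut-norm estimate.

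For the cut norm I would follow Alon--de la Vega--Kannan--Karpinski. Using $\|W_Q\|_{\infty\to 1}\le 4\max_{S,T\subseteq Q}\bigl|\sum_{u\in S,v\in T}W_{uv}\bigr|$ and treating $W$ and $-W$ symmetrically, it suffices to control $\Phi(Q):=\max_{x,y\in\{0,1\}^Q}\sum_{u,v\in Q}W_{uv}x_uy_v$. The right intuition comes from Poissonising: if each index is kept with probability $p=q/n$, then for a \emph{fixed} pair $x,y$ the expectation of this bilinear form is $p^2\sum_{u,v}W_{uv}x_uy_v$ up to an $O(p)$ diagonal correction, and since $\sum_{u,v}W_{uv}x_uy_v\le\|W\|_{\infty\to1}\le\epsilon n^2$, the "signal" has size $p^2\epsilon n^2+O(p\cdot n\epsilon^{-1})=O(\epsilon q^2)$ as soon as $q=\Omega(\epsilon^{-2})$. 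The crucial structural point is that both endpoints of every interaction lie in the sample, so one gains \emph{two} factors of $p=q/n$ against the single factor $n^2$ in the cut-norm hypothesis; this is precisely why a sample of fixed polynomial size, independent of $n$, can work at all.

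The genuine obstacle — and the reason the sample size must be a fixed polynomial $q=\Omega(\delta^{-5}\epsilon^{-4}\log(1/\epsilon))$ rather than something depending on $n$ — is that the maximising cut $(S,T)$ is data-dependent: it is a subset of the random set $Q$, so one cannot union-bound over the $2^{2n}$ cuts of $[n]$, and the crude Lipschitz estimate (changing one sampled index perturbs $\Phi$ by $O(q\epsilon^{-1})$, since a row of $W$ restricted to $Q$ has $\ell_1$ mass $O(q\epsilon^{-1})$) forces a deviation of order $q^2\epsilon^{-1}$, far above the $\epsilon q^2$ target. The fix I would import from \cite{Alon2002RandomSA} is a decoupling-and-chaining argument: decouple the two appearances of the sampling indicators so that the row side ranges over subsets of $Q$ and the column side over subsets of an independent copy $Q'$; condition on $Q'$ and reduce the inner maximisation to the $\ell_1$-norm of a random coordinate-restriction of a vector of the form $Wy$ with $y$ supported on $Q'$, whose total $\ell_1$ mass is at most $\|W\|_{\infty\to1}$; and then invoke a Bernstein-type tail bound whose variance proxy is governed by $\|W\|_F^2$ (together with the entrywise control) rather than by the crude $\ell_\infty$ Lipschitz bound, so that a union bound over the $2^{O(q)}$ relevant cuts can be absorbed. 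Balancing the sub-Gaussian/sub-exponential decay against that $2^{O(q)}$ entropy term is what pins down the $\epsilon^{-4}$, and a final Markov step applied to the resulting bound on $\mathbb{E}[\Phi(Q)]$ (or on $\mathbb{E}[\Phi(Q)^2]$) produces the $\delta^{-1/2}$ in the conclusion, with the $\delta^{-5}$ slack in $q$ coming from the same estimates. I expect this last balancing of metric-entropy against tail behaviour to be the main technical point; everything else is bookkeeping with the four norm hypotheses on $W$ and a routine transfer from the Poissonised model to the exact-size-$q$ sample.
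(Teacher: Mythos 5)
The paper does not prove this statement: it is imported directly from \cite{Alon2002RandomSA} and used as a black box (the $k$-dimensional analogue is likewise cited in Appendix~\ref{section-regularityextensions} as Theorem~6 of that work), so there is no in-paper proof for your argument to be compared against; any adjudication must be made against the original reference.

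On its own terms, your sketch dispatches the Frobenius bound correctly — indeed the entrywise bound $|W|_\infty = O(\epsilon^{-1})$ already gives $\|W_Q\|_F \leq O(q\epsilon^{-1}) \leq O(q^2)$ deterministically once $q \geq \epsilon^{-1}$, so the $\delta$ dependence there is vacuous — and it correctly isolates the real obstacle on the cut-norm side: the maximising pair $(S,T)$ is a function of the random sample $Q$, so one cannot union-bound over all cuts of $[n]$, and the naive $\ell_\infty$-Lipschitz estimate loses a factor of $\epsilon^{-2}$ against the target. Decoupling the row and column sampling indicators, conditioning, and replacing the Lipschitz bound with a Bernstein-type variance proxy controlled by $\|W\|_F$ and $|W|_\infty$ before union-bounding over the $2^{O(q)}$ cuts of the sample is the correct flavour of argument. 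But the step you yourself flag as the crux — actually writing down the decoupled bilinear form, the conditional tail bound, and the entropy-versus-tail balancing that pins down $q=\Omega(\delta^{-5}\epsilon^{-4}\log 1/\epsilon)$ and produces the $\delta^{-1/2}$ in the conclusion — is asserted as intuition rather than carried out, and that balancing is precisely where the content of the theorem lives. As written this is a faithful high-level road-map, not a proof; to close it you would need to state the exact tail bound used (and for which independent model, Poissonised or exact-size), verify the variance computation against the three norm hypotheses, and confirm the exponents survive the transfer back to the without-replacement sample.
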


We use the theorem above to prove a bound on the accuracy of the cut decomposition on the sampled sub-graph. If $H_D$ is the cut decomposition of Theorem \ref{theorem-psregularity}, let $H_{D_Q}$ be the interactions of $H_D$ of support contained entirely in $Q\subset V$.

\begin{lemma} \label{lemma-sampleregularity}
Let $H_{D_Q}$ be the sub-Hamiltonian of the decomposition $H_D$ of support only in the random set $Q\subset V$. Then with probability $1-\delta$ over the choice of $Q$, for every product state $\rho_Q$ on $Q$, 
\begin{equation}
   \bigg| \text{Tr}[(H_Q-H_{D_Q})\rho]\bigg|\leq O(\epsilon/\sqrt{\delta}\cdot q^2),
\end{equation}
so long as $q= \Omega(\delta^{-5}\epsilon^{-4}\log 1/\epsilon)$.
\end{lemma}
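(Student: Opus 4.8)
The plan is to reduce the claim to the Pauli-basis error matrices of the Frieze--Kannan cut decompositions defining $H_D$, and then invoke Theorem~\ref{theorem-submatrixcutnorm}. Recall from the construction of $H_D$ that for each pair of Pauli indices $(i,j)\in[d^2]^2$ the interaction matrix was decomposed as $J^{ij}=D^{ij}+W^{ij}$, so that $H-H_D$ has, in its Pauli-$(i,j)$ block, the off-diagonal matrix $W^{ij}$ (the diagonal being filtered, since neither $H$ nor $H_D$ has self-edges). Restricting every two-body term of $H_D$ to $Q$ simply restricts each cut matrix to $Q\times Q$, so $H_Q-H_{D_Q}$ has Pauli-$(i,j)$ block equal to $W^{ij}_Q$, the $Q\times Q$ submatrix of $W^{ij}$. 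Hence, for any product state $\rho_Q=\otimes_{u\in Q}\rho_u$ with Pauli coordinates $\alpha^u_i=\text{Tr}[\sigma^i_u\rho_u]\in[-1,1]$, the energy decouples exactly as in the proof of Theorem~\ref{theorem-psregularity}:
\[
\big|\text{Tr}[(H_Q-H_{D_Q})\rho_Q]\big|=\Big|\tfrac12\sum_{i,j}\sum_{u\neq v\in Q}W^{ij}_{uv}\alpha^u_i\alpha^v_j\Big|\le \tfrac12\sum_{i,j}\Big(\|W^{ij}_Q\|_{\infty\to1}+\sum_{u\in Q}|W^{ij}_{uu}|\Big),
\]
where the last step re-introduces the diagonal in order to form the $\infty\to1$ norm.

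Next I would verify that each $W^{ij}$ meets the normalization hypotheses of Theorem~\ref{theorem-submatrixcutnorm}. Since every $|J^{ij}_{uv}|\le1$ we have $\|J^{ij}\|_F\le n$, and applying Theorem~\ref{theorem-frieze} to $J^{ij}$ with accuracy $\epsilon/d^4$ (width $s=O(d^8\epsilon^{-2})=O(\epsilon^{-2})$) gives $\|W^{ij}\|_{\infty\to1}\le \epsilon n\|J^{ij}\|_F/d^4\le \epsilon n^2$, $\|W^{ij}\|_F\le\|J^{ij}\|_F\le n=O(n^2)$, and $|W^{ij}_{uv}|\le|J^{ij}_{uv}|+\sum_k|d^{ijk}|\le 1+s^{1/2}\|J^{ij}\|_F/n=O(\epsilon^{-1})$ by Cauchy--Schwarz on the coefficient length, i.e.\ $\|W^{ij}\|_\infty=O(\epsilon^{-1})$. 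Applying Theorem~\ref{theorem-submatrixcutnorm} to each of the $d^4=O(1)$ matrices $W^{ij}$ with failure probability $\delta/d^4$ and union bounding, we obtain $\|W^{ij}_Q\|_{\infty\to1}\le O(\epsilon/\sqrt\delta\cdot q^2)$ for all $i,j$ simultaneously with probability $1-\delta$, provided $q=\Omega(\delta^{-5}\epsilon^{-4}\log 1/\epsilon)$.

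Finally I would dispose of the diagonal terms: $|W^{ij}_{uu}|=|D^{ij}_{uu}|\le\sum_k|d^{ijk}|\le s^{1/2}\big(\sum_k (d^{ijk})^2\big)^{1/2}\le s^{1/2}\|J^{ij}\|_F/n\le s^{1/2}=O(\epsilon^{-1})$, so $\sum_{u\in Q}|W^{ij}_{uu}|=O(q/\epsilon)$, and this is $O(\epsilon q^2/\sqrt\delta)$ because the hypothesis $q=\Omega(\delta^{-5}\epsilon^{-4}\log 1/\epsilon)$ in particular forces $q\ge\Omega(\sqrt\delta\,\epsilon^{-2})$. Summing both contributions over the $O(1)$ colors yields the claimed bound $O(\epsilon/\sqrt\delta\cdot q^2)$. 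I expect the only mildly delicate points to be (i) checking the normalization hypotheses of Theorem~\ref{theorem-submatrixcutnorm}, in particular the entrywise bound $\|W^{ij}\|_\infty=O(\epsilon^{-1})$, which relies on the coefficient-length guarantee of the cut decomposition, and (ii) confirming the lower bound on $q$ is large enough to swallow the diagonal self-edge corrections; the remainder is a direct repeat of the decoupling argument already used for Theorem~\ref{theorem-psregularity}. $\qed$
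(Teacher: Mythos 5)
Your proof is correct and follows essentially the same route as the paper: decouple the product-state energy into Pauli blocks, apply Theorem~\ref{theorem-submatrixcutnorm} to each error matrix $W^{ij}$ of the cut decomposition, union-bound over the $d^4=O(1)$ colors, and absorb the diagonal self-edge terms using $|W^{ij}_{uu}|\le O(\epsilon^{-1})$ and the lower bound on $q$. The only material difference is that you spell out the verification of the normalization hypotheses of Theorem~\ref{theorem-submatrixcutnorm} (the $\|W^{ij}\|_\infty$, $\|W^{ij}\|_{\infty\to1}$, $\|W^{ij}\|_F$ bounds via the coefficient-length guarantee of Theorem~\ref{theorem-frieze}), which the paper leaves implicit; that extra care is welcome, and your diagonal argument $q\ge\Omega(\sqrt{\delta}\,\epsilon^{-2})$ is a valid (indeed generous) slack check.
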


\begin{proof}
Consider the $O(d^4) = O(1)$ `error' matrices $W^{ij}$, for each (generalized) Pauli matrix pair $i, j\in [d^2]$ resulting from the cut decomposition of the interactions of $H_D$. Applying the result of Theorem \ref{theorem-submatrixcutnorm} \cite{Alon2002RandomSA} above, and a union bound over all $d^4 = O(1)$ matrices, we are guaranteed that with probability $1-\delta$, $\sum_{ij}\|W^{ij}_Q\|_{\infty\rightarrow 1}\leq O(\epsilon/\sqrt{\delta}\cdot q^2)$, so long as $q= \Omega(\delta^{-5}\epsilon^{-4}\log 1/\epsilon)$. Moreover, we still have $W^{ij}_{u, u}\leq O(1/\epsilon)$ as in Theorem \ref{theorem-psregularity} and the lemma follows.
\end{proof}

We emphasize that these first two steps allowed us to reduce the problem to an analysis of sub-sampling Classical CSPs: by the triangle inequality,

\begin{gather}
   \bigg| \min_\rho \text{Tr}[H\rho] - \frac{n^2}{q^2} \cdot \min_{\rho_Q} \text{Tr}[H_Q\rho_Q]\bigg|\leq \\ \leq \bigg| \min_{\text{product state }\rho } \text{Tr}[H_D\rho] - \frac{n^2}{q^2} \cdot \min_{\text{product state }\rho_Q } \text{Tr}[H_{D_Q}\rho_Q]\bigg| + O\big(\epsilon/\sqrt{\delta}\cdot n^2\big)
\end{gather}

That is, now we can reason directly on the cut-decomposed Hamiltonian $H_D$ and its sub sample $H_{D_Q}$. Here we can draw from the machinery of the subset magnetization constraints, that we developed in sections \ref{section-constraints} and \ref{section-gsptas-dense}. Recall the definition of the convex set of constraints $C_{r, c, \gamma}$, which corresponded to the set of density matrices of product states with `subset magnetizations' (a linear function) within a $\gamma n$ additive range around the constraint coefficients $r, c$. We proved that the minimum over the choices of $r, c$ that are feasible $C_{r, c, \gamma}$ provides a good estimate to the variational minimum energy:

\begin{equation}
    V_\gamma =  \min_{\substack{r, c \in (I_{n, \gamma})^{d^4\cdot s} \\ C_{r, c, \gamma} \text{ feasible}}} \sum_{ijk} d^{ijk} r^{ijk}c^{ijk} \text{ and } |V_\gamma - \min_{\text{product state }\rho } \text{Tr}[H_D\rho]| \leq O(\gamma/\epsilon \cdot n^2) \text{ (Lemma \ref{lemma-Vadditive1} })
\end{equation}

To provide an analogous bound for the sample, let us formally define its sub-program. We let $C_{r, c,  \gamma}(Q)$ be the subset magnetization constraints of the sample $Q$, given guesses $r, c \in (I_{q, \gamma})^{d^4\cdot s}\equiv (\{-q, -q+\gamma q, -q+2\gamma q , \cdots ,0, \cdots, q\} )^{d^4 \cdot s}$ and accuracy $\gamma$:

\begin{gather}
C_{r, c,  \gamma}(Q):
    \rho_{\alpha_u} = \frac{\mathbb{I}}{d} + \frac{\alpha_u\cdot \sigma}{d}\geq 0 \text{ for all particles }u\in Q  \\
    r^{ijk} - \gamma q \leq \sum_{u\in R^{ijk}} \alpha^u_i \leq r^{ijk} + \gamma q \text{ and } \\
    c^{ijk} - \gamma q \leq \sum_{u\in L^{ijk}} \alpha^u_j \leq c^{ijk} + \gamma q \text{ for all }i, j\in [d^2], k\in [s]
\end{gather}

We emphasize that this is simply the corresponding set of constraints to $H_{D_Q}$, the sub-Hamiltonian of the cut-decomposition. More importantly, $C_{r, c,  \gamma}(Q)$ should be viewed as the restriction of $C_{r n/q, cn/q,  \gamma}$ to $Q$, with scaled coefficients $rn/q, cn/q$. We can similarly define an estimate for the variational minimum energy of the sample:

\begin{equation}
     V_{ \gamma}(Q) = \min_{\substack{r, c \in (I_{q, \gamma})^{16s} \\ C_{r, c, \gamma}(Q) \text{ feasible}}}\sum_{ijk} d^{ijk} r^{ijk}c^{ijk}
\end{equation}

The following claim guaranteess that $V_\gamma(Q)$ is an accurate estimate to the variational minimum energy of the sample. 

\begin{claim} \label{claim-sampleVadditive}
$V_\gamma(Q)$ is an $O(\gamma/\epsilon \cdot q^2 )$ additive approximation to $\min_{\text{product state }\rho } \text{Tr}[H_{D_Q}\rho_Q]$.
\end{claim}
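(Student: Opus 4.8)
The plan is to mirror the proof of Lemma \ref{lemma-Vadditive1} essentially verbatim, with the sample size $q$ playing the role of $n$. The one structural observation that makes this go through is that $H_{D_Q}$ is \emph{itself} a Hamiltonian cut decomposition: it is the sum of the cut sub-Hamiltonians with partitions $\{R^{ijk}\cap Q,\,L^{ijk}\cap Q\}$, the \emph{same} weights $d^{ijk}$, and the same width $s=O(\epsilon^{-2})$ as the decomposition $H_D$ of $H$. Crucially, its coefficient length is still controlled by the \emph{original} dimension: $(\sum_k (d^{ijk})^2)^{1/2}\le \|J^{ij}\|_F/n\le \|J\|_F/n\le 1$, since $\|J\|_F^2=\sum_e\|H_e\|_\infty^2\le m\le n^2$. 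So none of the cut-decomposition parameters of $H_{D_Q}$ degrade with $q$.

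First I would establish the sample analogue of Lemma \ref{lemma-feasibleconstraints}: if $\rho_Q=\otimes_{u\in Q}\rho_u$ is feasible for $C_{r,c,\gamma}(Q)$, then
\[
\Big|\text{Tr}[H_{D_Q}\rho_Q]-\sum_{ijk}d^{ijk}r^{ijk}c^{ijk}\Big|\le O\!\big(\tfrac{\gamma}{\epsilon}\,q^2\big).
\]
This splits into two pieces exactly as before. Writing $r'^{ijk},c'^{ijk}$ for the true subset magnetizations $\sum_{u\in R^{ijk}\cap Q}\alpha_i^u$ and $\sum_{u\in L^{ijk}\cap Q}\alpha_j^u$ of $\rho_Q$, the difference between $\text{Tr}[H_{D_Q}\rho_Q]$ and $\sum_{ijk}d^{ijk}r'^{ijk}c'^{ijk}$ comes only from the self-edges $R^{ijk}\cap L^{ijk}\cap Q$ and is bounded by $\sum_{u\in Q}\sum_{ijk}|d^{ijk}|\le q\cdot d^4 s^{1/2}=O(q/\epsilon)$, which is $O(\gamma q^2/\epsilon)$ as long as $\gamma q=\Omega(1)$. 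Next, feasibility gives $|r'^{ijk}-r^{ijk}|,|c'^{ijk}-c^{ijk}|\le\gamma q$ and $|r^{ijk}|,|r'^{ijk}|,|c^{ijk}|,|c'^{ijk}|\le q$, so Lemma \ref{closenesslemma} (with $n\mapsto q$ and coefficient length $\le 1$) gives $\sum_{ijk}|d^{ijk}|\,|r^{ijk}c^{ijk}-r'^{ijk}c'^{ijk}|=O(q^2\gamma s^{1/2})=O(q^2\gamma/\epsilon)$. Adding the two pieces and using the triangle inequality yields the displayed bound.

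The two inequalities of the claim then follow as in Lemma \ref{lemma-Vadditive1}. For the upper bound, I would take the variational minimizer $\rho_Q^\star$ of $H_{D_Q}$ among product states on $Q$, round its subset magnetizations to the nearest $r,c\in(I_{q,\gamma})^{d^4 s}$ (at distance $\le\gamma q/2$), so that $\rho_Q^\star$ is feasible for $C_{r,c,\gamma}(Q)$; then $V_\gamma(Q)\le\sum_{ijk}d^{ijk}r^{ijk}c^{ijk}\le\text{Tr}[H_{D_Q}\rho_Q^\star]+O(\gamma q^2/\epsilon)$. For the lower bound, let $r,c$ attain $V_\gamma(Q)$; feasibility of $C_{r,c,\gamma}(Q)$ produces a product state $\rho_Q$ on $Q$ achieving these magnetizations, and the sample analogue above gives $V_\gamma(Q)=\sum_{ijk}d^{ijk}r^{ijk}c^{ijk}\ge\text{Tr}[H_{D_Q}\rho_Q]-O(\gamma q^2/\epsilon)\ge\min_{\rho_Q}\text{Tr}[H_{D_Q}\rho_Q]-O(\gamma q^2/\epsilon)$. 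I do not expect a genuine obstacle here: the only subtlety worth stating carefully is the non-degradation point, namely that the coefficient length of the restricted cut decomposition is measured against $n$ rather than $q$, which keeps it $O(1)$ and forces the errors to scale with $q^2$ instead of something worse; this is also why the mild condition $\gamma q=\Omega(1)$ — automatic once $\gamma=O(\epsilon^2)$ and $q=\Omega(\epsilon^{-6}\log 1/\epsilon)$ — suffices to keep the diagonal correction subdominant.
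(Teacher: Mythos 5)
Your proposal is correct and follows the paper's approach: the paper's proof is the one-line remark that the argument mirrors Lemmas \ref{lemma-feasibleconstraints} and \ref{lemma-Vadditive1} with the key observation that the coefficient length of $D_Q$ is inherited unchanged from $D$ (i.e.\ it is $\le \|J\|_F/n \le 1$, normalized by $n$ rather than $q$), which is exactly the "non-degradation" point you identify. You have simply spelled out in full what the paper leaves implicit.
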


\begin{proof}
The proof follows analogously to that of Lemmas \ref{lemma-feasibleconstraints} and \ref{lemma-Vadditive1}, where we note that the `coefficient length' of $D_Q$ is still the same as $D$. We emphasize that this is independent of the sample $Q$, it just relies on the properties of the original cut decomposition. 
\end{proof}

In order to relate $V_\gamma(Q)$ and $V_\gamma$, we need to argue about the structure of the convex constraints $C_{r, c,  \gamma}$ and how they relate to their sub-samples, $C_{r', c',  \gamma'}(Q)$. We note that proving one of the directions is much easier: if the global constraints $C_{r, c,  \gamma}$ are feasible, then a sample from the coordinates of a feasible point is likely feasible for the constraints of the sample. To approach the converse, we follow the techniques of \cite{Alon2002RandomSA} and \cite{Jain2018TheVS} for affine constraints. We essentially need to argue that the absence of a good solution to the original problem, implies the absence of a good solution to the sampled problem. The approach exploits the existence of certain witnesses to the dual convex program of the constraint set $C_{r, c, \gamma}$, whenever it is infeasible. Let us begin by proving the easier direction: 

\begin{claim} \label{claim-sampleupperbound}
Assume $q = \Omega(\gamma^{-2}\log 1/(\delta \epsilon))$. Then with probability $1-\delta$, if $C_{r,c, \gamma}$ is feasible, then  $C_{r q/n, cq/n,  2\gamma}(Q)$ is feasible as well. 
\end{claim}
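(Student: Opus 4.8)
The plan is to take any feasible point of $C_{r,c,\gamma}$, restrict its coordinates to the random sample $Q$, and verify that the restriction is feasible for $C_{rq/n,\,cq/n,\,2\gamma}(Q)$ with probability at least $1-\delta$ over the choice of $Q$. Concretely, I would fix (independently of $Q$) a feasible point $\alpha=\{\alpha^u_i\}_{u\in[n],\,i\in[d^2-1]}$ of $C_{r,c,\gamma}$ and consider the point $\alpha|_Q=\{\alpha^u_i\}_{u\in Q}$ in the variable space of the sampled program. The PSD constraints $\rho_{\alpha_u}\ge 0$ are inherited verbatim for each $u\in Q$, so the only thing left to check is that the restricted magnetization constraints hold with the shrunk tolerance $2\gamma q$ around the rescaled guesses $r^{ijk}q/n$ and $c^{ijk}q/n$.

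For a fixed cut $R^{ijk}$, set $\beta_u = \mathbf{1}[u\in R^{ijk}]\cdot\alpha^u_i\in[-1,1]$, so that $\sum_{u\in R^{ijk}\cap Q}\alpha^u_i = \sum_{u\in Q}\beta_u$ is a sum over a uniformly random size-$q$ subset of $[n]$ of bounded values. Its expectation over $Q$ equals $\tfrac{q}{n}\sum_{u}\beta_u=\tfrac{q}{n}\sum_{u\in R^{ijk}}\alpha^u_i$, which by feasibility of $\alpha$ for $C_{r,c,\gamma}$ lies within $\gamma q$ of $r^{ijk}q/n$. It therefore suffices to show that, with probability at least $1-\delta$ simultaneously over all $i,j\in[d^2]$, $k\in[s]$ and over both the $R$- and $L$-constraints, each such sampled sum deviates from its mean by at most $\gamma q$; the triangle inequality then places it within $2\gamma q$ of the rescaled guess, i.e.\ $\alpha|_Q$ is feasible for $C_{rq/n,\,cq/n,\,2\gamma}(Q)$.

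The concentration step is where the hypothesis on $q$ enters, and it is the only place requiring any work. Since $Q$ samples $q$ of the $n$ values $\beta_u$ without replacement, I would invoke Hoeffding's inequality (via Hoeffding's reduction, so that the without-replacement case is no worse than the i.i.d.\ case), giving $\Pr\big[\,|\sum_{u\in Q}\beta_u - \tfrac{q}{n}\sum_u\beta_u|\ge\gamma q\,\big]\le 2\exp(-\gamma^2 q/2)$. There are $2d^4 s=O(\epsilon^{-2})$ affine constraints to control, so a union bound yields total failure probability at most $4d^4 s\,\exp(-\gamma^2 q/2)$, which is at most $\delta$ precisely when $q=\Omega(\gamma^{-2}\log(d^4 s/\delta))=\Omega(\gamma^{-2}\log(1/(\delta\epsilon)))$, using $s=O(\epsilon^{-2})$ and $d=O(1)$. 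On the complement event $\alpha|_Q$ satisfies every constraint of $C_{rq/n,\,cq/n,\,2\gamma}(Q)$, which proves the claim.

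I expect the only mild subtlety to be justifying the without-replacement concentration cleanly; everything else is bookkeeping (the rescaling of the magnetization window by $q/n$ and the union bound over $O(\epsilon^{-2})$ constraints). If one prefers to avoid sampling without replacement, one can instead model $Q$ as $q$ i.i.d.\ uniform draws and absorb the $O(q^2/n)$ expected number of collisions into the error term, or run a bounded-differences argument directly on the sampling process — any of these routes closes the gap with the same threshold on $q$.
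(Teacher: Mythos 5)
Your proof takes essentially the same route as the paper's: restrict a feasible point $\alpha$ of $C_{r,c,\gamma}$ to $Q$, note the PSD constraints are inherited coordinate-wise, apply a Hoeffding/Chernoff bound to the sampled subset magnetizations, union-bound over the $O(\epsilon^{-2})$ affine constraints, and absorb the error by doubling the slack from $\gamma$ to $2\gamma$. Your write-up is actually a bit more careful than the paper's: you correctly decompose the deviation from the rescaled guess $r^{ijk}q/n$ into two pieces — the bias from feasibility (the true subset sum is within $\gamma n$ of $r^{ijk}$, giving $\gamma q$ after scaling) plus the sampling fluctuation around the true mean $\frac{q}{n}\sum_{u\in R^{ijk}}\alpha^u_i$ (another $\gamma q$) — whereas the paper's displayed inequality conflates the true subset magnetization with the guess $r^{ijk}$ and nominally claims concentration to within $\gamma q$ of $\frac{q}{n}r^{ijk}$ directly, then silently recovers the factor $2$ by "increasing the slack." Both conclusions are the same; you have just made the triangle-inequality step explicit, and your remark on sampling without replacement is a reasonable way to keep things honest.
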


\begin{proof}
Consider the $r, c$ that maximizes $V_\gamma$, and let $\alpha = (\alpha^u_i)_{u\in V, i\in [d^2-1]}$ be a feasible point of $C_{r,c, \gamma}$. For a fixed sample $Q$, consider the point $\alpha_Q = (\alpha^u_i)_{u\in Q, i\in [d^2-1]}$ defined by the restriction of $\alpha$ to the sampled particles. We can define the magnetizations $r_Q^{ijk}, c_Q^{ijk}$  of the sample by $r_Q^{ijk} = \sum_{u\in R^{ijk} \cap Q}\alpha_u^i, c_Q^{ijk} = \sum_{u\in L^{ijk}\cap Q}\alpha_u^j$, where in expectation $\mathbb{E}_Q [ r_Q^{ijk}] = \frac{q}{n}r^{ijk}$. By the psd constraint we have $|\alpha_u^i|\leq 1$, so a Chernoff bound tells us that the sampled magnetizations aren't far from their expectation:

\begin{equation}
    \mathbb{P}[|r_Q^{ijk} - \frac{q}{n}\cdot r^{ijk} | \geq \gamma q] \leq \cdot e^{-\Omega(\gamma^2 q)}
\end{equation}

\noindent and by a union bound over the $d^4\cdot s = O(\epsilon^{-2})$ subsets $(i, j, k)$, we have with probability $1-\delta$ that all the magnetization estimates are accurate up to error $\gamma \cdot q$ , so long as $q = \Omega(\gamma^{-2} \log 1/(\epsilon\delta))$. We adequate for this sampling error by increasing the slack in the constraints: It follows that if $C_{r,c, \gamma}$ is feasible, then with probability $1-\delta$, $C_{r q/n, cq/n,  2\gamma}(Q)$ is feasible as well. 

\end{proof}

To argue the converse of the Claim \ref{claim-sampleupperbound} is very tricky. To do so, let us begin by expressing the constraints $C_{r, c, \gamma}$ in a more standard SDP formulation, which will be more convenient to write out the dual. First, WLOG reintroduce the constraints over the trace $\text{Tr}[\rho_{u}] = 1$ of the density matrices, and generically define matrix variables $\rho_{u}\geq 0, \rho_{u} \in \mathbb{C}^{d\times d}$ parameterized by their real-valued Pauli basis description. Next, consider each of the $2\cdot d^4\cdot s$ `regularity' constraints in $C_{r, c, \gamma}$, and observe that $p$th constraint (indexed by the pauli matrices $i, j$ and the cut $k$) can be cast into the matrix form $\sum \text{Tr}[A_{p, u} \rho_{\alpha_u}] - b_p \leq 0$. In this context, $A_{p, u} \in \{0, \pm \sigma^u_i, \pm \sigma^u_j\}$ are Pauli matrices (including identity) or 0, and $b_p\in \{\pm r^p +\gamma n, \pm c^p+\gamma n\}$. We express the program:

\begin{gather}
    \text{ minimize } 0, \text{ subject to } C_{r, c, \gamma}: \\
    \text{Tr}[\rho_{_u}] = 1, \rho_{u}\geq 0 \text{ for } u\in [n] \text{ and } \\ \sum_{p, u} \text{Tr}[A_{p, u} \rho_{u}] - b_p \leq 0 \text{ for each constraint }p \in [2d^4s]
\end{gather}

We denote the optima of this program as $O_{r, c, \gamma}$, which is set to $+\infty$ whenever $C_{r, c, \gamma}$ is infeasible. Given this set of constraints, we can define the Lagrangian $\mathcal{L}_{r, c, \gamma}(\rho, \lambda, \mu, Y)$, where we define the multipliers as follows: $Y\geq 0$ are a collection of $n$ psd matrices in $\mathbb{C}^{d\times d}$ associated to the PSD constraint on each $\rho_u$, $\lambda \in \mathbb{R}_+^{2d^4 s}$ are the positive valued multipliers on the regularity constraints, and $\mu\in\mathbb{R}^n$ are unconstrained variables associated to the trace constraints. We have:

\begin{gather}
   \mathcal{L}_{r, c, \gamma}(\rho, \lambda, \mu, Y)=  - \sum_{u\in [n]} \text{Tr}[Y_u \rho_u] + \sum_{u\in [n]} \mu_u \bigg(\text{Tr}[ \rho_u] -  1\bigg) + \sum_p \lambda_p \bigg(\sum_{u} \text{Tr}[A_{p, u} \rho_{u}] - b_p \bigg) \\ =
     -\sum_j \lambda_j b_j -\sum_u \mu_u  + \sum_{u \in [n]} \text{Tr}\bigg[\rho_u \bigg( \sum_p \lambda_p A_{p, u} - Y_u +\mu_u \mathbb{I}\bigg)\bigg]  
\end{gather}

Indeed, note that $\max_{Y, \lambda > 0, \mu} \mathcal{L}_{r, c, \gamma}(\rho, \lambda, \mu, Y) = 0$ if $\rho$ is feasible for $C_{r, c, \gamma}$ by picking $Y, \lambda = 0$, and can be made $+\infty$ otherwise. One can thereby construct the Lagrangian dual objective $\mathcal{D}_{r, c, \gamma}(\lambda, \mu, Y) = \min_{\rho} \mathcal{L}_{r, c, \gamma}(\rho, \lambda, \mu, Y)$, and its corresponding dual program, which we note to be equivalent to

\begin{gather}
   \max_{\lambda \geq  0, \mu} \mathcal{D}'_{r, c, \gamma}(\lambda, \mu) =  \max_{\lambda\geq 0, \mu} -\sum_j \lambda_j b_j -\sum_u \mu_u  \text{ subject to } \\ \mu_u \mathbb{I} + \sum_p \lambda_p A_{p, u} \geq 0 \forall u\in [n]
\end{gather}

We note in that this dual is trivially feasible, with $\mu, \lambda = 0$ and objective value 0. Moreover, it is actually \textit{strictly feasible}, as in particular one can pick small $\lambda > 0$ and sufficiently large $\mu > 0$ such that every PSD constraint above is strictly satisfied $>0$. In this manner, duality of SDPs tells us that if, by assumption, the primal $C_{r, c, \gamma}$ is infeasible, then the dual must be unbounded. In particular, there must exist a choice of $\mu, \lambda$ better than the trivial solution $(0,0)$, i.e. bounded away from 0, 

\begin{equation}
    \exists \mu, \lambda \text{ such that }  -\sum_j \lambda_j b_j -\sum_u \mu_u > 0
\end{equation}

In this manner, the tuple $(\mu, \lambda)$ acts as a witness to the infeasibility of $C_{r, c, \gamma}$. We sample from said witness to argue the infeasibility of the randomly restricted SDP:

\begin{claim}  \label{claim-samplelowerbound}
Assume $q =  \Omega(\gamma^{-2}\log 1/ \delta)$. Then with probability $1-\delta$ over the choice of $Q\subset [n]$, if $C_{r, c, \gamma}$ is infeasible, then $C_{rq/n, cq/n, \gamma/2}(Q)$ is infeasible as well
\end{claim}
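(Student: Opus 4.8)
The plan is to use the SDP duality structure developed just above the claim. Since $C_{r,c,\gamma}$ is assumed infeasible, its Lagrangian dual is unbounded; as the dual feasible set is a cone and its objective is linear, there is a dual point $(\lambda,\mu)$, with $\lambda\in\mathbb{R}_{+}^{2d^4s}$ and $\mu\in\mathbb{R}^n$, satisfying $\mu_u\mathbb{I}+\sum_p\lambda_p A_{p,u}\geq 0$ for all $u\in[n]$ and strictly positive objective $\Delta\equiv -\sum_p\lambda_p b_p-\sum_u\mu_u>0$. First I would normalize this witness. Note $\lambda\neq 0$, since if $\lambda=0$ the constraint $\mu_u\mathbb{I}\geq 0$ forces $\mu_u\geq 0$ and the objective becomes non-positive. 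Moreover, for any fixed $\lambda$ the objective is largest when each $\mu_u$ is taken as small as the PSD constraint allows, namely $\mu_u=\lambda_{\max}\big(-\sum_p\lambda_p A_{p,u}\big)$; since every $A_{p,u}$ is a Pauli matrix (or $0$), this forces $|\mu_u|\leq\sum_p\lambda_p=\|\lambda\|_1$. Rescaling the cone point so that $\|\lambda\|_1=1$, we may therefore assume $|\mu_u|\leq 1$ for every $u\in[n]$.

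Next I would propose the restriction $(\lambda,(\mu_u)_{u\in Q})$ as an infeasibility witness for the sampled dual of $C_{rq/n,\,cq/n,\,\gamma/2}(Q)$. Dual feasibility is inherited for free: the matrices $A_{p,u}$ only encode whether $u$ lies in $R^{ijk}$ or $L^{ijk}$, and these memberships are unchanged when we restrict to $Q$, so the PSD constraints for $u\in Q$ hold automatically. It then remains to verify the dual objective stays positive with high probability over the choice of $Q$. The sampled constraint coefficients satisfy $b'_p=\tfrac{q}{n}b_p-\tfrac{\gamma q}{2}$, so using $\|\lambda\|_1=1$ together with $-\sum_p\lambda_p b_p=\Delta+\sum_u\mu_u$ one computes
\begin{equation}
    -\sum_p\lambda_p b'_p-\sum_{u\in Q}\mu_u \;=\; \frac{q}{n}\,\Delta+\frac{\gamma q}{2}+\Big(\frac{q}{n}\sum_{u\in[n]}\mu_u-\sum_{u\in Q}\mu_u\Big).
\end{equation}

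The first two summands are positive, so it suffices to bound the last term, which is exactly the deviation of the sample sum $\sum_{u\in Q}\mu_u$ from $q$ times the population mean of $\{\mu_u\}_{u\in[n]}$. Since $Q$ is a uniform subset of size $q$ and $|\mu_u|\leq 1$, Hoeffding's inequality (which also applies to sampling without replacement) bounds this deviation by $O(\sqrt{q\log(1/\delta)})$ except with probability $\delta$. Hence, provided $q=\Omega(\gamma^{-2}\log(1/\delta))$ with a sufficiently large constant, $\tfrac{\gamma q}{2}$ dominates the fluctuation, the sampled dual objective is strictly positive, the sampled dual is unbounded, and therefore $C_{rq/n,\,cq/n,\,\gamma/2}(Q)$ is infeasible. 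The step I expect to be the main obstacle is precisely this normalization: the raw dual witness lives in $\mathbb{R}^n$ with no a priori control on the magnitude of its coordinates, and it is only after pinning $\mu$ to its $\lambda$-optimal value and rescaling $\lambda$ to unit $\ell_1$ norm that one obtains the $O(1)$ per-coordinate bound needed for the concentration over the random sample to close; one also sees here why the slack must be \emph{shrunk} to $\gamma/2$, since that is what supplies the compensating $+\tfrac{\gamma q}{2}$ term. (If one wants the statement to hold simultaneously for all $(r,c)\in(I_\gamma)^{d^4 s}$, one simply union-bounds, replacing $\log(1/\delta)$ by $\log(1/\delta)+O(\epsilon^{-2}\log(1/\epsilon))$.)
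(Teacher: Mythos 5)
Your proposal is correct and follows essentially the same route as the paper: extract a dual infeasibility witness, bound $|\mu_u|$ via the PSD constraints, restrict the witness to $Q$ (feasibility being automatic since the $A_{p,u}$ only encode cut membership), and apply Hoeffding so that the $\gamma q/2$ slack reduction absorbs the sampling fluctuation. The only cosmetic difference is that you normalize $\|\lambda\|_1=1$ up front to get $|\mu_u|\leq 1$, whereas the paper carries the $\|\lambda\|_1$ factor through the Hoeffding bound; the underlying argument is identical.
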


\begin{proof}
To begin, we note that the PSD constraints in the dual enable a bound on the multipliers $\mu_u$. For fixed $\lambda$, the feasible choice of $\mu$ which maximizes the dual, is the minimum $\mu$ satisfying the PSD constraints (the dual objective is negative in $\mu$). Since the minimum $\mu_u$ satisfying the PSD constraints is the maximum eigenvalue of $- \sum_p \lambda_p A_{p, u}$, $\mu_u$ can always be chosen to be

\begin{gather}
    \mu_u \leq \| \sum_p \lambda_p A_{p, u}\|_\infty \leq \sum_p \lambda_p \|A_{p, u}\|_\infty \leq \|\lambda\|_1 \text{ and } \\ 
    \mu_u \mathbb{I} \geq  - \sum_p \lambda_p A_{p, u}\Rightarrow \mu_u \geq - \|\lambda\|_1
\end{gather}

Let us consider using $\lambda$, and the restriction of $\mu$ to $Q$ as a witness to the dual of the sampled program $C_{rq/n, cq/n, \gamma'}(Q)$. We note that the constraints of the dual of $C_{rq/n, cq/n, \gamma'}(Q)$ are the restriction to $Q\subset [n]$ of the dual constraints $\mathcal{D}'_{r, c, \gamma}(\lambda, \mu)$, and thereby $(\lambda, \mu_Q)$ is always feasible for the sampled dual. Moreover, Hoeffding's inequality tells us that 

\begin{equation}
    \mathbb{P}\bigg[ \bigg|\sum_{u\in Q}  \mu_u  -  \frac{q}{n}\sum_{u\in V}  \mu_u\bigg|  \geq  \frac{\gamma q}{2} \cdot \|\lambda\|_1 \bigg] \leq e^{-\Omega(\gamma^2 q)}
\end{equation}

and therefore, with probability $1-e^{-\Omega(\gamma^2 q)}$, 
\begin{gather}
0 < - \frac{q}{n}\cdot \sum_{u\in V}  \mu_u(\lambda) - \frac{q}{n} \sum_j  \lambda_j  b_j \leq  \\ \leq -  \sum_{u\in Q}  \mu_u(\lambda) - \sum_j  \lambda_j q\cdot \bigg( \frac{1}{n}\cdot b_j  - \frac{\gamma}{2} \bigg) \leq  \cdot O_{rq/n, cq/n, \gamma/2}(Q)
\end{gather}

where here we use the fact that the constraint coefficients $b_j$ are all additive in the error, e.g. $b_j = +\gamma n \pm r_{ijk} $, and therefore subtracting from $b_j$ can be viewed lowering the slack in the constraint. The last inequality is simply weak duality of the sampled program $C_{rq/n, cq/n, \gamma/2}(Q)$. Precisely, this implies that if $C_{r, c, \gamma}$ is infeasible, with probability $1-e^{-\Omega(\gamma^2 q)}$, $C_{rq/n, cq/n, \gamma/2}(Q)$ is infeasible as well. We pick $q = \Omega(\gamma^{-2}\log 1/ \delta)$ to conclude the claim.

\end{proof}

We are now in position to relate the two objectives.

\begin{claim} \label{claim-samplecutenergy}
The variational minimum energy of the cut decomposition is close to that of its sample:

\begin{equation}
    \bigg| \min_{\text{product state }\rho } \text{Tr}[H_{D}\rho]  - \frac{n^2}{q^2}\cdot \min_{\text{product state }\rho_Q } \text{Tr}[H_{D_Q}\rho_Q]\bigg| \leq O(\epsilon\cdot n^2)
\end{equation}

with probability $1-\delta$ over the choice of $Q$, so long as $q = \Omega(\epsilon^{-6}\log 1/\epsilon\delta)$

\end{claim}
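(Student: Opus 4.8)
The plan is to sandwich the two variational minimum energies between the LP/SDP estimates $V_\gamma$ and $V_\gamma(Q)$ of the subset magnetization constraints, and to transfer feasibility of those constraints back and forth between $H_D$ and $H_{D_Q}$ using Claims \ref{claim-sampleupperbound} and \ref{claim-samplelowerbound}. Throughout I fix the accuracy parameter $\gamma = \Theta(\epsilon^2)$, exactly as in Lemma \ref{lemma-Vadditive1} and Corollary \ref{corollary-Vadditive}, so that every magnetization-rounding error $O(\gamma/\epsilon\cdot n^2)$ or $O(\gamma/\epsilon\cdot q^2)$ coming from Claim \ref{claim-sampleVadditive} and Lemma \ref{lemma-Vadditive1} collapses to $O(\epsilon n^2)$ or $O(\epsilon q^2)$.

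\textbf{Upper bound on the sample.} Let $(r^\star,c^\star)$ be the grid pair attaining $V_\gamma$, so $C_{r^\star,c^\star,\gamma}$ is feasible; crucially this pair does not depend on $Q$, so Claim \ref{claim-sampleupperbound} applies to it directly and gives, with probability $1-\delta$ over $Q$, that $C_{r^\star q/n,\,c^\star q/n,\,2\gamma}(Q)$ is feasible. After rounding $(r^\star q/n, c^\star q/n)$ to the nearest point of the sample grid $(I_{q,2\gamma})^{2d^4s}$ (an $O(\gamma q)$ shift, absorbed into the slack at the cost of a constant in front of $\gamma$) we obtain $V_{O(\gamma)}(Q)\le \tfrac{q^2}{n^2}V_\gamma$. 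Plugging this into Claim \ref{claim-sampleVadditive} and then Lemma \ref{lemma-Vadditive1} yields $\tfrac{n^2}{q^2}\min_{\rho_Q}\mathrm{Tr}[H_{D_Q}\rho_Q]\le \min_{\rho}\mathrm{Tr}[H_D\rho] + O(\epsilon n^2)$.

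\textbf{Lower bound on the sample.} This is the reverse transfer via the dual-witness Claim \ref{claim-samplelowerbound}, and here the minimizing pair $(\hat r,\hat c)$ of $V_\gamma(Q)$ is itself a function of $Q$, so I cannot invoke the claim on a single fixed pair. Instead I union-bound Claim \ref{claim-samplelowerbound} over all $(r,c)$ in $(I_{q,\gamma})^{2d^4s}$; the number of such pairs is $\big(O(1/\gamma)\big)^{O(\epsilon^{-2})} = 2^{O(\epsilon^{-2}\log(1/\epsilon))}$, so applying the claim with per-pair failure probability $\delta' = \delta\cdot 2^{-O(\epsilon^{-2}\log(1/\epsilon))}$ requires $q = \Omega(\gamma^{-2}\log(1/\delta')) = \Omega\!\big(\epsilon^{-6}\log(1/\epsilon) + \epsilon^{-4}\log(1/\delta)\big) = \Omega(\epsilon^{-6}\log(1/\epsilon\delta))$, precisely the hypothesized sample size. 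On this event: if $(\hat r,\hat c)$ attains $V_\gamma(Q)$ then $C_{\hat r,\hat c,\gamma}(Q)$ is feasible, so the contrapositive of Claim \ref{claim-samplelowerbound} forces the global program $C_{\hat r n/q,\,\hat c n/q,\,2\gamma}$ to be feasible, whence $V_{2\gamma}\le \tfrac{n^2}{q^2}V_\gamma(Q)$; combining again with Claim \ref{claim-sampleVadditive} and Lemma \ref{lemma-Vadditive1} gives $\min_{\rho}\mathrm{Tr}[H_D\rho]\le \tfrac{n^2}{q^2}\min_{\rho_Q}\mathrm{Tr}[H_{D_Q}\rho_Q] + O(\epsilon n^2)$.

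Intersecting the two $\Omega(1-\delta)$ events and adjusting constants, the two inequalities together give the claimed two-sided bound $O(\epsilon\cdot n^2)$. The main obstacle is the union bound in the lower-bound step: one must verify that paying the log of the (exponential in $1/\epsilon$) grid size multiplies the $\gamma^{-2}=\Theta(\epsilon^{-4})$ sample requirement only up to $\Theta(\epsilon^{-6})$ and no worse. A secondary nuisance, flagged above, is that the grids $I_{n,\gamma}$ and $I_{q,2\gamma}$ are not exactly nested under the $q/n$ rescaling, which is why a harmless grid-rounding step appears in the upper-bound direction.
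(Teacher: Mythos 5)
Your proof is correct and takes essentially the same route as the paper: sandwich $V_\gamma$ and $V_\gamma(Q)$ using Claims \ref{claim-sampleupperbound} and \ref{claim-samplelowerbound} (the latter union-bounded over the $2^{O(\epsilon^{-2}\log 1/\gamma)}$ grid pairs since the sample's minimizing pair depends on $Q$), then transfer to the variational minima via Claim \ref{claim-sampleVadditive} and Lemma \ref{lemma-Vadditive1} with $\gamma=\Theta(\epsilon^2)$. Your observation that the upper-bound direction needs no union bound (its minimizing pair $(r^\star,c^\star)$ is $Q$-independent) is a mild refinement the paper skips by union-bounding both claims uniformly, but it leaves the required $q=\Omega(\epsilon^{-6}\log(1/\epsilon\delta))$ unchanged, as does the grid-rounding nuance you flag.
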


\begin{proof}

By a union bound over all $2^{O(\epsilon^{-2}\log 1/\gamma)}$ possible choices of the guesses $r, c \in (I_\gamma)^{16s}$ (with width $s = O(\epsilon^{-2})$), we have that the events of Claims \ref{claim-samplelowerbound}, \ref{claim-sampleupperbound} hold with probability $1-\delta$, so long as $q = \Omega(\epsilon^{-2} \gamma^{-2}\log 1/(\gamma \delta))$. In this setting, let us first combine Claim \ref{claim-sampleupperbound} with Claim \ref{claim-sampleVadditive}, to argue that the optima of the sample provides a lower bound to the energy: 

\begin{gather}
   V_\gamma =  \min_{\substack{r, c \in (I_{n, \gamma})^{16s} \\ C_{r, c, \gamma} \text{ feasible}}}\sum_{ijk} d^{ijk} r^{ijk}c^{ijk} \geq \frac{n^2}{q^2} \cdot  \min_{\substack{r, c \in (I_{q, \gamma})^{16s} \\ C_{r, c, 2\gamma}(Q) \text{ feasible}}}\sum_{ijk} d^{ijk} r^{ijk}c^{ijk} \text{ with prob. }1-\delta/2 \\ \text{ and }
   \min_{\substack{r, c \in (I_{q, \gamma})^{16s} \\ C_{r, c, 2\gamma}(Q) \text{ feasible}}}\sum_{ijk} d^{ijk} r^{ijk}c^{ijk} \geq  \min_{\text{product state }\rho_Q } \text{Tr}[H_{D_Q}\rho_Q]  -  O(\gamma/\epsilon \cdot q^2 ) 
\end{gather}

where the second line follows from the observation that if $\rho_Q$ is a density matrix feasible for $C_{r, c, 2\gamma}(Q)$, then its energy must be at most $O(\gamma/\epsilon \cdot q^2 )$ away by Claim \ref{claim-sampleVadditive}. Conversely, 

\begin{gather}
    V_\gamma(Q) = \min_{\substack{r, c \in (I_{q, \gamma})^{16s} \\ C_{r, c, \gamma}(Q) \text{ feasible}}} \sum_{ijk} d^{ijk} r^{ijk}c^{ijk} \geq \frac{q^2}{n^2}\cdot   \min_{\substack{r, c \in (I_{n, \gamma})^{16s} \\ C_{r, c, 2\gamma} \text{ feasible}}}\sum_{ijk} d^{ijk} r^{ijk}c^{ijk} \text{ with prob. }1-\delta/2 \\ \text{ and }
     \min_{\substack{r, c \in (I_{n, \gamma})^{16s} \\ C_{r, c, 2\gamma} \text{ feasible}}}\sum_{ijk} d^{ijk} r^{ijk}c^{ijk} \geq \min_{\text{product state }\rho } \text{Tr}[H_{D}\rho]  -  O(\gamma/\epsilon \cdot n^2 ) 
\end{gather}

Using the expressions for $V_\gamma$ and $V_{\gamma}(Q)$ in Lemma \ref{lemma-Vadditive1} and Claim \ref{claim-sampleVadditive}, with $\gamma = O(\epsilon^2)$, we conclude the proof of the claim.
\end{proof}

We can now finally conclude the proof of the theorem:

\begin{proof} 

[of Theorem \ref{theorem-vertexsample}]
Via the triangle inequality, Claim \ref{claim-samplecutenergy}, Lemma \ref{lemma-sampleregularity}, and corollary \ref{corollary-bhdense} tell us that with probability 0.99, 

\begin{equation}
        \bigg|\min_{\rho}\text{Tr}[H\rho]-\frac{n^2}{q^2}\min_{\rho_Q}\text{Tr}[H_Q\rho_Q]\bigg|\leq \epsilon n^2
\end{equation}
    
\noindent by the union bound and the appropriate choice of $\delta = O(1)$ and $\epsilon$, so long as $q $ is larger than the largest size constraint, $\Omega(\epsilon^{-6}\log 1/\epsilon)$.
\end{proof}

\section{Acknowledgements}

The author would like to thank Anurag Anshu, Yunchao Liu, Umesh Vazirani and Sevag Gharibian for many fruitful discussions on product state approximations, Aram Harrow and Daniel Ranard for a discussion leading to the Hamiltonian Regularity Lemma, and Anirban Chowdhury for questions and conversations on the Free Energy of dense Local Hamiltonians. 

\bibliographystyle{alphaurl}
\bibliography{ref}


\appendix

\section{The Multi-Partite Self Decoupling Lemma}

\begin{lemma}\label{lemma-kselfdecoupling} Let $X_1\cdots X_n$ be classical random variables with some arbitrary joint distribution, and fix integers $k, l < n$. Then

\begin{equation}
    \mathbb{E}_{0\leq m \leq l}\mathbb{E}_{\substack{C\subset [n],  \\ |C| = m}} \mathbb{E}_{ \substack{ u_1\cdots u_k\in V\setminus C \\ u_i\neq u_j}} I(X_{u_1}:\cdots :X_{u_k}| X_C) \leq \frac{k^2}{l}\mathbb{E}_{u}I(X_u: X_{V\setminus \{u\}})
\end{equation}

\end{lemma}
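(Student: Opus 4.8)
The plan is to reduce the multipartite conditional total-correlation to a sum of $\binom{k}{2}$ pairwise conditional mutual informations, and then to prove the bound for each of those by a telescoping argument. For the first step, recall the chain rule for (conditional) total correlation: for any ordered tuple of distinct vertices $u_1,\dots,u_k\in V\setminus C$,
\begin{equation}
  I(X_{u_1}:\cdots:X_{u_k}\mid X_C)\;=\;\sum_{j=2}^{k} I\big(X_{u_j}:X_{u_1}\cdots X_{u_{j-1}}\mid X_C\big)
  \;=\;\sum_{1\le i<j\le k} I\big(X_{u_j}:X_{u_i}\mid X_{C\cup\{u_1,\dots,u_{i-1}\}}\big),
\end{equation}
where the second equality applies the ordinary chain rule for mutual information to each summand. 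So it suffices to bound the expectation, over the given distribution on $(m,C,u_1,\dots,u_k)$, of each term $I(X_{u_j}:X_{u_i}\mid X_{C\cup\{u_1,\dots,u_{i-1}\}})$ by $\tfrac{1}{l+1}\,\mathbb{E}_u I(X_u:X_{V\setminus\{u\}})$; summing the $\binom{k}{2}\le k^2/2$ terms and using $\tfrac{1}{l+1}\le\tfrac1l$ then finishes.

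The core is the following telescoping identity, a multipartite refinement of Lemma~19 of \cite{Brando2013ProductstateAT}. Fix a vertex $u$, and for $C'\subseteq V\setminus\{u\}$ set $g_u(C')=I\big(X_u:X_{V\setminus(C'\cup\{u\})}\mid X_{C'}\big)$, so that $g_u(\emptyset)=I(X_u:X_{V\setminus\{u\}})$ and $g_u(V\setminus\{u\})=0$. Using the chain rule,
\begin{equation}
  I(X_u:X_v\mid X_{C'}) \;=\; g_u(C')-g_u(C'\cup\{v\})\;\ge\;0 \qquad (v\notin C'\cup\{u\}),
\end{equation}
so $g_u$ is non-increasing under adding vertices to the conditioning set. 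Averaging the displayed identity over a uniformly random $v\notin C'\cup\{u\}$ and over a uniformly random $C'$ of size $m'$ in $V\setminus\{u\}$ gives
\begin{equation}
  \mathbb{E}_{|C'|=m'}\,\mathbb{E}_{v}\; I(X_u:X_v\mid X_{C'}) \;=\; a_u(m')-a_u(m'+1), \qquad a_u(m'):=\mathbb{E}_{|C'|=m'} g_u(C'),
\end{equation}
since a uniform $m'$-set together with one further uniform vertex is a uniform $(m'+1)$-set; here $a_u$ is non-increasing and $a_u(0)=I(X_u:X_{V\setminus\{u\}})$.

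To assemble the bound, fix $i<j$ and note that under the distribution on $(m,C,u_1,\dots,u_k)$ the ``effective conditioning set'' $\widetilde C:=C\cup\{u_1,\dots,u_{i-1}\}$ is, conditioned on $m$, a uniformly random subset of size $m+i-1$, and given $\widetilde C$ the pair $(u_i,u_j)$ is a uniformly random ordered pair of distinct vertices outside $\widetilde C$; moreover $u_i$ is marginally uniform on $V$, so one may equivalently first draw $u\in V$ uniformly, then $\widetilde C$ uniformly of size $m+i-1$ in $V\setminus\{u\}$, then $v$ uniformly outside. Since $I(X_{u_j}:X_{u_i}\mid X_{\widetilde C})$ is symmetric in its two arguments, the term's expectation equals $\mathbb{E}_u\big[a_u(m+i-1)-a_u(m+i)\big]$ by the identity above, and then averaging over $m\in\{0,\dots,l\}$ telescopes:
\begin{equation}
  \frac{1}{l+1}\sum_{m=0}^{l}\mathbb{E}_u\big[a_u(m+i-1)-a_u(m+i)\big]
  \;=\;\frac{1}{l+1}\,\mathbb{E}_u\big[a_u(i-1)-a_u(l+i)\big]
  \;\le\;\frac{1}{l+1}\,\mathbb{E}_u\, a_u(0)\;=\;\frac{1}{l+1}\,\mathbb{E}_u I(X_u:X_{V\setminus\{u\}}),
\end{equation}
using that $a_u$ is non-increasing and nonnegative. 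Summing over the $\binom{k}{2}$ pairs gives the claim with constant $\tfrac{k(k-1)}{2(l+1)}\le \tfrac{k^2}{l}$. I expect the main obstacle to be purely bookkeeping: verifying rigorously that the marginal distributions of $\widetilde C$ and $(u_i,u_j)$ are exactly the uniform ones claimed (a short exchangeability argument), and handling the mild edge case $m+i+1>n$ by noting that $a_u$ extends to all sizes with $a_u\equiv 0$ once the argument exhausts $V\setminus\{u\}$, so the telescoping—and the bound by $a_u(0)$—remains valid regardless of how $l,k$ compare to $n$.
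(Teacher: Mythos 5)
Your proof is correct and follows essentially the same route as the paper's: the identical chain-rule decomposition into pairwise conditional mutual informations $I(X_{u_j}:X_{u_i}\mid X_{C\cup\{u_1,\dots,u_{i-1}\}})$, followed by a chain-rule/telescoping argument over the size of the random conditioning set and monotonicity of mutual information. The only differences are bookkeeping — you absorb $u_1,\dots,u_{i-1}$ into the conditioning set and telescope via $a_u(m')$, whereas the paper keeps them explicit, reorders the expectation by permutation invariance and applies $I(A:B\mid C)\le I(A:BC)$ at the end — which incidentally gives you the slightly sharper constant $\binom{k}{2}/(l+1)$ in place of $k^2/l$.
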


\begin{proof}
We begin by expressing the multi-partite mutual information as bipartite, followed by the chain rule of the mutual information:
\begin{gather}
    I(X_{u_1}:\cdots :X_{u_k}| X_C) = \sum_{j=1}^{k-1} I(X_{u_1}\cdots X_{u_j}:X_{u_{j+1}}|X_C) = \\ = \sum_{j=1}^{k-1}\sum_{i=1}^{j}I(X_{u_i}:X_{u_{j+1}}|X_C, X_{u_1}\cdots X_{u_{i-1}}) 
\end{gather}

Consider a single summand above, in expectation over uniformly random (and without replacement) choices of $c_1\cdots c_m, u_1\cdots u_k\in [n]$.

\begin{gather}
   \mathbb{E}_{\substack{C\subset [n],  \\ |C| = m}} \mathbb{E}_{ \substack{ u_1\cdots u_k\in V\setminus C \\ u_a\neq u_b}}  I(X_{u_i}:X_{u_{j+1}}|X_C, X_{u_1}\cdots X_{u_{i-1}}) = \\
    \mathbb{E}_{\substack{u, c_1\cdots c_{m+1}, u_1\cdots u_{i-1} \in [n] \\ \text{all distinct}}} I(X_{u}:X_{c_{m+1}}|X_{c_1}, \cdots X_{c_m}, X_{u_1}\cdots X_{u_{i-1}})
\end{gather}

\noindent in expectation over the choice of size of $C$ as well,
\begin{gather}
    \mathbb{E}_{0\leq m \leq l} \mathbb{E}_{\substack{C\subset [n],  \\ |C| = m}} \mathbb{E}_{ \substack{ u_1\cdots u_k\in V\setminus C \\ \text{all distinct}}} I(X_{u_1}:\cdots :X_{u_k}| X_C)\leq   \\
    k\cdot \sum_{i=1}^{k-1} \mathbb{E}_{0\leq m \leq l} \mathbb{E}_{\substack{u, c_1\cdots c_{m+1}, u_1\cdots u_{i-1} \in [n] \\ \text{all distinct}}} I(X_{u}:X_{c_{m+1}}|X_{c_1}, \cdots X_{c_m}, X_{u_1}\cdots X_{u_{i-1}}) = \\
    k\cdot \sum_{i=1}^{k-1} \mathbb{E}_{\substack{u, c_1\cdots c_{l+1}, u_1\cdots u_{i-1} \in [n] \\ \text{all distinct}}} \mathbb{E}_{0\leq m \leq l} I(X_{u}:X_{c_{m+1}}|X_{c_1}, \cdots X_{c_m}, X_{u_1}\cdots X_{u_{i-1}}) = \\
    = \frac{k}{l}\cdot \sum_{i=1}^{k-1} \mathbb{E}_{\substack{u, c_1\cdots c_{l+1}, u_1\cdots u_{i-1} \in [n] \\ \text{all distinct}}} I(X_u: X_{C_{\leq l+1}}| X_{u_1}\cdots X_{u_i})
\end{gather}    

    \noindent where first we used linearity of expectation, then we reordered the expectation via the permutation invariance of the sampling distribution, and finally applied the chain rule. To conclude, we can simply apply the monotonicity relations $I(A:B|C), I(A:B)\leq I(A:BC)$
    
\begin{gather}    
      \leq \frac{k}{l}\cdot \sum_{i=1}^{k-1} \mathbb{E}_{\substack{u, c_1\cdots c_{l+1}, u_1\cdots u_{i-1} \in [n] \\ \text{all distinct}}} I(X_u: X_{C_{\leq l+1}}, X_{u_1}\cdots X_{u_i}) \leq \frac{k^2}{l} \cdot \mathbb{E}_{u\in [n]}I(X_u: X_{[n]\setminus u}) 
\end{gather}

\end{proof}

\section{The Volume of Feasible Set $K_{r, c, \gamma}$}
\label{appendix-volume}

In this appendix, we construct upper and lower bounds on the volumes of feasible regions $K_{r, c, \gamma}\subset \mathbb{R}^{(d^2-1)n}$, the convex set of feasible points to constraints $C_{r, c, \gamma}$ described in section \ref{section-constraints}. In particular, we argue that any product state is contained in at least some feasible region $K_{r, c, \gamma}$  (for some choice of $r, c$), and moreover said $K_{r, c, \gamma}$ is contained in a ball of radius $R_u$ centered at the origin, and contains a ball of radius $R_l$. In this manner, by running a convex program solver, say, the Ellipsoid algorithm up to a cutoff volume $R_l^{(d^2-1)n}$ for every set of constraints $C_{r, c, \gamma}$, we are guaranteed that every product state is in at least one of the feasible $K_{r, c, \gamma}$ found by the algorithm. 

Before beginning, let us define the corresponding set and radii $\tilde{K}_{r, c, \gamma}, \tilde{R}_l, \tilde{R}_u$ for the compressed set $\tilde{C}_{r, c, \gamma}$ given by the common refinements, which we return to at the end of this section. For simplicity, let us begin with the upper bound. 

\begin{claim}
\label{volumeUB}
Every product state $\rho$ is contained in some $K_{r, c, \gamma}$, where $K_{r, c, \gamma}$ is contained in a ball of radius $R_u\leq d\cdot \sqrt{n}$ centered at the origin.
\end{claim}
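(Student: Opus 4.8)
The plan is to establish the two halves of the claim separately — membership of every product state in some feasible region, and a uniform bound on each feasible region — both of which follow from the elementary fact that single‑qudit Pauli coefficients have absolute value at most $1$.

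First I would handle membership. Given a product state $\rho = \otimes_{u\in V}\rho_u$, set $\alpha_i^u = \text{Tr}[\sigma_i\rho_u]$. Because each $\rho_u$ is a genuine density matrix, the parametrization $\rho_{\alpha_u} = \tfrac1d\mathbb{I} + \tfrac1d\sum_i\alpha_i^u\sigma_i$ recovers $\rho_u$, so the PSD constraints of $C_{r,c,\gamma}$ hold automatically. For each triple $(i,j,k)$ the true subset magnetizations $\bar r^{ijk} = \sum_{u\in R^{ijk}}\alpha_i^u$ and $\bar c^{ijk} = \sum_{v\in L^{ijk}}\alpha_j^v$ satisfy $|\bar r^{ijk}|,|\bar c^{ijk}|\le n$, since $|\alpha_i^u|\le\|\sigma_i\|_\infty\|\rho_u\|_1 = 1$. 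The grid $I_\gamma$ has spacing $\gamma n$ and extremal points $\pm\lfloor\gamma^{-1}\rfloor\gamma n$, which lie within $\gamma n$ of $\pm n$; hence every point of $[-n,n]$ is within $\gamma n$ of some grid point. Taking $r^{ijk},c^{ijk}\in I_\gamma$ to be nearest grid points to $\bar r^{ijk},\bar c^{ijk}$ then makes $\alpha$ satisfy all magnetization constraints of $C_{r,c,\gamma}$, so $\alpha\in K_{r,c,\gamma}$ for this choice of $r,c$.

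Next I would bound the feasible region. For any $\alpha\in K_{r,c,\gamma}$ the constraint $\rho_{\alpha_u}\ge 0$, together with the fact that the identity component is hard‑coded to $\mathbb{I}/d$ (so $\text{Tr}[\rho_{\alpha_u}] = 1$), forces $\rho_{\alpha_u}$ to be a density matrix; thus $|\alpha_i^u| = |\text{Tr}[\sigma_i\rho_{\alpha_u}]|\le\|\sigma_i\|_\infty\|\rho_{\alpha_u}\|_1 = 1$ for every $u,i$. Summing over the $(d^2-1)n$ coordinates, $|\alpha|_2^2\le (d^2-1)n\le d^2 n$, so $K_{r,c,\gamma}$ lies in the origin‑centered ball of radius $R_u = d\sqrt n$, as claimed. (A sharper radius $\sqrt{(d-1)n}$ also follows from $\text{Tr}[\rho_{\alpha_u}^2]\le 1$ and $\text{Tr}[\sigma_i\sigma_j] = d\delta_{ij}$, but this is not needed.)

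There is essentially no obstacle here; the only steps demanding a little care are checking that the finite grid $I_\gamma$ actually reaches within $\gamma n$ of the extremes $\pm n$ — which is exactly why the choice $l = \lfloor\gamma^{-1}\rfloor$ is made — and confirming that the PSD constraint in $C_{r,c,\gamma}$ really does pin down a legitimate density matrix, so that the operator‑norm bound $|\alpha_i^u|\le 1$ can be invoked for every feasible point, not just for those arising from a prescribed product state.
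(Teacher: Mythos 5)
Your proof is correct and follows essentially the same route as the paper: the key bound $|\alpha_i^u| = |\text{Tr}[\sigma_i\rho_u]| \leq 1$ for every coordinate, followed by summing $(d^2-1)n \leq d^2n$ squared coordinates to get $R_u \leq d\sqrt{n}$. You also spell out the membership argument (nearest grid point in $I_\gamma$), which the paper's proof of this particular claim leaves implicit and only addresses when proving the lower-bound Claim on the inscribed ball; this is a harmless addition that makes the claim self-contained.
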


\begin{proof}
Recall that each density matrix $\rho_v$ is specified by the vector $(\alpha^v_i)_{i\in [d^2]}$, where $|\alpha^v_i| = |\text{Tr}[\sigma_i \rho^v]|\leq 1$. Thus the variable space $\{\alpha_v:v\in V\}$ of valid product states is therefore contained in a ball of radius
\begin{equation}
    R^2 = \max_{\alpha \text{ valid}} \|\alpha\|_2^2 = \sum_{v\in V} \|\alpha^v\|^2_2\leq n\cdot d^2\Rightarrow R\leq d\cdot \sqrt{n}
\end{equation}
\end{proof}

The lower bound on $K_{r, c, \gamma}$ is more subtle. Conceptually, the lower bound hinges on the idea of perturbing a carefully chosen feasible point away from saturating any constraints, such that the perturbed point is a center for a ball within $K_{r, c, \gamma}$. To find a suitable starting point, we observe that the 'guesses' $r, c$ are overlapping, and thereby for every product state $\rho$ there is a choice of $r, c$ such that $\rho$ is already bounded away from the regularity constraints. 

\begin{claim}
\label{volumeLB}
Every product state $\rho$ is contained in some $K_{r, c, \gamma}$, where $K_{r, c, \gamma}$ contains a ball of radius $R_l\geq O(\gamma)$.
\end{claim}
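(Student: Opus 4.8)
The plan is to pick the grid point $(r,c)$ nearest to the subset magnetizations of $\rho$ (which immediately puts $\rho$ into $K_{r,c,\gamma}$), then perturb $\rho$ a little towards the maximally mixed state so that every single--qudit marginal lands strictly inside the PSD cone, and finally certify that a small $\ell_2$--ball around the perturbed parameter vector still lies inside $K_{r,c,\gamma}$ (and inside $\tilde K_{r,c,\gamma}$).

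Concretely, let $\rho=\bigotimes_u\rho_u$ have Pauli parameters $\alpha^u_i$ and true subset magnetizations $r_*^{ijk}=\sum_{u\in R^{ijk}}\alpha^u_i$, $c_*^{ijk}=\sum_{u\in L^{ijk}}\alpha^u_j$. As in the proofs of Lemma~\ref{lemma-Vadditive1} and Claim~\ref{claim-approxfeas}, since the entries of $I_\gamma$ are $\gamma n$--spaced across $[-n,n]$ and $|r_*^{ijk}|,|c_*^{ijk}|\le n$, there is $(r,c)\in(I_\gamma)^{d^4 s}$ with all $|r_*^{ijk}-r^{ijk}|,|c_*^{ijk}-c^{ijk}|\le \gamma n/2$; fix it, so that $\rho\in K_{r,c,\gamma}$ with $\gamma n/2$ of slack in each magnetization constraint. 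Now set $t=\gamma/(8d)$ and let $\rho'=\bigotimes_u\rho'_u$, $\rho'_u=(1-t)\rho_u+(t/d)\,\mathbb{I}$, whose parameters are $(1-t)\alpha^u_i$. Since $\rho_u\geq 0$ we get $\lambda_{\min}(\rho'_u)\ge t/d=\gamma/(8d^2)$ for all $u$, while the magnetizations of $\rho'$ move off those of $\rho$ by at most $t\max(|r_*^{ijk}|,|c_*^{ijk}|)\le tn\le \gamma n/8$; hence $\rho'$ is still feasible for $C_{r,c,\gamma}$ with slack $\ge 3\gamma n/8$ in every magnetization constraint. Take the parameter vector of $\rho'$ as the centre.

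It then remains to check $B(\cdot,R_l)$ about this centre is contained in $K_{r,c,\gamma}$, with $R_l=\gamma/(8d^2)=\Theta(\gamma)$. If $\nu=(\nu^u_i)$ has $\|\nu-(1-t)\alpha\|_2\le R_l$, then on each qudit $\|\rho^{(\nu)}_u-\rho'_u\|_\infty\le d^{-1}\sum_i|\nu^u_i-(1-t)\alpha^u_i|\le\|\nu-(1-t)\alpha\|_2\le R_l\le\lambda_{\min}(\rho'_u)$, so $\rho^{(\nu)}_u\geq 0$; and each subset magnetization of $\nu$ differs from that of $\rho'$ by at most $\sqrt{|R^{ijk}|}\,\|\nu-(1-t)\alpha\|_2\le\sqrt n\,R_l<3\gamma n/8$, so every magnetization constraint still holds. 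Thus the ball lies in $K_{r,c,\gamma}$. The same argument applies to $\tilde K_{r,c,\gamma}$ after replacing $\rho'$ by its block--average (which leaves all magnetizations unchanged and, by convexity, cannot lower $\lambda_{\min}$ on any block), using the bound $\big(\sum_a|\mathcal{A}_a|^2\big)^{1/2}\le n$ in place of $\sqrt{|R^{ijk}|}\le\sqrt n$, and still leaves $R_l=\Theta(\gamma)$ admissible. I expect the only genuinely delicate point to be exactly the one the perturbation addresses: a generic product state need not lie in the interior of the PSD cone (a pure state is on its boundary), so one cannot centre a ball at $\rho$ itself; mixing in weight $t=\Theta(\gamma)$ of the maximally mixed state costs a magnetization shift of order $tn$, which the accounting above shows is of the same order as the grid slack we have to spend, so nothing is lost beyond constants depending on $d$.
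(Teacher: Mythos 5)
Your proof is correct and follows the same three-step skeleton as the paper's: pick the nearest grid point $(r,c)$ (so $\rho\in K_{r,c,\gamma}$ with $\Theta(\gamma n)$ slack), perturb $\rho$ into the strict interior of the PSD cone while spending only $O(\gamma n)$ of that slack, and then certify a $\Theta(\gamma)$ ball around the perturbed point. Where you differ is in the perturbation: the paper uses the eigenvalue-truncation map of Claim~\ref{claim-psdperturbation} (bump all eigenvalues below $\delta$ up to $\delta$, then renormalize), while you depolarize, $\rho'_u=(1-t)\rho_u+(t/d)\mathbb{I}$. Your version is cleaner — the Pauli parameters just scale by $(1-t)$, so there is no renormalization bookkeeping and no need for the technical lemma — and you work directly with an $\ell_2$ ball, getting the magnetization shift bounded by $\sqrt{n}\,R_l$ rather than the paper's coarser $n\cdot\|z\|_\infty$. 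Both approaches land on $R_l=\Theta(\gamma)$, but yours has more slack to spare and is arguably the more standard trick for pushing a density matrix off the boundary of the PSD cone. Your treatment of the compressed set $\tilde K_{r,c,\gamma}$ (average after perturbing, with $\bigl(\sum_a|\mathcal{A}_a|^2\bigr)^{1/2}\leq n$ replacing $\sqrt{|R^{ijk}|}\leq\sqrt n$) is also sound and is essentially dual to the paper's choice of starting from a block-constant product state; either way the convex combination of matrices all $\geq (t/d)\mathbb{I}$ is still $\geq (t/d)\mathbb{I}$, so the PSD slack survives the averaging.
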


\begin{proof}
Consider an arbitrary product state $\rho$, let $\{\alpha^v_i\}_{v\in V, i\in [d^2-1]}$ be its pauli basis description, and let $r^*, c^* \in [-n, n]^{(d^4\cdot s)}$ be its average subset magnetizations, as defined in section \ref{section-constraints}. By definition of $I_\gamma$, there is a choice of $r, c\in (I_\gamma)^{(d^4\cdot s)}$ such that for each constraint $i,j,k$, their `subset magnetizations' are close $|r^{ijk}-r^{*, ijk}|\leq \gamma n /2$. Note this is bounded $\gamma n /2$ away from saturating any magnetization constraint. Our intention is now to perturb the description $\alpha$ by an arbitrary vector $\delta$ of bounded $\|\delta\|_\infty$, and argue that the resulting point is still feasible, to reason that $K_{r', c', \gamma}$ contains a ball of radius $\|\delta\|_\infty$. Since the magnetization constraints are linear and we are already bounded away from their saturation, their analysis is quite straightforward and we defer it to later. The main technical issue is that this perturbation may violate a psd constraint, which we address by 'bumping up' all the eigenvalues of each $\rho_u$ which are smaller than some cutoff. We formalize this notion in the following claim:

\begin{claim} \label{claim-psdperturbation}
Let $Q =\mathbb{I}/d + \alpha\cdot \sigma/d\geq 0$ be a trace 1, Hermitian, PSD matrix, and let $Q = \sum_i \lambda_i |\psi_i\rangle \langle \psi_i|$ be its eigenbasis decomposition. Fix $d^{-1}>\delta>0$ and define the $\delta$-truncation 

\begin{equation}
    Q_\delta =  \sum_{i: \lambda_i < \delta} \delta |\psi_i\rangle \langle \psi_i| + \sum_{i: \lambda_i \geq \delta} \lambda_i |\psi_i\rangle \langle \psi_i|,  \tilde{Q}_\delta = \frac{Q_\delta}{\text{Tr}[Q_\delta]}
\end{equation}

\noindent Then $\tilde{Q}_\delta \geq \frac{\delta}{2} \mathbb{I}, \text{Tr}[\tilde{Q}_\delta] = 1$, and if $\alpha_\delta \in \mathbb{R}^{d^2-1}$ is the Pauli basis description of $Q_\delta$, $\|\alpha-\alpha_\delta\|_\infty \leq d^2 \delta$.
\end{claim}

We defer the proof of this claim, and use it to note that if we $\delta$-truncate the description $\alpha$ of each single qudit density matrix in the product state $\rho$ to obtain $\alpha_\delta$, by linearity and the triangle inequality we have that the average subset magnetizations $r^{ijk}_\delta, c^{ijk}_\delta$ have changed by at most $n\cdot  \|\alpha_\delta - \alpha\|_\infty \leq nd^2 \delta$. For an appropriate choice of constant $\delta = \Theta(\gamma/d^2) = \Theta(\gamma)$, we have constructed a feasible point $\alpha_\delta$ that is $\delta/2 = \Theta(\gamma)$ bounded away from the psd constraints, and $\Theta(\gamma n)$ bounded away from the magnetization constraints. To conclude, let us consider any perturbation $z\in \mathbb{R}^{(d^2-1)n}$ to the description $\alpha_\delta$ to obtain $\alpha_z = z+\alpha_\delta$, with a sufficiently small $\|z\|_\infty = \Theta(\gamma)$. Via the argument above, $\alpha_z$ remains $\Theta(\gamma \cdot n)$ away from saturating the magnetization constraints. Moreover, if $\rho_\alpha \equiv \mathbb{I}/d + \alpha\cdot \sigma/d\geq 0$, then 

\begin{equation}
    \|\rho_\alpha- \rho_{\alpha_z}\|_\infty = \frac{1}{d}\|\sum_i z_i \sigma_i\|_\infty \leq \frac{1}{d}\sum_i^{d^2}|z_i|\|\sigma_i\|_\infty\leq d \|z\|_\infty
\end{equation}

and thus $\rho_{\alpha+z} \geq (\frac{\delta}{2} - d \|z\|_\infty)\mathbb{I}\geq \frac{\delta}{4}\mathbb{I}$ if $\|z\|_\infty\leq \frac{\delta}{2d} = \Theta(\gamma)$. Thus, we have reasoned that every product state $\rho$ is contained in some $K_{r,  c, \gamma}$, which contains a feasible point $\alpha_\delta$ and a ball of radius $\Theta(\gamma)$ around it.
\end{proof}

\begin{proof} [Proof of Claim \ref{claim-psdperturbation}] Note via the normalization that $\text{Tr}[\tilde{Q}_\delta]=1$. Let $m < d$ be the number of eigenvalues of $Q$ with $\lambda_i< \delta$, and recall $\delta < 1/d$. Then
\begin{gather}
    1\leq \text{Tr}[Q_\delta] = \sum_{i:\lambda_i < \delta}\delta +  \sum_{i:\lambda_i \geq \delta}\lambda_i = \sum_{i:\lambda_i < \delta}(\delta-\lambda_i) +  1 \leq 1 + d \delta \leq 2 \\
    \Rightarrow \tilde{Q}_\delta \geq \sum_{i: \lambda_i < \delta} \frac{\delta}{1 + d \delta} |\psi_i\rangle \langle \psi_i| + \sum_{i: \lambda_i \geq \delta} \frac{\lambda_i}{1 + d \delta} |\psi_i\rangle \langle \psi_i|,  \tilde{Q}_\delta = \frac{Q_\delta}{\text{Tr}[Q_\delta]} \geq \frac{\delta}{2}\mathbb{I}
\end{gather}

Moreover, 
\begin{gather}
\tilde{Q}_\delta - Q = \sum_{i:\lambda_i < \delta}\bigg( \frac{\delta}{Tr[Q_\delta] }- \lambda_i\bigg) +  \sum_{i:\lambda_i \geq \delta}\lambda_i \bigg(\frac{1}{Tr[Q_\delta]}-1\bigg) \\
\Rightarrow \|\tilde{Q}_\delta - Q\|_\infty \leq \max \bigg(\delta, \frac{\text{Tr}[Q_\delta] - 1}{Tr[Q_\delta]}\bigg) \leq d\delta
\end{gather}

Finally, $\|\alpha-\alpha_\delta\|_\infty = \max_i|\text{Tr}[\sigma_i(Q-\tilde{Q}_\delta)]| \leq \|\sigma_i\|_1\cdot  \|\tilde{Q}_\delta - Q\|_\infty\leq d^2\delta$ as claimed. 

\end{proof}

Let us now return to the compressed constraints $\tilde{C}_{r, c,\gamma}$. Recall how $\tilde{C}_{r, c, \gamma}$ is defined on $2^{O(s)}$ variables, such that $\tilde{K}_{r, c, \gamma}$ is contained in a ball of radius $\tilde{R}_u\leq 2^{O(s)}$ via Claim \ref{volumeUB}. To prove the lower bound, we note that WLOG we could have picked the starting product state $\rho$ in the proof of Claim \ref{volumeLB} to be one in which the density matrices of the particles in each refinement set $\mathcal{A}_a$ are all the same. The argument follows immediately and we conclude $\tilde{R}_l\geq \Theta(\gamma)$.

\section{Regularity and Applications on $k$-Local Hamiltonians}
\label{section-regularityextensions}

In this section, we discuss extensions to our Hamiltonian regularity lemma to `dense' $k$-Local Hamiltonians defined on qu\textit{d}its. Our strategy generically follows the proof techniques of section \ref{section-regularity}, in first restricting our attention to product states and developing a weak regularity result for these systems, and then lifting the techniques to general entangled states via the product state approximation toolkit from \cite{Brando2013ProductstateAT} (section \ref{section-existence}). The cut decompositions we construct can be understood simply as multi-colored versions of existing regularity statements for hypergraphs by \cite{Frieze1999QuickAT}, and \cite{Alon2002RandomSA}. 

We organize this appendix as follows. In section \ref{subsection-kreg} we present and define the Hamiltonian weak regularity lemma for these systems. Then, in section \ref{subsection-algsklocal}, we show how they can be used to develop algorithms. Once again, the general idea is to exploit the low rank structure to the cut decomposition, and relax the optimization problem to checking a series of Complex SDPs (or Conic programs). Finally, in section \ref{subsection-vscklocal}, we prove a generalization of our vertex sample complexity results for $k$-local systems. Given how these results parallel that of the 2-local case, we emphasize this section to proofs of the modifications.

\subsection{A Regularity Lemma}
\label{subsection-kreg}

Let us assume the local dimension of each particle is $d = 2^{d'} = O(1)$, and observe that the set of Pauli matrices $\mathcal{P}_{\log d} = \{I, X, Y, Z\}^{\otimes \log d}$ on $\log d$ qubits defines a complete basis for any single qu\textit{d}it density matrix. We begin by expressing any $k$-local Hamiltonian over Qu\textit{d}its $H = \sum_{e\in E}h_e$, where $h_e$ acts non-trivially on the $k$-hyperedge $e\in E$, through a Pauli basis decomposition,

\begin{equation}
    H = d^{-k}\sum_{\sigma\in (\mathcal{P}_{\log d})^{\otimes k}} \sum_{e\in E}\text{Tr}[h_e \sigma_e\otimes \mathbb{I}_{V\setminus
    e}] \sigma_e \otimes \mathbb{I}_{V\setminus
    e} 
\end{equation}

\noindent where each $\sigma_e = \bigotimes_j^k \sigma_{e_j}$ is a $k$-qudit Pauli. 

Note that we have re-ordered the summation to explicitly group the interactions with the same basis. Indeed, each of $d^{2k}$ terms in the outer summation will correspond to a different colour in the cut decomposition. We can now explicitly define the tensors $M^c$ for each color $c\in [d^{2k}]$, correspondent to the hyperedge adjacency matrix for edges which support the Pauli term $\sigma^c$:

\begin{equation}
    M_{(u_1\cdots u_k)}^c \equiv \text{Tr}[h_{(u_1\cdots u_k)} \sigma^c_{(u_1\cdots u_k)}\otimes \mathbb{I}_{V\setminus
    (u_1\cdots u_k)}] \text{ where } (u_1\cdots u_k) \in ([n])^{\times k}
\end{equation}

We note we can define every entry $(u_1\cdots u_k)$ of $M^c$ WLOG, so long as we compute the trace above in an appropriate order $u_i<u_j, i<j\in [k]$ in the tensor product. In this setting, the notion of a cut decomposition for an array or `$k-$matrix' as referred to by \cite{Alon2002RandomSA}, is simply a sum over a set of `cut arrays':

\begin{definition}
For any real value $d\in \mathbb{R}$ and $k$ sets $S= S_1\cdots S_k\subset [n]$, we define $CUT(d, S)$ to be a $k$-cut array as follows:

\begin{equation}
    CUT(d, S)_{i_1\cdots i_k} = \begin{cases} d & i_1\in S_1 \cdots i_k\in S_k \\
        0 & \text{ otherwise}
    \end{cases}
\end{equation}

\end{definition}

We now have a choice on which tensor cut decomposition to use. The works by \cite{Frieze1999QuickAT} and \cite{Alon2002RandomSA} developed different high-dimensional generalizations to cut decompositions, which differ in their accuracy and rank guarantees, as well as their runtimes. The construction by \cite{Frieze1999QuickAT} can be constructed implicitly via sampling (in the probe model of computation) in time roughly $2^{\tilde{O}(1/\epsilon^2)}$, and provides a width $O(\epsilon^{2-2k})$. Thus, this is the construction we discuss for our additive approximation schemes which run in constant time. Correspondingly, the result by \cite{Alon2002RandomSA} provides a better width, $O(1/\epsilon^2)$, at the cost of polynomial runtime in $n$. That's the construction we use for the vertex sample complexity, and for the explicit algorithms. 

\begin{theorem}
    [\cite{Frieze1999QuickAT}\label{theorem-kmatrixregfrieze}] Let $J$ be an arbitrary $k$-dimensional matrix on $X_1\times \cdots \times X_k$ where we assume that $k\geq 2$ is a fixed constant $k=O(1)$. Let $N = |X_1|\cdots |X_k|$ and $\epsilon, \delta > 0$. Then, with probability $1-\delta$ and in time $2^{\tilde{O}(1/\epsilon^2)}/\delta^2$ we can find a cut decomposition of width $O(\epsilon^{2-2k})$, error $\epsilon\cdot N^{1/2}\cdot \|J\|_F$, and coefficient length $O(\|J\|_F/\sqrt{N})$.
\end{theorem}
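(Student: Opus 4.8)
The plan is to recall and adapt the tensor (hypergraph) cut decomposition of \cite{Frieze1999QuickAT}: the statement is essentially their Theorem specialized to a $k$-dimensional array, so the work is in setting up the greedy ``energy increment'' construction and checking the parameters. First I would set $W_0 = J$ and proceed greedily: given $W_t = J - \sum_{s\le t} D^{(s)}$, work with the tensor cut norm $\|W_t\|_C = \max_{S_1\times\cdots\times S_k} |\sum_{i_1\in S_1,\dots,i_k\in S_k} (W_t)_{i_1\cdots i_k}|$ (equivalent up to a factor $2^k$ to the corresponding $\infty$-operator norm). If $\|W_t\|_C \le \epsilon N^{1/2}\|J\|_F$ I stop; otherwise I find sets $S_1,\dots,S_k$ approximately achieving the maximum, and extract the cut array $D^{(t+1)} = CUT(d_{t+1}, S)$ with $d_{t+1} = (\,\sum_{i\in S}(W_t)_i\,)/(|S_1|\cdots|S_k|)$. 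The elementary identity $\|W_t - D^{(t+1)}\|_F^2 = \|W_t\|_F^2 - (\sum_{i\in S}(W_t)_i)^2/(|S_1|\cdots|S_k|)$ shows the Frobenius energy strictly decreases; since it begins at $\|J\|_F^2$ and stays nonnegative, a lower bound on the per-step decrement bounds the width $s$, while telescoping the same identity directly yields $\|W\|_F \le \|J\|_F$ and the coefficient-length bound $\sum_t d_t^2 = O(\|J\|_F^2/N)$.

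The nonroutine ingredient — and the reason the width here is $O(\epsilon^{2-2k})$ rather than the $O(\epsilon^{-2})$ one gets in polynomial time or for $k=2$ — is the constant-time subroutine that approximately maximizes the tensor cut norm in the probe model. Following \cite{Frieze1999QuickAT}, one samples $\mathrm{poly}(1/\epsilon)$ coordinates from $k-1$ of the $k$ modes, exhaustively searches over all subsets of those samples to ``guess'' $S_2,\dots,S_k$ up to the sampling resolution, and for each guess determines $S_1$ by sign-thresholding the estimated vector $v_{i_1} = \sum_{i_2\in S_2,\dots,i_k\in S_k}(W_t)_{i_1\cdots i_k}$. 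Because the estimation errors incurred over the $k-1$ sampled modes compound, the cut $S$ that this procedure certifies satisfies only $|\sum_{i\in S}(W_t)_i| \ge \|W_t\|_C - (\text{a polynomially-in-}\epsilon\text{ smaller amount})$, which makes the guaranteed per-step decrease of $\|W_t\|_F^2$ of order $\epsilon^{2k-2}\|J\|_F^2$; hence the greedy process terminates after $s = O(\epsilon^{2-2k})$ steps. The exhaustive search over $2^{\mathrm{poly}(1/\epsilon)}$ subset-guesses dominates the cost, giving $2^{\tilde O(1/\epsilon^2)}$ time per step; amplifying each sampled estimate by a median over $O(\delta^{-2})$ repetitions and union-bounding the failure probabilities over the $s$ steps yields overall success probability $1-\delta$ in time $2^{\tilde O(1/\epsilon^2)}/\delta^2$.

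I expect the main obstacle to be exactly this error analysis of the sampling subroutine: one must show that, with the stated sample sizes, the approximately optimal cut $S$ still produces a cut array whose removal decreases $\|W_t\|_F^2$ by $\Omega(\epsilon^{2k-2}\|J\|_F^2)$ whenever $\|W_t\|_C \ge \epsilon N^{1/2}\|J\|_F$, uniformly over the $O(\epsilon^{2-2k})$ rounds. Everything else — the energy identity, the telescoping bounds on $\|W\|_F$ and on $\sum_t d_t^2$, and the union bound over rounds — is routine. Since the quantitative heart of this is precisely the content of the tensor cut-decomposition theorem of \cite{Frieze1999QuickAT}, the formal proof reduces to invoking that theorem, applied here to the color arrays $M^c$ with $N = n^k$ and the scaling $\|J\|_F \leftrightarrow \|M^c\|_F$.
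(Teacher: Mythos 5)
The paper states this result as a direct citation of Frieze and Kannan and provides no proof of its own, and your proposal correctly recognizes this, ultimately reducing to the same citation. Your intermediate sketch of the greedy energy-increment construction, the telescoping bounds on $\|W\|_F$ and $\sum_t d_t^2$, and the reason the width degrades to $O(\epsilon^{2-2k})$ (the weaker approximation guarantee of the probe-model cut-finding subroutine in higher arity) is a faithful account of the internals of \cite{Frieze1999QuickAT}, so this matches the paper's treatment.
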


\begin{theorem}
[\cite{Alon2002RandomSA}\label{theorem-kmatrixreg}]  In the setting of Theorem \ref{theorem-kmatrixregfrieze}, in time $2^{O(1/\epsilon^2)}O(N\log 1/\delta)$ and with probability at least $1-\delta$, we can find a cut decomposition of width $O(\epsilon^{-2})$, error at most $\epsilon \sqrt{N}\|J\|_F$, and the following bound on the coefficient length: $\sum_i|d_i|\leq 2\|J\|_F/\epsilon\sqrt{N}$, where $(d_i)^s_{i=1}$ are the coefficients of the cut arrays.
\end{theorem}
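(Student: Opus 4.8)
This is a result of \cite{Alon2002RandomSA}, and the plan is to run their greedy ``energy-decrement'' construction, the $k$-array analogue of the Frieze--Kannan scheme that underlies Theorem \ref{theorem-frieze}. One maintains a residual array $W$, initialized to $W\gets J$, together with a list of extracted cut arrays, and repeats the following: compute (or estimate) the tensor cut norm $\|W\|_\square \equiv \max_{S_1\subseteq X_1,\dots,S_k\subseteq X_k}\big|\sum_{i_1\in S_1,\dots,i_k\in S_k} W_{i_1\cdots i_k}\big|$; if $\|W\|_\square\le\epsilon\sqrt N\|J\|_F$ then stop and declare the current $W$ the error array; otherwise take near-optimal witness sets $S=(S_1,\dots,S_k)$, set $d$ to the average of $W$ over the box $B=S_1\times\cdots\times S_k$, subtract $\mathrm{CUT}(d,S)$ from $W$, and append $\mathrm{CUT}(d,S)$ to the list.

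First I would run the potential argument bounding the width. The ``defect'' identity for the box-average choice of $d$ gives $\|W_{t+1}\|_F^2 = \|W_t\|_F^2 - |B_{t+1}|\,d_{t+1}^2$, and since the cut value equals $|B_{t+1}|\,|d_{t+1}|$ and is at least $\epsilon\sqrt N\|J\|_F$ while $|B_{t+1}|\le N$, one has $|B_{t+1}|d_{t+1}^2 = (\text{cut value})^2/|B_{t+1}| \ge \epsilon^2\|J\|_F^2$; so each step decreases the Frobenius energy by at least $\epsilon^2\|J\|_F^2$. As $\|W_0\|_F^2=\|J\|_F^2$ and energy is nonnegative, the loop halts after $s=O(\epsilon^{-2})$ steps, which is the claimed width, and the terminal residual satisfies $\|W\|_\square\le\epsilon\sqrt N\|J\|_F$ (the stated error bound follows after adjusting $\epsilon$ by the usual constant relating $\|\cdot\|_\square$ to the $\infty\rightarrow1$-type norm). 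The coefficient-length bound $\sum_i|d_i|\le 2\|J\|_F/(\epsilon\sqrt N)$ then comes from Cauchy--Schwarz over the $O(\epsilon^{-2})$ terms, combined with the telescoped identity $\sum_i|B_i|d_i^2\le\|J\|_F^2$ and a lower bound $|B_i|\ge\Omega(\epsilon^2 N)$ on the box sizes (forced by the cut-value threshold together with $|d_i|\le\|W_{i-1}\|_F/\sqrt{|B_i|}$); the $\sqrt N$ normalization in the target error is calibrated precisely so that this bookkeeping closes, and I would not reproduce the routine constants here.

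The substantive part — and the step I expect to be the main obstacle — is making each iteration run in time $2^{O(\epsilon^{-2})}O(N\log 1/\delta)$ rather than $2^{\Theta(|X_1|+\cdots+|X_k|)}$, since one cannot enumerate all choices of $(S_1,\dots,S_k)$. Here I would follow the sampling-based cut-finding subroutine of \cite{Alon2002RandomSA}: draw in each coordinate $j$ a random sample $T_j$ of $O(\epsilon^{-2})$ indices, and for each of the $2^{O(\epsilon^{-2})}$ ``boundary conditions'' specifying the sets $S_j\cap T_j$, complete each $S_j$ greedily by putting index $i$ into $S_j$ iff this increases the cut value estimated from the samples on the other coordinates — a quantity that, with the other coordinates fixed, is an affine function of the slice of $W$ obtained by varying $i$ in coordinate $j$, and which is evaluable in $O(s)=O(\epsilon^{-2})$ time per entry because $W = J - \sum(\text{cut arrays})$. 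A Chernoff/Azuma concentration bound over the random samples shows that the best boundary condition yields sets whose true cut value is within $(\epsilon/2)\sqrt N\|J\|_F$ of optimal, and estimating the cut value of each candidate to that accuracy costs $O(N)$ time (amplified by $O(\log 1/\delta)$ for the global confidence). Multiplying the $2^{O(\epsilon^{-2})}$ candidates, the $O(N\log1/\delta)$ per-candidate cost, and the $O(\epsilon^{-2})$ outer iterations gives the stated runtime, and a union bound over iterations and candidates controls the failure probability $\delta$. The two delicate points are (i) showing that the greedy completion of a sampled boundary condition is essentially optimal — this is exactly where the concentration argument and the particular $O(\epsilon^{-2})$ sample size enter — and (ii) verifying that the width, error, and coefficient-length guarantees survive the additive $\epsilon/2$ slack introduced by solving the cut problem only approximately.
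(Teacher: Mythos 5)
This is an external result cited from \cite{Alon2002RandomSA}; the paper does not prove it, so there is no in-paper argument to compare yours against. What I can evaluate is whether your reconstruction actually closes to the claimed bounds, and the coefficient-length bookkeeping has a genuine gap.

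You set $d_t$ to be the box average, $d_t = (\text{cut value})_t / |B_t|$, which makes the Frobenius decrement exactly $|B_t|\,d_t^2$. From the three facts you list --- $\sum_t |B_t|\,d_t^2 \le \|J\|_F^2$, $|B_t| \ge \Omega(\epsilon^2 N)$, and $s = O(\epsilon^{-2})$ --- Cauchy--Schwarz gives
\begin{equation*}
\sum_t |d_t| \;\le\; \Bigl(\sum_t |B_t|\,d_t^2\Bigr)^{1/2}\Bigl(\sum_t |B_t|^{-1}\Bigr)^{1/2} \;\le\; \|J\|_F\sqrt{\frac{s}{\epsilon^2 N}} \;=\; O\!\left(\frac{\|J\|_F}{\epsilon^2 \sqrt N}\right),
\end{equation*}
which is a factor of $1/\epsilon$ worse than the stated $2\|J\|_F/(\epsilon\sqrt N)$, and this is tight for the box-average choice (all $|B_t|$ equal to $\epsilon^2 N$, all $|d_t|$ equal). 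So the claim that ``the bookkeeping closes'' is not correct as written; the $\sqrt N$ normalization in the error threshold does \emph{not} rescue it.

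The fix is that \cite{Alon2002RandomSA} use a different normalization: $d_t = (\text{cut value})_t / N$ rather than the box average. The decrement is then $(2N - |B_t|)\,d_t^2 \ge N d_t^2$, which is still at least $\epsilon^2\|J\|_F^2$ when the stopping rule has not fired, so the width bound $s = O(\epsilon^{-2})$ survives. But now $\sum_t N d_t^2 \le \|J\|_F^2$ gives $\sum_t d_t^2 \le \|J\|_F^2/N$ directly, without any box-size lower bound, and Cauchy--Schwarz over $s = O(\epsilon^{-2})$ terms yields $\sum_t |d_t| \le \sqrt{s}\,\|J\|_F/\sqrt N = O(\|J\|_F/(\epsilon\sqrt N))$ as claimed. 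This normalization is load-bearing: the paper uses the $\sum_i |d_i|$ bound directly to control the diagonal terms in Claim \ref{claim-kdpsreg} and the closeness of guesses in Claim \ref{claim-kfeas}, and a $1/\epsilon^2$ coefficient length would degrade those error estimates. The rest of your outline --- the energy-decrement potential argument for the width, the $\|\cdot\|_\square$ stopping rule for the error (with the $\|\cdot\|_C$ vs.\ $\|\cdot\|_{\infty\to 1}$ constant absorbed), and the sample-boundary-conditions-then-greedily-complete search giving $2^{O(1/\epsilon^2)}O(N\log 1/\delta)$ time --- is a reasonable reconstruction of the ADKK scheme.
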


\begin{remark}
The error referred to in the theorems above is in the cut norm $\|\cdot \|_C$. As discussed in \cite{Alon2002RandomSA}, it is equivalent to the statement that
\begin{equation}
     \|W\|_{\infty\rightarrow 1} = \max_{x_1\cdots x_k\in \mathbb{R}^n, \|x_i\|_\infty=1} \bigg| \sum_{i_1\cdots i_k \in [n]}W_{i_1\cdots i_k} \prod_j x_{j, i_j}\bigg| \leq 2^k \|W\|_C 
\end{equation}
\end{remark}

In this setting, we apply either of the theorems above to each of $d^{2k}$ coloured tensors $M^c$ to obtain cut decompositions $M^c = D^c+W^c$, with cuts $D^c = \sum^s_{i=0} D^{c, i}$ of width $s$ and error tensor $W^c$. Finally, we recompose them to define our Hamiltonian cut decomposition:

\begin{equation}
    H_D = \frac{1}{k! d^k}\sum_{c\in [d^{2k}]} \sum_{i=0}^s \sum_{e\in [n]^k \atop e_a\neq e_b} D^{c,  i}_e \sigma^c_e \otimes \mathbb{I}_{V\setminus e}
\end{equation}

\noindent where $\sigma^c_e$ is the Pauli matrix on $k$ qudits acting on a hyperedge $e$, indexed the color $c\in [d^{2k}]$, and we implicitly order the tensor product in $\sigma^c_e$ such that $e_a<e_b, a<b\in[k]$ with a factor of $k!$ from a hand-shaking argument. For simplicity, in the claim below we restrict our attention to local Hamiltonians $H$ of $m$ hyper-edges of bounded strengths $\|h_e\|_\infty\leq 1$, although the proofs are readily generalizable. Following our proof techniques we immediately readout a claim over product states:

\begin{claim}\label{claim-kdpsreg} 
Fix $d, k=O(1)$. For any product state over $n$ qu\textit{d}its $\rho = \otimes_{u\in V} \rho_u$, and $H_D$ defined as above, then $|\text{Tr}[(H-H_D)\rho]|\leq \epsilon n^{k/2}\sqrt{m}$. .
\end{claim}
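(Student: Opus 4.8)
The plan is to follow the proof of Theorem \ref{theorem-psregularity} almost verbatim, using that restricting to a product state linearises the energy and decouples the $d^{2k}$ colours. First I would expand $\text{Tr}[(H-H_D)\rho]$ in the Pauli basis: since $\rho=\otimes_u\rho_u$, writing $\alpha^i_u=\text{Tr}[\sigma^i_u\rho_u]$ (so $|\alpha^i_u|\le 1$, with the convention $\alpha^0_u=1$) one obtains a multilinear form
\begin{equation}
\text{Tr}[(H-H_D)\rho]=\frac{1}{k!\,d^k}\sum_{c\in[d^{2k}]}\sum_{\substack{e\in[n]^k\\ e_a\neq e_b}} W^c_e\prod_{j=1}^k\alpha^{c_j}_{e_j},\qquad W^c:=M^c-D^c,\quad D^c:=\sum_{i=0}^sD^{c,i},
\end{equation}
where $M^c$ is taken to vanish on any tuple with a repeated index (the Hamiltonian has no such interaction), so that $W^c_e=-D^c_e$ there. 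By the triangle inequality it is enough to bound each colour's contribution separately.

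For a fixed colour $c$ I would write the sum over distinct tuples as the full sum over $[n]^k$ minus the ``diagonal'' sum over repeated-index tuples. For the full sum, the choice $x_{j,u}=\alpha^{c_j}_u$ gives vectors with $\|x_j\|_\infty\le 1$, so by the remark following Theorem \ref{theorem-kmatrixreg}, $\big|\sum_{e\in[n]^k}W^c_e\prod_j x_{j,e_j}\big|\le\|W^c\|_{\infty\to1}\le 2^k\|W^c\|_C$; and Theorem \ref{theorem-kmatrixregfrieze} applied to $M^c$, with $N=n^k$ and width $s=O(\epsilon^{2-2k})$, bounds this error by $\epsilon\,n^{k/2}\|M^c\|_F$. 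Since $|M^c_e|=|\text{Tr}[h_e\sigma^c_e]|\le\|h_e\|_1\le d^k$ and $M^c$ has $O(m)$ nonzero entries, $\|M^c\|_F=O(\sqrt m)$. Summing over the $d^{2k}$ colours and dividing by $k!\,d^k$ yields $O(\epsilon\,n^{k/2}\sqrt m)$ after rescaling $\epsilon$ by an $O(1)$ factor depending only on $d,k$.

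It remains to control the diagonal terms $\frac{1}{k!\,d^k}\sum_c\sum_{e:\,e_a=e_b\text{ for some }a\neq b}|D^c_e|$. There are only $O(n^{k-1})$ such tuples, and each satisfies $|D^c_e|\le\sum_{i=0}^s|d^{c,i}|\le s^{1/2}\big(\sum_i(d^{c,i})^2\big)^{1/2}\le s^{1/2}\cdot O(\|M^c\|_F/\sqrt N)$ by Cauchy--Schwarz and the coefficient-length bound of Theorem \ref{theorem-kmatrixregfrieze}. With $s^{1/2}=O(\epsilon^{1-k})$ and $N=n^k$ this is $O(\epsilon^{1-k}n^{k/2-1}\|M^c\|_F)$ per colour, so the total diagonal contribution is at most $\epsilon\,n^{k/2}\sqrt m$ provided $n=\Omega(\epsilon^{-k})$ — the analogue of the hypothesis $\epsilon^{-2}=o(n)$ used in Theorem \ref{theorem-psregularity}. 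Adding the two estimates and absorbing constants into $\epsilon$ completes the proof.

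The step I expect to be the real work is bookkeeping rather than ideas: reconciling the symmetry factors (the $k!$ and ordered-vs-unordered hyperedge convention, the $d^{-k}$ Pauli normalisation) across $H$, the colour tensors $M^c$, and $H_D$, and checking that the cut-norm/$\infty\to1$-norm estimate is applied to the right array and that the diagonal bound forces the precise lower bound on $n$ in terms of $1/\epsilon$.
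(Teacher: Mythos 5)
Your proof is correct and follows essentially the same route as the paper's: decouple the colours via the product-state structure, bound the off-diagonal contribution per colour by the cut-norm error of Theorem \ref{theorem-kmatrixregfrieze} (via the remark relating $\|\cdot\|_{\infty\to 1}$ and $\|\cdot\|_C$), and control the $O(n^{k-1})$ repeated-index tuples via the coefficient-length bound and Cauchy--Schwarz, arriving at the same lower bound $n=\Omega(\epsilon^{-k})$. Your writeup is if anything a little more explicit in separating the full sum from the diagonal sum and in tracking $\|M^c\|_F=O(\sqrt m)$, but the ideas and estimates match.
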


\begin{proof}
Let us make the observation that any Pauli matrix $\sigma^c$ defined on $k$ qudits can be expressed as the tensor product of $k$ single-qudit Paulis $\sigma^c = \bigotimes_j^k \sigma^{c, j}$, and reduce the energy difference to an expression of the $\infty\rightarrow 1$ norm by removing the diagonal terms:

\begin{gather}
   \bigg| \text{Tr}[(H-H_D)\rho]\bigg| \leq \frac{1}{d^{k}k!} \sum_{c\in [d^{2k}]} \bigg| \sum_{e\in [n]^k \atop e_a\neq e_b} W^{c}_e \prod_{j}^k \text{Tr}[\sigma^{c, j}\rho_{e_j}]\bigg| \leq \\ \leq \frac{1}{d^{k}k!} \sum_{c\in [d^{2k}]}\bigg( \|W^c\|_{\infty \rightarrow 1} + k^2 n^{k-2}\sum_v \max_{v, e'} |W^c_{v, v, e'}| \bigg)
\end{gather}

If we instantiate $H_D$ using Theorem \ref{theorem-kmatrixregfrieze}, the maximum diagonal term is $\max_{v, e'} |W^c_{v, v, e'}| = s^{1/2} \|J\|_F n^{-k/2} = \epsilon^{1-k}\|J\|_F n^{-k/2}$ by the Cauchy-Schwartz inequality on the coefficient length, and thus $k^2 n^{k-2}\sum_v \max_{v, e'} |W^c_{v, v, e'}|\leq \|W^c\|_{\infty \rightarrow 1}$ if $n =\omega(\epsilon^{-k}).$  In Theorem \ref{theorem-kmatrixreg} the maximum diagonal term is $\sum_i|d_i|\leq 2\|J\|_Fn^{-k/2}/\epsilon$, and thus $k^2 n^{k-2}\sum_v \max_{v, e'} |W^c_{v, v, e'}|\leq \|W^c\|_{\infty \rightarrow 1}$ if $n =\omega(\epsilon^{-2}).$ In either case we prove the claim.

\end{proof}

\begin{lemma} \label{lemma-kcutdecomp}
In the context of Claim \ref{claim-kdpsreg}, $\|H-H_D\| \leq \epsilon n^{k/2} m^{1/2}$
\end{lemma}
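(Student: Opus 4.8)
The plan is to mirror the proof of Lemma~\ref{lemma-hregularity} (the $2$-local case), upgrading Claim~\ref{claim-kdpsreg} from product states to arbitrary states via the product-state approximation toolkit of Theorem~\ref{theorem-BHgeneral}. First I would invoke Schatten-norm duality to write $\|H-H_D\|_\infty = |\mathrm{Tr}[(H-H_D)\psi^*]|$ for some normalized state $\psi^*$. Applying Theorem~\ref{theorem-BHgeneral} to $\psi^*$ and the Hamiltonian $H' = H-H_D$ produces a globally separable state $\sigma$ with $|\mathrm{Tr}[(H-H_D)(\psi^*-\sigma)]| \le \delta_l$, where $\delta_l = O\!\big(\tfrac{l}{n}|J'|_1 + \tfrac{n^{k/2}}{\sqrt{l}}\|J'\|_F\big)$ and $J'$ is the interaction-strength array of $H'$. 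By the triangle inequality, $\|H-H_D\|_\infty \le \delta_l + |\mathrm{Tr}[(H-H_D)\sigma]|$, and the second term is handled by Claim~\ref{claim-kdpsreg}: since $\sigma$ is a convex mixture of $n$-qudit product states and the bound of Claim~\ref{claim-kdpsreg} is uniform over product states and linear in the state, $|\mathrm{Tr}[(H-H_D)\sigma]| \le \epsilon' n^{k/2}\sqrt{m}$ for whatever $\epsilon'$ we build into the width $s$ of the cut decomposition.

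The remaining work is to control $\delta_l$, i.e.\ to bound the norms of $J'$. Here I would use the Frobenius guarantees of the tensor cut decomposition (Theorem~\ref{theorem-kmatrixregfrieze} or Theorem~\ref{theorem-kmatrixreg}): a triangle inequality over the $d^{2k}=O(1)$ colors together with the bound on the error tensors $W^c$ (and, where needed, the coefficient-length bound to control $\|D^c\|_F$) gives $\|J'\|_F \le \sum_c \|W^c\|_F = O(\|J\|_F) = O(\sqrt{m})$, using $\|J\|_F^2 = \sum_e \|h_e\|_\infty^2 \le m$. Since $J'$ is a $k$-array with at most $n^k$ nonzero entries, Cauchy--Schwarz gives $|J'|_1 \le n^{k/2}\|J'\|_F$. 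Plugging in, $\delta_l = O\!\big(\|J\|_F\, n^{k/2}(\tfrac{l}{n} + \tfrac{1}{\sqrt{l}})\big)$, which is minimized at $l = \Theta(n^{2/3})$, yielding $\min_l \delta_l = O(n^{k/2-1/3}\sqrt{m})$. This is at most $\tfrac{\epsilon}{2}\, n^{k/2}\sqrt{m}$ whenever $n = \Omega(\epsilon^{-3})$, a regime already within the hypotheses of Claim~\ref{claim-kdpsreg}. Choosing the width of the cut decomposition so that Claim~\ref{claim-kdpsreg} contributes at most $\tfrac{\epsilon}{2}\,n^{k/2}\sqrt{m}$ and adding the two terms gives $\|H-H_D\|_\infty \le \epsilon n^{k/2}\sqrt{m}$.

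The main obstacle, and essentially the only place the $k$-local argument differs nontrivially from the $2$-local one, is keeping the bookkeeping of $\|J'\|_F$ and $|J'|_1$ tight: the cut decomposition of a sparse Hamiltonian can produce an $H_D$ with a dense interaction pattern, so crude bounds such as $|J'|_1 \le n^k$ would be far too lossy, and it is essential to route everything through the Frobenius norm and the coefficient-length estimate (exactly as in the diagonal-term analysis of Claim~\ref{claim-kdpsreg}). One should also verify that whichever version of the tensor cut decomposition is used downstream supplies a usable Frobenius bound on $W^c$; if not stated directly it follows from $\|W^c\|_F \le \|M^c\|_F + \|D^c\|_F$ together with the width and coefficient-length guarantees of Theorems~\ref{theorem-kmatrixregfrieze} and \ref{theorem-kmatrixreg}.
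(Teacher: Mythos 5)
Your proposal follows the same route as the paper: the paper's proof of Lemma~\ref{lemma-kcutdecomp} is a one-line reference to the proof of Lemma~\ref{lemma-hregularity} together with Claim~\ref{claim-kdpsreg} and Theorem~\ref{theorem-BHgeneral}, which is exactly the Schatten-duality-plus-separable-state argument you write out, and your convexity observation for extending Claim~\ref{claim-kdpsreg} from product to separable states is correct. Two small points. First, the fallback you suggest for the Frobenius bound on $W^c$ --- the triangle inequality $\|W^c\|_F \le \|M^c\|_F + \|D^c\|_F$ together with the coefficient-length bound --- is lossier than you say: it yields $\|D^c\|_F = O(\sqrt{s}\,\|J\|_F)$, so with $s = O(\epsilon^{2-2k})$ you would inherit a factor $\epsilon^{1-k}$ in $\|J'\|_F$ and would need $n = \Omega(\epsilon^{-3k})$, not $\Omega(\epsilon^{-3})$, to close the argument. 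The clean way to get $\|W^c\|_F \le \|M^c\|_F$ is the Pythagorean monotonicity of the Frieze--Kannan iteration (each subtracted cut array is the mean of the current residual over its box, hence orthogonal to the new residual), which is precisely what Theorem~\ref{theorem-frieze} states in the $2$-array case and which the tensor constructions also satisfy even though the paper does not restate it in Theorems~\ref{theorem-kmatrixregfrieze} and~\ref{theorem-kmatrixreg}; this is a real, if minor, omission in both your write-up and the paper's. Second, your optimized $\min_l\delta_l = O(n^{k/2-1/3}\sqrt m)$ is slightly weaker than (but consistent with) the quantity $O(m^{2/3}n^{(k-1)/3})$ that the paper records; both are $o(n^{k/2}\sqrt m)$, so either suffices to absorb the product-state-approximation error into the $\epsilon n^{k/2}\sqrt m$ budget.
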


\begin{proof}
Follows from the proof of Lemma \ref{lemma-hregularity}, Claim \ref{claim-kdpsreg} and the product state approximation Theorem \ref{theorem-BHgeneral}, and the observation that $m^{2/3}n^{(k-1)/3}=o(n^{k/2}m^{1/2})$.
\end{proof}

\subsection{Algorithms for $k$-local Hamiltonians}
\label{subsection-algsklocal}

We prove a generalization to the results in section \ref{section-gsptas-dense}, on  approximation schemes for dense $k$ local systems. The main conclusion of this section is a sampling algorithm which approximates the ground state energy of a given $k$ local Hamiltonian $H$ up to an additive error of $\epsilon\cdot n^{k/2} m^{1/2}$:

\begin{theorem}
    Fix $d,k=O(1)$, and let $H$ be a $k$ local Hamiltonian on $n$ qudits of local dimension $d$ and $m$ bounded strength interactions. Then there exists an algorithm to estimate the ground state energy of $H$ up to an additive error of $\epsilon\cdot n^{k/2} m^{1/2}$ in time $2^{\tilde{O}(\epsilon^{2-2k})}$ and is correct with probability $.99$.
\end{theorem}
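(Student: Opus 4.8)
The plan is to mirror the proof of Theorem \ref{theorem-gsptasdense}, replacing the bilinear cut-decomposition machinery by its $k$-linear analogue. First I would invoke the $k$-local Hamiltonian regularity lemma: apply Theorem \ref{theorem-kmatrixregfrieze} to each of the $d^{2k}$ coloured tensors $M^c$ to obtain, with probability $1-\delta$ and in time $2^{\tilde{O}(1/\epsilon^2)}/\delta^2$, an implicit cut decomposition $H_D$ of width $s = O(\epsilon^{2-2k})$ such that $\|H-H_D\|\le \epsilon n^{k/2}m^{1/2}$ (Lemma \ref{lemma-kcutdecomp}) and, restricted to product states, $|\text{Tr}[(H-H_D)\rho]|\le \epsilon n^{k/2}m^{1/2}$ (Claim \ref{claim-kdpsreg}). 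Combined with the product-state approximation Theorem \ref{theorem-BHgeneral}, whose error $\min_l\delta_l = O(n^{(k-1)/3}m^{2/3})$ is $o(n^{k/2}m^{1/2})$, this reduces the problem to computing the minimum-energy product state of $H_D$ up to additive error $O(\epsilon n^{k/2}m^{1/2})$.

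Next I would set up the $k$-linear subset-magnetization constraints. For a product state $\rho=\otimes_u\rho_u$ with Pauli description $\{\alpha_i^u\}$, the energy contributed by a single cut array $CUT(d^{c,i}, S_1^{c,i}\cdots S_k^{c,i})$ of colour $c=(c_1\cdots c_k)$ equals, up to the $O(n^{k-1})$ ``collision'' terms in which two hyperedge indices coincide (which are negligible exactly as in the proof of Claim \ref{claim-kdpsreg}, assuming $n = \omega(\mathrm{poly}(1/\epsilon))$), the product $d^{c,i}\prod_{j=1}^k\big(\sum_{u\in S_j^{c,i}}\alpha_{c_j}^u\big)$ of the $k$ partial magnetizations. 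I would discretize each of the $k\cdot d^{2k}\cdot s = O(\epsilon^{2-2k})$ magnetizations over a grid $I_\gamma$ of size $O(1/\gamma)$, and for each of the $|I_\gamma|^{k d^{2k}s}$ guesses introduce the affine magnetization constraints together with the single-qudit PSD constraints, exactly as in Section \ref{section-constraints}; compressing along the common refinement of the $k d^{2k}s$ sets (with $A=2^{k d^{2k}s}=2^{O(\epsilon^{2-2k})}$ classes) reduces each feasibility program to $2^{O(\epsilon^{2-2k})}$ real variables and constraints. Estimating the refinement-class sizes by sampling (the analogue of Claim \ref{claim-estimatesizes}) and the volume bounds of Appendix \ref{appendix-volume} carry over verbatim, so each program is solved in time $2^{O(\epsilon^{2-2k})}$ by Theorem \ref{theorem-bertsimas}; outputting $\min d^{c,i}\prod_j r_j^{c,i}$ over the feasible guesses gives the estimate, and the total runtime is $2^{\tilde{O}(\epsilon^{2-2k})}$.

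The correctness argument is the $k$-linear version of Lemma \ref{lemma-feasibleconstraints} and Lemma \ref{lemma-Vadditive1}. The one genuinely new ingredient, and the main obstacle, is a multilinear replacement for the closeness Lemma \ref{closenesslemma}: if $|r_j^{c,i}|,|r'^{c,i}_j|\le n$ and $|r_j^{c,i}-r'^{c,i}_j|\le\gamma n$ for all $j$, then $\sum_{c,i}|d^{c,i}|\cdot\big|\prod_j r_j^{c,i}-\prod_j r'^{c,i}_j\big|\le k\gamma n^{k}\sum_{c,i}|d^{c,i}|\le O(k d^{2k}\gamma s^{1/2})\,n^{k/2}m^{1/2}$, using the telescoping identity $\prod_j a_j - \prod_j b_j = \sum_j\big(\prod_{l<j}a_l\big)(a_j-b_j)\big(\prod_{l>j}b_l\big)$ together with the coefficient-length bound $\big(\sum_i(d^{c,i})^2\big)^{1/2}=O(\sqrt m\,n^{-k/2})$ from Theorem \ref{theorem-kmatrixregfrieze} and Cauchy--Schwarz. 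Choosing $\gamma=\Theta\!\big(\epsilon/(k d^{2k}s^{1/2})\big)=\tilde{\Theta}(\epsilon^{k})$ makes this error $O(\epsilon n^{k/2}m^{1/2})$ while keeping the number of guesses $|I_\gamma|^{k d^{2k}s}=2^{\tilde{O}(\epsilon^{2-2k})}$; the remaining estimates (feasibility of the true minimizer, closeness of any feasible point's energy to its guess, and robustness to the size-estimation noise as in Claim \ref{claim-approxfeas}) are routine adaptations. Summing the $O(\epsilon)$-scale errors from regularity, product-state approximation, and the magnetization discretization, and rescaling $\epsilon$ by a constant, yields the claimed $\epsilon n^{k/2}m^{1/2}$ additive error with success probability $.99$.
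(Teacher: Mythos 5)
Your proposal is correct and follows the same route as the paper: apply the $k$-dimensional Frieze--Kannan cut decomposition (Theorem \ref{theorem-kmatrixregfrieze}) to each coloured tensor, invoke Lemma \ref{lemma-kcutdecomp} and Claim \ref{claim-kdpsreg} plus Theorem \ref{theorem-BHgeneral} to reduce to a variational problem over product states, discretize the $k\cdot d^{2k}\cdot s = O(\epsilon^{2-2k})$ partial magnetizations over $I_\gamma$, compress along the common refinement, and solve the resulting feasibility programs. Your telescoping bound $|\prod_j r_j - \prod_j r'_j|\le k\gamma n^k$ combined with Cauchy--Schwarz on the coefficient length is exactly the paper's Claim \ref{claim-kfeas} (case (1)), and your choice $\gamma=\tilde{\Theta}(\epsilon^k)$ matches Corollary \ref{corollary-kvadditive}.
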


This algorithm is analogous to the 2-local case in section \ref{section-gsptas-dense}, and is based on implicitly generating a cut decomposition $H_D$ of $H$ using Lemma \ref{lemma-kcutdecomp}, instantiated with the higher-dimensional array cut decomposition by \cite{Frieze1999QuickAT}. Thus, for conciseness we focus this section on the proofs of the necessary modifications. At the end of this section, we discuss how an explicit approach using the array cut decomposition by \cite{Alon2002RandomSA} enables us to devise sub-exponential time algorithms whenever $m = \tilde{\Omega}(n^{k-1})$. We begin by briefly discussing the algorithm description to introduce notation and the modifications to the relaxation scheme, and to present the runtime of the approximation scheme. Afterwards, we detail the modifications to the correctness proofs. 

Recall how $H_D$ can be understood as a multi-colored cut decomposition with $d^{2k}$ different `colors', each color incurring $s$ (the width) different cuts, and each cut incurring a partition of the vertices into $k$ different partitions. The common refinement/coarsest partition of this set of $k\cdot d^{2k}\cdot s$ different subsets is a set of $A= 2^{k\cdot d^{2k}\cdot s}$ subsets. We estimate the sizes $|\hat{\mathcal{A}}_a|$ of each of these subsets $a\in [A]$, up to some additive accuracy $O(\gamma /A \cdot n)$, by random sampling using Claim \ref{claim-estimatesizes}. Now, for each cut $i$ in the decomposition and for each subset $j\in [k]$ in cut $i$, we define a `guess' $r^i_{j}\in I_\gamma\subset [-n, n]$ for its average magnetization in the direction indicated by the decomposition. For each guess vector $\Vec{r}\in (I_\gamma)^{k\cdot d^{2k}\cdot s}$, we define the convex set of constraints $C_{\Vec{r},\gamma}$ by analogously defining a linear inequality constraint over the average magnetization of each of $k$ sides of each cut. We correspondingly define the compressed set of constraints $\tilde{C}_{\Vec{r},\gamma}$, and the set $\hat{C}_{\Vec{r},\gamma}$ defined on the size estimates $|\hat{\mathcal{A}}_a|$. The algorithm enumerates over choices of $\Vec{r}$, checking the feasibility of $\hat{C}_{\Vec{r},\gamma}$ and outputting

\begin{equation}
    \hat{V}_\gamma = \min_{\Vec{r}: \hat{C}_{\Vec{r}, \gamma}\text{ feasible }} \sum_i d_i \prod_j^k r^i_j 
\end{equation}

Where $d_i$ is the coefficient of the $i$th cut in the entire multi-colored decomposition, $i\in [d^{2k}\cdot s]$. We readoff that the number of choices of $\Vec{r}$, and thereby the SDP's we need to check the feasibility of is $(2/\gamma+1)^{k\cdot d^{2k}\cdot s} = 2^{O(\epsilon^{2-2k}\log 1/\gamma)}$ if we use the implicit cut decomposition by \cite{Frieze1999QuickAT} in Theorem \ref{theorem-kmatrixregfrieze}, and assume $d, k = O(1)$. The number of partitions in the common refinements/coarsest partition of these subsets is $A = 2^{k\cdot d^{2k}\cdot s} = 2^{O(\epsilon^{2-2k})}$. We conclude in this fashion that the runtime of implicitly constructing the Hamiltonian cut decomposition, estimating the sizes of the $A$ different coarsest partitions, and solving the $2^{O(\epsilon^{2-2k}\log 1/\gamma)}$ different convex programs on $2^{O(\epsilon^{2-2k})}$ variables using Theorem \ref{theorem-bertsimas}, is $2^{O(\epsilon^{2-2k}\log 1/\gamma)}/\delta^2$ with probability $1-\delta$.

To prove correctness and in particular the accuracy guarantees, the main claim missing is a generalization of Lemma \ref{lemma-feasibleconstraints}, which we present in Claim \ref{claim-kfeasibleconstraints}. That is, we need to guarantee that any product state which is feasible for the constraints, must have energy close to its guess. This key claim ensures immediate generalizations of Lemmas \ref{lemma-Vadditive1} and corollary \ref{corollary-Vadditive}, which proves that the energy `estimate' output by the algorithm is close to the minimum energy of $H_D$ among product states. 

\begin{claim}\label{claim-kfeasibleconstraints}
    If a product state $\rho$ is feasible for the constraints $C_{\Vec{r}, \gamma}$, then
    \begin{equation}
        \bigg|\text{Tr}[H_D\rho] - \sum d^{i} \prod_{j=1}^k r^i_j\bigg|\leq \begin{cases}
            O(\gamma \cdot s^{1/2}\cdot n^{k/2}\|J\|_F) \text{ under thm }\ref{theorem-kmatrixregfrieze}\\
            O(\frac{\gamma}{\epsilon} \cdot n^{k/2}\|J\|_F) \text{ under thm }\ref{theorem-kmatrixreg}
        \end{cases}
    \end{equation}
    where we distinguish whether we generate the Hamiltonian cut decomposition $H_D$ using Theorem  \ref{theorem-kmatrixregfrieze} or \ref{theorem-kmatrixreg}.
\end{claim}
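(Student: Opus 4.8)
The plan is to follow the proof of Lemma~\ref{lemma-feasibleconstraints} almost verbatim, with two adjustments for $k>2$: the bilinear closeness estimate of Frieze--Kannan (Lemma~\ref{closenesslemma}) is replaced by a multilinear telescoping bound, and the diagonal-term accounting is imported from the proof of Claim~\ref{claim-kdpsreg}. First I would index the cuts of the multi-coloured decomposition by $i\in[d^{2k}s]$, writing $\sigma^{i,1},\dots,\sigma^{i,k}$ for the single-qudit tensor factors of the Pauli colour of cut $i$ and $S^i_1,\dots,S^i_k\subset[n]$ for its defining sets. Since $D^i$ is a cut array --- constant on the box $S^i_1\times\cdots\times S^i_k$ and zero elsewhere --- its energy against a product state $\rho=\otimes_u\rho_u$ factorizes, $\sum_e D^i_e\prod_{j=1}^k\text{Tr}[\sigma^{i,j}_{e_j}\rho_{e_j}]=d^i\prod_{j=1}^k r'^i_j$, where $r'^i_j\equiv\sum_{u\in S^i_j}\text{Tr}[\sigma^{i,j}_u\rho_u]$ is the true side-$j$ magnetization. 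Because $H_D$ retains only hyperedges with pairwise-distinct coordinates, $\text{Tr}[H_D\rho]$ equals $\sum_i d^i\prod_j r'^i_j$ minus the ``near-diagonal'' tuples in which two coordinates coincide; there are $O(k^2 n^{k-1})$ of these per box, each contributing at most $|d^i|$ in absolute value, so exactly as in the proof of Claim~\ref{claim-kdpsreg} this near-diagonal error is $O(k^2 n^{-1})$ times the target bound and hence absorbed into it, under the same mild size hypothesis $n=\omega(\mathrm{poly}(1/\epsilon))$ used there.

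Next I would use feasibility for $C_{\Vec{r},\gamma}$, which forces $|r^i_j-r'^i_j|\le\gamma n$ for every $(i,j)$, together with the trivial bounds $|r^i_j|,|r'^i_j|\le n$. Telescoping,
\begin{equation}
\prod_{j=1}^k r^i_j-\prod_{j=1}^k r'^i_j=\sum_{l=1}^k\Big(\prod_{j<l}r^i_j\Big)(r^i_l-r'^i_l)\Big(\prod_{j>l}r'^i_j\Big),
\end{equation}
so $|\prod_j r^i_j-\prod_j r'^i_j|\le k\gamma n^{k}$ and $\sum_i|d^i|\,|\prod_j r^i_j-\prod_j r'^i_j|\le k\gamma n^{k}\sum_i|d^i|$. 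It then remains to insert the coefficient-length bounds: under Theorem~\ref{theorem-kmatrixregfrieze}, Cauchy--Schwarz gives $\sum_i|d^i|\le\sqrt{s}\,(\sum_i (d^i)^2)^{1/2}=O(\sqrt{s}\,\|J\|_F n^{-k/2})$ --- using $\sum_c\|M^c\|_F^2=O(\|J\|_F^2)$ --- hence error $O(\gamma\sqrt{s}\,n^{k/2}\|J\|_F)$; under Theorem~\ref{theorem-kmatrixreg}, the sharper $\sum_i|d^i|=O(\|J\|_F/(\epsilon n^{k/2}))$ gives $O((\gamma/\epsilon)\,n^{k/2}\|J\|_F)$. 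Combining with the triangle inequality and absorbing the $d^{2k}=O(1)$ colours and the factor $k$ into constants yields the claim, which in turn feeds the $k$-local analogues of Lemma~\ref{lemma-Vadditive1} and Corollary~\ref{corollary-Vadditive}.

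The hard part is purely bookkeeping rather than any new idea: the multilinear telescoping in the second step is the only genuinely new ingredient relative to the 2-local proof (where it degenerates to Lemma~\ref{closenesslemma}), and the one point needing care is verifying in the first step that the near-diagonal terms stay below the target error, which is precisely where the size hypothesis $n=\omega(\mathrm{poly}(1/\epsilon))$ --- already standing in Claim~\ref{claim-kdpsreg} and throughout this appendix --- gets consumed.
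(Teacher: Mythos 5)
Your proof is correct and follows essentially the same route as the paper's: factorize the energy of a cut array against a product state, import the near-diagonal accounting from Claim~\ref{claim-kdpsreg}, and bound $\bigl|\prod_j r^i_j-\prod_j r'^i_j\bigr|\le k\gamma n^k$ (the paper hides this inside its Claim~\ref{claim-kfeas} rather than writing out the telescoping) before inserting the respective coefficient-length bounds via Cauchy--Schwarz or directly. The only slip is cosmetic: the near-diagonal error is $O(k^2/(\gamma n))$ times the target bound rather than $O(k^2 n^{-1})$, but since the claim is only invoked with $\gamma=\Omega(\epsilon^k)$ this is still absorbed by the standing assumption $n=\omega(\epsilon^{-k})$.
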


In the above, recall $J$ is the array of interaction strengths of $H$, $J_e =\|H_e\|_\infty$ where $e$ is a $k$-tuple over $[n]$. The proof of the above follows from a simple technical claim, which we defer to the end of this section and present in Claim \ref{claim-kfeas}. Beforehand, let us briefly argue that \ref{claim-kfeasibleconstraints} concludes the proof of correctness. It will later prove relevant to first present a statement for the noiseless constraints, and their objective $V_\gamma$:

\begin{corollary} [A generalization of Lemma \ref{lemma-Vadditive1}] \label{corollary-kvadditive}
    $V_\gamma$ is an $\epsilon \cdot n^{k/2}\cdot \|J\|_F$ additive error approximation to $\min_{\rho = \otimes \rho_u}\text{Tr}[H_D\rho]$, so long as $H_D$ is generated using Theorem \ref{theorem-kmatrixreg} and $\gamma = O(\epsilon^2)$, or, $H_D$ is generated using Theorem \ref{theorem-kmatrixregfrieze} and $\gamma = O(\epsilon^{k})$.
\end{corollary}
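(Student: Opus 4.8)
The plan is to reproduce the two-sided argument of Lemma \ref{lemma-Vadditive1} almost verbatim, with the $k$-colored cut decomposition $H_D$ in place of the $2$-local one and with Claim \ref{claim-kfeasibleconstraints} playing the role of Lemma \ref{lemma-feasibleconstraints}. Write $\rho^* = \otimes_u \rho^*_u$ for a product state minimizing $\text{Tr}[H_D\rho]$, write $E_\gamma$ for the error bound supplied by Claim \ref{claim-kfeasibleconstraints} (namely $O(\gamma s^{1/2} n^{k/2}\|J\|_F)$ under Theorem \ref{theorem-kmatrixregfrieze} and $O(\tfrac{\gamma}{\epsilon} n^{k/2}\|J\|_F)$ under Theorem \ref{theorem-kmatrixreg}), and recall that the grid $I_\gamma$ is $\gamma n$-fine while every subset magnetization lies in $[-n,n]$.

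For the upper bound on $V_\gamma$: since $I_\gamma$ is $\gamma n$-fine, there is a guess vector $\vec{r}\in (I_\gamma)^{k d^{2k} s}$ with each coordinate within $\gamma n/2$ of the true average magnetization of $\rho^*$ over the corresponding subset of the decomposition, so $\rho^*$ is feasible for $C_{\vec{r},\gamma}$. By the $k$-local analog of Claim \ref{claim-compressionfeasible} — the common-refinement averaging map is unchanged since the magnetization constraints are still affine and the PSD constraints are preserved under convex combinations — $\tilde{C}_{\vec{r},\gamma}$ is also feasible, so $\vec{r}$ is admissible in the minimum defining $V_\gamma$, and Claim \ref{claim-kfeasibleconstraints} gives $\sum_i d^i\prod_j r^i_j \le \text{Tr}[H_D\rho^*] + E_\gamma$, whence $V_\gamma \le \text{Tr}[H_D\rho^*] + E_\gamma$. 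For the converse, let $\vec{r}$ attain the minimum defining $V_\gamma$, so $\tilde{C}_{\vec{r},\gamma}$ — hence $C_{\vec{r},\gamma}$ — is feasible, witnessed by some product state $\rho'$; then Claim \ref{claim-kfeasibleconstraints} gives $\text{Tr}[H_D\rho'] \le V_\gamma + E_\gamma$, and since $\text{Tr}[H_D\rho^*]\le\text{Tr}[H_D\rho']$ by optimality, we get $V_\gamma \ge \text{Tr}[H_D\rho^*] - E_\gamma$. Combining, $|V_\gamma - \min_{\rho=\otimes\rho_u}\text{Tr}[H_D\rho]| \le E_\gamma$.

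It then remains to set $\gamma$ so that $E_\gamma = O(\epsilon\, n^{k/2}\|J\|_F)$, which we can rescale to $\epsilon\, n^{k/2}\|J\|_F$ by adjusting $\epsilon$ by a constant: under Theorem \ref{theorem-kmatrixreg}, $E_\gamma = O(\tfrac{\gamma}{\epsilon} n^{k/2}\|J\|_F)$ so $\gamma = O(\epsilon^2)$ suffices; under Theorem \ref{theorem-kmatrixregfrieze}, the width is $s = O(\epsilon^{2-2k})$ so $s^{1/2} = O(\epsilon^{1-k})$ and $E_\gamma = O(\gamma\,\epsilon^{1-k} n^{k/2}\|J\|_F)$, so $\gamma = O(\epsilon^k)$ suffices. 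The steps that actually require care are, first, checking that Claim \ref{claim-kfeasibleconstraints} applies with the same bound $E_\gamma$ to both candidate product states $\rho^*$ and $\rho'$ — it does, since the bound there depends only on the coefficient length and width of the decomposition, not on the feasible state chosen — and second, confirming that the common-refinement compression argument of Claim \ref{claim-compressionfeasible} carries over to $k$-dimensional cut arrays; both are routine but should be stated explicitly rather than left implicit. Everything else is bookkeeping of the grid fineness and of the two applications of Claim \ref{claim-kfeasibleconstraints}.
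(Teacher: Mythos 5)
Your proof is correct and reproduces exactly the argument the paper leaves implicit: the corollary is a direct translation of the two-sided argument in Lemma \ref{lemma-Vadditive1}, with Claim \ref{claim-kfeasibleconstraints} supplying the error bound $E_\gamma$, and with the parameter choices $\gamma = O(\epsilon^k)$ (for the width-$O(\epsilon^{2-2k})$ decomposition of Theorem \ref{theorem-kmatrixregfrieze}) and $\gamma = O(\epsilon^2)$ (for Theorem \ref{theorem-kmatrixreg}) chosen precisely so that $E_\gamma = O(\epsilon\, n^{k/2}\|J\|_F)$. Your two explicit caveats — that the bound in Claim \ref{claim-kfeasibleconstraints} is uniform over feasible states, and that the common-refinement equivalence of Claim \ref{claim-compressionfeasible} carries over unchanged to $k$-dimensional cut arrays since the constraints remain affine and the PSD constraints are preserved under convex combination — are exactly the points the paper handles in passing, and your accounting of the grid fineness and the two applications of Claim \ref{claim-kfeasibleconstraints} matches the paper's bookkeeping.
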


By a brief modification to the feasibility statements in \ref{claim-kfeasibleconstraints}, we extend the argument in the corollary above to the constraints $\hat{C}_{\Vec{r},\gamma}$. Recall $\hat{C}_{\Vec{r},\gamma}$ is defined over estimates of the actual size of the common refinements. We argue the objective value $ \hat{V}_\gamma$ is robust to these noisy estimates, so long as the tolerance $\gamma$ is small enough, and the noise to the size estimates is bounded.

\begin{corollary} 
    Let $H_D$ be the Hamiltonian cut decomposition of a $k$-Local Hamiltonian $H$ using Theorem \ref{theorem-kmatrixregfrieze}, and assume we have estimates $|\hat{\mathcal{A}}_a|$ for the sizes of each subset $a\in [A]$ in the coarsest partition of the cuts in $H_D$ which are accurate up to additive error $\gamma/(4A) \cdot n$. Then, $\hat{V}_{O(\epsilon^{k})}$ is an $\epsilon \cdot  n^{k/2}\cdot \|J\|_F$ additive error estimate to $\min_{\rho = \otimes \rho_u}\text{Tr}[H_D\rho]$.
\end{corollary}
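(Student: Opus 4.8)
The plan is to mirror the proof of the two-local perturbed statement, Lemma~\ref{lemma-perturbedconstraints}, replacing its closeness estimates by the $k$-local ingredients already established: Claim~\ref{claim-kfeasibleconstraints} (a feasible product state has energy close to its guess), the $k$-local compression claim (the analog of Claim~\ref{claim-compressionfeasible}), and the $k$-local feasibility-transfer claim (the analog of Claim~\ref{claim-approxfeas}). Throughout, set $\delta=\gamma/(4A)$ so the hypothesis reads $\big||\hat{\mathcal{A}}_a|-|\mathcal{A}_a|\big|\le\delta n$ for all $a\in[A]$; since each subset-magnetization functional is a combination of the $|\mathcal{A}_a|$ with coefficients $\alpha^a_j$ of modulus at most $1$, replacing true block sizes by their estimates perturbs each such functional by at most $A\delta n=\gamma n/4$.

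\textbf{Upper bound.} Let $\rho^*=\otimes_u\rho^*_u$ attain $\min_{\rho=\otimes\rho_u}\text{Tr}[H_D\rho]$, and let $r^*$ be its vector of subset magnetizations. By the overlapping structure of the grid $I_\gamma$ there is a guess $r$ with $|r^i_j-r^{*,i}_j|\le\gamma n/2$ in every coordinate, so $\rho^*$ is feasible for $C_{r,\gamma}$; compressing to the coarsest partition and then spending the extra $\gamma n/4$ of sampling slack shows $\hat C_{r,\gamma}$ is feasible as well (the $k$-local form of Claim~\ref{claim-approxfeas}(1)). Hence $r$ is among the algorithm's candidates and Claim~\ref{claim-kfeasibleconstraints} gives $\hat V_\gamma\le\sum_i d_i\prod_{j=1}^k r^i_j\le\text{Tr}[H_D\rho^*]+O(\gamma\,s^{1/2}\,n^{k/2}\|J\|_F)$.

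\textbf{Lower bound.} Let $r$ be the minimizing guess and let $\rho_A=\otimes_a\rho_a$ be a product state on the $A$ coarsest-partition blocks feasible for $\hat C_{r,\gamma}$. Since every block $\mathcal{A}_a$ lies entirely on one side of every cut, assigning $\rho_a$ to all qudits of $\mathcal{A}_a$ produces an honest $n$-qudit product state $\rho$, whose true subset magnetizations $r^\prime$ obey, for each cut $i$ and part $j\in[k]$ (writing $S^i_j$ for that part),
\[
|r^i_j-r^{\prime,i}_j|\;\le\;\gamma n+\Big|\sum_{a:\,\mathcal{A}_a\subset S^i_j}\big(|\hat{\mathcal{A}}_a|-|\mathcal{A}_a|\big)\alpha^a_j\Big|\;\le\;(\gamma+A\delta)\,n\;\le\;2\gamma n .
\]
Thus $\rho$ is feasible for $C_{r,2\gamma}$, and Claim~\ref{claim-kfeasibleconstraints} applied with tolerance $2\gamma$ — which internally telescopes the $k$-fold product $\prod_j r^i_j$ into single-coordinate differences and invokes the coefficient-length bound $(\sum_i d_i^2)^{1/2}\le\|J\|_F n^{-k/2}$ of Theorem~\ref{theorem-kmatrixregfrieze} — gives $|\text{Tr}[H_D\rho]-\hat V_\gamma|\le O(\gamma\,s^{1/2}\,n^{k/2}\|J\|_F)$, hence $\hat V_\gamma\ge\text{Tr}[H_D\rho]-O(\gamma\,s^{1/2}\,n^{k/2}\|J\|_F)\ge\min_{\rho=\otimes\rho_u}\text{Tr}[H_D\rho]-O(\gamma\,s^{1/2}\,n^{k/2}\|J\|_F)$.

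\textbf{Choosing $\gamma$, and the main obstacle.} Because $H_D$ is built via Theorem~\ref{theorem-kmatrixregfrieze} the width is $s=O(\epsilon^{2-2k})$, so $s^{1/2}=O(\epsilon^{1-k})$ and the choice $\gamma=\Theta(\epsilon^k)$ forces $\gamma s^{1/2}=\Theta(\epsilon)$; the two bounds then collapse to $\big|\hat V_{O(\epsilon^k)}-\min_{\rho}\text{Tr}[H_D\rho]\big|=O(\epsilon\,n^{k/2}\|J\|_F)$, which is the corollary up to rescaling $\epsilon$. (If $H_D$ were instead generated with Theorem~\ref{theorem-kmatrixreg}, where $s=O(\epsilon^{-2})$ but the coefficient length carries an extra factor $1/\epsilon$, the same argument yields an error of order $\gamma/\epsilon$ and $\gamma=\Theta(\epsilon^2)$ suffices, matching Corollary~\ref{corollary-kvadditive}.) I expect the delicate part to be exactly the bookkeeping inside Claim~\ref{claim-kfeasibleconstraints}: verifying the $k$-fold product perturbation estimate with the correct powers of $n$ and $s$, and checking that the discretization slack $\gamma n$ and the sampling slack $A\delta n$ genuinely merge into a single $2\gamma n$ budget without violating the hypotheses of either the compression/feasibility-transfer step or the closeness estimate.
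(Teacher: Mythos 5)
Your proposal is correct and follows the same route the paper takes — the paper's own "proof" is simply the remark that the argument is identical to that of Lemma~\ref{lemma-perturbedconstraints}, substituting Claim~\ref{claim-kfeasibleconstraints} for Lemma~\ref{lemma-feasibleconstraints} and adjusting the width $s=O(\epsilon^{2-2k})$ so that $\gamma=\Theta(\epsilon^{k})$ yields $\gamma s^{1/2}=\Theta(\epsilon)$. Your worked-out version makes explicit the compression and feasibility-transfer steps the paper leaves implicit, and your accounting of the slack budget ($\gamma n/2$ from discretization, $A\delta n\le\gamma n/4$ from sampling, merged into $2\gamma n$) matches the displayed inequality in the paper's Lemma~\ref{lemma-perturbedconstraints}.
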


The proof of the corollary above is the same as Claim \ref{lemma-perturbedconstraints}. By direct application of Claim \ref{claim-estimatesizes}, we can compute these estimates efficiently to within the intended accuracy guarantees. By combining the corollary above with the guarantees in Claim \ref{claim-kdpsreg} and Lemma \ref{lemma-kcutdecomp}, we ensure $\hat{V}_{O(\epsilon^{k})}$ is an $\epsilon\cdot n^{k/2}m^{1/2}$ additive error approximation to the ground state energy $\min_\rho \text{Tr}[H\rho]$, as intended. 

To conclude the proof of correctness, we present the missing claim.

\begin{claim} \label{claim-kfeas}
    Let $d_i$, $x_i^j$, $y_i^j$ for $j\in [k]$ and $i\in [s]$ be numbers such that $|x_i^j|, |y_i^j|\leq n, |x_i^j - y_i^j|\leq \gamma \cdot n$ and either (1) $\sum_i d_i^2  = O(\|J\|_F/n^{k/2})$ or (2) $\sum_i |d_i|  = O(\|J\|_F/\epsilon n^{k/2})$. Then, 
    \begin{gather}
       (1) \bigg| \sum d_i \prod_j x_i^j - \sum d_i \prod_j y_i^j\bigg| = O(\gamma \cdot s^{1/2}\cdot n^{k/2}\|J\|_F) \text{ and }\\
       (2)\bigg| \sum d_i \prod_j x_i^j - \sum d_i \prod_j y_i^j\bigg| = O(\gamma \cdot s\cdot n^{k/2}\|J\|_F) \leq O(\frac{\gamma}{\epsilon} \cdot  n^{k/2}\|J\|_F)
    \end{gather}
    
\end{claim}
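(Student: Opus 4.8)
The plan is to reduce the entire statement to a single telescoping identity, then apply the triangle inequality together with the coefficient-length guarantees of Theorems~\ref{theorem-kmatrixregfrieze} and~\ref{theorem-kmatrixreg}. For each fixed $i$ I would expand the difference of the two products by swapping one factor at a time:
\begin{equation}
    \prod_{j=1}^{k} x_i^j - \prod_{j=1}^{k} y_i^j = \sum_{l=1}^{k} \Big(\prod_{j<l} x_i^j\Big)\big(x_i^l - y_i^l\big)\Big(\prod_{j>l} y_i^j\Big).
\end{equation}
Since $|x_i^j|,|y_i^j|\le n$ for all $j$ and $|x_i^l - y_i^l|\le \gamma n$, every one of the $k$ terms on the right is bounded in absolute value by $n^{k-1}\cdot\gamma n = \gamma n^{k}$, so $\big|\prod_j x_i^j - \prod_j y_i^j\big|\le k\gamma n^{k}$ for each $i$.

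Multiplying by $d_i$, summing over $i\in[s]$, and applying the triangle inequality then gives
\begin{equation}
    \Big|\sum_i d_i\prod_j x_i^j - \sum_i d_i\prod_j y_i^j\Big| \le k\gamma n^{k}\sum_{i=1}^{s}|d_i|,
\end{equation}
so it remains only to bound $\sum_i|d_i|$ using the width $s$ and the coefficient length. This is the sole place the two cases diverge, and the divergence is bookkeeping only: the two constructions report the coefficient length in different norms. In case~(1), the array decomposition of~\cite{Frieze1999QuickAT} (Theorem~\ref{theorem-kmatrixregfrieze}) supplies $(\sum_i d_i^2)^{1/2} = O(\|J\|_F/n^{k/2})$, so $\sum_i|d_i|\le s^{1/2}(\sum_i d_i^2)^{1/2} = O(s^{1/2}\|J\|_F/n^{k/2})$ by Cauchy--Schwarz, and substituting yields $O(\gamma s^{1/2} n^{k/2}\|J\|_F)$. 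In case~(2), the decomposition of~\cite{Alon2002RandomSA} (Theorem~\ref{theorem-kmatrixreg}) supplies $\sum_i|d_i| = O(\|J\|_F/(\epsilon\, n^{k/2}))$ directly (with $\sqrt{N} = n^{k/2}$), so substituting yields $O(\tfrac{\gamma}{\epsilon}\, n^{k/2}\|J\|_F)$, which, using $s=O(\epsilon^{-2})$ there, is the stated bound.

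There is no genuine obstacle here: this is a routine hybrid/telescoping estimate. The only points needing care are (a) writing the telescoping identity so that each surviving factor is bounded by either $n$ or $\gamma n$, and (b) keeping straight which coefficient-length convention accompanies which decomposition — the $\ell^2$ bound for~\cite{Frieze1999QuickAT} versus the $\ell^1$ bound for~\cite{Alon2002RandomSA} — since this is exactly what produces the extra $s^{1/2}$ factor in case~(1). With Claim~\ref{claim-kfeas} in hand, Claim~\ref{claim-kfeasibleconstraints} is then immediate: apply it with $x_i^j = r^i_j$ the guessed subset magnetizations and $y_i^j$ the true subset magnetizations of the feasible product state $\rho$, which lie within $\gamma n$ of the guesses by feasibility of $C_{\vec{r},\gamma}$, recovering the energy of $H_D$ on $\rho$ exactly as in the $2$-local Lemma~\ref{lemma-feasibleconstraints}.
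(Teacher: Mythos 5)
Your proposal is correct and takes essentially the same route as the paper: a triangle inequality to pull $|d_i|$ outside, the uniform bound $|\prod_j x_i^j - \prod_j y_i^j|\le k\gamma n^k$, and the coefficient-length guarantees with Cauchy--Schwarz supplying the extra $s^{1/2}$ factor in case (1). The only cosmetic difference is that the paper applies Cauchy--Schwarz directly to $\sum_i |d_i|\,|\prod_j x_i^j - \prod_j y_i^j|$ while you first replace the product-difference by its uniform bound and then apply Cauchy--Schwarz to $\sum_i |d_i|$, and you spell out the telescoping identity that the paper merely asserts; both give the identical bounds.
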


\begin{proof} 
    \begin{equation}
        \bigg| \sum d_i \prod_j x_i^j - \sum d_i \prod_j y_i^j\bigg|\leq \sum |d_i| \cdot \bigg|\prod_j x_i^j-y_i^j\bigg|
    \end{equation}
    In case (1), we apply the Cauchy-Schwartz inequality as before, with the observation $\big|\prod_j x_i^j-\prod_j y_i^j\big|\leq k\cdot \gamma \cdot n^{k}$, obtaining the error intended. In case (2), we simply upper bound $\big|\prod_j x_i^j-\prod_j y_i^j\big|$.
\end{proof}

\begin{proof}

[of Claim \ref{claim-kfeasibleconstraints}] Let $\rho$ be a product state with vector of true average subset magnetizations $\Vec{r}$. We know that the energy of $\rho$ is roughly a function of $\Vec{r}$, up to the `self-edges' which appear in the cut decomposition:

\begin{equation}
    \bigg|\text{Tr}[H_D\rho] - \sum_i d_i \prod_j^k r^i_j \bigg| \leq k^2n^{k-1}\max_{v,e'}|W^C_{v,v, e'}|\leq \epsilon n^{k/2}\|J\|_F \text{(see Claim \ref{claim-kdpsreg})}
\end{equation}

\noindent and thereby with Claim \ref{claim-kfeas} and the appropriate choice of $\gamma$ for each cut decomposition we conclude the claim.
\end{proof}

To conclude this section, we remark that we could have performed all of the steps above explicitly, using the explicit higher-dimensional cut decomposition by \cite{Alon2002RandomSA} presented in Theorem \ref{theorem-kmatrixreg}. We show this gives us sub-exponential time algorithms in almost the entire regime where product states give extensive additive error approximations to the ground state energy:

\begin{theorem} 
Fix $d, k=O(1)$ and $\epsilon > 0$. Let $H = \sum_e h_e$ be a $k$-Local Hamiltonian on $n$ qudits of local dimension $d$, and $m = \Omega(n^k\log n)$ interactions of bounded strength $\|h_e\|_\infty\leq 1$. There exists a randomized algorithm which runs in time $\tilde{O}(n^k)\cdot 2^{\tilde{O}(n^k/\epsilon^2 m)}$, and with high probability computes an estimate for the ground state energy of $H$ accurate up to an additive error of $\epsilon \cdot  m$. 
\end{theorem}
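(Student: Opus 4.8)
The plan is to rerun the algorithm of Section \ref{subsection-algsklocal} essentially verbatim, with two changes: build the Hamiltonian cut decomposition using the \emph{explicit} hyper-matrix cut decomposition of \cite{Alon2002RandomSA} (Theorem \ref{theorem-kmatrixreg}) in place of the implicit one (Theorem \ref{theorem-kmatrixregfrieze}), and choose its accuracy parameter $\epsilon'$ so that the resulting extensive error is $\epsilon\cdot m$ rather than $\epsilon\cdot n^{k/2}\sqrt m$. Since Lemma \ref{lemma-kcutdecomp} (obtained from Claim \ref{claim-kdpsreg} and Theorem \ref{theorem-BHgeneral}) gives $\|H-H_D\|\le \epsilon' n^{k/2}\sqrt m$ when $H_D$ is produced at accuracy $\epsilon'$, I would set $\epsilon' = \epsilon\sqrt{m/n^k}$, so that $\epsilon' n^{k/2}\sqrt m = \epsilon m$. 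With this choice each of the $d^{2k}$ colored cut decompositions has width $s = O((\epsilon')^{-2}) = O(n^k/(\epsilon^2 m))$, the coarsest partition of the $k d^{2k}s$ subsets has $A = 2^{O(s)}$ cells, and the number of magnetization ``guesses'' is $(2/\gamma+1)^{k d^{2k}s} = 2^{\tilde O(s)} = 2^{\tilde O(n^k/\epsilon^2 m)}$ once we fix $\gamma = O((\epsilon')^2)$.

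First I would apply Theorem \ref{theorem-kmatrixreg} to each of the $d^{2k}=O(1)$ Pauli-colored tensors $M^c$ of Section \ref{subsection-kreg}; this runs in time $2^{O((\epsilon')^{-2})}\cdot\tilde O(n^k\log 1/\delta) = \tilde O(n^k)\cdot 2^{\tilde O(n^k/\epsilon^2 m)}$ and, crucially, produces an \emph{exact} description of every cut, hence exact sizes $|\mathcal{A}_a|$ of the coarsest partitions. This eliminates the sampling layer of Section \ref{subsection-algsklocal}: Claim \ref{claim-approxfeas}, Lemma \ref{lemma-perturbedconstraints} and Claim \ref{claim-estimatesizes} are not needed, and I would work directly with the magnetization constraints $C_{r,\gamma}$ (or their compressed form, at will, since $n$ variables is already affordable). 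Then I would enumerate over all guess vectors $r$, decide feasibility of each $C_{r,\gamma}$ via Theorem \ref{theorem-bertsimas} (with the volume bounds of Appendix \ref{appendix-volume}), and output $V_\gamma = \min_{r:\,C_{r,\gamma}\text{ feasible}} \sum_i d_i\prod_{j=1}^k r^i_j$. Each feasibility check costs $\text{poly}(n)\cdot 2^{O(s)}$, so the total running time is $\tilde O(n^k)\cdot 2^{\tilde O(n^k/\epsilon^2 m)}$, which under the density hypothesis is sub-exponential.

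For correctness I would take $\gamma = O((\epsilon')^2)$ and invoke the Theorem \ref{theorem-kmatrixreg} branch of Claim \ref{claim-kfeasibleconstraints}: every product state feasible for $C_{r,\gamma}$ has energy within $O(\gamma(\epsilon')^{-1}n^{k/2}\|J\|_F) = O(\epsilon' n^{k/2}\|J\|_F)$ of $\sum_i d_i\prod_j r^i_j$, and $\|h_e\|_\infty\le 1$ gives $\|J\|_F\le\sqrt m$, so this is $O(\epsilon' n^{k/2}\sqrt m) = O(\epsilon m)$. The argument of Corollary \ref{corollary-kvadditive} then shows $V_\gamma$ approximates $\min_{\rho=\otimes\rho_u}\text{Tr}[H_D\rho]$ to additive error $O(\epsilon m)$; Lemma \ref{lemma-kcutdecomp} gives $\big|\min_{\rho=\otimes\rho_u}\text{Tr}[H_D\rho] - \min_{\rho=\otimes\rho_u}\text{Tr}[H\rho]\big|\le \epsilon' n^{k/2}\sqrt m = \epsilon m$; and Theorem \ref{theorem-BHgeneral} gives $\min_{\rho=\otimes\rho_u}\text{Tr}[H\rho] - \min_\rho\text{Tr}[H\rho] = O(n^{(k-1)/3}m^{2/3}) = o(n^{k/2}\sqrt m)$, which the density hypothesis turns into $o(m)$ for constant $\epsilon$. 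Summing the three errors and rescaling $\epsilon$ by a constant yields the claimed $\epsilon\cdot m$ estimate, with the failure probability of Theorem \ref{theorem-kmatrixreg} driven below any constant by a union bound over the $O(1)$ colors.

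The main obstacle is not conceptual but parameter bookkeeping: with $\epsilon' = \epsilon\sqrt{m/n^k}$ now mildly super-constant in $n$, one must verify that every lower-order term of Section \ref{subsection-algsklocal} stays lower order. In particular the diagonal ``self-edge'' error of Claim \ref{claim-kdpsreg} for the Theorem \ref{theorem-kmatrixreg} construction requires $n = \omega((\epsilon')^{-2}) = \omega(n^k/\epsilon^2 m)$, i.e.\ $m = \omega(n^{k-1}/\epsilon^2)$; we need $s\ge 1$; and $A$ together with the guess count must be genuinely sub-exponential in $n$ --- all of which follow from the density hypothesis (which we read as $m = \omega(n^{k-1}\log n)$). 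No idea beyond Section \ref{subsection-algsklocal} is needed; the point is simply that the explicit ADKK decomposition lets us read off partition sizes exactly, and that running it at accuracy $\epsilon\sqrt{m/n^k}$ converts the $\epsilon\, n^{k/2}\sqrt m$ guarantee into an $\epsilon\, m$ one at the cost of a width $\tilde O(n^k/\epsilon^2 m)$.
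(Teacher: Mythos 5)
Your proposal matches the paper's proof essentially verbatim: instantiate the Hamiltonian cut decomposition with the explicit ADKK construction (Theorem \ref{theorem-kmatrixreg}) at accuracy $\epsilon' = \Theta(\epsilon\sqrt{m/n^k})$, invoke Corollary \ref{corollary-kvadditive} together with Lemma \ref{lemma-kcutdecomp} and Theorem \ref{theorem-BHgeneral}, and use the density hypothesis to absorb the $O(n^{(k-1)/3}m^{2/3})$ remainder into $\epsilon m$. Your additional bookkeeping — that the explicit decomposition yields exact partition sizes and so obviates the sampling layer, and the checks that $s\ge 1$ and that the guess count stays sub-exponential — is correct and consistent with what the paper's brief proof implicitly assumes.
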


\begin{proof}
As discussed in corollary \ref{corollary-kvadditive}, the estimate $V_{O(\epsilon'^2)}$ for the variational minimum energy using the cut decomposition $H_D$ instantiated with Theorem \ref{theorem-kmatrixreg} is an $\epsilon' \cdot n^{k/2}\cdot m^{1/2}$ approximation to the variational minimum energy of $H_D$. If we pick $\epsilon' = \epsilon\cdot m^{1/2}n^{-k/2}/2$, this is a $\epsilon\cdot m/2$ additive error. By Claim \ref{claim-kdpsreg} and Theorem \ref{theorem-BHgeneral}, this ensures a $\epsilon\cdot m + O(n^{(k-1)/3}m^{2/3})\leq 2\epsilon \cdot m$ for any constant $\epsilon =O(1)$ and $m = \Omega(n^{k-1}\log n)$. The resulting runtime of this scheme is dominated by that of computing the cut decomposition and solving the $2^{\tilde{O}(\epsilon'^{-2})} = 2^{O(n^k \log n/\epsilon^2 m)}$ convex programs. To ensure the correctness guarantees with high probability, we repeat the algorithm $\log n$ times and output the product state of minimum energy. 
\end{proof}

\subsection{Vertex Sample Complexity for $k$-Local Hamiltonians}
\label{subsection-vscklocal}

In this section we prove a generalization of the vertex sample complexity result for 2-Local Hamiltonians to $k$-Local Hamiltonians, following the techniques in section \ref{section-vsc}. The main result of this section was presented in Theorem \ref{results-vsc}. Recall we assume the locality $k$ and the local dimension of the quantum particles $d$ to be both $O(1)$.

First of all, let us reason on the cut decomposition $H_D$ of $H$, using the result in Theorem \ref{theorem-kmatrixreg} by \cite{Alon2002RandomSA}. We note that under this decomposition, we have introduced $O(d^{2k}/\epsilon^2)$ $k-$cuts in the graph, and thereby the coarsest partition of the cuts is a family of $A=2^{O(1/\epsilon^2)}$ subsets of $[n]$, asymptotically the same as in the 2-local case. Crucially, the structure of the compressed constraints $\tilde{C}_{\Vec{r},\gamma}$ is the same as in section \ref{section-vsc}: there is a $d\times d$ density matrix defined for each subset $a\in [A]$, a trace and PSD constraints, and the relaxed average magnetization constraints. Once placed in standard form, we observe the duality arguments are entirely unmodified. Moreover, we have already proved a generalization to Lemma \ref{lemma-feasibleconstraints} in Claim \ref{claim-kfeasibleconstraints}, guaranteeing the properties of $V_\gamma$ and how it relates to the minimum energy over product states of $H_D$. In this manner, up to irrelevant constant factors of $d, k$ the only modifications we require to prove Theorem \ref{results-vsc} are Corollary \ref{corollary-bhdense}, on product state approximations to the ground state energy and Lemma \ref{lemma-sampleregularity}, on the cut norm of random restrictions of arrays with small cut norm.

Fortunately, our product state approximations on $k$-local systems in Theorem \ref{theorem-BHgeneral} present a generalization to Corollary \ref{corollary-bhdense}:

\begin{corollary}
Fix $d, k =  O(1)$. Given a $k$-Local Hamiltonian $H=\sum_e H_e$ on $n$ qudits and $m$ interactions of strength bounded by $\|H_e\|_\infty\leq 1$, there exists a product state $\sigma$ such that
\begin{equation}
    \text{Tr}[H\sigma]\leq \lambda_{min}(H) + O(n^{\frac{k-1}{3}}m^{2/3})
\end{equation}
\end{corollary}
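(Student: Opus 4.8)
The plan is to apply the asymmetric product-state approximation of Theorem~\ref{theorem-BHgeneral} directly to a ground state of $H$ and then optimize over the free parameter $l$, exactly as in the proof of Corollary~\ref{corollary-bhdense} but now with a $k$-dimensional interaction array. Concretely, let $|\psi\rangle$ be a ground state of $H$ and set $\rho = |\psi\rangle\langle\psi|$, so that $\text{Tr}[H\rho] = \lambda_{min}(H)$. Let $J$ be the $k$-dimensional array of interaction strengths, $J_e = \|H_e\|_\infty$. Since $H$ has $m$ interactions each of strength at most $1$, we have $|J|_1 = \sum_e \|H_e\|_\infty \leq m$ and $\|J\|_F^2 = \sum_e \|H_e\|_\infty^2 \leq m$, hence $\|J\|_F \leq \sqrt{m}$.

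For any integer $1 \leq l < n$, Theorem~\ref{theorem-BHgeneral} produces a globally separable state $\sigma_l$ with
\begin{equation}
\big|\text{Tr}[H(\rho - \sigma_l)]\big| \leq \delta_l = O\!\left(\frac{l}{n}\,|J|_1 + \frac{n^{k/2}}{\sqrt{l}}\,\|J\|_F\right) = O\!\left(\frac{l}{n}\,m + \frac{n^{k/2}\sqrt{m}}{\sqrt{l}}\right).
\end{equation}
The two terms are balanced by the choice $l = \Theta(n^{(k+2)/3} m^{-1/3})$, which gives $\delta_l = O(n^{(k-1)/3} m^{2/3})$. One checks that $l \geq 1$ always, since $m \leq n^k$, and that this $l$ is less than $n$ whenever $m \gtrsim n^{k-1}$. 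In the complementary regime $m \lesssim n^{k-1}$ one has $n^{(k-1)/3} m^{2/3} = \Omega(m)$, so the trivial bound already suffices: every state has energy in $[-m, m]$, hence any product state lies within $2m = O(n^{(k-1)/3} m^{2/3})$ of $\lambda_{min}(H)$. In both regimes we obtain a separable $\sigma$ with $\text{Tr}[H\sigma] \leq \lambda_{min}(H) + O(n^{(k-1)/3} m^{2/3})$.

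Finally, to pass from a separable state to a pure product state: writing $\sigma$ as a convex mixture of product states, linearity of $\text{Tr}[H\cdot]$ gives a single product state in the mixture whose energy is no larger than $\text{Tr}[H\sigma]$; and since $\text{Tr}[H \bigotimes_u \rho_u]$ is multilinear in the single-qudit marginals, one may purify each qudit one at a time without increasing the energy, yielding a pure product state $\sigma = \bigotimes_u |\psi_u\rangle\langle\psi_u|$ meeting the claimed bound. There is no substantive obstacle here beyond bookkeeping the optimization over $l$ and the small-$m$ edge case; all of the content is carried by Theorem~\ref{theorem-BHgeneral}.
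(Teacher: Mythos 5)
Your proof is correct and follows exactly the route the paper intends: apply Theorem~\ref{theorem-BHgeneral} with $\rho$ a ground state, bound $|J|_1\leq m$ and $\|J\|_F\leq\sqrt{m}$, balance the two terms of $\delta_l$ at $l=\Theta(n^{(k+2)/3}m^{-1/3})$, and handle the $m\lesssim n^{k-1}$ edge case by the trivial $2m$ bound. The paper makes this explicit in the remark just after Theorem~\ref{theorem-BHgeneral} (``$\min_l\delta_l = O(n^{(k-1)/3}m^{2/3})$''), and your extraction of a pure product state from the separable $\sigma$ via convexity and multilinearity is the standard final step.
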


To conclude, we use a theorem by \cite{Alon2002RandomSA} on the cut norm of random restrictions of arrays with small cut norm, which they proved as a higher dimensional generalization to the previously discussed Theorem \ref{theorem-submatrixcutnorm}. In particular, 

\begin{theorem}[\cite{Alon2002RandomSA}, Theorem 6] \label{theorem-ksubmatrixcutnorm} Fix $\epsilon, \delta>0$. Let $W$ be a $k$ dimensional array, with bounded norms $\|W\|_{\infty} = O(\epsilon^{-1}), \|W\|_{\infty\rightarrow 1}\leq \epsilon n^k$, and $\|W\|_{F}\leq O(n^k)$. Suppose $Q$ is a random subset of $[n]$ of size $q= \Omega(\delta^{-5}\epsilon^{-4}\log 1/\epsilon)$, and let $W_Q$ be the sub-matrix defined by the restriction of $W$ to $Q$. Then, with probability $1-\delta$, $\|W_Q\|_{\infty\rightarrow 1} \leq O(\epsilon/\sqrt{\delta}\cdot q^k)$, $\|W_Q\|_{F}\leq O(q^k/\sqrt{\delta})$.
\end{theorem}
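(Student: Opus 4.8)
Since this statement is Theorem 6 of \cite{Alon2002RandomSA} (the $k$-dimensional analogue of Theorem \ref{theorem-submatrixcutnorm}), the natural approach is simply to invoke it. If one wanted to reprove it, here is the plan.

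I would dispatch the Frobenius bound first, by a first-moment computation. Writing $\|W_Q\|_F^2=\sum_{i_1,\dots,i_k\in Q}W_{i_1\cdots i_k}^2$, split the sum according to whether the $k$-tuple $(i_1,\dots,i_k)$ has pairwise distinct coordinates: a fixed all-distinct tuple survives (lies entirely inside $Q$) with probability $\binom{n-k}{q-k}/\binom{n}{q}=O((q/n)^k)$, so that part has expectation $O((q/n)^k\|W\|_F^2)$, while the $O(n^{k-1})$ tuples with a repeated index each have magnitude $O(\epsilon^{-2})$ and survive with probability $O((q/n)^{k-1})$, contributing only a lower-order term. In the regime of interest the entrywise bound $\|W\|_\infty=O(\epsilon^{-1})$ in fact already forces $\|W_Q\|_F^2\le q^k\cdot O(\epsilon^{-2})=O(q^{2k})$ deterministically (as $q$ is at least a fixed polynomial in $\epsilon^{-1}$), so the bound $\|W_Q\|_F\le O(q^k/\sqrt{\delta})$ follows at once; the $1/\sqrt{\delta}$ is slack one can absorb via Markov, or via a bounded-differences inequality on the random set $Q$ for the high-probability form.

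The cut-norm bound is the real content, and the steps I would carry out, in order, are: (i) pass to the $\infty\to1$ norm via $\|W_Q\|_{\infty\to1}\le 2^k\|W_Q\|_C$ and observe that by convexity $\|W_Q\|_{\infty\to1}=\max_{x_1,\dots,x_k\in\{-1,1\}^Q}\bigl|\sum_{i_1,\dots,i_k\in Q}W_{i_1\cdots i_k}\prod_j x_{j,i_j}\bigr|$, equivalently a maximum over $k$-tuples of subsets $S_1,\dots,S_k$ of $Q$; (ii) for a \emph{fixed} such cut, bound the expectation of the sampled cut value by $O((q/n)^k\|W\|_{\infty\to1})+O(q^{k-1}\epsilon^{-1})=O(\epsilon q^k)$, using the hypothesis $\|W\|_{\infty\to1}\le\epsilon n^k$ and the repeated-index correction as above; (iii) show that the sampled cut value concentrates about its mean via a Bernstein/Azuma inequality for sampling without replacement, with the deviation scale governed by $\epsilon^{-1}$ and $\|W\|_F$; and (iv) make this concentration uniform over all relevant cuts so that it applies to the one maximizing $\|W_Q\|_{\infty\to1}$, which is selected only after $Q$ is revealed.

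The main obstacle is step (iv): the maximizing cut depends on the random sample, so fixed-cut concentration is not directly applicable, and a crude union bound over all $2^{kn}$ sign patterns on $[n]$ is far too lossy --- it would force $q$ to grow with $n$. The delicate part of the Alon--de la Vega--Kannan--Karpinski argument is to balance the strength of the per-cut tail bound against the number of cuts that actually need to be controlled inside $Q$ (and to localize the whole estimate to the sampled coordinates), so that the resulting requirement on $q$ involves only $\epsilon$, $\delta$, and $\log(1/\epsilon)$; carrying this balancing out carefully is precisely what yields the stated sample size $q=\Omega(\delta^{-5}\epsilon^{-4}\log(1/\epsilon))$ together with the $1/\sqrt{\delta}$ loss in both conclusions. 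For the full details I would simply defer to \cite{Alon2002RandomSA}.
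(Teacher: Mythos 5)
The paper offers no proof here — it imports the result verbatim as Theorem~6 of \cite{Alon2002RandomSA} — and you correctly identify that citing the reference is the intended move. Your sketch of the ADKK argument (first-moment/Markov for the Frobenius bound, and for the cut norm: reduction to $\pm1$ vectors, fixed-cut expectation, concentration for sampling without replacement, and the delicate uniformity-over-random-cuts step that you rightly flag as the crux) is a faithful outline of that proof, but it is not needed for the paper, which simply invokes the result.
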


\cite{Alon2002RandomSA} use the above to prove that a random restriction of the cut decomposition, is still a valid and accurate cut decomposition for the induced subgraph on the sample. We apply their statement to derive an analogous statement to Lemma \ref{lemma-sampleregularity} for $k$-local systems.

\begin{lemma} \label{lemma-ksampleregularity}
Let $H_{D_Q}$ be the sub-hamiltonian of the decomposition $H_D$ of support only in the random set $Q\subset V$. Then with probability $1-\delta$ over the choice of $Q$, for every product state $\rho_Q$ on $Q$, 
\begin{equation}
   \bigg| \text{Tr}[(H_Q-H_{D_Q})\rho]\bigg|\leq O(\epsilon/\sqrt{\delta}\cdot q^k),
\end{equation}
so long as $q= \Omega(\delta^{-5}\epsilon^{-4}\log 1/\epsilon)$.
\end{lemma}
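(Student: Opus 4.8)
The plan is to mirror the proof of Lemma~\ref{lemma-sampleregularity} almost verbatim, replacing the $d^4$ colored error matrices $W^{ij}$ by the $d^{2k}$ colored error tensors $W^c$ of the $k$-dimensional cut decomposition, and the two-dimensional random-submatrix estimate of Theorem~\ref{theorem-submatrixcutnorm} by its $k$-dimensional analog Theorem~\ref{theorem-ksubmatrixcutnorm} (also due to \cite{Alon2002RandomSA}). First I would recall that $H_D$ is obtained (via Theorem~\ref{theorem-kmatrixreg}) by writing each colored tensor $M^c = D^c + W^c$ for $c\in[d^{2k}]$, where $D^c$ is a sum of $s = O(\epsilon^{-2})$ cut arrays, and then $H_{D_Q}$ is the restriction of $H_D$ to hyper-edges inside $Q$, which corresponds precisely to restricting each $M^c$ (and hence each $D^c$, $W^c$) to the index set $Q$.

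Next I would verify that each $W^c$ meets the normalization hypotheses of Theorem~\ref{theorem-ksubmatrixcutnorm}. Since the interaction strengths are bounded by $1$ we have $\|M^c\|_F \le \sqrt{m} \le n^{k/2}$, so the coefficient-length bound of Theorem~\ref{theorem-kmatrixreg}, $\sum_i |d_i| = O(\|M^c\|_F/(\epsilon\, n^{k/2})) = O(\epsilon^{-1})$, shows every entry of $D^c$ (hence of $W^c = M^c - D^c$) is $O(\epsilon^{-1})$; thus $\|W^c\|_\infty = O(\epsilon^{-1})$ and $\|W^c\|_F \le n^{k/2}\,\|W^c\|_\infty = O(n^{k/2}/\epsilon) \le O(n^k)$ for $n$ large. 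The cut-norm error guarantee of Theorem~\ref{theorem-kmatrixreg} together with the remark relating $\|\cdot\|_C$ and $\|\cdot\|_{\infty\rightarrow 1}$ gives $\|W^c\|_{\infty\rightarrow 1} \le 2^k\|W^c\|_C \le O(\epsilon\, n^k)$, matching the hypotheses of Theorem~\ref{theorem-ksubmatrixcutnorm}. Applying that theorem to each of the $d^{2k} = O(1)$ tensors and taking a union bound over colors yields: with probability $1-\delta$, $\|W^c_Q\|_{\infty\rightarrow 1} = O(\epsilon\, q^k/\sqrt{\delta})$ for all $c$, provided $q = \Omega(\delta^{-5}\epsilon^{-4}\log 1/\epsilon)$.

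Finally, for a product state $\rho_Q$ on $Q$, the same decoupling-and-diagonal-removal computation as in the proof of Claim~\ref{claim-kdpsreg} bounds
\[
\big|\text{Tr}[(H_Q - H_{D_Q})\rho_Q]\big| \le \frac{1}{k!\,d^k}\sum_{c\in[d^{2k}]}\Big(\|W^c_Q\|_{\infty\rightarrow 1} + k^2\, q^{k-2}\sum_{v\in Q}\max_{e'}|W^c_{v,v,e'}|\Big).
\]
The diagonal ``self-edge'' term is $O(q^{k-1}/\epsilon)$ using $\max_{e'}|W^c_{v,v,e'}| \le \sum_i|d_i| = O(\epsilon^{-1})$, and this is dominated by $\|W^c_Q\|_{\infty\rightarrow 1} = O(\epsilon\, q^k/\sqrt{\delta})$ once $q \gg \sqrt{\delta}\,\epsilon^{-2}$, which is implied by $q = \Omega(\delta^{-5}\epsilon^{-4}\log 1/\epsilon)$. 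Since $d, k = O(1)$, summing over the $d^{2k}$ colors gives the claimed bound $O(\epsilon/\sqrt{\delta}\cdot q^k)$.

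The only genuinely non-mechanical part is checking that the error tensors $W^c$ satisfy the normalization hypotheses of Theorem~\ref{theorem-ksubmatrixcutnorm} — in particular the $\|\cdot\|_\infty = O(\epsilon^{-1})$ and $\|\cdot\|_F = O(n^k)$ bounds, which I would derive from the coefficient-length guarantee of the Alon--de la Vega--Kannan--Karpinski decomposition — and confirming that the diagonal ``self-edge'' terms introduced by the sampling remain negligible relative to the $\infty\rightarrow 1$-norm error in the stated range of $q$; everything else is a direct repetition of the $2$-local argument with $d^4 \to d^{2k}$ and $q^2 \to q^k$.
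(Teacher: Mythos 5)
Your proof is correct and follows essentially the same route as the paper: apply Theorem~\ref{theorem-ksubmatrixcutnorm} to each of the $d^{2k}$ colored error tensors $W^c$, union bound over colors, and transfer the resulting cut-norm bound on $W^c_Q$ to the product-state energy difference as in Claim~\ref{claim-kdpsreg}. The paper's proof is terser (it asserts that the $W^c$ "fit the norm guarantees... by construction" and does not re-derive the diagonal self-edge bound, which it treats in Claim~\ref{claim-kdpsreg}), whereas you explicitly verify $\|W^c\|_\infty = O(\epsilon^{-1})$, $\|W^c\|_F = O(n^k)$, and $\|W^c\|_{\infty\to1} = O(\epsilon n^k)$ from the coefficient-length guarantee and carry through the diagonal term — a somewhat more careful presentation of the same argument.
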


\begin{proof}
Consider the $d^{2k} = O(1)$ `error' arrays $W^{c}$, for each `color' $c\in [d^{2k}]$ in the generalized Pauli basis decomposition. We note the arrays $W^{c}$ fit the norm guarantees of Theorem \ref{theorem-ksubmatrixcutnorm} \cite{Alon2002RandomSA} by construction, see Theorem \ref{theorem-kmatrixreg}. By a union bound over all $d^{2k} = O(1)$ arrays, we are guaranteed that with probability $1-\delta$, $|\text{Tr}[(H_Q-H_{D_Q})\rho]| \leq \sum_{c}\|W^{c}_Q\|_{\infty\rightarrow 1}\leq O(\epsilon/\sqrt{\delta}\cdot q^k)$ for all product states $\rho$, so long as $q= \Omega(\delta^{-5}\epsilon^{-4}\log 1/\epsilon)$. 
\end{proof}

\section{Regularity and Applications on Low-Threshold Rank Hamiltonians}
 \label{section-threshold}

Let us now turn to extending the Hamiltonian regularity lemma to 2-Local Hamiltonians on graphs of low threshold rank. Following the proof techniques of the previous sections, we begin by devising a cut decomposition for these systems by considering multi-coloured versions of the cut decomposition for low threshold rank graphs by \cite{Gharan2013ANR}. Unfortunately, the structure of these systems will prohibit us from exhibiting a clean spectral characterization of the decomposition as in Lemma \ref{lemma-hregularity}, however, combined with our assymetric product state approximations, we will use them to devise extensive error approximation algorithms for these 2-local quantum systems. 

We present two results in this section. The first of which is an algorithm for 2-Local Hamiltonians on `generic' low threshold rank interaction graphs. It highlights the main challenges in establishing a regularity lemma for quantum systems on these graphs, and develops the main algorithmic techniques we use. Our second and main result of this section is a classical algorithm for Quantum Max Cut on low threshold rank graphs, discussed in Theorem \ref{results-qmc}. We show how the symmetry in the Quantum Max Cut Hamiltonian in different basis enables us to apply the `common refinement' technique discussed in section \ref{section-gseptas}, and compress the size of the resulting optimization program, which we were not able to do on general low threshold rank Hamiltonians.

\subsection{A Regularity Lemma}
\label{subsection-regthreshold}

The $\delta$-SOS threshold rank of a graph $G$ is the number of eigenvalues in the normalized adjacency matrix of $G$ outside of the range $[-\delta, \delta]$. In the more general case of a real symmetric matrix $J$, we define the normalized adjacency matrix $J_D$ of $J$ in terms of `effective degrees'. For each row $u$, let the effective degree $d_u = \sum_v |J_{uv}|$, $D = \text{diag}(d_u, u\in [n])$ be the diagonal matrix of degrees, and $J_D = D^{-1/2}JD^{-1/2}$.

\begin{definition}
The $\delta$-SOS threshold rank of a symmetric real matrix $J$ is defined to be $t_\delta(J_D) = \sum_{i:|\lambda_i|\geq \delta}\lambda_i^2$, where $\lambda_1\cdots \lambda_n$ are the eigenvalues of $J_D$.
\end{definition}

The weak regularity lemma by \cite{Gharan2013ANR} for low threshold rank matrices is based on designing a cut decomposition for a low rank approximation of $J_D$. In their decomposition, the notion of a `cut matrix' is now a weighted complete bipartite sub-graph, CUT$(S_1, S_2, \alpha) = \alpha \cdot d_{S_1}d^T_{S_2}$, where here we indicate $d_S$ as the vector of degrees $(d_S)_i = d_i$ if $i\in S$, $0$ otherwise. They prove:

\begin{theorem} \label{thm-regthreshold}
[\cite{Gharan2013ANR}] Let $J$ be a real symmetric matrix, $\epsilon > 0$, and $t\equiv t_{\epsilon/2}(J_D)$. Then there exists a cut decomposition of $J = \sum D^{(i)}+W$ of width $O(t\epsilon^{-2})$, error $\|W\|_{\infty\rightarrow 1}\leq \epsilon \cdot|J|_1$, and where each coefficient of the decomposition satisfies $|\alpha_i|\leq O(\sqrt{t}/|J|_1)$. Furthermore, this decomposition can be found in \text{poly}$(n,  \epsilon^{-1}, t)$ time.
\end{theorem}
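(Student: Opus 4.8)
The plan is to follow \cite{Gharan2013ANR} and reduce the statement for $J$ to the Frieze--Kannan greedy argument applied to a \emph{spectrally truncated} version of the normalized matrix $J_D = D^{-1/2}JD^{-1/2}$. The reason for passing to $J_D$ is that cut decompositions of $J_D$ pull back to \emph{weighted}-cut decompositions of $J$: if $J_D = \sum_i \hat\alpha_i \mathbf{1}_{S_i}\mathbf{1}_{T_i}^{T} + W_M$, then $J = D^{1/2}J_DD^{1/2} = \sum_i \hat\alpha_i\, d_{S_i}d_{T_i}^{T} + D^{1/2}W_MD^{1/2}$, and each term $\hat\alpha_i\, d_{S_i}d_{T_i}^{T}$ is exactly a weighted cut matrix $\mathrm{CUT}(S_i,T_i,\alpha_i)$ in the sense defined just before the theorem. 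So it suffices to (a) produce a good cut decomposition of $J_D$ and (b) verify that conjugating its error matrix by $D^{1/2}$ only costs a factor of $|J|_1$: one writes $\|D^{1/2}WD^{1/2}\|_{\infty\to1} = \max_{x,y\in\{\pm1\}^n}\langle D^{1/2}x,\, WD^{1/2}y\rangle$ and uses $\sum_u d_u = |J|_1$ together with Cauchy--Schwarz to convert a bound on the degree-weighted cut norm of $W$ into an $\epsilon|J|_1$ bound on $J$.

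Next I would perform the spectral truncation. Let $P$ be the spectral projector of $J_D$ onto eigenvectors with $|\lambda_i|\ge \epsilon/2$, and set $M_{\ge} = PJ_DP$, $M_{<} = J_D - M_{\ge}$. By construction $\|M_{<}\|_{\mathrm{op}}\le \epsilon/2$, so its contribution to the cut-norm error is controlled immediately (after the $D^{1/2}$ re-weighting), while $\|M_{\ge}\|_F^2 = \sum_{i:|\lambda_i|\ge\epsilon/2}\lambda_i^2 = t_{\epsilon/2}(J_D) = t$. The gain over the dense case is that I now run the Frieze--Kannan greedy procedure on $M_{\ge}$ while measuring progress by $\|\cdot\|_F^2$: starting from residual $W^{(0)}=M_{\ge}$, as long as the (degree-weighted) cut norm of the residual exceeds $\epsilon$ one can find, in $\mathrm{poly}(n,\epsilon^{-1})$ time, a cut matrix whose subtraction decreases $\|W\|_F^2$ by $\Omega(\epsilon^2)$ in the normalization where $\|M_{\ge}\|_F^2 = t$. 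Since the total budget is $t$, the procedure halts after $O(t\epsilon^{-2})$ steps, which is the claimed width; the coefficient-length bound $\sum_i\hat\alpha_i^2\|\mathbf{1}_{S_i}\|_2^2\|\mathbf{1}_{T_i}\|_2^2 \lesssim \|M_{\ge}\|_F^2 = t$ of the Frieze--Kannan decomposition, pushed through the $D^{1/2}$ normalization, yields the per-coefficient bound $|\alpha_i| = O(\sqrt t/|J|_1)$.

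For the running time, the nontrivial ingredients are: computing the part of the spectrum of $J_D$ with $|\lambda|\ge\epsilon/2$, which is $\mathrm{poly}(n)$; and, at each of the $O(t\epsilon^{-2})$ greedy steps, (approximately) finding a cut of the current residual of near-maximal cut norm. Since the residual lives in the $O(t\epsilon^{-2})$-dimensional range of $P$, this reduces to an optimization over a low-dimensional representation, which one handles, following \cite{Frieze1999QuickAT}, by a sampling/exhaustive-guessing routine that extracts a cut achieving cut norm within a constant factor of the maximum in $\mathrm{poly}(n,\epsilon^{-1},t)$ time. Composing the steps gives the overall $\mathrm{poly}(n,\epsilon^{-1},t)$ bound.

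I expect the main obstacle to be bookkeeping the \emph{degree normalization} consistently: every bound established for $J_D$---spectral smallness of $M_{<}$, cut-norm smallness of the Frieze--Kannan residual, coefficient length---must be converted to the corresponding statement about $J$ after conjugation by $D^{1/2}$, and one must check these conversions cost only factors of $|J|_1$ rather than, say, $d_{\max}$. A secondary technical point is making the greedy step genuinely make $\Omega(\epsilon^2)$ Frobenius progress in the degree-weighted geometry; this is precisely where restricting to the top-eigenvalue subspace is essential, since that is what makes $\|M_{\ge}\|_F^2$---and hence the iteration count---scale with $t$ instead of with $n$.
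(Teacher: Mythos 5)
This theorem is imported verbatim from \cite{Gharan2013ANR}; the paper cites it and does not reprove it, so there is no in-paper proof to compare against. Evaluating your reconstruction on its own terms: the high-level plan --- spectrally truncate the normalized matrix so that the low-eigenvalue piece is small in operator norm, run a Frieze--Kannan greedy argument whose Frobenius budget is the SOS threshold rank $t$ rather than $n$, and push the conversion cost through as a factor of $|J|_1 = \sum_u d_u$ --- is indeed what Oveis Gharan and Trevisan do. However, the pivot step of your argument contains an algebraic error that you cannot route around without reorganizing the argument.

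You write that if $J_D = \sum_i \hat\alpha_i\,\mathbf{1}_{S_i}\mathbf{1}_{T_i}^T + W_M$, then conjugating by $D^{1/2}$ yields $J = \sum_i \hat\alpha_i\, d_{S_i}d_{T_i}^T + D^{1/2}W_M D^{1/2}$. This is false: for the diagonal matrix $D^{1/2}$ with $(D^{1/2})_{uu}=d_u^{1/2}$, one has $(D^{1/2}\mathbf{1}_S\mathbf{1}_T^T D^{1/2})_{uv} = d_u^{1/2}d_v^{1/2}\,\mathbf{1}_{u\in S}\mathbf{1}_{v\in T}$, i.e.\ the rank-one matrix $d^{1/2}_S(d^{1/2}_T)^T$, \emph{not} $d_Sd_T^T$, whose $(u,v)$ entry is $d_ud_v$ on $S\times T$. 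So the objects you produce after conjugation are not the cut matrices $\mathrm{CUT}(S,T,\alpha)=\alpha\, d_Sd_T^T$ required by the theorem (and used downstream in appendix D.2, where the relaxation constraints are literally written as $\sum_{u\in S}d_u\operatorname{Tr}[\sigma^a\rho_u]$). The correct identity is $D\,\mathbf{1}_S\mathbf{1}_T^T D = d_Sd_T^T$, which means you would have to decompose $D^{-1}JD^{-1}$, not $J_D$ --- and $\|D^{-1}JD^{-1}\|_F^2$ is not $\sum_i\lambda_i(J_D)^2$, so the clean ``Frobenius budget $=t$'' accounting evaporates.

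The repair is to run the greedy argument directly on $J$ in the \emph{degree-weighted} inner product $\langle A,B\rangle_D := \operatorname{Tr}(D^{-1}A^TD^{-1}B)$ rather than to decompose $J_D$ and pull back. In this geometry one checks that $\|J\|_{F,D}^2 = \operatorname{Tr}(J_D^2)$, so the truncated high-eigenvalue piece $J^{\mathrm{high}} := D^{1/2}J_D^{\mathrm{high}}D^{1/2}$ satisfies $\|J^{\mathrm{high}}\|_{F,D}^2 = \sum_{|\lambda_i|\ge\epsilon/2}\lambda_i^2 = t$, and that the cut matrices are natural test objects: $\langle W,\, d_Sd_T^T\rangle_D = \mathbf{1}_S^T W\,\mathbf{1}_T$ and $\|d_Sd_T^T\|_{F,D}^2 = d(S)d(T)\le |J|_1^2$. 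Subtracting the projection of $W$ onto $d_Sd_T^T$ in this inner product decreases $\|W\|_{F,D}^2$ by $(\mathbf{1}_S^TW\mathbf{1}_T)^2/(d(S)d(T)) \ge \Omega(\epsilon^2)$ whenever the residual cut norm exceeds $\epsilon|J|_1$, so the width is $O(t\epsilon^{-2})$, and the Cauchy--Schwarz step for the coefficient bound should also be carried out in $\langle\cdot,\cdot\rangle_D$. Your treatment of the low-eigenvalue part is fine once phrased there: $\max_{x,y\in\{\pm1\}^n}(D^{1/2}x)^T J_D^{\mathrm{low}}(D^{1/2}y) \le \|J_D^{\mathrm{low}}\|_{\mathrm{op}}\sum_u d_u \le (\epsilon/2)|J|_1$. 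The ``main obstacle'' you flag at the end --- keeping the normalization consistent --- is real, but it is not merely bookkeeping: as written, the decomposition you construct has cuts of the wrong functional form.
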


We use this theorem to define decompositions for 2-Local Hamiltonians $H= \sum_{(u,v)\in E}H_{u,v}\otimes  \mathbb{I}_{V\setminus (u, v)}$ on qudits whose Pauli interaction graphs are undirected and each have low threshold rank. That is, let $H = \sum_{a, b\in [d^2]}\sum_{(u,v)\in E } h_{uv}^{ab }\sigma_{u}^a\otimes \sigma_{v}^b\otimes \mathbb{I}_{V\setminus (u, v)}$ be a Pauli basis decomposition of $H$, and let $J^{ab} = \{h^{ab}_{uv}\}_{u, v\in [n]}$ be the $(a, b)\in [d^2]^2$ interaction matrix of $H$. By undirected, we simply mean that $h_{uv}^{ab} = h_{vu}^{ab}$, and thus $J^{ab}$ can be considered the adjacency matrix of a weighted undirected graph. We refer to the effective $(a, b)$-degree of a vertex $u$ as $d_u^{a, b} = \sum_v |J^{a, b}_{uv}|$, and note that we assume $d^{a, b}=d^{b, a}$.

Let $J^{a, b}_D$ be the normalized adjacency matrix for the interaction $(a, b)$. In this setting, the `threshold rank' we consider is the maximum $\delta$-SOS threshold rank $t\equiv \max_{a, b} t_{\epsilon/2}(J^{a, b}_D)$ among the interactions $a, b$. If this $t$ is sufficiently small, then we can apply the framework of \cite{Gharan2013ANR} to our setting efficiently. In particular, we use Theorem \ref{thm-regthreshold} on each $J^{(a, b)}$ for $(a, b) \in [d^2]^2$, obtaining cuts $(R^{a, b, i}, L^{a, b, i})$ and cut matrices $D^{(a, b, i)}, i\in [s]$ in $\text{poly}(n, t, 1/\epsilon)$ time. Now, we would like to follow our ideas in section \ref{section-regularity} and express a Hamiltonian $H_D$ 

\begin{equation}
    H_D  = \frac{1}{2}\sum_{a, b\in [d], i\in [s]}\sum_{u\neq v}D^{(a, b, i)}_{u , v}\sigma_{u}^a\otimes \sigma_{v}^b.
\end{equation}

However, this unfortunately does \textit{not} necessarily preserve the energy of product states. If the cut decomposition generated only disjoint cuts ($R^{a, b, i}\cap L^{a, b, i} = \emptyset$) such that the diagonal entries of $D^{(a, b, i)}$ are all 0, then indeed we would recover the product state statement $|\text{Tr}[(H-H_D)\rho]|\leq \|W\|_{\infty\rightarrow 1}$ and the ensuing spectral bound. Instead, we can only prove the weaker statement with the self-edges:

\begin{claim}\label{claim-regularitythreshold}
Let $H=\sum_{e} H_{e}$ be a 2-Local Hamiltonian defined on qudits of local dimension $d=O(1)$, and let $J_{u, v} = \|H_{(u, v)}\|_{\infty}$ be the matrix of interaction strengths. Let $t\equiv \max_{a, b} t_{O(\epsilon)}(J^{a, b}_D)$ be the maximum threshold rank among the Pauli interactions $a, b\in [d^2]$, and let $D^{(a, b, i)} =$ CUT$(R^{a, b, i},$ $L^{a, b, i}, \alpha^{a, b, i}),  i\in [O(t\epsilon^{-2})]$ be a cut decomposition of each $J^{a, b}$ using Theorem \ref{thm-regthreshold}. Then, for all product states $\rho = \otimes_u\rho_u$ 

\begin{equation}
        \bigg| \text{Tr}[H\rho] - \frac{1}{2}
        \sum_{a, b\in [d^2]\atop i\in [s]} \alpha^{a, b, i} \bigg(\sum_{u\in R^{a, b, i}} d^{a, b}_u\cdot \text{Tr}[\sigma^a \rho_u]\bigg) \bigg(\sum_{v\in L^{a, b, i}} d^{a, b}_v\cdot\text{Tr}[\sigma^b \rho_v]\bigg) \bigg|\leq \epsilon\cdot |J|_1
\end{equation}

\end{claim}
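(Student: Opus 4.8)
The plan is to mimic the proof of Theorem \ref{theorem-psregularity}, exploiting the fact that when we restrict attention to product states $\rho = \otimes_u \rho_u$, the energy functional becomes multilinear in the single-qudit magnetizations $\alpha^a_u \equiv \text{Tr}[\sigma^a \rho_u]$, and hence the different Pauli ``colors'' $(a,b)$ decouple. Concretely, writing $H = \sum_{a,b}\sum_{(u,v)\in E} J^{ab}_{uv}\sigma^a_u\otimes\sigma^b_v$ and using the cut decomposition $J^{ab} = \sum_i D^{(a,b,i)} + W^{ab}$ from Theorem \ref{thm-regthreshold}, the energy of $\rho$ splits as
\begin{equation}
\text{Tr}[H\rho] = \tfrac12\sum_{a,b}\sum_{u\neq v} J^{ab}_{uv}\,\alpha^a_u\alpha^b_v,
\end{equation}
and the target quantity in the claim is exactly $\tfrac12\sum_{a,b,i}\sum_{u\neq v} D^{(a,b,i)}_{uv}\,\alpha^a_u\alpha^b_v$ once one expands $D^{(a,b,i)}=\text{CUT}(R^{a,b,i},L^{a,b,i},\alpha^{a,b,i}) = \alpha^{a,b,i}\, d_{R}\,d_{L}^T$ (so the inner double sum over $u\in R^{a,b,i}$, $v\in L^{a,b,i}$ with weights $d^{a,b}_u d^{a,b}_v$ produces precisely the product of the two bracketed sums). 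Therefore the difference to be bounded is $\tfrac12\sum_{a,b}\sum_{u\neq v} W^{ab}_{uv}\,\alpha^a_u\alpha^b_v$, up to the diagonal correction.

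The key steps, in order: (i) Expand both sides in the Pauli basis and identify the difference as $\bigl|\tfrac12\sum_{a,b}\sum_{u\neq v}W^{ab}_{uv}\alpha^a_u\alpha^b_v\bigr|$. (ii) Apply the triangle inequality over the $O(d^4)=O(1)$ colors $(a,b)$, reducing to bounding $\bigl|\sum_{u\neq v}W^{ab}_{uv}\alpha^a_u\alpha^b_v\bigr|$ for a single pair $(a,b)$. (iii) Reintroduce the diagonal $u=v$ terms so that $\sum_{u\neq v}$ becomes $\sum_{u,v}$, paying a cost of $\sum_u |W^{ab}_{uu}|$; since each $x^a_u = |\alpha^a_u|\le 1$, the off-diagonal sum over all $u,v$ is bounded by $\|W^{ab}\|_{\infty\to1}$, and by Theorem \ref{thm-regthreshold} we have $\|W^{ab}\|_{\infty\to1}\le \epsilon'\cdot |J^{ab}|_1$ for a suitable $\epsilon' = \Theta(\epsilon/d^4)$. (iv) Bound the diagonal correction: $|W^{ab}_{uu}| = |J^{ab}_{uu} - \sum_i \alpha^{a,b,i} d^{a,b}_u{}^2|$; using the coefficient bound $|\alpha^{a,b,i}| \le O(\sqrt{t}/|J^{ab}|_1)$, width $s = O(t\epsilon^{-2})$, and $d^{a,b}_u \le$ (row-$1$-norm of $J^{ab}$) $\le |J^{ab}|_1$, one gets $\sum_u|W^{ab}_{uu}|$ is lower order — this is where one uses that $|J|_1 = \omega(\text{poly}(t,1/\epsilon))$, i.e. the graph is genuinely dense relative to the threshold rank, just as in the proof of Theorem \ref{theorem-psregularity} where $\epsilon^{-2}=o(n)$ was assumed. (v) Sum over colors and use $|J^{ab}|_1 \le |J|_1$ (monotonicity of the entrywise $1$-norm under the Pauli grouping, since $\|H_{uv}\|_\infty \ge d^{-2}|\text{Tr}[H_{uv}\sigma^a_u\otimes\sigma^b_v]| = |J^{ab}_{uv}|$, so $\sum_{ab}|J^{ab}|_1$ itself is $O(d^4)\,|J|_1$) to collect the total error as $\epsilon\cdot |J|_1$ after rescaling $\epsilon' \to \epsilon$.

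I expect the only genuine obstacle to be step (iv), controlling the diagonal ``self-edge'' terms $W^{ab}_{uu}$, because the cut decomposition of \cite{Gharan2013ANR} may place a vertex on both sides $R^{a,b,i}$ and $L^{a,b,i}$ of a cut, contributing spurious diagonal weight of magnitude up to $|\alpha^{a,b,i}|(d^{a,b}_u)^2$, and unlike the Frieze–Kannan setting one must track the degree-normalization carefully. The resolution is the same in spirit: each individual diagonal entry is bounded by a Cauchy–Schwarz estimate $|W^{ab}_{uu}| \le \sqrt{s}\,(\sum_i (\alpha^{a,b,i})^2 (d^{a,b}_u)^4)^{1/2}$ combined with $\sum_i (\alpha^{a,b,i})^2 \le s\cdot \max_i (\alpha^{a,b,i})^2 \le O(t s / |J^{ab}|_1^2)$, and then summing over $u$ and using $\sum_u (d^{a,b}_u)^2 \le (\max_u d^{a,b}_u)\cdot |J^{ab}|_1 \le |J^{ab}|_1^2$, giving $\sum_u |W^{ab}_{uu}| = O(s\sqrt{t}\cdot |J^{ab}|_1 \cdot \max_u (d^{a,b}_u)/|J^{ab}|_1) = O(s\sqrt{t})\cdot \max_u d^{a,b}_u$, which is $o(|J^{ab}|_1)$ precisely under the density assumption that $\max_u d_u = o(|J|_1 / \text{poly}(t,1/\epsilon))$. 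One should note explicitly in the statement (as the 2-local Theorem \ref{theorem-psregularity} does with $\epsilon^{-2}=o(n)$) that this density condition is what is being assumed; everything else is bookkeeping identical to the Frieze–Kannan argument.
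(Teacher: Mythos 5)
Your overall strategy — expanding in the Pauli basis, letting the colors decouple on product states, and bounding the error by the $\infty\!\to\!1$ norm of the residuals $W^{a,b}$ — is exactly the paper's argument. But there is a bookkeeping error in step (i) that propagates into a phantom obstacle, and resolving it actually makes the proof shorter and removes an assumption you were forced to add.

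The mistake is in identifying the second term of the claim. You write that the product of the two bracketed sums equals $\tfrac12\sum_{a,b,i}\sum_{u\neq v}D^{(a,b,i)}_{uv}\alpha^a_u\alpha^b_v$, but a product of sums runs over the full Cartesian product, so it is $\tfrac12\sum_{a,b,i}\sum_{u,v}D^{(a,b,i)}_{uv}\alpha^a_u\alpha^b_v$ \emph{including} $u=v$ whenever $u\in R^{a,b,i}\cap L^{a,b,i}$. Since $\text{Tr}[H\rho] = \tfrac12\sum_{a,b}\sum_{u\neq v}J^{ab}_{uv}\alpha^a_u\alpha^b_v$ and $J^{ab}_{uu}=0$ (a 2-local Hamiltonian has no self-interactions), the difference is exactly
\begin{equation*}
\frac12\sum_{a,b}\sum_{u,v} W^{a,b}_{uv}\,\alpha^a_u\,\alpha^b_v,
\end{equation*}
with the diagonal already accounted for inside $W^{a,b}_{uu}=-\sum_i D^{(a,b,i)}_{uu}$. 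This is directly bounded by $\sum_{a,b}\|W^{a,b}\|_{\infty\to 1}\leq O(d^4)\cdot\epsilon'\cdot|J|_1$, and rescaling $\epsilon'$ finishes the proof. No diagonal correction ever needs to be ``reintroduced,'' and no extra term $\sum_u|W^{a,b}_{uu}|$ appears.

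Consequently your steps (iii) and (iv), and in particular the ``genuine obstacle'' you identify — the density assumption $\max_u d_u = o(|J|_1/\mathrm{poly}(t,1/\epsilon))$ — are artifacts of that mis-expansion. The claim holds without any such hypothesis, and adding it to the statement would be incorrect. Note the contrast with Theorem \ref{theorem-psregularity}: there the comparison object is $\text{Tr}[H_D\rho]$ for an explicit Hamiltonian $H_D$ in which the self-edges are \emph{filtered out} (so as to define a valid 2-local Hamiltonian), and one genuinely must control the diagonal entries $W^{ij}_{vv}$ as a residual. Here, precisely because the threshold-rank cut decomposition does not yield a clean bipartite sub-Hamiltonian, the claim is phrased directly as a product of weighted magnetizations — which keeps the self-edges inside the comparison quantity, so there is nothing to correct. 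Your instinct that the diagonal is the delicate point in the Frieze–Kannan framework was right in general, but this claim is stated exactly so that the issue does not arise.
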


\begin{proof}
Let $W^{a, b} = J^{a, b}-\sum_i D^{a, b, i}$ be the $n\times n$ error matrices in the cut decomposition, and note that the error above is exactly 

\begin{equation}
    = \bigg|\sum_{a, b} \sum_{u, v}W^{a, b}_{u, v} \text{Tr}[\sigma^a \rho_u]\text{Tr}[\sigma^b \rho_v]\bigg|\leq \sum_{a, b}\|W^{a, b}\|_{\infty \rightarrow 1} \leq \epsilon\cdot d^4 \cdot |J|_1
\end{equation}

by appropriately rescaling $\epsilon$ and the width of the decomposition we prove the claim.
\end{proof}

\subsection{Algorithms for Low-Threshold Rank Hamiltonians}
\label{subsection-algthreshold}

In this section, we discuss two algorithms for local Hamiltonians configured on low threshold rank graphs. The first of which is the natural generalization of the techniques in section \ref{section-gseptas}, in relaxing the optimization program to checking the feasibility of a small number of convex programs. However, in the general case that $H$ is composed of multiple \textit{distinct} low threshold rank interaction graphs, we will not be able to perform the common refinement technique analogous to section \ref{section-gseptas} to compress these convex constraints into a constant number of variables. Instead, for our second algorithm we restrict our attention to the Quantum Max Cut on on low threshold rank graphs, and show how for these systems with further symmetry in their interaction graph the compression technique holds and leads to faster algorithms.

To begin, let us consider a 2-Local Hamiltonian $H$ on $n$ qubits whose Pauli graphs are undirected and of maximum threshold rank $t = \max_{a, b}t_{O(\epsilon)} (J^{a, b}_D)$. In the setting of Claim \ref{claim-regularitythreshold}, the energy of $H$ on any product state $\rho=\otimes \rho_u$ is
\begin{equation}
    \text{Tr}[H\rho] \approx \frac{1}{2} \sum_{a, b\in [d^2]}\sum_{k\in [s]} \alpha^{a, b, k} \bigg(\sum_{u\in R^{a, b, k}} d_u^{a, b}\text{Tr}[\sigma_{u}^a\rho_u]\bigg)\cdot \bigg(\sum_{u\in L^{a, b, k}} d_v^{a, b}\text{Tr}[\sigma_{v}^b\rho_v]\bigg)
\end{equation}

We establish a series of inequality constraints on the weighted magnetizations above, to relax this non-convex optimization problem over product states to a convex feasibility problem. In particular, since $\sum_{u\in R^{a, b, k}} d_u^{a, b} \leq  \|J^{a, b}\|_1$, we pick precision parameters $\Delta^{a, b}$ and ranges for each color $a, b$:

\begin{equation*}
    I^{a, b}= \bigg\{- \Delta^{a,  b} \big(\lfloor \frac{ \|J^{a, b}\|_1}{\Delta^{a,  b}}\rfloor +1\big) , - \Delta^{a,  b} \lfloor \frac{ \|J^{a, b}\|_1}{\Delta^{a,  b}}\rfloor, \cdots \Delta^{a,  b} \big(\lfloor \frac{ \|J^{a, b}\|_1}{\Delta^{a,  b}}\rfloor +1\big)\bigg\}
\end{equation*}

That is, rounding to the nearest multiple of $\Delta^{a,  b}$. For each of $d^4 \cdot s$ cuts in the cut decomposition, we pick guesses $\Vec{r}, \Vec{c} = (r^{a, b,  k})_{a, b\in[d^2],k\in [s]}, (c^{a, b,  k})_{a, b\in[4],k\in [s]}, r^{a, b,  k}, c^{a, b,  k} \in I^{a, b}$ for the weighted magnetizations of both sides of each cut. For each $\Vec{r}, \Vec{c}$, we define the relaxed constraints $C_{\Vec{r}, \Vec{c}}$ over the description of a product state over $n$ qudits to check whether the product state has these weighted magnetizations within a $\Delta$ range of the guess:
\begin{gather}
  C_{\Vec{r}, \Vec{c}}: \text{Tr}[\rho_{u}]=1,   \rho_{u}\geq 0 \text{ for all particles }u\in V  \\
    r^{a, b, k} - \Delta^{a, b} \leq \sum_{u\in R^{a, b, k}} d_u^{a, b} \cdot \text{Tr}[\sigma_{u}^a\rho_u] \leq r^{a, b, k} + \Delta^{a, b} \text{ and } \\
    c^{a, b, k} - \Delta^{a, b} \leq \sum_{u\in L^{a, b, k}} d_u^{a, b} \cdot \text{Tr}[\sigma_{u}^b\rho_u] \leq c^{a, b, k} + \Delta^{a, b} \text{ for all }a, b\in [d^2], k\in [s]
\end{gather}

Our algorithm is to enumerate over all the choices of $\Vec{r}, \Vec{c}$, and for each one, check the feasibility of $C_{\Vec{r}, \Vec{c}}$ via the algorithm in Theorem \ref{theorem-bertsimas} up to a cutoff radius $r'$. Among the pairs $\Vec{r}, \Vec{c}$ which are feasible, we output as estimate the pair $\Vec{r}^*, \Vec{c}^*$ which minimizes the energy estimate $\sum_{a, b, k} \alpha_{a, b, c} r_{a, b, k}^* c_{a, b, k}^*$, and we output any feasible $\rho$ for the choice of $C_{\Vec{r}^*, \Vec{c}^*}$. Let $\hat{V}$ be said energy estimate. 

To prove correctness of this scheme, we need to show that for an appropriate choice of the precision parameters $\Delta^{a,b}$, the feasible solutions to $C_{\Vec{r}, \Vec{c}}$ have energy close to the guess $\sum_{a, b, k} \alpha_{a, b, c} r_{a, b, k} c_{a, b, k}$, and thereby the output energy is close to the ground state energy.

\begin{claim} \label{claim-feasconstraintsthreshold}
    Pick $\Delta^{a, b} = \delta \|J^{a, b}\|_1$, with $\delta = O(\epsilon^3/t^{3/2})$. If a product state $\rho$ is feasible for $C_{\Vec{r}, \Vec{c}}$, then 
    \begin{equation}
        \bigg|\text{Tr}[H\rho] - \sum_{a, b, k} \alpha_{a, b, c} r_{a, b, k} c_{a, b, k}\bigg|\leq \epsilon \cdot |J|_1 
    \end{equation}
\end{claim}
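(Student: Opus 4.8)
The plan is to follow the same two-step template used throughout the paper (cf. Lemma \ref{lemma-feasibleconstraints} and Claim \ref{claim-kfeasibleconstraints}): first control the gap between the true energy of $\rho$ under $H$ and the ``idealized'' bilinear expression $\sum_{a,b,k}\alpha^{a,b,k} r'^{a,b,k} c'^{a,b,k}$ evaluated at the \emph{actual} weighted magnetizations $r', c'$ of $\rho$, and then control the gap between that bilinear expression and the one at the \emph{guessed} values $\vec r, \vec c$. The first gap is exactly what Claim \ref{claim-regularitythreshold} gives us (after rescaling $\epsilon$): by construction $D^{(a,b,k)}$ are the cut matrices from Theorem \ref{thm-regthreshold} applied to $J^{a,b}$, and $r'^{a,b,k}=\sum_{u\in R^{a,b,k}} d_u^{a,b}\operatorname{Tr}[\sigma^a\rho_u]$, $c'^{a,b,k}=\sum_{v\in L^{a,b,k}} d_v^{a,b}\operatorname{Tr}[\sigma^b\rho_v]$, so
\begin{equation}
    \bigg|\operatorname{Tr}[H\rho] - \tfrac12\sum_{a,b,k}\alpha^{a,b,k} r'^{a,b,k} c'^{a,b,k}\bigg| \leq \epsilon\cdot|J|_1 / 2 .
\end{equation}

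For the second gap, I would invoke the Cauchy--Schwarz-type ``closeness lemma'' in the style of Lemma \ref{closenesslemma}, adapted to the low-threshold-rank setting. Since $\rho$ is feasible for $C_{\vec r,\vec c}$, we have $|r^{a,b,k}-r'^{a,b,k}|\leq \Delta^{a,b}=\delta\|J^{a,b}\|_1$ and likewise for $c$; also $|r'^{a,b,k}|,|c'^{a,b,k}|\leq \|J^{a,b}\|_1$ since $|\operatorname{Tr}[\sigma\rho_u]|\leq 1$ and $\sum_u d_u^{a,b}\le \|J^{a,b}\|_1$ (well, $2|J^{a,b}|_1$; constants absorbed). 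Writing $\beta^{a,b,k}=\alpha^{a,b,k}/\|J^{a,b}\|_1$ we have $|\beta^{a,b,k}|\leq O(\sqrt t/\|J^{a,b}\|_1)\cdot \|J^{a,b}\|_1^{-1}\cdot\|J^{a,b}\|_1 = O(\sqrt t)/\|J^{a,b}\|_1$ — more carefully, $|\alpha^{a,b,k}|\le O(\sqrt t/|J^{a,b}|_1)$ from Theorem \ref{thm-regthreshold}, and there are $s=O(t/\epsilon^2)$ cuts. A standard telescoping $|r c - r'c'|\le |r-r'||c| + |r'||c-c'|$ gives per-cut error $O(\delta\|J^{a,b}\|_1^2)$, hence per-color error $\sum_k |\alpha^{a,b,k}|\cdot O(\delta \|J^{a,b}\|_1^2)\le s\cdot O(\sqrt t/|J^{a,b}|_1)\cdot \delta\|J^{a,b}\|_1^2 = O(\delta\, t^{3/2}\epsilon^{-2})\cdot |J^{a,b}|_1$. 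Summing over the $d^4=O(1)$ colors and using $\sum_{a,b}|J^{a,b}|_1 = O(|J|_1)$, the total second-gap error is $O(\delta\, t^{3/2}\epsilon^{-2})\cdot|J|_1$, which is $\le \epsilon|J|_1/2$ precisely for the stated choice $\delta = O(\epsilon^3/t^{3/2})$. Combining the two gaps via the triangle inequality yields the claim.

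The main obstacle here is bookkeeping rather than conceptual: getting the dependence of the coefficient bound $|\alpha^{a,b,k}|\le O(\sqrt t/|J^{a,b}|_1)$ to interact correctly with the width $s=O(t\epsilon^{-2})$ and the precision $\Delta^{a,b}=\delta\|J^{a,b}\|_1$, so that everything scales as $|J^{a,b}|_1$ (not $|J^{a,b}|_1^2$ or worse) and the colors sum cleanly to $|J|_1$. One subtlety worth flagging is the $\ell_1$-norm of the degree vector versus $|J^{a,b}|_1$: since $d_u^{a,b}=\sum_v|J^{a,b}_{uv}|$, we have $\sum_u d_u^{a,b}=|J^{a,b}|_1$ counting each edge twice, so the factor-of-$2$ and the handshaking $\tfrac12$ in $H_D$ must be tracked — but these are absorbed into the $O(\cdot)$ and the final rescaling of $\epsilon$, exactly as in the proof of Claim \ref{claim-regularitythreshold}.
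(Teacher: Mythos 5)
Your proposal matches the paper's proof: the same triangle-inequality split into the regularity gap (bounded via Claim \ref{claim-regularitythreshold}) and the quantization gap, the same coefficient bound $|\alpha^{a,b,k}|\leq O(\sqrt t/\|J^{a,b}\|_1)$ from Theorem \ref{thm-regthreshold} multiplied against the width $s=O(t\epsilon^{-2})$ and the slack $\Delta^{a,b}=\delta\|J^{a,b}\|_1$, yielding the second-gap bound $O(t^{3/2}\epsilon^{-2}\delta)\cdot|J|_1$ and hence $\delta=O(\epsilon^3/t^{3/2})$. The only cosmetic difference is that you use $\sum_k|\alpha^{a,b,k}|\leq s\max_k|\alpha^{a,b,k}|$ where the paper writes $s\cdot\max_k|\alpha^{a,b,k}|$ directly, and the paper also keeps the cross term $(\Delta^{a,b})^2$ explicit before absorbing it; both are equivalent.
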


\begin{proof}  
Let $r', c'$ be the length $d^4\cdot s = O(t\epsilon^{-2})$ vectors of the true weighted average magnetizations of $\rho$, then the error to the energy estimate above satisfies

    \begin{gather}
        \leq \bigg|\text{Tr}[H\rho] - \sum_{a, b, k} \alpha_{a, b, c} r'_{a, b, k} c'_{a, b, k}\bigg| + \sum_{a, b, k} |\alpha_{a, b, c}|\cdot   \bigg|r'_{a, b, k} c'_{a, b, k} - r_{a, b, k} c_{a, b, k}  \bigg| \leq \\
        \leq \epsilon\cdot |J|_1 + s\cdot \sum_{a, b} \max_{k}|\alpha_{a, b, k}| \cdot (2\Delta^{a,  b} \|J^{a, b}\|_1+(\Delta^{a, b})^2) \leq  \\\leq \epsilon\cdot |J|_1 +  3\sqrt{t}\cdot s \sum_{a, b} \Delta^{a,  b} = \epsilon\cdot |J|_1 + O( t^{3/2}\epsilon^{-2} \cdot \delta |J|_1 ) 
    \end{gather}

\noindent where we used Claim \ref{claim-regularitythreshold} and the properties of the decomposition in Theorem \ref{thm-regthreshold}. We conclude this claim by picking $\delta = O(\epsilon^3 t^{-3/2})$ and appropriately rescaling $\epsilon$.
\end{proof}

To show we can find these feasible points efficiently, if they exist, we need to argue that the convex set of feasible points to $C_{\Vec{r}, \Vec{c}}$ contains a small ball of radius $r'$. This once again follows from the fact that the ranges $I^{a, b}$ are overlapping: that is, for every product state $\rho$, there exists a choice of $\Vec{r}, \Vec{c}$ s.t. $\rho$ is feasible and bounded $\min_{a,b}\Delta^{a,b}/2$ away from saturating any of the magnetization constraints. It is an easy corollary of the proof techniques in section \ref{appendix-volume} that $C_{\Vec{r}, \Vec{c}}$ thereby contains a ball of radius $r' = O(\min_{a,b} \frac{\Delta^{a,b}}{|J^{a, b}|_1}) = O(\epsilon^3/t^{3/2})$, under the conditions on $\Delta^{a,b}$ in the claim above. 

\begin{corollary}
    $\hat{V}$ is an $\epsilon \cdot |J|_1  + O(n^{1/3} |J|_1^{1/3} \|J\|_F^{2/3})$ additive error estimate to the ground state energy $\min_{\rho }\text{Tr}[H\rho]$.
\end{corollary}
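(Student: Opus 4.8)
The plan is to combine the three ingredients already assembled: the convex-program accuracy bound in Claim~\ref{claim-feasconstraintsthreshold}, the combinatorial regularity statement for product states in Claim~\ref{claim-regularitythreshold}, and the asymmetric product-state approximation of Theorem~\ref{theorem-BHgeneral}. First I would argue the \emph{upper} bound on $\hat V$: let $\rho^*$ be the minimal-energy product state of $H$, let $r'_*, c'_*$ be its true weighted subset magnetizations on the cuts $\{R^{a,b,k}, L^{a,b,k}\}$, and note that because the ranges $I^{a,b}$ are spaced by $\Delta^{a,b}$ and extend past $\|J^{a,b}\|_1$, there is a choice $\Vec r, \Vec c$ with $|r^{a,b,k}-r'^{a,b,k}_*|\le \Delta^{a,b}/2$ and similarly for $c$; hence $\rho^*$ is feasible for $C_{\Vec r,\Vec c}$, and Claim~\ref{claim-feasconstraintsthreshold} gives $\hat V \le \sum \alpha_{a,b,k} r^{a,b,k}c^{a,b,k} \le \mathrm{Tr}[H\rho^*] + \epsilon|J|_1$ (after absorbing the extra $\Delta/2$ slack into the rescaling of $\epsilon$). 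For the \emph{lower} bound, let $\Vec r,\Vec c$ be the pair achieving the minimum $\hat V$; feasibility of $C_{\Vec r,\Vec c}$ produces a product state $\rho'$ whose energy, by Claim~\ref{claim-feasconstraintsthreshold} again, satisfies $|\mathrm{Tr}[H\rho']-\hat V|\le \epsilon|J|_1$, so $\hat V \ge \mathrm{Tr}[H\rho'] - \epsilon|J|_1 \ge \min_{\rho=\otimes\rho_u}\mathrm{Tr}[H\rho] - \epsilon|J|_1$. Putting the two directions together, $\hat V$ is an $O(\epsilon|J|_1)$ additive approximation to the \emph{variational} (product-state) minimum energy.

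The last step is to pass from the variational minimum to the true ground-state energy $\min_\rho \mathrm{Tr}[H\rho]$. Here I would invoke Theorem~\ref{theorem-BHgeneral} with $J_{uv}=\|H_{uv}\|_\infty$: it furnishes a product state $\sigma$ with $\mathrm{Tr}[H\sigma] \le \lambda_{\min}(H) + \delta_l$ where $\min_l \delta_l = O(n^{(k-1)/3}m^{2/3})$, which for $k=2$ is $O(n^{1/3}m^{2/3})$. Since $m \le \sum_e\|H_e\|_\infty = |J|_1$ is not quite what we want — we need the error in the form stated, $O(n^{1/3}|J|_1^{1/3}\|J\|_F^{2/3})$ — I would instead read off the more refined form $\delta_l = O\!\big(\tfrac{l}{n}|J|_1 + \tfrac{n}{\sqrt l}\|J\|_F\big)$ directly from Theorem~\ref{theorem-BHgeneral} and optimize over $l$, which balances the two terms at $l = \Theta\big(n^{4/3}(\|J\|_F/|J|_1)^{2/3}\big)$ and yields $\min_l \delta_l = O\big(n^{1/3}|J|_1^{1/3}\|J\|_F^{2/3}\big)$, exactly the error appearing in the corollary statement. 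Therefore $0 \le \min_{\rho=\otimes\rho_u}\mathrm{Tr}[H\rho] - \lambda_{\min}(H) \le O(n^{1/3}|J|_1^{1/3}\|J\|_F^{2/3})$, and chaining this with the variational approximation above gives the claimed bound.

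The main obstacle, I expect, is purely bookkeeping rather than conceptual: making sure the various rescalings of $\epsilon$ (one inside Claim~\ref{claim-feasconstraintsthreshold} to kill the $t^{3/2}\epsilon^{-2}\delta|J|_1$ term, one for the $d^4$ factor from the $d^4$ colours, one for the $\Delta/2$ slack in the discretization, and one matching the threshold-rank parameter $t_{O(\epsilon)}$ in Claim~\ref{claim-regularitythreshold}) are all consistent, so that the final additive error is genuinely $\epsilon\cdot|J|_1$ with the intended $\epsilon$ and the runtime $\mathrm{poly}(n,1/\epsilon,t) + 2^{\tilde O(t/\epsilon^2)}$ is preserved. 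One should also double-check that the ``$\ge 0$'' lower bound in the displayed inequality ($\min_{\text{product}} \ge \lambda_{\min}$) is immediate since any product state is in particular a valid state, so the variational value can only overestimate the ground-state energy. Beyond that, the argument is a direct assembly of previously proved lemmas, so no genuinely new estimate is required.
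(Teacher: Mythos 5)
Your proposal is correct and matches the paper's argument: the variational estimate comes from Claim~\ref{claim-feasconstraintsthreshold} together with the fact that the overlapping ranges $I^{a,b}$ guarantee every product state is feasible for some $C_{\Vec r,\Vec c}$, and the gap between the variational minimum and $\lambda_{\min}(H)$ is closed by Theorem~\ref{theorem-BHgeneral}. Your explicit balancing of $\delta_l = O\big(\tfrac{l}{n}|J|_1 + \tfrac{n}{\sqrt l}\|J\|_F\big)$ at $l = \Theta\big(n^{4/3}(\|J\|_F/|J|_1)^{2/3}\big)$ to obtain $O\big(n^{1/3}|J|_1^{1/3}\|J\|_F^{2/3}\big)$ is exactly the computation the paper leaves implicit.
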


\begin{proof}
Since every product state feasible for some $C_{r, c}$ has energy close to its estimate, and since every product state is feasible for some $C_{r, c}$ for some choice of $r, c$, we have  $|\hat{V} - \min_{\text{product state }\rho}\text{Tr}[H\rho]|\leq \epsilon |J|_1$.   To conclude, we use the assymetric product state approximations in Theorem \ref{theorem-BHgeneral} to relate $\min_{\rho =\otimes \rho_u}\text{Tr}[H\rho]$ and $\min_{\rho }\text{Tr}[H\rho]$.
    
\end{proof}

We summarize this result in the following theorem:

\begin{theorem}
    Fix $d = O(1), \epsilon >0$. Let $H=\sum_{u,  v} H_{u,v}$ be a 2-Local Hamiltonian defined on $n$ qudits of local dimension $d$, with Pauli interaction graphs $J^{a, b}$ of maximum threshold rank $t\equiv \max_{a, b\in[d^2]} t_{O(\epsilon)}(J^{a, b}_D)$, and matrix of interaction strengths $J = \{\|H_{uv}\|_\infty\}_{u, v\in [n]} $. Then, there exists an algorithm which finds an $\epsilon |J|_1 +O(n^{1/3} |J|_1^{1/3} \|J\|_F^{2/3}) $ approximation to the ground state energy of $H$ in time \text{poly}$(n)\cdot 2^{\tilde{O}(t/\epsilon^2)}$. 
\end{theorem}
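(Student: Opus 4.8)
The plan is to assemble the pieces of this subsection into a single enumerate-and-check scheme, following the template of Section~\ref{section-gseptas} but with the low-threshold-rank decomposition of Theorem~\ref{thm-regthreshold} in place of the Frieze--Kannan one. First I would construct, for each of the $d^4$ Pauli interaction matrices $J^{a,b}$, the cut decomposition $J^{a,b} = \sum_{k\in[s]} D^{(a,b,k)} + W^{a,b}$ of width $s = O(t\epsilon^{-2})$ guaranteed by Theorem~\ref{thm-regthreshold}, in $\text{poly}(n,\epsilon^{-1},t)$ time. By Claim~\ref{claim-regularitythreshold}, on any product state $\rho=\otimes_u\rho_u$ the energy $\text{Tr}[H\rho]$ agrees up to $\epsilon|J|_1$ with the bilinear form $\tfrac12\sum_{a,b,k}\alpha^{a,b,k}(\sum_{u\in R^{a,b,k}} d_u^{a,b}\text{Tr}[\sigma^a\rho_u])(\sum_{v\in L^{a,b,k}} d_v^{a,b}\text{Tr}[\sigma^b\rho_v])$ in the weighted subset magnetizations, so it suffices to optimize this form over product states.

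Second, I would fix precision parameters $\Delta^{a,b} = \delta\|J^{a,b}\|_1$ with $\delta = O(\epsilon^3 t^{-3/2})$ as in Claim~\ref{claim-feasconstraintsthreshold}, form the discretized ranges $I^{a,b}$, and for each guess pair $(\vec r,\vec c)\in\prod_{a,b}(I^{a,b})^{s}$ instantiate the convex feasibility program $C_{\vec r,\vec c}$ over the $(d^2-1)n$ real Pauli coordinates of a product state: a PSD constraint per qudit (whose minimum eigenvector gives a separating hyperplane) together with the $2d^4 s$ affine weighted-magnetization constraints. This program admits an efficiently computable separation oracle; when nonempty its feasible set contains a ball of radius $r'=O(\epsilon^3 t^{-3/2})$ and is contained in a ball of radius $R=O(\sqrt n)$, by the same perturbation argument as in Appendix~\ref{appendix-volume} (truncate the small eigenvalues of each $\rho_u$ to stay away from the PSD boundary, and use that the overlapping ranges $I^{a,b}$ leave some $(\vec r,\vec c)$ with $\rho$ bounded $\min_{a,b}\Delta^{a,b}/2$ from every magnetization constraint). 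Hence Theorem~\ref{theorem-bertsimas} decides feasibility, and returns a point, in $\text{poly}(n,t,\epsilon^{-1})\cdot O(\log R/r')$ oracle calls per guess. Among the feasible guesses the algorithm outputs the one minimizing $\sum_{a,b,k}\alpha^{a,b,k} r^{a,b,k} c^{a,b,k}$, call it $\hat V$, with a corresponding feasible product state.

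Third, correctness: by Claim~\ref{claim-feasconstraintsthreshold} every product state feasible for some $C_{\vec r,\vec c}$ has energy within $\epsilon|J|_1$ of its guess, giving $\hat V \ge \min_{\rho=\otimes\rho_u}\text{Tr}[H\rho] - \epsilon|J|_1$; and since the ranges tile $[-\|J^{a,b}\|_1,\|J^{a,b}\|_1]$, every product state is feasible for at least one $C_{\vec r,\vec c}$, giving $\hat V \le \min_{\rho=\otimes\rho_u}\text{Tr}[H\rho] + \epsilon|J|_1$. Finally I would invoke the asymmetric product-state approximation Theorem~\ref{theorem-BHgeneral}, which with the interaction-strength matrix $J$ and the optimal choice of its parameter $l$ yields $\min_l\delta_l = O(n^{1/3}|J|_1^{1/3}\|J\|_F^{2/3})$, to bridge $\min_{\rho=\otimes\rho_u}\text{Tr}[H\rho]$ and the true ground state energy $\min_\rho\text{Tr}[H\rho]$; rescaling $\epsilon$ by a constant absorbs the $d^4$ factors. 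For the runtime, the number of guesses is $\prod_{a,b}|I^{a,b}|^{s} = (O(\delta^{-1}))^{d^4 s} = 2^{O(d^4 s\log(1/\delta))} = 2^{\tilde O(t/\epsilon^2)}$, each handled in $\text{poly}(n,t,\epsilon^{-1})$ time, plus the $\text{poly}(n,t,\epsilon^{-1})$ cost of building the decompositions — giving $\text{poly}(n)\cdot 2^{\tilde O(t/\epsilon^2)}$ in total.

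The main obstacle, and the reason this is weaker than Theorem~\ref{theorem-gsptasdense}, is that the cuts here are weighted by the effective-degree vectors $d^{a,b}$ and the distinct Pauli graphs need not share a common coarsest partition, so the common-refinement compression of Section~\ref{subsection-compression} is unavailable; we are forced to carry all $n$ variables in each convex program, which is why the $n$-dependence stays polynomial rather than being absorbed into the exponent. The one place requiring care is the bookkeeping in Claim~\ref{claim-feasconstraintsthreshold}: the error from discretization accumulates to $O(\sqrt t\cdot s\cdot\sum_{a,b}\Delta^{a,b}) = O(t^{3/2}\epsilon^{-2}\delta|J|_1)$ using $|\alpha^{a,b,k}| = O(\sqrt t/\|J^{a,b}\|_1)$, and the choice $\delta = O(\epsilon^3 t^{-3/2})$ is precisely what collapses this back to $O(\epsilon|J|_1)$ while keeping $\log(1/\delta) = \tilde O(1)$.
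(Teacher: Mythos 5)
Your proposal is correct and follows essentially the same route as the paper: cut-decompose each Pauli graph via Theorem~\ref{thm-regthreshold}, relax via the weighted subset-magnetization constraints $C_{\vec r,\vec c}$ with precision $\Delta^{a,b}=\delta\|J^{a,b}\|_1$, $\delta=O(\epsilon^3 t^{-3/2})$, check feasibility of the $2^{\tilde O(t/\epsilon^2)}$ guesses with Theorem~\ref{theorem-bertsimas}, and bridge to the true ground state via Theorem~\ref{theorem-BHgeneral} with $\min_l\delta_l=O(n^{1/3}|J|_1^{1/3}\|J\|_F^{2/3})$. You also correctly identify the obstruction (degree-weighted cuts with no shared coarsest partition) that forces keeping all $n$ variables and hence a $\text{poly}(n)$ factor, which is exactly the point the paper makes before moving to the symmetric Quantum Max Cut case.
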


\begin{proof}
    It remains to reason on the runtime of the algorithm. We compute the decomposition  using Theorem \ref{thm-regthreshold} in time poly$(n, 1/\epsilon, t)$. For fixed $\Vec{r}, \Vec{c}$, we can check the feasibility of $C_{\Vec{r}, \Vec{c}}$ in time poly$(n, 1/\epsilon, t)$ using Theorem \ref{theorem-bertsimas}, given that the number of variables is $O(n)$, the number of constraints is  poly$(1/\epsilon, t)$, and the guarantees on the volume of the feasible regions is $\log R/r = O(\log n + \log 1/\epsilon + \log t)$. To conclude, the number of such programs is 

    \begin{equation}
       \leq  \prod_{a, b\in [d^2], k\in [s]} \bigg(2\frac{\|J^{a, b}\|_1}{\Delta^{a, b}}  + 3\bigg) \leq 2^{O(t/\epsilon^2 \log 1/\delta)} = 2^{\tilde{O}(t/\epsilon^2)}
    \end{equation}

    \noindent where $\tilde{O}$ hides factors of poly$\log 1/\epsilon$, poly$\log t$ and we use the definition of $\Delta^{a, b}$ in claim \ref{claim-feasconstraintsthreshold}.
\end{proof}

For certain restrited classes of low threshold rank Hamiltonians, we are able to reduce this runtime to \text{poly}$(n, 1/\epsilon, t) + 2^{\tilde{O}(t/\epsilon^2)}$. To do so, we would essentially like to exploit the same common refinement technique discussed in the dense case in section \ref{section-gseptas}, where we replace the density matrices $\rho_u$ within each partition by a single `averaged' density matrix $\rho_P$ for every $u\in P\subset [n]$. That is, in the dense case, given any product state $\rho$ which is feasible for some $C_{\Vec{r}, \Vec{c}}$, we argued that averaging over the partitions $\rho_P = \sum_{u\in P}\rho_u / |P|$ defines an $n$ qubit product state which is still feasible for $C_{\Vec{r}, \Vec{c}}$. Unfortunately, this no longer holds in the low threshold rank case, since different colors $a, b$ may attribute different degrees $d^{a, b}_u$ to $u$, and thereby there is no guarantee that any average over the components in each partition $P$ preserves the feasibility of the magnetization constraints. 

However, when the degrees $d^{a, b}_u$ are color-independent, i.e. either $d^{a, b}_u  = d_u$ for all $u\in V$ or $d^{a, b}_u = 0$ for all $u\in V$, then a certain convex combination of the density matrices in each partition achieves our goal. Quantitatively, an example of this condition is the Quantum Max-Cut Hamiltonian, where each $X_u\otimes X_v$ interaction is accompanied by a $Y_u\otimes Y_v$ of the same weight, but the generic Quantum Heisenberg model is not. By weighting the convex combination by the degrees, 

\begin{equation}
    \rho_P = \sum_{u\in P} \frac{d_u}{\sum_{u\in P} d_u} \rho_u \Rightarrow  \sum_{u\in P} d_u\text{Tr}[ \sigma^a\rho_P] = \sum_{u\in P} d_u \text{Tr}[ \sigma^a\rho_u]
\end{equation}

\noindent $\rho_P$ has trace 1 and is PSD, and preserves the weighted magnetizations of the subset $P$ for every choice of basis $a\in [4]$. In this setting, we can define a `compressed' set of constraints $\tilde{C}_{\Vec{r}, \Vec{c}}$, where the variables are $2^{O(s)} = 2^{O(t/\epsilon^2)}$ PSD matrices $\rho_P$, one for each subset $P$ in the common refinement. For notional convenience, we denote as $d^{a,b}(P) =  \sum_{u\in P} d^{a,b}_{u}$.

\begin{gather}
  \tilde{C}_{\Vec{r}, \Vec{c}}: \text{Tr}[\rho_{P}]=1,   \rho_{P}\geq 0 \text{ for all particles }u\in V  \\
    r^{a, b, k} - \Delta^{a, b} \leq \sum_{P: P\subset R^{a, b, k}} d^{a, b}(P) \cdot \text{Tr}[\sigma^a\rho_P] \leq r^{a, b, k} + \Delta^{a, b} \text{ and } \\
    c^{a, b, k} - \Delta^{a, b} \leq \sum_{P:P\subset L^{a, b, k}} d^{a, b}(P) \cdot \text{Tr}[\sigma_{u}^b\rho_P] \leq c^{a, b, k} + \Delta^{a, b} \text{ for all }a, b\in [4], k\in [s]
\end{gather}

\begin{theorem}
    Let $H = \sum_{(u, v)} H_{u,v}$ be a instance of the Quantum Max Cut Hamiltonian, with $n\times n$ edge weight matrix $J$ of threshold rank $t = t_{O(\epsilon)}(J_D)$. Then there exists an algorithm which finds a $\epsilon |J|_1 +O(n^{1/3} |J|_1^{1/3} \|J\|_F^{2/3}) $ additive error approximation to the Quantum Max Cut $\max_\rho \text{Tr}[H\rho]$ in time $\text{poly}(n, t,1/\epsilon) + 2^{\tilde{O}(t/\epsilon^2)}$.
\end{theorem}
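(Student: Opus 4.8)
The plan is to combine the regularity lemma of the previous subsection (Claim \ref{claim-regularitythreshold}) with the common-refinement compression technique, exactly as in the dense case of section \ref{section-gseptas}, but now for the Quantum Max Cut Hamiltonian where the degrees are color-independent. First I would observe that for Quantum Max Cut, $H = \tfrac12\sum_{(u,v)\in E}(\mathbb{I}-X_u X_v - Y_u Y_v - Z_u Z_v)$, the nonzero Pauli interaction matrices are $J^{XX}=J^{YY}=J^{ZZ}= -\tfrac12 J$ (up to the constant identity term which only shifts the energy), so all active colors share the same weighted adjacency matrix $J$ and hence the same effective degrees $d_u = \sum_v |J_{uv}|$. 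This is precisely the condition flagged in the preceding discussion under which the degree-weighted average $\rho_P = \sum_{u\in P} \frac{d_u}{d(P)}\rho_u$, with $d(P)=\sum_{u\in P}d_u$, simultaneously preserves the weighted magnetization $\sum_{u\in P} d_u \operatorname{Tr}[\sigma^a\rho_u]$ for every basis $a$, is trace-$1$, and is PSD (convex combination of PSD matrices). Therefore every product state feasible for $C_{\vec r,\vec c}$ induces a feasible point of the compressed program $\tilde C_{\vec r,\vec c}$ on the $A = 2^{O(t/\epsilon^2)}$ refinement cells, and conversely any feasible point of $\tilde C_{\vec r,\vec c}$ expands (by copying $\rho_P$ to all $u\in P$) to a feasible product state for $C_{\vec r,\vec c}$ with the same weighted magnetizations and hence the same energy estimate. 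This is the direct analog of Claim \ref{claim-compressionfeasible}.

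Next I would set up the algorithm and its correctness. Apply Theorem \ref{thm-regthreshold} to $J$ (a single matrix now, so the $\max_{a,b}$ over colors is trivial) to obtain, in $\mathrm{poly}(n,1/\epsilon,t)$ time, a cut decomposition of width $s = O(t\epsilon^{-2})$, error $\|W\|_{\infty\to 1}\le \epsilon|J|_1$, and coefficient bound $|\alpha_i|\le O(\sqrt t/|J|_1)$. Form the common refinement of the $O(t/\epsilon^2)$ cut-sides; it has $A = 2^{O(t/\epsilon^2)}$ cells, which can be stored and queried implicitly via the Frieze–Kannan decision tree in $\mathrm{poly}(1/\epsilon,t)$ time per query, and whose sizes and degree-sums $d^{a,b}(P)$ can be computed exactly in $\mathrm{poly}(n,1/\epsilon,t)$ time (since $n$ is polynomial here, no sampling is needed — this is the explicit regime). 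Then enumerate $\vec r,\vec c$ over the grid $I^{a,b}$ of spacing $\Delta = \delta|J|_1$ with $\delta = O(\epsilon^3/t^{3/2})$ as in Claim \ref{claim-feasconstraintsthreshold}; there are $2^{\tilde O(t/\epsilon^2)}$ such guesses, and for each, check feasibility of $\tilde C_{\vec r,\vec c}$ using Theorem \ref{theorem-bertsimas} with $O(A)$ variables and $O(t/\epsilon^2)$ constraints, in $2^{\tilde O(t/\epsilon^2)}$ time (the volume bounds $R = 2^{O(t/\epsilon^2)}$, $r' = O(\epsilon^3/t^{3/2})$ follow from the proof technique of appendix \ref{appendix-volume}, adapted to the degree-weighted magnetization constraints as remarked there). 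Output the maximum (for Quantum Max Cut we want the largest eigenvalue, so we take $\max$ over feasible guesses, adding back the $\tfrac12 m$ identity shift) of $\sum_i \alpha_i r_i c_i$ over feasible $\vec r,\vec c$. By Claim \ref{claim-feasconstraintsthreshold} this estimate is within $\epsilon|J|_1$ of $\max_{\text{product }\rho}\operatorname{Tr}[H\rho]$, and by the asymmetric product-state approximation Theorem \ref{theorem-BHgeneral} (in the form $|\lambda_{\max}(H) - \max_{\rho=\otimes\rho_u}\operatorname{Tr}[H\rho]| \le O(n^{1/3}|J|_1^{1/3}\|J\|_F^{2/3})$, obtained by applying the theorem to $-H$), this is within $\epsilon|J|_1 + O(n^{1/3}|J|_1^{1/3}\|J\|_F^{2/3})$ of the Quantum Max Cut.

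The total runtime is $\mathrm{poly}(n,t,1/\epsilon)$ for the decomposition, refinement, and degree computations, plus $2^{\tilde O(t/\epsilon^2)}$ guesses each costing $2^{\tilde O(t/\epsilon^2)}$, giving $\mathrm{poly}(n,t,1/\epsilon) + 2^{\tilde O(t/\epsilon^2)}$ as claimed. I expect the main obstacle to be verifying carefully that the degree-weighted averaging genuinely preserves \emph{all} the constraints of the compressed program — in particular that the PSD and trace constraints survive the weighting (they do, being convex) and that no cross-color inconsistency arises from the fact that one color's cut-side $R^{a,b,k}$ must be an exact union of refinement cells \emph{and} the weighting uses the common degree $d_u$; since for Quantum Max Cut all active colors use the same $J$ and the same cut decomposition, this is automatic, but the write-up must make explicit that the identity term contributes no interaction matrix and that the $X,Y,Z$ sectors are genuinely identical so the refinement is taken with respect to a single family of cuts. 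A secondary technical point is confirming the volume lower bound $r'$ in the degree-weighted setting, which requires re-running the perturbation argument of appendix \ref{appendix-volume} with the magnetization constraints rescaled by degrees — routine, since $d_u \le |J|_1$ bounds the rescaling.
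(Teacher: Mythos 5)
Your proposal is correct and follows essentially the same route as the paper's proof: use the single cut decomposition of $J$ from Theorem~\ref{thm-regthreshold}, observe that for Quantum Max Cut the three active Pauli colors share one adjacency matrix $J$ (hence one degree profile), invoke the degree-weighted averaging $\rho_P = \sum_{u\in P}\frac{d_u}{d(P)}\rho_u$ to get the feasibility equivalence between $C_{\vec r,\vec c}$ and the compressed $\tilde C_{\vec r,\vec c}$, appeal to the volume argument for the ellipsoid solver with $r'=O(\epsilon^3/t^{3/2})$, and finish with Claim~\ref{claim-feasconstraintsthreshold} plus Theorem~\ref{theorem-BHgeneral}. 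You spell out several points the paper leaves implicit --- that the identity sector contributes no magnetization constraint, that the $XX$, $YY$, $ZZ$ sectors genuinely coincide so a single refinement suffices, and that the maximization variant of the product-state approximation is obtained by applying Theorem~\ref{theorem-BHgeneral} to $-H$ --- which is welcome detail rather than a deviation. (One minor slip: the identity shift is $\tfrac12\sum_{u,v}J_{uv}$, which equals $\tfrac12 m$ only for unit edge weights; this does not affect the argument.)
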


\begin{proof}
    Crucially, as previously discussed $C_{\Vec{r}, \Vec{c}}$ is feasible $\iff$  $\tilde{C}_{\Vec{r}, \Vec{c}}$ is feasible. Moreover, every set of $2^{O(t/\epsilon^2)}$ single qubit density matrices on the partitions $\rho_P$ is feasible for at least one choice of $\Vec{r}, \Vec{c}$, and bounded away from saturating the magnetization constraints, and thereby via section \ref{appendix-volume} we know that the feasible region of $\tilde{C}_{\Vec{r}, \Vec{c}}$ contains a ball of radius $r'= O(\delta) = O(\epsilon^3/t^{3/2})$. Thus, we can check the feasibility of all the $\tilde{C}_{\Vec{r}, \Vec{c}}$ in time $2^{\tilde{O}(t/\epsilon^2)}$.  
\end{proof}

\section{A Free Energy PTAS on Dense Graphs}
\label{section-feptas-dense}

To extend the regularity lemma and its applications to the context of the free energy, we revisit the ground state energy approximation scheme devised in section \ref{section-gseptas}, with a point of view based on the results by \cite{Jain2018TheMA}. Recall how we estimated the minimum energy of $H$ among product states by reducing the computation to checking the feasibility of a small number of convex constraints. \cite{Jain2018TheMA} showed that for Ising models, it is the \textit{maximum entropy program} subject to these regularity-based constraints that enables an estimate for the true free energy. By combining our product state approximations for the free energy with our Hamiltonian regularity statements, we are able to draw quantum generalizations of their results on Local Hamiltonians. 

As we later discuss, since the free energy itself is a maximum entropy program \textit{regularized by the temperature}, our algorithms often incur a tradeoff between combinatorial, regularity-based errors, and thermal errors incurred from noise in our sampling algorithms. In this fashion we devise two main algorithms, the first of which is a sublinear time, additive error approximation algorithm which provides accurate approximations in a low temperature regime:

\begin{theorem} \label{theorem-variationalfe}
Fix $k, d = O(1)$, and $\epsilon, \delta >\omega(n^{-1/(2k-2)})$ and an inverse temperature $\beta > 0$, and let $H$ be a $k$-Local Hamiltonian on $n$ qudits of local dimension $d$ and $m$ bounded strength interactions. Then, there exists an algorithm that runs in time $2^{\tilde{O}(\epsilon^{2-2k})}\cdot O(\delta^{-2})$, that returns an estimate to the free energy accurate up to an additive error of $\epsilon n^{k/2}m^{1/2} + \delta n /\beta$ and is correct with probability $.99$. 
\end{theorem}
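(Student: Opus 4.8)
\textbf{Proof plan for Theorem \ref{theorem-variationalfe}.} The plan is to run the ground-state scheme of Appendix \ref{subsection-algsklocal} essentially verbatim, but to attach to each ``guess'' $\vec r$ a \emph{maximum-entropy} convex program rather than a feasibility test, in the spirit of the mean-field analysis of Ising models in \cite{Jain2018TheMA}. The reduction rests on three facts. First, the free-energy product-state approximation (Theorem \ref{results-psafe}, i.e. the $k$-local form of Theorem \ref{theorem-feproduct}) gives $F \le \min_{\sigma=\otimes_u\sigma_u} f_H(\sigma) \le F + O(n^{(k-1)/3}m^{2/3})$, where $f_H(\sigma)=\text{Tr}[H\sigma]-S(\sigma)/\beta$, so it suffices to estimate the best \emph{product} free energy. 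Second, product-state regularity (Claim \ref{claim-kdpsreg}, cf. Theorem \ref{theorem-psregularity}) gives $\sup_{\sigma=\otimes_u\sigma_u}\big|\text{Tr}[(H-H_D)\sigma]\big| \le \epsilon\,n^{k/2}m^{1/2}$ for the Frieze--Kannan cut decomposition $H_D$, so the best product free energies of $H$ and of $H_D$ agree up to $\epsilon\,n^{k/2}m^{1/2}$. Third, for a product state the entropy decouples, $S(\otimes_u\sigma_u)=\sum_u S(\sigma_u)$. Since $n^{(k-1)/3}m^{2/3}=o(n^{k/2}m^{1/2})$, combining these reduces the task to approximating $\min_{\sigma=\otimes_u\sigma_u}\big(\text{Tr}[H_D\sigma]-\tfrac1\beta\sum_u S(\sigma_u)\big)$ up to additive error $\epsilon\,n^{k/2}m^{1/2}+\delta n/\beta$.

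First I would build, exactly as in Appendix \ref{subsection-algsklocal}, the implicit cut decomposition $H_D$ of Theorem \ref{theorem-kmatrixregfrieze} of width $s=O(\epsilon^{2-2k})$, so that its common refinement has $A=2^{O(\epsilon^{2-2k})}$ classes $\mathcal A_a$; sample estimates $|\hat{\mathcal A}_a|$ of the class sizes via Claim \ref{claim-estimatesizes}; and for each guess $\vec r\in(I_\gamma)^{k\, d^{2k}\, s}$ with $\gamma=O(\epsilon^k)$ form the compressed magnetization body $\hat C_{\vec r,\gamma}$ on the $A$ ``class'' density matrices $\{\rho_a\}$. On each feasible body I would compute $\widehat S(\vec r):=\max_{\{\rho_a\}\in\hat C_{\vec r,\gamma}}\sum_a|\hat{\mathcal A}_a|\,S(\rho_a)$, a concave maximization over $O(A)$ real variables with an efficient PSD/affine separation oracle, solvable to additive error $\ll\delta n$ in time $2^{\tilde O(\epsilon^{2-2k})}$ (reducing concave optimization to the feasibility oracle of Theorem \ref{theorem-bertsimas}, after restricting to $\rho_a\succeq\tau\mathbb I$ to avoid the boundary non-smoothness, at negligible entropy loss). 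The algorithm outputs $\hat V_{FE}:=\min_{\vec r:\hat C_{\vec r,\gamma}\text{ feasible}}\big(\sum_i d_i\prod_{j}^k r^i_j-\widehat S(\vec r)/\beta\big)$. The point that makes the common-refinement compression lossless for the entropy objective is concavity of von Neumann entropy: given any product state $\sigma$ feasible for the uncompressed body, replacing each $\sigma_u$, $u\in\mathcal A_a$, by the average $\rho_a=|\mathcal A_a|^{-1}\sum_{u\in\mathcal A_a}\sigma_u$ preserves the affine magnetization constraints and, by Jensen, can only increase $\sum_u S(\sigma_u)=\sum_a\sum_{u\in\mathcal A_a}S(\sigma_u)\le\sum_a|\mathcal A_a|\,S(\rho_a)$, so the compressed and product-state max-entropy values coincide.

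Correctness follows the two-sided template of Lemma \ref{lemma-Vadditive1} and Claim \ref{claim-kfeasibleconstraints}, now carrying the $-S/\beta$ term. If $\sigma^*$ minimizes $f_{H_D}$ over product states and has true magnetizations $\vec r^*$, some guess $\vec r$ has $\sigma^*$ feasible for $\hat C_{\vec r,\gamma}$, so $\widehat S(\vec r)\ge S(\sigma^*)$ and $\big|\sum_i d_i\prod_j r^i_j-\text{Tr}[H_D\sigma^*]\big|=O(\gamma s^{1/2}n^{k/2}\|J\|_F)+O(\epsilon n^{k/2}\|J\|_F)\le\epsilon n^{k/2}m^{1/2}$ (Claims \ref{claim-kfeasibleconstraints} and \ref{claim-kdpsreg}, using $\|J\|_F\le m^{1/2}$), hence $\hat V_{FE}\le f_{H_D}(\sigma^*)+\epsilon n^{k/2}m^{1/2}$; conversely, for the minimizing $\vec r$, a maximizing product state $\sigma'\in\hat C_{\vec r,\gamma}$ gives $\hat V_{FE}\ge f_{H_D}(\sigma')-\epsilon n^{k/2}m^{1/2}\ge\min_{\text{product}}f_{H_D}-\epsilon n^{k/2}m^{1/2}$. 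The genuinely new ingredient is the \emph{thermal error}: a size error $\big||\hat{\mathcal A}_a|-|\mathcal A_a|\big|\le\delta' n$ perturbs $\widehat S(\vec r)$ by at most $A\delta' n\log d$, hence $\hat V_{FE}$ by $A\delta' n\log d/\beta$, so taking $\delta'=\Theta(\delta/(A\log d))$ pins this at $\delta n/\beta$; by Claim \ref{claim-estimatesizes}, estimating all $A$ class sizes to precision $\delta' n$ (with small constant failure probability) costs $O\big((A\log d)^2\delta^{-2}\log A\big)=2^{\tilde O(\epsilon^{2-2k})}\cdot O(\delta^{-2})$ probes, which dominates the runtime together with building $H_D$ and solving the $2^{\tilde O(\epsilon^{2-2k})}$ concave programs. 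The hypothesis $\epsilon,\delta=\omega(n^{-1/(2k-2)})$ is exactly what makes the self-edge and discretization remainders negligible against $\epsilon n^{k/2}m^{1/2}$ and keeps the sampling sublinear; rescaling $\epsilon$ yields the stated error $\epsilon n^{k/2}m^{1/2}+\delta n/\beta$ with success probability $.99$.

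The step I expect to be the main obstacle is establishing that the common-refinement compression remains lossless once the objective is the entropy rather than the bilinear energy --- i.e. pairing the Jensen argument above with the observation that the uncompressed maximum of $\sum_u S(\sigma_u)$ over the magnetization body is genuinely attained with class-constant blocks --- together with the accompanying thermal-error accounting, which is what forces the size-estimation precision to $\Theta(\delta/(A\log d))$ and is precisely where the $O(\delta^{-2})$ factor (rather than an $n$-dependent one) in the runtime is purchased.
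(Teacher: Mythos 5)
Your proposal reproduces the paper's proof essentially verbatim: the same implicit cut decomposition of width $s=O(\epsilon^{2-2k})$, the same maximum-entropy program $\hat O_{\vec r,\gamma}$ over the common-refinement variables for each magnetization guess, the same two-sided accuracy argument parallel to Lemma \ref{lemma-Vadditive1} and Claim \ref{claim-kfeasibleconstraints}, the same thermal-error accounting $\delta n/\beta$ from the noisy class-size estimates via Claim \ref{claim-estimatesizes}, and the same composition with the product-state free-energy guarantee and product-state regularity. The two places where you go beyond what the paper writes are both valid and welcome technical clarifications rather than a different route: you make explicit the Jensen argument (the paper only says ``convexity of the entropy ensures $O_{r,\gamma}=\tilde O_{r,\gamma}$''), and you note the need to restrict to $\rho_a\succeq\tau\mathbb I$ so the entropy objective meets the differentiability hypothesis of Theorem \ref{theorem-bertsimasoptimization}, a boundary issue the paper tacitly skips over (it is essentially handled by the same $\delta$-truncation already used in the volume bound of Appendix~\ref{appendix-volume}).
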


In effect, the thermal error above arises since we only have imperfect knowledge of the cut decomposition, and the sizes of the partitions within each cut. In the low temperature regime, whenever $\beta =\Omega( n^{1-k/2}m^{-1/2})$, this first algorithm ensures a $\epsilon n^{k/2}m^{1/2}$ approximation in $2^{\tilde{O}(\epsilon^{2-2k})}$ time, much like the sublinear time approximation algorithm for the ground state energy. Our second approach explicitly computes the cut decomposition, significantly improving the thermal error dependence at higher temperatures, at the cost of a higher runtime:

\begin{theorem} \label{theorem-variationalfeexplicit}
Fix $k, d = O(1)$, and $\epsilon, \delta >\omega(n^{-1/2)})$ and an inverse temperature $\beta > 0$, and let $H$ be a $k$-Local Hamiltonian on $n$ qudits of local dimension $d$ and $m$ bounded strength interactions. Then, there exists an algorithm that runs in time $2^{\tilde{O}(\epsilon^{-2})}\cdot \tilde{O}(n^k \log 1/\delta)$, that returns an estimate to the free energy accurate up to an additive error of $\epsilon n^{k/2}m^{1/2} + \delta n /\beta$ and is correct with probability $.99$. 
\end{theorem}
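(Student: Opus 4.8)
The plan is to adapt the ground-state scheme of Section \ref{section-gsptas-dense} and its $k$-local version in Section \ref{subsection-algsklocal}, replacing the \emph{feasibility test} of the subset-magnetization constraints by an \emph{entropy maximization} over them, in the spirit of \cite{Jain2018TheMA} for Ising models. First I would pass from the free energy over all states to the free energy over product states: by Theorem \ref{results-psafe} there is a product state with free energy at most $F + O(n^{(k-1)/3}m^{2/3})$, and this error is $o(n^{k/2}m^{1/2})$, hence absorbed into $\epsilon\, n^{k/2}m^{1/2}$ under the stated hypotheses; so it suffices to approximate $F^{\mathrm{prod}} := \min_{\rho=\otimes_u\rho_u}\bigl(\mathrm{Tr}[H\rho] - S(\rho)/\beta\bigr)$ to additive error $\epsilon\, n^{k/2}m^{1/2} + \delta n/\beta$. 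Next, using the explicit array cut decomposition of \cite{Alon2002RandomSA} (Theorem \ref{theorem-kmatrixreg}), I would build in time $2^{O(\epsilon^{-2})}\tilde{O}(n^k)$ a Hamiltonian cut decomposition $H_D$ of width $s = O(\epsilon^{-2})$ for which Claim \ref{claim-kdpsreg} gives $|\mathrm{Tr}[(H-H_D)\rho]| \le \epsilon\, n^{k/2}m^{1/2}$ for every product state $\rho$; since $H_D$ is explicit, the common refinement into $A = 2^{O(\epsilon^{-2})}$ parts $\mathcal{A}_a$ together with the exact sizes $|\mathcal{A}_a|$ are computable in $\tilde{O}(n^k)$ time.

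For each of the $2^{\tilde{O}(\epsilon^{-2})}$ grid guesses $\vec r = (r^i_j)$ of the subset magnetizations (tolerance $\gamma = O(\epsilon^2)$), I would solve the concave program
\begin{equation*}
   S^*_{\vec r} \;=\; \max\Bigl\{\ \textstyle\sum_{a\in[A]} |\mathcal{A}_a|\, S(\rho_a)\ :\ \{\rho_a \ge 0,\ \mathrm{Tr}[\rho_a] = 1\}_{a\in[A]}\ \text{feasible for}\ \tilde{C}_{\vec r,\gamma}\ \Bigr\},
\end{equation*}
taking $S^*_{\vec r} = -\infty$ when $\tilde{C}_{\vec r,\gamma}$ is infeasible, and finally output $\hat F = \min_{\vec r}\bigl(\sum_i d_i \prod_{j=1}^k r^i_j - S^*_{\vec r}/\beta\bigr)$. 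Two observations make this sound. (i) The common-refinement compression is valid for the entropy objective: $S$ is concave, so replacing $\{\rho_u\}_{u\in\mathcal{A}_a}$ by their average preserves all (linear) magnetization constraints and does not decrease $\sum_u S(\rho_u)$; hence maximizing $\sum_u S(\rho_u)$ over $n$-qudit product states feasible for $C_{\vec r,\gamma}$ equals maximizing $\sum_a |\mathcal{A}_a| S(\rho_a)$ over $\{\rho_a\}$ feasible for $\tilde{C}_{\vec r,\gamma}$, and one recovers a genuine $n$-qudit product state by copying $\rho_a$ to each $u \in \mathcal{A}_a$. (ii) By Appendix \ref{appendix-volume} the feasible region of $\tilde{C}_{\vec r,\gamma}$, when nonempty, contains a ball of radius $\Omega(\gamma) = \Omega(\epsilon^2)$; Lagrangian duality then pins the entropy optimizer to $\rho_a \propto \exp(\sum_p \mu_p A_p)$ with multipliers $\mu$ bounded in terms of $\epsilon$, so every optimal $\rho_a$ has least eigenvalue $\Omega_{\epsilon}(1)$ and the objective is effectively Lipschitz there. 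Consequently a standard convex solver computes $S^*_{\vec r}$ to additive precision $\delta n$ in $2^{\tilde{O}(\epsilon^{-2})}\cdot O(\log 1/\delta)$ time; summing over the $2^{\tilde{O}(\epsilon^{-2})}$ guesses and adding the cost of building $H_D$ (with failure probability $n^{-O(1)}$ in Theorem \ref{theorem-kmatrixreg}, an $O(\log n)$ factor absorbed into $\tilde{O}$) yields runtime $2^{\tilde{O}(\epsilon^{-2})}\tilde{O}(n^k\log 1/\delta)$ and the high-probability guarantee.

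Correctness runs exactly as in Lemma \ref{lemma-Vadditive1} and Claim \ref{claim-kfeasibleconstraints}, now carrying the entropy term. For the upper bound, take a product state $\sigma^*$ attaining $F^{\mathrm{prod}}$; its subset magnetizations lie within $\gamma n/2$ of some grid point $\vec r$, so $\sigma^*$ is feasible for $C_{\vec r,\gamma}$ and its partition-average $\bar\sigma^*$ for $\tilde{C}_{\vec r,\gamma}$ with $S(\bar\sigma^*) \ge S(\sigma^*)$; combining Claim \ref{claim-kfeasibleconstraints} ($\sum_i d_i\prod_j r^i_j \le \mathrm{Tr}[H_D\sigma^*] + O(\epsilon)\, n^{k/2}m^{1/2}$), Claim \ref{claim-kdpsreg}, and $S^*_{\vec r} \ge S(\bar\sigma^*) - \delta n \ge S(\sigma^*) - \delta n$ gives $\hat F \le f(\sigma^*) + O(\epsilon)n^{k/2}m^{1/2} + \delta n/\beta = F^{\mathrm{prod}} + O(\epsilon)n^{k/2}m^{1/2} + \delta n/\beta \le F + O(\epsilon)n^{k/2}m^{1/2} + \delta n/\beta$. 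For the lower bound, at the minimizing $\vec r$ un-compress a near-optimal feasible $\{\rho_a\}$ into an $n$-qudit product state $\rho$ with $S(\rho) = \sum_a |\mathcal{A}_a| S(\rho_a) \ge S^*_{\vec r} - \delta n$ and $\rho$ feasible for $C_{\vec r,\gamma}$; then $\mathrm{Tr}[H\rho] \ge \sum_i d_i\prod_j r^i_j - O(\epsilon)\, n^{k/2}m^{1/2}$ by the same two claims, so $\hat F \ge f(\rho) - O(\epsilon)n^{k/2}m^{1/2} - \delta n/\beta \ge F - O(\epsilon)n^{k/2}m^{1/2} - \delta n/\beta$. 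Rescaling $\epsilon,\delta$ by constants finishes it.

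The step I expect to be the main obstacle is item (ii): rigorously turning the entropy maximization into a tractable convex program with a clean precision/runtime tradeoff, i.e.\ arguing that the optimal $\rho_a$ stay uniformly away from the boundary (via the $\Omega(\epsilon^2)$-ball guarantee of Appendix \ref{appendix-volume} and Lagrangian duality) so that $\delta n$-accurate optimization costs only $O(\log 1/\delta)$ extra iterations; the eigenvalue-truncation device of Claim \ref{claim-psdperturbation} can serve as a fallback to force this conditioning while perturbing the entropy by $O(\delta n)$ and the energy by $O(\epsilon\, n^{k/2}m^{1/2})$ only. Everything else is a routine re-run of the ground-state argument with the entropy term carried along.
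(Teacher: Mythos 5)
Your proposal is correct and follows essentially the same route as the paper: reduce to product states via Theorem~\ref{results-psafe}, build an explicit width-$O(\epsilon^{-2})$ cut decomposition via Theorem~\ref{theorem-kmatrixreg}, compress through the common refinement, and minimize over grid guesses of the quantity (energy-estimate minus max-entropy over $\tilde{C}_{\vec r,\gamma}$), with the same tolerance $\gamma=O(\epsilon^2)$ and the same $\delta n/\beta$ thermal-error bookkeeping. The obstacle you flag in item~(ii) is not an issue in the paper's route: it simply invokes Theorem~\ref{theorem-bertsimasoptimization} (Bertsimas--Vempala random-walk optimization), which needs only a separation oracle, an evaluation/gradient oracle, and the $R/r$ ball bounds from Appendix~\ref{appendix-volume}, and already delivers $\delta n$-accurate optimization of the bounded concave entropy in time $\text{poly}(m,T,T')\cdot O(\log(R/r)\log 1/\delta)$ --- no Lagrangian conditioning or eigenvalue-truncation argument is required.
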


\subsection{Finding the best Product State Approximation}

In the setting of the section \ref{section-gseptas} and appendix \ref{section-regularityextensions}, given a $k$-Local Hamiltonian $H$ on $n$ qudits of local dimension $d$ and $m$ bounded strength interactions, let $J  = \{\|H_e\|_\infty\}_{e \in [n]^k}$ be its $k$ dimensional array of interaction strengths, and let $H_D$ be its cut decomposition following theorems \ref{lemma-kcutdecomp} and \ref{theorem-kmatrixregfrieze} with width $s = O(\epsilon^{2-2k})$. While we phrase most of the discussion in this section with the cut decomposition in Theorem \ref{theorem-kmatrixregfrieze} by \cite{Frieze1999QuickAT}, the analysis under the cut decomposition in Theorem \ref{theorem-kmatrixreg} by \cite{Alon2002RandomSA} follows analogously. To proceed, fix a precision parameter $\gamma$ and for every `guess' vector $r$ of size $k\cdot d^{2k}\cdot s$ with $r_i\in I_\gamma \subset [-n, n]$ for the average magnetization of each of $k$ sides of the $d^{2k}\cdot s$ cuts in the cut decomposition of $H$, we can formulate the convex program $O_{r, \gamma}$ to be the maximum entropy program over the subset magnetization constraints $C_{r, \gamma}$

\begin{equation}
    O_{r,  \gamma}= \max_{\alpha\in\mathbb{R}^{(d^{2-1})n}} \sum_{u\in V} S(\rho^{\alpha_u}) \text{ subject to } C_{r,  \gamma}
\end{equation}

In this manner, we express the problem of finding the product state of approximately minimum free energy as the minimum over the optima of $2^{O(k\cdot d^{2k}\cdot s\log 1/\gamma)}$ convex programs.

\begin{equation}
    F_\gamma = \min_{r: C_{r, \gamma} \text{feasible}} F_{r, \gamma} = \min_{r: C_{r,  \gamma} \text{feasible}} \bigg(\sum_{i\in [d^{2k}\cdot s]}d^{i} \prod_j^k r^i_j -  O_{r, \gamma}/\beta\bigg)
\end{equation}

To speedup the optimization, we replace the convex constraints by the induced constraints 
 $\tilde{C}_{r, \gamma}$ on the common refinement of the partitions in the cut decomposition. If $\mathcal{A}_a\subset [n]$, $a\in [A]$ is the coarsest partition of $H_D$, then $A \leq  2^{k\cdot d^{2k}\cdot s}$. We appropriately re-scale the objective to define the program $\tilde{O}_{r, \gamma}$:
\begin{gather}
    \tilde{O}_{r, \gamma} = \max_{\alpha\in\mathbb{R}^{(d^2-1)\cdot A}} \sum_{a\in [A]} |\mathcal{A}_a| S(\rho^{\alpha_a}) \text{ subject to } \tilde{C}_{r, \gamma}, \\ \text{ and return }\min_{r: \tilde{C}_{r, \gamma} \text{feasible}} \bigg(\sum_{i\in [d^{2k}\cdot s]}d^{i} \prod_j^k r^i_j -  \tilde{O}_{r, \gamma}/\beta\bigg)
\end{gather}

Recall that Claim \ref{claim-compressionfeasible} tells us that $C_{r, \gamma}$ is feasible $\iff \tilde{C}_{r, \gamma}$ is feasible, and thereby the convexity of the entropy ensures that $O_{r, \gamma} = \tilde{O}_{r, \gamma}$. Once again, since our goal is a sublinear time algorithm, we only have imperfect knowledge of the sizes $|\mathcal{A}_a|$ of the common refinements. However, we can construct estimates $|\hat{\mathcal{A}}_a|$ accurate up to some additive error efficiently, following Claim \ref{claim-estimatesizes}, and instantiate programs $\hat{O}_{r, \gamma}$ with `noisy' constraints $\hat{C}_{r, \gamma}$. We solve each of the convex programs using a result of \cite{Bertsimas2002SolvingCP}, an analog of Theorem \ref{theorem-bertsimas} for optimization which we present at the end of this section. 

Let us begin by reasoning on the correctness of this algorithm. First, we argue that the noise-less, compressed programs $\tilde{O}_{r, \gamma}$ provides a good approximation to the optimum product state assignment to the free energy of $H_D$. So long as the noise on $|\hat{\mathcal{A}}_a|$ isn't too large, next we prove that the optima of the noisy programs $\hat{O}_{r, \gamma}$ is close to the noise-less case. Finally, our product state approximations together with the Hamiltonian regularity lemma will ensure that the true free energy has a good product state approximation, tying the output of our algorithm to the free energy.

\begin{claim}
$\tilde{F}_{O(\epsilon^k)}$ is an $\epsilon \cdot n^{k/2} \|J\|_F$ additive approximation to $\min_{\rho=\otimes \rho_u}f_D(\rho) =\min_{\rho=\otimes \rho_u} \text{Tr}[H_D\rho]-S(\rho)/\beta$.
\end{claim}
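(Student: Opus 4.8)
**Proof plan for the claim that $\tilde{F}_{O(\epsilon^k)}$ is an $\epsilon\cdot n^{k/2}\|J\|_F$ additive approximation to $\min_{\rho=\otimes\rho_u} f_D(\rho)$.**

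The plan is to argue both directions of the approximation, mirroring the structure of Lemma \ref{lemma-Vadditive1} (and its $k$-local analog Corollary \ref{corollary-kvadditive}) but now tracking the entropy term in addition to the energy term. Throughout, I will work with the compressed program $\tilde{O}_{r,\gamma}$, using Claim \ref{claim-compressionfeasible} to pass freely between $C_{r,\gamma}$ and $\tilde{C}_{r,\gamma}$ and the convexity of the entropy to equate $O_{r,\gamma}=\tilde{O}_{r,\gamma}$, so it suffices to bound $|F_\gamma - \min_{\rho=\otimes\rho_u}f_D(\rho)|$.

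\textbf{Upper bound ($F_\gamma$ not much larger).} First I would take $\rho^* = \otimes_u \rho^*_u$ to be the product state minimizing $f_D$, and let $r^*$ be its vector of true subset magnetizations. There is a guess vector $r\in(I_\gamma)^{k d^{2k} s}$ with $|r^{i}_j - (r^*)^{i}_j|\le \gamma n/2$, so $\rho^*$ is feasible for $C_{r,\gamma}$, hence for $\tilde{C}_{r,\gamma}$. Therefore $F_\gamma \le F_{r,\gamma} = \sum_i d^i\prod_j r^i_j - O_{r,\gamma}/\beta$. The energy term $\sum_i d^i\prod_j r^i_j$ is within $O(\gamma \cdot s^{1/2}\cdot n^{k/2}\|J\|_F)$ of $\text{Tr}[H_D\rho^*]$ by Claim \ref{claim-kfeasibleconstraints} (equivalently Claim \ref{claim-kfeas}). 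For the entropy term I use the trivial bound $O_{r,\gamma}\ge \sum_u S(\rho^*_u) = S(\rho^*)$ since $\rho^*$ is feasible and the program maximizes entropy; thus $-O_{r,\gamma}/\beta \le -S(\rho^*)/\beta$. Combining, $F_\gamma \le \text{Tr}[H_D\rho^*] - S(\rho^*)/\beta + O(\gamma s^{1/2} n^{k/2}\|J\|_F) = \min_{\rho=\otimes\rho_u}f_D(\rho) + O(\gamma s^{1/2} n^{k/2}\|J\|_F)$.

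\textbf{Lower bound ($F_\gamma$ not much smaller).} Let $r$ be the guess achieving the minimum in $F_\gamma$, with $\tilde{C}_{r,\gamma}$ feasible and $O_{r,\gamma}$ attained at some $\tilde\alpha$ on the $A$ partitions; copying each partition's assignment to all its qudits gives a genuine $n$-qudit product state $\rho'$ feasible for $C_{r,\gamma}$, with $\sum_u S(\rho'_u) = O_{r,\gamma}$ by the rescaling. By Claim \ref{claim-kfeasibleconstraints}, $|\text{Tr}[H_D\rho'] - \sum_i d^i\prod_j r^i_j| \le O(\gamma s^{1/2} n^{k/2}\|J\|_F)$. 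Hence $F_\gamma = \sum_i d^i\prod_j r^i_j - O_{r,\gamma}/\beta \ge \text{Tr}[H_D\rho'] - S(\rho')/\beta - O(\gamma s^{1/2} n^{k/2}\|J\|_F) \ge \min_{\rho=\otimes\rho_u}f_D(\rho) - O(\gamma s^{1/2} n^{k/2}\|J\|_F)$, where the last step uses that $\rho'$ is a product state and is thus a feasible point of the minimization defining $f_D$. Note the entropy term appears with a favorable sign in each direction: in the upper bound we discard entropy of the optimizer against the program's optimum, and in the lower bound we use that the program's entropy optimum is realized by a concrete product state — so no entropy error is incurred beyond what the magnetization slack already forces. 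Finally, setting $\gamma = O(\epsilon^k)$ so that $s = O(\epsilon^{2-2k})$ gives $\gamma s^{1/2} = O(\epsilon^k\cdot \epsilon^{1-k}) = O(\epsilon)$, and the total error becomes $O(\epsilon\cdot n^{k/2}\|J\|_F)$; rescaling $\epsilon$ by the hidden constant concludes the claim.

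\textbf{Main obstacle.} The delicate point is not the algebra but the direction of the entropy inequality: one must confirm that the maximum-entropy program's optimum is sandwiched correctly relative to both the true optimizer's entropy and the entropy of a product state realizing the program's value, and that the common-refinement compression preserves entropy exactly (via concavity of $S$ over the block average, which Claim \ref{claim-compressionfeasible}'s proof already exploits for feasibility). I expect verifying that $O_{r,\gamma}=\tilde{O}_{r,\gamma}$ — i.e. that averaging Pauli descriptions over a refinement block does not decrease total entropy, using concavity of the von Neumann entropy and the weighting by $|\mathcal{A}_a|$ — to be the one step requiring genuine care, though it is standard; everything else follows the template of Lemma \ref{lemma-Vadditive1}.
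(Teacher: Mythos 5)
Your proof is correct and follows essentially the same route as the paper's: both directions use Claim \ref{claim-kfeasibleconstraints} to control the energy term, the upper bound uses $O_{r,\gamma}\ge S(\rho^*)$ for the $r$ feasible for the true optimizer, and the lower bound constructs an explicit product state realizing both the energy estimate and $O_{r,\gamma}$ as its entropy, with the same $\gamma=O(\epsilon^k)$, $s=O(\epsilon^{2-2k})$ parameter bookkeeping. Your explicit construction of $\rho'$ by copying the compressed optimizer's block assignments is a slightly more concrete phrasing of what the paper calls "the product state that maximizes the program $O_{r,\gamma}$," but the argument is the same.
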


\begin{proof}
Let $\rho^*$ be the product state minimizer of $f_D(\rho) = \text{Tr}[H_D\rho]-S(\rho)/\beta$. Correspondingly, let $r^{*} \in [-n, n]^{k\cdot d^{2k}\cdot s}$ be the subset magnetizations corresponding to $\rho^*$. By construction, there is a guess vector $r \in (I_\gamma)^{k\cdot d^{2k}\cdot s}$ s.t. $|r_j^{*}-r_j|\leq n \gamma /2 $, and thus $\rho^*$ is feasible for $C_{r, \gamma}$. Picking $\gamma = O(\epsilon^k)$ in Claim \ref{claim-kfeasibleconstraints} ensures that the energy of $\rho^*$ on $H_D$ is close to its estimate. We further observe that by definition, $O_{r, \gamma}\geq S(\rho^*)$, since $\rho^*$ is feasible for $C_{r, \gamma}$, and thus:

\begin{gather}
    f_D(\rho^*) = \text{Tr}[H_D\rho^*] - S(\rho^*)/\beta \geq - \epsilon n^{k/2} \|J\|_F +\sum_{i}d^{i}\prod_j^k r^{i}_{j} - S(\rho^*)/\beta \geq \\
    \geq -\epsilon n^{k/2} \|J\|_F +\sum_{i}d^{i}\prod_j^k r^{i}_{j} - O_{r, \gamma}/\beta
    \geq -\epsilon n^{k/2} \|J\|_F +\min_{r: C_{r, \gamma} \text{ feas }} \sum_{i}d^{i}\prod_j^k r^{i}_{j} - O_{r, \gamma}/\beta = \\ = -\epsilon n^{k/2} \|J\|_F + \hat{F}_{\gamma}
\end{gather}

The lower bound, in turn, requires a definition. Let $\rho^{r, \gamma}$ be the product state that maximizes the program $O_{r, \gamma}$ if feasible. Then, Claim \ref{claim-kfeasibleconstraints} ensures we have $f_D(\rho^{r,  \gamma})\leq F_{r, \gamma} +\epsilon n^{k/2} \|J\|_F$, if $\gamma = O(\epsilon^k)$. In turn, let $\rho_{r', \gamma}$ be the product state where $r'$ minimizes $\min_{r}\hat{F}_{r, \gamma}$. That is, $\hat{F}_\gamma = \hat{F}_{r', \gamma}$. By the variational description of the free energy product state optima of $f_D$, we obtain the lower bound:
\begin{equation}
    f_D(\rho^*) \leq f_D(\rho_{r', \gamma}) \leq \epsilon n^{k/2} \|J\|_F + \hat{F}_{r',  \gamma} =\epsilon n^{k/2} \|J\|_F+\hat{F}_{\gamma}
\end{equation}

\end{proof}

Now that we've ensured that the optima $\tilde{F}_{O(\epsilon^k)}$ of the noise-less programs is a good approximation to the variational free energy of $H_D$, let us prove that approximating this quantity with the noisy constraints is still a good approximation.

\begin{claim}
    Assume we have estimates for the sizes $|\hat{\mathcal{A}}_a|,a\in [A]$ of each partition in the common refinement of the cuts in $H_D$, accurate up to an additive error $\delta/A \cdot n$ with $\delta < \gamma/4$ and $\gamma = O(\epsilon^k)$. Then, the estimate $\hat{F}_{O(\epsilon^k)}$ corresponding to the optima of the noisy programs is an $\epsilon\cdot n^{k/2}\|J\|_F + \frac{\delta \log d}{\beta} \cdot n$ additive error approximation to the variational free energy of $H_D$, $\min_{\rho=\otimes \rho_u}f_D(\rho)$.
\end{claim}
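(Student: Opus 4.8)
The plan is to mirror the proof of the preceding (noiseless) claim almost verbatim, tracking the single new source of error: the size estimates $|\hat{\mathcal{A}}_a|$ enter the problem only through which guess vectors $r$ give feasible constraint sets $\hat C_{r,\gamma}$, and through the regularized part of the objective $\hat F_{r,\gamma}=\sum_{i}d^{i}\prod_{j}^{k}r^{i}_{j}-\hat O_{r,\gamma}/\beta$. The energy estimate $\sum_{i}d^{i}\prod_{j}^{k}r^{i}_{j}$ depends only on $r$ and the cut-decomposition coefficients, so it is literally identical for $\tilde F_{r,\gamma}$ and $\hat F_{r,\gamma}$; all of the perturbation lives in the entropy term and in feasibility.

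First I would handle feasibility. Since $|\alpha^{a}_{i}|\le 1$ and there are $A$ common-refinement cells, replacing $|\mathcal{A}_a|$ by $|\hat{\mathcal{A}}_a|$ shifts every affine magnetization functional by at most $\sum_{a}\big||\hat{\mathcal{A}}_a|-|\mathcal{A}_a|\big|\le\delta n<\gamma n/4$, so, exactly as in Claim~\ref{claim-approxfeas} and Lemma~\ref{lemma-perturbedconstraints}: (i) the common-refinement compression of the product state that minimizes $f_D$ stays feasible for $\hat C_{r,\gamma}$ for the guess $r$ that is within $\gamma n/2$ of its true subset magnetizations; and (ii) any feasible point of $\hat C_{r,\gamma}$ on $A$ qudits, copied out to an $n$-qudit product state, has true subset magnetizations within $2\gamma n$ of $r$, hence is feasible for $C_{r,2\gamma}$ and Claim~\ref{claim-kfeasibleconstraints} applies to it at the precision $2\gamma=O(\epsilon^{k})$.

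Next I would bound the entropy discrepancy. Because $S(\rho^{\alpha_a})\le\log d$ for a single qudit of local dimension $d$, for any feasible $\alpha$,
\[
\Big|\textstyle\sum_{a}|\hat{\mathcal{A}}_a|S(\rho^{\alpha_a})-\sum_{a}|\mathcal{A}_a|S(\rho^{\alpha_a})\Big|\le\log d\cdot\sum_{a}\big||\hat{\mathcal{A}}_a|-|\mathcal{A}_a|\big|\le\delta n\log d ,
\]
which is exactly the claimed thermal term $\tfrac{\delta\log d}{\beta}\,n$ after dividing by $\beta$. Combining this with concavity of the von Neumann entropy (so that $\sum_{a}|\mathcal{A}_a|S(\rho^{\alpha_a})\ge\sum_{u}S(\rho_u)=S(\rho)$ for the averaged/copied-out product state, as in the noiseless claim) yields, on one side, $\hat O_{r,\gamma}\ge S(\rho^{*})-\delta n\log d$ for the optimal product state $\rho^{*}$ on its best guess $r$, and on the other side $f_D(\rho)\le\hat F_{r,\gamma}+\epsilon n^{k/2}\|J\|_{F}+\tfrac{\delta\log d}{\beta}n$ for the $n$-qudit product state $\rho$ obtained by copying out a maximizer of $\hat O_{r,\gamma}$ and invoking Claim~\ref{claim-kfeasibleconstraints}.

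Finally I would assemble the two directions exactly as in the noiseless claim: with $\gamma=O(\epsilon^{k})$, the first estimate gives $\min_{\rho=\otimes\rho_u}f_D(\rho)\ge\hat F_{O(\epsilon^{k})}-\epsilon n^{k/2}\|J\|_{F}-\tfrac{\delta\log d}{\beta}n$, and the variational characterization of the product-state free-energy optimum gives the matching bound $\min_{\rho=\otimes\rho_u}f_D(\rho)\le\hat F_{O(\epsilon^{k})}+\epsilon n^{k/2}\|J\|_{F}+\tfrac{\delta\log d}{\beta}n$; rescaling $\epsilon$ by a constant absorbs the $d^{2k},k$ factors. The main obstacle I anticipate is purely bookkeeping: routing the optimal product state through the compression-to-common-refinement step and the $\gamma/2\to\gamma$ and then $\gamma\to 2\gamma$ slack inflations in the correct order so that Claim~\ref{claim-kfeasibleconstraints} is always applied at a precision that is still $O(\epsilon^{k})$, and verifying that $\delta<\gamma/4$ is precisely what keeps every perturbed magnetization constraint strictly inside its slack.
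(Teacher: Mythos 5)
Your proposal is correct and follows essentially the same route as the paper: compress the optimal product state onto the common refinement, use Claim~\ref{claim-approxfeas} to transfer feasibility to $\hat{C}_{r,\gamma}$, bound the entropy-term perturbation by $\sum_a\bigl||\hat{\mathcal A}_a|-|\mathcal A_a|\bigr|\cdot\log d\le\delta n\log d$, and in the converse direction copy out the optimizer of $\hat O_{r,\gamma}$ and apply Claim~\ref{claim-kfeasibleconstraints}. The one place you are slightly more careful than the paper is the converse direction: you explicitly note that the copied-out $n$-qudit state is only guaranteed feasible for $C_{r,2\gamma}$ rather than $C_{r,\gamma}$ (via Claim~\ref{claim-approxfeas} part (2)), so Claim~\ref{claim-kfeasibleconstraints} must be applied at precision $2\gamma$; the paper elides this inflation. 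Since $2\gamma$ is still $O(\epsilon^k)$ this is harmless, but your bookkeeping is the more rigorous version of the same argument.
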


\begin{proof}
    By Claim \ref{claim-approxfeas} (1), there is a choice of $r$ s.t. the optimum product state $\rho^*$ of $f_D$ is feasible for $C_{r, \gamma}$, and thus there is a choice of an $A$ qudit product state $\sigma^*$, one qudit for each subset in the common refinement, which is feasible for both $\tilde{C}_{r, \gamma}$ and $\hat{C}_{r, \gamma}$. We remark we can choose in particular $\sigma_a = |\mathcal{A}_a|^{-1} \sum_{u\in \mathcal{A}_a} \rho^*_u$. We note that $S(\rho^*)\leq \sum_{a\in A} |\mathcal{A}_a|\cdot S(\sigma^*_a)$ by subadditivity and convexity of the entropy, and thus $S(\rho^*)\leq \sum_{a\in A} |\hat{\mathcal{A}}_a|\cdot S(\sigma^*_a) + \delta \cdot n\cdot \log d$ by assumption. However, $\sigma^*$ is feasible for $\hat{C}_{r, \gamma}$, and thus $\sum_{a\in A} |\hat{\mathcal{A}}_a|\cdot S(\sigma^*_a)\leq \hat{O}_{r, \gamma}$. In this manner, once again picking $\gamma = O(\epsilon^k)$ gives
    
    \begin{equation}
        f_D(\rho^*) + \epsilon\cdot n^{k/2}\|J\|_F + \delta/\beta  \cdot n\cdot \log d  \geq  \min_{r: \hat{C}_{r, \gamma} \text{ feas }} \sum_{i}d^{i}\prod_j^k r^{i}_{j} - \hat{O}_{r, \gamma}/\beta = \hat{F}_{O(\epsilon^k)}
    \end{equation}
    
    Conversely, if $r$ is the vector that extremizes $\hat{F}_{\gamma}$, and if $\sigma^{r, \gamma}$ is an $A$-qudit product of density matrices which optimizes $\hat{O}_{r, \gamma}$, let us consider $\rho$ to be the $n$ qubit density matrix given by copying the assignment of each subset in the common refinement to every vertex within it: $\rho_u = \sigma^{r, \gamma}_a$ for $a\in \mathcal{A}_a$, $a\in [A]$. We note $\hat{O}_{r, \gamma} \leq S(\rho) + \delta \log d\cdot n$, and by Claim \ref{claim-kfeasibleconstraints}, 
    \begin{gather}
        f_D(\rho^*) \leq f_D(\rho) \leq  \epsilon\cdot n^{k/2}\|J\|_F  +\sum_{i}d^{i}\prod_j^k r^{i}_{j} - S(\rho)/\beta \leq \\ \leq \sum_{i}d^{i}\prod_j^k r^{i}_{j} - \hat{O}_{r, \gamma}/\beta +  \epsilon\cdot n^{k/2}\|J\|_F + \frac{\delta \log d}{\beta}\cdot n 
    \end{gather}
\end{proof}

We emphasize that this noisy estimate $\hat{F}_{O(\epsilon^k)}$ incurs a thermal error to the estimate of the free energy, since we use the noisy estimates for the sizes of the partitions to compute the entropy of the $n$ qubit system. In the low temperature regime $\beta  = \Omega(1/n^{k/2 - 1}\cdot \|J\|_F)$, we can essentially ignore this thermal error, however in the  high temperature regime this regularization starts taking effect. We return to this discussion shortly. By combining these results with our product state approximations for the free energy, we prove that these estimates in fact approximate the free energy of $H$.

\begin{corollary}
$\tilde{F}_{O(\epsilon^k)}$ is an $2\epsilon\cdot n^{k/2}\|J\|_F$ additive error estimate to  $F= \min_{\rho\geq 0} f(\rho)$, and $\hat{F}_{O(\epsilon^k)}$ is an $2\epsilon\cdot n^{k/2}\|J\|_F +  \frac{\delta \log d}{\beta}\cdot n $ additive error estimate to $F$.
\end{corollary}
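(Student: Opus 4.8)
The plan is to obtain the corollary by chaining the two claims immediately preceding it with two facts already established: the product-state regularity estimate (Claim~\ref{claim-kdpsreg}) and the $k$-local product-state approximation to the free energy (Theorem~\ref{results-psafe}, proved via the averaging argument of Theorem~\ref{theorem-feproduct} together with the entropy bound of Theorem~\ref{theorem-entropy}). Write $\Phi = \min_{\rho = \otimes \rho_u} f_D(\rho)$ and $\Psi = \min_{\rho = \otimes \rho_u} f(\rho)$ for the variational (product-state) free energies of the cut decomposition $H_D$ and of $H$, respectively. The two preceding claims already give $|\tilde F_{O(\epsilon^k)} - \Phi| \le \epsilon\, n^{k/2}\|J\|_F$ and, under the stated accuracy $\delta/A\cdot n$ on the size estimates $|\hat{\mathcal A}_a|$ with $\delta < \gamma/4$, $|\hat F_{O(\epsilon^k)} - \Phi| \le \epsilon\, n^{k/2}\|J\|_F + \tfrac{\delta \log d}{\beta}\,n$. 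So the only thing left is to bound $|\Phi - F|$ by $\epsilon\, n^{k/2}\|J\|_F$, up to a constant rescaling of $\epsilon$ that gets absorbed into the final ``$2\epsilon$''.

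To control $|\Phi - F|$ I would go through $\Psi$ in two steps. First, for any product state $\rho$ the entropy terms of $f$ and $f_D$ are identical, so $|f(\rho) - f_D(\rho)| = |\text{Tr}[(H-H_D)\rho]|$, which Claim~\ref{claim-kdpsreg} bounds by $\epsilon\, n^{k/2}\|J\|_F$ in its $\|J\|_F$-sensitive form (the $\sqrt m$ appearing in the statement is just the specialization to unit-strength interactions, using $\|J\|_F \le \sqrt m$). Taking the minimum over product states on both sides gives $|\Phi - \Psi| \le \epsilon\, n^{k/2}\|J\|_F$. Second, by Theorem~\ref{results-psafe} there exists a product state $\sigma = \otimes_u \sigma_u$ with $f(\sigma) \le F + O(n^{(k-1)/3} m^{2/3})$, while the variational characterization of $F$ gives $F \le \Psi \le f(\sigma)$; hence $|\Psi - F| \le O(n^{(k-1)/3} m^{2/3})$. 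As noted in the proof of Lemma~\ref{lemma-kcutdecomp}, $n^{(k-1)/3} m^{2/3} = o(n^{k/2} m^{1/2})$ (indeed the ratio is $n^{-(k+2)/6} m^{1/6} \le n^{-1/3}$ since $m \le n^k$), so under the standing hypotheses ($d,k = O(1)$, $\epsilon$ not too small) this error is dominated by $\epsilon\, n^{k/2}\|J\|_F$. Combining the three bounds by the triangle inequality and folding constants into $\epsilon$ yields $|\tilde F_{O(\epsilon^k)} - F| \le 2\epsilon\, n^{k/2}\|J\|_F$ and $|\hat F_{O(\epsilon^k)} - F| \le 2\epsilon\, n^{k/2}\|J\|_F + \tfrac{\delta \log d}{\beta}\,n$.

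There is no genuine obstacle here beyond bookkeeping, but the one step that requires care is the comparison of the three distinct error expressions ($\|J\|_F$, $\sqrt m$, and $n^{(k-1)/3} m^{2/3}$): to land exactly on $n^{k/2}\|J\|_F$ rather than on $n^{k/2}\sqrt m$ one should invoke the $\|J\|_F$-sensitive refinements of both the product-state regularity lemma and of Theorem~\ref{results-psafe} (the $\delta_l$-form of Theorem~\ref{theorem-BHgeneral}), not their unit-strength specializations, and then check that $\min_l \delta_l$ for $H-H_D$ and for the free-energy approximation is genuinely $o(n^{k/2}\|J\|_F)$. One should also make sure Theorem~\ref{results-psafe} is applied to $H$ at temperature $\beta^{-1}$ directly, so its additive error is stated in energy units (comparable to $n^{k/2}\|J\|_F$) rather than after the $H \leftarrow \beta H$ rescaling. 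Once these are pinned down, the corollary follows immediately.
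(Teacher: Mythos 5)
Your proof is correct and follows essentially the same route as the paper: reduce $\tilde F_{O(\epsilon^k)}$ and $\hat F_{O(\epsilon^k)}$ to the product-state free energy of $H_D$ via the preceding claims, compare $\min_{\rho=\otimes\rho_u} f_D(\rho)$ with $\min_{\rho=\otimes\rho_u} f(\rho)$ via the cut-decomposition regularity (entropy terms cancel, leaving $|\mathrm{Tr}[(H-H_D)\rho]|$), and then compare the product-state free energy of $H$ with the true $F$ via Theorem~\ref{theorem-feproduct}/\ref{results-psafe}. Your caveat about invoking the $\|J\|_F$-sensitive forms rather than the unit-strength specializations is exactly what the paper does, writing the Theorem~\ref{theorem-feproduct} error as $O(n^{(k-1)/3}|J|_1^{1/3}\|J\|_F^{2/3})=O(n^{k/2-1/3}\|J\|_F)$ before absorbing it.
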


\begin{proof}
    We note $\big|\min_{\rho: \rho=\otimes \rho_u} f_D(\rho) - \min_{\rho: \rho=\otimes \rho_u} f(\rho)\big|\leq \max_{\rho: \rho=\otimes \rho_u} \big|\text{Tr}[(H-H_D)\rho]\big| \leq \epsilon n^{k/2}\|J\|_F$, and thus the variational free energies of $H, H_D$ are close. Moreover, by Theorem \ref{theorem-feproduct} the free energy has a product state approximation $\big|\min_{\rho: \rho=\otimes \rho_u} f(\rho) - \min_{\rho\geq 0} f(\rho)\big|\leq O(n^{\frac{k-1}{3}}|J|_1^{1/3}\|J\|_F^{2/3}) = O(n^{k/2-1/3}\|J\|_F)$.
\end{proof}

 It suffices now only to argue the runtime of the algorithm to conclude the proof of our approximation algorithm to the free energy. We use the following result by \cite{Bertsimas2002SolvingCP}:

\begin{theorem}
[\cite{Bertsimas2002SolvingCP}]\label{theorem-bertsimasoptimization} Suppose $K\subset \mathbb{R}^m$ is a convex set, and $R, r \in \mathbb{R}$ and $y \in K$ are such that: $K$ is contained in the ball of radius $R$ centered at the origin, and, if $K$ is non-empty, $K$ contains the ball of radius $r$ centered at $y$. Assume $K$ has a separation oracle which is efficiently computable in time $T$. Further suppose $g:\mathbb{R}^m\rightarrow [-1, 1]$ is a convex function, which is efficiently computable and differentiable at any point $\gamma \in K$ in time $T'$. Then, with probability $1-2^{-\Omega(m)}$ we can compute a feasible point $x'\in K$ which is approximately minimal, 

\begin{equation}
    |g(x') - \min_{x\in K} g(x)|\leq \epsilon
\end{equation}

in time $\text{poly}(m, T, T')\cdot O( \log R/r \cdot  \log 1/\epsilon)$. 

\end{theorem}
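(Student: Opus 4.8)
The plan is to reduce convex \emph{optimization} over $K$ to the convex \emph{feasibility} machinery behind Theorem~\ref{theorem-bertsimas}, by interleaving objective-gradient cuts into the localization process and controlling volumes via Grünbaum's inequality. Concretely, I would maintain a shrinking sequence of convex bodies $P_0 = B(0,R)\supseteq P_1\supseteq P_2\supseteq\cdots$, each stored as $B(0,R)$ intersected with at most $k$ halfspaces (so membership in $P_k$ is $\mathrm{poly}(m)$ arithmetic). At iteration $k$, use the random-walk sampler of \cite{Bertsimas2002SolvingCP} to draw $\mathrm{poly}(m)$ near-uniform points from $P_k$ and form an approximate centroid $z_k$. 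Then make one oracle query: if $z_k\notin K$, the separation oracle (time $T$) returns a closed halfspace $\bar H\supseteq K$ with $z_k\notin\bar H$, and we set $P_{k+1}=P_k\cap\bar H$; if $z_k\in K$, evaluate $g(z_k)$ and (time $T'$) $\nabla g(z_k)$, record the pair $(z_k,g(z_k))$, and set $P_{k+1}=P_k\cap\{x:\langle\nabla g(z_k),x-z_k\rangle\le 0\}$. After a fixed number $K_0=O\p{m\log(R/(r\epsilon))}$ of iterations, output the recorded feasible point of smallest value.

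Two volume facts drive the analysis. First, a lower bound on the near-optimal set: passing to $\bar K$ let $g^\star=\min_{x\in K}g(x)$ be attained at $x^\star$, set $\mu=1-\epsilon/4$, and observe that $\mu\{x^\star\}+(1-\mu)B(y,r)=B\p{\mu x^\star+(1-\mu)y,\ (1-\mu)r}\subseteq K$ by convexity, while for any $z$ in this ball $z=\mu x^\star+(1-\mu)w$ with $w\in K$, so $g(z)-g^\star\le(1-\mu)(g(w)-g^\star)\le(1-\mu)(1-g^\star)\le 2(1-\mu)=\epsilon/2$ (using $g\le 1$ on $K$ and $g^\star\ge -1$). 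Hence the convex set $S:=\{x\in K:g(x)\le g^\star+\epsilon/2\}$ contains a fixed ball $B^\star$ of radius $\epsilon r/4$. Second, the invariant $B^\star\subseteq S\subseteq P_k$ holds for all $k\le K_0$ \emph{unless} some recorded $z_k$ already has $g(z_k)\le g^\star+\epsilon/2$: a separation cut preserves all of $K\supseteq S$, and an objective cut at a recorded $z_k$ with $g(z_k)>g^\star+\epsilon/2$ preserves every $x\in S$ since $\langle\nabla g(z_k),x-z_k\rangle\le g(x)-g(z_k)\le 0$ by convexity. Because each cut is a halfspace that (approximately) passes through the centroid $z_k$ of $P_k$ or has $z_k$ strictly on its excluded side, Grünbaum's inequality — with a small slack absorbing the sampling error in $z_k$ — gives $\mathrm{vol}(P_{k+1})\le c\,\mathrm{vol}(P_k)$ for an absolute constant $c<1$, hence $\mathrm{vol}(P_{K_0})\le c^{K_0}V_mR^m<V_m(\epsilon r/4)^m=\mathrm{vol}(B^\star)$ for $K_0=O(m\log(R/(r\epsilon)))$. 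This contradicts $B^\star\subseteq P_{K_0}$, so some recorded $z_k$ has value $\le g^\star+\epsilon/2$ (and at least one $z_k\in K$ is recorded, else all cuts are separation cuts, keeping $K\supseteq B(y,r)$, whose volume $V_mr^m$ exceeds $\mathrm{vol}(P_{K_0})$, again a contradiction); thus the output is $\epsilon$-optimal and feasible.

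For the resource bound, each of the $K_0=O\p{m\log(R/(r\epsilon))}\le O\p{m\log(R/r)\cdot\log(1/\epsilon)}$ iterations costs $\mathrm{poly}(m)$ random-walk steps (each $\mathrm{poly}(m)$ arithmetic on the stored halfspaces, no oracle call), one separation-oracle call (time $T$), and at most one gradient evaluation (time $T'$), for a total of $\mathrm{poly}(m,T,T')\cdot O\p{\log(R/r)\cdot\log(1/\epsilon)}$, as claimed. The failure probability is handled by running each sampler to total-variation error $1/\mathrm{poly}(m)$ and amplifying so a single iteration fails with probability $\le 2^{-\Omega(m)}/K_0$, then union bounding; the extra polynomial cost is absorbed.

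The genuinely technical ingredient — which I would quote from \cite{Bertsimas2002SolvingCP} rather than reprove — is that the near-uniform sampler runs in $\mathrm{poly}(m,\log(R/(r\epsilon)))$ time on the bodies that arise here. This rests on two points, and verifying they hold in our setting is the main obstacle: (i) the $P_k$ stay well-conditioned enough for a ball-walk / hit-and-run to mix quickly, which follows from the invariant above, since $P_k$ is always sandwiched between a ball of radius $\epsilon r/4$ and one of radius $R$, so its aspect ratio is at most $4R/(\epsilon r)$; and (ii) the robust form of Grünbaum's inequality — a hyperplane within a small normalized distance of the true centroid still cuts off at most a $(1-1/e+o(1))$-fraction of the volume — which fixes how accurate the sample-based centroid $z_k$ must be, hence the number of walk samples per iteration. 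With these two facts imported, the argument closes.
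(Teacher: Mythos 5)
This theorem is not proved in the paper at all: it is imported verbatim (in a lightly adapted formulation) from Bertsimas and Vempala \cite{Bertsimas2002SolvingCP}, so there is no in-paper argument to compare against. Your proposal is, in substance, a sound reconstruction of the argument underlying that citation: a sliding-objective cutting-plane method in which random-walk samples give an approximate centroid, infeasible centers are cut with the separation oracle, feasible centers are cut with the gradient halfspace, Gr\"unbaum's inequality (in robust form) forces geometric volume decay, and the lower bound $\mathrm{vol}(S)\geq V_m(\epsilon r/4)^m$ for the $\epsilon/2$-suboptimal level set $S$ forces a recorded near-optimal feasible point within $O(m\log(R/(r\epsilon)))$ rounds. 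The feasibility, optimality, and iteration-count bookkeeping in your write-up are all correct, and quoting the sampler guarantee from \cite{Bertsimas2002SolvingCP} is fair given that the statement itself is a citation of that work.

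One point deserves a more careful phrasing than you give it: your claim that "$P_k$ is always sandwiched between a ball of radius $\epsilon r/4$ and one of radius $R$" only follows from your invariant \emph{up to the first iteration at which a near-optimal feasible point has been recorded}; after that, objective cuts may legitimately slice away $B^\star$ and leave $P_k$ badly conditioned, so the near-uniformity of later samples (and hence the Gr\"unbaum step in later rounds) is no longer guaranteed. This does not damage correctness --- the volume-decay contradiction is only needed on the prefix of iterations during which the invariant holds, and on that prefix the sandwiching is valid --- but as written the conditioning claim is stated for all $k$, and you should either restrict it to that prefix or note (as Bertsimas--Vempala do via warm starts and a fixed per-round walk length) that late rounds need only bounded cost, not sampling accuracy. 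With that adjustment the argument closes.
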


We are now in a position to prove theorems \ref{theorem-variationalfe} and \ref{theorem-variationalfeexplicit}:
 
\begin{proof}

[of Theorem \ref{theorem-variationalfe}]

We note as in sections \ref{section-gseptas} and \ref{section-regularityextensions} that the convex sets $\tilde{C}_{r, O(\epsilon^k)}$ and $\hat{C}_{r, O(\epsilon^k)}$ have $m = A = 2^{O(\epsilon^{2-2k})}$, $R = \text{poly}(d, A) = 2^{O(\epsilon^{2-2k})}, r = O(\epsilon^k)$, as well as $T, T' =  \text{poly}(d, A) = 2^{O(\epsilon^{2-2k})}$. 

We implicitly compute the cut decomposition $H_D$ of Theorem \ref{theorem-kmatrixregfrieze}, and estimate the sizes of the coarsest partitions of the cuts using Claim \ref{claim-estimatesizes}, for an appropriate choice of $\delta < \frac{\gamma}{4 \log d}$. Then, we instantiate all the $2^{\tilde{O}(\epsilon^{2-2k})}$ maximum entropy programs subject to the noisy constraints $\hat{C}_{r, O(\epsilon^k)}$, and solve them all up to an additive error $\delta \cdot n$ in time $2^{\tilde{O}(\epsilon^{2-2k})}\cdot O(\log 1/\delta)$. Overall, with probability $.99$, this achieves an estimate for the true free energy accurate up to additive error $2\epsilon n^{k/2} \|J\|_F + 2\cdot \delta \cdot  n/\beta$, and runs in time $2^{\tilde{O}(\epsilon^{2-2k})}\cdot O(\delta^{-2})$ in the probe model of computation. In the low temperature regime, whenever $\beta =\Omega( n^{1-k/2}\|J\|_F^{-1})$, this provides an $\epsilon n^{k/2} \|J\|_F$ approximation in time  $2^{\tilde{O}(\epsilon^{2-2k})}$. 

\end{proof}

\begin{proof}

[of Theorem \ref{theorem-variationalfeexplicit}]

Alternatively, in the explicit approach, we explicitly compute the cut decomposition $H_D$ using Theorem \ref{theorem-kmatrixreg}, and and explicitly compute the coarsest partition of the cuts in time $O(n^k)\cdot 2^{\tilde{O}(1/\epsilon^2)}$. Recall that the array cut decomposition by \cite{Alon2002RandomSA} has width $O(1/\epsilon^2)$, and that we can use Claim \ref{claim-kfeasibleconstraints} to ensure the analogous accuracy guarantees to the free energy if we pick $\gamma = O(\epsilon^2)$. 

Under Theorem \ref{theorem-kmatrixreg}, we note the convex set $\tilde{C}_{r, O(\epsilon^k)}$ has $m = A = 2^{O(\epsilon^{-2})}$, $R = \text{poly}(d, A) = 2^{O(\epsilon^{-2})}, r = O(\epsilon^2)$, as well as $T, T' =  \text{poly}(d, A) = 2^{O(\epsilon^{-2})}$. We can then instantiate all the $2^{\tilde{O}(\epsilon^{2})}$ maximum entropy programs subject to the \textit{noise-less} constraints $\tilde{C}_{r, O(\epsilon^2)}$, and solve them up to an additive error $\delta\cdot n$ in time $2^{\tilde{O}(\epsilon^{-2})}\cdot O(\log 1/\delta)$. If we pick $\delta = \epsilon \min(\beta n^{k/2-1} \|J\|_F, 1)$, then overall this achieves an estimate for the true free energy up to additive error $\epsilon n^{k/2} \|J\|_F + \delta n/\beta\leq \epsilon n^{k/2} \|J\|_F\leq 2 \epsilon n^{k/2} \|J\|_F$ in time $O(n^k)\cdot 2^{\tilde{O}(\epsilon^{-2})} + 2^{\tilde{O}(\epsilon^{-2})} \cdot O(\log \frac{1}{\epsilon \beta n^{k/2-1} \|J\|_F})$.
\end{proof}

\section{A Ground State Energy PTAS on Sparse Graph Classes}
\label{section-sparsegsPTAS}

In this section, let us turn to the converse limit of Local Hamiltonians studied until now: those on sparse graphs. Our intention is to construct approximation schemes for the ground state energy and the free energy on certain restricted classes of sparse graphs, namely, graphs excluding a fixed minor, where their structure enables us to construct efficient divide-and-conquer and dynamic programming algorithms. We follow the ideas of \cite{Bansal2009ClassicalAS} and \cite{Brando2013ProductstateAT} on planar graphs, improving their results by using ideas from the algorithmic graph minor theory of \cite{Demaine2005AlgorithmicGM} and an improved quantum-to-classical mapping over the high degree vertices in the graph. In the next section, we extend these techniques to approximating the free energy as well. In particular, the main result of this section is the following theorem:

\begin{theorem} \label{theorem-sparsegsapprox}
Let $H = \sum_{e\in E}H_e$ be a 2-Local Hamiltonian defined on $n$ qubits, configured on an $h$-minor free graph $G = (V, E)$ where $|h| = O(1)$. Let the maximum interaction strength be $\max_e \|H_e\|_\infty = 1$, let the number of interactions be $m = |E|$, and let $\epsilon > 0$. Then there exists a clustered product state $\sigma$ that approximates the ground state energy of $H$ up to error

\begin{equation}
    \text{Tr}[H\sigma] \leq \min_\rho \text{Tr}[H\rho] + \epsilon m .
\end{equation}

Moreover, $\sigma$ can be found in $\text{poly}(n) + n\cdot 2^{O(\epsilon^{-9}\log 1/\epsilon)}$ time.
\end{theorem}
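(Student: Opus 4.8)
The plan is to combine a high/low‑degree split in the spirit of \cite{Brando2013ProductstateAT} with a ``quantum‑to‑classical'' reduction that turns the residual optimization into a bounded‑arity, bounded‑alphabet constraint satisfaction problem on a bounded‑treewidth hypergraph, which is then solved by dynamic programming over a recursive‑separator decomposition of $G$. Throughout, a complementary case split handles small instances: since $G$ excludes the fixed minor $h$, $m = O_{|h|}(n)$, so if $m = O(\mathrm{poly}(1/\epsilon))$ (equivalently $n$ small) we simply diagonalize $H$ exactly in time $2^{\mathrm{poly}(1/\epsilon)}$; assume henceforth $m$ is at least a large enough fixed polynomial in $1/\epsilon$.

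\emph{Step 1 (Peeling off the high‑degree vertices).} Fix a threshold $\Delta = \Theta(\epsilon^{-3})$ and let $V_H = \{v : \deg_G(v) > \Delta\}$, so $|V_H| \le 2m/\Delta$. Applying the entanglement‑breaking / self‑decoupling machinery of Section~\ref{section-existence} to the \emph{subsystem} $V_H$ of the true ground state (the sparse‑graph, asymmetric analogue of Theorem~\ref{theorem-BHgeneral} used by \cite{Brando2013ProductstateAT} for planar graphs) yields a product state $\phi_H = \bigotimes_{u\in V_H}\phi_u$ such that some entangled state $\rho_L$ on $V_L := V\setminus V_H$ satisfies $\mathrm{Tr}[H(\phi_H\otimes\rho_L)] \le \lambda_{\min}(H) + \epsilon m/5$, once $\Delta$ is a large enough fixed polynomial in $1/\epsilon$. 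Rounding each $\phi_u$ to the nearest point of an $\eta$‑net of single‑qubit states with $\eta = \Theta(\epsilon)$ costs a further $\le \epsilon m/5$, because at most $2m$ edges are incident to $V_H$ and each changes by $O(\eta)$ under the rounding; this restricts $\phi_H$ to a discrete alphabet $\Sigma$ with $|\Sigma| = \mathrm{poly}(1/\epsilon)$.

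\emph{Step 2 (Reducing the $V_L$ problem to a Max‑$K$‑CSP on $V_H$).} The graph $G[V_L]$ has maximum degree $\le \Delta$ and still excludes $h$, so by the algorithmic graph‑minor toolkit \cite{Demaine2005AlgorithmicGM} (recursive $O_{|h|}(\sqrt{\cdot})$‑separators / $r$‑divisions) we partition $V_L = \bigsqcup_j P_j$ into pieces with $|P_j| \le r = \Theta(\epsilon^{-8})$ and total inter‑piece boundary $O_{|h|}(|V_L|/\sqrt r)$, hence at most $\Delta\cdot O_{|h|}(|V_L|/\sqrt r) \le \epsilon m/5$ edges between distinct pieces (this is where the dominant $\epsilon^{-9}$‑type exponent will enter, via $r \asymp (\Delta/\epsilon)^2$ and exact diagonalization on $O(r)$ qubits). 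Take $\sigma$ to be the clustered product state that is a product across the clusters $\{P_j\}\cup\{\{u\}:u\in V_H\}$ and entangled within each $P_j$. For a fixed $\phi_H \in \Sigma^{V_H}$, the $V_H$–$V_L$ interactions become single‑qubit fields on $V_L$, the inter‑piece interactions are dropped at cost $\le \epsilon m/5$, and the energy splits as $\sum_{\{u,v\}\subseteq V_H}\langle h_{uv}\rangle_{\phi_H} + \sum_j g_j(\phi_H)$ with $g_j(\phi_H) := \min_{\rho_{P_j}}\mathrm{Tr}[(H_{P_j} + \mathrm{fields}(\phi_H))\,\rho_{P_j}]$ computed by exact diagonalization of an $O(r)$‑qubit Hamiltonian. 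Since every vertex of $P_j$ has $\le \Delta$ neighbours in $G$, $g_j$ depends on at most $r\Delta = \mathrm{poly}(1/\epsilon)$ of the variables $\{\phi_u\}_{u\in V_H}$; tabulating all $g_j$ over the $|\Sigma|^{r\Delta} = 2^{\mathrm{poly}(1/\epsilon)}$ relevant configurations costs $n\cdot 2^{\mathrm{poly}(1/\epsilon)}$ total. Hence minimizing the energy of $\sigma$ over $\phi_H \in \Sigma^{V_H}$ is exactly a Max‑$K$‑CSP with variable set $V_H$, arity and alphabet $\mathrm{poly}(1/\epsilon)$, whose optimum — by the construction in Step 1, run in reverse on the ground state — lies within $\tfrac{3}{5}\epsilon m$ of $\lambda_{\min}(H)$, witnessed by a clustered product state.

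\emph{Step 3 (Solving the CSP; the main obstacle).} It remains to solve this CSP in time $n\cdot 2^{\mathrm{poly}(1/\epsilon)}$. The constraint hypergraph on $V_H$ inherits, through the recursive‑separator decomposition of $G$, a tree structure: at each node we form a bag consisting of the separator together with the $\le \mathrm{poly}(1/\epsilon)$ vertices of $V_H$ adjacent to its (boundedly many, bounded‑size) incident pieces, giving the ``hyper‑dimensional tree decomposition'' of width $\mathrm{poly}(1/\epsilon)$; the standard DP over it evaluates $\min_{\phi_H\in\Sigma^{V_H}}$ and returns the optimal $\sigma$ in time $n\cdot |\Sigma|^{O(\mathrm{width})} = n\cdot 2^{\mathrm{poly}(1/\epsilon)}$. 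I expect the main difficulty to be exactly this structural claim: verifying that the $V_H$‑vertices, the $r$‑division pieces, and the $h$‑minor‑freeness of $G$ interlock so that every bag stays $\mathrm{poly}(1/\epsilon)$‑sized uniformly in $n$ and independently of how the high‑degree vertices are distributed, while simultaneously keeping each of the three accumulated errors — BH‑breaking on $V_H$, net discretization, and dropping inter‑piece edges — below $\tfrac{1}{5}\epsilon m$ each. With $\Delta\asymp\epsilon^{-3}$, $\eta\asymp\epsilon$, $r\asymp\epsilon^{-8}$ and the constants tuned, this makes the exact‑diagonalization step cost $2^{O(\epsilon^{-8})}$, the CSP/DP an extra $\log(1/\epsilon)$ factor in the exponent, and the separator decomposition $\mathrm{poly}(n)$, yielding the stated $\mathrm{poly}(n) + n\cdot 2^{O(\epsilon^{-9}\log 1/\epsilon)}$ runtime and, by the triangle inequality over Steps 1--2, the additive guarantee $\mathrm{Tr}[H\sigma] \le \min_\rho \mathrm{Tr}[H\rho] + \epsilon m$.
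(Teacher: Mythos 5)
Your overall plan (high/low degree split \`a la Brandão--Harrow, $\eta$-net discretization of high-degree qubits, reduction to a bounded-arity CSP on $V_H$ solved by DP) is the right shape, and you correctly flag the structural claim about bag sizes as the key obstacle --- that is indeed where your version of the argument has a gap. The paper's proof differs in one load-bearing step that you omit: before doing any separator recursion, it applies the Demaine--Hajiaghayi--Kawarabayashi edge-partitioning theorem (the paper's Theorem~\ref{hminors}) to the \emph{entire} graph $G$ with $k=O(\epsilon^{-1})$, deleting $O(\epsilon m)$ edges to obtain a graph of tree-width $O(\epsilon^{-1})$. It then runs the recursive vertex separators \emph{inside this width-$O(\epsilon^{-1})$ tree decomposition}, so separators have size $O(\epsilon^{-1})$ rather than $O_{|h|}(\sqrt{\cdot})$, and the constraint hypergraph on $V_H\cup\{\text{cluster supernodes}\}$ automatically inherits a width-$O(\epsilon^{-1})$ tree decomposition (as a subgraph/contraction of the reduced graph), which is then augmented by each cluster's $\le r\Delta$ high-degree neighbours to get width $O(\epsilon^{-1})\cdot r\Delta$.

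Two concrete consequences of skipping this step. First, your $r$-division based on $O_{|h|}(\sqrt{\cdot})$-separators forces $r=\Omega((\Delta/\epsilon)^2)=\Omega(\epsilon^{-8})$ to keep the cut edges below $\epsilon m/5$, so each piece's $V_H$-neighbourhood can have size $r\Delta=\epsilon^{-11}$; the DP over the CSP then costs at least $|\Sigma|^{\epsilon^{-11}}=2^{\tilde O(\epsilon^{-11})}$, which does not match the $2^{O(\epsilon^{-9}\log 1/\epsilon)}$ you claim (the paper's $r=O(\Delta/\epsilon^{2})=O(\epsilon^{-5})$ from $O(\epsilon^{-1})$-separators gives $r\Delta=\epsilon^{-8}$, and augmented width $O(\epsilon^{-1}\cdot r\Delta)=O(\epsilon^{-9})$). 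Second, and more fundamentally, an $r$-division of $G[V_L]$ by itself does not give a bounded-width tree decomposition of the CSP hypergraph on $V_H$: the top-level separators are $O_{|h|}(\sqrt n)$-sized, and the $V_H$--$V_H$ edges and $V_H$--piece incidences are not constrained by the $r$-division at all. You would need to additionally argue (as the paper does via Theorem~\ref{hminors} and Lemma~\ref{treeprops}) that the contracted constraint graph has tree-width $\mathrm{poly}(1/\epsilon)$, uniformly in $n$ and in the placement of the high-degree vertices; this is exactly the ``interlocking'' claim you defer, and it is not a detail --- it is the crux that the paper's two-phase decomposition (edge-partition to low tree-width, then vertex separators inside the resulting tree decomposition) is designed to supply.
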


We organize the rest of this section as follows. In subsection \ref{summary-dynamicp}, we overview our approach and highlight our improvements to the quantum-to-classical mappings in previous work. In subsection \ref{prelimbidim}, we summarize and prove the combinatorial properties of $h$-minor free graphs that we need in our algorithms. In subsection \ref{subsection-boundeddeggraphs}, we restrict our attention to the easier bounded degree $h$-minor free graphs, and develop an algorithm which efficiently approximates their ground state energy based almost exclusively on our combinatorial decomposition theorems. Finally, in subsection \ref{subsection-highdegreedp}, we combine our bounded degree approach with the high-low degree techniques by \cite{Brando2013ProductstateAT} to develop a novel dynamic programming algorithm culminating in Theorem \ref{theorem-sparsegsapprox}.

\subsection{Overview}
\label{summary-dynamicp}

In this subsection, we briefly present a high level description of the approach and our main contributions to approximation schemes on Quantum 2-Local Hamiltonians defined on graphs that exclude a fixed minor. There is a rich literature of approximation algorithms to NP Hard problems on $h$ minor free graphs, which we extensively draw from in our paper. This is since these graphs quite generically have many interesting structural properties, which can be exploited to construct algorithms. One such property is the concept of a vertex separator. In a seminal work, \cite{Lipton1977AST} introduced the notion of a planar separator, a small set of vertices in a planar graph that once removed, divides the graph into roughly balanced disconnected components. This idea essentially enabled a divide and conquer approach to optimization problems on planar graphs, in which the (disconnected) components are optimized independently. A number of further studies significantly generalized the concept beyond the planar graph setting \cite{Gilbert1984AST,Alon1990AST, Klein1993ExcludedMN, Kawarabayashi2010AST, Nesetril2012CharacterisationsAE, Dvok2016StronglySS}. 

Another interesting graph-theoretic concept we draw from is that of the tree-width and the tree decomposition. While we defer most technical details to later in this work, the tree-width is an integer parameter that quantifies how far a graph is from a tree. The concept was originally developed by \cite{Robertson1986GraphMI}, and has been widely influencial in developing parametrized algorithms (often via dynamic programming) for problems that become simpler on graphs of low tree-width. For a review of equivalent definitions, algorithms, and applications, refer to \cite{Bodlaender1993ATG}. Most relevant to our paper is `Baker's technique' \cite{BakerBrenda1994ApproximationAF}, an algorithmic technique to approximate many NP Hard problems on planar graphs. The technique is essentially based on another decomposition theorem, that partitions planar graphs into disconnected components of small tree-width, which can be optimized independently. Later, a beautiful line of work \cite{Demaine2004EquivalenceOL, Demaine2004DiameterAT, Demaine2005GraphsEA,  Demaine2005AlgorithmicGM, Demaine2009ApproximationAV, Demaine2011ContractionDI} developed an algorithmic graph minor theory, which generalized Baker's technique and defined many other decompositions for $h$ minor free graphs. 

\cite{Bansal2009ClassicalAS} were the first to bring these ideas into the context of approximating QMA-hard problems. At a high-level, their algorithm for the minimum energy of quantum hamiltonians on planar graphs of bounded degree was to use a edge separator construction by \cite{Klein1993ExcludedMN}, which removed a small number of edges (or interactions) from $H$ such that the sub-Hamiltonian $H'$ that remains is disconnected, and organized into connected components of small size. These clusters could then be optimized independently by exact diagonalization. They used Weyl's inequality to formalize the error to the ground state energy that incurred by computing that of $H'$:

\begin{equation}
    \big|\lambda_{\min}(H)-\lambda_{\min}(H')\big| \leq \|H-H'\|_\infty \leq O(\text{Edge  Weight Removed})
\end{equation}

Naturally, neither the \textit{existence} of such a choice of a small number edges, nor algorithms to find them, are guaranteed for generic graphs. However, graphs that exclude a fixed minor are among the most general classes of graphs that these techniques are applicable to. Without yet entering into too many details of the decomposition, the key intuition we need is that graphs that exclude a fixed minor \textit{of bounded degree}, have efficient edge separators. The bounded degree constraint here is crucial, as the star graph (which is a planar graph) would immediately arise as a worst case to the decomposition techniques. 

\cite{Brando2013ProductstateAT} bypassed this bounded degree constraint, by directly arguing about the structure of the quantum ground states of graphs that have high degree vertices. They used information-theoretic ideas to argue that the two-particle reduced density matrices of high degree vertices with their neighbors are, on average, close to a product state. Quantitatively, they `condition' on all the particles of degree larger than poly$(1/\epsilon)$ in the graph being a product state, and prove such a state exists with energy only $\epsilon \cdot m$ above the ground state. By considering the original graph, with the high degree vertices removed, they then applied the decomposition techniques of \cite{Bansal2009ClassicalAS} to the bounded degree vertices in the graph. This information-theoretic `high-low degree' technique allowed them to discretize the Hilbert spaces of high degree vertices, and clusters of $2^{\tilde{O}(\epsilon^{-1})}$ low degree vertices, into $\epsilon$-nets, and therefore treat them as classical particles of spin dimension $2^{2^{\tilde{O}(\epsilon^{-1})}}$ on a 2-Local classical Hamiltonian.

In our work, we build on the graph decomposition techniques of \cite{Bansal2009ClassicalAS}, and the high-low degree techniques of \cite{Brando2013ProductstateAT}. In subsection \ref{prelimbidim}, we begin by refining this `cluster decomposition' using a generalized version of Baker's technique \cite{Demaine2005AlgorithmicGM}, and a recursive vertex separator theorem based on the tree decomposition for graphs of bounded edge weights. These graph-theoretic ideas allow us to improve on the size of the low degree clusters in the decomposition to poly$(1/\epsilon)$ vertices, which already enables us in subsection \ref{subsection-boundeddeggraphs} to present simpler and faster divide-and-conquer algorithms for approximating ground state and free energies of $h$ minor free graphs of bounded degree. The key idea to further compress the classical representation is the observation that if the clusters of low degree vertices have bounded size poly$(1/\epsilon)$, and if they all have bounded degree poly$(1/\epsilon)$, then the neighborhood of each cluster has at most poly$(1/\epsilon)$ high degree vertices. In this manner, the low degree clusters act as effective, poly$(1/\epsilon)$-local interactions between the high degree vertices. This observation hinges on the fact that once you condition on the all product states of the high degree neighbors of a given cluster (by choosing them from a $\epsilon$-net), the minimum energy (possibly entangled) state of that cluster is well defined by the hamiltonian terms acting on it. This results in the construction of an effective classical hamiltonian defined on the $O(n)$ high degree vertices, of locality poly$(1/\epsilon)$ and local dimension poly$(1/\epsilon)$. That is, essentially a classical $k$-CSP with $k=$ poly$(1/\epsilon)$. To actually solve this optimization problem, we heavily exploit properties of the tree decomposition that the graph is defined over, and formalize a dynamic programming algorithm in the known tree decomposition framework which we refer to as `high degree dynamic programming'.

\subsection{Preliminaries: Graph Minor theory}
\label{prelimbidim}

In this section, we discuss some ideas from Graph Minor theory, later used to construct approximations to the ground state energy. Let us begin by presenting the definition of the tree-width of a graph by \cite{Robertson1986GraphMI} and some of its properties. For a review on the subject, refer to \cite{Bodlaender1993ATG}.

\begin{definition} [\cite{Robertson1986GraphMI}\label{defdecomp}]
A tree decomposition of $G = (V, E)$ is a family of subsets $(X_i:i\in I)$ of $V$, arranged on a tree $T$ with vertices $I$, with the following properties:
\begin{enumerate}
    \item Every vertex is covered by a `bag', i.e. $\cup_i X_i = V$.
    \item Every edge is contained in a `bag', i.e. $u, v\in X_i$ for some $i$, for every $e = (u, v)\in E$. 
    \item Given three bags $i, j, k\in I$, if $j$ lies on the path in $T$ from $i$ to $k$, then $X_i\cap X_k \subset X_j$.
\end{enumerate}
\end{definition}

By inspection of the properties above, one easily deduces

\begin{lemma}[\cite{Bodlaender1993ATG}\label{treeprops}]
The tree-width is non-increasing under vertex and edge deletions, and edge contractions.
\end{lemma}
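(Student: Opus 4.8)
The plan is to show that each of the three operations transforms a tree decomposition of $G$ into a tree decomposition of the modified graph whose width is no larger; since the tree-width is by definition the minimum width over all tree decompositions, this immediately yields monotonicity. Throughout I would use the standard reformulation of property (3) in Definition \ref{defdecomp}: for every vertex $v$, the set of bags $\{i \in I : v \in X_i\}$ induces a connected subtree of $T$. I would first record this equivalence (it follows from the path-form of property (3) by a short induction on $T$), since it is the most convenient form for checking contractions.

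Edge deletion is immediate: the same family $(X_i)$ on the same tree $T$ is a tree decomposition of $G - e$, because deleting an edge only weakens requirement (2), while (1) and (3) are untouched; the width is unchanged. Vertex deletion is almost as easy: to handle $G - v$ I would delete $v$ from every bag, i.e.\ replace each $X_i$ by $X_i \setminus \{v\}$. Properties (1) and (2) restricted to $G-v$ clearly survive, and the subtree condition for any remaining vertex is unaffected because the bags only shrink; again the width does not increase.

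The one operation that needs genuine (if routine) care is edge contraction. Given $e = (u,v) \in E$, let $uv$ denote the merged vertex of $G/e$, and define the new bags by replacing, in every bag containing $u$ or $v$, those elements by the single element $uv$. I would then verify the three axioms: (1) is clear; for (2), an edge of $G/e$ not incident to $uv$ lies unchanged in some old bag, while an edge $(uv, w)$ arises from an edge $(u,w)$ or $(v,w)$ of $G$, which lay in some old bag together with $u$ or $v$, hence in the new bag together with $uv$. The crux is the connectivity axiom for $uv$: its set of bags in the new decomposition is exactly $\{i : u \in X_i\} \cup \{i : v \in X_i\}$, a union of two subtrees of $T$; since $u$ and $v$ are adjacent in $G$, axiom (2) for $G$ forces some bag to contain both, so these two subtrees share a node and their union is again a connected subtree. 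Finally, replacing two vertices by one in each bag never increases bag size, so the width is at most that of the original decomposition. I expect this last axiom — specifically the observation that adjacency of $u$ and $v$ is exactly what makes the two subtrees meet — to be the only step requiring attention; everything else is bookkeeping.
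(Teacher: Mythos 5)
Your proposal is correct and complete: the standard argument is to exhibit, for each of the three operations, a tree decomposition of the modified graph of no larger width (edge deletion reuses the same decomposition; vertex deletion removes the vertex from every bag; edge contraction replaces $u$ and $v$ by the merged vertex in every bag, with the connectivity axiom for the merged vertex following from the fact that the two subtrees of bags for $u$ and $v$ overlap in any bag witnessing the edge $(u,v)$). Note, however, that the paper does not prove this lemma at all — it is stated as a known fact with a citation to Bodlaender's survey — so there is no in-paper argument to compare against; your proof is simply the standard one that appears in that reference.
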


We additionally use the non-trivial property

\begin{lemma} [\cite{Bodlaender1993ATG}\label{treebin}]
Given a tree decomposition $T$ of a graph $G$ of tree-width $t$, one can construct a tree decomposition $T'$ which is a binary tree, of same tree-width, where the size of $T' = O(n)$.
\end{lemma}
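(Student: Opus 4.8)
The plan is to build $T'$ from the given decomposition $T$ in two stages: first \emph{shrink} $T$ so that the underlying tree has only $O(n)$ nodes, and then \emph{rebalance} that tree to make it binary, checking throughout that neither operation increases the width nor destroys the three axioms of Definition \ref{defdecomp}.

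For the shrinking stage, call a tree decomposition \emph{reduced} if no bag is a subset of an adjacent bag. Starting from $T$, as long as there are adjacent tree-nodes $i,j$ with $X_i \subseteq X_j$, I would contract the edge $ij$ of the tree, keeping $X_j$ as the bag of the merged node; this strictly decreases the number of bags, so the process halts at a reduced decomposition. Contraction clearly does not enlarge any bag, so the width is unchanged, and the three axioms are easily seen to survive (for axiom 3: for a fixed vertex $v$, the subtree $T_v$ of bags containing $v$ is either untouched, or is contracted along one of its own internal edges, which keeps it connected). In a reduced decomposition I can bound the number of bags: root the tree, and to each non-root node $i$ with parent $p(i)$ assign a vertex $v_i \in X_i \setminus X_{p(i)}$, which is nonempty precisely because the decomposition is reduced; by axiom 3 the subtree $T_{v_i}$ avoids $p(i)$, hence lies in the subtree rooted at $i$ and has $i$ as its highest node, so $i \mapsto v_i$ is injective. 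Thus the tree has at most $n+1$ nodes, i.e.\ $O(n)$.

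For the rebalancing stage, I would repeatedly process any node $i$ that has $d \geq 3$ children $c_1,\dots,c_d$: replace $i$ by a path $i^{(1)} - i^{(2)} - \cdots - i^{(d-1)}$ of $d-1$ fresh nodes all carrying the bag $X_i$, reattach the parent of $i$ to $i^{(1)}$, attach $c_\ell$ as a child of $i^{(\ell)}$ for $\ell \le d-2$, and attach both $c_{d-1}$ and $c_d$ as children of $i^{(d-1)}$. Every $i^{(\ell)}$ then has at most two children, so once all high-degree nodes are processed the tree is binary. No new bag is created, so the width is preserved; and since every copy carries $X_i$, the set of bags containing any fixed vertex is still a subtree, so axioms 1--3 persist. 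Finally the number of fresh nodes introduced is $\sum_{i:\, d_i \ge 3}(d_i-1)$, which is at most the number of edges of the reduced $O(n)$-node tree, hence $O(n)$; so $|T'| = O(n)$.

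All of this is elementary, and I expect the only point requiring genuine care is the verification of axiom 3 (the running-intersection property) after each structural move — in particular, confirming in the shrinking stage that whenever $v$ lies in the smaller of two comparable adjacent bags, the contracted edge is an edge \emph{inside} the subtree $T_v = \{\text{bags containing }v\}$, so that contracting it keeps $T_v$ connected. Everything else is routine bookkeeping.
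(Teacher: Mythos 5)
Your proof is correct. The paper does not actually prove this lemma; it simply cites it to Bodlaender's survey, so there is no in-paper argument to compare against. Your two-stage construction — first contracting adjacent comparable bags to reach a reduced decomposition with at most $n+1$ nodes via the injective map $i \mapsto v_i \in X_i \setminus X_{p(i)}$, then splitting any node of out-degree $d \ge 3$ into a path of $d-1$ copies of its bag to cap the number of children at two — is the standard textbook proof and is carried out without gaps; in particular your verification that each move keeps $T_v = \{\text{bags containing } v\}$ connected is exactly the point that needs care, and you handle it correctly for both the contraction (the contracted edge lies inside $T_v$ whenever it matters) and the path-splitting (all copies carry $X_i$, so they form a connected segment of $T_v$).
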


While finding the exact tree-width of a graph is NP-Complete, for fixed small tree-width an algorithm by \cite{Bodlaender2016ACN} finds a constant multiplicative approximation in polynomial time:

\begin{lemma} [\cite{Bodlaender2016ACN}\label{treefind}]
Given a graph of tree-width $t$, one can find a tree-decomposition of width $O(t)$ in time $n\cdot 2^{O(t)}$.
\end{lemma}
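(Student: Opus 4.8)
The plan is to prove the statement via the recursive balanced-separator construction underlying \cite{Bodlaender2016ACN}, which I sketch below; since we only need a constant-factor approximation (``width $O(t)$''), losing constants freely is permitted. First I would reduce the task to a \emph{rooted} version: design a recursive procedure $\textsc{Decompose}(G,S)$ which, given a subgraph and a ``boundary'' set $S\subseteq V(G)$ with $|S|\le ct$ for a suitable constant $c$, returns a tree decomposition of $G$ of width $O(t)$ whose root bag contains $S$ (or correctly reports $\mathrm{tw}(G)>t$, which under our hypothesis never happens). The top-level call is $\textsc{Decompose}(G,\emptyset)$, and composing the returned decomposition with Lemma~\ref{treebin} afterwards keeps the number of bags at $O(n)$ if desired.

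The engine of the recursion is a separator lemma: if $\mathrm{tw}(G)\le t$, then for every $S$ there is a set $X\subseteq V(G)$ with $|X|\le t+1$ such that every connected component $C$ of $G-X$ satisfies $|C\cap S|\le\tfrac{2}{3}|S|$. This follows from a centroid argument on the tree $T$ of an optimal tree decomposition: orienting each edge of $T$ towards the side carrying the majority of $S$ locates a bag $X$ whose removal splits $S$ into pieces of weight $\le 2/3$. The algorithm does not need the optimal $X$; an approximate balanced separator of size $O(t)$ suffices, and that is what we can find quickly. To do so I would enumerate the $2^{|S|}\le 2^{O(t)}$ ways of splitting $S$ into two roughly equal halves $(A,B)$ and, for each, compute a minimum $A$--$B$ vertex separator by one max-flow computation; since a graph of treewidth $\le t$ has $O(tn)$ edges and the flow value is $O(t)$, this costs $2^{O(t)}n$ per guess. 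The separator lemma guarantees that some split yields a separator of size $\le t+1$, and taking $X$ to be that separator together with the smaller side $A$ gives a set of size $O(t)$ that is balanced with respect to $S$ (if no guess works, every balanced partition of $S$ needs a large separator, certifying $\mathrm{tw}(G)>t$).

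Given $X$, I would recurse on each piece $G_i := G[V_i\cup X]$, where the $V_i$ are the components of $G-X$ grouped into $O(1)$ clusters each holding at most $\tfrac23|S|$ vertices of $S$, with new boundary $S_i := (S\cap V_i)\cup X$. The crucial bookkeeping is $|S_i|\le \tfrac{2}{3}|S| + |X| \le \tfrac23 ct + O(t)$, so choosing $c$ larger than a fixed multiple of the hidden $O(t)$ constant keeps $|S_i|\le ct$ throughout; hence all boundary sets, and therefore all bags (each of the form $S\cup X$), have size $O(t)$. The decomposition returned by $\textsc{Decompose}(G,S)$ has root bag $S\cup X$ with the recursively produced decompositions of the $G_i$ attached as children; conditions (1)--(3) of Definition~\ref{defdecomp} hold because $X$ separates the pieces and $X$ lies in every pertinent bag. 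For the running time, each recursive call does $2^{O(t)}n$ work at its top level; since the sets $V_i$ are pairwise disjoint and each non-boundary vertex descends to at most one child as a non-boundary vertex, amortizing over the recursion tree (charging each call to the decrease in $|V(G)\setminus S|$, and bounding the number of calls by $O(n)$) bounds the total work by $2^{O(t)}n$ — this is precisely where \cite{Bodlaender2016ACN} removes the extra $\log n$ factor present in Reed's original analysis, and I would cite that refinement rather than reprove it.

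The main obstacle is the separator step combined with its time accounting: one must guarantee that an $O(t)$-size separator balanced \emph{with respect to $S$} is actually produced, re-partition the boundary without it ever exceeding $O(t)$, and keep the cumulative cost at $2^{O(t)}n$ with no stray polylog factors. Measuring balancedness against $S$ (rather than against $|V(G)|$) is forced, because the pieces share $X$ so total vertex count need not shrink; getting the amortization right — charging against $|V(G)\setminus S|$ and capping the number of recursive calls at $O(n)$ — is the delicate part. Everything else (the tree-decomposition axioms, and the bound $|E(G)|=O(tn)$ that makes each max-flow cheap) is routine.
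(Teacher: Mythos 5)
The paper does not prove Lemma~\ref{treefind} at all: it is imported verbatim from \cite{Bodlaender2016ACN} as an external black box, so there is no internal argument to compare yours against. Judged as a standalone proof, your sketch is the classical Robertson--Seymour/Reed recursive scheme (balanced separators with respect to the boundary set $S$, found by guessing a split of $S$ and running a small max-flow), and that part is fine: it does produce a decomposition of width $O(t)$ and correctly reports failure when $\mathrm{tw}(G)>t$.

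The genuine gap is the runtime claim, which is exactly the content of the lemma. In your accounting, each call to $\textsc{Decompose}$ costs $2^{O(t)}\cdot|V(G_i)|$ (the max-flow computations scan the whole current subgraph), not $2^{O(t)}\cdot O(1)$, so bounding the \emph{number} of calls by $O(n)$ does not bound the \emph{total} work by $2^{O(t)}n$; you must bound the sum of subgraph sizes over the recursion tree, and since balancedness is measured against $S$ rather than against $V(G_i)$, the non-boundary vertex count need not shrink geometrically and the recursion depth can be $\omega(\log n)$. The standard analyses of this scheme give $2^{O(t)}\,n^2$, or $2^{O(t)}\,n\log n$ with Reed's refinements, and removing the last $\log n$ is not a sharper amortization of this recursion: the algorithm of \cite{Bodlaender2016ACN} is a structurally different one (recursive compression of a given $O(t)$-width decomposition combined with the Bodlaender--Kloks dynamic program, plus a graph-shrinking step), so describing it as the place where ``the amortization is done right'' misattributes the idea. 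Since you ultimately defer precisely this point to the citation, your proposal in effect reduces to citing \cite{Bodlaender2016ACN} --- which is what the paper does --- but the sketch as written would not compile into a proof of the stated $n\cdot 2^{O(t)}$ bound on its own.
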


Integral in our constructions will be to use the tree decomposition to cut up the graph into disconnected regions by deleting edges. A formalization of this notion is the concept of a graph separator, originally proposed by \cite{Lipton1977AST} for planar graphs. We use a result of \cite{Robertson1986GraphMV} to extract separators from the tree-decomposition:

\begin{lemma}
[\cite{Robertson1986GraphMV}, Vertex Separators]
Any graph $G=(V, E)$ of tree-width $t$ has a vertex separator of size $O(t)$. That is, there is a set $X$ of vertices s.t. removing them defines a partition into at least 2 components that do not have an edge between them, and are of size $\leq |V|/2 + 1$. Moreover, given a tree decomposition, one can find said separator in time $O(n^2)$.
\end{lemma}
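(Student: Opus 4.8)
The plan is to show that a single, carefully chosen bag of the tree decomposition already serves as a balanced vertex separator. Fix a tree decomposition $(X_i : i \in I)$ of $G$ arranged on a tree $T$ of width $t$, so $|X_i| \le t+1$ for every $i$. For an edge $ij$ of $T$, deleting it splits $T$ into two subtrees; let $T^i_j$ be the one containing $j$, and set $V^i_j = \big(\bigcup_{k \in T^i_j} X_k\big) \setminus X_i$. First I would record two structural facts, both resting on the path-consistency axiom (property 3 of Definition \ref{defdecomp}): (i) as $j$ ranges over the neighbours of $i$ in $T$, the sets $V^i_j$ are pairwise disjoint and together with $X_i$ they cover $V$; (ii) every connected component of $G - X_i$ lies inside a single $V^i_j$ --- because each edge of $G$ sits in some bag, hence in one subtree $T^i_j$, so an edge joining $V^i_{j_1}$ to $V^i_{j_2}$ with $j_1 \neq j_2$ would force both endpoints into $X_i$, a contradiction.

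Next I would locate the bag by an orientation argument on $T$: orient the edge $ij$ from $i$ to $j$ whenever $|V^i_j| > |V|/2$. By fact (i), two edges at a node cannot both point outward (their $V$-sets would be disjoint and each of size $> |V|/2$), and an edge cannot be oriented both ways; hence every node has out-degree at most one. Since $T$ is a finite tree the orientation is acyclic, so following outgoing edges from any node reaches a node $i^\star$ with out-degree zero, for which $|V^{i^\star}_j| \le |V|/2$ for all neighbours $j$. By (ii) every component of $G - X_{i^\star}$ then has at most $|V|/2 \le |V|/2 + 1$ vertices, while $|X_{i^\star}| \le t+1 = O(t)$. When $t+1 < |V|/2$ this is genuinely a separator into at least two parts, since $|V \setminus X_{i^\star}| > |V|/2$ cannot be a single component of size $\le |V|/2$; the complementary case $|V| = O(t)$ is trivial.

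For the algorithmic claim I would first invoke Lemmas \ref{treebin}--\ref{treefind} to obtain a tree decomposition with $O(n)$ bags. Rooting $T$ and assigning each vertex to the topmost bag containing it (well defined since the bags containing a fixed vertex form a connected subtree), one post-order traversal computes for every node the number of vertices assigned in its subtree, from which all values $|V^i_j|$ and hence the sink $i^\star$ are read off; this already runs in $O(n)$ time, and even the crude implementation that reruns a linear-time graph search to size the components of $G - X_i$ for each of the $O(n)$ bags costs only $O(n^2)$, matching the stated bound. The one point that needs genuine care --- and where essentially all the content lies --- is fact (ii) together with the "no two outward edges" claim: one must cleanly separate "which subtree of $T$ an edge of $G$ belongs to" from "which $V^i_j$ its endpoints lie in", which is exactly the use of path-consistency; everything else is bookkeeping.
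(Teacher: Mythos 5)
The paper does not prove this lemma --- it is cited to Robertson and Seymour and used as a black box, so there is no in-paper argument to compare against --- but your proof is correct and is the classical ``centroid bag'' argument: orient each tree edge $ij$ towards $j$ precisely when $|V^i_j|>|V|/2$, use pairwise disjointness of the $V^i_j$ to bound out-degrees by one, and take a sink bag $X_{i^\star}$, which then has size at most $t+1$ while every component of $G - X_{i^\star}$ sits inside some $V^{i^\star}_j$ of size at most $|V|/2$. Two small points worth tightening. First, the claim that an edge of $T$ cannot be oriented both ways needs $V^i_j\cap V^j_i=\emptyset$; this is not literally fact (i), which ranges $j$ over the neighbours of a single fixed $i$, but it follows from the same path-consistency observation applied across the edge $ij$ (a vertex in both would have bags on both sides of $ij$, forcing it into $X_i\cap X_j$), so it deserves its own sentence rather than being folded into the parenthetical. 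Second, the regime $|V|=O(t)$ that you dismiss as trivial is exactly where the ``at least two components'' clause can fail outright --- a complete graph on $t+1$ vertices has tree-width $t$ and admits no vertex separator leaving two nonempty components --- so the statement should really be read as holding once $|V|$ comfortably exceeds $t$; this is harmless here because the paper only ever invokes the lemma inside the recursion of Lemma~\ref{recvertexsep}, where the recursive pieces are always of size well above the $O(t)$ tree-width scale, but your phrasing glides over a genuine boundary case.
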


To leverage this result to construct approximation algorithms, we use a simple recursive application of the result above. 

\begin{lemma} 
[Recursive Vertex Separators]\label{recvertexsep}
Let $G=(V, E)$ be a graph of tree-width $t$. Then it can be separated into mutually disconnected components each of size $\leq r$, by the removal of $O(tn/r)$ vertices. Given a tree decomposition of width $O(t)$, said separator decomposition can be constructed in $O(n^2)$ time.
\end{lemma}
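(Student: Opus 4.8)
The plan is to apply the balanced vertex separator result stated immediately above (the Robertson--Seymour vertex separator lemma, \cite{Robertson1986GraphMV}) recursively, and then to bound the total number of deleted vertices by an amortized count that crucially exploits the balance guarantee rather than proceeding level by level. First I would set up the recursion: starting from $G$, maintain a collection of pairwise vertex-disjoint induced subgraphs (``pieces''); while some piece $P$ has more than $r$ vertices, restrict the given width-$O(t)$ tree decomposition of $G$ to $V(P)$ by intersecting every bag with $V(P)$ --- which by Lemma \ref{treeprops} is still a tree decomposition of $P$ of width $O(t)$ --- use it to compute a vertex separator $S_P$ of $P$ with $|S_P| = O(t)$ whose removal leaves components each of size at most $|V(P)|/2 + 1$, delete $S_P$, and replace $P$ by those components. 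Termination and correctness are immediate: a child piece is strictly smaller than its parent once $|V(P)|\ge 3$, so the process halts with every surviving component of size $\le r$, and the removed set is $S=\bigcup_P S_P$.

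The crux is showing $|S| = O(tn/r)$ instead of the $O\!\big(\tfrac{tn}{r}\log\tfrac{n}{r}\big)$ that a naive count gives. Without loss of generality $r \ge c_0 t$ for a suitable absolute constant $c_0$ (otherwise $tn/r = \Omega(n)$ and deleting all of $V$ already meets the bound), so every separator computed has size below $r/4$. View the recursion as a tree whose nodes are exactly the pieces to which a separator is applied (pieces of size $>r$), and classify a node $P$ with $m=|V(P)|$ as a \emph{leaf} if its split produces only pieces of size $\le r$, as \emph{branching} if it produces at least two pieces of size $>r$, and otherwise \emph{unary}. I would then argue: (i) a leaf places at least $m - |S_P| > r - r/4 > r/2$ vertices into final components, so there are $O(n/r)$ leaves; (ii) in any tree the number of branching nodes is at most the number of leaves; (iii) for a unary node $P$ the balance bound forces its unique large child to have at most $m/2+1$ vertices, so the pieces it finalizes total at least $m - |S_P| - (m/2+1) \ge m/2 - r/4 - 1 > r/2$ (using $m>r$ and $|S_P|<r/4$), whence there are $O(n/r)$ unary nodes as well. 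Thus the recursion tree has $O(n/r)$ nodes total, each contributing $|S_P|=O(t)$ vertices, giving $|S| = O(tn/r)$.

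For the running time, note that at depth $d$ of the recursion the pieces are disjoint with total size $\le n$ and each of size at most $n/2^{d}+2$, while computing the separator of an $m$-vertex piece from its decomposition costs $O(m^2)$; summing, $\sum_{d}\sum_{P\text{ at depth }d}|V(P)|^2 \le \sum_{d}\big(\max_P |V(P)|\big)\big(\sum_P |V(P)|\big) \le \sum_{d}(n/2^{d}+2)\,n = O(n^2)$, with the bookkeeping (restricting bags, tracking components) absorbed. I expect step (iii) to be the genuinely delicate point: squeezing ``$\Omega(r)$ vertices finalized per unary split'' out of the single-separator balance guarantee is exactly what kills the stray logarithmic factor, and it is also what forces the (harmless) normalization $r=\Omega(t)$.
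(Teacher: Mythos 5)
Your proof is correct, and it fleshes out exactly what the paper leaves implicit: the paper offers no written argument for this lemma, merely asserting it is ``a simple recursive application of the result above,'' so the recursion-on-balanced-separators route you take is precisely the intended one. The one genuinely nontrivial point --- that a naive depth-by-depth count gives $O\!\left(\frac{tn}{r}\log\frac{n}{r}\right)$ and one must instead charge each separator to the $\Omega(r)$ vertices it finalizes, using the balance guarantee and the leaf/unary/branching classification of the recursion tree --- is the step you identified and handled correctly. Two cosmetic remarks: in case (iii) the lower bound you actually obtain is $m/2 - r/4 - 1 > r/4 - 1 = \Omega(r)$ rather than $r/2$, which suffices; and the $O(m^2)$-per-piece cost tacitly assumes the restricted tree decomposition is pruned to $O(m)$ bags, but even without pruning the total is $O(n/r)\cdot O(nt) = O(n^2 t/r) = O(n^2)$ under your normalization $r = \Omega(t)$, so the stated runtime is safe either way.
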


To extend the previous work on approximation schemes for local Hamiltonians on planar graphs (\cite{Bansal2009ClassicalAS}) to H-minor free graphs, we use the following result by \cite{Demaine2005AlgorithmicGM} on the decomposition of said graphs.

\begin{theorem}
[\cite{Demaine2005AlgorithmicGM}\label{hminors}]
For a fixed graph $H$ of constant size, there exists a constant $c_H$ s.t. for every $k\geq 1$, and for every $H$-minor free graph $G$, the edges of $G$ can be partitioned into $k+1$ sets s.t. any $k$ of the sets induce a graph of tree-width at most $c_H k$. Furthermore, such a partition can be found in polynomial time $n^{O(1)}$.
\end{theorem}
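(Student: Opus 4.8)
The plan is to generalize Baker's layering technique \cite{BakerBrenda1994ApproximationAF} from planar graphs to the full $H$-minor-free setting by routing through the Robertson--Seymour graph minor structure theorem. The planar case is the template: run a breadth-first search from an arbitrary root, obtaining levels $L_0, L_1, L_2, \dots$, assign each edge $e=(u,v)$ to the class $\min(\mathrm{lev}(u),\mathrm{lev}(v)) \bmod (k+1)$, and observe that omitting any single class $E_j$ deletes exactly the edges between consecutive levels $\ell,\ell+1$ with $\ell \equiv j \pmod{k+1}$ together with the within-level-$j$ edges; the surviving graph is then a disjoint union of ``bands'' spanning at most $k+1$ consecutive BFS levels. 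Since a planar graph restricted to $t$ consecutive BFS levels has treewidth $O(t)$, and the treewidth of a disjoint union is the maximum over components, the union of any $k$ of the $k+1$ classes has treewidth $O(k)$. The whole difficulty is upgrading ``planar'' to ``$H$-minor-free'' while keeping the dependence on $k$ linear.

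First I would invoke the structure theorem in its algorithmic form: for every fixed $H$ there is a constant $a=a(H)$ so that every $H$-minor-free $G$ admits, computable in time $n^{O(1)}$ (Demaine--Hajiaghayi--Kawarabayashi \cite{Demaine2005AlgorithmicGM} and subsequent work), a tree-decomposition of adhesion at most $a$ whose every torso is $a$-almost-embeddable in a surface of Euler genus at most $a$ --- that is, obtained from a bounded-genus graph by attaching at most $a$ vortices of depth at most $a$ along at most $a$ faces, and then adding at most $a$ ``apex'' vertices with arbitrary neighbourhoods. The central combinatorial lemma I would then prove, uniformly for these almost-embeddable pieces, is that after deleting the at most $a$ apex vertices, any $t$ consecutive BFS levels induce a subgraph of treewidth $O_a(t)$ (with $O_a(\cdot)$ hiding a factor depending only on $a$). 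This is exactly where bounded genus is used in an essential way: unlike general $H$-minor-free graphs, bounded-genus and almost-embeddable graphs satisfy a \emph{linear} local-treewidth bound, which one gets by cutting the surface open along $O(a)$ non-contractible curves to reduce to the planar band estimate, then reinserting the $O(a)$ bounded-depth vortices and the $O(a)$ apices, each raising treewidth by only $O(a)$.

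The remainder is assembly. I would apply the layering and the mod-$(k+1)$ edge-bucketing to each torso separately, then merge the per-torso edge partitions into one global partition of $E(G)$ by assigning each genuine edge of $G$ to its unique home torso and discarding the virtual clique-sum edges. The inherited tree-decomposition of $G$ restricts to one of the union $\bigcup_{i\neq j}E_i$ whose torsos are subgraphs of the corresponding per-torso unions, each of treewidth $O_a(k)$; because the adhesion is at most $a$, re-adding the adhesion cliques and gluing across the clique-sums costs only an additive $O(a)$, so the union of any $k$ classes has treewidth $O_a(k)$. Setting $c_H$ to absorb every $a$-dependent constant gives the existence statement, and the algorithm is simply: run the polynomial-time structure-decomposition routine, then linear-time BFS and $O(m)$-time bucketing on each torso.

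I expect the apex (and vortex) handling to be the main obstacle. A single apex vertex is adjacent to essentially everything, so a BFS from a root would collapse its torso into radius $2$ and obliterate the layering; the apices must therefore be deleted \emph{before} layering and then re-injected into all $k+1$ edge classes, and one must verify both that this re-injection raises treewidth by only $O_a(1)$ and that it does not destroy the ``disjoint bands'' structure in a way that defeats the linear bound. Making the structure-theorem decomposition genuinely constructive in $n^{O(1)}$ time with an \emph{explicit} constant $c_H = c_H(|h|)$, rather than merely existential, is the other place where essentially all the real work sits; everything downstream in this paper (Lemmas \ref{recvertexsep} and \ref{treefind}, and the divide-and-conquer of Section \ref{subsection-boundeddeggraphs}) uses this theorem strictly as a black box.
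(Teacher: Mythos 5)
You should first be aware that the paper does not prove this statement at all: it is imported verbatim as a black-box result of \cite{Demaine2005AlgorithmicGM} (the existence statement goes back to DeVos, Ding, Oporowski, Sanders, Reed, Seymour and Vertigan; the cited work supplies the polynomial-time algorithm), and everything downstream in Sections \ref{subsection-boundeddeggraphs} and \ref{subsection-highdegreedp} uses it only as an oracle. So there is no in-paper argument to compare against; your sketch has to be judged against the literature proof, whose broad strategy (Robertson--Seymour structure theorem plus a Baker-style layering \cite{BakerBrenda1994ApproximationAF} on the almost-embeddable torsos) it does follow.

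Judged on those terms, the genuine gap is in your assembly step, not where you expect it. The apex handling is in fact the easy part: re-adding at most $a$ vertices raises treewidth by at most $a$, no matter which classes their incident edges land in (though note that an edge must be assigned to exactly one class, so ``re-injected into all $k+1$ edge classes'' as written would not yield a partition). What does not work as stated is ``apply the layering and mod-$(k+1)$ bucketing to each torso separately, then merge; gluing across the clique-sums costs only an additive $O(a)$.'' The clique-sum bound $\mathrm{tw}(G_1\oplus_S G_2)\leq\max(\mathrm{tw}(G_1),\mathrm{tw}(G_2))$ requires the adhesion set $S$ to survive as a clique inside each piece, and a single torso can have unboundedly many adhesion cliques; these virtual cliques sit inside two consecutive BFS levels of the torso and can straddle exactly the level boundaries that deleting a class is supposed to cut, so after removing one class the surviving torso-with-adhesions is no longer a disjoint union of $(k+1)$-level bands --- arbitrarily many bands can be chained together through boundary-straddling adhesion cliques, and no additive-$O(a)$ correction recovers the $O_a(k)$ bound. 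This is precisely why the actual proofs run a strengthened induction along the clique-sum tree: each child torso's BFS is rooted at (contracted onto) its adhesion to the parent and its class indices are offset by the parent level of that adhesion, so that the partition is prescribed on the adhesion sets and globally aligned, rather than chosen independently per torso and glued afterwards. Without that strengthened inductive statement (or an equivalent mechanism), your construction does not establish the claimed $c_Hk$ bound; the remaining issues you flag --- making the structure decomposition constructive in $n^{O(1)}$ time with an explicit $c_H$ --- are real but are exactly the content of the cited work.
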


To illustrate its application, as a warm-up, it is instructive to first discuss the approach of \cite{BakerBrenda1994ApproximationAF} on planar graphs. Consider an arbitrary starting vertex $s\in V$. Perform breadth-first-search, starting from $s$, within the planar graph. This procedure defines a sequence of layers, $L_1, L_2, \cdots $, where $L_i$ is the set of vertices at distance $i$ from $u$. Fix some integer parameter $t$. Consider the sets of interlacing layers $S_i = L_i, L_{i+t}, L_{i+2t}\cdots $. By construction, removing the set of vertices $S_i$ for some $i$ will partition the graph into sets of consecutive layers, e.g., $C_i = L_{i+1}, L_{i+2}\cdots L_{i+t-1}$. More importantly, if we connect all the nodes in $L_{i+1}$ with a source $v$, then $C_i\cup \{v\}$ has diameter $2t$, and since the graph is planar, this region has tree-width $O(t)$! This relation between diameter and tree-width is not unique to Planar Graphs. \cite{Eppstein2000DiameterAT} characterized the graphs that have this `linear local tree-width' property to be the set of apex-minor free graphs. In an incredible development, \cite{Demaine2005AlgorithmicGM} showed that analogous properties held for even more general classes, as we summarize in Theorem \ref{hminors} above.

\subsection{Bounded Degree $h$-Minor Free Graphs}
\label{subsection-boundeddeggraphs}
In this section, we discuss how to use the ideas from Bidimensionality theory to construct approximations to the ground state energy. In particular, we define a clustered product state, which is a tensor product of density matrices of clusters of vertices. As a warm-up, it is instructive to consider a simple divide and conquer algorithm for the bounded degree case. Let $H = \sum_{e\in E} H_e$ be a 2-Local Hamiltonian defined on an $h$-minor free graph $G = (V, E)$, where each vertex in $V$ has bounded degree $\Delta$. Let $J = \max_e \|H_e\|$ be the maximum interaction strength on this graph, $n = |V|, m = |E|$, and let us assume the minor has constant size $|h| = O(1)$. To construct an approximation scheme for the ground state energy of $H$, we first remove selected edges of $H$ in order the simplify the resulting optimization.

Let us first apply Theorem \ref{hminors} with $k = O(\epsilon^{-1})$. In this setting, we can pick a set $RE$ of edges of size $\leq m/(k+1) = O(\epsilon m)$ such that their removal defines components of tree-width $O(\epsilon^{-1})$ in time $n^{O(1)}$. Lemma \ref{treefind} tells us we can find a decomposition of tree-width $O(\epsilon^{-1})$ in the resulting graph in time $n\cdot 2^{O(\epsilon^{-1})}$. By now applying the Recursive Vertex Separator Lemma \ref{recvertexsep}, we remove a set $RV$ of vertices of size $|RV| = O(n/(r\epsilon))$ such that the resulting graph $G' = (V', E')$ can be arranged into at most $O(n)$ mutually disconnected components, each of maximum size $r$. Since the graph has bounded degree, this corresponds to removing $\Delta \cdot |RV| = O(\Delta n / r\epsilon)$ edges. Let $H'$ be the Hamiltonian defined simply by keeping the interactions in $G'$. Let us pick $r = O(\Delta / \epsilon^2)$, such that in total we have removed $O(\epsilon m)$ edges. Weyl's inequality tells us that the Hamiltonian $H'$ satisfies

\begin{equation}
    H' = \sum_{e\in E'}H_e, \text{ such that } \|H-H'\| = O(\epsilon m J)
\end{equation}

Consider a single component $C\subset V$ of $G'$, and recall $C$ has at most $r$ vertices, and is disconnected from the other components in $H'$. We can find its ground state $\sigma_C$ by explicitly diagonalizing

\begin{equation}
    \sigma_C = \argmin_\rho \text{Tr}\bigg[ \sum_{e\in E': e\in C} H_e \rho \bigg]
\end{equation}

\noindent in time $2^{O(r)} = 2^{O(\Delta / \epsilon^2)}$, as the support of the effective interaction above is only on $r$ qubits. Repeating for each component, we conclude with the following theorem.

\begin{theorem} \label{boundeddegreegsapproxtheorem}
Let $H$ be a 2-Local Hamiltonian defined on $n$ qubits arranged on an $h$-minor free graph of bounded degree $\Delta$, where the minor is constant sized $|h| = O(1)$. Fix a constant $\epsilon > n^{-1/2}$. Then there is an algorithm that finds a clustered product state $\sigma$ in time $n^{O(1)} + n \cdot 2^{O(\Delta / \epsilon^2)}$ such that 
\begin{equation}
    \text{Tr}[H\sigma] \leq \text{min}_\rho \text{Tr}[H\rho] + \epsilon m J
\end{equation}
\end{theorem}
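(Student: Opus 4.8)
The strategy is a two-stage graph decomposition followed by exact diagonalization of small clusters. The key simplification is that we will output the \emph{exact} ground state $\sigma$ of a slightly edge-depleted Hamiltonian $H'$, so the only error incurred is $\|H-H'\|_\infty$, and the whole argument reduces to deleting few edges. First I would invoke Theorem~\ref{hminors} with parameter $k=\Theta(1/\epsilon)$: in time $n^{O(1)}$ this partitions $E$ into $k+1$ classes such that the union of any $k$ of them induces a graph of tree-width at most $c_h k = O(1/\epsilon)$. Discarding the smallest class removes a set $RE$ of at most $m/(k+1) = O(\epsilon m)$ edges and leaves an $h$-minor-free subgraph $G_1=(V,E_1)$ of tree-width $O(1/\epsilon)$. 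By Lemma~\ref{treefind} a tree decomposition of $G_1$ of width $O(1/\epsilon)$ can be found in time $n\cdot 2^{O(1/\epsilon)}$.

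\textbf{Second stage.} Apply the recursive vertex-separator result, Lemma~\ref{recvertexsep}, to $G_1$ with cluster-size target $r$: removing a set $RV$ of $O(tn/r)=O(n/(r\epsilon))$ vertices partitions $G_1$ into at most $O(n)$ pairwise non-adjacent components each of size at most $r$, and $RV$ is found in time $O(n^2)$ given the decomposition. Since $G$ has maximum degree $\Delta$, deleting these vertices removes at most $\Delta\cdot|RV| = O(\Delta n/(r\epsilon))$ further edges. Choosing $r = \Theta(\Delta/\epsilon^2)$ makes this $O(\epsilon n)$; assuming without loss of generality that $G$ has no isolated vertices (so $m=\Theta(n)$), the second stage also discards only $O(\epsilon m)$ edges. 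The hypothesis $\epsilon>n^{-1/2}$ guarantees $r=O(\Delta/\epsilon^2)=O(n)$, so the clusters and the brute-force step below are not vacuous. Let $H'$ be the Hamiltonian supported on the interactions surviving both removals; by Weyl's inequality $\|H-H'\|_\infty \le J\cdot(|RE|+\Delta|RV|) = O(\epsilon m J)$.

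\textbf{Assembling the state.} Each connected component $C$ of the interaction graph of $H'$ has at most $r=O(\Delta/\epsilon^2)$ qubits and, by construction, shares no term of $H'$ with any other component, so $H'=\bigoplus_C H'_C$ is a direct sum. For each $C$ compute the ground state $\sigma_C=\argmin_\rho \text{Tr}[H'_C\rho]$ by exact diagonalization in time $2^{O(r)}=2^{O(\Delta/\epsilon^2)}$; over all $O(n)$ components this is $n\cdot 2^{O(\Delta/\epsilon^2)}$. Set $\sigma=\bigotimes_C\sigma_C$, a clustered product state. Then $\text{Tr}[H'\sigma]=\lambda_{\min}(H')$, and combining with Weyl's inequality twice,
\begin{equation}
\text{Tr}[H\sigma]\le \text{Tr}[H'\sigma]+\|H-H'\|_\infty = \lambda_{\min}(H')+O(\epsilon m J)\le \lambda_{\min}(H)+O(\epsilon m J).
\end{equation}
Rescaling $\epsilon$ by the hidden constant yields the stated bound, with total runtime $n^{O(1)}$ (minor decomposition and separator computations) plus $n\cdot 2^{O(\Delta/\epsilon^2)}$ (tree decomposition and cluster diagonalization).

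\textbf{Main obstacle.} There is no deep obstacle here — this is the warm-up case with minimal quantum content — but the point requiring care is the \emph{edge-budget bookkeeping} of the second stage: the vertex separator a priori controls only the number of deleted \emph{vertices}, and one must balance $r$ against $\Delta$ and $\epsilon$ (and justify $m=\Theta(n)$) to conclude that only $O(\epsilon m)$ \emph{edges} are lost. The genuine difficulty — removing the bounded-degree assumption via the high-low degree information-theoretic argument and the high-degree dynamic program — is deferred to the later sections.
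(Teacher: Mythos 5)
Your proposal is correct and follows the paper's own proof essentially step for step: the same application of Theorem~\ref{hminors} with $k=\Theta(1/\epsilon)$, the same recursive vertex separator (Lemma~\ref{recvertexsep}) with cluster size $r=\Theta(\Delta/\epsilon^2)$, the same exact diagonalization of the $O(n)$ disconnected clusters, and the same $O(\epsilon m J)$ error accounting via the operator-norm bound on the removed interactions. The only differences are cosmetic — you compare $\lambda_{\min}(H')$ to $\lambda_{\min}(H)$ via Weyl's inequality (note the displayed last inequality silently uses $\lambda_{\min}(H')\le\lambda_{\min}(H)+\|H-H'\|_\infty$, absorbed into the big-$O$) instead of evaluating the true ground state on $H'$ as the paper does, and your remark that $m=\Theta(n)$ is needed to convert the $O(\epsilon n)$ deleted edges into $O(\epsilon m)$ is, if anything, slightly more careful bookkeeping than the paper's.
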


\begin{proof}
As motivated in the previous discussion, one can construct a Hamiltonian $H'$ that approximates the spectra of $H$ via Theorem \ref{hminors} and Lemma \ref{recvertexsep} in runtime $n^{O(1)}+n\cdot 2^{O(\epsilon^{-1})}$. We can then optimize each component individually in total runtime $n\cdot 2^{O(\Delta/\epsilon^2)}$. We are guaranteed that the resulting state $\sigma = \otimes_C \sigma_C$ satisfies

\begin{gather}
    \text{Tr}[H\sigma]\leq \|H-H'\|_\infty + \text{Tr}[H'\sigma] \leq \|H-H'\|_\infty + \text{Tr}[H'\rho] \leq 2\|H-H'\|_\infty + \text{Tr}[H\rho] \\
    \Rightarrow \text{Tr}[H\sigma]\leq \text{min}_\rho \text{Tr}[H\rho]+O(\epsilon J m)
\end{gather}

by an adequate re-scaling of $\epsilon$ we obtain the desired approximation.

\end{proof}

\subsection{High-Degree Dynamic Programming}
\label{subsection-highdegreedp}

To extend the bounded-degree divide and conquer construction in the previous subsection to $h$-minor free graphs of arbitrary degree, we use the high-low degree technique of \cite{Brando2013ProductstateAT}. This technique is a slight modification to their general self-decoupling lemmas, which we briefly present here. Consider a generic density matrix $\rho$ on $n$ qubits. The goal is to construct a suitable approximation to $\rho$ via an ensemble of product states over the high-degree vertices, essentially breaking the entanglement over high-degree vertices. Similar to as we discussed in section 5, \cite{Brando2013ProductstateAT} pick a random subset $C$ of vertices to measure in a random Pauli basis $b$, obtaining a string $z$. Let $\mathcal{H}$ be the set of vertices of degree $\geq \Delta$, and $\mathcal{L}$ be the set of vertices of degree $\leq \Delta$, for some $\Delta$ to be chosen later. Once measured, the remaining un-measured qubits that have high-degree in the underlying graph, $\mathcal{H}\setminus C$, are placed in a tensor product, while the un-measured qubits of low degree, $\mathcal{L}\setminus C$, are simply unmodified. The resulting state is a classical distribution of these product states, one for each choice of $C$, basis measurement $b$, and measurement outcome $z$:

\begin{equation}
    \sigma = \mathbb{E}_{C, b}\mathbb{E}_{z}\bigotimes_{u\in C} \psi_{b_u, z_u} \otimes \sigma_{\mathcal{L}\setminus C}^{(C, b, z)} \bigotimes_{u\in \mathcal{H}\setminus C} \sigma_u^{(C, b,z)}
\end{equation}

\begin{remark}
In the interest of shortening notation, we let $\bigotimes_{u\in C} \psi_{b_u,z_u} = \psi_{C, b, z}$ and $\eta_{C, b} = \mathbb{E}_z \psi_{C, b, z}\otimes \sigma_{\mathcal{L}\setminus C}^{(C, b, z)} \bigotimes_{u\in \mathcal{H}\setminus C} \sigma_u^{(C, b,z)}$. More importantly, in this section, as $C$ doesn't play a role we shorten the description and let $\sigma = \mathbb{E}_{x} \sigma_{\mathcal{L}}^{(x)}\bigotimes_{u\in \mathcal{H}} \sigma_u^{(x)}$
\end{remark}

\begin{lemma}
[High-Low Degree Technique, \cite{Brando2013ProductstateAT}] \label{highlow} Let $H$ be a 2-Local Hamiltonian defined on a graph $G=(V, E)$ of $n$ particles of local dimension $d = O(1)$, and $|E| = m$. Let $\mathcal{H}$ be the set of vertices of degree $\geq \Delta$, and $\mathcal{L}$ be the set of vertices of degree $\leq \Delta$. Let $l$ be a positive integer parameter $< n$, and $C$ be a random set of vertices of uniformly random size at most $l$. Let $\rho$ be a generic state defined on these particles $V$. Then there exists a separable state $\sigma = \sigma = \mathbb{E}_{x} \sigma_{\mathcal{L}}^{(x)}\bigotimes_{u\in \mathcal{H}} \sigma_u^{(x)}$, that is an ensemble of states that are product over the high degree vertices, that approximates the energy of $\rho$ as
\begin{equation}
    \bigg|\text{Tr}[H\sigma] - \text{Tr}[H\rho]\bigg|= O\bigg(\frac{l}{n} + \sqrt{\frac{n}{l\Delta}}\bigg)Jm \equiv \delta_{l, \Delta}
\end{equation}
\end{lemma}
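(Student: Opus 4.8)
The plan is to mirror the argument of Theorem \ref{theorem-BHgeneral}, but to split the edge-set of $G$ according to whether both endpoints are low-degree or at least one endpoint is high-degree. First I would set up the entanglement-breaking state exactly as described before the statement: sample $C$ of uniformly random size in $[l]$, measure it in a random Pauli basis to get outcome $z$, keep the unmeasured low-degree qubits $\mathcal{L}\setminus C$ entangled (as $\sigma^{(C,b,z)}_{\mathcal L\setminus C}$), and break entanglement only among the unmeasured high-degree qubits $\mathcal{H}\setminus C$. As in Theorem \ref{theorem-BHgeneral}, by triangle/Hölder/Jensen it suffices to bound $\mathbb{E}_C\sum_{e}\|H_e\|_\infty\|\rho^e-\eta^e_C\|_1$. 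Edges touching $C$ contribute at most $2\,\mathbb E_C|\{e:e\cap C\neq\emptyset\}|\cdot J\le O(l/n)\,Jm$ by the same union bound. Edges with both endpoints in $\mathcal L\setminus C$ contribute \emph{zero}, since on those pairs $\eta_C$ agrees with the (un-broken) marginal of the post-measurement state. So the only genuinely new work is the edges $e=(u,v)$ with $u\in\mathcal H\setminus C$ (the entanglement across such an edge is what we broke).

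For those edges, I would follow the Theorem \ref{theorem-BHgeneral} chain: pull out the expectation over measurement outcomes $x=(b,r)$ by Jensen, apply Cauchy–Schwarz over the $\le m$ such edges to get $\sqrt{m}\cdot(\mathbb E_{C,x}\sum_{e}\|\tau^{(C,x)}_{uv}-\tau^{(C,b,r)}_u\otimes\tau^{(C,b,r)}_v\|_1^2)^{1/2}$, bound the quantum two-body correlation by the classical mutual information of the Pauli-measurement outcomes via Claim 2 of \cite{Bravyi2021OnTC} and Pinsker, and reduce to bounding $\mathbb E_C\,\mathbb E_{u\in\mathcal H\setminus C,\,v\sim u}\,I(X_u:X_v\mid X_C)$. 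The key point that produces the $\sqrt{n/(l\Delta)}$ rather than $\sqrt{n/l}$ is that in this averaging, $u$ ranges only over \emph{high}-degree vertices, each of which has $\ge\Delta$ neighbours; so a decoupling lemma of the Lemma \ref{lemma-kselfdecoupling} / Lemma 19-of-\cite{Brando2013ProductstateAT} type, adapted so that the inner vertex is forced to be a neighbour of the high-degree vertex $u$, gains an extra factor of $1/\Delta$: morally $\mathbb E_{0\le m\le l}\mathbb E_{|C|=m}\mathbb E_{u\in\mathcal H\setminus C}\mathbb E_{v\sim u}I(X_u:X_v\mid X_C)\le \tfrac{C_1}{l\Delta}\sum_u I(X_u:X_{V\setminus u})\le \tfrac{C_2}{l\Delta}\cdot n$, where the last step uses $I(X_u:X_{V\setminus u})=O(1)$ since each $X_u$ lives on an $O(1)$-size alphabet. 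Multiplying by the $\sqrt{m}$ from Cauchy–Schwarz and noting $m=O(n)$ for these purposes (or more precisely carrying $\sqrt m$) yields the $\sqrt{n/(l\Delta)}\cdot Jm$-type term; combined with the $O(l/n)Jm$ term from edges hitting $C$ this gives the stated $\delta_{l,\Delta}$.

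The main obstacle I anticipate is the neighbour-restricted self-decoupling step: in Lemma \ref{lemma-kselfdecoupling} the inner vertex is a uniformly random vertex of $[n]$, whereas here the sum $\sum_{v\sim u}$ runs over the $\ge\Delta$ neighbours of a \emph{fixed} high-degree $u$, so I must re-run the chain-rule/telescoping argument keeping track of the degree and verifying that sampling $u$ uniformly from $\mathcal H\setminus C$ and then $v$ uniformly among its neighbours still produces, after reindexing, a mutual information of the form $I(X_u:X_{C'}\mid\cdots)$ that telescopes against $\sum_u I(X_u:X_{V\setminus u})$ with the advertised $1/(l\Delta)$ saving. One must also be a little careful that the choice of $\Delta$ is a free parameter (to be optimised later when the lemma is applied, e.g.\ $\Delta=\mathrm{poly}(1/\epsilon)$), and that $\mathcal H,\mathcal L$ depend only on the fixed graph $G$, not on $\rho$ or $C$, so all the union bounds go through uniformly. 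The rest is bookkeeping identical to the proof of Theorem \ref{theorem-BHgeneral}.
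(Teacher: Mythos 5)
The paper does not actually prove Lemma \ref{highlow} — it is cited directly from Brand\~{a}o and Harrow — so there is no internal proof to compare against. Taking your proposal on its own merits as a reconstruction of their argument: the skeleton is right (split edges into those hitting $C$, those entirely inside $\mathcal{L}\setminus C$ which contribute zero, and those touching $\mathcal{H}\setminus C$; run Jensen/Cauchy--Schwarz/Pinsker on the last class), and the target inequality you write, $\mathbb{E}_C\,\mathbb{E}_{u\in\mathcal{H}\setminus C}\,\mathbb{E}_{v\sim u}I(X_u:X_v\mid X_C)\lesssim n/(l\Delta)$, is the correct intermediate claim. But two things deserve a closer look.

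First, you misdiagnose the main obstacle. No ``neighbour-restricted self-decoupling lemma'' needs to be proved, and re-running the telescoping argument is unnecessary. For a \emph{fixed} $u$, the ordinary telescoping argument (exactly Lemma \ref{lemma-kselfdecoupling} with $k=2$, specialized to a deterministic $u$ instead of a random one) already gives $\mathbb{E}_C\sum_{v\notin C\cup\{u\}}I(X_u:X_v\mid X_C)\le \tfrac{n}{l+1}\,I(X_u:X_{V\setminus u})=O(n/l)$. Since $\sum_{v\sim u,\,v\notin C}I\le\sum_{v\notin C\cup\{u\}}I$ termwise, you immediately get $\mathbb{E}_C\sum_{v\sim u}I=O(n/l)$ for every $u$, with no modification to the lemma. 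The $\Delta$ saving is purely a counting fact: either divide by $d_u\ge\Delta$ to pass to the neighbour average (giving your $n/(l\Delta)$), or equivalently sum over $u\in\mathcal{H}$ and use $|\mathcal{H}|\le 2m/\Delta$. Framing this as a new variant of the decoupling lemma obscures what is actually a trivial observation about sums of non-negative terms.

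Second, your final accounting does not quite close. Cauchy--Schwarz needs $\sum_{e\in E_\mathcal{H}}I$, not the triple average. Passing from $\mathbb{E}_C\,\mathbb{E}_{v\sim u}I\lesssim n/(l\Delta)$ (per fixed $u$) to the edge-sum gives $\mathbb{E}_C\sum_{u\in\mathcal{H}}\sum_{v\sim u}I\le \tfrac{n}{l\Delta}\sum_{u\in\mathcal{H}}d_u\le \tfrac{2mn}{l\Delta}$, an extra factor of $2m$ beyond what you write. Only then does $J\sqrt{m}\cdot\sqrt{2mn/(l\Delta)}=O\bigl(Jm\sqrt{n/(l\Delta)}\bigr)$ land on the stated $\delta_{l,\Delta}$. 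Your phrase ``multiplying by the $\sqrt m$ from Cauchy--Schwarz and noting $m=O(n)$ for these purposes'' silently absorbs this missing factor; if you carry the powers of $m$ and $n$ explicitly, the average-to-sum conversion is a real step, not bookkeeping you can wave away, and it is exactly where the $\sum_{u\in\mathcal{H}}d_u\le 2m$ (equivalently $|\mathcal{H}|\le 2m/\Delta$) identity must be invoked.
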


\begin{remark}
If we pick some $l = O(n \Delta^{-1/3})$,  we achieve an error of $\delta_{l, \Delta} = \Delta^{-1/3}Jm$. 
\end{remark}

\begin{corollary}
By an averaging argument, there exists a product state $\sigma^{(x)} = \sigma_{\mathcal{L}}^{(x)}\bigotimes_{u\in \mathcal{H}} \sigma_u^{(x)}$ that approximates the ground state energy of $H$:
\begin{equation}
    \text{Tr}[H\sigma^{(x)}] \leq \text{min}_{\rho}\text{Tr}[H\rho] + O(\Delta^{-1/3}Jm)
\end{equation}
\end{corollary}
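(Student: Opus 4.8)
The plan is to deduce this corollary from Lemma \ref{highlow} by a single averaging step, so that essentially no new work is required beyond instantiating the parameters. First I would apply Lemma \ref{highlow} to the density matrix $\rho = \rho^*$, where $\rho^* = \argmin_\rho \text{Tr}[H\rho]$ is a ground state of $H$, and fix the free integer parameter $l$ as in the remark following that lemma, namely $l = \Theta(n\Delta^{-1/3})$ --- an admissible choice $1 \le l < n$ in the regime of interest, since $\Delta$ will later be taken to be $\text{poly}(1/\epsilon) = \omega(1)$ with $\epsilon$ constant and $n$ large. With this choice, Lemma \ref{highlow} produces a separable state $\sigma = \mathbb{E}_x\, \sigma_{\mathcal{L}}^{(x)} \bigotimes_{u\in\mathcal{H}} \sigma_u^{(x)}$ --- an ensemble, over the measurement data $x = (C, b, z)$, of states each of which is a tensor product across the high-degree vertices $\mathcal{H}$ --- satisfying $\big|\text{Tr}[H\sigma] - \text{Tr}[H\rho^*]\big| \le \delta_{l,\Delta} = O(\Delta^{-1/3} J m)$.

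Next I would invoke linearity of the trace, writing $\text{Tr}[H\sigma] = \mathbb{E}_x \text{Tr}[H\sigma^{(x)}]$ with $\sigma^{(x)} = \sigma_{\mathcal{L}}^{(x)} \bigotimes_{u\in\mathcal{H}} \sigma_u^{(x)}$, and then apply an averaging argument. Indeed,
\[
\mathbb{E}_x \text{Tr}[H\sigma^{(x)}] \;=\; \text{Tr}[H\sigma] \;\le\; \text{Tr}[H\rho^*] + \delta_{l,\Delta} \;=\; \min_\rho \text{Tr}[H\rho] + O(\Delta^{-1/3}Jm),
\]
so at least one outcome $x$ attains a value of $\text{Tr}[H\sigma^{(x)}]$ no larger than this expectation. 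Fixing such an $x$ gives a state $\sigma^{(x)}$ that is product over the high-degree vertices and satisfies the claimed bound. Note that only the one-sided inequality $\text{Tr}[H\sigma] \le \text{Tr}[H\rho^*] + \delta_{l,\Delta}$ from Lemma \ref{highlow} is needed, since the matching lower bound $\text{Tr}[H\rho^*] = \min_\rho\text{Tr}[H\rho]$ is immediate.

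There is no genuine obstacle within this corollary: all of the difficulty has been absorbed into Lemma \ref{highlow}, whose proof in turn rests on the entanglement-breaking and multi-partite self-decoupling machinery behind Theorem \ref{theorem-BHgeneral} (the measurement-channel distortion estimate, Pinsker's inequality converting quantum correlations across the high-degree vertices into classical mutual information, and the self-decoupling bound of Lemma \ref{lemma-kselfdecoupling}, specialized so that only edges incident to $\mathcal{H}$ contribute the $\sqrt{n/(l\Delta)}$ term). The only items deserving a line of care in the write-up are (i) verifying that $l = \Theta(n\Delta^{-1/3})$ is admissible and reproduces the error $\Delta^{-1/3}Jm$, and (ii) emphasizing that $\sigma^{(x)}$ is ``product'' only in the coarse sense --- a tensor product over $\mathcal{H}$ against an a priori entangled block on $\mathcal{L}$ --- since it is exactly this structure, not full unentanglement, that the high-degree dynamic-programming algorithm of subsection \ref{subsection-highdegreedp} is built to exploit.
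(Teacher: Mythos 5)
Your proposal is correct and is precisely the averaging argument that the paper alludes to (the corollary itself says ``by an averaging argument''); instantiating Lemma \ref{highlow} at the ground state with $l = \Theta(n\Delta^{-1/3})$, noting $\text{Tr}[H\sigma] = \mathbb{E}_x\text{Tr}[H\sigma^{(x)}]$, and picking an $x$ attaining at most the expectation is exactly the intended route. Your closing remarks about admissibility of $l$ and the meaning of ``product'' over $\mathcal{H}$ are accurate and consistent with how the state is subsequently used in the dynamic-programming algorithm.
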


The main takeaway of the above is that we can try to minimize the energy of $H$ among certain structured product states that break the entanglement of high degree vertices, up to certain error. 

Let us now consider a Hamiltonian defined on an $h$-minor free graph $G = (V, E)$, and similarly to the above partition the set of vertices into $V = \mathcal{L}\cup \mathcal{H}$ of degree $\leq \Delta$ and $> \Delta$ respectively. Consider the subgraph $G_{\mathcal{L}} = (\mathcal{L}, E_{\mathcal{L}})$ of $G$, defined by the vertices of $\mathcal{L}\subset V$, and the edges between them (basically delete $\mathcal{H}$ from the graph). This graph is still $h$-minor free, and by construction, has bounded degree $\leq \Delta$. In this manner, we can apply the separator framework of the previous section. We partition $\mathcal{L}$ into (disjoint subsets) clusters $C_1\cdots C_c$ of $c = O(n)$ clusters of low degree vertices, each of size $\leq r$, by deleting $O(\Delta n/r\epsilon + \epsilon m)$ edges. We can now define a Hamiltonian $H''$, corresponding to the interactions within the clusters, between clusters and high-degree vertices, and between the high-degree vertices - that is, we omit cluster-cluster interactions. 

\begin{equation}
    H'' = \sum_{e\in E, u, v\in \text{ some } C_i} H_e + \sum_{e\in E, u, v\in \mathcal{H}} H_e + \sum_{e\in E, u\in \mathcal{H}, v\in \mathcal{L}} H_e
\end{equation}

where once again Weyl's inequality guarantees these Hamiltonians have similar spectra

\begin{equation}
    \|H-H''\| \leq O(J\Delta n/r\epsilon + \epsilon Jm).
\end{equation}

Let us pick, for conciseness and later convenience, $\Delta = O(\epsilon^{-3})$, and $r = O(\epsilon^{-5})$, such that the error above is $O(\epsilon m J)$. Let us consider minimizing the ground state energy of $H''$ among clustered product states $ \bigotimes_{C_i \subset \mathcal{L}}\sigma_{C_i} \bigotimes_{u\in \mathcal{H}} \sigma_u$, where each of the high degree vertices is put in a product, and each of the clusters $C_i$ is also put in a product. Let $\sigma$ be said minimum. We will discuss how to find said $\sigma$ of such structure shortly, but for now let us discuss its guarantees. We claim

\begin{claim} \label{gssparse}
With $H''$ previously defined; If $\sigma$ is the minimizer of $\text{Tr}[H''\sigma]$ among product states of certain structure $\sigma = \bigotimes_{C_i \subset \mathcal{L}}\sigma_{C_i} \bigotimes_{u\in \mathcal{H}} \sigma_u$, then $\sigma$ also provides a good estimate for the ground state energy of $H$:
\begin{equation}
    \text{Tr}[H\sigma] \leq \min_\rho\text{Tr}[H\rho] + O(\epsilon J m).
\end{equation}
\end{claim}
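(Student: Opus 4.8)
The plan is to chain together three ingredients: the spectral closeness $\|H-H''\|_\infty = O(\epsilon Jm)$ already established above via Weyl's inequality (with the choices $\Delta = O(\epsilon^{-3})$, $r = O(\epsilon^{-5})$), the high-low degree technique of \cite{Brando2013ProductstateAT} (Lemma \ref{highlow} and its Corollary), and a structural observation that exploits the fact that $H''$ was built to contain no cluster--cluster interactions.

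The heart of the argument, and the place where we improve on the quantum-to-classical mappings of \cite{Bansal2009ClassicalAS, Brando2013ProductstateAT}, is the following structural observation, which I would prove first. Let $\tau = \tau_{\mathcal{L}}\bigotimes_{u\in\mathcal{H}}\tau_u$ be any state that is a product over the high-degree vertices but arbitrary (possibly entangled) on $\mathcal{L}$. Every term of $H''$ is supported either entirely inside one cluster $C_i$, entirely inside $\mathcal{H}$, or on exactly one high-degree and one low-degree vertex, so no term couples two distinct clusters. Hence $\mathrm{Tr}[H''\tau]$ depends on $\tau_{\mathcal{L}}$ only through its cluster marginals $\tau_{C_i} = \mathrm{Tr}_{\mathcal{L}\setminus C_i}[\tau_{\mathcal{L}}]$, and since replacing $\tau_{\mathcal{L}}$ by $\bigotimes_i \tau_{C_i}$ leaves all cluster marginals (and all $\tau_u$) unchanged, we get $\mathrm{Tr}[H''\tau] = \mathrm{Tr}\big[H''\,(\bigotimes_i\tau_{C_i})\bigotimes_{u\in\mathcal{H}}\tau_u\big]$. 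Therefore
\begin{equation}
\min_{\sigma = \bigotimes_{C_i\subset\mathcal{L}}\sigma_{C_i}\bigotimes_{u\in\mathcal{H}}\sigma_u}\mathrm{Tr}[H''\sigma] \;=\; \min_{\tau = \tau_{\mathcal{L}}\bigotimes_{u\in\mathcal{H}}\tau_u}\mathrm{Tr}[H''\tau];
\end{equation}
optimizing $H''$ over clustered product states is exactly as powerful as optimizing over states that merely break the entanglement across $\mathcal{H}$.

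Next I would invoke the Corollary to Lemma \ref{highlow}, applied to $H$ on $G$ with the same bipartition $\mathcal{H}/\mathcal{L}$, degree cutoff $\Delta = O(\epsilon^{-3})$, and $l = O(n\Delta^{-1/3})$, to produce a state $\tau = \sigma_{\mathcal{L}}^{(x)}\bigotimes_{u\in\mathcal{H}}\sigma_u^{(x)}$ (entangled on $\mathcal{L}$, product on $\mathcal{H}$) with $\mathrm{Tr}[H\tau] \le \min_\rho\mathrm{Tr}[H\rho] + O(\Delta^{-1/3}Jm) = \min_\rho\mathrm{Tr}[H\rho] + O(\epsilon Jm)$. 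Letting $\sigma$ be the clustered-product minimizer of $H''$, the displayed equation above together with $\|H-H''\|_\infty = O(\epsilon Jm)$ then gives
\begin{align}
\mathrm{Tr}[H\sigma] &\le \mathrm{Tr}[H''\sigma] + \|H-H''\|_\infty = \min_{\tau'=\tau'_{\mathcal{L}}\bigotimes\tau'_u}\mathrm{Tr}[H''\tau'] + O(\epsilon Jm) \le \mathrm{Tr}[H''\tau] + O(\epsilon Jm) \nonumber\\
&\le \mathrm{Tr}[H\tau] + 2\|H-H''\|_\infty + O(\epsilon Jm) \le \min_\rho\mathrm{Tr}[H\rho] + O(\epsilon Jm),
\end{align}
and rescaling $\epsilon$ finishes the proof.

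I do not expect any single inequality to be the obstacle; the care lies in the bookkeeping of the two independent approximation steps. The high-low degree step wants $\Delta$ large (its error is $\Delta^{-1/3}Jm$), while keeping the number of deleted edges $O(J\Delta n/(r\epsilon) + \epsilon Jm)$ at the scale $\epsilon Jm$ forces $r \gtrsim \Delta/\epsilon$; one must check that a single consistent choice — here $\Delta = O(\epsilon^{-3})$, $r = O(\epsilon^{-5})$ — makes all error terms collapse to $O(\epsilon Jm)$ simultaneously. The one genuinely subtle point to verify is that Lemma \ref{highlow} is applied with the bipartition $\mathcal{H}/\mathcal{L}$ induced by the original graph $G$ (so that every vertex of $\mathcal{H}$ really has degree $\ge \Delta$), since $H''$ is obtained from $H$ by edge deletions; applying it to $H$ directly, as above, sidesteps this cleanly. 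These same parameter choices are also what later yield the $n\cdot 2^{O(\epsilon^{-9}\log 1/\epsilon)}$ runtime once one exhausts over $\epsilon$-nets for the high-degree vertices and diagonalizes each size-$O(\epsilon^{-5})$ cluster.
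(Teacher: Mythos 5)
Your proof is correct and uses the same three ingredients as the paper's: Weyl's inequality to bound $\|H-H''\|_\infty = O(\epsilon Jm)$, the high-low degree technique of Lemma~\ref{highlow}, and the structural observation that because $H''$ contains no cluster--cluster interactions, its energy on any state that is product across $\mathcal{H}$ depends only on the cluster marginals of the $\mathcal{L}$-part, so replacing $\tau_{\mathcal{L}}$ by $\bigotimes_i \tau_{C_i}$ is lossless. The one place you deviate is that you invoke Lemma~\ref{highlow} for $H$ and its ground state, whereas the paper invokes it for $H''$ and its ground state $\rho''$ and then transfers back; your version is slightly cleaner since the bipartition $\mathcal{H}/\mathcal{L}$ is then literally the one of $G$, but the paper's version is also valid because only $\mathcal{L}$--$\mathcal{L}$ edges are removed in forming $H''$, so the degree thresholds defining $\mathcal{H}$ and $\mathcal{L}$ are preserved. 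Both routes land on the same $O(\epsilon Jm)$ bound after accumulating the two $\|H-H''\|_\infty$ transfers and one application of $\delta_{l,\Delta}$.
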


\begin{proof} Let $D_2$ be the set of all density matrices over $n$ particles of the form $\sigma_{\mathcal{L}} \bigotimes_{u\in \mathcal{H}} \sigma_u$, and let $D_1$ be the set of all density matrices over $n$ particles of the form $\bigotimes_{C_i \subset \mathcal{L}}\sigma_{C_i} \bigotimes_{u\in \mathcal{H}} \sigma_u$, i.e. with a tensor product over the clusters of low degree vertices. Lemma \ref{highlow} tells us that if $\rho''$ is the ground state of $H''$, then there exists $\sigma_1\in D_1$ such that

\begin{equation}
   \text{Tr}[H''\sigma_1 ]  \leq \text{Tr}[H''\rho''] + O(\epsilon J m)
\end{equation}

We note that since $H''$ has no interactions between clusters, then one can construct a density matrix $\sigma_2\in D_2$ that is the marginalization of $\sigma_1$ in each cluster, i.e.

\begin{equation}
    \sigma_2 = \bigotimes_{C_i \subset \mathcal{L}}\sigma_{2, C_i} \bigotimes_{u\in \mathcal{H}} \sigma_u \text{ with }\sigma_{2, C_i} = \text{Tr}_{\mathcal{L}\setminus C_i} [\sigma_{1, \mathcal{L}}] \text{ such that }\text{Tr}[H''\sigma_1] = \text{Tr}[H''\sigma_2]
\end{equation}

with the guarantee that their energies over $H''$ are exactly the same. In this manner, 

\begin{equation}
    \min_{\sigma \in D_2}\text{Tr}[H''\sigma] \leq \text{Tr}[H''\sigma_2]=\text{Tr}[H''\sigma_1]\leq \text{Tr}[H''\rho''] + O(\epsilon J m)
\end{equation}

Let $\sigma^*\in D_2$ be the minimizer of the LHS above, corresponding to the minimum energy state among the clustered product states. We are thereby guaranteed that $\sigma^*$ is a good estimate for the ground state energy of the actual $H$, since

\begin{gather}
    \text{Tr}[H\sigma^*] \leq \text{Tr}[H''\sigma^*] +O(\epsilon J m) \leq \text{Tr}[H''\rho''] + O(\epsilon J m) \leq \\ \leq \text{Tr}[H''\rho] + O(\epsilon J m)  \leq \text{Tr}[H\rho] + O(\epsilon J m).
\end{gather}

Where, in sequence, we use the fact that $H, H''$ are close, then 
the previously derived relation with the ground state $\rho''$ of $H''$, next the fact that $\rho$ has larger energy than $\rho''$ on $H''$ by definition, and finally the fact that $H, H''$ are close once again.
\end{proof}

Now that we are guaranteed that the clustered product state $\sigma$ that minimizes the energy of $H''$ is also a good estimate for the ground state energy, let us reason how to find it.

\begin{claim} \label{dpgs}
Let $D_2$ be the set of clustered product density matrices of the form $\bigotimes_{i\in [c]}\sigma_{C_i} \bigotimes_{u\in \mathcal{H}} \sigma_u$. Then, there exists a deterministic algorithm that performs the minimization $\min_{\sigma\in D_2} \text{Tr}[H''\sigma]$ and returns a state $\sigma\in D_2$ in time $n\cdot 2^{O(\epsilon^{-9}\log 1/\epsilon)}$ such that 

\begin{equation}
    \text{Tr}[H''\sigma] \leq  \min_{\sigma \in D_2}\text{Tr}[H''\sigma]  + \epsilon m J
\end{equation}
    
\end{claim}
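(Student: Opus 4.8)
The plan is to turn the continuous optimization $\min_{\sigma\in D_2}\text{Tr}[H''\sigma]$ into a bounded-arity classical CSP on the high-degree set $\mathcal H$ and solve that CSP exactly (up to net discretization) by dynamic programming over a tree decomposition of width $\mathrm{poly}(1/\epsilon)$. First I would fix an $\epsilon$-net $\mathcal N$ of single-qubit states with $|\mathcal N|=\mathrm{poly}(1/\epsilon)$ and restrict each high-degree marginal $\sigma_u$ ($u\in\mathcal H$) to $\mathcal N$; since $u$ participates in $\deg(u)$ interactions of strength $\leq J$, snapping the true $D_2$-minimizer's high-degree marginals to their nearest net points changes $\text{Tr}[H''\cdot]$ by at most $\sum_{u\in\mathcal H}\epsilon\deg(u)J\leq 2\epsilon m J$, and because $H''$ has no cluster--cluster terms, for any fixed net assignment $a:\mathcal H\to\mathcal N$ each cluster's state may be re-optimized independently by exact diagonalization of its (at most $2^{r}$-dimensional) effective Hamiltonian, obtained by plugging $a$ into the cluster--$\mathcal H$ terms. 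This defines a cost table $g_i$ for cluster $C_i$ as a function of $a|_{N(C_i)\cap\mathcal H}$; since every vertex of $\mathcal L$ has degree $\leq\Delta$ and $H''$ retains no inter-cluster edges, $|N(C_i)\cap\mathcal H|\leq |C_i|\Delta\leq r\Delta=O(\epsilon^{-8})$, so each table has $\leq|\mathcal N|^{r\Delta}=2^{O(\epsilon^{-8}\log 1/\epsilon)}$ entries and all tables are precomputed in time $n\cdot 2^{O(\epsilon^{-8}\log 1/\epsilon)}$. Together with the classical pair costs $f_{uv}(a(u),a(v))$ from the edges of $G[\mathcal H]$, the problem becomes: minimize $\sum_{uv\in E(G[\mathcal H])}f_{uv}+\sum_i g_i$ over $a:\mathcal H\to\mathcal N$, a classical Max-CSP with domain $\mathrm{poly}(1/\epsilon)$ and hyperedge arity $\leq r\Delta$.

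Second I would control the tree-width of this CSP's interaction hypergraph. Setting things up as in the construction above (in particular first discarding, via Theorem \ref{hminors} with $k=O(1/\epsilon)$, the $O(\epsilon m)$ edges that make the working graph have tree-width $O(1/\epsilon)$ — an extra loss already inside the $O(\epsilon m J)$ budget), the graph $\bar G$ of $H''$'s retained interactions arises from that graph by edge deletions, so $\mathrm{tw}(\bar G)=O(1/\epsilon)$ by Lemma \ref{treeprops}. Contracting each (WLOG connected) cluster $C_i$ to a vertex $c_i$ preserves tree-width and gives $\deg(c_i)\leq r\Delta$; the CSP primal graph $P$ is $\bar G[\mathcal H]$ with a clique added on $N(c_i)$ for each $i$. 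Starting from a width-$O(1/\epsilon)$ tree decomposition of the contracted graph (Lemma \ref{treefind}), for each $c_i$ I would pull all of $N(c_i)$ into a single bag of the subtree of bags containing $c_i$, by the usual path augmentation, which inflates each bag by $\leq r\Delta$ per cluster representative it holds, hence by $O(1/\epsilon)\cdot r\Delta=O(\epsilon^{-9})$ overall; deleting the $c_i$'s leaves a tree decomposition of $P$ of width $O(\epsilon^{-9})$ in which every $f_{uv}$-edge and every clique $N(c_i)$ lies inside a bag. Standard leaves-to-root CSP dynamic programming over this decomposition — tables over $|\mathcal N|^{O(\epsilon^{-9})}$ partial assignments, charging each $f_{uv}$ and each $g_i$ to a bag containing its scope — then computes the exact optimum of the net-restricted problem in time $n\cdot 2^{O(\epsilon^{-9}\log 1/\epsilon)}$, and backtracking recovers $a^{*}$ and the state $\sigma=\bigotimes_{u\in\mathcal H}a^{*}(u)\otimes\bigotimes_i\sigma^{*}_{C_i}$. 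With the $O(\epsilon m J)$ net error and a rescaling of $\epsilon$, this is the claim, and the total runtime is $\text{poly}(n)+n\cdot 2^{O(\epsilon^{-9}\log 1/\epsilon)}$.

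The step I expect to be the main obstacle is the tree-width bookkeeping in the second paragraph: one must verify that absorbing each cluster's high-degree neighborhood into the tree decomposition $(i)$ inflates the width only polynomially in $1/\epsilon$ rather than multiplicatively across overlapping subtrees — this uses $\deg(c_i)\leq r\Delta$ together with the fact that a width-$O(1/\epsilon)$ bag contains $O(1/\epsilon)$ cluster representatives — and $(ii)$ still yields a valid tree decomposition in which every CSP hyperedge scope is realized in some bag, so the DP is well-defined (here the clique-in-a-bag property of tree decompositions does the work). The remaining pieces — the net perturbation estimate, exactness of per-cluster diagonalization given frozen high-degree neighbors (valid precisely because $H''$ drops cluster--cluster interactions), and the generic tree-decomposition dynamic program — are routine.
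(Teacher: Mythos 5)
Your proof is correct and follows essentially the same route as the paper's: restrict the high-degree marginals to an $\epsilon$-net, observe that each low-degree cluster's optimal state is determined once its (at most $r\Delta$) high-degree neighbors are fixed, absorb each cluster's high-degree neighborhood into the bags of a width-$O(1/\epsilon)$ tree decomposition of the contracted graph to obtain width $O(\epsilon^{-1})\cdot r\Delta = O(\epsilon^{-9})$, and run standard tree-decomposition dynamic programming over the resulting bounded-arity classical CSP. The only cosmetic difference is that you precompute the cluster cost tables $g_i$ and then run a generic CSP dynamic program, whereas the paper bakes the cluster diagonalization directly into the recursive subproblem $E(B,\psi_B)$; these are equivalent reformulations. (You are also a bit more explicit about the $\epsilon$-net perturbation error than the paper, which simply absorbs it into the $\epsilon m J$ slack.)
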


\begin{proof}
Let us consider the Hamiltonian $H''$ and the graph $G''$ it is defined on. Recall that $G''$ can be configured into a set of high degree vertices $\mathcal{H}$, and a set of disjoint clusters $C_1\cdots C_c$ of low degree vertices, where there are no interactions between clusters. For convenience, consider contracting each of these clusters into supernodes, each of local dimension $2^{|C_i|} = 2^{O(r)}$, each connected only to high-degree vertices. We note that given a tree decomposition of $G''$, we can construct another decomposition $T$ of the supernode graph in linear time, where the tree-width is still bounded by $O(\epsilon^{-1})$ by Lemma \ref{treeprops}. WLOG, by Lemma \ref{treebin} we can assume $T$ is a binary tree of size $O(n)$. 

As discussed by \cite{Bansal2009ClassicalAS} and \cite{Brando2013ProductstateAT}, we could construct a $\epsilon$-net over each of the clusters and particles defined in the decomposition. This would reduce the problem to that of a classical Hamiltonian, where we could perform the known dynamic programming algorithm on graphs of low tree-width to minimize the energy. The central caveat in this approach is that the number possible values for the spin of each cluster is doubly exponential in the size of the cluster, $2^{O(2^r \log 1/\epsilon)} = 2^{2^{\text{poly}(1/\epsilon)}}$, as the dimension of the Hilbert space of each cluster is already exponential in $r =\text{poly}(1/\epsilon)$, and the dimension of the net is exponential in the dimension of the Hilbert space. We emphasize that this is already an exponential improvement over the original construction of \cite{Brando2013ProductstateAT}, but we can further improve the runtime by refining the dynamic programming algorithm.

Consider augmenting the tree-decomposition $T$ as follows. For each bag $B\in T$, if there is a cluster-node $C_i\in B$, then we augment the bag $B$ by adding to it the entire neighborhood of the cluster $C_i$. We note that we are only adding high degree
vertices to the bag, as $C_i$ is only connected to high degree vertices in $G''$. By adding these neighborhoods, we argue that the resulting tree is still a valid tree decomposition, but now at a higher tree-width. It follows by carefully inspecting property 3 in the definition \ref{defdecomp} that the decomposition remains a valid tree-decomposition. More importantly, we note that we can bound the new width of the tree-decomposition by the sizes of the added neighborhoods. Since each cluster has size $\leq r$, each of degree $\leq \Delta$, the neighborhood of high degree vertices of each cluster must be bounded by $\leq r\cdot \Delta$. Since there are at most tree-width $t = O(k)= O(\epsilon^{-1})$ cluster-nodes in each bag, we have added at most $O(kr\Delta) = O(\epsilon^{-9})$ high degree vertices to each bag. This is the new tree-width of the augmented tree decomposition.

Let us now consider constructing an $\epsilon$-net over each high degree vertex in $\mathcal{H}$. Consider a fixed cluster $C_i$, and fix a configuration $\psi_{u_1}\cdots \psi_{u_{N(C_i)}}$ of its $O(r\Delta) = O(\epsilon^{-8})$ high degree vertex neighbors $u_1\cdots u_{|N(C_i)|}$, where here $|N(C_i)|$ denotes the size of the neighborhood. The key point in why the augmented decomposition is useful, is the fact that once the configuration of the neighbors $N(C_i)$ is fixed, the density matrix of the cluster $\sigma_{C_i}$ that minimizes the energy is well defined:

\begin{equation}
    \sigma_{C_i} = \text{argmin}_{\rho} \sum_{(u, v)\in E: u, v\in C_i} \text{Tr}_{u, v}\bigg[H_{uv}\rho_{uv}\bigg] + \sum_{(u, v)\in E: u\in C_i, v\in N(C_i)} \text{Tr}_{ u, v}\bigg[H_{uv} \rho_u \otimes \psi_{v}\bigg]
\end{equation}

Simply the ground state of an effective Hamiltonian. We emphasize that in this setting one can treat this interaction between the cluster $C_i$ and its high-degree neighborhood $N(C_i)$ as an effective $|N(C_i)|$-local classical interaction between the neighbors, as the minimum energy of the interactions with and within $C_i$ only depends on the spins of the neighborhood. Crucially, each of these interactions is covered in at least a bag via the augmentation. 

Now let us formulate the dynamic programming algorithm. We express the subproblems as follows. Consider a given bag $B$ in the tree decomposition $T$, let $u_1\cdots u_{|B|}$ be the high-degree vertices in the bag, and let $V_B$ be all the vertices of $G$ contained in the subtree of $T$ rooted at $B$. Our subproblem $E(B, \psi_{u_1}\cdots \psi_{u_{|B|}} )$ corresponds to the minimum energy of the classical hamiltonian $H_{V_B}$ defined by all the interactions fully contained in $V_B$, among states in the clustered product  convex set $D_2$, conditioned on fixing the states of the high degree vertices $u_1\cdots u_{|B|}$ in $B$ to be $\psi_{u_1}\cdots \psi_{u_{|B|}}$ (each a spin chosen from the $\epsilon$-net). We note that there are a total of $O(n)\cdot (1/\epsilon)^{O(|B|)} = O(n)\cdot 2^{\text{poly}(1/\epsilon)}$ such subproblems.

The base case of the dp is quite straightforward. It suffices to compute the minimum energy of $H_{V_B}$ for each configuration of high degree vertices in the bags $B$ that are leaves of the tree. Let us now recursively relate the subproblems defined on a bag $B$ with that of the at most two children $B_1, B_2$ of $B$ in $T$. We divide into cases on the interactions within $V_B$. Let $H_U$ be the Hamiltonian defined by interactions within $B$, that are not fully supported in neither $V_{B_1}$ or $V_{B_2}$. Let $H_{12}$ be the interactions fully contained within both $V_{B_1}$ and $V_{B_2}$, which by property 3 of the definition \ref{defdecomp}, we are guaranteed to be fully contained in $B$. Via inclusion-exclusion, we can express the classical Hamiltonian corresponding to the interactions within $V_B$ as $H_{V_B} = H_{V_{B_1}}+H_{V_{B_2}}+ H_U - H_{1, 2}$. 

Fix a configuration of the spins of $B$, $\psi_B = \psi_{u_1}\cdots \psi_{u_{|B|}}$. Let $B\cap B_i$ be the vertices that are both in $B$ and $B$'s $i$th child, $i\in \{1, 2\}$. The spins of this set must be consistent with those chosen in $B$. In this setting, one can express the recursive optimization as

\begin{gather}
    E(B, \psi_B) = \min_{\psi_{B_1\setminus B}, \psi_{B_2\setminus B}}\bigg[ E(B_1, \psi_{B_1\cap B}\otimes \psi_{B_1\setminus B}) + E(B_2, \psi_{B_2\cap B}\otimes \psi_{B_2\setminus B})\bigg] + \\ + \text{Tr}\bigg[\big(H_U-H_{1, 2}\big)\psi_B\bigg]
\end{gather}

For fixed subproblem $(B, \psi_B)$, computing the corresponding energy above involves iterating over the configurations in the $\epsilon$-net of the particles in $ B_1\setminus B$ and $B_2\setminus B$, and thereby takes $(1/\epsilon)^{O(|B|)} = 2^{O(\epsilon^{-9}\log 1/\epsilon)}$ time. Since there are $n\cdot 2^{O(\epsilon^{-9}\log 1/\epsilon)}$ subproblems, we can compute the entire dynamic programming table in time $n\cdot 2^{O(\epsilon^{-9}\log 1/\epsilon)}$. Finally, we return the minimum of $E(r, \psi_r)$ at the root $r$ of the tree $T$, over the choices $\psi_r$ in the $\epsilon$-net of the high degree vertices in $r$. 
\end{proof}

Claim \ref{dpgs} guarantees that we can find the minimum energy clustered product state $\sigma\in D_2$ that minimizes the energy of $H''$ in polynomial time, and Claim \ref{gssparse} guarantes that it is a good approximation for the ground state energy of $H$. This concludes the proof of the main theorem of this section, Theorem \ref{theorem-sparsegsapprox}.

\section{A Free Energy PTAS on Sparse Graph Classes}
\label{section-sparseFEPTAS}

In this section, we extend the divide-and-conquer and dynamming programming scheme to construct approximations to the free energy of 2-Local Hamiltonians defined on $h$-minor free graphs. We do so in 3 steps. First, we begin in subsection \ref{bounddegFE} by constructing a classical approximation scheme for the free energy of 2-Local Hamiltonians defined on $h$-minor free graphs of bounded degree. Parallel to our algorithms for the ground state energy of Hamiltonians on sparse graphs of bounded degree of section \ref{section-sparsegsPTAS}, we use ideas from Bidimensionality theory to construct Hamiltonians that are clustered in a sense, and simpler to optimize over. Our algorithms for the bounded degree case are randomized, they achieve extensive additive errors, and work in the full regime of temperature. We summarize: 

\begin{theorem} \label{theorem-degfeapprox}
 Let $H = \sum_{e\in E}H_e$ be a 2-Local Hamiltonian on $n$ qubits defined on an $h$-minor free graph $G = (V, E)$, where $|h|=O(1)$, $|E|=m$. Further assume that each particle only directly interacts with $\leq \Delta$ other particles, that is, the graph has bounded degree $\Delta$. Let the maximum interaction strength be $\max_e \|H_e\|_\infty = J$. Then there exists a classical, randomized algorithm that produces a clustered product state $\sigma$ that approximates the free energy of $H$ up to error
 
 \begin{equation}
    F \leq f(\sigma) \leq F + \epsilon m J
 \end{equation}
 
 The algorithm runs in time $n^{O(1)} + n\cdot 2^{O(\Delta/\epsilon^2)}$.
 \end{theorem}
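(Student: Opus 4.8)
The plan is to mirror the ground state energy construction from section \ref{section-sparsegsPTAS}, but everywhere replace ``minimum energy'' by ``free energy'' (i.e. energy minus entropy over $\beta$), and replace exact diagonalization of each cluster by exact computation of its local Gibbs state. First I would fix the parameters exactly as in Theorem \ref{boundeddegreegsapproxtheorem}: apply Theorem \ref{hminors} with $k = O(\epsilon^{-1})$ to remove $O(\epsilon m)$ edges so the graph decomposes into components of tree-width $O(\epsilon^{-1})$, then apply the Recursive Vertex Separator Lemma \ref{recvertexsep} with cluster size $r = O(\Delta/\epsilon^2)$, removing $O(\Delta n/(r\epsilon)) = O(\epsilon m)$ further edges (using the bounded degree $\Delta$). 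Call the resulting Hamiltonian $H'$, a sum over the surviving interactions, which now decomposes as $H' = \sum_C H'_C$ over $O(n)$ mutually disconnected clusters $C$, each on $\leq r$ qubits, with $\|H - H'\|_\infty = O(\epsilon m J)$.

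Next I would control the free energy error under this truncation. The variational characterization $F = \min_\rho \mathrm{Tr}[H\rho] - S(\rho)/\beta$ gives, for the Gibbs state $\rho_{H}$ of $H$ and the Gibbs state $\rho_{H'} = \otimes_C \rho_{H'_C}$ of $H'$ (which is automatically a clustered product state since $H'$ has no inter-cluster interactions), the two-sided bound
\begin{equation}
 F(H') - \|H-H'\|_\infty \;\leq\; F(H) \;\leq\; F(H') + \|H-H'\|_\infty,
\end{equation}
by plugging $\rho_{H'}$ into the variational program for $H$ and $\rho_H$ into that for $H'$ and using $|\mathrm{Tr}[(H-H')\rho]| \leq \|H-H'\|_\infty$ for any normalized $\rho$. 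Hence $|F(H) - F(H')| = O(\epsilon m J)$. Since $H'$ splits over clusters, $F(H') = \sum_C f(\rho_{H'_C})$, and the state $\sigma = \otimes_C \rho_{H'_C}$ achieves $f(\sigma) = F(H')$ exactly; combining gives $F \leq f(\sigma) \leq F + O(\epsilon m J)$, which is the claimed guarantee after rescaling $\epsilon$.

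For the runtime: the graph decomposition steps cost $n^{O(1)} + n\cdot 2^{O(\epsilon^{-1})}$ exactly as in the ground state case (Theorems \ref{hminors}, \ref{treefind}, Lemma \ref{recvertexsep}). Computing each cluster's Gibbs state $\rho_{H'_C} \propto e^{-\beta H'_C}$ requires exponentiating a $2^{|C|}\times 2^{|C|} = 2^{O(r)}$ matrix, which takes $2^{O(r)} = 2^{O(\Delta/\epsilon^2)}$ time, and there are $O(n)$ clusters, for a total of $n\cdot 2^{O(\Delta/\epsilon^2)}$. This dominates and yields the stated $n^{O(1)} + n\cdot 2^{O(\Delta/\epsilon^2)}$ bound. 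The only mild subtlety — and the step I would be most careful about — is that the free energy, unlike the ground state energy, is not directly governed by Weyl's inequality on eigenvalues; one must instead argue through the variational program, as above, to transfer the $\|H-H'\|_\infty$ bound to $|F(H)-F(H')|$. This is where the randomization in the theorem statement is irrelevant for the bounded degree case (everything here is deterministic); I suspect the ``randomized'' qualifier is inherited from the $k$-local or high-degree variants and could in fact be dropped here, but I would keep the statement as phrased and simply note the algorithm happens to be deterministic.
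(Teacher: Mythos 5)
Your proposal is correct and follows essentially the same route as the paper: the identical decomposition (Theorem \ref{hminors} followed by Lemma \ref{recvertexsep} with $r = O(\Delta/\epsilon^2)$), the observation that the Gibbs state of the decoupled $H'$ factorizes over clusters (Remark \ref{remark-gibbsproduct}), exact computation of each cluster's Gibbs state in $2^{O(r)}$ time, and a variational argument to transfer the operator-norm bound $\|H-H'\|_\infty = O(\epsilon m J)$ to the free energy — which is precisely the content of the paper's Lemma \ref{lemma-closefe}. One small imprecision worth flagging: you write that $\sigma = \bigotimes_C \rho_{H'_C}$ ``achieves $f(\sigma) = F(H')$ exactly,'' but $f$ in the theorem is the free-energy objective with respect to $H$, so $f(\sigma) = F(H') + \mathrm{Tr}[(H-H')\sigma]$ rather than $F(H')$; the extra term is bounded by $\|H-H'\|_\infty$ and so your conclusion is unaffected (it matches the paper's Lemma \ref{lemma-closefe}, which bounds $f(\rho') \leq F + 2\|H-H'\|_\infty$), but the equality as stated conflates $f_H$ with $f_{H'}$. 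Your observation that the algorithm is in fact deterministic in the bounded-degree case, and that the ``randomized'' qualifier is inherited from the other variants, is also correct.
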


We note that in the bounded degree case, both the quality of the approximation and the runtime are temperature independent.

To lift our approximation schemes to the arbitrary degree case, we first have to argue that separable and product states actually serve as good approximations to the free energy. To do so, in subsection \ref{clusteredapproxfe} we use an application of the entropy non-decreasing Theorem \ref{theorem-entropy} to argue that the Brandao-Harrow states produced by the high-low degree technique (Lemma \ref{highlow}) do also serve as good approximations to the free energy. 

 In the ensuing subsection \ref{findingclusteredapproxfe}, we construct a dynamic programming algorithm that approximately finds the minimum of the free energy among these clustered product states. As our approach hinges on a discretization of the Hilbert space of high degree vertices, we incur a small thermal error which limits our algorithms to the low temperature regime. To summarize, we prove:
 
 \begin{theorem} \label{theorem-feptas}
Let $H = \sum_{e\in E}H_e$ be a 2-Local Hamiltonian on $n$ qubits defined on an $h$-minor free graph $G = (V, E)$, where $|h|=O(1)$, $|E|=m$. Let the maximum interaction strength be $\max_e \|H_e\|_\infty = J$. Then there exists a classical, randomized algorithm that produces a clustered product state $\sigma$ that approximates the free energy of $H$ up to error
 
 \begin{equation}
    F \leq f(\sigma) \leq F + \epsilon m J 
 \end{equation}
 
 The algorithm runs in time $n^{O(1)} + n\cdot \max(2, \frac{1}{\beta J})^{\tilde{O}(\epsilon^{-9})}$ .
 
 \end{theorem}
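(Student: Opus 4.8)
The strategy mirrors the three-step architecture of the ground-state algorithm (Theorem~\ref{theorem-sparsegsapprox})—high–low degree entanglement breaking, a minor-theoretic cluster decomposition, and a dynamic program on an augmented tree decomposition—but carries the entropy term through each step. Throughout I would absorb the temperature into the Hamiltonian as in Section~\ref{subsection-fe} ($H \leftarrow \beta H$, $F\leftarrow \beta F$), so that ``close in operator norm'' immediately gives ``close in free energy'': for any fixed state $\rho$, $|f_H(\rho)-f_{H'}(\rho)| = |\mathrm{Tr}[(H-H')\rho]| \le \|H-H'\|_\infty$, hence $|F_H-F_{H'}| \le \|H-H'\|_\infty$. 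The output will be a clustered product state, and the whole argument is temperature-independent \emph{except} for the net-discretization step, which is where the $\max(2,(\beta J)^{-1})$ factor enters.

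\textbf{Step 1: free energy via high–low degree entanglement breaking.} I would prove a free-energy analogue of Lemma~\ref{highlow}. Apply the high–low degree mapping to the Gibbs state $\rho\propto e^{-\beta H}$. Lemma~\ref{highlow} already shows the resulting separable ensemble $\sigma = \mathbb{E}_{x}\,\sigma_{\mathcal{L}}^{(x)}\bigotimes_{u\in\mathcal{H}}\sigma_u^{(x)}$ has energy within $\delta_{l,\Delta}$ of $\rho$. I would then re-run the proof of Theorem~\ref{theorem-entropy} for this \emph{asymmetric} map—use the chain rule to decouple the high-degree vertices (conditioning never increases entropy), keep the low-degree block $\mathcal{L}\setminus C$ grouped, and apply data processing to the measured set $C$ and to each conditional entropy—to conclude $S(\rho)\le S(\eta_{C,b})$. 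Combining, $f(\sigma)\le F+\delta_{l,\Delta}$. A Claim~\ref{claim-festructure}-style decomposition of $f(\eta_{C,b})$ into an average over its product-over-high-degree constituents plus $-S(C)_{\eta_{C,b}}$, followed by the averaging-and-maximally-mix-the-measured-set argument of Theorem~\ref{theorem-feproduct}, extracts a single state $\sigma=\sigma_{\mathcal{L}}\bigotimes_{u\in\mathcal{H}}\sigma_u$, product over $\mathcal{H}$, with $f(\sigma)\le F+O(\delta_{l,\Delta})$. Choosing $\Delta=O(\epsilon^{-3})$ and $l=O(n\Delta^{-1/3})$ makes this $F+O(\epsilon mJ)$.

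\textbf{Step 2: cluster decomposition and the dynamic program.} Restrict to the bounded-degree subgraph $G_{\mathcal{L}}$ and, exactly as in Section~\ref{section-sparsegsPTAS}, apply Theorem~\ref{hminors} with $k=O(\epsilon^{-1})$ and the recursive separator Lemma~\ref{recvertexsep} with $r=O(\epsilon^{-5})$ to partition $\mathcal{L}$ into $O(n)$ clusters of size $\le r$ by deleting $O(\epsilon m)$ edges; let $H''$ retain all interactions except cluster–cluster ones, so $\|H-H''\|_\infty=O(\epsilon mJ)$ and $|F_H-F_{H''}|=O(\epsilon mJ)$. Marginalizing $\sigma_{\mathcal L}$ within each cluster leaves the $H''$-energy unchanged (no cluster–cluster terms) and only raises the entropy (subadditivity), so $\min f_{H''}$ over clustered product states $\bigotimes_i\sigma_{C_i}\bigotimes_{u\in\mathcal{H}}\sigma_u$ is still $\le F_H+O(\epsilon mJ)$—the free-energy version of Claim~\ref{gssparse}. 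For such states $f_{H''}$ splits as $\sum_{u\in\mathcal{H}}(\text{local terms}-S(\sigma_u)/\beta)+\sum_i(\mathrm{Tr}[H_{C_i\cup N(C_i)}(\sigma_{C_i}\otimes\,\cdot\,)]-S(\sigma_{C_i})/\beta)$, and once the (net-discretized) states of $N(C_i)$ are fixed, the optimal $\sigma_{C_i}$ is simply the Gibbs state of the effective $r$-qubit Hamiltonian, computable in $2^{\mathrm{poly}(1/\epsilon)}$ time; each cluster thus becomes an effective classical $|N(C_i)|$-local term on $\mathcal{H}$. Then run the tree-decomposition DP of Claim~\ref{dpgs} on the augmented decomposition (each bag with a cluster node also holds its neighborhood, width $O(kr\Delta)=O(\epsilon^{-9})$), with subproblems indexed by the net configuration of the bag's high-degree vertices; there are $n\cdot(\text{net size})^{O(\epsilon^{-9})}$ of them, each solved in comparable time. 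The bounded-degree special case (Theorem~\ref{theorem-degfeapprox}) is the same run with $\mathcal{H}=\emptyset$ and no net at all, giving the temperature-free bound.

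\textbf{The main obstacle.} The only genuinely new analytic point, and where the runtime's $\max(2,(\beta J)^{-1})$ appears, is controlling the thermal error of the net discretization on the high-degree vertices. Perturbing each high-degree single-qubit density matrix by trace distance $\eta$ changes the energy by $O(\eta mJ)$ and, by the Fannes--Audenaert inequality, the entropy by $O(\eta\log(1/\eta))$ per qubit, hence the free energy by $O(\eta mJ+n\eta\log(1/\eta)/\beta)$. With $m=\Theta(n)$ this forces $\eta=\Theta\!\big(\min(\epsilon,\ \epsilon\beta J/\log(1/(\epsilon\beta J)))\big)$, so the net has $(1/\eta)^{O(1)}=\max(2,(\beta J)^{-1})^{O(1)}\cdot\mathrm{poly}(1/\epsilon)$ points and $(\text{net size})^{O(\epsilon^{-9})}=\max(2,(\beta J)^{-1})^{\tilde O(\epsilon^{-9})}$, yielding the claimed time $n^{O(1)}+n\cdot\max(2,(\beta J)^{-1})^{\tilde O(\epsilon^{-9})}$. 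A secondary care point is making the asymmetric entropy-monotonicity argument of Step~1 airtight, since the high–low map keeps the low-degree block entangled rather than fully breaking it as in Theorem~\ref{theorem-entropy}; I expect the chain-rule plus data-processing template to go through, but the bookkeeping on which marginals are preserved needs to be checked against the definition of the map in Lemma~\ref{highlow}.
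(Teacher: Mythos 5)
Your proposal matches the paper's own proof essentially step for step: it builds on the free-energy analogue of the high--low degree technique (Lemma~\ref{highlowfe}) via the entropy-nondecreasing argument of Theorem~\ref{theorem-entropy}, then constructs $H''$ and the clustered product states via Claim~\ref{claim-feapproxexist}, and finally runs the augmented tree-decomposition dynamic program with a $\delta$-net over high-degree vertices, using Fannes--Audenaert to control the thermal discretization error and obtain the $\max(2, 1/\beta J)^{\tilde O(\epsilon^{-9})}$ runtime. The only cosmetic difference is in how you parametrize the net spacing (you solve $\eta\log(1/\eta)\lesssim\epsilon\beta J$ directly, while the paper bounds $\delta\log(1/\delta)\le\sqrt{\delta}$ and sets $\delta=\epsilon\min(1,\Omega(\beta^2J^2))$), which yields the same order of runtime.
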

 
 \begin{remark}
We emphasize that the algorithm of the theorem above approximates the free energy of $H$ up to constant error $\epsilon$ in the low temperature regime in polynomial time $n^{\text{poly}(1/\epsilon)}$, that is, for any polynomial temperature $T \leq O(Jn^c)$ for any integer $c$.
\end{remark}

 \subsection{Bounded Degree h-Minor Free Graphs} \label{bounddegFE}
 
 In this subsection we provide an algorithm for approximating the free energy of Hamiltonians defined on $h$-minor free graphs of bounded degree. We summarize in the theorem:
 
 \begin{theorem} [Theorem \ref{theorem-degfeapprox}, restatement]
 Let $H = \sum_{e\in E}H_e$ be a 2-Local Hamiltonian on $n$ qubits defined on an $h$-minor free graph $G = (V, E)$, where $|h|=O(1)$, $|E|=m$. Further assume that each particle only directly interacts with $\leq \Delta$ other particles, that is, the graph has bounded degree $\Delta$. Let the maximum interaction strength be $\max_e \|H_e\|_\infty = J$. Then there exists a classical, randomized algorithm that produces a clustered product state $\sigma$ that approximates the free energy of $H$ up to error
 
 \begin{equation}
    F \leq f(\sigma) \leq F + \epsilon m J
 \end{equation}
 
 The algorithm runs in time $n^{O(1)} + n\cdot 2^{O(\Delta/\epsilon^2)}$.
 \end{theorem}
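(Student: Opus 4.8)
The plan is to mirror the bounded-degree ground-state algorithm of Theorem~\ref{boundeddegreegsapproxtheorem}, replacing ``exact diagonalization to find the ground energy of each disconnected cluster'' by ``exact diagonalization to compute the log-partition function and the Gibbs state of each cluster''. Two elementary facts make this go through. First, the free energy is $1$-Lipschitz in the Hamiltonian: for any state $\rho$ one has $|f_H(\rho) - f_{H'}(\rho)| = |\text{Tr}[(H-H')\rho]| \le \|H-H'\|_\infty$, hence $|F(H) - F(H')| \le \|H-H'\|_\infty$, the free-energy analogue of the Weyl inequality used in the ground-state case. Second, if a Hamiltonian $H' = \sum_i H'_{C_i}$ decomposes over pairwise-disjoint vertex clusters $C_i$, then $e^{-\beta H'} = \bigotimes_i e^{-\beta H'_{C_i}}$, so its Gibbs state is the clustered product state $\sigma = \bigotimes_i \sigma_{C_i}$ with $\sigma_{C_i} \propto e^{-\beta H'_{C_i}}$, and $f_{H'}(\sigma) = F(H') = \sum_i F(H'_{C_i})$ with $F(H'_{C_i}) = -\tfrac1\beta \log \mathrm{Tr}[e^{-\beta H'_{C_i}}]$.

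First I would perform exactly the graph surgery of Theorem~\ref{boundeddegreegsapproxtheorem}: apply Theorem~\ref{hminors} with $k = O(1/\epsilon)$ to delete a set of $O(\epsilon m)$ edges so that each remaining component has tree-width $O(1/\epsilon)$, find such a decomposition via Lemma~\ref{treefind}, and then apply the Recursive Vertex Separator Lemma~\ref{recvertexsep} to delete a set $RV$ of $O(n/(r\epsilon))$ vertices, partitioning the graph into $O(n)$ pairwise-disconnected clusters $C_1,\dots,C_c$ each of size $\le r$. Because the graph has degree $\le \Delta$, deleting $RV$ removes at most $\Delta|RV| = O(\Delta n/(r\epsilon))$ further edges; choosing $r = O(\Delta/\epsilon^2)$ (and using that an $h$-minor-free graph is sparse, $m = \Theta(n)$) makes the total removed weight $\|H - H'\|_\infty \le O(\epsilon m J)$, where $H' = \sum_{i} H'_{C_i}$ keeps only the interactions internal to a cluster and the vertices of $RV$ become singleton clusters carrying no interaction. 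Now take $\sigma = \bigotimes_i \sigma_{C_i}$ to be the product of the cluster Gibbs states. Then $F = F(H) \le f_H(\sigma)$ by the variational principle, while
\begin{equation}
f_H(\sigma) \le f_{H'}(\sigma) + \|H-H'\|_\infty = F(H') + \|H-H'\|_\infty \le F(H) + 2\|H-H'\|_\infty \le F + O(\epsilon m J),
\end{equation}
since $F(H') \le f_{H'}(\rho^\ast_H) \le f_H(\rho^\ast_H) + \|H-H'\|_\infty = F(H) + \|H-H'\|_\infty$ for $\rho^\ast_H$ the Gibbs state of $H$; after rescaling $\epsilon$ this is the claimed two-sided bound.

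For the algorithm: the graph surgery (Theorem~\ref{hminors}, Lemma~\ref{treefind}, Lemma~\ref{recvertexsep}) runs in $n^{O(1)} + n\cdot 2^{O(1/\epsilon)}$ time; then for each of the $O(n)$ clusters, diagonalizing $H'_{C_i}$ on $|C_i| \le r = O(\Delta/\epsilon^2)$ qubits takes $2^{O(r)} = 2^{O(\Delta/\epsilon^2)}$ time and yields both the eigenvalues (hence $F(H'_{C_i})$, the estimate $\sum_i F(H'_{C_i})$, which is within $O(\epsilon m J)$ of $F$) and the density matrix $\sigma_{C_i}$; total $n^{O(1)} + n\cdot 2^{O(\Delta/\epsilon^2)}$, temperature-independent in both runtime and accuracy as claimed. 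The only content beyond Theorem~\ref{boundeddegreegsapproxtheorem} is the pair of facts in the first paragraph --- Lipschitz continuity of $F$ under Hamiltonian perturbations and exact additivity of $F$ over disconnected components --- and I do not expect a genuine obstacle here; the points to handle carefully are purely bookkeeping: that the isolated vertices of $RV$ (maximally mixed in $\sigma$) contribute zero energy and keep $\sigma$ a legitimate clustered product state, and the sparsity estimate $m = \Theta(n)$ needed to control the number of deleted edges.
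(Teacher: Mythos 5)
Your proposal is correct and follows essentially the same route as the paper: the same graph surgery (Theorem~\ref{hminors}, Lemma~\ref{treefind}, Lemma~\ref{recvertexsep}) to produce the clustered Hamiltonian $H'$, the same observation that the Gibbs state of $H'$ factorizes over clusters (the paper's Remark~\ref{remark-gibbsproduct}), and the same two-step Lipschitz argument $F(H) \le f_H(\sigma) \le F(H') + \|H-H'\|_\infty \le F(H) + 2\|H-H'\|_\infty$, which is precisely the paper's Lemma~\ref{lemma-closefe}. The only thing you flag that the paper glosses over is the need for $m = \Theta(n)$ (or equivalently, dropping isolated vertices) to make the $O(n\epsilon)$ edges cut by the vertex separators count as $O(\epsilon m)$; this matches the $m = \Theta(n)$ hypothesis stated in the paper's summary of results.
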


Our approach to the bounded degree case follows that of the ground state approximation scheme of Theorem \ref{boundeddegreegsapproxtheorem}. We use ideas from Bidimensionality theory to decompose $G$ into disconnected clusters, that can be optimized independently. To recall the construction, we first apply Theorem $\ref{hminors}$ from the work of \cite{Demaine2005AlgorithmicGM} to decompose $G$ into components of tree-width $O(\epsilon^{-1})$, by the removal of $O(\epsilon m)$ edges. Next, we apply the recursive vertex separator Lemma \ref{recvertexsep} to further decompose the graph into disjoint components $C_1\cdots C_c$ of size $\leq r = O(\Delta/\epsilon^2)$, by removing another $O(\epsilon m)$ edges. We note that the number of components is $c=O(n)$. Let the resulting Hamiltonian defined over the interactions that weren't deleted be $H'$. We observe $\|H-H'\|_\infty \leq \epsilon J m$. 

\begin{remark} \label{remark-gibbsproduct}
The Gibbs state of $H'$ is a product state over the clusters, that is,

\begin{equation}
    e^{-\beta H'}/Z = \bigotimes_{i\in [c]} e^{-\beta H_{C_i}}/Z_{C_i},
\end{equation}

\noindent since the support of the Hamiltonians $H_{C_i}$ within each cluster $C_i$ are disjoint, and thereby commute.
\end{remark}

A simple but crucial technical lemma enables us to approximate the free energy of $H$ by using the Gibbs state of the simpler Hamiltonian $H'$, which is close to $H$.

\begin{lemma} \label{lemma-closefe}
Let $H$ and $H'$ be two 2-Local Hamiltonians, and $\rho, \rho'$ their corresponding Gibbs states, and $f(\cdot), f'(\cdot)$ their corresponding variational free energies. Then

\begin{equation}
 f(\rho)\leq  f(\rho') \leq f(\rho) + 2\cdot \|H-H'\|_\infty 
\end{equation}
\end{lemma}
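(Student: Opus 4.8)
The plan is to invoke the variational characterization of the free energy for both $H$ and $H'$ and to control the resulting cross terms with H\"older's inequality. Throughout, write $f(\gamma) = \mathrm{Tr}[H\gamma] - S(\gamma)/\beta$ and $f'(\gamma) = \mathrm{Tr}[H'\gamma] - S(\gamma)/\beta$ for the two free-energy functionals, so that by definition $\rho = \argmin_{\gamma} f(\gamma)$ and $\rho' = \argmin_{\gamma} f'(\gamma)$, the respective Gibbs states.

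The lower bound $f(\rho)\leq f(\rho')$ is immediate from the variational principle: $\rho$ minimizes $f$ over all density matrices, and $\rho'$ is one such density matrix. For the upper bound, I would start from the identity $f(\rho') = \mathrm{Tr}[(H-H')\rho'] + f'(\rho')$, obtained by adding and subtracting $\mathrm{Tr}[H'\rho']$. Since $\rho'$ minimizes $f'$, we have $f'(\rho')\leq f'(\rho)$, and expanding $f'(\rho) = \mathrm{Tr}[(H'-H)\rho] + f(\rho)$ gives
\[
  f(\rho') \;\leq\; \mathrm{Tr}[(H-H')\rho'] + \mathrm{Tr}[(H'-H)\rho] + f(\rho).
\]
Finally, for any density matrix $\gamma$, H\"older's inequality yields $|\mathrm{Tr}[(H-H')\gamma]| \leq \|H-H'\|_\infty \|\gamma\|_1 = \|H-H'\|_\infty$, so each of the two trace terms contributes at most $\|H-H'\|_\infty$, producing the claimed slack of $2\|H-H'\|_\infty$.

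I do not expect a genuine obstacle: the statement is essentially a stability estimate for a convex optimum under a perturbation of the linear part of the objective. The only point requiring care is the bookkeeping of the temperature factor — depending on whether $\beta$ has been folded into $H$ as in the normalization $H\leftarrow \beta H$ used earlier in subsection~\ref{subsection-fe} — but the argument is insensitive to this, since the entropy term $S(\gamma)/\beta$ is identical in $f$ and $f'$ at a common state $\gamma$ and hence cancels in every difference $f-f'$ that appears. It is also worth noting that no structural hypothesis on $H, H'$ (2-locality, geometry, bounded degree) enters the proof; the lemma holds for arbitrary Hermitian operators, and in this section it will be applied with $H'$ the ``clustered'' Hamiltonian of Remark~\ref{remark-gibbsproduct}, whose Gibbs state factorizes as a product over the clusters $C_1,\dots,C_c$.
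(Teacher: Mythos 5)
Your proposal is correct and matches the paper's proof exactly: both establish the lower bound by the variational principle, and both prove the upper bound via the three-step chain $f(\rho')\leq f'(\rho')+\|H-H'\|_\infty\leq f'(\rho)+\|H-H'\|_\infty\leq f(\rho)+2\|H-H'\|_\infty$, with each step justified by H\"older and optimality of the Gibbs state. The only difference is presentational — you collect the two trace terms before bounding them, while the paper bounds at each inequality.
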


\begin{proof}
The lower bound on $f(\rho')$ follows from the variational definition of the free energy. In turn, the upper bound satisfies:

\begin{equation}
    f(\rho') \leq f'(\rho') + \|H-H'\|_\infty \leq f'(\rho) + \|H-H'\|_\infty \leq f(\rho) + 2\cdot \|H-H'\|_\infty.
\end{equation}

\end{proof}

We now are in a position to prove our main theorem of this subsection. 

\begin{proof} 

[of Theorem \ref{theorem-degfeapprox}]

We note that given a generic hamiltonian on $r$ qubits, one can explicitly diagonalize the Hamiltonian and find all its eigenvectors and eigenvalues in time $2^{O(r)}$. One can therefore explicitly construct its Gibbs state at a certain temperature in time $2^{O(r)}$. 

In this setting, we describe our algorithm as follows. We first construct the decomposed Hamiltonian $H'$ as detailed above in time $n^{O(1)} + n\cdot 2^{O(\epsilon^{-1})}$. Let $C_1\cdots C_c$ be the clusters in the decomposed graph $G'$. For each cluster, we construct its Gibbs state explicitly by diagonalizing the Hamiltonian $H'_C$, the terms of $H'$ supported on $C$, in total time $O(n)\cdot 2^{O(r)} = n\cdot 2^{O(\Delta/\epsilon^2)}$. Finally, we return the product state $\sigma = \bigotimes_{i\in [c]} \sigma_{C_i}$.

We note that via remark \ref{remark-gibbsproduct}, the state we return is indeed the exact Gibbs state of $H'$. Via Lemma \ref{lemma-closefe}, we are guaranteed that the variational free energy $f(\sigma)$ is indeed close to the actual free energy of $H$. In fact, since $\|H-H'\| = O(\epsilon m J)$, by an appropriate choice of $\epsilon$ we obtain the approximation guarantee. 

\end{proof}

\subsection{Clustered Approximations to the Free Energy}
\label{clusteredapproxfe}

In this subsection we reason that clustered product states (as defined in Section \ref{section-sparsegsPTAS}), and ensembles of clustered product states, are good approximations to the free energy on $h$-minor free graphs. To do so, we first extend the high-low degree technique (Lemma \ref{highlow}) of \cite{Brando2013ProductstateAT} with the entropy non-decreasing theorem to argue that breaking the entanglement among high-degree vertices also provides a good approximation to the free energy on sparse graphs. We follow our approach in Section \ref{section-existence} to argue that in fact one can consider a single a product state over high degree vertices, that approximates the free energy up to twice the error in the high-low degree technique. Then, we reason as we did in section \ref{bounddegFE} by applying ideas from the Bidimensionality theory to construct a simpler, decomposed Hamiltonian that is easier to optimize over. Our main conclusion of this subsection is the idea that one can compute the minimum free energy clustered product state for the simpler Hamiltonian $H'$, and said state will be a good approximation for the true free energy. We formalize and summarize in Claim \ref{claim-feapproxexist}.

To begin, let us recall the setting of the high-low degree technique of \cite{Brando2013ProductstateAT}, presented in Lemma \ref{highlow}. Consider an $h$-minor free graph $G$, and let $\mathcal{H}, \mathcal{L}$ be the sets of vertices of degree $> \Delta$ and $\leq \Delta$ respectively. For any given state $\rho$ defined on qubits configured on a graph $G$, Lemma \ref{highlow} guarantees the existence of a separable state which is a product over both the states of the high-degree particles $\mathcal{H}$ and the measured particles $C$:

\begin{equation}
    \sigma =\mathbb{E}_{C, b} \eta_{C, b} = \mathbb{E}_{C, b} \mathbb{E}_z \psi_{C, b, z}\otimes \sigma_{\mathcal{L}\setminus C}^{(C, b, z)} \bigotimes_{u\in \mathcal{H}\setminus C} \sigma_u^{(C, b,z)}
\end{equation}

We note that the entropy non-decreasing theorem \ref{theorem-entropy} allows us to state that $S(\rho) \leq S(\eta_{C, b})$, for all possible choices of $C, b$. This is since our proof of Theorem \ref{theorem-entropy} does not require us to take marginals of every qubit, as is the structure of the high-low degree states. In fact, this observation enables us to carry the proof technique of theorems \ref{theorem-feseparable} and \ref{theorem-feproduct}, and to argue that a particular product state $\gamma_{C, b} = (\mathbb{I}_C/d^{|C|})\otimes \sigma_{\mathcal{L}\setminus C}^{(C, b, z^*)} \bigotimes_{u\in \mathcal{H}\setminus C} \sigma_u^{(C, b,z^*)}$ provides a good approximation to the free energy:

\begin{lemma} \label{highlowfe}
Let $H$ be a $2$-Local Hamiltonian defined on an $h$-minor free graph $G = (V, E)$. Let $\mathcal{H}, \mathcal{L}$ be the subsets of vertices in $G$ of degree $> \Delta$ and $\leq \Delta$ respectively. Let $C$ be a subset of vertices of size $\leq l$, and $b\in \{1, 2, 3\}^{|C|}$. Then there exists a product state $\gamma_{C, b} = (\mathbb{I}_C/d^{|C|})\otimes \sigma_{\mathcal{L}\setminus C}^{(C, b, z^*)} \bigotimes_{u\in \mathcal{H}\setminus C} \sigma_u^{(C, b,z^*)}$ for a particular choice of $C, b, z^*$, such that

\begin{equation}
    f(\gamma_{C, b}) \geq F \geq f(\gamma_{C, b}) - 2\delta_{l, \Delta}
\end{equation}

where $F$ is the free energy of $H$.
\end{lemma}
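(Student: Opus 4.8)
The plan is to replay the arguments of Theorems \ref{theorem-feseparable} and \ref{theorem-feproduct}, with the entanglement-breaking map of Section \ref{section-existence} replaced by the high-low degree variant of Lemma \ref{highlow}. Concretely, I would take $\rho$ to be the Gibbs state of $H$ and apply Lemma \ref{highlow} to obtain the separable state $\sigma = \mathbb{E}_{C,b}\eta_{C,b}$, which satisfies $|\text{Tr}[H(\rho-\sigma)]|\leq \delta_{l,\Delta}$. Absorbing the temperature into $H$ and $F$ as in Section \ref{subsection-fe}, the variational characterization of the free energy together with the entropy bound $S(\rho)\leq S(\eta_{C,b})$ for all $C,b$ gives
\begin{equation}
  F = \text{Tr}[H\rho] - S(\rho) \geq -\delta_{l,\Delta} + \text{Tr}[H\sigma] - \mathbb{E}_{C,b}S(\eta_{C,b}) = -\delta_{l,\Delta} + \mathbb{E}_{C,b}f(\eta_{C,b}).
\end{equation}
The entropy bound is exactly the point flagged in the remark preceding the statement: the proof of Theorem \ref{theorem-entropy} only invokes the chain rule and data-processing under the measurement channel $\mathcal{N}^C_b$ acting on the measured block $C$, and never marginalizes the low-degree unmeasured particles, so it applies verbatim to the high-low degree states $\eta_{C,b}$ (one uses $\eta_{C,b}^{\mathcal{L}\setminus C} = \rho^{\mathcal{L}\setminus C}$ and $\eta_{C,b}^u = \rho^u$ for $u\in\mathcal{H}\setminus C$).

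Next I would exploit the quantum--classical structure of $\eta_{C,b}$, exactly as in Claim \ref{claim-festructure}: since $\eta_{C,b}$ is diagonal on the measured register $C$ (the $\psi^C_{(b,z)}$ are orthogonal for fixed $b$), its entropy splits as $S(C)_{\eta_{C,b}}$ plus the average over outcomes $z$ of the entropy of the conditional state $\psi^C_{(b,z)}\otimes\sigma^{(C,b,z)}_{\mathcal{L}\setminus C}\bigotimes_{u\in\mathcal{H}\setminus C}\sigma^{(C,b,z)}_u$, whose pure part contributes zero. Hence
\begin{equation}
  f(\eta_{C,b}) = -S(C)_{\eta_{C,b}} + \mathbb{E}_z f\big(\psi^C_{(b,z)}\otimes\sigma^{(C,b,z)}_{\mathcal{L}\setminus C}\textstyle\bigotimes_{u\in\mathcal{H}\setminus C}\sigma^{(C,b,z)}_u\big).
\end{equation}
An averaging argument selects an outcome $z^*=z^*_{C,b}$ that is no worse than this average. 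I then replace the measured block by the maximally mixed state, defining $\gamma_{C,b} = (\mathbb{I}_C/d^{|C|})\otimes\sigma^{(C,b,z^*)}_{\mathcal{L}\setminus C}\bigotimes_{u\in\mathcal{H}\setminus C}\sigma^{(C,b,z^*)}_u$. This only increases the entropy of the $C$-block (from $S(C)_{\eta_{C,b}}\leq|C|\log d$ up to $|C|\log d$) and leaves the rest unchanged, while the energy changes by at most $J_C:=\sum_{e:e\cap C\neq\emptyset}\|H_e\|_\infty$ since only the $C$-marginal was altered; consequently $f(\eta_{C,b})\geq f(\gamma_{C,b}) - J_C$.

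Finally I would average over $C,b$. Since a uniformly random $C$ of size at most $l$ meets any fixed edge with probability at most $2l/n$, linearity of expectation gives $\mathbb{E}_C J_C \leq 2(l/n)\sum_e\|H_e\|_\infty \leq \delta_{l,\Delta}$ (the term $(l/n)Jm$ is dominated by $\delta_{l,\Delta}=O((l/n+\sqrt{n/(l\Delta)})Jm)$ up to the implied constant), so that
\begin{equation}
  \mathbb{E}_{C,b}f(\gamma_{C,b}) \leq \mathbb{E}_{C,b}f(\eta_{C,b}) + \mathbb{E}_C J_C \leq F + \delta_{l,\Delta} + \delta_{l,\Delta} = F + 2\delta_{l,\Delta}.
\end{equation}
A last averaging argument yields a specific triple $(C^*,b^*,z^*)$ with $f(\gamma_{C^*,b^*})\leq F+2\delta_{l,\Delta}$, and the matching inequality $f(\gamma_{C^*,b^*})\geq F$ is immediate from the variational principle. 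The only genuinely delicate step is the transfer of Theorem \ref{theorem-entropy} to the clustered setting (where the low-degree particles remain entangled rather than split into single-qubit marginals); everything else is a careful transcription of the proofs of Theorems \ref{theorem-feseparable}, \ref{theorem-feproduct} and Claim \ref{claim-festructure}, so I expect no serious new obstacle.
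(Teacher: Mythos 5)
Your proposal is correct and follows the same route as the paper, which itself just invokes the proofs of Theorems \ref{theorem-feseparable}, \ref{theorem-feproduct}, and Claim \ref{claim-festructure}; you supply more of the details than the paper does. One small caveat: the transfer of Theorem \ref{theorem-entropy} is not quite \emph{verbatim} — the chain-rule step must decouple $\mathcal{L}\setminus C$ as a single block rather than particle-by-particle, giving $S(\rho) \leq S(C)_\rho + S(\mathcal{L}\setminus C\mid C)_\rho + \sum_{u\in\mathcal{H}\setminus C} S(u\mid C)_\rho$, after which each summand is bounded by its $\eta_{C,b}$ counterpart via data processing and the product structure makes that sum equal $S(\eta_{C,b})$ exactly — but this is the adaptation you already flag as the delicate step, and it goes through.
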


\begin{proof}
We remark that the entropy non-decreasing Theorem \ref{theorem-entropy} enables us follow the same proof of Theorem \ref{theorem-feseparable} and conclude that there exists $C, b$ such that the separable states $\eta_{C, b}$ are $\delta_{l, \Delta}$-additive approximations to the free energy. In particular, one in fact reasons that these states are good approximations in expectation
\begin{equation}
    \mathbb{E}_{C, b} f(\eta_{C, b})\leq F+ \delta_{l, \Delta}
\end{equation}

To prove the existence of a single product state over high-degree vertices that serves as a good approximation to the free energy, we follow the approach of the proof of Theorem \ref{theorem-feproduct}, in carefully constructing states $\gamma_{C, b} = (\mathbb{I}_C/d^{|C|})\otimes \sigma_{\mathcal{L}\setminus C}^{(C, b, z^*)} \bigotimes_{u\in \mathcal{H}\setminus C} \sigma_u^{(C, b,z^*)}$ by replacing the states of set of particles in $C$ with maximally mixed states, and optimally picking measurement outcomes $z^*$ for each $\eta_{C, b}$. Please refer to Theorem \ref{theorem-feproduct} for details. We obtain that the states $\gamma_{C, b}$ are good approximations to the free energy in expectation:

\begin{equation}
    \mathbb{E}_{C, b}f(\gamma_{C, b}) \leq F +2\cdot \delta_{l, \Delta}
\end{equation}

\end{proof}

The next question is whether we can simplify the Hamiltonian to impose more structure on the states of the un-measured low degree vertices, $\sigma_{\mathcal{L}\setminus C}$. Our intention will be to write these states of low degree vertices as a tensor product of its marginal density matrices of clusters of low degree vertices in the graph. To do so, let us now consider the Hamiltonian $H''$ of section \ref{section-sparsegsPTAS}. Let us briefly recall its construction: starting from $H$ defined on $G=(V, E)$, we first use Lemma \ref{hminors} to define a subgraph $G'$ of $G$ such that each connected component has tree width $O(\epsilon^{-1})$. Next, we partition the vertices of $G'$ into high and low degree vertices thresholded by a degree $\Delta = O(\epsilon^{-3})$, and use the recursive vertex separator Lemma \ref{recvertexsep} to decompose the subgraph of low degree vertices into components $C_1\cdots C_c$, $c=O(n)$, each of size at most $r = O(\epsilon^{-5})$. The total amount of edges removed from $G$ to form $G''$ was $O(\epsilon m)$. The following claim says that we can attempt to optimize the free energy by exploiting this clustering:

\begin{claim}\label{claim-feapproxexist}
Let $H$ be a 2-Local Hamiltonian defined on an $h$-minor free graph $G$, and let $H''$ be its corresponding Hamiltonian defined after the decomposition procedure of Section \ref{section-sparsegsPTAS}. Let $C_1\cdots C_c$ be the disjoint clusters of low degree particles in $H''$. Then, there exists clustered product state $\sigma = \bigotimes_{i\in [c]} \sigma_{C_i} \bigotimes_{u\in \mathcal{H}} \sigma_u$ that provides a good approximation to the free energy:

\begin{equation}
    f(\sigma_{C, b}) \geq F \geq f(\sigma_{C, b}) - 2\cdot \delta_{l, \Delta} - O(\epsilon m J)
\end{equation}
\end{claim}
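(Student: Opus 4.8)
The plan is to mirror the ground-state argument of Claim \ref{gssparse}, inserting the entropy non-decreasing ingredient in exactly the way Theorems \ref{theorem-feseparable} and \ref{theorem-feproduct} insert Theorem \ref{theorem-entropy} relative to Theorem \ref{theorem-BHgeneral}. First I would apply the free-energy high-low degree statement, Lemma \ref{highlowfe}, to the Gibbs state of the \emph{decomposed} Hamiltonian $H''$ rather than to $H$ itself. Averaging over the choice of measured set $C$ and basis $b$ and picking the best pair produces a state $\gamma_{C,b} = (\mathbb{I}_C/d^{|C|}) \otimes \sigma_{\mathcal{L}\setminus C}^{(C,b,z^*)} \bigotimes_{u\in\mathcal{H}\setminus C}\sigma_u^{(C,b,z^*)}$ that is a tensor product over all high-degree vertices and satisfies $f''(\gamma_{C,b}) \leq F'' + 2\delta_{l,\Delta}$, where $F''$ and $f''$ denote the free energy and the variational free energy of $H''$. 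The only structural property still missing for the claim is that $\gamma_{C,b}$ may remain entangled across distinct low-degree clusters inside $\mathcal{L}\setminus C$.

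Second, I would break that residual entanglement by marginalization, exactly as in Claim \ref{gssparse}. Define $\sigma_{C,b} := \bigotimes_{i\in[c]} \sigma_{C_i} \bigotimes_{u\in\mathcal{H}} (\gamma_{C,b})_u$, where $\sigma_{C_i}$ is the reduced state of $\gamma_{C,b}$ on the cluster $C_i$ and $(\gamma_{C,b})_u$ is its single-particle marginal. The point at which the precise construction of $H''$ enters is that $H''$ has no cluster--cluster interactions, every within-$\mathcal{L}$ term of $H''$ is supported inside one cluster, and every $\mathcal{H}$--$\mathcal{L}$ term touches exactly one low-degree vertex; hence each term of $H''$ depends only on a two-body marginal inside a single cluster, a single-body marginal in $\mathcal{L}$, or the $\mathcal{H}$-marginals, and all of these agree for $\gamma_{C,b}$ and $\sigma_{C,b}$. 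Therefore $\text{Tr}[H''\sigma_{C,b}] = \text{Tr}[H''\gamma_{C,b}]$. Meanwhile, since $\gamma_{C,b}$ factorizes between $\mathcal{L}$ and $\mathcal{H}$ and over $\mathcal{H}$, subadditivity of the von Neumann entropy gives $S(\sigma_{C,b}) = \sum_i S(\sigma_{C_i}) + \sum_{u\in\mathcal{H}} S((\gamma_{C,b})_u) \geq S(\gamma_{C,b})$, so $f''(\sigma_{C,b}) \leq f''(\gamma_{C,b}) \leq F'' + 2\delta_{l,\Delta}$.

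Third, I would transport this back to $H$ with two cheap inequalities, of the type already packaged in Lemma \ref{lemma-closefe}. The variational principle gives the left-hand bound $f(\sigma_{C,b}) \geq F$ for free. For the right-hand bound, $f(\sigma_{C,b}) \leq f''(\sigma_{C,b}) + \|H-H''\|_\infty$, and testing the Gibbs state of $H$ against $H''$ gives $F'' \leq F + \|H-H''\|_\infty$; since the decomposition of Section \ref{section-sparsegsPTAS} guarantees $\|H-H''\|_\infty = O(\epsilon m J)$, chaining these yields $F \geq f(\sigma_{C,b}) - 2\delta_{l,\Delta} - O(\epsilon m J)$, which is the claim.

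I do not expect a serious obstacle: the statement is a direct synthesis of Lemma \ref{highlowfe}, the combinatorial structure of $H''$, and subadditivity of entropy, with the energy and entropy accounting each costing essentially nothing. The one place that demands care is the bookkeeping around the measured set $C$, which need not respect cluster boundaries; but since the part of $\gamma_{C,b}$ supported on $C\cap\mathcal{L}$ is maximally mixed and hence already a product state, the cluster-marginalization still yields a genuine product over clusters, and the energy/entropy estimates above go through verbatim.
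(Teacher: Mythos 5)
Your proof is correct and takes essentially the same route as the paper's: apply Lemma \ref{highlowfe} to the Gibbs state of $H''$, marginalize $\gamma_{C,b}$ over clusters to obtain a clustered product state (using that $H''$ has no cluster–cluster interactions, so energy is preserved, and subadditivity, so entropy does not decrease), and transport the bound back to $H$ via Lemma \ref{lemma-closefe}-style inequalities costing $O(\epsilon m J)$ each. The only cosmetic difference is the order of the averaging argument (you pick the best $(C,b)$ before marginalizing; the paper marginalizes for every $(C,b)$ and then averages), and your closing remark about the measured set $C$ being maximally mixed and hence already product is a worthwhile clarification the paper leaves implicit.
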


The claim above is the key existence statement that enables us to later optimize over clusters independently. Our strategy to proving it follows closely to the proof of Claim \ref{gssparse}.

\begin{remark}
Let us pick $l = O(n\Delta^{-1/3})$, and as before $\Delta = O(\epsilon^{-3})$, such that an adequate rescaling of $\epsilon$ achieves total error $\epsilon m J$.
\end{remark}

\begin{proof}
Let $\rho, \rho''$ be the Gibbs states of $H, H''$ respectively. Lemma \ref{lemma-closefe} tells us that $F=f(\rho)\leq f(\rho'')\leq f(\rho) + O(\epsilon m J)$, since $H, H''$ are close. Let us now apply Lemma \ref{highlowfe}, the free energy variant of the high-low degree technique to $H'', \rho''$. We are guaranteed the existence of clustered product states $\gamma_{C, b}$ for each $C, b$ such that

\begin{equation}
    \mathbb{E}_{C, b}f''(\gamma_{C, b})\leq   f''(\rho'') + \delta_{l, \Delta}
\end{equation}

Fix $C, b$, and let us consider the state $\gamma_{C, b}$. Recall that it is of the form $\gamma_{C, b} = (\mathbb{I}_C/d^{|C|})\otimes \sigma_{\mathcal{L}\setminus C}^{(C, b, z^*)} \bigotimes_{u\in \mathcal{H}\setminus C} \sigma_u^{(C, b,z^*)}$, that is, the high degree vertices and the measured vertices are in a tensor product. Our intention is to leverage the structure of $H''$ to decompose the state $\sigma_{\mathcal{L}\setminus C}^{(C, b, z^*)}$ supported on the hilbert spaces of unmeasured low degree vertices, into a tensor product of low degree clusters. Similarly to the proof of Claim \ref{gssparse}, let us define reduced density matrices of clusters of un-measured low-degree vertices by a marginalization over each cluster: $\sigma^{(C, b, z)}_{C_i\setminus C} = \text{Tr}_{\mathcal{L}\setminus \{C_i, C\}}[\sigma^{(C, b, z)}_{\mathcal{L}\setminus C}]$. The key observation is that the marginalized state $\sigma_{C,b}$,

\begin{equation}
    \sigma_{C,b} = (\mathbb{I}_C/d^{|C|})\otimes\bigotimes_{i\in [c]}\sigma^{(C, b,  z)}_{C_i\setminus C}  \bigotimes_{u\in \mathcal{H}\setminus C} \sigma_u^{(C, b,z)},
\end{equation}

\noindent decreases the free energy, that is $f''(\sigma_{C, b})\leq f''(\gamma_{C, b})$. This is simply as the entropy is non-decreasing, via subadditivity, and the energy remains the same, as there are no interactions between different clusters in $H''$. In this manner, we have constructed clustered product states $\sigma_{C, b}$ for each $C, b$ such that 

\begin{equation}
    \mathbb{E}_{C, b }f''(\sigma_{C, b}) \leq f''(\rho'') + \delta_{l, \Delta}
\end{equation}

By an averaging argument, let us pick $C, b$ s.t. $f''(\sigma_{C, b})\leq \mathbb{E}_{C, b }f''(\sigma_{C, b})$. It follows straightforwardly that we can use such a choice to estimate the free energy of $H$:
\begin{gather}
    f(\rho)\leq f(\sigma_{C, b}) \leq f''(\sigma_{C, b}) + O(\epsilon m J) \leq   f''(\rho'') + \delta_{l, \Delta}+ O(\epsilon m J)\leq \\ \leq f''(\rho) + \delta_{l, \Delta}+ O(\epsilon m J)\leq f(\rho) + \delta_{l, \Delta}+ 2\cdot O(\epsilon m J)
\end{gather}

where we first used Lemma \ref{lemma-closefe} to reason that $f(\cdot)$ and $f''(\cdot)$ are close, then the derived relation between $f''(\sigma_{C, b})$ and $F''=f''(\rho'')$, and then the optimality $f''(\rho'')\leq f''(\rho)$. 
\end{proof}

In the next subsection, we discuss how to exploit the clustered structure of the states described in Claim \ref{claim-feapproxexist} to construct approximation algorithms.

\subsection{Finding Clustered Approximations to the Free Energy}
\label{findingclusteredapproxfe}

We dedicate this subsection to constructing a dynamic programming algorithm to find the clustered product state of minimum variational free energy. We leverage the discussion in Claim \ref{claim-feapproxexist} to argue that such states are indeed good approximations to the actual free energy. Similar to the ground state case of Section \ref{section-sparsegsPTAS}, our approach hinges on a discretization of the space of density matrices of the high-degree vertices, albeit in the free energy case we need to be slightly more careful in treating the error of the discretization.

\begin{theorem} [Theorem \ref{theorem-feptas}, restatement] 
Let $H = \sum_{e\in E}H_e$ be a 2-Local Hamiltonian on $n$ qubits defined on an $h$-minor free graph $G = (V, E)$, where $|h|=O(1)$, $|E|=m$. Let the maximum interaction strength be $\max_e \|H_e\|_\infty = J$. Then there exists a classical, randomized algorithm that produces a clustered product state $\sigma$ that approximates the free energy of $H$ up to error
 
 \begin{equation}
    F \leq f(\sigma) \leq F + \epsilon m J 
 \end{equation}
 
 The algorithm runs in time $n^{O(1)} + n\cdot \max(2, 1/\beta J)^{\tilde{O}(\epsilon^{-9})}$ .
 
 \end{theorem}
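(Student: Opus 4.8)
The plan is to adapt the ground-state dynamic programming scheme of Claim \ref{dpgs} to the free energy. The two changes are: once the states of a cluster's high-degree neighbours are fixed, the free-energy-optimal state of the cluster is the \emph{Gibbs} state of the induced effective Hamiltonian at inverse temperature $\beta$ rather than its ground state; and the $\epsilon$-net over pure single-qubit states must be replaced by an $\epsilon'$-net over \emph{all} single-qubit density matrices with a temperature-dependent precision $\epsilon'$. First I would invoke Claim \ref{claim-feapproxexist} with $l = \Theta(n\Delta^{-1/3})$ and $\Delta = \Theta(\epsilon^{-3})$, so that $2\delta_{l,\Delta} = O(\epsilon m J)$, reducing the problem to computing $\min_\sigma f''(\sigma)$ over clustered product states $\sigma = \bigotimes_{i\in[c]}\sigma_{C_i}\bigotimes_{u\in\mathcal H}\sigma_u$ of the decomposed Hamiltonian $H''$ up to additive error $O(\epsilon m J)$. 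Since $\|H-H''\|_\infty = O(\epsilon m J)$, Lemma \ref{lemma-closefe} together with the variational bound $f(\sigma)\ge F$ then sandwiches $F$ around the output.

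\textbf{Setting up the DP.} Exactly as in Claim \ref{dpgs}, I would contract each low-degree cluster $C_i$ (with $|C_i|\le r = O(\epsilon^{-5})$) to a supernode connected only to high-degree vertices, use Lemmas \ref{treefind}, \ref{treeprops}, \ref{treebin} to obtain a binary tree decomposition $T$ of the supernode graph of width $O(\epsilon^{-1})$ and size $O(n)$, and augment each bag by the high-degree neighbourhoods of the clusters it contains, raising the width to $O(r\Delta\epsilon^{-1}) = O(\epsilon^{-9})$. I would fix an $\epsilon'$-net $\mathcal N$ in trace distance over single-qubit density matrices with $\epsilon' = \Theta\!\big(\epsilon\min(1,\beta J)/\log(1/(\epsilon\beta J))\big)$, so that $|\mathcal N| = (1/\epsilon')^{O(1)} = (1/\epsilon)^{O(1)}\cdot\max(2,1/\beta J)^{O(1)}$. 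The DP subproblems are $E(B,\psi_B)$ for each bag $B$ and each $\psi_B\in\mathcal N^{|B|}$: the minimum free energy of the sub-Hamiltonian $H_{V_B}$ on the subtree rooted at $B$ over clustered product states agreeing with $\psi_B$ on the high-degree vertices of $B$, under the convention that the entropy of a vertex is credited only at the topmost bag containing it, so no entropy is double counted. The recursion over the (at most two) children uses inclusion--exclusion on the interaction terms just as in Claim \ref{dpgs}; the contribution of each cluster $C_i$ --- namely $\mathrm{Tr}[H^{\mathrm{eff}}_{C_i}\sigma_{C_i}] - S(\sigma_{C_i})/\beta$, with $\sigma_{C_i}$ the Gibbs state of the effective Hamiltonian (internal terms of $C_i$ plus the one-local fields induced by the fixed neighbours) --- is added at the bag that first resolves $C_i$ and is computed by exact diagonalization in time $2^{O(r)} = 2^{O(\epsilon^{-5})}$.

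\textbf{Error and runtime.} For any clustered product state $\sigma$, rounding its high-degree marginals to $\mathcal N$ and re-optimizing the clusters can only decrease the free energy; the energy moves by $\le \epsilon'\sum_{u\in\mathcal H}\deg(u)\cdot J\le 2\epsilon' m J$, and since the entropy is additive across the product factors and $d=2$, Fannes' inequality bounds the entropy change by $|\mathcal H|\cdot h_2(\epsilon') = O\!\big(n\epsilon'\log(1/\epsilon')\big)$, contributing $O\!\big(n\epsilon'\log(1/\epsilon')/\beta\big)$ to the free energy. Using $m = O(n)$ for $K_h$-minor-free graphs with $|h| = O(1)$, the choice of $\epsilon'$ makes both quantities $O(\epsilon m J)$, so the DP optimum is within $O(\epsilon m J)$ of $\min_\sigma f''(\sigma)$; rescaling $\epsilon$ finishes the accuracy claim. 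The runtime is $n^{O(1)}$ for the decomposition plus $O(n)\cdot|\mathcal N|^{O(\epsilon^{-9})}\cdot 2^{O(\epsilon^{-5})}$ for the DP, which equals $n^{O(1)} + n\cdot\max(2,1/\beta J)^{\tilde O(\epsilon^{-9})}$ once the $\mathrm{poly}(1/\epsilon)$ factors are absorbed into $\tilde O(\cdot)$.

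\textbf{Main obstacle.} I expect the crux to be controlling the \emph{thermal} error introduced by discretizing the single-qubit states of high-degree vertices: one must combine the Fannes-type entropy-continuity estimate with the coupling-strength bound to obtain precisely an $\epsilon m J$ error (this is exactly where the sparsity $m = O(n)$ of $h$-minor-free graphs enters, letting $n/\beta$ terms be charged against $mJ$ at the cost of a finer net) while keeping $|\mathcal N|$ at $\max(2,1/\beta J)^{O(1)}$, and to thread the entropy bookkeeping through the tree DP so each vertex contributes its entropy exactly once. A secondary point to check carefully is that replacing, inside the DP, the ground state of each effective cluster Hamiltonian by its Gibbs state still yields a correct recursion, i.e.\ that the free energy of a clustered product state decomposes additively over clusters and bags.
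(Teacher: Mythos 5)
Your proposal is correct and follows essentially the same route as the paper: reduce via Claim \ref{claim-feapproxexist} and Lemma \ref{lemma-closefe} to minimizing the clustered variational free energy of $H''$, augment the tree decomposition by cluster neighbourhoods, discretize only the high-degree single-qubit states with a temperature-dependent net, bound the energy shift by H\"older and the entropy shift by Fannes/Fannes--Audenaert, and run the same high-degree dynamic program (with Gibbs states of effective cluster Hamiltonians replacing ground states). Your choice of net precision $\epsilon' = \Theta\!\big(\epsilon\min(1,\beta J)/\log(1/(\epsilon\beta J))\big)$ is in fact slightly more careful than the paper's $\delta = \epsilon\min(1,\Omega(\beta^2 J^2))$ (which, fed through its own bound $\delta\log 1/\delta\le\sqrt{\delta}$, only yields $O(\sqrt{\epsilon}\,mJ)$), but both settings give the same $\max(2,1/\beta J)^{\tilde O(\epsilon^{-9})}$ runtime.
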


Let us now overview the proof of Theorem \ref{theorem-feptas}. We begin as we did in our algorithm to approximate the ground state energy in constructing the Hamiltonian $H''$ of Claim \ref{claim-feapproxexist} and Theorem \ref{theorem-sparsegsapprox}, in time $n^{O(1)}$. This defines a Hamiltonian $H''$ that is composed of interactions within high degree vertices $\mathcal{H}$, within a set of disjoint clusters $C_1\cdots C_c$ of low degree vertices, and between the clusters and high degree vertices. As previously argued in Claim \ref{claim-feapproxexist} and in the paragraph above Theorem \ref{theorem-feptas}, it suffices to approximately minimize the free energy of $H''$ among certain clustered product states. To do so, we use the high degree dynamic programming approach of Theorem \ref{theorem-sparsegsapprox} with two modifications. First, algorithmically, the objective we minimize is regularized by the entropy, second, within the analysis, we must bound the error to the objective by using a $\delta$-net over the high-degree vertices.

\begin{proof} 

[of Theorem \ref{theorem-feptas}]

Following the description above, let us study the structure of the objective on said clustered product states. We can describe the structure of the states we are optimizing over as $\sigma =  \bigotimes_{i\in [c]} \sigma_{C_i} \bigotimes_{u\in \mathcal{H}} \sigma_u$. Much like Section \ref{section-feptas-dense}, we can express the free energy of this product state as 

\begin{equation}
    f''(\sigma) = \text{Tr}[H''\sigma] - \sum_{i\in [c]}S(\sigma_{C_i}) - \sum_{u\in \mathcal{H}}S(\sigma_{u})
\end{equation}

 To find the minimum $\sigma$ among clustered product states for the objective above, let us consider defining a $\delta$-net over the high degree vertices. We choose to optimize over $\sigma_u$ chosen from within the discrete set of vectors in the net, for $u\in \mathcal{H}$. To bound the error of this partial discretization, we make the observation that at least one assignment $\sigma_u^\delta$ of spins in the set will be $\delta$-close to $\sigma_{u}^{*}$ for each $u\in \mathcal{H}$, where $\sigma^*$ is the clustered product state of minimum variational free energy $f''$ . Under this assignment, the error to the entropy of the high degree vertices becomes:

\begin{equation}
    \bigg|\sum_{u\in \mathcal{H}} S(\sigma_u^\delta) - \sum_{u\in \mathcal{H}} S(\sigma_{u}^{*})\bigg|\leq \sum_{u\in \mathcal{H}}\bigg| S(\sigma_u^\delta)-S(\sigma_{u}^{*})\bigg|\leq n\cdot O(\delta \log 1/ \delta )
\end{equation}

\noindent by the Fannes–Audenaert inequality. This is what incurs the temperature dependence at high temperatures. Meanwhile, we bound the error to the remaining part of the objective as follows. Let $\sigma^{\delta} = \bigotimes_{i\in [c]} \sigma_{C_i}^{\delta} \bigotimes_{u\in\mathcal{H}} \sigma_{u}^{\delta}$ be the minimizer among clustered product states when the particles of $\mathcal{H}$ have their spins $\sigma_{u}^{\delta, (C, b, z)}$ chosen from the $\delta$-net. Then, we can relate the variational free energy of this partial discretization as:

\begin{gather}
    \min_{\sigma_{u\in \mathcal{H}} \text{ in the net}}f''(\sigma)  = f''(\sigma^{\delta})  \leq f''\bigg(\bigotimes_{i\in [c]} \sigma_{C_i}^{*} \otimes \bigotimes_{u\in\mathcal{H}} \sigma_{u}^{\delta}\bigg) \leq \\ \leq O(\delta J m) + n/\beta \cdot O(\delta \log 1/ \delta ) + f''(\sigma^{*})
\end{gather}

Where, in sequence, we use the definition of $\sigma^\delta$, then the fact that it is the clustered product state of minimum variational free energy $f''$ to replace the states of the cluster by those of $\sigma^*$. This defines a `hybrid' state $\bigotimes_{i\in [c]} \sigma_{C_i}^{*} \otimes \bigotimes_{u\in\mathcal{H}} \sigma_{u}^{\delta}$, whose entropy is close to that of $\sigma^*$ via the previous relation on the entropies of $\sigma^\delta_u$ and $\sigma^*_u$ when $u\in \mathcal{H}$, and moreover also has energy close to that of $\sigma^*$, since

\begin{equation}
    \text{Tr}\bigg[H \bigg(\bigotimes_{i\in [c]} \sigma_{C_i}^{*} \otimes \bigotimes_{u\in\mathcal{H}} \sigma_{u}^{\delta} - \bigotimes_{i\in [c]} \sigma_{C_i}^{*} \otimes \bigotimes_{u\in\mathcal{H}} \sigma_{u}^{*}\bigg)\bigg] \leq 2\cdot J \cdot m \cdot \delta
\end{equation}

by Holders inequality and the $\delta$-net guarantees. That is, performing the partial discretization over high degree vertices incurs an error of $O(\delta Jm + n/\beta \cdot \delta \log 1/\delta)$ to the free energy. We note $\delta \log 1/\delta \leq \sqrt{\delta}$ since always $\delta < 1$, and therefore it suffices that $\delta = \epsilon \cdot \min(1, \Omega(\beta^2 J^2))$ to achieve an $O(\epsilon m J)$ approximation to the variational free energy of $H''$. We emphasize that this implies $\sigma^\delta$ is also an $O(\epsilon m J)$ approximation to the variational free energy of $H$, since $\|H-H''\|\leq O(\epsilon m J)$. Moreover, Claim \ref{claim-feapproxexist} thereby guarantees that the states $\sigma^{\delta}$ are also $O(\epsilon m J)$ additive approximations to the actual free energy of $H$.

To conclude, it suffices to discuss how to perform the actual minimization of $f''$ over the clustered product states. The key point is that since the entropy of product states is exactly additive, one can treat the interaction between a cluster $C_i$, and its high-degree neighborhood $N(C_i)$, as an effective $|N(C_i)|$-local classical interaction between the neighbors. This is simply since once the density matrices of the neighborhood is fixed, the minimum free energy state of the cluster is well determined, much like in the proof of \ref{dpgs} of the original high degree dynamic programming scheme. In this manner, we perform the dynamic programming algorithm with the new objective over the augmented tree decomposition in time $n\cdot (1/\delta)^{O(\epsilon^{-9})}$, given the previous bounds on the augmented width $t = O(\epsilon^{-9})$.

\end{proof}

\end{document}